\documentclass[11pt,a4paper]{article}
\usepackage[margin=2.5cm]{geometry}
\usepackage{amsmath,amssymb,amsthm,mathtools}
\usepackage{booktabs}
\usepackage{array}
\usepackage{longtable}
\usepackage[colorlinks=true,linkcolor=blue,citecolor=blue,urlcolor=blue]{hyperref}

\newtheorem{theorem}{Theorem}[section]
\newtheorem{proposition}[theorem]{Proposition}
\newtheorem{lemma}[theorem]{Lemma}
\newtheorem{corollary}[theorem]{Corollary}
\theoremstyle{definition}
\newtheorem{observation}[theorem]{Observation}
\newtheorem{remark}[theorem]{Remark}
\newtheorem{assumption}[theorem]{Assumption}

\newcommand{\Z}{\mathbb{Z}}
\newcommand{\Q}{\mathbb{Q}}
\newcommand{\C}{\mathbb{C}}
\newcommand{\R}{\mathbb{R}}
\newcommand{\F}{\mathbb{F}}
\newcommand{\coker}{\operatorname{coker}}
\newcommand{\tr}{\operatorname{tr}}
\newcommand{\Hg}{\mathrm{Hg}}
\newcommand{\D}{\mathcal{D}}
\newcommand{\Adj}{\operatorname{Adj}}
\newcommand{\Li}{\operatorname{Li}}
\newcommand{\eff}{\mathrm{eff}}
\newcommand{\Kraw}{K_{\mathrm{raw}}}

\newcommand{\Adec}{\mathcal A_{\mathrm{dec}}}

\title{Haagerup--Izumi modular data from hyperbolic words:\\
$S$-fold TQFTs, central charge, torsion and the Bethe vacua}
\author{David E.\ Evans\\ \small School of Mathematics, Cardiff University,\\ \small Cardiff CF24 4AG, Wales, UK\\ \small \texttt{EvansDE@cardiff.ac.uk}}

\begin{document}
\maketitle

\begin{abstract}
The modular data of the doubles of the Haagerup--Izumi categories is a grafting of two finite metric groups (conjectured by Evans--Gannon, proved for untwisted categories by Gannon--Schopieray--Yadav). Gang and Kim found non-unitary modular data of this shape in the topological twist of rank-$0$ $S$-fold theories, from the Bethe vacua of an abelianised gauge theory. The aim is to understand rigorously the significance of the handle-gluing and fibering operators, and to construct the Bethe vacua.

For every hyperbolic once-punctured-torus monodromy $\varphi$, a word in $L$ and $R$, the two groups are $A_\mp=\coker(\varphi\mp1)$, with the torsion linking forms of the mapping tori of $\varphi$ and $-\varphi$. Over all words of trace $m^2+2$ these forms reproduce the Evans--Gannon table of Q-systems, one printed entry being corrected, and the distinct modular data in the list of Gannon--Schopieray--Yadav. The Bethe data are given in closed form by the word: the level matrix, the positive vacuum, $c_{\eff}=\#L$, and the handle-gluing values $2^{\#L}\lambda^{\pm1}|A_-||A_+|$ at the identity classes, $\lambda$ the stretch factor. Each kernel vector of $\varphi\mp1$ modulo $|A_\mp|$ at which an explicit trigonometric formula does not vanish gives a solution of the Bethe equations in closed form. Under a non-degeneracy condition that holds in every case computed, it is a Bethe class with handle-gluing value $2^{\#L}|A_\mp|$. For odd trace its spin, modulo $\tfrac14$, is given by the linking form. For words without a singleton run these and the two identity classes are all the non-degenerate classes.

Under the half-index dictionary, with the unit in the class of $\sigma V^*$, $c\equiv\#L-6\#R\pmod{24}$. Under the grafting fusion rules and the same non-degeneracy condition, $S$ is, up to sign, the Evans--Gannon grafting for every word of odd trace at least $5$ whose classes are all non-degenerate. For $L^mR^m$ these data are a Galois conjugate of the unitary grafting, which is the modular data of the double of every pseudo-unitary untwisted Haagerup--Izumi category of $\Z_m$ whenever $m^2+4$ is prime. Under the grafting fusion rules they are, up to the sign of $S$, the modular data of the factor $\mathrm{TFT}_0$ of the $S$-fold theory for every prime $m\ge5$ with $m^2+4$ prime ($m=5$, $7$, $13$, $17$, $37$, $47$, \dots), provided $m-1$ classes of the degenerate sector exist as real-analytic solutions. For the other classes of that sector, exact solutions with the closed-form leading data are constructed.
\end{abstract}

\tableofcontents

\section{Introduction}
\label{sec:intro}

\subsection{The problem}
\label{sub:origin}

The Haagerup subfactor \cite{AH99} has an even part with fusion ring $\rho^2=1+\sum_{g\in\Z_3}g\rho$. Izumi constructed it by Cuntz-algebra endomorphisms and solved his equations for $\Z_3$ and $\Z_5$ \cite{Izumi01}. Evans--Gannon found solutions for further odd cyclic groups of small order \cite{EG11}. For odd groups the solutions classify the categories \cite[Thm.~7.3]{Izumi18}. The modular data of the double was computed by Izumi \cite{Izumi01} and written in exact cyclotomic form by Evans--Pinto \cite{EP12} and Evans--Gannon \cite[\S2.3]{EG11}. Evans and Gannon \cite[\S3.3]{EG11} introduced the notion of \emph{grafting} modular data. They defined the family of modular data $\D^\omega\Hg_m$, twisted by $\omega\in\Z_m$, as the graft of affine $so(m^2+4)$ at level $2$ onto a twisted dihedral double \cite[\S3.2]{EG11}, and matched the untwisted member with Izumi's doubles for small $m$. For $\omega=0$, in the formulation of \cite{EG17,EG20}, it is a grafting of two finite metric groups, of
orders $m^2$ and $m^2+4$, which enter the fusion rules with opposite signs. The
untwisted member $\D^0\Hg_m$ grafts $\Z_m\times\Z_m$ and $\Z_{m^2+4}$ \cite[\S5.6]{EG20}, and at
$m=3$ it is the modular data of the double of the Haagerup subfactor. Gannon, Schopieray and Yadav \cite{GSY26} have since constructed an untwisted Haagerup--Izumi
category for every odd $m$, untwisted meaning that the associator on the pointed subcategory is trivial. Huang \cite{Hua26} independently constructed the same category by a different proof; like every solution of the Evans--Gannon equations, it is untwisted \cite[Thm.~2.5]{GSY26}. Gannon--Schopieray--Yadav prove Evans--Gannon's conjectured description of the centre for every untwisted category \cite[Thm.~5.1]{GSY26}, and list all untwisted ones of Q-system type for odd $m\le43$ with their modular data. Neither paper constructs categories realising the twisted modular data $\D^\omega\Hg_m$, $\omega\ne0$.

Gang and Kim \cite{GK22} applied to $S$-fold theories the correspondence between rank-$0$ $\mathcal N=4$ theories and non-unitary TQFTs proposed by Gang, Kim, Lee, Shim and Yamazaki \cite{GKLSY21}. The topological twist of these theories is proposed to give non-unitary
modular data of the same shape, in computable form \cite[(4.1)]{GKLSY21}. The simple objects are the Bethe vacua of an abelianised gauge theory, and
\begin{equation}
S_{0\alpha}^2=\mathcal H_\alpha^{-1},\qquad \frac{T_{\alpha\alpha}}{T_{00}}=\frac{\mathcal F_\alpha}{\mathcal F_0},
\label{eq:dict}
\end{equation}
where $0$ is the unit object and $\mathcal H_\alpha$, $\mathcal F_\alpha$ are the handle-gluing and fibering operators at the vacuum $\alpha$
(Section~\ref{sub:construction}).

A question about the double of a subfactor thus becomes a question about the solutions of a polynomial
system. On the subfactor side it already was one: Izumi's categories, and the near-group categories of \cite{EG14}, are solutions of
polynomial systems in Cuntz-algebra data. The $S$-fold supplies a different system, whose solutions are the simple objects of the double
rather than of the category. The aim of this paper is to understand rigorously the significance of the handle-gluing and fibering operators, and to construct the Bethe vacua.

\subsection{This paper}
\label{sub:wordtrace}

The theories $\mathcal S_k=T[SU(2)]/SU(2)^{\mathrm{diag}}_k$ \cite[\S3.5]{GKLSY21}, \cite{GK22} come from the once-punctured torus bundle with monodromy $ST^k$, which
is conjugate to the word $LR^{k-2}$ in
\[
L=\begin{pmatrix}1&1\\0&1\end{pmatrix},\qquad R=\begin{pmatrix}1&0\\1&1\end{pmatrix}
\]
(Section~\ref{sub:word}); Gang--Kim--Lee \cite{GKL23} add the half-index dictionary, and Jeong--Sohn \cite{JS26} extend the construction to
generalised $S$-folds. We develop the construction for every hyperbolic word $\varphi$ in $L$ and $R$. The two metric groups are
$A_\mp=\coker(\varphi\mp1)$, of orders $|\tr\varphi\mp2|$, and the invariants turn out to be functions of $\#L$, $\#R$ and the run structure of
the word. For $LR^n$ the group $A_-$ is always cyclic (Proposition~\ref{prop:groups}), so this family cannot reach the untwisted series;
$L^mR^m$ does, with $A_-\cong\Z_m\times\Z_m$. At trace $11$, for instance, $LR^9$ and $L^3R^3$ give $\Z_9$ and $\Z_3\times\Z_3$. For $m$ a prime power the $A_-$ sector of $L^mR^m$ is degenerate, and the closed forms for general words do not reach it. It is reached through Part~\ref{part:aminus}, which at prime $m\ge5$ constructs exact solutions with all but $m-1$ of the closed-form leading data.

What makes the closed forms reachable is that the Bethe equations, after inversion of the level matrix, are Nahm's ladder
(Proposition~\ref{prop:bethenahm}). So $c_{\eff}$ is a Rogers dilogarithm sum, and the positive solution is the expanding eigenvector of
$\varphi$ propagated through the word.

\subsection{Main results}
\label{sub:resultslist}

Write $\lambda>1$ for the stretch factor of $\varphi$, $N_\mp=|A_\mp|$, and $(K,Q,c)$ for the level matrix and the two charge vectors that the
layered triangulation of the mapping torus produces (Part~\ref{part:frames}); a \emph{run} is a maximal block of consecutive equal letters of the cyclic word, and a singleton run is a run of length~$1$ (Section~\ref{sub:word}). In (B1) and in the abstract, $c$ is the central charge of \cite{GKL23}, where $q^{-c/24}$ is the prefactor of the unit character \cite[(3.12)--(3.13)]{GKL23}; the half-index itself starts at $q^0$ (Section~\ref{sec:farey24}); it is not the charge vector $c$ of $(K,Q,c)$. The Bethe solutions come in classes of $2^{\#L}$ vacua; $V^*$ is
the positive vacuum and $\sigma V^*$ the all-plus vacuum of its Galois-conjugate class, $\sigma$ the nontrivial automorphism of the real quadratic field $\Q(\lambda)$ (Section~\ref{sec:bethe}).
The results fall into three groups: those proved for every word with no physical input, those proved under the physical inputs of Section~\ref{sub:status}, and those specific to the diagonal family.

\smallskip\noindent\emph{A. Proved for every hyperbolic word, with no physical input.}
\begin{enumerate}
\item[(A1)] $A_-=\coker(\varphi-1)$ carries, up to sign (orientation), the torsion linking form of the mapping torus $M_\varphi$, and $A_+=\coker(\varphi+1)$ that of $M_{-\varphi}$ (Proposition~\ref{prop:linking}, Remark~\ref{rem:lktop}). By a finite computation over all words of trace $m^2+2$, the forms on $A_+$ reproduce Evans--Gannon's table of Q-systems for odd $7\le m\le19$, one printed entry being corrected, and give exactly the distinct modular data in Gannon--Schopieray--Yadav's list for odd $m\le43$ (Observations~\ref{obs:egconj} and~\ref{obs:gsy}).
\item[(A2)] $K$ is given in closed form by the run structure, is positive definite, and is the unique symmetric solution of the gluing
equations with $\det K=2^r$ (Theorem~\ref{thm:Kformula}).
\item[(A3)] The positive vacuum $V^*$ and $\sigma V^*$ are given in closed form by heights along the Farey slopes (Theorem~\ref{prop:propagated}). Moreover $c_{\eff}=\#L$ (Theorem~\ref{thm:ceff}). The frieze relation holds along each run at every solution of the Bethe equations (Theorem~\ref{thm:friezeall}).
\item[(A4)] $\mathcal H(V^*)=2^{\#L}\lambda N_-N_+$ and $\mathcal H(\sigma V^*)=2^{\#L}\lambda^{-1}N_-N_+$ (Theorem~\ref{thm:torsion}).
\item[(A5)] Every vector $v$ in the kernel of $\varphi\mp1$ modulo $N_\mp$ with $f(e)=\sin(\pi\det(v,e)/N_\mp)$ nonzero on the middle slopes of
the triangulation gives a solution of the Bethe equations in closed form, with shapes $f(\text{kept})^2/f(\text{middle})^2$ at $L$-sites and $1$ minus this at $R$-sites (Theorem~\ref{thm:allwords}). Under a non-degeneracy condition that holds in every case computed, it is a Bethe class with $\mathcal H=2^{\#L}N_\mp$ (Theorem~\ref{thm:Hall}).
\item[(A6)] For words without a singleton run, these classes and the two identity classes are all the non-degenerate classes (Theorem~\ref{thm:complete}).
\item[(A7)] For odd trace and a non-degenerate class with $v=\operatorname{adj}(\varphi\mp1)u$, the spin is $\#R/24-\Phi(u)/4N_\mp$ modulo $\tfrac14$ (Theorem~\ref{thm:rogers}), a quadratic refinement of a unit multiple of the torsion linking form. Within a class the spins form a translate of $\tfrac18\varepsilon^{\mathsf T}K_{LL}\varepsilon$ on the sign group (Propositions~\ref{prop:signsum} and~\ref{prop:spinq}).
\end{enumerate}

\smallskip\noindent\emph{B. Proved under the physical inputs.}
\begin{enumerate}
\item[(B1)] Under the half-index dictionary, with the unit at $\sigma V^*$: $c\equiv\#L-6\#R\pmod{24}$ (Theorems~\ref{thm:farey} and~\ref{thm:fareygeneral}). For even trace the unit is fixed only up to a partner, which changes $c$ by $0$ or $12$ (Section~\ref{sub:status}).
\item[(B2)] Under the grafting fusion rules, for every word of odd trace whose nonzero kernel vectors are all non-degenerate, and under the non-degeneracy condition of (A5): $T$ is given by the forms $q_\mp$, the unit is $\sigma V^*$, and $S$ is the Evans--Gannon grafting in the frame $s_o^2=\lambda/N_-N_+$. The overall sign of $S$ is fixed by $c$ (Theorem~\ref{thm:main}, Remark~\ref{rem:Ssign}). Six of the $102$ cyclic words of odd trace at least $5$ and length at most $11$ satisfy the hypothesis; $LR^n$ and $L^mR^m$ with $n,m\ge2$ do not: the $A_-$ sector of $LR^n$ is degenerate (Corollary~\ref{cor:templatelabel}), and so is that of $L^mR^m$ for $m$ a prime power (Proposition~\ref{prop:Hspectrum}). At $m=15$ and $21$ some $A_-$ kernel vectors are non-degenerate but not all (proof of Proposition~\ref{prop:Hspectrum}).
\end{enumerate}

\smallskip\noindent\emph{C. The diagonal family $L^mR^m$, $m$ odd.}
\begin{enumerate}
\item[(C1)] $A_-\cong\Z_m\times\Z_m$ with the hyperbolic form, and $A_+\cong\Z_{m^2+4}$: the metric groups of $\D^0\Hg_m$ (Propositions~\ref{prop:linking} and~\ref{obs:untwisted}).
\item[(C2)] The grafting with the spins $q_\mp$ is the image under $\sigma_{2^{-1}}$ of the unitary grafting of $(A_\mp,q^{\mathrm{lk}})$. This is an identity of modular data and needs no physical input (Corollary~\ref{cor:galois}). Whenever $m^2+4$ is prime ($m=3,5,7,13,15,17,\dots$), the unitary grafting is the modular data of the double of every pseudo-unitary untwisted Haagerup--Izumi category of $\Z_m$ \cite[Thm.~A(2)]{GSY26}.
\item[(C3)] At prime $m\ge5$ the $A_-$ classes are degenerate. The closed-form leading data are labelled bijectively by $(A_-\smallsetminus0)/\!\pm$, with the $\tfrac1m\Z/\Z$-component of the spin equal to $q_-$. The line through $0$ in $A_-\cong\F_m^2$ containing the label, a point of $\mathbb P^1(\F_m)=\F_m\cup\{\infty\}$, is the \emph{type} of the class. All but $m-1$ of them are limits of exact solutions of the deformed Bethe system (Theorems~\ref{thm:lines}, \ref{thm:exist} and~\ref{thm:aminus}). That these are all the classes of the sector rests on (F1) and on an unproved classification of degenerate leading data (Remark~\ref{rem:exhaustive}).
\item[(C4)] Under the grafting fusion rules, for every prime $m\ge5$ with $m^2+4$ prime ($m=5$, $7$, $13$, $17$, $37$, $47$, \dots), the grafting is, up to the sign of $S$, the modular data of $\mathrm{TFT}_0$, provided the $m-1$ classes of types $0$ and $\infty$ exist as real-analytic solutions with real coordinates (Corollary~\ref{cor:diagonal}).
\end{enumerate}

\smallskip\noindent\emph{What remains.} 
\begin{enumerate}
\item[(i)] The degenerate classes of general words. They limit (B2) to the words whose classes are all non-degenerate.
\item[(ii)] The $m-1$ classes of types $0$ and $\infty$ in (C3). Their construction would remove the proviso in (C4).
\item[(iii)] Completeness for words with a singleton run. The isolated case is only sketched (Remark~\ref{rem:singleton}). Exact solution counts on eight such words agree with the count of Theorem~\ref{thm:complete}.
\item[(iv)] The sign of $S_{oo}$. It is fixed by $c$, and it depends on the word: negative for four of the six words of (B2) and positive for two (Remark~\ref{rem:Ssign}). No rule for it is known.
\item[(v)] The $\Z_3$-part of the spins at $m=3$. The relation $24\theta_\alpha\equiv c_\alpha$, with $c_\alpha$ the Rogers sum of the class, does not see it, and at $L^3R^3$ the $A_-$ spins are numerical (Observation~\ref{obs:l3r3minus}).
\item[(vi)] The non-degeneracy condition of Theorem~\ref{thm:Hall} in general; the classification of the degenerate leading data, which would make the labelling in (C3) exhaustive; the corank of Theorem~\ref{thm:corank} at $j\ne1$ for $m>43$; the isolatedness of the exact solutions of (C3) in the full Bethe system (Remark~\ref{rem:aminusgaps}); the vacuum identification for even trace; and the chirality of the semion factor against \cite[(2.24)]{GK22}.
\end{enumerate}

\subsection{What is proved and what is assumed}
\label{sub:status}

The theorems are about the triple $(K,Q,c)$ and the polynomial system it defines; none of them presupposes that a superconformal field
theory exists. Physics enters in the following places. The half-index dictionary of \cite{GKL23} is used in Section~\ref{sec:farey24}.
Localisation is used in Corollary~\ref{cor:interpretation} and in the exhaustiveness statements of Remarks~\ref{rem:exhaustive}
and~\ref{rem:aminusgaps}(iii):
\begin{quote}
\textbf{(F1)} \emph{Localisation.} In the degenerate limit $\nu\to-1$ of the R-symmetry mixing parameter (Section~\ref{sec:bethe}) the twisted-index observables of $\mathcal T_{DGG}[M_\varphi]$ --- the
handle-gluing operator $\mathcal H_\alpha$ and the fibering operator $\mathcal F_\alpha$ of each Bethe vacuum --- localise on flat $PSL(2,\C)$
connections on $M_\varphi$ in the fibre-abelian locus, with $\mathcal H_\alpha$ given there by the adjoint Reidemeister torsion and $\mathcal F_\alpha$
by the Chern--Simons invariant. \emph{And} the rank-$0$ $\mathcal N=4$ SCFT exists for hyperbolic monodromy, with $\mathcal N=2\to\mathcal N=4$
enhancement along the flow.
\end{quote}
\label{sub:F1}
The theory $\mathcal T_{DGG}[M_\varphi]$ is that of \cite{DGG11}. That the twisted index localises on flat connections, with the handle-gluing operator given by the adjoint Reidemeister torsion, is the 3d--3d relation of \cite{GangKim19,GKPZ19}, see \cite[(2)]{GKY19}; the modular data are obtained from such invariants in \cite{CGK20}. The $\mathcal N=4$ enhancement for hyperbolic once-punctured torus bundles is expected in \cite[App.~B]{GKLSY21}. None of this is proved in the degenerate limit; \cite{JS26} also
assume the existence statement. Where this paper compares $\mathcal H$ with the torsion, the comparison at the non-identity
classes holds up to the factor $2^{\#L}$ and a sign (Theorem~\ref{thm:Hall} against Proposition~\ref{prop:jacfixed}, the latter for odd $\tr\varphi$). The two identity classes both lie over the trivial character. There the comparison is with the Floquet discriminant $|A_-||A_+|$ (Theorem~\ref{thm:torsion}, half~(A) in the notation of Section~\ref{sec:torsion}). So $\mathcal H(V^*)$ and $\mathcal H(\sigma V^*)$ are
$2^{\#L}\lambda^{\pm1}$ times it. The factor $\lambda^{\pm1}=W_c^{\pm1}$ (up to sign at $\sigma V^*$) is the Wilson line of $U(1)_A$, whereas $W_c^2=1$ at every non-identity
class (Theorem~\ref{thm:Hall}). The dictionary \eqref{eq:dict} is used in Section~\ref{sec:modular}, where the modular data are assembled. It is also used in four
comparisons, which are flagged where they occur: Remark~\ref{rem:internal} (the test of (b) for $LR^n$), the sum-rule reading of
completeness after Theorem~\ref{thm:complete}, Proposition~\ref{prop:exclusion}, and the identification of $\mathcal F_\alpha/\mathcal F_0$
with the $T$-eigenvalue in Section~\ref{sub:relation}. The grafting fusion rules are used only where Assumption~\ref{ass:F} is cited, chiefly in Section~\ref{sec:modular} and the example $L^3R^3$. The Farey formula needs, besides the half-index dictionary,
\begin{quote}
\textbf{(b)} \emph{The vacuum identification.} The Galois conjugate of $V^*$ lies in the unit class.
\end{quote}
\label{sec:inputb}
It implies that the unit has $\mathcal H=2^{\#L}\lambda^{-1}N_-N_+$ (Remark~\ref{rem:internal}). For $LR^n$ it agrees,
through the dictionary, with $S_{00}=a_0$, the value of $|Z_{S^3_b}|$ in the degenerate limit \cite[(2.11), (3.107)]{GKLSY21} (Remark~\ref{rem:internal}); it follows from modularity
under the hypotheses of Theorem~\ref{thm:main} and of Corollary~\ref{cor:diagonal}; and it is open in general. Input (b) is class-level, which is what Theorem~\ref{thm:torsion} tests and Theorem~\ref{thm:main} derives. The Farey formula
needs the unit vacuum within the class: by the flux rule (Propositions~\ref{prop:flux} and~\ref{prop:fluxall}) it is $\sigma V^*$, uniquely so when
$\tr\varphi$ is odd (Lemma~\ref{lem:parity}), and we take the unit at $\sigma V^*$ throughout. Gang--Kim--Lee note a two-unit ambiguity for $LR^n$ \cite[footnote~4]{GKL23}; for odd trace the flux rule removes it. For even trace the unit is fixed only up to a
$\Z_2$ partner. An alternative unit differs from it by a sign flip of some charge-$2$ coordinates (Proposition~\ref{prop:spinq}), so its spin differs by an element of $\tfrac12\Z$, and $c$ changes by $0$ or $12$ modulo $24$; both occur.

\subsection{Relation to the literature}
\label{sub:literature}

The categories are Izumi's \cite{Izumi01,Izumi18}. The twisted family of modular data and grafting are Evans--Gannon's
\cite{EG11}; that every double is a grafting is their conjecture \cite[Conj.~1]{EG17}. Grossman--Izumi \cite{GI19} place the categories among quadratic categories. Gannon--Schopieray--Yadav \cite{GSY26} give the uniform construction of untwisted categories. For every untwisted category they prove Evans--Gannon's description of the centre's
modular data for the block of order $m^2+4$, and they classify those of Q-system type for odd $m\le43$. Huang \cite{Hua26} gives another untwisted construction.
The $F$-symbols of the transparent cyclic categories are Huang--Lin's \cite{HL20}. The identification of the metric groups of
$\D^0\Hg_m$ as $\Z_m\times\Z_m$ and $\Z_{m^2+4}$ is Evans--Gannon's \cite[\S5.6]{EG20}; result~(C1) produces the same groups from the words $L^mR^m$. That the $S$-fold realises the untwisted member, up to the Galois conjugation $\mathrm{Gal}_2$, was found by Jeong--Sohn \cite[Table~1]{JS26}, on proposed modular data, for
the $S$-folds $\vec k=(m,-m)$, which are $L^mR^m$ up to sign and inversion (Section~\ref{sub:diagonal}).

The correspondence between rank-$0$ SCFTs and non-unitary TQFTs, with its dictionary through the handle-gluing and fibering operators at the Bethe vacua, is due to Gang--Kim--Lee--Shim--Yamazaki \cite[\S2.2, Table~1]{GKLSY21}; see also \cite[Table~1]{GKL23} and \cite[(2.13)--(2.14)]{GK22}. Gang--Kim \cite{GK22} apply it to $S$-folds, and Gang--Kim--Lee \cite{GKL23} add the half-index dictionary; Jeong--Lee \cite{JL25}
treat related Bethe systems in a different gauging. In \cite{GK22} the full $S$-matrix is proposed from $|S_{0\alpha}|$ and the
spins and checked against the modular relations; \cite{GKL23} use the same $S$. The number of Bethe vacua of $\mathcal S_k$ is stated in \cite[(3.111)]{GKLSY21} and \cite[(2.11)]{GK22}. In \cite{JS26} and \cite{JL25} the vacua are found by ans\"atze for particular families. Here the non-degenerate classes of every word are given in closed form and shown complete for words without a singleton run,
and $S$ is derived. The group $A_-$ with its linking form is the group on which Radchenko--Wheeler \cite{RW26} define their finite quantum
dilogarithm; the companion $A_+$ does not appear there. That an ideal triangulation, a dilogarithm identity and a central charge sit in
one picture is Nahm's \cite{Nahm94}, for Dehn fillings of the figure-eight knot complement. The arithmetic of the two metric groups is the subject of \cite{EvB26}; this paper does not depend on it.

\subsection{Organisation}

Part~\ref{part:setting} recalls the $S$-fold TQFTs, puts linking forms on $A_\mp$ and contains (A1) and (C1). Part~\ref{part:frames} builds $(K,Q,c)$,
proves (A2) and the closed form in (A3), and reads the Bethe system in three frames. Part~\ref{part:results} proves $c_{\eff}=\#L$, (A4)--(A7) and (B1)
for every word, and Part~\ref{part:aminus} proves (C3). Part~\ref{part:modular} derives (B2), (C2) and (C4) and works the example $L^3R^3$; Appendices~\ref{app:Kformula}--\ref{app:corank} contain the longer technical proofs, in the order in which they are used, and
Appendix~\ref{sec:repro} keys each computational statement to what was computed.
Each part and section opens with an overview. A reader interested only in the general-word theorems can skip Part~\ref{part:aminus}.

\part{Setting and the two metric groups}
\label{part:setting}

This part fixes the setting and the two invariants of the word on which the rest of the paper is built. Section~\ref{sec:setup} recalls the
$S$-fold TQFTs of Gang--Kim \cite{GK22,GKL23}. It then replaces the trace $k$ by the word
$\varphi$, and the moduli $k\mp2$ by the groups $A_\mp=\coker(\varphi\mp1)$, whose structure depends on the word and not only on the trace.
Section~\ref{sec:metric} puts linking forms on $A_\mp$, shows that for the diagonal words $L^mR^m$ they give the metric groups of the untwisted Haagerup--Izumi doubles,
and compares the forms produced by words with the Haagerup--Izumi literature \cite{EG17,GSY26,EG20}. The dictionary and the data of
$\mathrm{TFT}_0$ are taken from \cite{GK22,GKL23} and are not proved here; Section~\ref{sub:status} states the physical inputs and where they
are used. Everything else in this part is proved, except Observations~\ref{obs:egconj} and~\ref{obs:gsy}, which are enumerations. The later
parts use from it the dictionary, \eqref{eq:GKT} and \eqref{eq:as}, the groups $A_\mp$, $\lambda(L^mR^m)=d_\rho^2$, and
Propositions~\ref{prop:linking} and~\ref{obs:untwisted}.

\section{The S-fold TQFTs and the monodromy word}
\label{sec:setup}

Section~\ref{sub:construction} states
the rank-$(k+1)$ datum $\mathrm{TFT}_0$ for odd $k$, whose $T$-matrix \eqref{eq:GKT} has two blocks, of moduli $k-2$ and $k+2$.
Section~\ref{sub:word} defines a word and the two groups $A_\mp$, of orders $|\tr\varphi\mp2|$. For example $LR^9$ and $L^3R^3$ both have trace $11$, with $A_-\cong\Z_9$
and $\Z_3\times\Z_3$. Section~\ref{sub:diagonal} identifies the diagonal words $L^mR^m$ with the generalised $S$-folds of Jeong--Sohn \cite{JS26}
at $\vec k=(m,-m)$, up to sign and inversion. It also shows that $\lambda(L^mR^m)=d_\rho^2$, where $d_\rho$ is the quantum dimension of $\rho$ in the
Haagerup--Izumi category. Remark~\ref{rem:js} compares the groups $A_\mp$ with Jeong--Sohn's labels: by gauge node for $n=2$, and by the cokernel of $K_\pm[\vec k]$ for $n\ge3$, which is $A_\mp$ up to the parity sign.

\subsection{The construction}
\label{sub:construction}

For the SCFT $\mathcal S_k$ of Section~\ref{sub:wordtrace}, the two degenerate limits of \cite{GKLSY21} yield non-unitary TQFTs
$\mathrm{TFT}_\pm[\mathcal S_k]$ \cite[\S3.5]{GKLSY21}, \cite{GK22,GKL23}; only $\mathrm{TFT}_-$ is used here. Its $T$ and $|S_{0\alpha}|$ are given by the dictionary \eqref{eq:dict} of \cite[Table~1]{GKLSY21}, with $\mathcal H_\alpha$ and $\mathcal F_\alpha$ evaluated on Bethe vacua; the full $S$ is proposed in \cite{GK22}. The modular data of
$\mathrm{TFT}_-[\mathcal S_k]$ has rank $2k+2$: there are $2k+2$ Bethe vacua, at which $\mathcal H_\alpha$ takes the values
$a_0^{-2},a_1^{-2},a_2^{-2},a_3^{-2}$ with multiplicities $2,k-3,k+1,2$ \cite[(3.111)--(3.113)]{GKLSY21}, \cite[(2.11)--(2.12)]{GK22}, where
\begin{equation}
(a_0,a_1,a_2,a_3)=\Bigl(\tfrac1{\sqrt{8(k-2)}}+\tfrac1{\sqrt{8(k+2)}},\ \tfrac1{\sqrt{2(k-2)}},\ \tfrac1{\sqrt{2(k+2)}},\
\tfrac1{\sqrt{8(k-2)}}-\tfrac1{\sqrt{8(k+2)}}\Bigr),
\label{eq:as}
\end{equation}
so that by \eqref{eq:dict} the $a_i$ are the moduli of the vacuum row $S_{0\alpha}$. After splitting off the decoupled semion theory $U(1)_{\pm2}$,
generated by the anyon of spin $\pm\tfrac14$, one obtains, for odd $k$, a rank-$(k+1)$ datum $\mathrm{TFT}_0$ \cite{GK22} with
\begin{equation}
T=e^{-2\pi ic/24}\,\mathrm{diag}\Bigl[\exp2\pi i\Bigl\{0,0,\tfrac{A^2}{k-2}\Big|_{A=1}^{(k-3)/2},\ \tfrac{B^2}{k+2}\Big|_{B=1}^{(k+1)/2}\Bigr\}\Bigr].
\label{eq:GKT}
\end{equation}
The prefactor $e^{-2\pi ic/24}$ is that of \cite[(2.29)]{GKL23}; \cite[(2.25)]{GK22} omits it. As a representation of $SL(2,\Z)$ the datum decomposes as $\rho_{(k-1)/2}\oplus\rho_{(k+3)/2}$ \cite[(2.28)]{GK22}. Here $\rho_d$ is the Weil representation of the metric cyclic group
$(\Z_{2d-1},q)$, $q(a)=a^2/(2d-1)$, restricted to charge-conjugation-even vectors. The two moduli are $k-2$ and $k+2$, and
$\operatorname{rank}=\tfrac{k-1}2+\tfrac{k+3}2=k+1$. The group structure does not enter as a group of invertible objects --- the semion having been split off, $\mathrm{TFT}_0$ has none besides the
unit --- but as discriminant-form data.

For a general word, $\mathcal T_{DGG}[M_\varphi]$ is the theory of \cite{DGG11} for the mapping torus $M_\varphi$, as used for $LR^n$ and $L^2R^n$
in \cite{JS26}, and $\mathrm{TFT}[\varphi]$ is its topological twist. Its existence as an $\mathcal N=4$ SCFT is part of (F1)
(Section~\ref{sub:F1}), and $\mathrm{TFT}_0$ for a general word is set up in Section~\ref{sec:modular}.

\subsection{The word}
\label{sub:word}

$\mathcal S_k$ arises from the once-punctured torus bundle with monodromy $ST^k$, where, as in \cite[(A.1), (A.3)]{GKL23} and \cite[(2.2)]{JS26},
\begin{equation}
S=\begin{pmatrix}0&1\\-1&0\end{pmatrix},\qquad T=R=\begin{pmatrix}1&0\\1&1\end{pmatrix},\qquad L=\begin{pmatrix}1&1\\0&1\end{pmatrix}.
\end{equation}
Then $\tr ST^k=k$, and $ST^k$ is conjugate in $SL(2,\Z)$ to $LR^{k-2}$ \cite[(A.3)]{GKL23}. We take
$\varphi=LR^{k-2}$, for which the groups $A_-$ and $A_+$ below have orders $k-2$ and $k+2$, the moduli of the two blocks of \eqref{eq:GKT}.
Throughout, a \emph{word} is a cyclic word in $L,R$ containing at least one of each, so that $\varphi$ is hyperbolic; a \emph{run} is a maximal block of consecutive equal letters of the cyclic word, so that $L$-runs and $R$-runs alternate and are equal in number, $n_R$ say; $r=\#L+\#R$ is its length, and
$\lambda=\lambda(\varphi)>1$ the stretch factor.

\begin{proposition}
\label{prop:groups}
For a once-punctured torus bundle with monodromy $\varphi$, set $A_\mp=\coker(\varphi\mp1)$, so that $|A_\mp|=|\tr\varphi\mp2|$ since
$\det(\varphi\mp1)=2\mp\tr\varphi$. For $\varphi=L^aR^b=\begin{psmallmatrix}1+ab&a\\b&1\end{psmallmatrix}$ one has $\tr\varphi=ab+2$ and
\[
A_-\cong\Z_g\times\Z_{ab/g},\qquad g=\gcd(a,b).
\]
\end{proposition}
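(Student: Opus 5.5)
The statement has three parts: the order formula, the matrix of $L^aR^b$ with its trace, and the structure of $A_-$. I will prove them in that order.

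\emph{Order.} For any $\varphi\in SL(2,\Z)$, the characteristic polynomial is $t^2-(\tr\varphi)t+1$. Evaluating it at $t=\pm1$ gives $\det(\varphi\mp1)=2\mp\tr\varphi$. For a hyperbolic word $|\tr\varphi|>2$, so this determinant is nonzero. The cokernel of a nonsingular integer matrix has order $|\det|$, which gives $|A_\mp|=|\tr\varphi\mp2|$.

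\emph{Matrix and trace.} Since $L^a=\begin{psmallmatrix}1&a\\0&1\end{psmallmatrix}$ and $R^b=\begin{psmallmatrix}1&0\\b&1\end{psmallmatrix}$, a single multiplication gives
\[
L^aR^b=\begin{pmatrix}1+ab&a\\b&1\end{pmatrix},
\]
and hence $\tr\varphi=ab+2$.

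\emph{Structure of $A_-$.} Here
\[
\varphi-1=\begin{pmatrix}ab&a\\b&0\end{pmatrix}.
\]
I will compute its Smith normal form. The first invariant factor $d_1$ is the gcd of the entries, $\gcd(ab,a,b)=\gcd(a,b)=g$. The product $d_1d_2$ equals $|\det|=ab$, so $d_2=ab/g$. This gives $A_-\cong\Z_g\times\Z_{ab/g}$.

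As a check by explicit operations: subtracting $b$ times the second column from the first turns the matrix into $\begin{psmallmatrix}0&a\\b&0\end{psmallmatrix}$. Its cokernel is $\Z_a\times\Z_b$, and by the Chinese remainder theorem $\Z_a\times\Z_b\cong\Z_g\times\Z_{\mathrm{lcm}(a,b)}$, with $\mathrm{lcm}(a,b)=ab/g$.

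There is no real obstacle. The only point needing care is the column operation, which must be unimodular so that it does not change the cokernel. Subtracting an integer multiple of one column from another is unimodular, so it does not.
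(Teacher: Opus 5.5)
Your proof is correct and follows the paper's argument: the first invariant factor of $\varphi-1$ is the gcd $g$ of its entries and the product of the two invariant factors is $|\det(\varphi-1)|=ab$. Your column-operation check, reducing $\varphi-1$ to $\begin{psmallmatrix}0&a\\b&0\end{psmallmatrix}$ and then applying the Chinese remainder theorem, is also valid and confirms the same answer.
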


\begin{proof}
$\varphi-1=\begin{psmallmatrix}ab&a\\b&0\end{psmallmatrix}$ has determinant $-ab$, and the greatest common divisor of its entries is $g$; these
are the product and the first of its two invariant factors.
\end{proof}

So $A_-$ is forced cyclic when $a=1$: the Gang--Kim family sees only cyclic groups, and this is a property of the word $LR^n$, not of the
construction. The structure of $A_+$ is likewise not determined by the trace:

\begin{center}\small
\begin{tabular}{lrlll}
\toprule
word & $\tr\varphi$ & $A_-$ & $A_+$ & ranks\\
\midrule
$LR$        & 3  & $1$              & $\Z_5$              & (0,1)\\
$LR^2$      & 4  & $\Z_2$           & $\Z_6$              & (1,1)\\
$LR^3$      & 5  & $\Z_3$           & $\Z_7$              & (1,1)\\
$LR^9$      & 11 & $\Z_9$           & $\Z_{13}$           & (1,1)\\
$L^2R^2$    & 6  & $\Z_2\times\Z_2$ & $\Z_2\times\Z_4$    & (2,2)\\
$L^2R^3$    & 8  & $\Z_6$           & $\Z_{10}$           & (1,1)\\
$L^2R^4$    & 10 & $\Z_2\times\Z_4$ & $\Z_2\times\Z_6$    & (2,2)\\
$L^3R^2$    & 8  & $\Z_6$           & $\Z_{10}$           & (1,1)\\
$\mathbf{L^3R^3}$ & \textbf{11} & $\mathbf{\Z_3\times\Z_3}$ & $\mathbf{\Z_{13}}$ & \textbf{(2,1)}\\
$L^3R^4$    & 14 & $\Z_{12}$        & $\Z_{16}$           & (1,1)\\
$L^4R^4$    & 18 & $\Z_4\times\Z_4$ & $\Z_2\times\Z_{10}$ & (2,2)\\
\bottomrule
\end{tabular}
\end{center}

\noindent The last column gives the numbers of generators of $A_-$ and $A_+$. $LR^9$ and $L^3R^3$ have the same trace, hence the same $k$, but lie in different conjugacy classes and give $\Z_9$ against
$\Z_3\times\Z_3$.

\subsection{The diagonal family, and where the dimension comes from}
\label{sub:diagonal}

The diagonal words $L^mR^m$ are distinguished. With $S$ and $T=R$ as in Section~\ref{sub:word} one checks
\[
-\,ST^mST^{-m}=\begin{psmallmatrix}m^2+1&-m\\-m&1\end{psmallmatrix}=L^{-m}R^{-m}=(R^mL^m)^{-1},\qquad R^mL^m=L^{-m}\bigl(L^mR^m\bigr)L^m,
\]
so $L^mR^m$ is, up to sign and inversion, the generalised $S$-fold with $\vec k=(m,-m)$ (the orderings $(m,-m)$ and $(-m,m)$ are cyclic permutations, hence conjugate). For odd $m$ this is the case $k_1,k_2$ odd treated in
\cite{JS26}. Inversion changes neither cokernel; the sign swaps the roles of $1\mp\varphi$.

The fixed point of $L^mR^m$ has continued fraction $[m;m,m,\dots]=d=\tfrac12(m+\sqrt{m^2+4})$, the root of $x^2=mx+1$, and
$\begin{psmallmatrix}m&1\\1&0\end{psmallmatrix}^2=L^mR^m$ exactly, so
\begin{equation}
\lambda(L^mR^m)=d^2 .
\label{eq:lamd}
\end{equation}
On the categorical side the Haagerup--Izumi fusion rule $\rho^2=1+\sum_gg\rho$ gives $d_\rho^2=1+m\,d_\rho$, so $d_\rho=d$: \emph{the stretch factor of
the monodromy is the square of the quantum dimension of $\rho$}. The $\lambda$ in the torsion formula of Section~\ref{sec:torsion} is thus
$d_\rho^2$. On the three-manifold side the same periodic continued fraction is the Farey cutting sequence of the axis of $L^mR^m$, i.e.\ the layered triangulation
\cite{FH82,Gueritaud06} from which $K$ is built.

\begin{remark}[Comparison with Jeong--Sohn's labels]
\label{rem:js}
Jeong--Sohn \cite{JS26} do not introduce $A_\pm$ or a linking form; they label vacua as their ansatz produces them, for $n=2$ by $\alpha_i$ (one per gauge node)
with spins $e^{2\pi i(\alpha_1^2/k_1+\alpha_2^2/k_2)}$, and by $\beta$ on a cyclic group. For $n=2$ with $k_1,k_2$ odd they also propose the full $S$ and $T$ \cite[(4.27)--(4.32)]{JS26}, checked
against the $SL(2,\Z)$ relations, a positive row and integrality of the Verlinde coefficients. For $\varphi=ST^{k_1}ST^{k_2}$ one has
$\coker(1+\varphi)\cong\Z_{k_1}\times\Z_{k_2}$ and, for $k_1,k_2$ odd, $\coker(1-\varphi)$ cyclic of order $|k_1k_2-4|$, so for $n=2$ the labels by gauge node
\emph{are} the invariant groups. At $\vec k=(m,-m)$ the $\alpha$-form is $(\alpha_1^2-\alpha_2^2)/m$ on $\coker(1+\varphi)$, which is $A_-$ of $L^mR^m$ by the sign in
Section~\ref{sub:diagonal}. This is the hyperbolic form of Proposition~\ref{prop:linking}. For $n\ge3$ Jeong--Sohn label the vacua by $\Z^n/K_\pm[\vec k]$, with $K_\pm[\vec k]$ the cyclic tridiagonal matrix of \cite[(A.13)--(A.14)]{JS26}, $\det K_\pm[\vec k]=\tr\varphi\mp2(-1)^n$, and spins from the quadratic phase in \cite[(A.16)]{JS26}.
A computation for $n=3,4$ and $k_i\in\{\pm3,\pm5,7\}$ ($1500$ cases) gives $\coker K_\pm[\vec k]\cong\coker\bigl(1\mp(-1)^n\varphi\bigr)$ in every case. So their labels are the groups $A_\mp$, up to the parity sign $(-1)^n$.
What is new here is the linking form on $A_\mp$ for every word, defined without a presentation, and the completeness of the classes for words without a singleton run (Theorem~\ref{thm:complete}).
\end{remark}

\section{The two metric groups}
\label{sec:metric}

This section puts quadratic forms on the two groups $A_\mp$ of Section~\ref{sub:word} and compares them with the Haagerup--Izumi literature.
Proposition~\ref{prop:linking} shows that $b(x,y)=x\wedge(1-\varphi)^{-1}y$, where $x\wedge y=x_1y_2-x_2y_1$ is the determinant pairing on $\Z^2$ and $x,y\in\Z^2$ represent classes of $A_-$, is a symmetric linking form on $A_-$ for every hyperbolic word.
For $L^mR^m$ with $m$ odd, it is the hyperbolic form on $\Z_m^2$, with Gauss sum $+m$. The same construction with $1+\varphi$ gives a form on
$A_+\cong\Z_{m^2+4}$, with Gauss sum $-\sqrt{m^2+4}$. These forms $q^{\mathrm{lk}}$ are the normalisation of the Haagerup--Izumi literature;
the spin forms of Section~\ref{sub:spinform} are $2^{-1}q^{\mathrm{lk}}$, a Galois conjugate (Corollary~\ref{cor:galois}). Proposition~\ref{obs:untwisted} shows that the form on $A_-$ is the discriminant form of $\D(\Z_m)$, which is the metric group of the untwisted
member $\omega=0$ of the Evans--Gannon family $\D^\omega\Hg_m$ \cite[\S3.2]{EG11}, \cite{EG20}, and that the family $LR^{k-2}$ misses it ($L^3R^3$
against $LR^9$, both of trace $11$; Remark~\ref{rem:twisted}). The rest of the section compares $(A_+,q)$ with the literature. Observation~\ref{obs:egconj} compares it with the Q-system
table of \cite[Prop.~3]{EG17} at odd $7\le m\le19$, where all ten entries are reproduced once one misprint is corrected.
Observation~\ref{obs:gsy} compares it with the classification of \cite{GSY26} at odd $m\le43$, where the distinct forms produced by words are
exactly their distinct modular data. Only
Proposition~\ref{prop:linking}, Remark~\ref{rem:lktop} and Proposition~\ref{obs:untwisted} are used later.

\begin{proposition}[The linking form]
\label{prop:linking}
For every hyperbolic $\varphi\in SL(2,\Z)$ the formula $b(x,y)=x\wedge(1-\varphi)^{-1}y\bmod\Z$, with $x\wedge y=x_1y_2-x_2y_1$ the determinant pairing on $\Z^2$ and $x,y\in\Z^2$ representatives of classes of $A_-=\coker(1-\varphi)$ (a quotient of $\Z^2$, so with at most two generators), is a well-defined, symmetric and non-degenerate $\Q/\Z$-valued pairing on $A_-$. For $\varphi=L^mR^m$ with $m$ odd,
$A_-\cong\Z_m^2$ on the images of $e_1,e_2$ and
\[
b(x,y)=\frac{x_2y_2-x_1y_1}{m};
\]
this form is hyperbolic, and its quadratic refinement $q^{\mathrm{lk}}(x)=2^{-1}b(x,x)$, with $2^{-1}$ the inverse of $2$ modulo $m$, has Gauss sum
\[
\sum_{x\in A_-}e^{2\pi iq^{\mathrm{lk}}(x)}=+m .
\]
The same construction with $1+\varphi$ gives a symmetric linking form $b_+(x,y)=x\wedge(1+\varphi)^{-1}y$ on $A_+$, with refinement
$q^{\mathrm{lk}}(x)=2^{-1}b_+(x,x)$ when $N_+$ is odd. For $\varphi=L^mR^m$ with $m$ odd, $A_+\cong\Z_\mu$, $\mu=m^2+4$, on the image of $e_1$, with
$q^{\mathrm{lk}}(h)=sh^2/\mu$, $s\equiv-2^{-1}m\pmod\mu$, and Gauss sum $-\sqrt\mu$.
\end{proposition}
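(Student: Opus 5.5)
The plan is to work throughout with matrices. Write $x\wedge y=x^{\mathsf T}Jy$ with $J=\begin{psmallmatrix}0&1\\-1&0\end{psmallmatrix}$, and set $P=1-\varphi$ (later $P=1+\varphi$). Then $b(x,y)=x^{\mathsf T}JP^{-1}y$, and every claim in the general part becomes an integrality statement about the rational matrix $JP^{-1}$. Two $2\times2$ facts drive everything:
\begin{itemize}
\item the adjugate identity $J\,\Adj(P)=P^{\mathsf T}J$;
\item Cayley--Hamilton in the form $\Adj(P)=\tr P-P$.
\end{itemize}
Note that for $P=1\mp\varphi$ with $\det\varphi=1$ one has $\tr P=\det P=2\mp\tr\varphi$.

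\emph{Well-definedness in $y$.} Replacing $y$ by $y+Pz$ changes $b$ by $x^{\mathsf T}Jz\in\Z$.

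\emph{Symmetry.} Combining the adjugate identity with $P^{-1}=\Adj(P)/\det P$ gives $JP^{-1}=P^{\mathsf T}J/\det P$. For $\varphi=\begin{psmallmatrix}a&b\\c&d\end{psmallmatrix}$, a direct computation gives
\[
P^{\mathsf T}J=J-J\varphi^{-1}=\begin{pmatrix}c&1-a\\d-1&-b\end{pmatrix}.
\]
Its off-diagonal entries differ by $2-a-d=\det P$. Hence $JP^{-1}$ is symmetric modulo integer matrices, so $b(x,y)\equiv b(y,x)\bmod\Z$. Well-definedness in $x$ then follows from well-definedness in $y$.

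\emph{Non-degeneracy.} Suppose $x^{\mathsf T}JP^{-1}\in\Z^{1\times2}$. Then $x^{\mathsf T}J=w^{\mathsf T}P$ for some integer $w$, so $x=JP^{\mathsf T}w$. The adjugate identity rewrites $JP^{\mathsf T}$ as $\Adj(P)J$, and Cayley--Hamilton gives $\Adj(P)J=(\det P-P)J$. Therefore
\[
x\in P\Z^2+\det P\,\Z^2=P\Z^2,
\]
the last equality because $\det P\,\Z^2=P\,\Adj(P)\Z^2\subseteq P\Z^2$. So $x$ is zero in $A_-$.

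The same argument works verbatim for $P=1+\varphi$, since there too $\tr P=\det P$ and the off-diagonal entries of $P^{\mathsf T}J$ differ by $\det P$. This is the step I regard as the real content; the rest is explicit computation.

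\emph{The diagonal word, $A_-$.} For $L^mR^m$ we have $\varphi=\begin{psmallmatrix}m^2+1&m\\m&1\end{psmallmatrix}$ and $1-\varphi=\begin{psmallmatrix}-m^2&-m\\-m&0\end{psmallmatrix}$. Proposition~\ref{prop:groups} gives $A_-\cong\Z_m^2$, since $\gcd(m,m)=m$. The image of $1-\varphi$ is $m\Z\oplus m\Z$, so the images of $e_1,e_2$ generate. Computing,
\[
(1-\varphi)^{-1}=\begin{pmatrix}0&-1/m\\-1/m&1\end{pmatrix},\qquad J(1-\varphi)^{-1}=\begin{pmatrix}-1/m&1\\0&1/m\end{pmatrix},
\]
so $b(x,y)\equiv(x_2y_2-x_1y_1)/m$. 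Since $m$ is odd, the substitution $u=x_2-x_1$, $v=x_2+x_1$ is invertible modulo $m$ and turns $q^{\mathrm{lk}}$ into $2^{-1}\cdot2^{-1}\cdot2uv/m\cdot$(unit) $=c\,uv/m$ with $c$ a unit. This is the hyperbolic form, and its Gauss sum is $\sum_u\sum_v e^{2\pi icuv/m}=m$.

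\emph{The diagonal word, $A_+$.} Here $1+\varphi=\begin{psmallmatrix}m^2+2&m\\m&2\end{psmallmatrix}$ has determinant $\mu=m^2+4$. Its entries have $\gcd$ $1$ (as $m$ is odd), so $A_+\cong\Z_\mu$. The vector $e_1$ generates, because $e_2\equiv-2^{-1}m\,e_1$ modulo the second column. From $J(1+\varphi)^{-1}=(1+\varphi)^{\mathsf T}J/\mu$ one reads off $b_+(e_1,e_1)=-m/\mu$. Hence $q^{\mathrm{lk}}(he_1)=-2^{-1}mh^2/\mu$, i.e.\ $s\equiv-2^{-1}m$.

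\emph{Gauss sum for $A_+$.} Since $\mu\equiv5\pmod8$, the Gauss sum equals $\bigl(\tfrac{s}{\mu}\bigr)\sqrt\mu$, by Jacobi symbols and $\mu\equiv1\pmod4$. Factor
\[
\Bigl(\tfrac{s}{\mu}\Bigr)=\Bigl(\tfrac{-1}{\mu}\Bigr)\Bigl(\tfrac{2}{\mu}\Bigr)\Bigl(\tfrac{m}{\mu}\Bigr).
\]
The factors are:
\begin{itemize}
\item $\bigl(\tfrac{-1}{\mu}\bigr)=1$, since $\mu\equiv1\pmod4$;
\item $\bigl(\tfrac{2}{\mu}\bigr)=-1$, since $\mu\equiv5\pmod8$;
\item $\bigl(\tfrac{m}{\mu}\bigr)=\bigl(\tfrac{\mu}{m}\bigr)=\bigl(\tfrac{4}{m}\bigr)=1$, by reciprocity.
\end{itemize}
This gives $-\sqrt\mu$. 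The only care needed here is sign bookkeeping with $2^{-1}$ and the orientation of $\wedge$.
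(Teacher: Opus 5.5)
Your proof is correct and takes essentially the same route as the paper's: the $2\times2$ adjugate identity together with $\operatorname{tr}(1\mp\varphi)=\det(1\mp\varphi)$ gives well-definedness, symmetry and non-degeneracy, and the diagonal case follows from computing $(1\mp\varphi)^{-1}$ explicitly and evaluating the Jacobi symbol $(s/\mu)=-1$. The differences are cosmetic: you check symmetry entrywise instead of using $A+\operatorname{adj}A=(\operatorname{tr}A)\mathbf 1$, and you show that $e_1$ generates $A_+$ from the column relation $me_1+2e_2\equiv0$ instead of from the exact denominator of $(1+\varphi)^{-1}e_1$.
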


\begin{proof}
For $x,y\in\Z^2$ the identity $M^{\mathsf T}J=J\operatorname{adj}M$ for
$2\times2$ matrices ($J$ the matrix of $\wedge$) turns $b(x,y)-b(y,x)$ into
$x\wedge\bigl[(1-\varphi)^{-1}+(1-\varphi)/\det(1-\varphi)\bigr]y$; since $A+\operatorname{adj}A=(\tr A)\mathbf1$ and $\tr(1-\varphi)=\det(1-\varphi)$
on $SL(2,\Z)$, the bracket is $\mathbf1$ and the difference is $x\wedge y\in\Z$, so $b$ is symmetric modulo $\Z$. It depends only on the class of $y$, since
$b(x,(1-\varphi)y')=x\wedge y'\in\Z$, and by symmetry only on the class of $x$. It is nondegenerate: write $M=1-\varphi$; since $x\wedge M^{-1}y=(Mx)\wedge y/\det M$, if $b(x,y)\in\Z$
for all $y$ then $Mx\in(\det M)\Z^2$, so $x\in\operatorname{adj}(M)\Z^2=(1-\varphi^{-1})\Z^2=(1-\varphi)\Z^2$, that is $x=0$ in $A_-$. For
$L^mR^m=\begin{psmallmatrix}m^2+1&m\\m&1\end{psmallmatrix}$, $(1-\varphi)^{-1}=\begin{psmallmatrix}0&-1/m\\-1/m&1\end{psmallmatrix}$, giving
$b=\operatorname{diag}(-1,1)/m$ modulo integers; and $x_2^2-x_1^2=(x_2-x_1)(x_2+x_1)$ is a hyperbolic plane over $\Z_m$, whose Gauss sum is $+m$.

For $A_+$ the same argument applies, since $\tr(1+\varphi)=\det(1+\varphi)$ and $\operatorname{adj}(1+\varphi)=\varphi^{-1}(1+\varphi)$. For $L^mR^m$ the entries of
$1+\varphi=\begin{psmallmatrix}m^2+2&m\\m&2\end{psmallmatrix}$ have greatest common divisor $1$, so $A_+$ is cyclic of order $\mu$; and
$(1+\varphi)^{-1}e_1=\mu^{-1}(2,-m)^{\mathsf T}$ has exact denominator $\mu$, so $e_1$ generates, with $b_+(e_1,e_1)=-m/\mu$. As $\mu\equiv1\pmod4$, the
Gauss sum of $sh^2/\mu$ is $(s/\mu)\sqrt\mu$, and $(s/\mu)=(-2/\mu)(m/\mu)=-1$, because $\mu\equiv5\pmod8$ and $(m/\mu)=(\mu/m)=(4/m)=1$.
\end{proof}

\begin{remark}[The topological linking forms]
\label{rem:lktop}
Up to sign, $b$ is the torsion linking form of the mapping torus $M_\varphi$ of $\varphi$ on the once-punctured torus $\Sigma$. Here
$H_1(M_\varphi)=\Z\oplus\coker(\varphi-1)$, the torsion being carried by curves in a fibre. For linking forms in general see
\cite{CFH16}. Let $x,y\in H_1(\Sigma)=\Z^2$ and $N=|\det(\varphi-1)|$, and put $w=N(\varphi-1)^{-1}x\in\Z^2$.
The cylinder swept by $w$ from the fibre at time $0$ to the fibre at time $1$ has boundary $\varphi(w)-w=Nx$. So it is a $2$-chain $S$
with $\partial S=Nx$. A curve representing $y$ in the fibre at time $\tfrac12$ meets $S$ in the intersection points of $w$ and $y$ in
$\Sigma$. Hence $\operatorname{lk}(x,y)=S\cdot y/N=\pm w\wedge y/N=\pm\bigl((\varphi-1)^{-1}x\bigr)\wedge y$, which is $b$ up to sign. The
elliptic involution $-1$ preserves $\Sigma$, and $\coker(-\varphi-1)=\coker(\varphi+1)$. So the same computation identifies $b_+$, up to
sign, with the torsion linking form of $M_{-\varphi}$.
\end{remark}

\emph{The two normalisations.} On
$A_\mp$ one has $b_\mp(u,u)=-\Phi(u)/N_\mp$ (with $b_-=b$, and $\Phi$ the binary quadratic form of $\varphi$, defined in Section~\ref{sub:spinform}), so
$q_\mp=2^{-1}q^{\mathrm{lk}}$. On a cyclic group of odd order $N$ the two are isometric if and only if $2$ is a square modulo $N$; on
$A_-\cong\Z_m\times\Z_m$ of $L^mR^m$ the hyperbolic form is isometric to all its unit multiples. For $L^mR^m$ with $m$ odd,
$N_+=m^2+4\equiv5\pmod8$, so $2$ is not a square modulo $N_+$ and the two forms on $A_+$ are not isometric.

Two consequences. Since the Gauss sum $+m$ is real and positive, the $A_-$ factor contributes no phase to a product of Gauss sums over $A_-$ and $A_+$.
And a hyperbolic form is invariant under $q\mapsto-q$, which is consistent with the amphichirality of the diagonal words
(Proposition~\ref{prop:closed}).

\begin{proposition}[The metric group of the diagonal words is the untwisted one]
\label{obs:untwisted}
Let $m$ be odd.
\begin{enumerate}
\item[(i)] $(A_-,q^{\mathrm{lk}})$ at $L^mR^m$ is the discriminant form of $\D(\Z_m)$.
\item[(ii)] Every quadratic form on $\Z_m\times\Z_m$ takes values in $\tfrac1m\Z/\Z$, whereas a non-degenerate quadratic form on $\Z_{m^2}$ takes a value of order $m^2$.
So no metric group on $\Z_m\times\Z_m$ is isometric, or Galois conjugate, to one on $\Z_{m^2}$.
\end{enumerate}
\end{proposition}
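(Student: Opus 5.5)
For (i) I would start from Proposition~\ref{prop:linking}. It gives $A_-\cong\Z_m^2$ on the images of $e_1,e_2$, with $b(x,y)=(x_2y_2-x_1y_1)/m$, so $q^{\mathrm{lk}}(x)=2^{-1}(x_2^2-x_1^2)/m$, where $2^{-1}$ is the inverse of $2$ modulo $m$. The discriminant form of the Drinfeld double $\D(\Z_m)$ is the group $\Z_m\times\widehat{\Z_m}\cong\Z_m^2$ with $q(a,\chi)=\chi(a)$, that is, $q(a,c)=ac/m$.

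The task is to write down an explicit isometry. Since $m$ is odd, $2$ is invertible. I would change variables by $u=x_2-x_1$ and $v=2^{-1}(x_2+x_1)$, and check that this is an automorphism of $\Z_m^2$: its determinant is $2^{-1}\cdot 2=1$ up to sign, a unit. Then
\[
2^{-1}(x_2^2-x_1^2)=2^{-1}(x_2-x_1)(x_2+x_1)=uv \pmod m .
\]
This gives $q^{\mathrm{lk}}(x)=uv/m$, which is exactly $q_{\D(\Z_m)}(u,v)$. The bilinear forms then agree automatically, since $b(x,y)=q(x+y)-q(x)-q(y)$ for $m$ odd. As a check I would compare Gauss sums: the Gauss sum of $\D(\Z_m)$ is $\sum_{a,c}e^{2\pi iac/m}=m$, which matches the $+m$ of Proposition~\ref{prop:linking}.

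For (ii) there are two claims. First, any quadratic form $q$ on $\Z_m\times\Z_m$ satisfies $q(nx)=n^2q(x)$. Taking $n=m$ gives $m^2q(x)=0$, but I want the sharper statement $mq(x)=0$. For $m$ odd I would use that the associated bilinear form has $b(x,x)=2q(x)$ and $mb(x,x)=b(mx,x)=0$. So $2mq(x)=0$, and $2q(x)$ lies in $\frac1m\Z/\Z$.

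To strengthen this to $q(x)$ itself, I would use two facts. First, $q(x)=q(x)+q(x)-q(x)$ and $q(2x)=4q(x)$, so $q(x)$ is determined by $2q(x)$ through multiplication by the inverse of $2$ on the odd-order group generated. Second, $q(x)\in\frac1{m^2}\Z/\Z$ by the bound $m^2q(x)=0$ above. Together with $2mq(x)=0$ and $\gcd(2,m)=1$, these give $mq(x)=0$. I expect this odd-order bookkeeping to be the only place needing care.

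On $\Z_{m^2}$, non-degeneracy forces $q(1)=a/m^2$ with $\gcd(a,m)=1$ (I would also allow $a/2m^2$ conventions, but for $m$ odd these reduce to the same). So $q(1)$ has order $m^2$.

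Finally, isometry preserves the set of orders of values of $q$, and so does any Galois conjugation $\sigma_\ell$, $\gcd(\ell,m)=1$, since it multiplies $q$ by a unit. On $\Z_m\times\Z_m$ every value has order dividing $m$, while on $\Z_{m^2}$ some value has order $m^2$. Hence neither relation can hold. Of course the underlying groups are already non-isomorphic; the point of (ii) is that even the value sets of the two forms differ.
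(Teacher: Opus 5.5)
Your proof is correct and follows the paper's route: (i) is the hyperbolic factorisation $x_2^2-x_1^2=(x_2-x_1)(x_2+x_1)$ from Proposition~\ref{prop:linking}, which you make explicit as an isometry onto $q(a,c)=ac/m$, and (ii) is the same gcd bookkeeping, with the order of $q(x)$ dividing $\gcd(2m,m^2)=m$. The one step you assert rather than prove is $q(1)\in\tfrac1{m^2}\Z/\Z$ on $\Z_{m^2}$; it follows the same way, since $2q(1)=b(1,1)\in\tfrac1{m^2}\Z$ and $m^4q(1)=q(m^2)=0$ make the order of $q(1)$ divide $\gcd(2m^2,m^4)=m^2$, and then non-degeneracy of $b(1,1)=2a/m^2$ forces $\gcd(a,m)=1$.
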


\begin{proof}
Part (i) is Proposition~\ref{prop:linking}: the hyperbolic form on $\Z_m^2$ is the discriminant form of $\D(\Z_m)$.
For (ii), on $\Z_m\times\Z_m$ the bilinear form $b(x,y)=q(x+y)-q(x)-q(y)$ satisfies $mb=0$. So $2q(x)=b(x,x)\in\tfrac1m\Z$, and $q(mx)=m^2q(x)=0$ gives $q(x)\in\tfrac1{m^2}\Z$.
Since $m$ is odd, $q(x)\in\tfrac1m\Z$. On $\Z_{m^2}$ write $q(1)=s/m^2$ modulo $\Z$; here $s\in\Z$, since $2s=m^2b(1,1)\in\Z$ and $m^2s=q(m^2)\in\Z$ with $m$ odd. Non-degeneracy makes $b(1,1)=2s/m^2$ of order $m^2$, so $q(1)$ has order $m^2$.
Galois conjugation preserves the order of a root of unity.
\end{proof}

\begin{remark}[The twisted members]
\label{rem:twisted}
So $L^mR^m$ gives the metric groups of the $\omega=0$ member of the Evans--Gannon twisted family $\D^\omega\Hg_m$ of modular data \cite[\S3.2]{EG11}, namely $\Z_m\times\Z_m$ with the hyperbolic form and
$\Z_{m^2+4}$, whose group structure is seen in the fusion rules \cite[\S5.6]{EG20}. The block of \eqref{eq:GKT} with $k-2=9$ has spins $A^2/9$, which include primitive ninth roots of unity. By Proposition~\ref{obs:untwisted}(ii) it is not a Galois conjugate of a metric group on $\Z_3\times\Z_3$. So the folded $T$-spectra of $L^3R^3$ and $LR^9$ are
inequivalent. Gang--Kim's family at $k=4n^2+4n+3$ is a Galois conjugate of $\D^{\omega=2}\Hg_{2n+1}$ \cite[after (2.26)]{GK22}, \cite[(2.32)]{GKL23}, and realises
only $\omega=2$, so the untwisted Haagerup double is precisely the member it misses. The sentence of \cite{GKL23} after (2.32) recalls that $\D^{\omega=0}\Hg_3$ is the original Haagerup double. It is not an identification: \cite[\S4]{GKL23} place their family at $\omega=\pm2$, in line with (2.32) and with Proposition~\ref{obs:untwisted}(ii), which separates $\Z_9$ from $\Z_3\times\Z_3$. That the family misses $\omega=0$ is consistent with \cite[Table~1]{JS26} once its first entry is corrected; that entry
$(p;k_1,k_2)=(2;-3,-3)$ should read $(2;-3,3)$: the former has trace $7$ and groups of orders $9$ and $5$, the latter trace $-11$ and orders $9$
and $13$, as the pattern $(2;-2n-1,2n+1)$ of that table requires.
Each identification in that table holds up to a Galois conjugation $\mathrm{Gal}_p$, with $p=2$ for the untwisted rows. The table also matches two twisted members, both with $p=31$: $\D^{3}\Hg_9$ at $(k_1,k_2)=(-3,27)$ and $\D^{2}\Hg_9$ at $(5,17)$ \cite[Table~1]{JS26}.
\end{remark}

\begin{observation}[The $A_+$ sector constructs Evans--Gannon's $(H,\beta)$]
\label{obs:egconj}
Evans--Gannon \cite{EG17}, Conjecture~1, predict that the double of a Haagerup--Izumi fusion category
$\mathcal C(G;\pm,\omega,A)$ (here $\omega$ is a cube root of unity, not the twist of $\D^\omega\Hg_m$) with $|G|=m$ odd has its $\tfrac12(m^2+3)$ primaries of class~(iv)
parametrised by pairs $\pm l$, $l\ne0$, in an abelian group $H$ of order $\mu=m^2+4$ carrying a nondegenerate
bilinear form $\beta$, with
\[
w_l=e^{2\pi i\,\mathsf m\,\beta(l,l)},\qquad F_{l,l'}=\mp\tfrac2{\sqrt\mu}\cos\bigl(2\pi\beta(l,l')\bigr),
\]
and their Proposition~3 tabulates $(H,\beta)$ for the ten Q-systems known at odd $7\le m\le19$. Here $\mathsf m=(\mu-1)/2\equiv-\tfrac12\pmod\mu$ is the $m$ of \cite{EG17}, not $|G|$, which they write $\nu$. In the notation of \cite{GSY26} the twists are $e^{2\pi iq(l)}$ with $q=\mathsf m\beta$, the normalisation used in Observation~\ref{obs:gsy}. The conjecture, proved in \cite[Thm.~5.1]{GSY26}, asserts that such an $H$ and
$\beta$ exist; it does not say what they are. The pair $(A_+,b_+)$ of this paper constructs them from the word. Enumerating \emph{all}
conjugacy classes of trace $m^2+2$ and computing $(A_-,A_+,b_\mp)$ for each gives, for odd $7\le m\le19$:
\begin{center}
\begin{tabular}{ccccl}
\toprule
$m$ & trace & classes & distinct $(H,b_+)$ & realised \\
\midrule
$7$  & $51$  & $7$  & $1$ & $\Z_{53},\,ll'/53$ \\
$9$  & $83$  & $18$ & $2$ & $\Z_{85},\,ll'/85$ and $\Z_{85},\,12ll'/85$\\
$11$ & $123$ & $14$ & $2$ & $\Z_{125},\,ll'/125$ and $\Z_5{\times}\Z_{25},\,l_1l_1'/25+l_2l_2'/5$\\
$13$ & $171$ & $13$ & $1$ & $\Z_{173},\,ll'/173$\\
$15$ & $227$ & $45$ & $1$ & $\Z_{229},\,ll'/229$\\
$17$ & $291$ & $17$ & $1$ & $\Z_{293},\,ll'/293$\\
$19$ & $363$ & $38$ & $2$ & $\Z_{365},\,ll'/365$ and $\Z_{365},\,22ll'/365$\\
\bottomrule
\end{tabular}
\end{center}
Nine of the ten entries of their Proposition~3 are reproduced as printed, group and form, up to the choice of generator. (Their labels
$j(m)$ and $j(m)'$ denote the first and second solutions at $m$.) The diagonal word $L^mR^m$ always gives the $j(m)$ entry, and their second solutions at $m=9,19$ are other words of the same trace (the classes split
$9+9$ and $19+19$ between the two forms, $\mathrm h(\Q(\sqrt{m^2+4}))=2$ in both cases). The count of distinct data is much smaller than the
count of classes, and it is a count of modular data, not of categories: Observation~\ref{obs:gsy} takes this up.

The tenth entry, $j(11)'$, is printed in \cite{EG17} with $H=\Z_{25}\times\Z_5$ and form $2l_1l_1'/25+2l_2l_2'/5$. Trace $123$ has $14$
conjugacy classes; three carry $A_+\cong\Z_5\times\Z_{25}$, and each gives the form $l_1l_1'/25+l_2l_2'/5$ --- the printed one
multiplied by $2^{-1}$, and $2$ is a non-residue modulo $5$, so the two are not isomorphic. The word data's form is the right one. The Gauss sum $\sum_id_i^2\theta_i=\dim\mathcal C$ of the Drinfeld centre, evaluated with the
dimensions of their (6.24), is satisfied by the word data's form and not by the printed one. Gannon--Schopieray--Yadav's independent
determination of $(H,q)$ from the HI matrix also agrees with the word data (Observation~\ref{obs:gsy}). So the table is reproduced ten times out
of ten. Here the word data's form is $q^{\mathrm{lk}}$, the normalisation of
the unitary double. The spin form $q_+=2^{-1}q^{\mathrm{lk}}$ of the $S$-fold is its Galois conjugate (Corollary~\ref{cor:galois}), and it is
isometric to the printed form, which differs from it by the square $4$.
\end{observation}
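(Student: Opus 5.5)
This is a finite enumeration, and I would organise it in three steps: list the classes, compute their invariants, and compare with the table of \cite[Prop.~3]{EG17}.

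\emph{Listing the classes.} For each odd $m$ with $7\le m\le19$ put $t=m^2+2$. Every hyperbolic class in $SL(2,\Z)$ of positive trace is conjugate to a cyclic word in $L,R$ containing both letters, uniquely up to cyclic rotation. This is the Farey cutting sequence of Section~\ref{sub:diagonal}, equivalently the period of the continued fraction of the fixed point. So I would enumerate the cyclic words $L^{a_1}R^{b_1}\cdots L^{a_n}R^{b_n}$ of trace $t$. The trace is increasing in every exponent and at least $a_1b_1+2$, which bounds the search. Each word is reduced to a canonical rotation to remove duplicates.

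As a cross-check on the class counts $7,18,14,13,45,17,38$, I would count the same set a second way. Classes of trace $t$ correspond to $SL(2,\Z)$-classes of primitive and imprimitive binary quadratic forms of discriminant $t^2-4=m^2(m^2+4)$, which can be counted by reduction cycles. Agreement of the two counts is the check that no class is missed.

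\emph{Computing the invariants.} For each word, form $1+\varphi$. Its Smith normal form gives $A_+\cong\Z_{d_1}\times\Z_{d_1d_2}$, with $d_1$ the gcd of the entries, as in Proposition~\ref{prop:groups}, together with generators as images of $e_1,e_2$ under the unimodular change of basis. Then compute the Gram matrix $b_+(x,y)=x\wedge(1+\varphi)^{-1}y$ on those generators, which is legitimate by Proposition~\ref{prop:linking}. The same is done for $A_-$.

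Next I need to decide which pairs $(A_+,b_+)$ are isometric. Since $\mu=m^2+4$ is odd, each $p$-primary part of a form on an odd group is determined by its group structure together with, for each Jordan block, the Legendre symbol of the determinant of that block. I would compute these Jordan data and use them to sort the words into the "distinct $(H,b_+)$" column. For cyclic $\Z_\mu$ this amounts to checking whether the unit $u$ in $b_+=u\,ll'/\mu$ is a square modulo each prime $p\mid\mu$. The diagonal word gives $b_+(e_1,e_1)=-m/\mu$ by Proposition~\ref{prop:linking}, and a change of generator by $h$ multiplies $u$ by $h^2$. This is how each form is matched to the normalised representative in the table.

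\emph{Comparison and the misprint.} I would then compare with the ten printed entries, translating their $\beta$ into our $b_+$ up to the choice of generator. At $m=9$ and $m=19$ the two forms should split the classes $9+9$ and $19+19$; relating this to $h(\Q(\sqrt{m^2+4}))=2$ is only an explanatory remark and needs no proof. For $j(11)'$, I would show that the printed form $2l_1l_1'/25+2l_2l_2'/5$ is not isometric to the computed $l_1l_1'/25+l_2l_2'/5$: the $\Z_5$-summand of the discriminant-form part has determinant differing by the non-residue $2$ modulo $5$, so the two are inequivalent. To decide which form is correct, evaluate the Gauss-sum identity $\sum_i d_i^2\theta_i=\dim\mathcal C$ using the dimensions in \cite[(6.24)]{EG17}. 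The class~(iv) contribution is a quadratic Gauss sum over $H$, which for odd order has an explicit sign given by Jacobi symbols. The final sentences follow from the fact that $2^{-1}$ times the form differs from the printed one by $4=2^2$, a square.

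\emph{Main obstacle.} I expect the hardest part to be the bookkeeping for the isometry classification on non-cyclic $p$-groups such as $\Z_5\times\Z_{25}$, where the Jordan decomposition must be computed correctly and not read off naive Gram entries. The second risk is making the translation between the normalisations of \cite{EG17} ($\mathsf m\beta$, $\nu=|G|$) and $q^{\mathrm{lk}}$ airtight, so that the claimed misprint is not itself an artefact of a normalisation slip.
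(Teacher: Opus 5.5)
Your plan matches the paper's own route. You enumerate the $SL(2,\Z)$-classes of trace $m^2+2$ as cyclic words in $L,R$, compute $(A_+,b_+)$ and sort the results by isometry class up to the choice of generator, and you settle $j(11)'$ by the non-residue $2$ modulo $5$ together with the Gauss-sum identity evaluated with the dimensions of Evans--Gannon's (6.24). Your recount via binary quadratic forms and your explicit Jordan-invariant classification are sound extra checks that the paper leaves implicit.
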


\begin{observation}[The complete list of Gannon--Schopieray--Yadav]
\label{obs:gsy}
Gannon--Schopieray--Yadav \cite{GSY26}, Table~1, list every untwisted unitary Haagerup--Izumi category of Q-system type for
$\Z/m$, $m\le43$ odd, with the modular data of its centre written as a label $\mu^+_s$ for the cyclic metric group $\Z/\mu$ with
$q(h)=sh^2/\mu$, or an orthogonal sum of such labels. Their $q$ is the quadratic form of the twists, $q=\mathsf m\beta$ in the notation of
Observation~\ref{obs:egconj} (so $s\equiv-\tfrac12$ for the forms $ll'/\mu$). Enumerating all conjugacy classes of trace $m^2+2$ and
computing $(A_-,A_+,q)$ for each gives the table below. Here $q(x)=-2^{-1}\,x\wedge(\varphi+1)^{-1}x\bmod1$ on $A_+$, with $2^{-1}$ the
inverse of $2$ modulo $m^2+4$. This is isometric to $q^{\mathrm{lk}}$, since $-1$ is a square modulo $m^2+4$. Cyclic forms are compared by the
square class of $s$:
\begin{center}\small
\begin{tabular}{ccc>{\raggedright\arraybackslash}p{6.2cm}}
\toprule
$m$ & classes & categories (data) & the data, all realised by words\\
\midrule
$3,5,7,13,17$ & $3,5,7,13,17$ & $1\ (1)$ & $13^+_6$, $29^+_{14}$, $53^+_{26}$, $173^+_{86}$, $293^+_{146}$\\
$9$ & $18$ & $2\ (2)$ & $85^+_{42}$, $85^+_6$\\
$11$ & $14$ & $2\ (2)$ & $125^+_{62}$, $25^+_{12}{\times}5^+_2$\\
$15,27,35,37$ & $45,81,105,111$ & $3\ (1)$ & $229^+_{114}$, $733^+_{366}$, $1229^+_{614}$, $1373^+_{686}$\\
$19,23,25$ & $38,46,50$ & $2\ (2)$ & $365^+_{182},365^+_{11}$; $533^+_{266},533^+_3$; $629^+_{314},629^+_3$\\
$31,41,43$ & $62,82,86$ & $2\ (2)$ & $965^+_{482},965^+_{11}$; $1685^+_{842},1685^+_{11}$; $1853^+_{926},1853^+_3$\\
$21$ & $84$ & $4\ (2)$ & $445^+_{222}$, $445^+_6$\\
$29$ & $63$ & $3\ (3)$ & $845^+_{422}$, $845^+_3$, $65^+_2{\times}13^+_2$\\
$33$ & $165$ & $5\ (1)$ & $1093^+_{546}$\\
$39$ & $195$ & $5\ (3)$ & $1525^+_{762}$, $1525^+_6$, $5^+_1{\times}5^+_1{\times}61^+_2$\\
\bottomrule
\end{tabular}
\end{center}
At every odd $m\le43$ the set of distinct $(A_+,q)$ produced by the words of trace $m^2+2$ is \emph{exactly} the set of distinct
modular data in their table: every one of their labels is realised, and no word produces a form outside the list. The cyclic labels are
matched by the square class of $s$; the three non-cyclic ones, at $m=11,29,39$, by the multiset of values of $q$ and the Gauss sum.
Each form is moreover realised on every $A_-$-type of the trace, including the untwisted type $A_-\cong\Z_m\times\Z_m$. Two
things follow. First, their label $25^+_{12}\times5^+_2$ for the second $\Z/11$ category is $q=12l_1^2/25+2l_2^2/5=\mathsf m\beta$ for
$\beta=l_1l_1'/25+l_2l_2'/5$, the word data's form of Observation~\ref{obs:egconj}: their independent determination of
$(H,q)$ from the HI matrix (their Proposition~5.7) confirms the resolution of the $j(11)'$ entry reached there. Second, what the
words see is the modular data and not the category: at $m=15,21,27,33,35,37,39$ several categories share one datum (the three $\Z/15$ ones are Galois conjugates \cite[\S1.5]{GSY26}, pairwise
non-Morita-equivalent \cite[Cor.~5.9]{GSY26}), and the words produce that datum once per
$A_-$-type. The count of distinct data agrees with theirs at all twenty-one values of $m$; the count of categories is not a
question the Bethe equations can address.
\end{observation}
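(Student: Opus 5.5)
The statement is a finite verification, so the plan is an exact enumeration followed by a comparison of invariants. There are three steps: list the conjugacy classes of trace $m^2+2$, compute $(A_-,A_+,q)$ for each class, and match the resulting forms against the labels of \cite[Table~1]{GSY26}.

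\emph{Step 1: the conjugacy classes.} Every hyperbolic class in $SL(2,\Z)$ of positive trace is represented by a cyclic word in $L,R$ containing both letters, uniquely up to cyclic rotation. This is the Farey cutting sequence of the axis, equivalently the reduced period of the continued fraction of the fixed point. The trace of a word is at least its length plus one and grows with it, so for fixed trace $t=m^2+2$ only finitely many words occur. I would generate them by listing necklaces of run-lengths $(a_1,b_1,\dots,a_n,b_n)$, multiplying out $L^{a_1}R^{b_1}\cdots$, keeping those of trace $t$, and reducing modulo rotation. Two checks guard against a missed or doubled class:
\begin{itemize}
\item the count can be compared with the class number of binary quadratic forms of discriminant $t^2-4=m^2(m^2+4)$, which should reproduce the ``classes'' column ($3,5,\dots,195$);
\item the pairing $L^aR^b\leftrightarrow L^bR^a$ can be compared with the splitting of classes recorded in Observation~\ref{obs:egconj}.
\end{itemize}

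\emph{Step 2: the groups and forms.} For each word, the Smith normal form of $\varphi\mp1$ gives $A_\mp$ and explicit generators. On $A_+$ evaluate $q(x)=-2^{-1}\,x\wedge(\varphi+1)^{-1}x \bmod 1$ by exact rational arithmetic. This is well defined and non-degenerate by Proposition~\ref{prop:linking}. The claimed isometry with $q^{\mathrm{lk}}$ needs no computation. Since $m^2\equiv-4 \pmod{m^2+4}$, the element $m\cdot2^{-1}$ squares to $-1$ modulo $m^2+4$. Multiplying a form by a square $u^2$ is the isometry $x\mapsto ux$.

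\emph{Step 3: comparison with the labels.} Forms on $A_+$ are classified prime by prime, since $N_+=m^2+4$ is odd. On each $p$-primary part, an odd-order form is determined by its ranks at each exponent together with a Legendre symbol for each homogeneous block (Wall, Kawauchi--Kojima).
\begin{itemize}
\item \emph{Cyclic $A_+$.} Compare $s$ with the label subscript through the Jacobi symbols $(s/p)$ over the primes $p\mid\mu$. For $p^2\mid\mu$ use the square class modulo $p$; since the group is cyclic there is no further invariant. Equivalently, one checks that $s/s_{\mathrm{GSY}}$ is a square modulo $\mu$.
\item \emph{Non-cyclic $A_+$} (at $m=11,29,39$). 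Compare the multiset of values of $q$ and the Gauss sum. By the same classification these invariants determine the isometry class, because they detect the Legendre symbol on each block.
\end{itemize}
The assertion ``exactly the set'' then has two directions, both checked by the table:
\begin{itemize}
\item every label of \cite{GSY26} at $m$ is attained by some word;
\item every word yields a form isometric to some label.
\end{itemize}
Recording, for each form, which $A_-$-types carry it gives the statement that each form is realised on every $A_-$-type. In particular, for the untwisted type $\Z_m\times\Z_m$ one exhibits $L^mR^m$ together with suitable conjugates of other words.

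\emph{Main obstacle.} The obstacle is not the arithmetic but certifying completeness of Step~1 at the larger values $m\le43$. At $m=43$ the trace is $1851$, and a naive run-length search is large. I would first bound the number of runs and the run lengths by the trace, using the fact that the trace is at least the product of $(1+a_ib_i)$-type factors. Failing that, I would enumerate reduced indefinite forms of discriminant $t^2-4$ via Gauss reduction cycles, which are in bijection with the cyclic words. This replaces the word search by a standard finite class-group computation, and that computation also certifies the class counts.
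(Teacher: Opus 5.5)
Your plan is correct and follows the paper's own route. The observation rests on exactly this enumeration of the cyclic words of trace $m^2+2$, the computation of $A_\pm$ and of $q$ on $A_+$, and the match with the labels of \cite{GSY26}: by the square class of $s$ for cyclic $A_+$, and by the value multiset and Gauss sum at $m=11,29,39$. Your square root $m\cdot2^{-1}$ of $-1$ and your bound $\prod_i(1+a_ib_i)<\tr\varphi$ are details the paper leaves implicit.

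One caution about your cross-checks. The classes of trace $t$ correspond to \emph{all} classes of forms $\Phi$ of discriminant $t^2-4$, imprimitive ones included; for example $L^mR^m$ gives $m(x^2-mxy-y^2)$. Moreover $\Phi$ is imprimitive exactly when $A_-$ or $A_+$ is non-cyclic, since a prime dividing its content divides every entry of $\varphi-1$ or of $\varphi+1$. So the usual primitive class number, or an enumeration of primitive reduced forms, undercounts: it gives two classes instead of three at $m=3$. It also misses precisely the untwisted type $A_-\cong\Z_m\times\Z_m$ and the three non-cyclic labels.
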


\part{The three frames}
\label{part:frames}

This part builds the objects that Parts~\ref{part:results} and~\ref{part:aminus} study, and proves the facts about them that later
arguments use. One set of data, the level matrix and the Bethe system it defines, is read in three frames.

In the \emph{Farey frame} (Section~\ref{sec:farey}), the layered triangulation of the mapping torus gives the Neumann--Zagier matrices $A,B$,
the level matrix $\Kraw$ with $B\Kraw=A$, the charges $Q$, and the Bethe equations \eqref{eq:bethe}. Section~\ref{sec:runK} is not a frame;
it is the bridge to the other two. Theorem~\ref{thm:Kformula} gives $K=Q\Kraw Q$ and the charge $c$ in closed form from the run structure, for
every hyperbolic word, and proves that $K$ is positive definite with $\det K=2^r$ and is the unique such solution (proof in Appendix~\ref{app:Kformula}). In the \emph{Nahm frame}
(Section~\ref{sec:nahm}), the same equations are Nahm's ladder for $\mathbf N=2(I-\Kraw^{-1})$ (Proposition~\ref{prop:bethenahm}). This gives
the positive vacuum $V^*$ and its conjugate $\sigma V^*$ in closed form for every word (Theorem~\ref{prop:propagated}), and the frieze
relation at every solution (Theorem~\ref{thm:friezeall}). In the \emph{Bethe frame} (Section~\ref{sec:bethe}), the observables are defined:
the handle-gluing operator \eqref{eq:Habelian}; the deformation $(1+\epsilon)^{c_b}$, which defines a Bethe vacuum; and the terms class, unit,
positive vacuum and identity class. Section~\ref{sec:nahm} already uses some of these terms, with the meaning given in Section~\ref{sec:bethe}.

The charge vector $Q$ and the formula \eqref{eq:Habelian} are those of \cite{GKL23} for $LR^n$, extended to every word as stated in
Sections~\ref{sec:farey} and~\ref{sec:bethe}. Nothing proved in this part depends on the dictionary \eqref{eq:dict} or on input~(b).

\section{The Farey frame}
\label{sec:farey}

This section sets up the first frame: the layered triangulation of the mapping torus and the gauge-theory data it produces.
Section~\ref{sub:gluing} describes the layered triangulation by Farey flips, one ideal tetrahedron per letter, and writes its edge
equations. After one shape per tetrahedron is eliminated, these give the Neumann--Zagier matrices $A$ and $B$, and the level matrix $\Kraw$
solves $B\Kraw=A$, reproducing that of \cite{GKL23} for $LR^n$. Section~\ref{sub:charges} introduces the charges $Q$ (2 at each $L$-tetrahedron,
1 at each $R$), the integral level matrix $K=Q\Kraw Q$, the matrix $\mathbf N$ of Nahm's ladder, and the Bethe equations \eqref{eq:bethe} that the
rest of the paper studies. Since $\operatorname{rank}B=r-1$ (Theorem~\ref{thm:Kformula}), the symmetric solutions of $B\Kraw=A$ form a one-parameter family, and the determinant
law $\det K=2^r$ selects the member. Symmetry is imposed, not derived from the triangulation. Theorem~\ref{thm:Kformula} in the next section shows that the selected matrix is given in closed form by
the run structure, is positive definite, and is unique.

\subsection{Flips, edges and the gluing equations}
\label{sub:gluing}

The layered triangulations of once-punctured-torus bundles built from Farey flips go back to Floyd--Hatcher \cite{FH82}; see Gu\'eritaud
\cite{Gueritaud06} for their canonical form. An ideal triangulation of $\Sigma_{1,1}$ is a Farey triangle $\{a,b,a\oplus b\}$, represented by the ordered
pair $(a,b)$ of primitive vectors with $\det(a,b)=1$, the third slope being $a+b$. The two flips are
\begin{equation}
L:(a,b)\mapsto(a,a+b),\qquad R:(a,b)\mapsto(a+b,b),
\end{equation}
i.e.\ right multiplication by $L$ and $R$ as above, so a word of trace $k$ gives the monodromy of Section~\ref{sec:setup}. Under $L$ the slope $b$ dies,
$2a+b$ is born, and $a,a+b$ persist; under $R$ the slope $a$ dies, $a+2b$ is born, and $b,a+b$ persist. Slopes are therefore in bijection with letters,
one born per flip, so
\[
\#\{\text{edges}\}=\#\{\text{tetrahedra}\}=r .
\]
In the layered triangulation the tetrahedron $\Delta_i$ of flip $i$ has six edges: the dying and newborn slopes, an opposite pair carrying one parameter
with multiplicity one each; and the two persisting slopes, each an opposite pair because the quadrilateral of $\Sigma_{1,1}$ has its opposite sides
identified, hence carrying multiplicity two. This gives the edge equations directly.

Assign $(\text{dying/newborn},\text{kept},\text{Farey})\mapsto(Z,Z',Z'')$ on $L$-tetrahedra and $(Z',Z,Z'')$ on $R$-tetrahedra. This is the orientation of the shapes of the geometric layered triangulation \cite{Gueritaud06}; the chirality is essential, and with the uniform assignment $(Z,Z',Z'')$ on both letters $LR^n$, $n\ge2$, has no symmetric solution of \eqref{eq:BKA}. Here $Z,Z',Z''$ are the logarithmic shape parameters of a tetrahedron, with $Z+Z'+Z''=i\pi$; in terms of the shape $z$ of \cite{NZ85}, $(Z,Z')=(\log z,\log(1-z)-\log z)$. They are not the logarithms $Z_a$ of the Bethe coordinates of Section~\ref{sec:bethe}. Eliminating $Z''$, the derived equations reproduce the edge equations $C_1,\dots,C_n$ of eq.~(A.6) of \cite{GKL23}. Writing
$A,B$ for the coefficient matrices of $Z,Z'$ after that elimination (the Neumann--Zagier coefficient matrices \cite{NZ85}, unrelated to the groups
$A_\mp$), one has
\begin{equation}
B\Kraw=A ,
\label{eq:BKA}
\end{equation}
with $\Kraw$ the level matrix of eq.~(A.11) of \cite{GKL23}: for $LR^n$ that matrix has $(\Kraw)_{LL}=n/2$, $(\Kraw)_{LR_j}=j-1$, and the block on the $R$'s is the matrix with entries
$2\min(i,j)$, which is what Theorem~\ref{thm:Kformula} gives for every $n$ ($K_{LL}=4+2n-4=2n$, $K_{LR_j}=2j-2$, $Q_L=2$).
In \cite[(A.7)--(A.8)]{GKL23} the rows are the longitude $M_\lambda/2$ and the edges $C_1,\dots,C_n$. The remaining edge $C_{n+1}$ is omitted, since for one cusp the edge equations have one linear relation \cite{NZ85}.
Their $B$ is then invertible and $\Kraw=B^{-1}A$. Here we keep all $r$ edge equations, for every word. Then $\operatorname{rank}B=r-1$, and the free parameter is fixed in Section~\ref{sub:charges}.
With $Z''$ eliminated, the assignment of $(Z,Z',Z'')$ to the
dying/newborn, kept and Farey slots made above is the one for which the solution of \eqref{eq:BKA} has the form of
Theorem~\ref{thm:Kformula}; its $Z\leftrightarrow Z'$ swap exchanges $A$ and $B$, and $\Kraw^{-1}$ is then one member of the resulting family, which is not of that form.

\subsection{Charges, the integral level matrix, and the determinant law}
\label{sub:charges}

At $LR^n$ the rescaling that makes the level matrix integral is fixed by the abelian dual of \cite{GKL23}: each tetrahedron carries one chiral multiplet, and the
factor by which $Z_a$ must be rescaled is its charge $Q_a$. For a general word the charges are chosen below. Write
\begin{equation}
K:=Q\,\Kraw\,Q
\end{equation}
for the normalised integral matrix, which is the one entering the gauge theory, and
\begin{equation}
\mathbf N:=2\bigl(\mathbf1-Q K^{-1}Q\bigr)=2\bigl(\mathbf1-\Kraw^{-1}\bigr)
\label{eq:nahmmat}
\end{equation}
for the matrix of Nahm's ladder (Proposition~\ref{prop:bethenahm}), the second equality because the charges cancel. The distinction between $\Kraw$ and $K$ matters: amphichirality
(Proposition~\ref{prop:closed} below) is a property of $\Kraw$ alone, since $PQP\ne Q$ for $Q=(2,\dots,2,1,\dots,1)$.

\emph{The determinant law.} In the normalisation of \cite{GKL23}, $\det K=2^r$, hence $\det\Kraw=2^{r-2\#L}=2^{\#R-\#L}$
(Theorem~\ref{thm:Kformula}).

Since every entry of $K$ is even, the law $\det K=2^r$ says that $K/2$ is unimodular. The law is proved for every word in
Theorem~\ref{thm:Kformula}. It is recorded here because it fixes the free parameter of \eqref{eq:BKA}. Since $\operatorname{rank}B=r-1$ (Theorem~\ref{thm:Kformula}),
the gluing data determine $\Kraw$ only up to a one-parameter family $\Kraw(t)$ of symmetric matrices, and $\det K=2^r$ selects the member. For $L^3R^3$ two criteria select the same member of the family: $\det K=2^r$, equivalently unimodularity of $K/2$, and integrality of $\mathbf N$ together with evenness of $K$. Integrality of $\mathbf N$ alone does not, since $\mathbf N(t)$ is integral at $t=0,-\tfrac25,-\tfrac27,-\tfrac4{11},-\tfrac4{13},\dots$. The member selected has the charge vector
\begin{equation}
Q=(2,2,2,1,1,1),
\end{equation}
charge $2$ at each $L$-tetrahedron, generalising the $(2,1,\dots,1)$ of $LR^n$ with its single $L$. For every word we take $Q_a=2$ at each
$L$-tetrahedron and $Q_a=1$ at each $R$-tetrahedron; at $LR^n$ this is the charge vector of \cite{GKL23}, and at $L^3R^3$ it is the one the two
criteria select. The Bethe equations are then
\begin{equation}
\prod_az_a^{K_{ab}}\bigl(1-z_b^{-Q_b}\bigr)^{Q_b}=1 .
\label{eq:bethe}
\end{equation}
In the shape variables $x_a=z_a^{-Q_a}$ they read
\begin{equation}
(1-x_b)^{Q_b}=\prod_ax_a^{K_{ab}/Q_a}:\qquad (1-x_L)^2=\prod_ax_a^{K_{aL}/Q_a},\qquad 1-x_R=\prod_ax_a^{K_{aR}/Q_a},
\label{eq:bethex}
\end{equation}
with integer exponents, since every entry of $K$ is even (Theorem~\ref{thm:Kformula}). The solutions $z$ over a given $x$ differ only in the
signs $z_L=\pm x_L^{-1/2}$.

\section{The run-structure level matrix}
\label{sec:runK}

The Farey pipeline of Section~\ref{sec:farey} produces the level matrix $K$, the charge vector $Q$ and the charge $c$ by an algorithm. One solves
$B\Kraw=A$ for a symmetric $\Kraw$ and fixes the remaining one-parameter freedom by $\det K=2^r$. This section replaces the algorithm by a closed
form, which is the bridge between the triangulation data and everything that follows. Theorem~\ref{thm:Kformula} shows that, for every
hyperbolic word, $K$ is determined by the run structure alone: on each $R$-run the \emph{fan block}, the matrix with entries $2\min(i,j)$ (named after the fan of Farey slopes $w_j=w_0+ja$ of an $R$-run, Section~\ref{sub:runchain}), a cyclic tridiagonal $L$-block with
corrections at the run ends, and ramps (linearly increasing couplings, defined in the theorem) coupling each $R$-run to the $L$'s on either side. The charge $c$ is read off the same structure, and $K$
is positive definite with $\det K=2^r$. The only external input is the Neumann--Zagier rank theorem, for the geometric triangulation of \cite{Gueritaud06}, and the determinant law is a selection rule. For $L^mR^m$ the matrix is written out in
\eqref{eq:Kdiag}, and Proposition~\ref{prop:closed} shows that $\Kraw P\Kraw=P$ for the exchange $P$ of $L$- and $R$-sites, the algebraic form of
the amphichirality of $L^mR^m$.

\begin{theorem}[The pipeline produces the run-structure level matrix]
\label{thm:Kformula}
Write the word cyclically as runs $L^{a_1}R^{b_1}\cdots L^{a_{n_R}}R^{b_{n_R}}$. For every hyperbolic word, the unique symmetric solution $\Kraw$ of $B\Kraw=A$ with
$\det K=2^r$ is given by $K=Q\Kraw Q$ with
\begin{itemize}
\item each $R$-run of length $b$ a fan block $K_{ij}=2\min(i,j)$ within the run, $R$'s in different runs not coupling;
\item $L$ diagonal $4$, plus $2b$ for the first $L$ after an $R$-run of length $b$;
\item $L$--$L$ coupling $-2$ for each adjacency of cyclically consecutive $L$'s in the word with the $R$'s deleted: so $-4$ between the two
$L$'s when $\#L=2$, and $-4$ on the diagonal when $\#L=1$ (as for $LR^n$, giving $4+2n-4=2n$);
\item coupling $2j$ between the first $L$ after an $R$-run and the $j$-th $R$ of that run, so that these entries $2,4,\dots,2b$ increase linearly along the run (we call them the \emph{ramp}); and coupling $-2$ between the last $L$ before an $R$-run and every $R$ of it;
\end{itemize}
and the charge vector $c$, the primitive integer vector with $Q^{-1}c\in\ker B$ (which is one-dimensional), with its sign fixed by $c_{R_1}=1$, is $b$ at the first $L$ after an $R$-run of length $b$, $0$ at the other $L$'s, and $j$ at the $j$-th $R$ of each run. It is primitive because its entry at the first $R$ of each run is $1$.
Moreover $K$ is positive definite. The diagonal word $L^mR^m$ is the single-run case.
\end{theorem}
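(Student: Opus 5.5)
The approach is to guess and verify rather than to solve: write down the candidate $K$ and $c$ of the statement, check that they satisfy the defining equations, and then use the rank statement to get uniqueness. Put $\Kraw^{\mathrm{cand}}=Q^{-1}K^{\mathrm{cand}}Q^{-1}$. First I would write the $r$ edge equations of Section~\ref{sub:gluing} explicitly in terms of the run structure. The slope born at flip $i$ persists through a stretch of consecutive flips and then dies. Under $L$ the slope $b$ dies and $a,a+b$ persist; under $R$ the slope $a$ dies. Hence in an $R$-run the slope $b$ persists throughout, which gives the fan $w_j=w_0+ja$ of Section~\ref{sub:runchain}, while the newborn slopes die at the next flip unless the letter changes. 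Each row of $A$ and $B$ (edge $e$) therefore has support only on the tetrahedra at the birth, persistence and death of $e$. These entries are $\pm1$ at birth and death, $2$ while the slope persists, and $-1$ after the $Z''$-elimination $Z''=i\pi-Z-Z'$, with the chiral assignment $(Z,Z',Z'')$ on $L$ and $(Z',Z,Z'')$ on $R$. This is local in the word. I would then check $B\Kraw^{\mathrm{cand}}=A$ row by row, sorting edges by type: born and dying inside an $R$-run, born at an $R\to L$ transition, born inside an $L$-run, and born at an $L\to R$ transition. The fan block $2\min(i,j)$ is designed so that second differences along the run give the local tridiagonal rows. The ramp $2,4,\dots,2b$ and the $-2$ couplings absorb the boundary terms where a long-persisting fan slope finally dies at the first $L$ after the run. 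The cases $\#L=1,2$, where cyclic adjacencies coincide, are checked separately; for $LR^n$ the result matches \cite[(A.11)]{GKL23}, as already recorded.

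Next comes the charge vector. Since the edge relation makes $\operatorname{rank}B=r-1$, I will use the Neumann--Zagier rank theorem \cite{NZ85} for the geometric triangulation of \cite{Gueritaud06} to get $\ker B$ one-dimensional. I then check directly that $Q^{-1}c$, with $c$ as stated, is annihilated by $B$. This is the same case analysis, now on columns: along an $R$-run the entries $j$ are linear, so the fan second differences vanish. Primitivity is immediate from $c_{R_1}=1$. Uniqueness follows. Any symmetric solution differs from $\Kraw^{\mathrm{cand}}$ by a symmetric $X$ with $BX=0$, so $X=t\,vv^{\mathsf T}$ with $v=Q^{-1}c$, and $K(t)=K+t\,cc^{\mathsf T}$. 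Then $\det K(t)=\det K\,(1+t\,c^{\mathsf T}K^{-1}c)$ is affine in $t$. It takes the value $2^r$ exactly once provided $\det K=2^r$ and $c^{\mathsf T}K^{-1}c\neq0$, and the latter is automatic once $K$ is positive definite.

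The remaining claims are $\det K=2^r$ and positive definiteness. I would write $K=2G$ and aim to show that $G$ is unimodular and positive definite by exhibiting $G=U^{\mathsf T}DU$, with $U$ unimodular upper triangular and $D$ diagonal with positive entries whose product is $2^{r}/2^r=1$ after the scaling. Concretely, within each $R$-run the fan block $\min(i,j)$ equals $M^{\mathsf T}M$ with $M$ the lower-triangular all-ones matrix, of determinant $1$. Completing squares run by run eliminates the $R$'s: the ramp and the $-2$ couplings are rank-one in the fan basis, namely the last fan vector and the first coordinate. Eliminating them leaves a Schur complement on the $L$'s that is again a cyclic tridiagonal Laplacian-type matrix with diagonal shifts. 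I expect this Schur complement computation to be the main obstacle, because the cyclic closure makes the $L$-block a weighted cycle Laplacian plus a rank-one-type correction from each run. I would first try to show that it equals $D_L+\mathcal L$ with $\mathcal L$ positive semidefinite and $D_L$ positive on the first $L$ of each run, which gives definiteness. The determinant I would then compute by induction on the word: inserting one letter into a run (replacing $b$ by $b+1$ or $a$ by $a+1$) should multiply $\det K$ by exactly $2$, with base case $LR$ checked directly.
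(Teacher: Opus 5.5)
Your treatment of $B\Kraw=A$, of $\ker B$ and of uniqueness is the paper's route: row-by-row verification of the edge rows, the Neumann--Zagier rank theorem, and the family $K+t\,cc^{\mathsf T}$ handled with the matrix determinant lemma.

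The genuine gap is the determinant law together with positive definiteness. You leave both as intentions (``I would first try'', ``should multiply $\det K$ by exactly $2$''). Yet your uniqueness step needs both: $\det K=2^r$ to know that the candidate is the selected member, and $c^{\mathsf T}K^{-1}c>0$ to know that it is the only one. You also mispredict the computation you defer. The $L$-Schur complement is not a cyclic Laplacian with corrections, because the cyclic closure cancels.

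Write $F_b=2M_b$ with $M_b=(\min(i,j))$. The columns coupling an $R$-run $\rho$ of length $b$ to $L^+(\rho)$ and to $L^-(\rho)$ are $\mathsf r=2(1,\dots,b)^{\mathsf T}=F_be_b$ and $s=-2\cdot\mathbf1=-F_be_1$; these are your ``last fan vector and first coordinate''. Eliminating $\rho$ therefore subtracts three quantities: $\mathsf r^{\mathsf T}F_b^{-1}\mathsf r=2b$ at $L^+(\rho)$, $s^{\mathsf T}F_b^{-1}s=2$ at $L^-(\rho)$, and $\mathsf r^{\mathsf T}F_b^{-1}s=-2$ at the entry $(L^-(\rho),L^+(\rho))$. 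The first removes the bump $+2b$. The third removes exactly the $-2$ coupling between the two $L$'s adjacent across $\rho$.

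What is left is block diagonal by $L$-runs, $\bigoplus_q2M_{a_q}^{-1}$, each block tridiagonal with diagonal $4,\dots,4,2$ and off-diagonal $-2$. Singleton $L$-runs and the folded cases $\#L=1,2$ give the same blocks; for $LR^b$ the correction is $(\mathsf r+s)^{\mathsf T}F_b^{-1}(\mathsf r+s)=2b-2$, leaving $2$. Each block is positive definite with determinant $2^{a_q}$, and each $F_{b_q}$ is positive definite with determinant $2^{b_q}$. So $K$ is positive definite and $\det K=2^{\#R}2^{\#L}=2^r$ at once. No induction on the word is needed; the one you propose restates the claim rather than arguing it.

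Two smaller points. First, the Neumann--Zagier theorem gives $\operatorname{rank}(A\,|\,B)=r-1$, which bounds $\operatorname{rank}B$ only from above. $\dim\ker B=1$ needs the lower bound, which comes from $(A\,|\,B)=B(\Kraw\,|\,I)$ once $B\Kraw=A$ is verified; say so.

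Second, check the coefficients in the edge rows. The Farey slot has multiplicity two and contributes $-2$ to both $A$ and $B$; with $-1$ the check already fails at the pattern $LLL$. Birth and death contribute $+1$. The row of the edge born at flip $i$ is determined by $(\ell_i,\ell_{i+1},\ell_{i+2})$, since whether the edge next dies or is kept is decided by whether $\ell_{i+2}=\ell_{i+1}$. So the case split has eight patterns rather than your four birth types.

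Also, $\min(i,j)=MM^{\mathsf T}$ for $M$ lower-triangular all-ones, not $M^{\mathsf T}M$; the determinant is unaffected.
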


For an example see the matrix $K$ of $L^3R^3$ in Section~\ref{sec:l3r3}. Its lower right $3\times3$ block is the fan block. The $L$-diagonal is $(10,4,4)$: the first $L$ after the run of length $b=3$ has $4+2\cdot3=10$, and the other two have $4$. The row of the first $L$ also carries the ramp $2,4,6$. The row of the third $L$ has $-2$ at every $R$. The $L$--$L$ entries $-2$ come from the three adjacencies of the cyclic word $LLL$.

\paragraph{The rows of $\Kraw$.} For each $R$-run $\rho$ of length $b$, write $L^+(\rho)$ for the first $L$
after $\rho$ and $L^-(\rho)$ for the last $L$ before it; call an $L$ \emph{middle} if both its neighbours in the word are $L$'s. With
$\Kraw=Q^{-1}KQ^{-1}$ the theorem says that $\Kraw$ is the sum of the following pieces, entries at coinciding sites being added. (So for $\#L=1$ the single $L$ is both $L^+$ and $L^-$ of every run and is adjacent to itself.) For each $R$-run $\rho$ of length $b$ the pieces are
the fan $2\min(i,j)$ on $\rho$, the ramp $\Kraw[L^+(\rho),R_j]=j$, the entry $b/2$ on the diagonal at $L^+(\rho)$, and
$\Kraw[L^-(\rho),R_j]=-1$ for $j=1,\dots,b$. In addition there is the entry $1$ on the diagonal at every $L$, and the entry $-\tfrac12$ at each pair of cyclically
consecutive $L$'s of the word with the $R$'s deleted, once for each adjacency. So the row of $\Kraw$ at the $j$-th $R$ of a run $\rho$ of
length $b$ is
\[
\Kraw[R_j,\cdot]=2\min(j,l)\ \text{at }R_l\in\rho,\qquad j\ \text{at }L^+(\rho),\qquad -1\ \text{at }L^-(\rho),
\]
and zero elsewhere; the row at an $L$ has $1$ on the diagonal, $-\tfrac12$ at each adjacent $L$, and, if the $L$ is $L^+(\rho)$, a further
$b/2$ on the diagonal and $l$ at $R_l\in\rho$; if it is $L^-(\rho')$, $-1$ at every $R$ of $\rho'$. The combinations used in the proof (Appendix~\ref{app:Kformula}) are linear, so
coincidences of sites take care of themselves.

The proof of Theorem~\ref{thm:Kformula} is in Appendix~\ref{app:Kformula}. Its only external input is the Neumann--Zagier rank
theorem \cite[Proposition~2.3]{NZ85}, which needs the layered triangulation to be geometric \cite{Gueritaud06}.

For the diagonal family the theorem specialises as follows. Index the $L$'s and the $R$'s by $0,\dots,m-1$ and put
$v=(1,2,\dots,m)^{\mathsf T}$, $\mathbf1=(1,\dots,1)^{\mathsf T}$:
\begin{equation}
\begin{gathered}
\Kraw(m)=\begin{pmatrix}C&V\\V^{\mathsf T}&F\end{pmatrix},\qquad F_{ij}=2\min(i{+}1,j{+}1),\\
C=\mathrm{diag}\bigl(\tfrac{m+2}2,1,\dots,1\bigr)-\tfrac12\Adj(C_m)\quad (\Adj(C_1):=2),
\end{gathered}
\label{eq:Kdiag}
\end{equation}
with $V=e_0v^{\mathsf T}-e_{m-1}\mathbf1^{\mathsf T}$ and the convention at $m=2$ that the cycle $C_2$ has a doubled edge.

A word is \emph{amphichiral} if it is conjugate to its own transpose, the word reversed with $L$ and $R$ exchanged (the term is the knot-theoretic one, a manifold homeomorphic to its mirror image). Since
$L^{\mathsf T}=R$ and $M^{\mathsf T}=JM^{-1}J^{-1}$ for $M\in SL_2$, with $J$ the matrix of the pairing $\wedge$ of Section~\ref{sec:metric}, the transpose carries the inverse
monodromy, so an amphichiral word is conjugate to its own inverse and its mapping torus admits an orientation-reversing
self-homeomorphism. The diagonal words are amphichiral: $L^mR^m$ is a symmetric matrix, so it equals its own transpose; $L^mR^n$ with $m\ne n$ is not. The
proposition is the level-matrix form of this. Orientation reversal acts on the shapes by $z\mapsto\bar z$ and on the Nahm variables by
$\delta\mapsto1-\delta$. It permutes the tetrahedra by the $L\leftrightarrow R$ exchange $P$, and inverts $\Kraw$.

\begin{proposition}[Amphichirality]
\label{prop:closed}
For the diagonal words $\varphi=L^mR^m$, $\Kraw(m)\,P\,\Kraw(m)=P$, where $P=\begin{psmallmatrix}0&I\\I&0\end{psmallmatrix}$ is the permutation exchanging the $L$-sites with
the $R$-sites.
\end{proposition}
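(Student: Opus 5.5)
We will verify the identity by direct block multiplication on the explicit form \eqref{eq:Kdiag}. Since $P^2=\mathbf1$, the identity $\Kraw P\Kraw=P$ is equivalent to $\Kraw\,(P\Kraw P)=\mathbf1$. Write $\Kraw=\begin{psmallmatrix}C&V\\V^{\mathsf T}&F\end{psmallmatrix}$, so that $P\Kraw P=\begin{psmallmatrix}F&V^{\mathsf T}\\V&C\end{psmallmatrix}$. The product has blocks $CF+V^2$, $CV^{\mathsf T}+VC$, $V^{\mathsf T}F+FV$ and $(V^{\mathsf T})^2+FC$. Since $C$ and $F$ are symmetric, the lower-right block is the transpose of the upper-left one and the lower-left block is the transpose of the upper-right one. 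It therefore suffices to prove
\[
CF+V^2=\mathbf1,\qquad CV^{\mathsf T}+VC=0 .
\]

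\emph{The inverse of the fan block.} $F_{ij}=2\min(i+1,j+1)$ is twice the Green's matrix of the path, so $F^{-1}=\tfrac12T$, where $T$ is tridiagonal with diagonal $(2,\dots,2,1)$ and off-diagonal entries $-1$. We also need how $F$ acts on the vectors appearing in $V$. The column $Fe_0$ equals $2\cdot\mathbf1$ and the column $Fe_{m-1}$ equals $2v$. Hence $F^{-1}\mathbf1=\tfrac12e_0$ and $F^{-1}v=\tfrac12e_{m-1}$.

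\emph{First identity.} From $V=e_0v^{\mathsf T}-e_{m-1}\mathbf1^{\mathsf T}$ and $v_0=1$, $v_{m-1}=m$ we get
\[
V^2=e_0v^{\mathsf T}-m\,e_0\mathbf1^{\mathsf T}-e_{m-1}v^{\mathsf T}+e_{m-1}\mathbf1^{\mathsf T}.
\]
So the first identity becomes $C=(\mathbf1-V^2)F^{-1}$. Using $v^{\mathsf T}F^{-1}=\tfrac12e_{m-1}^{\mathsf T}$ and $\mathbf1^{\mathsf T}F^{-1}=\tfrac12e_0^{\mathsf T}$, this reads
\[
C=\tfrac12T-\tfrac12e_0e_{m-1}^{\mathsf T}+\tfrac m2e_0e_0^{\mathsf T}+\tfrac12e_{m-1}e_{m-1}^{\mathsf T}-\tfrac12e_{m-1}e_0^{\mathsf T}.
\]
We then compare this entrywise with \eqref{eq:Kdiag}. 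The diagonal becomes $1$ everywhere: at $m-1$ it is $\tfrac12+\tfrac12$, and at $0$ an extra $\tfrac m2$ gives $\tfrac{m+2}2$. The off-diagonal entries are $-\tfrac12$ along the path, and the corner entries $-\tfrac12$ close it into the cycle $C_m$. Altogether this is exactly $\mathrm{diag}(\tfrac{m+2}2,1,\dots,1)-\tfrac12\Adj(C_m)$.

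\emph{Second identity and small cases.} For $CV^{\mathsf T}+VC=0$ we expand $CV^{\mathsf T}=Cv\,e_0^{\mathsf T}-C\mathbf1\,e_{m-1}^{\mathsf T}$ and $VC=e_0(Cv)^{\mathsf T}-e_{m-1}(C\mathbf1)^{\mathsf T}$. This requires only the two vectors $C\mathbf1$ and $Cv$. A row-sum computation gives $C\mathbf1=\tfrac m2e_0$, since the cycle Laplacian kills $\mathbf1$. For $Cv$, the cycle Laplacian applied to $v$ is supported at the two ends, with entries $\pm m$ at $e_0$ and $e_{m-1}$, and the correction at $e_0$ adds to this. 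We expect $Cv=\tfrac m2e_{m-1}$ up to the precise constants. Substituting these two vectors, the four rank-one terms should cancel in pairs: the $e_0e_{m-1}^{\mathsf T}$ term against the $e_{m-1}e_0^{\mathsf T}$ term, and the diagonal-corner terms against each other.

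The main obstacle is this sign and constant bookkeeping, and especially the degenerate cases. For $m=1$ we have $\Adj(C_1)=2$. For $m=2$ the cycle has a doubled edge, and $e_0$ and $e_{m-1}$ coincide or are adjacent. We would check these cases separately: for $m=1$ the matrix is $2\times2$, and for $m=2$ it is $4\times4$, so a direct multiplication will do.
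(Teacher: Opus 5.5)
\paragraph{The reduction to two identities is wrong.}
Put $Y=\Kraw(P\Kraw P)$. The upper-right block $CV^{\mathsf T}+VC$ is its own transpose; it is not the transpose of the lower-left block $V^{\mathsf T}F+FV$, which involves $F$ where the former involves $C$. The symmetry actually available is $Y^{\mathsf T}=P\Kraw P\Kraw=PYP$. It gives only $Y_{22}=Y_{11}^{\mathsf T}$ and the symmetry of each of $Y_{12}$ and $Y_{21}$ separately, so the top block row does not determine $Y_{21}$. For example, with $1\times1$ blocks, $\Kraw=\begin{psmallmatrix}0&1\\1&1\end{psmallmatrix}$ satisfies $CF+V^2=1$ and $CV^{\mathsf T}+VC=0$, yet $\Kraw P\Kraw P=\begin{psmallmatrix}1&0\\2&1\end{psmallmatrix}\ne\mathbf1$. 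So you must also prove $FV+V^{\mathsf T}F=0$. It follows at once from the two columns you already computed: $FV=(Fe_0)v^{\mathsf T}-(Fe_{m-1})\mathbf1^{\mathsf T}=2\mathbf1v^{\mathsf T}-2v\mathbf1^{\mathsf T}$, and $V^{\mathsf T}F=(FV)^{\mathsf T}=-FV$. This is part (iii) of the paper's proof, which splits the identity into exactly these three block conditions.

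\paragraph{The rest is sound, but one step is unfinished.}
Your proof of $CF+V^2=\mathbf1$ by checking $C=(\mathbf1-V^2)F^{-1}$ is correct. It is tidier than the paper's row-by-row evaluation of $CF$, and it needs no separate treatment of the small cases. At $m=2$ the path entry $-\tfrac12$ and the corner $-\tfrac12$ together give the doubled edge $C_{01}=-1$; at $m=1$ the four rank-one terms cancel and $C=\tfrac12$.

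For the second identity, replace ``we expect'' by the computation. The interior rows $(-\tfrac12,1,-\tfrac12)$ of $C$ annihilate the ramp $v$. Row $0$ gives $\tfrac{m+2}2-1-\tfrac m2=0$ and row $m-1$ gives $-\tfrac12-\tfrac12(m-1)+m=\tfrac m2$, so $Cv=\tfrac m2e_{m-1}$ exactly. Then
\[
CV^{\mathsf T}+VC=\tfrac m2\bigl(e_{m-1}e_0^{\mathsf T}-e_0e_{m-1}^{\mathsf T}\bigr)+\tfrac m2\bigl(e_0e_{m-1}^{\mathsf T}-e_{m-1}e_0^{\mathsf T}\bigr)=0 .
\]
Each rank-one term cancels against the same term with opposite sign from the other product. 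No diagonal terms occur, so the pairing you describe is not the one that happens, although the conclusion is right.
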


The proof in Appendix~\ref{app:closed} is a direct check of three block identities.

\section{The Nahm frame and the closed form}
\label{sec:nahm}

This section sets up the second frame: it reads the Bethe system as Nahm's system and solves it at the positive point, in four steps.
(i)~Proposition~\ref{prop:bethenahm} shows that, under $\delta_b=1-x_b$, the undeformed Bethe equations are Nahm's ladder for the symmetric matrix
$\mathbf N=2(I-\Kraw^{-1})$, which makes $c_{\eff}$ a Rogers dilogarithm sum.
(ii)~Section~\ref{sub:nahmdiag} solves the diagonal words $L^mR^m$ by hand (Theorem~\ref{thm:nahmclosed}); the solution is a linear chain $\mathcal Q_j=1+jd$ along each run.
(iii)~Section~\ref{sub:propagated} shows that this is general: Theorem~\ref{prop:propagated} gives the positive vacuum $V^*$ of every word as ratios of heights
$h(v)=\langle\eta,v\rangle$ along the Farey slopes, with $\eta$ the expanding eigenvector, and contains Theorem~\ref{thm:nahmclosed} as the case $L^mR^m$.
Its consequences are $W_c(V^*)=\lambda$ (Proposition~\ref{obs:stretch}) and the rationality of the Rogers sum (Corollary~\ref{cor:rational}).
(iv)~Section~\ref{sub:runchain} explains the chain: within a run the heights form a frieze row (Remark~\ref{rem:friezedef}); the frieze relation holds at every solution of the
Bethe equations, not only at $V^*$ (Theorem~\ref{thm:friezeall}); and it reduces a two-run word to its run-end equations (Proposition~\ref{prop:runreduce}).
Section~\ref{sub:families} writes out $LR^n$ and $L^2R^n$ and compares the second with Jeong--Sohn's presentation (Remark~\ref{rem:jsframe}).

\emph{Conventions.} One solution is described by four sets of variables: the Bethe shapes $x_b=z_b^{-Q_b}$ of \eqref{eq:bethex}; the Nahm variables $\delta_b=1-x_b$ and $d_b=\delta_b^{-1/2}$; and the chain variables $\hat d_b=s_b^{-1/2}$ of Section~\ref{sub:runchain}, with $s_b=x_b$ at an $L$-step and $s_b=1-x_b$ at an $R$-step, so that $\hat d_b=x_b^{-1/2}$ at an $L$-step and $\hat d_b=d_b$ at an $R$-step. Solutions are counted in the shapes: a solution $x$ of \eqref{eq:bethex} lifts to $2^{\#L}$ solutions $z$ of \eqref{eq:bethe}, the sign choices $z_L=\pm x_L^{-1/2}$. The positive vacuum $V^*$ is the all-plus lift ($z_a>1$ for all $a$) of the solution $x\in(0,1)^r$ of \eqref{eq:bethex}, which is unique (Theorem~\ref{prop:propagated}). On this positive branch ($1-x_b>0$, $d_b>0$) Proposition~\ref{prop:bethenahm} makes a solution in one set of variables a solution in the others. Statements are made in the shapes $x$ unless they concern Nahm's equations.

\subsection{The Bethe system is a Nahm ladder}
\label{sub:bethenahm}

We first show that the Bethe equations and Nahm's equations are the same system in different variables.

\begin{proposition}[The two systems are one]
\label{prop:bethenahm}
Let $\varphi$ be a hyperbolic word, $(K,Q)$ the pipeline output of Theorem~\ref{thm:Kformula},
$\Kraw=Q^{-1}KQ^{-1}$, and
\[
\mathbf N:=2\bigl(I-QK^{-1}Q\bigr)=2\bigl(I-\Kraw^{-1}\bigr).
\]
Under $\delta_b=1-x_b=d_b^{-2}$, on the positive branch (all $x_b\in(0,1)$, real logarithms), the undeformed Bethe equations \eqref{eq:bethe} and Nahm's ladder for $\mathbf N$ are the same system:
\begin{equation}
1-x_b=\prod_ax_a^{(\Kraw)_{ab}}
\iff
\log(1-\delta_b)=\sum_a\bigl(\Kraw^{-1}\bigr)_{ba}\log\delta_a
\iff
d_b^2=1+\prod_ad_a^{\mathbf N_{ba}} .
\label{eq:bethenahm}
\end{equation}
Here the first form is the positive-root form of \eqref{eq:bethex} (for $Q_b=2$, \eqref{eq:bethex} reads $(1-x_b)^2=\cdots$), with $1-x_b>0$ and $d_b>0$. Moreover $\mathbf N$ is symmetric, because $K$ is.
\end{proposition}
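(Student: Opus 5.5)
The plan is to go through three equivalent forms of the system in turn, taking logarithms on the positive branch.

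First, the Bethe equations in the shape form \eqref{eq:bethex} read $(1-x_b)^{Q_b}=\prod_a x_a^{K_{ab}/Q_a}$. Since $K_{ab}=Q_a(\Kraw)_{ab}Q_b$, the exponent $K_{ab}/Q_a$ equals $Q_b(\Kraw)_{ab}$, so the right side is $\bigl(\prod_a x_a^{(\Kraw)_{ab}}\bigr)^{Q_b}$. On the positive branch both $1-x_b$ and $x_a$ lie in $(0,1)$, so we may take the unique positive $Q_b$-th root. This gives $1-x_b=\prod_a x_a^{(\Kraw)_{ab}}$, and conversely raising this to the power $Q_b$ recovers \eqref{eq:bethex}. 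The relation between $z$ and $x$ is immaterial here, because the statement concerns the $x$-form; \eqref{eq:bethe} and \eqref{eq:bethex} were already identified in Section~\ref{sub:charges}.

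Second, with $\delta=1-x$, take real logarithms. The system becomes $\log\delta_b=\sum_a(\Kraw)_{ab}\log(1-\delta_a)$, that is, the vector identity $\log\delta=\Kraw^{\mathsf T}\log(1-\delta)$. Theorem~\ref{thm:Kformula} gives $\det K=2^r\neq0$ and $K$ symmetric, so $\Kraw$ is invertible and symmetric. Inverting yields $\log(1-\delta)=\Kraw^{-1}\log\delta$, which is the middle form. Every step is reversible, so the first two forms are equivalent.

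Third, substitute $\delta_a=d_a^{-2}$ with $d_a>0$, so that $1-\delta_b=(d_b^2-1)/d_b^2$. The middle form becomes
\[
\log(d_b^2-1)-2\log d_b=-2\sum_a(\Kraw^{-1})_{ba}\log d_a .
\]
This rearranges to $\log(d_b^2-1)=\sum_a\bigl(2\delta_{ab}-2(\Kraw^{-1})_{ba}\bigr)\log d_a=\sum_a\mathbf N_{ba}\log d_a$. Exponentiating gives $d_b^2=1+\prod_a d_a^{\mathbf N_{ba}}$, and here $d_b>1$ because $\delta_b\in(0,1)$. Conversely, any positive solution of Nahm's ladder has $d_b>1$, hence $\delta_b\in(0,1)$ and $x_b\in(0,1)$, so the positive branches correspond bijectively.

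Finally, symmetry of $\mathbf N$ follows from symmetry of $K$ and the diagonal $Q$: $\Kraw=Q^{-1}KQ^{-1}$ is symmetric, hence so are $\Kraw^{-1}$ and $\mathbf N$. The identity $QK^{-1}Q=\Kraw^{-1}$ is immediate from $K=Q\Kraw Q$. There is no real obstacle. The only points needing care are the choice of positive roots when $Q_b=2$, which is why the statement is restricted to the positive branch, and invertibility of $\Kraw$, which is supplied by Theorem~\ref{thm:Kformula}.
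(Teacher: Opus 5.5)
Your proposal is correct and matches the paper's proof: take real logarithms on the positive branch, pass between the first two forms using the invertibility and symmetry of $\Kraw$, then substitute $\delta_b=d_b^{-2}$ to read off the ladder with $\mathbf N=2(I-\Kraw^{-1})$. You also spell out points the paper leaves implicit: the positive $Q_b$-th root taken from \eqref{eq:bethex}, invertibility from $\det K=2^r$, and the converse via $d_b>1$.
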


\begin{proof}
Taking logarithms, the first equation reads $\log(1-x_b)=\sum_a(\Kraw)_{ab}\log x_a$. Substituting $\delta=1-x$ into the second and using the first,
\[
\sum_a\bigl(\Kraw^{-1}\bigr)_{ba}\log(1-x_a)=\sum_a\bigl(\Kraw^{-1}\bigr)_{ba}\sum_c(\Kraw)_{ca}\log x_c=\log x_b
\]
by symmetry of $\Kraw$, which is the second equation. For the third, $\delta_b=d_b^{-2}$ gives $\log\delta_a=-2\log d_a$ and
$\log(1-\delta_b)=\log(d_b^2-1)-2\log d_b$, so the second becomes $\log(d_b^2-1)=\sum_a\bigl[2\delta_{ba}-2(\Kraw^{-1})_{ba}\bigr]\log d_a$, which is the
third. Symmetry of $\mathbf N$ is symmetry of $K$, and $\Kraw^{-1}=QK^{-1}Q$ by definition.
\end{proof}

The system takes three equivalent forms. In shape form the Bethe equations read $1-x_b=\prod_ax_a^{(\Kraw)_{ab}}$, with the
positive-definite exponent matrix $\Kraw$ (Theorem~\ref{thm:Kformula}). Rewritten in $\delta=1-x$ they read $1-\delta_b=\prod_a\delta_a^{(\Kraw^{-1})_{ba}}$, a
Nahm system with the positive-definite matrix $\Kraw^{-1}=QK^{-1}Q$, and the ladder $d_b^2=1+\prod_ad_a^{\mathbf N_{ba}}$ is the same system in
$d=\delta^{-1/2}$. This is the system whose positive solution gives $c_{\eff}$ (Section~\ref{sec:ceff}).

So the handle-gluing operator, the class count and the central charge are observables of one matrix read in three ways: as the gluing data of the
layered triangulation (Section~\ref{sec:farey}), as the Bethe system of the TQFT (Section~\ref{sec:bethe}), and as a Nahm ladder (this section). The Nahm reading is
what makes $c_{\eff}$ accessible at all; it also gives its rationality at once (Corollary~\ref{cor:rational} below).

\begin{remark}[One system, two points]
\label{rem:twopoints}
The hyperbolic volume sum $\sum_iD(z_i)$ and the Nahm sum $\sum_iL(\delta_i)$ are one family of algebraic systems, specialised at
two different points: the completeness point, with $z_i\in\mathbb H$ and the Bloch--Wigner branch, and the flux point $(\mathfrak m,\nu)=(0,-1)$ of the twisted index (Section~\ref{sec:bethe}), with
$\delta_i\in(0,1)$ and the Rogers branch. Both $D$ and $L$ satisfy the same five-term functional equation --- $D$ the single-valued extension to $\C$,
$L$ the one to $(0,1)$ --- which is the mechanism behind Nahm's conjecture \cite{Nahm04,NRT92} and its refinements \cite{CGZ23}. This does \emph{not} give a numerical formula
$c_{\eff}=f(\mathrm{Vol})$, since among words with the same $\#L$, such as $LR^n$, the volume varies while $c_{\eff}=\#L$ does not. The relation is structural, with one variety and one functional
equation, and not numerical.
\end{remark}

\subsection{The diagonal family}
\label{sub:nahmdiag}

We begin with the diagonal words $L^mR^m$, where the solution can be found by hand and exhibits the structure that Section~\ref{sub:propagated} proves in general.

The closed forms can be found by an \emph{ansatz}: one posits that the chain variables $\hat d_b$ of the Conventions (equal to the Nahm variable $d_b$ at $R$-sites and to $x_b^{-1/2}$ at $L$-sites) increase by one along a run, checks that this solves the bulk equations identically, and is left with a few boundary equations to solve in radicals. Theorem~\ref{prop:propagated} shows that this is what happens for every word (Section~\ref{sub:runchain}).

Let $d=\tfrac12(m+\sqrt{m^2+4})$, so $d^2=md+1$, and put $\mathcal Q_j=1+jd$.

\begin{theorem}[Positive solution of $L^mR^m$]
\label{thm:nahmclosed}
For every $m\ge3$ the shape equations \eqref{eq:bethex} of $L^mR^m$ have exactly one solution $x\in(0,1)^{2m}$, whose all-plus lift is $V^*$, namely
\[
x_{L_j}=\frac{d^2}{\mathcal Q_j^2},\qquad x_{R_j}=\frac{\mathcal Q_{j-1}\mathcal Q_{j+1}}{\mathcal Q_j^2},\qquad j=1,\dots,m,
\]
where $L_j$ and $R_j$ denote the $j$-th $L$-site and the $j$-th $R$-site of the word, in the order the letters occur. Equivalently, by Proposition~\ref{prop:bethenahm}, the ladder equations have the unique solution $\delta=1-x$ in $(0,1)^{2m}$, $\delta_{L_j}=\mathcal Q_{j-1}\mathcal Q_{j+1}/\mathcal Q_j^2$ and $\delta_{R_j}=d^2/\mathcal Q_j^2$, so that $\delta=1-x=Px$ with $P$ the exchange of $L$- and $R$-sites of Proposition~\ref{prop:closed}. The $\mathcal Q_j$ are affine in $j$ and satisfy
$\mathcal Q_j^2-\mathcal Q_{j-1}\mathcal Q_{j+1}=d^2$ for all $j$.
\end{theorem}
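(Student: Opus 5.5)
The plan is to verify the displayed formulas directly against the positive-root form of Proposition~\ref{prop:bethenahm}, $1-x_b=\prod_a x_a^{(\Kraw)_{ab}}$, using the explicit $\Kraw(m)$ of \eqref{eq:Kdiag}. Uniqueness then follows from a convexity argument. Two elementary facts do most of the work. First, because $\mathcal Q_j$ is affine in $j$, one has $\mathcal Q_j^2-\mathcal Q_{j-1}\mathcal Q_{j+1}=d^2$. Second, $d^2=md+1$ gives $\mathcal Q_m=d^2$, $\mathcal Q_{m+1}=d\,\mathcal Q_1$ and $\mathcal Q_0=1$. These show at once that $x\in(0,1)^{2m}$, since $\mathcal Q_j>d$ for $j\ge1$ and $\mathcal Q_{j-1}\mathcal Q_{j+1}<\mathcal Q_j^2$. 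They also show that $\delta=1-x=Px$, because $1-d^2/\mathcal Q_j^2=\mathcal Q_{j-1}\mathcal Q_{j+1}/\mathcal Q_j^2$. So once existence in the $x$-variables is established, the statement about $\delta$ and $P$ is automatic.

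For existence I take logarithms, $u_a=\log x_a$, and check the $2m$ linear equations $\log(1-x_b)=\sum_a(\Kraw)_{ab}u_a$ row by row, using the description of the rows of $\Kraw$ after Theorem~\ref{thm:Kformula}. In the single-run case, $L^+=L_1$ and $L^-=L_m$.

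\emph{Middle $L$-rows} ($2\le j\le m-1$). The row is $u_{L_j}-\tfrac12u_{L_{j-1}}-\tfrac12u_{L_{j+1}}$. Substituting $u_{L_j}=2\log d-2\log\mathcal Q_j$, this equals $\log(\mathcal Q_{j-1}\mathcal Q_{j+1}/\mathcal Q_j^2)=\log(1-x_{L_j})$.

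\emph{$R$-rows.} The row at $R_j$ is $2\sum_l\min(j,l)u_{R_l}+j\,u_{L_1}-u_{L_m}$. Since $u_{R_l}=\log\mathcal Q_{l-1}-2\log\mathcal Q_l+\log\mathcal Q_{l+1}$ is a second difference, the weighted sum $\sum_l\min(j,l)u_{R_l}$ telescopes. Its value is $j\log(\mathcal Q_{j+1}/\mathcal Q_j)$ plus boundary terms at $l=0$ and $l=m,m+1$. With $\mathcal Q_0=1$, $\mathcal Q_m=d^2$, $\mathcal Q_{m+1}=d\mathcal Q_1$, these boundary terms should cancel against $j\,u_{L_1}-u_{L_m}$, leaving $\log(d^2/\mathcal Q_j^2)$.

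\emph{Boundary $L$-rows} ($L_1$ and $L_m$). These carry the extra diagonal $m/2$, the ramp $l$ at $R_l$, and the $-1$'s at every $R$. They reduce in the same way to the identities $\mathcal Q_m=d^2$ and $\mathcal Q_{m+1}=d\mathcal Q_1$.

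I expect this bookkeeping of the boundary rows (the $L_1$ and $L_m$ rows, and the end terms of the telescoped fan sums) to be the main obstacle. I would first do it symbolically for general $m$ and check it at $m=3$ against the matrix of Section~\ref{sec:l3r3}. The case $m=2$, where $C_2$ has a doubled edge, would be set aside, since the theorem assumes $m\ge3$.

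For uniqueness, consider $\Phi(u)=\tfrac12u^{\mathsf T}\Kraw u-\sum_a\Li_2(e^{u_a})$ on $u\in(-\infty,0)^{2m}$. Since $\tfrac{d}{du}\bigl(-\Li_2(e^u)\bigr)=\log(1-e^u)$, the critical-point equations are $\Kraw u=-\log(1-e^u)$ componentwise. I would match the sign conventions to the form above, equivalently work with the Nahm form in $\delta$ with matrix $\Kraw^{-1}$. The Hessian is $\Kraw+\mathrm{diag}\bigl(e^{u_a}/(1-e^{u_a})\bigr)$, which is positive definite because $\Kraw$ is (Theorem~\ref{thm:Kformula}). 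So $\Phi$ is strictly convex on a convex domain and has at most one critical point. Hence the explicit solution is the unique one in $(0,1)^{2m}$, and by definition its all-plus lift is $V^*$.
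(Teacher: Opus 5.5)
Your proposal is correct and is essentially the paper's proof: the paper also checks $1-x_b=\prod_a x_a^{(\Kraw)_{ab}}$ column by column, and gets uniqueness from positive definiteness, phrased for the Nahm system in $\delta$ with matrix $\Kraw^{-1}$ via \cite{VZ11}, which is the same convexity argument as yours. Two small repairs: the fan sum telescopes to $\log\mathcal Q_0-\log\mathcal Q_j+j\log(\mathcal Q_{m+1}/\mathcal Q_m)$ (it is the $-\log\mathcal Q_j$, not a term $j\log(\mathcal Q_{j+1}/\mathcal Q_j)$, that survives), and in the convexity step you should use $\tfrac12u^{\mathsf T}\Kraw u+\sum_a\Li_2(e^{u_a})$ rather than $-\sum_a\Li_2(e^{u_a})$, since its critical points are exactly $\Kraw u=\log(1-e^{u})$ and its Hessian is the $\Kraw+\operatorname{diag}\bigl(e^{u_a}/(1-e^{u_a})\bigr)$ you wrote.
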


\begin{proof}
We verify the Bethe form $1-x_b=\prod_ax_a^{(\Kraw)_{ab}}$ for the stated $x$, using the blocks $C$, $F=2\min$, $V$ of \eqref{eq:Kdiag}. Summing by parts,
$\sum_{i=1}^m\min(i,j)\Delta^2\log\mathcal Q_i=\log \mathcal Q_0-\log \mathcal Q_j+j\log(\mathcal Q_{m+1}/\mathcal Q_m)$.

\emph{$R$-columns}: the fan gives $\mathcal Q_j^{-2}((d+1)/d)^{2j}$ and the couplings $(d/(1+d))^{2j}d^2$ (using $\mathcal Q_m=d^2$), so the right side is
$d^2/\mathcal Q_j^2=1-x_{R_j}$. \emph{Middle $L$-columns}: $(d^2/\mathcal Q_k^2)(\mathcal Q_{k-1}/d)(\mathcal Q_{k+1}/d)=\mathcal Q_{k-1}\mathcal Q_{k+1}/\mathcal Q_k^2=1-x_{L_k}$. \emph{$L_1$}: the corner
$(m+2)/2$ and the ramp give $(d/\mathcal Q_1)^{m+2}\cdot\mathcal Q_2\mathcal Q_m/d^2\cdot((d+1)/d)^m/d^2=\mathcal Q_2/\mathcal Q_1^2$. \emph{$L_m$}: the equation reduces to $\mathcal Q_1^2\mathcal Q_m=\mathcal Q_{m+1}^2$,
i.e.\ $(1+d)^2d^2=(d(d+1))^2$. All entries lie in $(0,1)$ since $\mathcal Q_j>d$ for $j\ge1$, and $\Kraw^{-1}=QK^{-1}Q$ is positive definite (Theorem~\ref{thm:Kformula}), so the solution in $(0,1)^r$ of Nahm's equations for $\delta=1-x$ with matrix $\Kraw^{-1}$ is unique \cite[\S2]{VZ11} (Proposition~\ref{prop:bethenahm}); this is the positive solution. Finally $\mathcal Q_j^2-\mathcal Q_{j-1}\mathcal Q_{j+1}=d^2$ gives $1-x_{R_j}=x_{L_j}$ and $1-x_{L_j}=x_{R_j}$, so $\delta=1-x=Px$; this is the relabelling $x\mapsto P\delta$ under which $\Kraw^{-1}=P\Kraw P$ (Proposition~\ref{prop:closed}) carries the Nahm system for $\delta$ to the Bethe system for $x$.
\end{proof}

Three consequences. First, $\delta_{L_m}=1-1/d^2=m/d$, which is the largest coordinate and satisfies $\delta_{\max}^2+m^2\delta_{\max}-m^2=0$. Second,
the Galois action: from $d^2=md+1$ one gets $d\,\sigma(\mathcal Q_j)=d-j$ and $\mathcal Q_{m-j}=d(d-j)$, so $\sigma(\mathcal Q_j)=\mathcal Q_{m-j}/d^2$, hence
$\sigma(\delta_{L_j})=\delta_{L_{m-j}}$ for $j<m$ and $\sigma(\delta_{L_m})=1-d^2=-md$. Third, and most useful, the structure is the geometry of the
monodromy: the M\"obius map of $L^mR^m$ is $x\mapsto m+1/(m+1/x)$, whose attracting fixed point is $[m;m,m,\dots]=d$, so $(d,1)^{\mathsf T}$ is an
eigenvector with eigenvalue $\lambda=d^2$ by \eqref{eq:lamd}; and
\[
\mathcal Q_j=\bigl(R^j(d,1)^{\mathsf T}\bigr)_2
\]
exactly. The $\mathcal Q_j$ are affine in $j$ because $R$ is unipotent, the $R$-fan translating the eigenvector, and the closing condition $\mathcal Q_m=1+md=d^2$ is the
eigenvalue equation. Galois conjugation $\sigma(d)=-1/d$ exchanges the two fixed points of $\varphi$. So \emph{the positive Nahm solution is the expanding
direction, and its Galois conjugate the contracting one}.

\subsection{The general closed form}
\label{sub:propagated}

Theorem~\ref{thm:nahmclosed} is an instance of a general fact: for every word the positive vacuum is a ratio of heights along the Farey slopes. This subsection states and proves it, and then derives two consequences, the value of the Wilson line $W_c(V^*)$ and the rationality of the Rogers sum.
\begin{theorem}[The positive vacuum of every word: the propagated eigenvector]
\label{prop:propagated}
Write the word as $\varphi=\ell_1\ell_2\cdots\ell_r$ with each $\ell_i\in\{L,R\}$. Let $\eta$ be the expanding left eigenvector, $\eta\varphi=\lambda\eta$,
and define the \emph{height} $h(v)=\langle\eta,v\rangle$ of a vector $v\in\Z^2$. At step $i$ the partial product $M_{i-1}=\ell_1\cdots\ell_{i-1}$ has
columns $(c_1,c_2)$, which are the two slopes of the Farey basis at that step (Section~\ref{sub:gluing}); the flip $\ell_i$ keeps one of them ---
$c_1$ if $\ell_i=L$, $c_2$ if $\ell_i=R$ ---, the \emph{kept slope} $a$ of the step (as in Section~\ref{sub:gluing}). Put $w=c_1+c_2$, the middle slope, and
$s_i=h(a)^2/h(w)^2$. Then
\begin{equation}
x_i=s_i\ \ (\ell_i=L),\qquad x_i=1-s_i=\frac{h(u)h(u')}{h(w)^2}\ \ (\ell_i=R),\quad u=w-a,\ u'=w+a,
\label{eq:propagated}
\end{equation}
solves the shape equations \eqref{eq:bethex} for every hyperbolic word. It is the unique solution with $x\in(0,1)^r$: with $x=e^{-t}$, such solutions are the critical points on $t>0$ of $\tfrac12t^{\mathsf T}\Kraw t+\sum_i\Li_2(e^{-t_i})$, which is strictly convex because $\Kraw$ is positive definite (Theorem~\ref{thm:Kformula}) \cite[\S II.3]{Zagier07}, \cite[\S2]{VZ11}; its all-plus lift is the positive vacuum $V^*$. The same formulae at the contracting
eigenvector $\eta'$, $\eta'\varphi=\lambda^{-1}\eta'$, give a second solution of the shape equations, the Galois conjugate of $x(V^*)$, whose all-plus lift is $\sigma V^*$. (In $z$, the Galois conjugate of the lift $V^*$ has $z_L=h'(w)/h'(a)$, which differs from the all-plus lift by a sign at every $L$-site, since this ratio is negative at every $L$-step (there $t\in(-1,0)$, Lemma~\ref{lem:crossings}); the sign of $W_c$ depends on the choice, $W_c^2$ does not.) In addition, within each run the kept slope $a$ is fixed and the middle slopes form a fan, $w_j=w_0+ja$, where $w_0$ is the slope dying at the first step of the run and $w_j$ the middle slope of its $j$-th step, so that at either eigenvector the signed heights satisfy $h(w_{j+1})=h(w_j)+h(a)$.
\end{theorem}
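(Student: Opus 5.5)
The proof has three parts: the fan structure, then the verification that \eqref{eq:propagated} solves \eqref{eq:bethex}, then uniqueness and the Galois statement.

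\emph{The fan.} This is the purely combinatorial part. Within an $R$-run the flip is $(a,b)\mapsto(a+b,b)$, so the kept slope $b$ is fixed. The middle slope $c_1+c_2$ then advances by $b$ at each step, and $L$-runs are symmetric. Hence $w_j=w_0+ja$, where $w_0$ is the middle slope dying at the first step, so the slopes have the form $w_0=w_1-a$. Since $h$ is linear, $h(w_{j+1})=h(w_j)+h(a)$ holds at either eigenvector. For the $R$-step formula, $h(w)^2-h(a)^2=h(w-a)h(w+a)$ gives $1-s_i=h(u)h(u')/h(w)^2$.

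\emph{The solution.} I would verify the shape equations column by column, using the rows of $\Kraw$ listed after Theorem~\ref{thm:Kformula}. I would work with the positive-root form $1-x_b=\prod_a x_a^{(\Kraw)_{ab}}$ and take logarithms of the heights. This is exactly the scheme of the proof of Theorem~\ref{thm:nahmclosed}, with $\mathcal Q_j$ replaced by the affine sequence $h(w_j)$ along each run and $d$ replaced by $h(a)$.

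Along an $R$-run, the fan block $2\min(j,l)$ acts on $\log x_{R_l}=\log h(w_{l-1})+\log h(w_{l+1})-2\log h(w_l)$, which is a second difference. The summation by parts from that proof turns it into boundary terms at the two ends of the run. The ramp at $L^+(\rho)$ and the $-1$ couplings at $L^-(\rho)$ should cancel these boundary terms, leaving $h(a)^2/h(w_j)^2=1-x_{R_j}$.

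Middle $L$-columns follow from $1-x_L=h(u)h(u')/h(w)^2$ together with the telescoping of the $-\tfrac12$ adjacency entries against the neighbouring $x_L$'s. The run-end columns $L^\pm(\rho)$ need the identification of slopes across a run change: the kept slope of an $L$-run becomes a dying slope of the following $R$-run. That identification should reduce these columns to the linear relation $h(c_1+c_2)=h(c_1)+h(c_2)$.

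The closing condition around the cyclic word is where the eigenvector is used. After one full period the Farey basis is multiplied by $\varphi$, so the heights are scaled by $\lambda$, and every ratio $s_i$ is invariant under this scaling. This is the only point where $\eta\varphi=\lambda\eta$ enters, and it is why $\eta'$ works too.

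\emph{Positivity and uniqueness.} Positivity, $x\in(0,1)$, follows because at $\eta$ all heights of slopes in the positive cone are positive, which makes $h(a)<h(w)$ and $h(u),h(u')>0$. Uniqueness is the strict convexity argument quoted in the statement: $\Kraw$ is positive definite by Theorem~\ref{thm:Kformula}, so the functional $\tfrac12t^{\mathsf T}\Kraw t+\sum_i\Li_2(e^{-t_i})$ has at most one critical point on $t>0$.

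\emph{The Galois statement.} The entries of $\eta$ can be taken in $\Q(\lambda)$, and $\sigma$ maps $\eta$ to a multiple of $\eta'$. The $x_i$ are homogeneous of degree $0$ in $\eta$, so $\sigma x(V^*)=x(\eta')$. The shape equations have rational exponents, so this conjugate is again a solution. The sign remark at $L$-sites needs only the sign of $h'(w)/h'(a)$, which comes from Lemma~\ref{lem:crossings}.

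The main obstacle is the run-end columns $L^\pm(\rho)$, especially when runs have length $1$ or $\#L\le2$, where sites coincide. I would handle these by proving the identity for each piece of $\Kraw$ separately, since the pieces are added linearly, so that coincidences of sites take care of themselves.
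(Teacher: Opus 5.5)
Your proposal follows the paper's proof: column-by-column verification against the rows of $\Kraw$ with summation by parts on each fan, the junction columns reduced to linear relations among the heights of the slopes that adjacent runs share, and periodicity $h(\varphi v)=\lambda h(v)$ for the wrap-around. The remaining steps also match: Perron--Frobenius for positivity, strict convexity for uniqueness, and homogeneity plus Galois conjugation for $\eta'$. For coinciding sites, as when $\#L\le2$, the paper checks every equation on the developed word and invokes periodicity only at the end; your ``pieces are added linearly'' remark should be read in that sense.

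One correction: the Galois step needs the \emph{integer} exponents of \eqref{eq:bethex}, which hold because every entry of $K$ is even; rational exponents would not suffice. You verify in the positive-root form, which has half-integer exponents, and there $\sigma$ sends $x_L^{1/2}=h(a)/h(w)>0$ to $h'(a)/h'(w)<0$ at every $L$-site. So you must square the $L$-equations, or work with the signed variables $h(a)/h(w)$, before conjugating.
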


\begin{proof}
By Theorem~\ref{thm:Kformula} the equations are local in the runs. Within an $R$-run the kept slope $a$ is fixed and the middle slopes of the
successive steps form a fan, $w_j=w_0+ja$ for $j=0,1,\dots$, with $w_0$ the first column of the Farey basis on entering the run, the slope dying at its first step; likewise within an $L$-run, with $w_0$ the second column. With $X_a=x_a^{1/Q_a}$, so that $X=h(a)/h(w)$ at an $L$-step, each equation involves only a step and its
run neighbours. \emph{$R$-columns}: the fan block telescopes to $(h(w_0)/h(w_j))^2(h(w_{b+1})/h(w_b))^{2j}$; the last $L$ before the run contributes
$(h(a)/h(w_0))^2$ and the first $L$ after it $(h(w_b)/h(w_{b+1}))^{2j}$; the product is $h(a)^2/h(w_j)^2=1-x_{R_j}$. \emph{Middle $L$'s}:
$X_j^4X_{j-1}^{-2}X_{j+1}^{-2}=(h(w_{j-1})h(w_{j+1})/h(w_j)^2)^2=(1-x)^2$. \emph{Notation at a junction.} By Theorem~\ref{thm:Kformula} the equation at an $L$-site $\ell$ is $(1-x_\ell)^2=\prod_aX_a^{K_{a\ell}}$.
We work on the developed word, the infinite periodic sequence of letters, so that the previous and next $L$-sites of $\ell$ and the
$R$-runs on either side of it are distinct sites; periodicity is used only at the end. Let $\rho$ be an $R$-run of length $b$, with kept slope
$q'$ and fan $w_j=p+jq'$, $j=0,\dots,b+1$, where $p$ is the slope kept by the $L$-run before $\rho$. Write $g_j=h(w_j)$, so that
$g_j=g_0+jh(q')$ and $X_{R_j}=g_{j-1}g_{j+1}/g_j^2$. Summation by parts gives
\[
\prod_{j=1}^bX_{R_j}=\frac{g_0g_{b+1}}{g_1g_b},\qquad \prod_{j=1}^bX_{R_j}^{\,j}=\frac{g_0\,g_{b+1}^{\,b}}{g_b^{\,b+1}} .
\]
The $L$-run before $\rho$ has kept slope $p$ and middle heights $h_i=h(q+ip)$, ending with $h_a=h(q')$. Its last site has $X=g_0/h_a$, and
$h_{a+1}:=h_a+g_0=g_1$. The $L$-run after $\rho$ has kept slope $w_b$ and middle heights $k_i=h(q'+iw_b)=k_0+ig_b$, with $k_0=h(q')$ and
$k_1=g_{b+1}$. Its first site has $X=g_b/k_1$.

\emph{First $L$ of a run of length $\ge2$.} Here $\ell=L^+(\rho)$. Its row has diagonal $4+2b$, the ramp $2j$ to $R_j$, and $-2$ to $L^-(\rho)$
and to the next $L$, whose $X$ is $g_b/k_2$. So the right side is
\[
\Bigl(\frac{g_b}{k_1}\Bigr)^{4+2b}\Bigl(\frac{h_a}{g_0}\Bigr)^2\Bigl(\frac{k_2}{g_b}\Bigr)^2\frac{g_0^2\,g_{b+1}^{2b}}{g_b^{2b+2}}
=\Bigl(\frac{k_0k_2}{k_1^2}\Bigr)^2 ,
\]
using $h_a=k_0$ and $k_1=g_{b+1}$. And $1-x_\ell=1-g_b^2/k_1^2=(k_1-g_b)(k_1+g_b)/k_1^2=k_0k_2/k_1^2$.

\emph{Last $L$ of a run of length $\ge2$.} Here $\ell=L^-(\rho)$. Its row has diagonal $4$, $-2$ to the previous $L$ (whose $X$ is $g_0/h_{a-1}$),
$-2$ to $L^+(\rho)$, and $-2$ to every $R$ of $\rho$. So the right side is
\[
\Bigl(\frac{g_0}{h_a}\Bigr)^4\Bigl(\frac{h_{a-1}}{g_0}\Bigr)^2\Bigl(\frac{g_{b+1}}{g_b}\Bigr)^2\Bigl(\frac{g_1g_b}{g_0g_{b+1}}\Bigr)^2
=\Bigl(\frac{h_{a-1}h_{a+1}}{h_a^2}\Bigr)^2 ,
\]
using $g_1=h_{a+1}$. And $1-x_\ell=(h_a-g_0)(h_a+g_0)/h_a^2=h_{a-1}h_{a+1}/h_a^2$.

\emph{A singleton $L$.} Here $\ell=L^+(\rho)$ and also $\ell=L^-(\rho')$ for the next $R$-run $\rho'$, of length $b'$. Its row has diagonal
$4+2b$, the ramp $2j$ to $\rho$, $-2$ to every $R$ of $\rho'$, and $-2$ to the previous and the next $L$. The run $\rho'$ has kept slope
$q'+w_b$ and fan heights $g'_j=g_b+jk_1$. So $\rho'$ contributes $(g'_1g'_{b'}/g'_0g'_{b'+1})^2$, and the next $L$ has $X=g'_{b'}/g'_{b'+1}$.
The right side is
\[
\Bigl(\frac{g_b}{k_1}\Bigr)^{4+2b}\frac{g_0^2\,g_{b+1}^{2b}}{g_b^{2b+2}}\Bigl(\frac{h_a}{g_0}\Bigr)^2\Bigl(\frac{g'_1}{g'_0}\Bigr)^2
=\Bigl(\frac{k_0k_2}{k_1^2}\Bigr)^2 ,
\]
since $g'_0=g_b$ and $g'_1=g_b+k_1=k_2$. This is $(1-x_\ell)^2$, as in the first case.

\emph{Wrap-around.} On the developed word every $X$ is a ratio of heights of slopes, and the slopes one period later are the images of the
earlier ones under $\varphi$. Since $h(\varphi v)=\lambda h(v)$, every $X$ is periodic, and the identities on the developed word are the
equations of the cyclic word. This covers $\#L=1$, where the single $L$ is its own previous and next $L$ and $\rho'=\rho$: the row becomes
$K_{LL}=2b$ and $K_{LR_j}=2j-2$. It also covers $\#L=2$, where the coupling $-4$ is the sum of the two adjacencies.

Heights are positive at $\eta$: $\varphi$ has positive entries, so $\eta>0$ by Perron--Frobenius, and the columns of $M_{i-1}$ are nonnegative and nonzero. So this
is the all-plus solution, with $x\in(0,1)^r$. At $\eta'$ the same identities hold: $\sigma(\eta)$ is a multiple of $\eta'$, the formulae are homogeneous of degree $0$ in $\eta$, and the shape equations have integer exponents.
\end{proof}

For $\varphi=L^mR^m$ the left eigenvector is $\eta=(d,1)$, $\varphi$ being symmetric, and the formulae reduce to Theorem~\ref{thm:nahmclosed}, with $h(w_j)/h(a)=\mathcal Q_j/d$.

The entries of $V^*$ are ratios of heights and lie in the real quadratic field $\Q(\lambda)$. Throughout, $\sigma$ denotes the nontrivial
automorphism of $\Q(\lambda)$; it sends $\lambda$ to $\lambda^{-1}$ and the expanding eigenvector $\eta$ to a multiple of the contracting
one $\eta'$, so the contracting eigenvector gives the Galois-conjugate class, whose all-plus vacuum is $\sigma V^*$. (For $L^mR^m$ this is the $\sigma(d)=-1/d$ of
Theorem~\ref{thm:nahmclosed}.) At the contracting eigenvector the $L$-step shapes $s'=h'(a)^2/h'(w)^2$ are still squares, so all charge-$2$ coordinates are real and positive at the
all-plus choice, while the $R$-step coordinates $x'=1-s'$ can be negative, on a set $\mathcal E$ of $R$-steps identified in Lemma~\ref{lem:crossings} below.

\begin{proposition}[The stretch factor at the positive vacuum]
\label{obs:stretch}
For a solution $z$ of the Bethe equations \eqref{eq:bethe} put $W_c:=\prod_az_a^{c_a}$, with $c$ the charge vector of Theorem~\ref{thm:Kformula}; it is the value of the Wilson line of gauge charge $c$ (Section~\ref{sec:bethe}). Then
$W_c(V^*)=\lambda(\varphi)$: the Wilson loop at the positive vacuum is the stretch factor of the monodromy.
\end{proposition}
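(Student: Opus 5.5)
The plan is to evaluate $W_c=\prod_a z_a^{c_a}$ directly at the closed form of Theorem~\ref{prop:propagated}, grouping the factors run by run so that the product telescopes. At $V^*$ all $z_a$ are the positive roots $z_a=x_a^{-1/Q_a}=X_a^{-1}$. So at an $L$-step $z=h(w)/h(a)$, and at an $R$-step $z=h(w)^2/\bigl(h(u)h(u')\bigr)$. By Theorem~\ref{thm:Kformula} the only nonzero entries of $c$ are $j$ at the $j$-th $R$ of each run and $b$ at the first $L$ after an $R$-run of length $b$. Hence $W_c$ factorises as a product over $R$-runs $\rho$ of
\[
\Bigl(\prod_{j=1}^b z_{R_j}^{\,j}\Bigr)\,z_{L^+(\rho)}^{\,b},
\]
and no other $L$ contributes.

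First I compute one such factor in the notation of the proof of Theorem~\ref{prop:propagated}, with fan heights $g_j=h(w_j)$, $j=0,\dots,b+1$. There $X_{R_j}=g_{j-1}g_{j+1}/g_j^2$. The summation-by-parts identity already recorded in that proof gives
\[
\prod_j z_{R_j}^{\,j}=\Bigl(\prod_j X_{R_j}^{\,j}\Bigr)^{-1}=\frac{g_b^{\,b+1}}{g_0\,g_{b+1}^{\,b}} .
\]
The first $L$ after $\rho$ has $X=g_b/k_1$ with $k_1=g_{b+1}$, so its factor is $z^b=(g_{b+1}/g_b)^b$. The product over the run therefore collapses to $g_b/g_0=h(w_b)/h(w_0)$. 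This is the ratio of the height of the last middle slope of the run to that of the slope dying at its first step.

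Next I telescope across runs on the developed word. The proof of Theorem~\ref{prop:propagated} shows that the next $R$-run $\rho'$ has $g'_0=g_b$: the slope dying at the start of $\rho'$ has the same height as the last middle slope $w_b$ of $\rho$. This holds for any length of the intervening $L$-run, since it is the kept slope of that $L$-run, and I would recheck it in the singleton and longer-run cases separately. The product over the $n_R$ runs of one period is then $h(\tilde w_0)/h(w_0)$, where $\tilde w_0$ is the corresponding slope one period later. Because $M_{i+r}=\varphi M_i$, the columns one period later are the images under $\varphi$ of the earlier ones. Hence $\tilde w_0=\varphi w_0$ and $h(\varphi w_0)=\langle\eta,\varphi w_0\rangle=\lambda h(w_0)$, giving $W_c(V^*)=\lambda$. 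All heights are positive at $\eta$ by Perron--Frobenius, so no sign ambiguity arises.

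The main obstacle is bookkeeping rather than analysis. I must confirm that the identification $g'_0=g_b$ holds uniformly, including $\#L=1$, where $L^+(\rho)=L^-(\rho)$ and $\rho'=\rho$ up to $\varphi$. In that case the single $L$ carries $c_L=b$, and I would check by hand on $LR^n$ that the product is $h(\varphi w_0)/h(w_0)$. I also need to confirm that the correct period shift is $\varphi$ and not $\varphi^{-1}$, which decides between $\lambda$ and $\lambda^{-1}$. I would settle this by the $L^mR^m$ check with $\mathcal Q_j$: there the single run gives $\mathcal Q_m/\mathcal Q_0=d^2=\lambda$ by \eqref{eq:lamd}.
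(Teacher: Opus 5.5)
Your proposal is correct and follows the paper's proof: each $R$-run contributes $g_b/g_0$, the ratio of the heights of the first column $c_1$ at the run's end and start, and these telescope over one period to $h(\varphi w_0)/h(w_0)=\lambda$. The bookkeeping you flag as the main obstacle is uniform, including $\#L=1$, because an $L$-flip $(a,b)\mapsto(a,a+b)$ always keeps $c_1$; this is exactly the paper's remark that ``the $L$-runs leave $c_1$ fixed''.
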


\begin{proof}
With $c$ as in Theorem~\ref{thm:Kformula}, and $h_j=h(w_j)$ for the fan $w_j$ of an $R$-run $\rho$ of length $b$, the factors $z_{R_j}^{\,j}$ multiply to $h_b^{\,b+1}/(h_{b+1}^{\,b}h_0)$ and the factor $z_{L^+(\rho)}^{\,b}$ is $(h_{b+1}/h_b)^b$, so $\rho$ contributes $h_b/h_0=h(c_1\text{ at run end})/h(c_1\text{ at run start})$ to $W_c(V^*)$. The $L$-runs leave $c_1$ fixed, so the product telescopes to $h(\varphi e_1)/h(e_1)=\lambda$.
\end{proof}

At $L^mR^m$, $\lambda=d^2$, and the fan product telescopes to $\mathcal Q_m(\mathcal Q_m/\mathcal Q_{m+1})^m=d^2(d/(d+1))^m$, the factor $z_{L_1}^m=((1+d)/d)^m$
cancelling the second term; on $LR^n$ it is the identity $a_0/a_3=\bigl((\sqrt{k+2}+\sqrt{k-2})/2\bigr)^2=(k+\sqrt{k^2-4})/2$ with $k=\tr\varphi$, $a_0,a_3$ as in
\eqref{eq:as}. Why $W_c(V^*)$ is to be compared with $a_0/a_3$ is explained in Observation~\ref{obs:wcrow}.

\begin{corollary}[Rationality of the Rogers sum]
\label{cor:rational}
Let $x$ be the solution \eqref{eq:propagated} given by the propagated eigenvector. Then $\sum_aL(\delta_a)/L(1)\in\Q$.
\end{corollary}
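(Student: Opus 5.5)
The plan is to prove the corollary with the standard five-term/Bloch-group argument for Nahm sums at torsion points, not by evaluating the sum. Set $\delta=1-x$ with $x$ the propagated solution \eqref{eq:propagated}. By Proposition~\ref{prop:bethenahm}, $\delta$ solves Nahm's system $1-\delta_b=\prod_a\delta_a^{(\Kraw^{-1})_{ba}}$, equivalently $1-x_b=\prod_a x_a^{(\Kraw)_{ab}}$ with $\Kraw$ symmetric. The classical consequence of a symmetric Nahm system is that
\[
\xi=\sum_a[\delta_a]\in B(\overline{\Q})
\]
lies in the Bloch group. The reason is that
\[
\sum_a\delta_a\wedge(1-\delta_a)=\sum_{a,b}(\Kraw^{-1})_{ab}\,\delta_a\wedge\delta_b
\]
vanishes in $\wedge^2\overline{\Q}^\times\otimes\Q$ by symmetry. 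Working in $\Q(\lambda)$, the entries of $\delta$ are ratios of heights in $\Q(\lambda)$, so $\xi\in B(\Q(\lambda))\otimes\Q$. Since $\Q(\lambda)$ is real quadratic and has no complex places, Borel's theorem gives $B(\Q(\lambda))\otimes\Q=0$. Hence $\xi$ is torsion, and the Rogers dilogarithm on real points, $L:B(\R)\to\R/\tfrac{\pi^2}{6}\Z$ (up to the standard normalisation), sends torsion to $\Q\cdot L(1)$. This is Nahm's/Zagier's lemma \cite[\S II.3]{Zagier07}. It gives $\sum_aL(\delta_a)\in\Q\,L(1)$ once the branch is controlled, and here all $\delta_a\in(0,1)$, so the real Rogers function is used without corrections.

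The main obstacle is making the Bloch-group step honest. The relation uses integer exponents, and the Bethe form has $\Kraw$ with half-integral entries. So I would multiply by $2$ and use $K$ and charges, or simply tensor with $\Q$, which suffices for rationality. I also need the map from $B(F)\otimes\Q$ to $\R/\Q L(1)$ to be well defined via the five-term relation for real arguments in $(0,1)$.

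As a more elementary alternative that avoids Borel, I would use the explicit structure. Each $\delta_a$ is $1-h(a)^2/h(w)^2$ or $h(a)^2/h(w)^2$, and along each run the fan $w_j=w_0+ja$ lets one telescope with Abel's identity $L(1-y)+L(y)=L(1)$. I would then use the two-term relations coming from $h(w_{j+1})=h(w_j)+h(a)$ to pair consecutive terms into five-term relations, reducing each run to boundary terms that cancel around the cyclic word by $\varphi$-periodicity of the height ratios. I expect this bookkeeping to be the hard part, so I would present the Bloch-group argument as the proof and mention the telescoping only as the mechanism behind the later exact value $c_{\eff}=\#L$.
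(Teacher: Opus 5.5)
Your proposal is correct and follows the paper's proof: symmetry of $QK^{-1}Q$ makes $\sum_a\delta_a\wedge(1-\delta_a)$ vanish in $\bigwedge^2\Q(\lambda)^\times\otimes\Q$, so a multiple of $\sum_a[\delta_a]$ lies in $B(\Q(\lambda))$, which is finite by Borel because $\Q(\lambda)$ is totally real, and the Rogers sum of a torsion element is a rational multiple of $L(1)$. The last step needs a more careful statement: the five-term relations that kill $n\sum_a[\delta_a]$ can have real arguments outside $(0,1)$, so you need the extended Rogers function on $\R\smallsetminus\{0,1\}$, which is a homomorphism $\mathfrak p(\R)\to\R/\pi^2\Z$ (Proposition~\ref{prop:suslin}(ii)), and not only $L$ on $(0,1)$; the fact that every $\delta_a\in(0,1)$ serves only to identify its value with $\sum_aL(\delta_a)$ minus an integer multiple of $L(1)$.
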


\begin{proof}
This is the standard Bloch-group argument \cite{Nahm04,Zagier07}. All $\delta_a$ lie in $F=\Q(\lambda)=\Q(\sqrt{k^2-4})$ (see after Theorem~\ref{prop:propagated}), which is real quadratic. Consider the element $\Delta=\sum_a[\delta_a]$ of the pre-Bloch group of $F$. Since $M=QK^{-1}Q$ is symmetric, $\sum_{a<b}(M_{ba}-M_{ab})(\delta_b\wedge\delta_a)=0$, and so $\sum_a\delta_a\wedge(1-\delta_a)=0$. The entries of $M$ are half-integers, so this relation holds in $\bigwedge^2F^\times\otimes\Q$, and a multiple of $\Delta$ lies in the Bloch group $B(F)$. This group is finite, since $F$ is totally real (Borel \cite{Borel77}, via \cite{Suslin91}). The Rogers sum of a torsion element is a rational multiple of $L(1)$.
\end{proof}

Rationality uses two inputs: the symmetry of $K$, which puts $\Delta$ in $B(F)\otimes\Q$, and $\delta_a\in\Q(\lambda)$ (Theorem~\ref{prop:propagated}), which makes $F$ totally real. The second is needed: a positive-definite Nahm system need not have its solution in a totally real field \cite[\S II.3]{Zagier07}. What Theorem~\ref{thm:ceff} adds is the value, $c_{\eff}=\#L$.

\subsection{Run chains and friezes}
\label{sub:runchain}

Within a run the propagated eigenvector is a linear chain. We show that the chain is a frieze row, that the frieze relation constrains every solution of the Bethe equations and not only $V^*$, and that it reduces the equations of a two-run word to its run-end equations. Write
\[
\hat d_i:=s_i^{-1/2}=\Bigl|\frac{h(w_i)}{h(a_i)}\Bigr| ,
\]
the reciprocal square root of the step shape, so that $x_i=\hat d_i^{-2}$ at an $L$-step and $x_i=1-\hat d_i^{-2}$ at an $R$-step. By the fan formula of Theorem~\ref{prop:propagated}, $h(w_j)/h(a)=h(w_0)/h(a)+j$ within each run; at the expanding eigenvector all heights are positive, so
\[
\hat d_j=\hat d_0+j ,
\]
the ansatz of Section~\ref{sub:nahmdiag}, now a theorem. This is a chain of slope exactly $1$, with one unknown, the anchor $\hat d_0$, per run, and $2n_R$ unknowns in all, $n_R$ being the number of runs of each letter. Every $\hat d_i$ lies in $\Q(\lambda)=\Q(\sqrt{k^2-4})$, $k=\tr\varphi$: $\eta=(\varphi_{21},\lambda-\varphi_{11})$ is integral in $\lambda$, the $w_j$ and $a$ are integer Farey vectors, and $\lambda^2-k\lambda+1=0$.

At the contracting eigenvector the signed ratio $h'(w_j)/h'(a)$ still increases by one along a run, and $\hat d_j$ is its absolute value. The chain $\hat d_j=\hat d_0+j$ has two consequences.

\emph{(a) Why the boundary values carry nested radicals and the chain values do not.} The Nahm variable at an $R$-step is $\delta_R=1-x_R=s$, so
$d_R=\hat d_R$ lies in $\Q(\lambda)$ and the chain is visible in it. At an $L$-step $\delta_L=1-x_L=1-s$, and $(1-\hat d^{-2})^{-1/2}$ generically
generates a quadratic extension of $\Q(\lambda)$. So in the $\delta$ variable the chain is visible along $R$-runs only, and the $L$-sites appear as
boundary unknowns inside a nested radical --- which is the shape both closed forms below have, and the reason the $\delta$-ansatz has to treat the
$L$-run by hand.

\emph{(b) This is the affine form of the $\mathcal Q_j$.} At $\varphi=L^mR^m$ the kept slope of the $R$-run is $a=(m,1)^{\mathsf T}$, of height $h(a)=md+1=d^2$ for $h(v)=dv_1+v_2$, and
$\hat d_{R_j}=\mathcal Q_j/d$ with $\mathcal Q_j=1+jd$, so the chain is exactly the affine form $\mathcal Q_j=1+jd$ of Theorem~\ref{thm:nahmclosed}, and
$\mathcal Q_j^2-\mathcal Q_{j-1}\mathcal Q_{j+1}=d^2$ there is the case $\hat d_{j\pm1}=\hat d_j\pm1$ of the chain.

The linear chain $\hat d_j=\hat d_0+j$ is the case $\tau=2$ of a one-parameter family of chains, the frieze rows of Remark~\ref{rem:friezedef}. They are indexed by the frieze trace $\tau$ defined there. Away from $V^*$ the frieze trace need not equal $2$.

\begin{remark}[The run is an $SL_2$ frieze, and $\tau=2$ is parabolicity]
\label{rem:friezedef}
Let $M\in SL(2,\R)$, $w_j=M^jw_0$, let $h$ be any linear functional, and put $h_j=h(w_j)$, $\tau=\tr M$.
\begin{enumerate}
\item[(i)] By the Cayley--Hamilton theorem $M^2=\tau M-1$, so $w_{j+1}=\tau w_j-w_{j-1}$ and $h_{j+1}=\tau h_j-h_{j-1}$. Hence $h_j^2-h_{j-1}h_{j+1}=h_j^2-\tau h_{j-1}h_j+h_{j-1}^2$ is independent of $j$: it is the $M$-invariant quadratic form evaluated at $(h_{j-1},h_j)$. The quantity $(h_{j-1}+h_{j+1})/h_j=\tau$ is also constant along the orbit.
\item[(ii)] Writing $\tau=2\cos\vartheta$ with $\tau\ne\pm2$, every solution of the recurrence over $\C$ with $h_j^2-h_{j-1}h_{j+1}\ne0$ is $h_j=A\sin(j\vartheta+\phi)/\sin\vartheta$ with constants $A,\phi\in\C$, and
$h_j^2-h_{j-1}h_{j+1}=A^2$; the remaining solutions $c\mu^{\pm j}$, $\mu+\mu^{-1}=\tau$, have $h_j^2-h_{j-1}h_{j+1}=0$, so they are not frieze rows. This is the trigonometric $Q$-recurrence for $|\tau|<2$, hyperbolic for $|\tau|>2$, and affine, $h_j=h_0+j\beta$ with
constant $\beta^2$, exactly at $\tau=\pm2$ (for $\tau=-2$ the affine solution is multiplied by $(-1)^j$).
\item[(iii)] Within a run of a word in $L,R$ the generator is $L$ or $R$, both \emph{parabolic}. Let $\ell=L$ or $R$ be the letter of the run, and $P$ the Farey basis on entering it. Then $w_j=P\ell^jP^{-1}w_0$, so $M=P\ell P^{-1}$ and $\tau=2$.
The kept slope $a$ is the fixed vector of $M$, and $\beta=h(a)$. Normalising by $\hat d_j=h_j/h(a)$ gives the $SL_2$-frieze relation
\begin{equation}
\hat d_j^2-\hat d_{j-1}\hat d_{j+1}=1 .
\label{eq:frieze}
\end{equation}
\item[(iv)] The passage $y_j=\hat d_{j-1}\hat d_{j+1}/\hat d_j^2$ gives $y_j=1-s_j$: the Bethe coordinate $x$ at an $R$-step, and $1-x$ at an $L$-step.
\end{enumerate}
\emph{Proof.} (i) is $M^2=\tau M-I$ applied to $w_{j-1}$, and the conserved quantity is the substitution $h_{j+1}=\tau h_j-h_{j-1}$. (ii) solves the recursion.
(iii): $L$ and $R$ have trace $2$ and fixed vectors $e_1$, $e_2$; the kept slope of a run is that fixed vector, and $h_{j+1}-h_j=h(a)$, so
$\hat d_{j+1}=\hat d_j+1$ and $\hat d_j^2-(\hat d_j-1)(\hat d_j+1)=1$. (iv) is then $(\hat d_j^2-1)/\hat d_j^2=1-\hat d_j^{-2}=1-s_j$.

\emph{Frieze rows.} A sequence satisfying the \emph{frieze relation} \eqref{eq:frieze} satisfies
$\det\begin{psmallmatrix}\hat d_j&\hat d_{j+1}\\ \hat d_{j-1}&\hat d_j\end{psmallmatrix}=1$, the diamond rule of an $SL_2$-frieze \cite{CC73} with constant rows. We call such a sequence a
\emph{frieze row}, and the conserved quantity $\tau=(\hat d_{j-1}+\hat d_{j+1})/\hat d_j$ of (i) its \emph{frieze trace}: it is the trace of the transfer matrix $\begin{psmallmatrix}\tau&-1\\1&0\end{psmallmatrix}\in SL_2$ that advances the recurrence (i) by one step. For $i\hat d_j$ the relation \eqref{eq:frieze} is the exchange relation $x_{j-1}x_{j+1}=x_j^2+1$ of the Kronecker cluster algebra, and $y=\hat d_{j-1}\hat d_{j+1}/\hat d_j^2$
of (iv) equals $1-\hat d_j^{-2}$; the $y$-recursion is second order and algebraic, and the $Y$-side
periodicity theorems are not available (Section~\ref{sub:rogers}), while the $A$-side recursion (i) is linear. The $A$-side is available at all
only because $s$ is a \emph{square} of a height ratio, and that square is the charge $Q_L=2$: the $2^{\#L}$ vacua of a class in
Section~\ref{sec:bethe} are precisely the two values $z_L=\pm\hat d_L$, the square roots of $\hat d_L^2=x_L^{-1}$, at each $L$.
\end{remark}

The trace $\tau$ is a conserved quantity of the recurrence, fixed by the boundary data; at $V^*$ and $\sigma V^*$ it is $2$ for every word (Theorem~\ref{prop:propagated}); at $\sigma V^*$ this uses that the height ratio has constant sign on each run (Lemma~\ref{lem:crossings}). The trigonometric case $|\tau|<2$ does occur, in Part~\ref{part:aminus}: there the
leading data is a pair of trigonometric chains, and \eqref{eq:closedform} solves the same relation \eqref{eq:frieze} closed at both ends,
$\mathcal Q_0=\mathcal Q_m=0$, which quantises $\tau=2\cos(k\pi/m)$. So Section~\ref{sec:nahm} and Part~\ref{part:aminus} are two solutions of one frieze
relation, distinguished only by the boundary condition: an eigenvector anchor at $\tau=2$ in the first, a closing condition with quantised
$\tau$ in the second.

The frieze relation \eqref{eq:frieze} is not special to the eigenvector solution. Inside a run it is implied by the Bethe equations
\eqref{eq:bethe} at the interior sites of the run (Theorem~\ref{thm:friezeall}), so it holds at every solution. It does not hold across a run boundary at any
of the words examined. So the boundary equations of the word glue the $2n_R$ frieze rows, and Theorem~\ref{thm:complete}
carries this out.

\begin{theorem}[The frieze relation holds at every solution of the Bethe equations]
\label{thm:friezeall}
Let $\varphi$ be any hyperbolic word and let $z$ be \emph{any} solution of the undeformed Bethe equations \eqref{eq:bethe} with all coordinates in $\C^\times$, with $x_b=z_b^{-Q_b}$. Put $s_b=x_b$ at an
$L$-step and $s_b=1-x_b$ at an $R$-step. Then at every site $b$ interior to its run,
\begin{equation}
\Bigl(\frac{1-s_b}{s_b}\Bigr)^{\!2}=\frac1{s_{b-1}s_{b+1}},\qquad\text{equivalently}\qquad \hat d_b^2\mp\hat d_{b-1}\hat d_{b+1}=1
\label{eq:friezeall}
\end{equation}
for $\hat d=s^{-1/2}$ on either branch. The same holds at a limit vacuum of the deformed system whose shapes $x_b$ at the sites of the run and their neighbours stay away from
$0$, $1$ and $\infty$; in general the polynomial form $s_{b-1}s_{b+1}(1-s_b)^2=s_b^2$ survives in the limit. In particular, for each run $\rho$ of length at least $3$, such a solution carries a frieze trace
$\tau_\rho:=(\hat d_{b-1}+\hat d_{b+1})/\hat d_b$ (Remark~\ref{rem:friezedef}), the same at every interior site $b$ of the run, defined up to sign (Remark~\ref{rem:friezesign}), which need not equal $2$ away from $V^*$.
\end{theorem}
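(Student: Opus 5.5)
The plan is to read off the Bethe equation at an interior site $b$ of a run directly from the row of $K$ given by Theorem~\ref{thm:Kformula}. Interior sites have very few couplings, and for them the equation closes on the site and its two run neighbours. No solution is used; only the equations \eqref{eq:bethex} are.

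I split into the two letters. For an interior $L$-site (a middle $L$), the row of $K$ has diagonal $4$, $-2$ at the two adjacent $L$'s, and nothing else. It is neither $L^+$ nor $L^-$ of any run, so it has no ramp and no $-2$ couplings to $R$'s. With $X_a=x_a^{1/2}$ at $L$-sites, the equation $(1-x_b)^2=\prod_a X_a^{K_{ab}}$ becomes
\[
(1-x_b)^2=x_b^{2}\,x_{b-1}^{-1}x_{b+1}^{-1}.
\]
Since $s=x$ at $L$-steps, this reads $\bigl((1-s_b)/s_b\bigr)^2=1/(s_{b-1}s_{b+1})$.

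For an interior $R$-site $R_j$ with $2\le j\le b-1$, the column at $R_j$ carries:
\begin{itemize}
\item the fan $2\min(i,j)$ over the run;
\item the ramp $2j$ from $L^+(\rho)$;
\item $-2$ from $L^-(\rho)$.
\end{itemize}
The exponents at $R$-sites are $K_{aR}/Q_a$, and $Q_R=1$. I would take the ratio of the equations at $R_{j-1}$, $R_j$, $R_{j+1}$ in the combination (equation at $j-1$) $\times$ (equation at $j+1$) $\div$ (equation at $j$)$^2$. This applies the second difference $\Delta_j^2$ to the columns. The $L^-$ coupling is constant in $j$ and cancels. The ramp is linear in $j$ and cancels. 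The fan has $\Delta_j^2\,2\min(i,j)=-2\delta_{ij}$, which leaves $x_{R_j}^{-2}$. Hence
\[
\frac{(1-x_{j-1})(1-x_{j+1})}{(1-x_j)^2}=x_j^{-2},
\]
and with $s=1-x$ this is again $s_{b-1}s_{b+1}(1-s_b)^2=s_b^2$.

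One check is needed: the second difference must only involve $R_{j\pm1}$ inside the run, which is why $j$ is interior. The analogous $L$ case needs the run neighbours to be $L$'s, which is the definition of middle. The wrap-around conventions of the proof of Theorem~\ref{prop:propagated} handle $\#L\le2$, where adjacent $L$'s coincide. In those cases a run of length $\ge3$ still gives distinct neighbours, or coefficients add linearly.

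The equivalent form follows by substituting $s=\hat d^{-2}$: $(\hat d_b^2-1)^2=\hat d_{b-1}^2\hat d_{b+1}^2$, so $\hat d_b^2\mp\hat d_{b-1}\hat d_{b+1}=1$ on either branch. This step requires all coordinates in $\C^\times$ so that the divisions are legitimate.

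For limits of the deformed system, the deformation $(1+\epsilon)^{c_b}$ multiplies each equation by a monomial factor. The combinations above are the same linear combinations, so the factors become $(1+\epsilon)^{\Delta^2c}$. Since $c$ is linear along a run ($j$ at $R_j$, $0$ at middle $L$'s), I expect $\Delta^2c=0$, and the relation then holds exactly at every $\epsilon$. If it does not, the factor tends to $1$ anyway. Passing to the limit, the cleared polynomial form survives unconditionally, and the rational form holds when the shapes stay away from $0,1,\infty$.

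Finally, for a run of length $\ge3$, a sequence satisfying \eqref{eq:friezeall} with a fixed sign choice has constant $(\hat d_{b-1}+\hat d_{b+1})/\hat d_b$ by Remark~\ref{rem:friezedef}(i), run backwards. Subtracting the relations at $b$ and $b+1$ gives $\hat d_b(\hat d_b+\hat d_{b+2})=\hat d_{b+1}(\hat d_{b-1}+\hat d_{b+1})$. The main obstacle is the sign and branch bookkeeping: the choice of square roots $\hat d$ must be made consistently along the run so that the same $\mp$ holds at consecutive sites, and this is what makes $\tau_\rho$ defined only up to sign.
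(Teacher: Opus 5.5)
Your argument is the paper's. At a middle $L$ you read the equation off the column of $K$. At an interior $R_j$ you take the second difference of three equations, which kills the constant $L^-(\rho)$ coupling and the affine ramp and leaves $x_j^{-2}$ from the fan. The deformation drops out because $c_b=0$ at middle $L$'s and $c$ is affine on $R$-runs. Constancy of $\tau_\rho$ comes from subtracting consecutive relations.

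The one step you leave open is the one you name as the main obstacle: that the branches of $\hat d$ can be chosen so the sign in $\hat d_b^2\mp\hat d_{b-1}\hat d_{b+1}=1$ is the same at every interior site. Your constancy argument assumes this. The paper closes it in one line. Flipping $\hat d_c$ changes the sign of the relation only at $c-1$ and $c+1$. So go through the interior sites $i$ in increasing order, and flip $\hat d_{i+1}$ whenever the relation at $i$ has the wrong sign. This corrects the relation at $i$ and disturbs only the one at $i+2$, which is still to come.

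Once this is done, your subtraction gives $\tau_j=\tau_{j+1}$ along the run. The residual freedom that preserves every relation is flipping one parity class of sites, which sends $\tau_\rho\mapsto-\tau_\rho$. That is the sign ambiguity in the statement.
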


\begin{proof}
At a site $b$ interior to an $L$-run, Theorem~\ref{thm:Kformula} gives $K_{bb}=4$ and $K_{b\pm1,b}=-2$ with no other entry in that column, and
$Q_b=Q_{b\pm1}=2$, so $(\Kraw)_{bb}=1$ and $(\Kraw)_{b\pm1,b}=-\tfrac12$. The Bethe equation at $b$, in the shape form \eqref{eq:bethex} with $Q_b=2$, is therefore
$(1-x_b)^2=x_b^2/(x_{b-1}x_{b+1})$, which is \eqref{eq:friezeall}.

At a site $j$ interior to an $R$-run the column is not tridiagonal --- the fan block is $2\min(i,j)$ --- so combine three equations instead. Since
$Q=1$ on $R$-steps,
\[
\frac{(1-x_j)^2}{(1-x_{j-1})(1-x_{j+1})}=\prod_ax_a^{(2K_{aj}-K_{a,j-1}-K_{a,j+1})/Q_a}=\prod_ax_a^{-\Delta_j^2K_{aj}/Q_a} ,
\]
where $\Delta_j^2g(j)=g(j-1)-2g(j)+g(j+1)$ is the second difference in $j$. Now $-\Delta_j^2\bigl(2\min(a,j)\bigr)=2\delta_{aj}$ for $a$ in the same fan; the last $L$ before the run contributes the constant $-2$, and every other
run a constant, both annihilated by $\Delta_j^2$; and the first $L$ after the run contributes the ramp $2j$, affine in $j$, also annihilated. So the
right side is $x_j^2$, and with $s=1-x$ this reads $s_j^2/(s_{j-1}s_{j+1})=(1-s_j)^2$, i.e.\ \eqref{eq:friezeall}. The same holds for the deformed equations, since $c_b=0$ at middle $L$'s and $c$ is affine on $R$-runs, so the factor $(1+\epsilon)^{c}$ cancels; hence the relation holds at every limit vacuum whose run coordinates stay in $\C^\times$. For the conservation of $\tau_\rho$, choose the signs of the $\hat d_b$ so that the relation reads $\hat d_b^2-\hat d_{b-1}\hat d_{b+1}=1$ at every interior site. To do so, go through the interior sites in turn; where the relation at $i$ has the wrong sign, flip $\hat d_{i+1}$, which affects only the relation at $i+2$. Two consecutive relations then give $\hat d_j(\hat d_j+\hat d_{j+2})=\hat d_{j+1}(\hat d_{j-1}+\hat d_{j+1})$, that is, $\tau_j=\tau_{j+1}$.
\end{proof}

\begin{remark}[The branch of the square root]
\label{rem:friezesign}
The sign sits on the product and not on the right-hand side: squaring
$(\hat d_b^2-1)^2=\hat d_{b-1}^2\hat d_{b+1}^2$ gives $\hat d_b^2-1=\pm\hat d_{b-1}\hat d_{b+1}$, and both
signs occur (at $L^3R^3$ both are found among the classes). Flipping the sign of $\hat d_b$ at alternate sites of a run preserves the relations and sends $\tau_\rho\mapsto-\tau_\rho$. Flipping it at a single site $b$ changes the sign of the relation at $b-1$ and $b+1$. So a run is determined, up to this sign, by its trace and its phase, or by two consecutive values.
\end{remark}

The frieze relation has two uses. First, \eqref{eq:friezeall} is a necessary condition on a Bethe vacuum that refers to nothing but the solution's own coordinates. So the
frieze residual is a necessary-condition filter. On the harvested words of Section~\ref{sec:spectrum} (Observation~\ref{obs:spectrumgen}) it separates genuine vacua, where it holds to $10^{-13}$ or better, from
spurious numerical solutions in the degenerate sector, where it is $10^{-3}$. Second, for runs of length at least $3$ it cuts a class from $r$ unknowns to $4n_R$, one trace and
one anchor for each of the $2n_R$ runs. The traces turn out to carry exactly the content of the $\mathcal H$-spectrum; see
Observation~\ref{obs:tauquant}.

\begin{proposition}[The run-chain reduction]
\label{prop:runreduce}
Let every run of $\varphi$ have length at least $2$. For a run of length $2$ put $\tau_\rho=(\hat d_1^2+\hat d_2^2-1)/(\hat d_1\hat d_2)$. Parametrise each run $\rho$ of length $n_\rho$ with $\tau_\rho\ne\pm2$ (the case $\tau_\rho=\pm2$ is the affine chain of Remark~\ref{rem:friezedef}(ii), which contains $V^*$ and has one unknown, the anchor) by
\[
s_j=\frac{\sin^2\vartheta_\rho}{\sin^2(j\vartheta_\rho+\phi_\rho)},\qquad j=1,\dots,n_\rho ,
\]
with $\vartheta_\rho,\phi_\rho\in\C$, so that $\tau_\rho=2\cos\vartheta_\rho$; by Theorem~\ref{thm:friezeall} and Remark~\ref{rem:friezedef}(ii) every solution has this form on such a run (the details are in the proof of Theorem~\ref{thm:complete}). Write $r_b=(1-x_b)^{Q_b}/\prod_ax_a^{K_{ab}/Q_a}$ for the residual of \eqref{eq:bethe} at $b$. Then the
frieze relation \eqref{eq:friezeall} holds identically in $(\vartheta_\rho,\phi_\rho)$ at every site interior to a run; at an interior $L$-site it is
the Bethe equation $r_b=1$ (in which $Q_b=2$ already squares the coordinate), and at an interior $R$-site it is $r_j^2=r_{j-1}r_{j+1}$. For a word
with two runs the squared system $\{r_L=1,\ r_R^2=1\}$ therefore reduces to the four equations at the run ends, in the four unknowns $(\vartheta_L,\phi_L,\vartheta_R,\phi_R)$, up to an
$(n_\rho-1)$-st root of unity in $r_j^2$ at the interior $R$-sites; adding the equation at the second $R$-site removes it.
\end{proposition}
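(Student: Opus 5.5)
The plan is to reduce everything to the computations already made in the proof of Theorem~\ref{thm:friezeall}, and then to count the equations that remain. There are three steps.

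\emph{Step 1: the frieze relation holds identically.} On a run $\rho$ with the stated parametrisation, put $\hat d_j=s_j^{-1/2}=\sin(j\vartheta_\rho+\phi_\rho)/\sin\vartheta_\rho$. The addition formula gives
\[
\sin^2(j\vartheta+\phi)-\sin((j-1)\vartheta+\phi)\sin((j+1)\vartheta+\phi)=\sin^2\vartheta .
\]
Hence $\hat d_j^2-\hat d_{j-1}\hat d_{j+1}=1$ identically in $(\vartheta_\rho,\phi_\rho)$. This is \eqref{eq:friezeall} with the upper sign, and its polynomial form $s_{j-1}s_{j+1}(1-s_j)^2=s_j^2$ also holds. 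For a run of length $2$ there is no interior site, so the defining formula for $\tau_\rho$ is only a normalisation and nothing needs checking.

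\emph{Step 2: identify the frieze relation with residuals.} At an interior $L$-site $b$, Theorem~\ref{thm:Kformula} gives a column with entries $K_{bb}=4$, $K_{b\pm1,b}=-2$ and nothing else. As in the proof of Theorem~\ref{thm:friezeall}, the residual is therefore
\[
r_b=(1-x_b)^2x_{b-1}x_{b+1}/x_b^2 .
\]
Here $Q_b=2$ already produces the square, so $r_b=1$ is exactly the frieze relation, which holds by Step~1.

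At an interior $R$-site $j$, take the ratio $r_j^2/(r_{j-1}r_{j+1})$. Its numerator part is $(1-x_j)^2/((1-x_{j-1})(1-x_{j+1}))$. Its exponents are $-\Delta_j^2K_{aj}/Q_a$, and these vanish except for $2\delta_{aj}$. The reason is the same as before: the fan $2\min$ has second difference $-2\delta$, the $L^-$ coupling is constant, the ramp to $L^+$ is affine, and other runs contribute constants. So this ratio equals $\bigl((1-x_j)^2/((1-x_{j-1})(1-x_{j+1})x_j^2)\bigr)$, which is $1$ by the frieze relation. Hence $r_j^2=r_{j-1}r_{j+1}$ identically.

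\emph{Step 3: reduction for two runs.} The interior $L$-equations are automatic. On the $R$-run, the relations $r_j^2=r_{j-1}r_{j+1}$ make $\log r_j$ affine in $j$ up to roots of unity. So $r_j=r_1\,\omega^{\,j-1}$, where $\omega=r_2/r_1$, and each $r_j$ is determined by the two end values $r_1$ and $r_{n}$ up to the ambiguity in extracting $\omega$ from $\omega^{\,n-1}=r_n/r_1$.

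Imposing $r_1^2=r_n^2=1$ therefore forces $\omega^{2(n-1)}=1$, and the interior $r_j^2$ are then determined up to an $(n_\rho-1)$-st root of unity. Adding the equation $r_2^2=1$ gives $\omega^2=1$, which forces all $r_j^2=1$.

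What remains are the equations at the first and last $L$ and the first and last $R$. That is four equations in $(\vartheta_L,\phi_L,\vartheta_R,\phi_R)$.

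The main obstacle is the claim that every solution has the trigonometric form. This uses Remark~\ref{rem:friezedef}(ii) after the sign normalisation of Remark~\ref{rem:friezesign}. The genuine difficulty is excluding degenerate rows with $h_j^2-h_{j-1}h_{j+1}=0$. I would defer this to the proof of Theorem~\ref{thm:complete}, as the statement indicates, and here only verify the identities.
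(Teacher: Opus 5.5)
Your proposal is correct and is essentially the paper's proof: the product-to-sum identity gives the frieze relation identically; the second-difference computation from the proof of Theorem~\ref{thm:friezeall} turns it into $r_b=1$ at interior $L$-sites and $r_j^2=r_{j-1}r_{j+1}$ at interior $R$-sites; and the geometric sequence $r_j=r_1\omega^{j-1}$, with $r_1^2=r_{n_\rho}^2=1$ and then $r_2^2=1$, removes the root-of-unity ambiguity exactly as the paper does. The only slip is cosmetic: since all $r_j\neq0$, the relation $r_j^2=r_{j-1}r_{j+1}$ already makes $(r_j)$ exactly geometric, so no ``up to roots of unity'' is needed at that step.
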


\begin{proof}
The frieze relation at an interior site is the product-to-sum identity
$\sin((j-1)\vartheta+\phi)\sin((j+1)\vartheta+\phi)=\sin^2(j\vartheta+\phi)-\sin^2\vartheta$. By the proof of Theorem~\ref{thm:friezeall} it is $r_b=1$ at an
interior $L$-site, and at an interior $R$-site $r_j^2/(r_{j-1}r_{j+1})$ is the frieze relation, the product of the $x_a$ being $x_j^2$
identically. So along the $R$-run $(r_j)$ is geometric with ratio $\gamma$, say; $r_1^2=r_{n_\rho}^2=1$ gives $\gamma^{2(n_\rho-1)}=1$, and
$r_2^2=1$ gives $\gamma^2=1$.
\end{proof}

The residuals $r_b$ have integer exponents, since every entry of $K$ is even (Theorem~\ref{thm:Kformula}) and $Q_a\in\{1,2\}$. So the squaring
at the $R$-sites is not needed to remove a square-root ambiguity. It is a convenience: the frieze relation is a statement about $r_j^2$. The
unsquared equations at every $R$-site follow from those at $R_1$ and $R_2$, since along an $R$-run $r_1=r_2=1$ gives $\gamma=1$.

\begin{remark}[The reduction is sharp]
\label{rem:runsharp}
The run-end equations, together with the equation at the second $R$-site of each run, are the quantisation. Take a two-run word with
both runs of length $\ge2$. By Theorem~\ref{thm:complete} the non-degenerate solutions of the unsquared equations are exactly the
kernel-vector classes of Theorem~\ref{thm:allwords}, together with $V^*$ and $\sigma V^*$. The run-end equations alone would leave
spurious solutions, with $\gamma$ a nontrivial root of unity in the proof of Proposition~\ref{prop:runreduce}. The multiplier $\kappa$ of
Section~\ref{sub:runtraces} is therefore a property of this small system in the four run parameters, and not of the $r$-dimensional one. The sign flip of Remark~\ref{rem:friezesign} sends $\vartheta\mapsto\pi-\vartheta$ and
$j\mapsto N-j\equiv-(j+N)\pmod{2N}$, so the two square roots of $s$ along a run are the two labels $\{j,j+N\}$ of \eqref{eq:transN}. This is the pair
by which Corollary~\ref{cor:templatelabel} is taken modulo $\tfrac14$.
\end{remark}

\subsection{Two further explicit families}
\label{sub:families}

The chain is now a theorem, and Theorem~\ref{prop:propagated} gives the anchors. So the $\#L=1$ and
$\#L=2$ families can be written out. Here $d_{L_j}$, $d_{R_j}$ are the Nahm variables $d_b=\delta_b^{-1/2}$ at the $j$-th $L$-site and the
$j$-th $R$-site. For $\varphi=LR^{r-1}$, $r\ge3$, set $m=\sqrt{r-1}$ (in this subsection only; $m$ need not be an integer), so that $k=r+1$ and
$k^2-4=(r-1)(r+3)$. Then
\begin{equation}
\begin{gathered}
d_{L_1}^2=\frac12+\frac{m^4-1+m^3\sqrt{4+m^2}}{2(2m^2-1)},\\
d_{R_1}=\frac{2-m^2}2+\frac m2\sqrt{4+m^2}=\lambda+2-k,\qquad d_{R_{1+i}}=d_{R_1}+i ,
\end{gathered}
\label{eq:LRnclosed}
\end{equation}
or, as a function of $r$ alone,
\[
d_{L_1}(r)^2=\frac{(r-1)^2+2r-4+(r-1)^{3/2}\sqrt{r+3}}{2(2r-3)},\qquad d_{R_1}(r)=\frac{3-r}2+\frac{\sqrt{(r-1)(r+3)}}2 .
\]
For $\varphi=L^2R^n$, $r=n+2$, $k=2n+2$ and $k^2-4=4n(n+2)$,
\begin{equation}
\begin{gathered}
d_{L_1}^2=\frac{n+2}2,\qquad
d_{R_1}=\frac{(2-n)+\sqrt{n^2+2n}}2=\frac{\lambda+1-2n}2,\qquad d_{R_{1+i}}=d_{R_1}+i,\\[2pt]
d_{L_2}^4=\frac{d_{R_1}+n-1}{d_{L_1}^2\,d_{R_1}^2\,(d_{R_1}-1)} .
\end{gathered}
\label{eq:L2Rnclosed}
\end{equation}
Both are the propagated eigenvector, $\delta_b=1-x_b$ with $x$ as in \eqref{eq:propagated}, for every $r$ and every $n$, and $c_{\eff}=1$ and $2$
by Theorem~\ref{thm:ceff}: the dilogarithm sum of \eqref{eq:LRnclosed} and \eqref{eq:L2Rnclosed} is evaluated exactly, since
$\sum_bL(\delta_b)=\#L\cdot L(1)$ follows from the telescoping identity \eqref{eq:rogtel} alone, through $L(\delta_L)=L(1)-L(s_L)$ and
$L(\delta_R)=L(s_R)$, with no need to know the individual values. The closed form, and $c_{\eff}=1$, hold for every $r\ge3$.

Both families yield identities in $r$ and $n$. In the second family $\delta_{L_1}=2/(n+2)=4/(k+2)$ is \emph{rational} for every $n$; it is the
only charge-$2$ coordinate in either family that is. In the first, summing the first row and then the remaining rows of $M=QK^{-1}Q$
gives
\[
1-\delta_{L_1}=\delta_{L_1}^2\,\delta_{R_1}/\delta_{R_{r-1}},\qquad \prod_{j=1}^{r-1}(1-\delta_{R_j})=\sqrt{\delta_{R_1}}=1/d_{R_1} .
\]

\begin{remark}[The Jeong--Sohn presentation is an index-two quotient]
\label{rem:jsframe}
For $\varphi=L^2R^n$ the presentation used by Jeong--Sohn, with level matrix $K^{\mathrm{JS}}$ of \cite[(2.12)]{JS26}, has $r=n+2$ but charge vector $Q^{\mathrm{JS}}=(1,2,1,\dots,1)$ --- \emph{one} charge-$2$
field --- where Theorem~\ref{thm:Kformula} gives $Q=(2,2,1,\dots,1)$, two of them, and $\det K^{\mathrm{JS}}=2^{r-2}$ where the pipeline gives
$\det K=2^r$. The two are nevertheless one system. Writing $D$ for the map $e_1\mapsto2e_2$, $e_2\mapsto e_1$, $e_j\mapsto e_j$ ($j\ge3$), one has
\begin{equation}
K=D\,K^{\mathrm{JS}}D^{\mathsf T},\qquad QP=DQ^{\mathrm{JS}},\qquad \det D=-2,
\label{eq:jsframe}
\end{equation}
with $P$ the transposition of the first two sites, the two $L$-sites. For the first identity, order our sites as $L_1$ (the first $L$ after the
$R$-run), $L_2$, $R_1,\dots,R_n$. Theorem~\ref{thm:Kformula} gives $K_{L_1L_1}=2n+4$, $K_{L_2L_2}=4$, $K_{L_1L_2}=-4$, $K_{L_1R_j}=2j$,
$K_{L_2R_j}=-2$ and $K_{R_iR_j}=2\min(i,j)$. In $DK^{\mathrm{JS}}D^{\mathsf T}$ the row and column of $L_1$ are those of Jeong--Sohn's site $2$, those of
$L_2$ are twice those of their site $1$, and the $R$-block is unchanged; by \cite[(2.12)]{JS26} the entries are $2n+4$, $4\cdot1$, $2\cdot(-2)$,
$2j$, $2\cdot(-1)$ and $2\min(i,j)$, for every $n\ge1$. The second identity is $DQ^{\mathrm{JS}}=(2,2,1,\dots,1)=QP$. Hence
\[
QK^{-1}Q=P\,Q^{\mathrm{JS}}(K^{\mathrm{JS}})^{-1}Q^{\mathrm{JS}}P:
\]
the two presentations have the \emph{same} matrix $M=QK^{-1}Q$ up to relabelling $L_1\leftrightarrow L_2$, so the same Nahm ladder
$\mathbf N=2(I-M)$, the same positive solution and the same $c_{\eff}$. The Bethe equations are not literally the same. In the coordinates
$(z_{L_2}^2,z_{L_1},z_{R_1},\dots,z_{R_n})$ the pipeline equation at $L_2$, where $Q=2$, is the square of the equation at
site~$1$, where $Q=1$, of the system built from Jeong--Sohn's presentation \cite[(2.12)]{JS26}, and the other equations agree. So every solution of that system solves ours, but not conversely: numerically, one
of the three solutions found at $L^2R$, and three of the five at $L^2R^3$, give $-1$ in its site-$1$ equation. Since the matrices $M$ agree, Jeong--Sohn's ``$\varphi=L^2R^n$'' is the word
$L^2R^n$ in the sense of Section~\ref{sub:word}, and the positive solution of that system is the propagated eigenvector of
Theorem~\ref{prop:propagated}, in the site order $d_1\mapsto L_2$, $d_2\mapsto L_1$, $d_{2+j}\mapsto R_j$; it is \eqref{eq:L2Rnclosed}.

What differs is the lattice, not the system. $D$ has elementary divisors $(1,\dots,1,2)$, so the pipeline's $U(1)^r$ charge lattice is an
index-two sublattice of Jeong--Sohn's, and $\det K/\det K^{\mathrm{JS}}=(\det D)^2=4$. The quantities that see $K$ and $Q$ separately rather than only
through $M$ therefore differ by that $\Z_2$: a class carries $2^{\#L}$ vacua in the pipeline presentation and $2^{\#L-1}$ in theirs, and the pipeline
system has the extra solutions just described. Since $\det K=2^r$ is the normalisation selected in Section~\ref{sub:charges}
(Theorem~\ref{thm:Kformula}), the pipeline presentation is the one attached to the layered triangulation, and the Jeong--Sohn presentation is its $\Z_2$
quotient. The torsion formula of Section~\ref{sec:torsion} and the Gauss-sum and spin computations of Section~\ref{sec:spins}, all of which turn on the
factor $2^{\#L}$, are statements in our unquotiented presentation.
\end{remark}

\emph{What this section establishes.} The positive vacuum $V^*$ of every word is given in closed form by heights (Theorem~\ref{prop:propagated}), its Wilson line is the stretch factor (Proposition~\ref{obs:stretch}), and the Rogers sum is rational (Corollary~\ref{cor:rational}); the value $c_{\eff}=\#L$ is Theorem~\ref{thm:ceff}. Along each run the Bethe equations imply the frieze relation at every solution (Theorem~\ref{thm:friezeall}), the linear chain being its case $\tau=2$ at $V^*$ and $\sigma V^*$, and the relation reduces a two-run word to its run-end equations (Proposition~\ref{prop:runreduce}); Sections~\ref{sec:bethe} onwards use this to classify all the vacua.

\section{The Bethe frame}
\label{sec:bethe}

This section sets up the third frame, in which the observables of the TQFT, the handle-gluing and fibering operators, are evaluated. It recalls from
\cite{GKL23} the handle-gluing operator at the flux point, \eqref{eq:Habelian}: a function on the solutions of the Bethe equations \eqref{eq:bethe} with no free
parameters. It then explains why the Bethe system has to be deformed. At the flux point some solutions lie at infinity. They are recovered by
deforming the right-hand side to $(1+\epsilon)^{c_b}$ and letting $\epsilon\to0$; the precise definition of a Bethe vacuum is given below. The
deformation is analogous to the device of Jeong--Lee \cite{JL25}, in the present normalisation. Their closed forms do not transfer, because their
theories gauge $SU(2)_L$ and $SU(2)_R$ at separate levels, and so have a different Bethe system. Finally the section fixes the terminology used in the rest of the paper. A solution of the shape
equations determines a class of $2^{\#L}$ Bethe vacua, which differ in the signs of the charge-$2$ coordinates. The classes, the unit, the
positive vacuum $V^*$ and the two identity classes are defined here, and counted.

From (2.12)--(2.13) of \cite{GKL23}, the handle-gluing operator is
$\mathcal H_\alpha=\det_{ab}(\partial_{Z_a}\partial_{Z_b}\mathcal W_0)\,e^{-2\mathcal W_1}$ with
\begin{align}
\mathcal W_0&=\sum_a\Li_2(e^{-Q_aZ_a})+\tfrac12Z^{\mathsf T}KZ+\bigl(\mathfrak m+i\pi(\nu-1)\bigr)\Bigl[(r-1)Z_1+\sum_{a=1}^{r-1}aZ_{a+1}\Bigr],\\
\mathcal W_1&=-\tfrac12\sum_a\log(1-e^{-Q_aZ_a})+\tfrac12(\nu-1)\Bigl[(r-1)Z_1+\sum_{a=1}^{r-1}aZ_{a+1}\Bigr].
\end{align}
Here $Z_a=\log z_a$ are the logarithms of the Bethe coordinates, $\mathfrak m$ is the real mass of the $U(1)_A$
symmetry and $\nu$ the R-symmetry mixing parameter \cite[(2.3)--(2.4)]{GKLSY21} (the ``flux point'' of the twisted index is $(\mathfrak m,\nu)=(0,-1)$). These are the formulae of \cite{GKL23} for $LR^n$, where the
bracket is $c\cdot Z$; for a general word it is replaced by $c\cdot Z$ with $c$ the charge vector of Theorem~\ref{thm:Kformula}, $W_c=e^{c\cdot Z}=\prod_az_a^{c_a}$ being the value at a Bethe vacuum of the BPS Wilson line of gauge charge $c$ \cite[(2.22)--(2.24)]{GKL23}. At $\nu=-1$ this gives
\begin{equation}
\mathcal H_\alpha=\prod_a\bigl(1-z_a^{-Q_a}\bigr)\prod_az_a^{2c_a}\cdot\det\Bigl(K+\mathrm{diag}\tfrac{Q_a^2}{z_a^{Q_a}-1}\Bigr),
\label{eq:Habelian}
\end{equation}
with no free parameters, the Bethe equations being $\exp(\partial_{Z_b}\mathcal W_0)=1$, i.e.\ \eqref{eq:bethe}. The fibering operator
is, in the same normalisation,
\begin{equation}
\mathcal F_\alpha=\exp\Bigl(-\frac{\Psi}{2\pi i}\Bigr),\qquad
\Psi=\sum_b\bigl[\Li_2(x_b)-Q_bZ_b\log(1-x_b)\bigr]-\tfrac12Z^{\mathsf T}KZ,\qquad x_b=e^{-Q_bZ_b},
\label{eq:Fdef}
\end{equation}
evaluated at the vacuum with the branches of $Z_b$ and $\log(1-x_b)$ of the log-form Bethe equations
$Q_b\log(1-x_b)=-(KZ)_b+2\pi in_b$ (Section~\ref{sec:spins}); at a class reached only through the deformation, $\mathcal F_\alpha$ is the limit
of the same expression along the continuation in $\epsilon$. The \emph{spin} of the vacuum $\alpha$ is
\begin{equation}
\operatorname{spin}(\alpha)=\frac{\arg\mathcal F_\alpha}{2\pi}=\frac{\operatorname{Re}\Psi}{4\pi^2}\in\R/\Z .
\label{eq:spindef}
\end{equation}
Under the dictionary \eqref{eq:dict}, $\mathcal F_\alpha/\mathcal F_0$ is the $T$-eigenvalue of $\alpha$, so the conformal spin
$\theta_\alpha$ of the simple object is the difference of the spin of $\alpha$ and the spin of the unit (Section~\ref{sub:relation}). The
construction of \cite[(2.2), (2.5), (2.7)]{GKL23} includes a superpotential. For the present purposes it enters only through the charge vector $c$. The superpotential
determines $c$ by fixing the weights of the fields and leaving $U(1)_A$ as the one global symmetry, so \eqref{eq:Habelian} and the Bethe
equations are already the complete data.

\paragraph{The degeneration, and why a deformation is needed.} At $\nu=-1$ the system is degenerate. At $LR^3$, with $k=\tr\varphi=5$, it has exactly ten Bethe vacua, five classes, where Gang--Kim
find $2k+2=12$ \cite[(2.11)]{GK22}. The two missing vacua are genuinely absent from the system rather than from the solver. To see this, eliminate
the other coordinates from \eqref{eq:bethe} and remove the spurious factors. This leaves, in the coordinate $z=z_{R_3}$ of the last site,
$(5z^2+3z-9)(z^3-4z^2+3z+1)$. The quadratic carries the two identity classes and the cubic the three $A_+$ classes; the $A_-$ class is missing.
It is a limiting solution at infinity, recovered by
deforming the right side of \eqref{eq:bethe} from $1$ to $(1+\epsilon)^{c_b}$ away from the degenerate point $\epsilon=0$ and continuing. Throughout this paper ``Bethe vacuum'' means a solution of the deformed
system continued to $\epsilon\to0$; the deformation is the one given by the charge vector $c$ of the Farey data.

The degeneration and its cure are not special to the words considered here. Jeong--Lee \cite[\S2.3]{JL25} observe that at $\nu=-1$ exactly, the adjoint
chirals of the $T[SU(2)]$ segments make $\mathcal W_1$ divergent and the handle-gluing operator singular, and prescribe the double-scaling ansatz
\[
\nu=-1+\epsilon,\qquad Z_i=(-1)^{G_i}W_i+f_i\epsilon+O(\epsilon^2),\qquad G_i\in\Z_2,
\]
their (2.16), defining the observables in the $\epsilon\to0$ limit. Our deformation by $(1+\epsilon)^{c_b}$, which deforms the real mass rather than $\nu$, is an analogous device in the Farey normalisation. The strata of Observation~\ref{obs:strata} are the analogue of their branch labels $G_i$.

Only the device transfers, not the closed forms that follow it. $D(\vec k)$ joins $n-1$ segments of $T[SU(2)]$ and gauges $SU(2)_L$ and $SU(2)_R$ at
separate levels $k_i$, where the $S$-fold of \cite{GK22} gauges the diagonal. The monodromies of the two lie in the same conjugacy class up to inversion, but the Bethe equations differ. Write
$\varphi(\vec k)=T^{k_1}ST^{k_2}S\cdots ST^{k_n}$ in their conventions, and read off the periodic continued fraction of its fixed point.
Then $\vec k=(2,3)$ gives $\varphi\sim L^3R\sim(LR^3)^{-1}$ with $p=\varphi_{12}=5$ (for $S=\begin{psmallmatrix}0&-1\\1&0\end{psmallmatrix}$), so their (2.22) predicts $(|p|-1)2^{n-1}=8$ vacua with
$\mathcal H=|p|\sin^{-2}(\pi\alpha/p)$, two distinct values; Proposition~\ref{prop:Hspectrum} and Theorem~\ref{thm:torsion} give $12$ vacua and four distinct values of $\mathcal H$.
Both satisfy $\sum\mathcal H^{-1}=1$, so no rescaling relates them. This is as their own factorisation
$D(\vec k)|_A=\mathcal D(p,q)|_A\otimes\mathrm{TFT}[\vec k]$ requires. The integer $p=\varphi_{12}$ is not a conjugacy invariant: $\vec k=(1,3)$, $(2,2)$ and
$(2,2,1)$ give $\varphi\sim L^2R$, $L^2R$ and $LR^2$, with $p=2,3,1$, and $L^2R\sim(LR^2)^{-1}$. Their $\mathcal H$ therefore depends on $\vec k$ beyond the monodromy, whereas every invariant here
depends on the word alone.

\paragraph{The class structure.} Each solution $x$ of the shape equations \eqref{eq:bethex} determines the $z_a$ for $a$ an $R$-tetrahedron, and the charge-$2$ coordinates only
through $z_L^{Q_L}=z_L^2$; so each class of solutions, defined below, carries $2^{\#L}$ vacua, one per sign choice $\varepsilon\in\{\pm1\}^{\#L}$. This splitting
is not a redundancy: the $2^{\#L}$ vacua are distinct, and Section~\ref{sec:spins} computes the spin as a function of
$\varepsilon$. (Spins within a class differ by elements of $\tfrac14\Z$, so for $\#L\ge3$ some of them coincide.) It is why the factor $2^{\#L}$ appears in the torsion formula and in the $\mathcal H$-spectrum.

Three terms are fixed here. A Bethe vacuum, as just defined, is under the dictionary \eqref{eq:dict} a
simple object of $\mathrm{TFT}[\varphi]$. A \emph{class} is the set of the $2^{\#L}$ Bethe vacua that have the same solution $x=(x_a)$ of the shape equations \eqref{eq:bethex}, the shapes $x_a=z_a^{-Q_a}$ being one number for each tetrahedron. The classes are the simple objects of
$\mathrm{TFT}_0$ (Section~\ref{sec:modular}), and Theorem~\ref{thm:count} counts them. (``Vacuum'' always means Bethe vacuum, except in ``the vacuum of the physics literature'' for the unit.) The \emph{unit} is the simple object $0$ of the modular data,
called the vacuum in the physics literature; input~(b) says which class contains it. $V^*$ is the \emph{positive vacuum}: the vacuum of the positive class with $z_a>1$ at every site (equivalently, all shapes $x_a\in(0,1)$). It is the only vacuum with $z_a\ge1$ at every site. Other classes can have all $z_a$ real and positive, as one $A_+$ class of $LR^3$ does. The $2^{\#L}$ vacua of a class differ only in the signs of the $z_a$ at the $L$-sites. At an $R$-site $z_a=x_a^{-1}$ is fixed by the shape, and it is negative where $x_a<0$. The \emph{all-plus} vacuum of a class is the one with $z_a>0$ at every $L$-site. So $V^*$ is both all-plus and all-positive: it is the all-plus vacuum of its class, and every coordinate is positive because every shape lies in $(0,1)$. The \emph{Galois-conjugate class} is the class whose shapes are the Galois conjugates of the shapes of $V^*$; some of these shapes are negative. We write $\sigma V^*$ for its all-plus vacuum. Applying the Galois conjugation to the coordinates of $V^*$ themselves also gives a vacuum of this class, but it differs from $\sigma V^*$ in the sign of $z_L$ at every $L$-site (Theorem~\ref{prop:propagated}). The unit is not $V^*$:
under (b) it is $\sigma V^*$, and $V^*$ is the dominant saddle of the half-index (Proposition~\ref{prop:vstar}). Their classes are the two
\emph{identity classes}, with $\mathcal H=2^{\#L}\lambda^{\pm1}|A_-||A_+|$; they correspond to the trivial character, counted once in each
sector (Theorem~\ref{thm:count}), and to the objects $o$, $b$ of $\mathrm{TFT}_0$ in some order. $\mathcal H$ and the Rogers sum depend only on the
class and are indexed by classes; the spin is indexed by Bethe vacua. For odd trace $k$ the expected numbers are $k+1$ classes and $2^{\#L}(k+1)$ Bethe vacua: Theorem~\ref{thm:count} counts the
characters, Theorem~\ref{thm:complete} gives this count of non-degenerate classes when no run is a singleton and every nonzero kernel vector is non-degenerate, and
(F1) gives it in general. For $L^mR^m$ with $m$ odd the expected numbers are $m^2+3$ and $2^m(m^2+3)$, observed at $m=3,5,7,11$.

\part{Results for every word}
\label{part:results}

This part proves what the Bethe side of the construction gives for every hyperbolic word. Everything here is a statement about the
pipeline --- the level matrix $K$ and the charges $Q$ and $c$ of Theorem~\ref{thm:Kformula}, and the Bethe system \eqref{eq:bethe} they
define --- with the exceptions listed in Section~\ref{sub:status}. The main ones are Section~\ref{sec:farey24}, which assumes the
half-index dictionary of \cite{GKL23} and input~(b) and says so in its statements, and Corollary~\ref{cor:interpretation}, at the end of
Section~\ref{sec:count}, which reads the results through (F1). Results labelled as observations are numerical; each says which of its parts are proved, and where.

Section~\ref{sec:ceff} proves $c_{\eff}=\#L$, by a dilogarithm identity at the two eigenvector solutions $V^*$ and $\sigma V^*$.
Section~\ref{sec:farey24} derives from it, through the dictionary, the central charge $c\equiv\#L-6\#R\pmod{24}$. Section~\ref{sec:torsion}
proves the torsion formula $\mathcal H=2^{\#L}\lambda^{\pm1}|A_-||A_+|$ at the two identity classes, through a periodic Jacobi matrix and Hill's
formula (details in Appendix~\ref{app:torsion}). Section~\ref{sec:spectrum} gives every non-degenerate class in closed form, one for each pair $\pm v$ of kernel vectors of $\varphi\mp1$ modulo $N_\mp$ for which the function $f(e)=\sin(\pi\det(v,e)/N_\mp)$ does not vanish on any middle slope (Theorem~\ref{thm:allwords}; the middle slope $c_1+c_2$ of a step is defined in Section~\ref{sub:propagated}), and proves $\mathcal H=2^{\#L}|A_\mp|$ there under the non-degeneracy condition (Theorem~\ref{thm:Hall}).
Section~\ref{sec:count} counts the fixed reducible characters, $k+1$ for odd trace $k$, and proves that for words with no singleton run these
classes, with the two identity classes, are all the non-degenerate classes (Theorem~\ref{thm:complete}). Section~\ref{sec:spins} treats the
spins. Within a class they form the discriminant form of $K_{LL}$ on the sign group. Across classes they are governed by the Rogers sum $c_\alpha$ of the class,
through $24\theta_\alpha\equiv c_\alpha$. Here $c_\alpha$ is neither the charge vector nor the central charge. The Rogers sums are evaluated at the non-degenerate classes (Theorem~\ref{thm:rogers}) and on full chains
(Section~\ref{sub:fullchain}).

What this part does not reach are the degenerate classes: those whose kernel vector has $f$ vanishing on some middle slope, equivalently
(Theorem~\ref{thm:complete}) the classes some of whose coordinates are $0$, $1$ or $\infty$ in the limit $\epsilon\to0$; Section~\ref{sub:complete}
makes the equivalence precise. In this part ``non-degenerate'', for a kernel vector or a class, is used in this sense. It is distinct from the \emph{non-degeneracy condition} $\det(F+D_R)\ne0$ of Theorem~\ref{thm:Hall}. For $L^mR^m$
they make up the whole $A_-$ sector, which is the subject of Part~\ref{part:aminus}.

\section{The effective central charge}
\label{sec:ceff}

This section proves $c_{\eff}=\#L$ for every hyperbolic word (Theorem~\ref{thm:ceff}). By the asymptotics of Nahm sums the effective
central charge is the Rogers dilogarithm sum over the positive vacuum, so the theorem is a dilogarithm identity at $V^*$:
$\sum_{\ell_i=L}L(s_i)=\sum_{\ell_i=R}L(s_i)$. Rogers' duplication formula makes each step telescope
in the height ratio of the Farey basis \eqref{eq:rogtel}, and the periodicity of the expanding eigenvector closes the sum, so the identity
holds at $V^*$. At the contracting eigenvector, which gives $\sigma V^*$, the functional equations of the extended dilogarithm hold only up to jumps of $\pm\pi^2$.
Lemma~\ref{lem:crossings} locates these jumps at the last letter of each run, so they cancel and the identity holds there too. Two consequences
are used later: the Rogers sum of either identity class is $\#R$ (Section~\ref{sec:spins}), and the negative coordinates of $\sigma V^*$
are the last $R$ of each $R$-run (Section~\ref{sec:farey24}).

\subsection{The Rogers identity at both eigenvectors}
\label{sub:rogers}

Here $c_{\eff}$ is the growth exponent of the half-index $I(q)$ of Section~\ref{sec:farey24}: $\log I(e^{-\eta})\sim\pi^2c_{\eff}/(6\eta)$ as
$\eta\to0^+$. Under the half-index dictionary it is the effective central charge $c-24h_{\min}$ (Section~\ref{sec:farey24}). The half-index is the Nahm sum with matrix $\Kraw$ over the sublattice $Q\Z^r_{\ge0}$, in the variables $x$; its saddle is $V^*$ (Proposition~\ref{prop:vstar}). In Zagier's convention \cite[(36)]{Zagier07} the effective central charge of the Nahm sum with matrix $A$ is $\sum_b(L(1)-L(x_b))/L(1)$ with $1-x=x^A$, so here $c_{\eff}=\sum_bL(\delta_b)/L(1)$ with $\delta=1-x$ \cite[\S3]{Nahm04}. The sublattice of index $2^{\#L}$ does not change the growth. (The Nahm sum with matrix $\Kraw^{-1}$ in the variables $\delta$ has instead effective central charge $\sum_bL(x_b)/L(1)$.) The standard route to such sums is unavailable here. Nakanishi's dilogarithm identities \cite{Nakanishi11} need a $Y$-pattern that is periodic as a family of rational functions.
For the once-punctured torus the mutation sequence is of infinite type and the $Y$-variables are not periodic as rational functions. Only their values at the eigenvector (the specialisation by the heights of Theorem~\ref{prop:propagated})
are invariant under $\varphi$. A direct argument replaces the periodicity.

Let $(c_1,c_2)$ be the columns of the partial product $M_{i-1}$ as in Theorem~\ref{prop:propagated}, the Farey basis at step $i$, and let
$t=h(c_1)/h(c_2)$ be their height ratio. Write $t_{i-1}$ for $t$ at step $i$, so that $t_0=h(e_1)/h(e_2)$ and $t_r$ is the ratio after the last of the $r$ letters of the word. Put $G(t)=2L\bigl(t/(1+t)\bigr)$ for $L$ the Rogers dilogarithm. Rogers'
duplication formula $L(x^2)=2L(x)-2L\bigl(x/(1+x)\bigr)$ on $(0,1)$ gives, for $t>0$,
\[
L\text{-step }(t\mapsto\tfrac t{1+t}):\ L(s)=G(t)-G\bigl(\tfrac t{1+t}\bigr);\qquad
R\text{-step }(t\mapsto t+1):\ -L(s)=G(t)-G(t+1),
\]
the second using also the reflection identity. Hence
\begin{equation}
\sum_{\ell_i=L}L(s_i)-\sum_{\ell_i=R}L(s_i)=G(t_0)-G(t_r)
\label{eq:rogtel}
\end{equation}
for \emph{any} positive height function, and for the expanding eigenvector, where $h(\varphi v)=\lambda h(v)$, one has $t_r=t_0$. So the
identity holds at the positive solution $V^*$ of Theorem~\ref{prop:propagated} for every hyperbolic word.

At the contracting eigenvector $\eta'$ the components have opposite signs (see the proof of Lemma~\ref{lem:crossings}), so the height $h'$ takes both signs on the Farey slopes and the ratio $t$ is negative. The step formulae above are then applied to the extended dilogarithm $L(x)=\operatorname{Re}R(x)$ on $\R\smallsetminus\{0,1\}$, with $R(x)=\Li_2(x)+\tfrac12\log x\log(1-x)$ and principal branches (Section~\ref{sub:rogersnd}).
This $L$ is continuous at $0$ and $1$, real-analytic on $\R\smallsetminus\{0,1\}$, and tends to $-\pi^2/6$ at $-\infty$ and to $\pi^2/3$ at $+\infty$; so $G(t)=2L(t/(1+t))$ jumps by $\pi^2$ across $t=-1$. The defect of each step identity is therefore locally constant in $t$ away from the points where the step's own $G$-terms jump: $t=-1$ and $t=-\tfrac12$ for an $L$-step (through $G(t)$ and $G(t/(1+t))$), and $t=-1$ and $t=-2$ for an $R$-step (through $G(t)$ and $G(t+1)$). It is $0$ in the limits $t\to0^-$ and $t\to-\infty$. Consequently the $L$-step identity acquires a defect $+\pi^2$ exactly when $t\in(-1,-\tfrac12)$, and the $R$-step identity a defect $-\pi^2$ exactly when $t\in(-2,-1)$. These are the steps whose M\"obius maps carry $t$ across $-1$ downward and upward. The defects cancel, and the reason is combinatorial.

\begin{lemma}[Crossings]
\label{lem:crossings}
For every hyperbolic word, along the orbit $t_0,t_1,\dots,t_r=t_0$ of the height ratio at the contracting eigenvector: every $L$-step has $t\in(-1,0)$ and every $R$-step has $t<-1$; an $L$ is followed by an $R$
iff $t\in(-1,-\tfrac12)$, and an $R$ by an $L$ iff $t\in(-2,-1)$. The downward crossings of $-1$ are exactly the last $L$ of each $L$-run, the upward
crossings exactly the last $R$ of each $R$-run, and the set of negative $R$-coordinates is the set $\mathcal E$ of last $R$'s of the $R$-runs.
\end{lemma}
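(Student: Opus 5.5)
The plan is to track the ratio $t=h'(c_1)/h'(c_2)$ directly under the two M\"obius maps, $L:t\mapsto t/(1+t)$ and $R:t\mapsto t+1$. These are read off from the flips of Section~\ref{sub:gluing}: $L$ replaces $c_2$ by $c_1+c_2$, and $R$ replaces $c_1$ by $c_1+c_2$.

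\emph{Step 1: where the orbit lives.} Write $\eta'=(\eta'_1,\eta'_2)$ for the contracting left eigenvector. Since $\varphi$ has positive entries, $\eta'$ is orthogonal (under the pairing $v\mapsto\eta' v$) to the Perron direction, so its components have opposite signs. Hence $t_0=\eta'_1/\eta'_2<0$, and by conjugacy each $t_{i}$ is the corresponding ratio for the cyclically rotated word. I would prove the invariant: before an $L$-step $t\in(-1,0)$, and before an $R$-step $t<-1$. The cleanest route is the fixed-point description. The ratio at the contracting eigenvector of the rotated word $\ell_i\cdots\ell_r\ell_1\cdots\ell_{i-1}$ is the repelling fixed point of its M\"obius action on $t$. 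The composite of positive (nonnegative integer) maps has its attracting fixed point in $(0,\infty)$ and its repelling one in $(-\infty,0)$.

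To localise further, I would use the first letter of the rotated word. Suppose it is $L$. The map is then $L\circ(\text{rest})$ acting on $t$, and I claim the repelling fixed point lies in $L^{-1}$ applied to the negative complement of $(0,\infty)$. Concretely, the repelling fixed point $t$ must satisfy that the rest of the word maps $t/(1+t)$ back into the negative reals. The sign pattern of a negative number $t$ under $t/(1+t)$ then gives $t\in(-1,0)$: if $t<-1$, then $t/(1+t)>0$, and the orbit would stay positive forever, which is impossible for a periodic point of a nontrivial composite. The $R$ case is analogous: $t\in(-1,0)$ would make $t+1>0$, so $t<-1$. Checking that no coordinate hits $t=-1$ exactly is immediate from the irrationality of $t$, which lies in $\Q(\lambda)\setminus\Q$.

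\emph{Step 2: the transition rules.} After an $L$-step from $t\in(-1,0)$, the new value is $t'=t/(1+t)$. We have $t'<-1$ iff $t\in(-1,-\tfrac12)$, and $t'\in(-1,0)$ iff $t\in(-\tfrac12,0)$. Combined with Step 1, which says the next letter is $R$ iff $t'<-1$, this gives the rule ``$L$ followed by $R$ iff $t\in(-1,-\tfrac12)$''. After an $R$-step from $t<-1$, the new value is $t+1$. This lies in $(-1,0)$ iff $t\in(-2,-1)$, so the next letter is $L$ iff $t\in(-2,-1)$. These are exactly the crossings of $-1$: downward at the last $L$ of an $L$-run and upward at the last $R$ of an $R$-run.

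\emph{Step 3: negative $R$-coordinates.} By \eqref{eq:propagated} at $\eta'$, an $R$-coordinate is $x=h'(u)h'(u')/h'(w)^2$ with $u=w-a$ and $u'=w+a$. Its sign is the sign of $h'(u)h'(u')$. At an $R$-step we have $a=c_2$, $w=c_1+c_2$, $u=c_1$ and $u'=c_1+2c_2$. Dividing by $h'(c_2)^2$, the sign is that of $t(t+2)$. With $t<-1$ this is negative iff $t\in(-2,-1)$, that is, iff the step is the last $R$ of its run by Step 2.

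The main obstacle is making Step 1 rigorous. I would handle it by identifying $t_{i-1}$ with the repelling fixed point of the rotated word, then arguing that a negative real periodic point cannot have an orbit that enters $(0,\infty)$, since all the maps preserve $(0,\infty)$.
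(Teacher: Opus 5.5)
Your proof is correct and follows essentially the paper's argument: negativity of the orbit; the observation that both maps preserve $(0,\infty)$, so a step landing there could never return to a negative value; the transition rules read off from $t/(1+t)$ and $t+1$; and the sign of $t(t+2)$ at $R$-steps, with irrationality of $t$ excluding the boundary values (you mention only $-1$, but the same remark disposes of $-\tfrac12$ and $-2$). The differences are cosmetic. You obtain $t_0<0$ from the orthogonality of $\eta'$ to the positive Perron vector rather than from the explicit formula $\eta'_1=\varphi_{21}/(\lambda^{-1}-\varphi_{11})$. The repelling-fixed-point framing is redundant once the invariance argument is in place. Finally, the composite should be written $(\text{rest})\circ L$, as your verbal description already implies.
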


\begin{proof}
$t_i=h'(c_1)/h'(c_2)$, with $h'$ the height for the contracting eigenvector $\eta'=(\eta'_1,\eta'_2)$, is the orbit of the initial ratio
$t_0=\eta'_1/\eta'_2$ (the columns of $M_0=1$ being $e_1,e_2$) under the M\"obius maps $L:t\mapsto t/(1+t)$ and $R:t\mapsto t+1$, periodic because
$\eta'\varphi=\lambda^{-1}\eta'$. Since $\varphi$ has positive entries and $\lambda^{-1}<\varphi_{11}$, one has
$\eta'_1=\varphi_{21}/(\lambda^{-1}-\varphi_{11})<0$, so $t_0<0$. Both maps send $(0,\infty)$ into itself, so the orbit never becomes positive, or it
could not return to $t_0$. Hence an $L$ never occurs at $t<-1$ (it would land in $(1,\infty)$) and an $R$ never at $t\in(-1,0)$ (it would land in
$(0,1)$). An $L$ at $t\in(-1,-\tfrac12)$ lands below $-1$, where another $L$ is forbidden; at $t\in(-\tfrac12,0)$ it lands in $(-1,0)$, where an $R$ is
forbidden. Symmetrically an $R$ at $t\in(-2,-1)$ lands in $(-1,0)$, forcing an $L$, and at $t<-2$ lands below $-1$, forcing an $R$. Finally an $R$-step
coordinate is negative iff $t\in(-2,0)$, which, since every $R$-step has $t<-1$, means $t\in(-2,-1)$. The case splits are exhaustive: $t_0\in\Q(\lambda)\smallsetminus\Q$, since $\tr^2\varphi-4$ is not a square, and integral M\"obius maps preserve irrationality, so no $t_i$ equals $-1$, $-\tfrac12$ or $-2$.
\end{proof}

By the definition of a word in Section~\ref{sub:word}, the $L$-runs and $R$-runs of the cyclic word are equal in number, so the $+\pi^2$ and $-\pi^2$ defects cancel exactly. No translation-number argument is needed:
periodicity and the invariance of $(0,\infty)$ suffice.

\begin{theorem}[The effective central charge]
\label{thm:ceff}
For every hyperbolic word, the Rogers identity $\sum_{\ell_i=L}L(s_i)=\sum_{\ell_i=R}L(s_i)$ holds at both eigenvectors (at the contracting
one with the extended dilogarithm, whose functional equations carry the defects above), and hence
\[
c_{\eff}=\#L .
\]
\end{theorem}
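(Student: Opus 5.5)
The plan has three stages: establish the Rogers identity at $V^*$, establish it at $\sigma V^*$, then convert the identity into $c_{\eff}=\#L$.

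\emph{The identity at the expanding eigenvector.} At $V^*$ every height is positive (Perron--Frobenius, as in the proof of Theorem~\ref{prop:propagated}), so every height ratio $t_i$ is positive. The two step identities derived from Rogers' duplication formula then apply with the ordinary Rogers dilogarithm on $(0,1)$. Summing them over the word gives the telescoping relation \eqref{eq:rogtel}. Since $h(\varphi v)=\lambda h(v)$, the ratio returns to its starting value, $t_r=t_0$. So the left side of \eqref{eq:rogtel} vanishes, which is the identity $\sum_{\ell_i=L}L(s_i)=\sum_{\ell_i=R}L(s_i)$.

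\emph{The identity at the contracting eigenvector.} We run the same telescoping with the extended dilogarithm $L=\operatorname{Re}R$ on $\R\smallsetminus\{0,1\}$. Each step identity now holds up to a locally constant defect, which is $0$ in the limits $t\to0^-$ and $t\to-\infty$, as recorded before Lemma~\ref{lem:crossings}. The defect is $+\pi^2$ at an $L$-step with $t\in(-1,-\tfrac12)$ and $-\pi^2$ at an $R$-step with $t\in(-2,-1)$. I would verify these constants by evaluating the jumps of $G(t)=2L(t/(1+t))$ at the points $t=-1,-\tfrac12,-2$, using the limits $-\pi^2/6$ and $\pi^2/3$ of $L$ at $\mp\infty$.

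By Lemma~\ref{lem:crossings}, the $+\pi^2$ defects occur exactly at the last $L$ of each $L$-run, and the $-\pi^2$ defects exactly at the last $R$ of each $R$-run. The $L$-runs and $R$-runs of the cyclic word are equal in number, so the total defect is zero. Periodicity gives $t_r=t_0$ again, so the identity holds at $\sigma V^*$ as well. The main obstacle is this defect bookkeeping: one must confirm the sign and size of each jump and check that no step can hit a branch point. The latter is excluded by the irrationality of $t_0$, noted in the proof of the lemma.

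\emph{From the identity to $c_{\eff}$.} By the Nahm-sum asymptotics recalled above (Zagier's convention, with the sublattice of index $2^{\#L}$ not affecting growth), $c_{\eff}=\sum_bL(\delta_b)/L(1)$ with $\delta=1-x$, evaluated at $V^*$. At an $L$-step $x=s$, so $\delta=1-s$ and $L(\delta)=L(1)-L(s)$ by reflection. At an $R$-step $x=1-s$, so $\delta=s$ and $L(\delta)=L(s)$. Hence
\[
\sum_bL(\delta_b)=\#L\cdot L(1)-\sum_{\ell_i=L}L(s_i)+\sum_{\ell_i=R}L(s_i)=\#L\cdot L(1),
\]
where the last equality uses the identity at $V^*$. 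This gives $c_{\eff}=\#L$.
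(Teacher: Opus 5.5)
Your proposal is correct and follows the paper's proof essentially step for step. It uses the telescoping \eqref{eq:rogtel} with $t_r=t_0$ at $V^*$, locates the $\pm\pi^2$ defects at the contracting eigenvector at the last letter of each run by Lemma~\ref{lem:crossings}, and cancels them because $L$-runs and $R$-runs are equal in number. It then applies the reflection identity to turn $\sum_bL(\delta_b)$ into $\#L\,L(1)$. Your planned check of the jump constants from the limits $-\pi^2/6$ and $\pi^2/3$ of $L$ at $\mp\infty$ does give $+\pi^2$ on $(-1,-\tfrac12)$ for $L$-steps and $-\pi^2$ on $(-2,-1)$ for $R$-steps, as the argument requires.
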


\begin{proof}
At $V^*$ the identity is \eqref{eq:rogtel} with $t_r=t_0$. At the contracting eigenvector, by Lemma~\ref{lem:crossings} an $L$-step carries a defect $+\pi^2$ exactly when it is the last $L$ of an $L$-run, and an $R$-step a defect $-\pi^2$ exactly when it is the last $R$ of an $R$-run. Each run therefore contributes one defect, and since there are as many $L$-runs as $R$-runs they cancel. For $c_{\eff}$: in the Nahm
variables $\delta_b=1-x_b$ of Proposition~\ref{prop:bethenahm}, with $x=s$ at $L$-steps and $x=1-s$ at $R$-steps, the reflection identity
$L(1-y)=L(1)-L(y)$ on $(0,1)$, which holds for all real $y\ne0,1$ for the extended $L=\operatorname{Re}R$ since $\operatorname{Re}R(y)+\operatorname{Re}R(1-y)=\pi^2/6$, gives at $V^*$
\[
\sum_bL(\delta_b)=\sum_{\ell_i=L}\bigl(L(1)-L(s_i)\bigr)+\sum_{\ell_i=R}L(s_i)=\#L\,L(1),
\]
so $c_{\eff}=\#L$. The half-index is a Nahm sum in $Q\mathfrak m$ over the sublattice $Q\Z^r$ of index $2^{\#L}$. This does not change the exponential growth: the saddle-point analysis of Proposition~\ref{prop:vstar} works directly in $\mathfrak m$, and only the constant prefactor depends on the lattice. In the same way $\sum_bL(x_b)=\sum_{\ell_i=L}L(s_i)+\sum_{\ell_i=R}\bigl(L(1)-L(s_i)\bigr)=\#R\,L(1)$: the Rogers sum
$c_\alpha=\sum_bL(x_b)/L(1)$ of the class of $V^*$ is $\#R$, and, by the identity at the contracting eigenvector, so is that of the class of
$\sigma V^*$.
\end{proof}

Read through the half-index dictionary, for $LR^n$ and $L^2R^n$ it agrees with $c-24h_{\min}=1$ and $2$ computed from the central charges and conformal weights proposed in
\cite[(3.13), (3.18)]{JS26}; for $L^mR^m$ it gives $c_{\eff}=m$.

\section{The Farey formula modulo 24}
\label{sec:farey24}

This section computes the central charge modulo $24$ from the Bethe data: $c\equiv\#L-6\#R\pmod{24}$ for every hyperbolic word
(Theorem~\ref{thm:fareygeneral}), with $L^mR^m$ worked out first (Theorem~\ref{thm:farey}). The physical input is the half-index dictionary
of \cite{GKL23}, stated as a hypothesis. It says that the half-index \eqref{eq:halfindex} is the character of the unit, up to a power of $q$, so that its growth
exponent is $c_{\eff}=c-24h_{\min}$. We take as part of this hypothesis that the object of minimal conformal weight is the Bethe vacuum $V^*$ at the dominant saddle. So $h_{\min}\equiv\theta_{V^*}\pmod1$, where
$\theta_{V^*}$ is the spin of $V^*$ minus the spin of the unit (Section~\ref{sec:bethe}). Hence $c\equiv c_{\eff}+24\theta_{V^*}\pmod{24}$. Proposition~\ref{prop:vstar} identifies the saddle with the vacuum $V^*$ through $z=e^Z$. That this vacuum, and not another vacuum of its class, carries $h_{\min}$ is part of the hypothesis. The vacua of a class differ in spin by elements of $\tfrac14\Z$, which would move $c$ by multiples of $6$. Since $c_{\eff}=\#L$ (Theorem~\ref{thm:ceff}), what remains is the spin $\theta_{V^*}$, a phase of
the fibering operator.

Here $c$ always means the central charge so determined, that of the full theory before the semion factor of Assumption~\ref{ass:F} is split off. The central charge of $\mathrm{TFT}_0$ alone differs from it by those of the pointed factors $U(1)_{\pm2}$ and $\Adec$ (Corollary~\ref{cor:arf}), and for that reason need not satisfy the congruence of this section. The central charge defined by $(ST)^3=e(c/8)S^2$ depends on the sign of $S$ modulo $8$ (Remark~\ref{rem:Ssign}); the $c$ here does not.

Section~\ref{sub:fareydiag} does this for $L^mR^m$. Two identifications are needed. First, the dominant
saddle of the half-index is the positive vacuum $V^*$ (Proposition~\ref{prop:vstar}); this is analysis and uses no dictionary. Second, under the
dictionary, the spins of the half-integer flux sectors are the spins of the sign flips of the unit. For odd trace this forces the unit to be
the all-plus vacuum of its class (Proposition~\ref{prop:flux}). With
the unit in the class of $\sigma V^*$, Theorem~\ref{thm:farey} gives $\theta_{V^*}\equiv-m/4$. Section~\ref{sub:fareygen} extends this to
every word. There the branch integers at the negative coordinates of $\sigma V^*$ sum to $\#R$, so $\theta_{V^*}\equiv-\#R/4$
(Theorem~\ref{thm:fareygeneral}). The section ends by comparing the formula with the central charges of \cite{JS26} and \cite{GKL23}.

Both theorems assume the dictionary and input~(b), that the unit lies in the class of $\sigma V^*$. Proposition~\ref{prop:fluxall} verifies the
flux-rule criterion at $\sigma V^*$ for every word.

\subsection{The diagonal family}
\label{sub:fareydiag}

\emph{The half-index, for the mathematical reader.} Gang--Kim--Lee \cite{GKL23} attach to the abelianised theory the $q$-series
\begin{equation}
I(q)=\sum_{\mathfrak m\in\Z_{\ge0}^r}\frac{q^{\frac12\mathfrak m^{\mathsf T}K\mathfrak m+c\cdot\mathfrak m}}{\prod_a(q)_{Q_a\mathfrak m_a}},
\qquad (q)_n=\prod_{i=1}^n(1-q^i),
\label{eq:halfindex}
\end{equation}
a Nahm sum over integer \emph{flux} vectors $\mathfrak m$, together with the sums $I_s(q)$ over the shifted lattices $\mathfrak m\in s+\Z^r_{\ge0}$,
$s=\tfrac12\varepsilon$ with $\varepsilon\in\{0,1\}^{\#L}$ supported on the charge-$2$ sites, the \emph{half-integer flux sectors}. Their
dictionary reads: $I$ is the character of the unit (the vacuum of the TQFT), up to a power of $q$, so its growth exponent is $c_{\eff}$; and
$I_s$ is the character of the object produced by the $\Z_2$ defect $\varepsilon$, whose conformal weight we call its \emph{flux spin}. With several charge-$2$ fields we take it to be the
minimum $E(s)$ of the exponent $\tfrac12\mathfrak m^{\mathsf T}K\mathfrak m+c\cdot\mathfrak m$ over the sector; this extends \cite{GKL23}, which has one charge-$2$ field, and agrees with it modulo $1$. On the Bethe side the same object
is the vacuum obtained from the unit by flipping the signs $\varepsilon$ of the charge-$2$ coordinates, and its spin (its \emph{flip spin}) is
read off the fibering operator $\mathcal F$.

\emph{The dominant saddle.} Put $q=e^{-\eta}$ and $Z=\eta\mathfrak m$. As $\eta\to0^+$ the summand of \eqref{eq:halfindex} is
$\exp\bigl(\eta^{-1}\mathcal V(Z)+o(\eta^{-1})\bigr)$, locally uniformly in $Z>0$, where
\[
\mathcal V(Z)=-\tfrac12Z^{\mathsf T}KZ+\sum_a\bigl(\Li_2(1)-\Li_2(e^{-Q_aZ_a})\bigr).
\]
A \emph{saddle} of the half-index is a critical point of $\mathcal V$. We identify it with the Bethe vacuum $z_a=e^{Z_a}$, since its
equations are Bethe equations (proof of Proposition~\ref{prop:vstar}). The \emph{dominant saddle} is the point where $\mathcal V$ attains
its maximum on the region $Z\ge0$ of the sum. By the saddle-point method it governs the growth of $\log I(q)$ as $q\to1^-$.

The theorem below rests on two identifications, both proved afterwards. The first, that the dominant saddle of
\eqref{eq:halfindex} is $V^*$, is pure analysis (Proposition~\ref{prop:vstar}). The second, that flux spins equal flip spins, uses the
dictionary (Proposition~\ref{prop:flux}).

\begin{theorem}[Farey formula mod $24$ for $L^mR^m$]
\label{thm:farey}
Assume the half-index dictionary of \cite{GKL23} as just stated, and input~(b): the unit lies in the class of $\sigma V^*$, with the unit taken
at $\sigma V^*$ itself, the vacuum selected by the flux rule (Proposition~\ref{prop:flux}; the only such vacuum for odd $m$). Then for every $m\ge3$ (the dominant saddle being $V^*$ by Proposition~\ref{prop:vstar}, with no assumption),
\[
\theta_{V^*}\equiv-\tfrac m4\pmod1,\qquad\text{hence}\quad c(L^mR^m)\equiv c_{\eff}+24\theta_{V^*}=m-6m=\#L-6\#R\ \ (\mathrm{mod}\ 24).
\]
\end{theorem}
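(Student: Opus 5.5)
The plan is to compute $\theta_{V^*}$ as the difference of spins \eqref{eq:spindef} between $V^*$ and the unit $\sigma V^*$. Both vacua have explicit coordinates in $\Q(d)$ (Theorem~\ref{thm:nahmclosed} and its Galois conjugate), so we evaluate $\operatorname{Re}\Psi$ of \eqref{eq:Fdef} at each. The formula $c\equiv c_{\eff}+24\theta_{V^*}$ then follows from the half-index dictionary and $c_{\eff}=\#L=m$ (Theorem~\ref{thm:ceff}). The identification of the dominant saddle with $V^*$ is Proposition~\ref{prop:vstar}, and the choice of the unit as the all-plus vacuum of its class is Proposition~\ref{prop:flux}. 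Both are taken as given.

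\emph{Splitting $\operatorname{Re}\Psi$.} We write $\operatorname{Re}\Psi$ as a dilogarithm part plus a quadratic part in the integers $n_b$. The log-form Bethe equations $Q_b\log(1-x_b)=-(KZ)_b+2\pi in_b$ let us substitute for $\log(1-x_b)$. Using the symmetry of $K$, we get
\[
\Psi=\sum_b\Li_2(x_b)+\tfrac12Z^{\mathsf T}KZ-2\pi i\,n\cdot Z .
\]
We then express this through the Rogers function $R(x)=\Li_2(x)+\tfrac12\log x\log(1-x)$.

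\emph{At $V^*$.} All logarithms are real, all $n_b=0$, and $\operatorname{Re}\Psi$ reduces to $\sum_bL(x_b)=\#R\,L(1)=\#R\pi^2/6$ (Theorem~\ref{thm:ceff}). This gives spin $\#R/24=m/24$.

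\emph{At $\sigma V^*$.} The charge-$2$ coordinates are positive at the all-plus choice. By Lemma~\ref{lem:crossings}, the only negative shapes are at $\mathcal E$, which for $L^mR^m$ is the single last $R$-site $R_m$ (there $x_{R_m}=1-\sigma(d^2)/\sigma(\mathcal Q_m)^2$, and $\sigma(\mathcal Q_m)=1-m/d\cdot$ is small). At that site $Z_{R_m}=\log z_{R_m}$ has imaginary part $\pm i\pi$. This forces nonzero branch integers $n_b$ at the sites coupled to $R_m$ by $K$: the fan column and the $L_1$ ramp and $L_m$ entries read off from \eqref{eq:Kdiag}. The real part of $\Psi$ then has three pieces. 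The first is the extended Rogers sum, equal to $\#R\,L(1)$ by the contracting-eigenvector half of Theorem~\ref{thm:ceff}. The second is $-\tfrac12\pi^2K_{R_mR_m}=-\pi^2 m$ from $\tfrac12Z^{\mathsf T}KZ$. The third is the cross term $\operatorname{Re}(-2\pi i\,n\cdot Z)=2\pi^2 n_{R_m}$. So the spin difference is
\[
\theta_{V^*}=\frac{\operatorname{Re}\Psi(V^*)-\operatorname{Re}\Psi(\sigma V^*)}{4\pi^2}
=\frac{K_{R_mR_m}}{8}-\frac{n_{R_m}}{2}+(\text{integers}) .
\]
With $K_{R_mR_m}=2m$, this gives $m/4$ modulo $\tfrac12$. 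The target is then $-m/4\pmod 1$.

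\emph{Main obstacle.} The hard step is pinning down $n_{R_m}$, and hence the sign and the $\tfrac12$, modulo $1$. The extended-dilogarithm defects of $\pm\pi^2$ also feed in here. I would compute the $n_b$ directly by taking imaginary parts of the log-form equations, using the single negative coordinate and $(KZ)_b$ for the specific \eqref{eq:Kdiag}. The expectation is that they sum to $\#R=m$ with $n_{R_m}$ odd exactly when needed to convert $+m/4$ into $-m/4$. Because the conclusion is only modulo $1$ and the result is periodic in $m$ modulo $4$, I would cross-check by direct numerical evaluation at $m=3,4,5,6$. This is also the stated route for general words, where the branch integers at $\mathcal E$ sum to $\#R$.
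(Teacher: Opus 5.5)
Your setup is the paper's. You use the log-form Bethe equations to rewrite $\operatorname{Re}\Psi$, take the Rogers sum to be $\#R\,L(1)$ at both $V^*$ and $\sigma V^*$, and read $\theta_{V^*}$ off the branch data at the single negative coordinate $R_m$. The paper also eliminates $\tfrac12Z^{\mathsf T}KZ$, writing $\Psi=\sum_b\tilde L(x_b)-\pi i\,n\cdot Z$, so only $\tfrac12\operatorname{Im}(n\cdot Z)$ survives in $\arg\mathcal F$. Your three pieces are equivalent, since $\tfrac12\operatorname{Re}(Z^{\mathsf T}KZ)=\tfrac12\sum_b\log|x_b|\log|1-x_b|-\tfrac12\pi^2K_{R_mR_m}$ by the real parts of the same equations.

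\textbf{The gap.} The proposal stops at $\theta_{V^*}\equiv m/4\pmod{\tfrac12}$, which for odd $m$ does not distinguish $-m/4$ from $+m/4$. The step that does is left as an ``expectation'' to be tested numerically. It closes in one line by the method you name. At $\sigma V^*$ every $Z_b$ is real except $Z_{R_m}$, which has $\operatorname{Im}Z_{R_m}=\pi$. Also $\log(1-x_{R_m})=\log d^2$ is real, because $x_{R_m}=1-d^2<0$. So the imaginary part of $\log(1-x_{R_m})+(KZ)_{R_m}=2\pi i\,n_{R_m}$ is $\pi K_{R_mR_m}=2\pi n_{R_m}$. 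Hence $n_{R_m}=\tfrac12K_{R_mR_m}=m$, and your formula gives $\theta_{V^*}=\tfrac{2m}8-\tfrac m2=-\tfrac m4$. Taking $\operatorname{Im}Z_{R_m}=-\pi$ flips the sign of $n_{R_m}$ as well and gives the same value.

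\textbf{Nothing else enters.} The extended-dilogarithm defects have already cancelled inside Theorem~\ref{thm:ceff}, so they do not interact with $n_{R_m}$. The branch integers at the $L$-sites, and the nonzero $n_{R_i}=i$ at the other $R$-sites, multiply real $Z_b$ and so drop out of $\operatorname{Re}(-2\pi i\,n\cdot Z)$. No numerical check at $m=3,\dots,6$ is needed; the argument is exact for every $m\ge3$.
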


\begin{proof}
Write $\mathcal F=\exp(-\Psi/2\pi i)$ as in \eqref{eq:Fdef} and use the Bethe equations in log form,
$Q_b\log(1-x_b)=-(KZ)_b+2\pi in_b$, the integers $n_b$ being those that make these the branches of $Z_b$ and $\log(1-x_b)$ used in evaluating $\Psi$ (Section~\ref{sec:bethe}). Changing $Z_b$ by $2\pi i$ changes the spin by an integer. The other choice of $\log(1-x)$ at $x>1$ changes $n_L$ by even integers and does not change $\theta$. Eliminating the quadratic form,
$\log\mathcal F=\tfrac i{2\pi}\sum_b\tilde L(x_b)+\tfrac12n\cdot Z$ with $\tilde L(x_b)=\Li_2(x_b)+\tfrac12\xi_b\log(1-x_b)$, $\xi_b=-Q_bZ_b$, so
$\arg\mathcal F=\tfrac1{2\pi}\operatorname{Re}\sum\tilde L+\tfrac12\operatorname{Im}(n\cdot Z)$.

At $V^*$, by Theorem~\ref{thm:nahmclosed}, all coordinates are real with $x\in(0,1)$, so $n=0$; they pair as $(x,1-x)$ and the reflection identity gives
$\sum L=m\pi^2/6$, so $\arg\mathcal F(V^*)=\pi m/12$. At the unit the coordinates again pair as $(x,1-x)$ --- the positive values are permuted by the Galois conjugation
$\sigma(\delta_{L_j})=\delta_{L_{m-j}}$ of Theorem~\ref{thm:nahmclosed} --- one of these pairs being $(x_{R_m},x_{L_m})$, with $x_{R_m}<0$ the single negative coordinate of
Lemma~\ref{lem:crossings} and $x_{L_m}>1$. The branch integers are those of the proof of Proposition~\ref{prop:flux}, $n_L=(m,0,\dots,0)$, and at the
negative coordinate $n_{R_m}=\tfrac12K_{R_mR_m}=m$, as in the proof of Theorem~\ref{thm:fareygeneral} with $\mathcal E=\{R_m\}$; the
difference $\arg\mathcal F(V^*)-\arg\mathcal F(\text{unit})$ of the arguments at $V^*$ and at the unit then gives $\theta_{V^*}\equiv-m/4\pmod1$. With $c_{\eff}=m$ from Theorem~\ref{thm:ceff} and $c\equiv c_{\eff}+24\theta_{V^*}$, the
formula follows.
\end{proof}

We now prove the two identifications used. The first uses no physical input. The second assumes the dictionary, taken to apply with
several charge-$2$ fields.

\begin{proposition}[The dominant saddle of the half-index is $V^*$]
\label{prop:vstar}
The function $\mathcal V$ has exactly one critical point with $Z>0$, and it is the maximum of $\mathcal V$ on $Z\ge0$. So the dominant saddle of
the half-index \eqref{eq:halfindex} exists, and it is the positive vacuum $V^*$.
\end{proposition}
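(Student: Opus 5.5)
The plan is to reduce to the strictly convex functional already used in Theorem~\ref{prop:propagated}. First I compute the critical-point equations of $\mathcal V$. With $\frac{d}{dZ}\Li_2(e^{-QZ})=Q\log(1-e^{-QZ})$, one gets
\[
\partial_{Z_b}\mathcal V=-(KZ)_b-Q_b\log\bigl(1-e^{-Q_bZ_b}\bigr).
\]
So a critical point with $Z>0$ satisfies $Q_b\log(1-x_b)=-(KZ)_b$ with $x_b=e^{-Q_bZ_b}\in(0,1)$. Exponentiating gives $(1-x_b)^{Q_b}=\prod_a z_a^{-K_{ab}}=\prod_ax_a^{K_{ab}/Q_a}$, which is \eqref{eq:bethex} with real logarithms. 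Conversely, the positive-root form $1-x_b=\prod_ax_a^{(\Kraw)_{ab}}$ of Proposition~\ref{prop:bethenahm} is exactly this log equation. Hence critical points in $Z>0$ correspond bijectively to solutions $x\in(0,1)^r$ with $z_a=e^{Z_a}>1$. By Theorem~\ref{prop:propagated} there is exactly one such solution, and its lift is $V^*$.

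Next I show that $-\mathcal V$ is strictly convex on $Z>0$, and convex and continuous on $Z\ge0$. Substitute $t_a=Q_aZ_a$, so that $\tfrac12Z^{\mathsf T}KZ=\tfrac12t^{\mathsf T}\Kraw t$ because $K=Q\Kraw Q$. Then
\[
-\mathcal V=\tfrac12t^{\mathsf T}\Kraw t+\sum_a\Li_2(e^{-t_a})-r\,\Li_2(1).
\]
This is the functional of Theorem~\ref{prop:propagated} up to a constant. The quadratic part is strictly convex because $\Kraw$ is positive definite (Theorem~\ref{thm:Kformula}). Each term $\Li_2(e^{-t})$ has second derivative $e^{-t}/(1-e^{-t})>0$ for $t>0$, so it is convex. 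It is also continuous on $[0,\infty)$, with value $\Li_2(1)$ at $t=0$. Therefore $-\mathcal V$ is continuous and strictly convex on the closed orthant.

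The positive-definite quadratic term gives coercivity: $-\mathcal V\to+\infty$ as $|Z|\to\infty$ in the orthant, since the dilogarithm terms are bounded between $0$ and $\Li_2(1)$. Hence $-\mathcal V$ attains its minimum on $Z\ge0$, and strict convexity makes the minimiser unique.

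The remaining step, which I expect to be the main obstacle, is to show that this minimiser lies in the open orthant rather than on a face $Z_a=0$. Near a face, $\partial_{Z_a}(-\mathcal V)=(KZ)_a+Q_a\log(1-e^{-Q_aZ_a})$. The logarithm tends to $-\infty$ as $Z_a\to0^+$, while $(KZ)_a$ stays bounded on compact sets. So at any boundary point with $Z_a=0$, increasing $Z_a$ slightly strictly decreases $-\mathcal V$. Equivalently, the one-sided directional derivative into the interior is $-\infty$. Consequently no boundary point is a minimiser, and the minimiser is an interior critical point.

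By the first step that critical point is $V^*$. It is the unique critical point in $Z>0$, since a critical point of a strictly convex function is its global minimum on the convex set. It is therefore the maximum of $\mathcal V$ on $Z\ge0$, and hence the dominant saddle.
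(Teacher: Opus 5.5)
Your proposal is correct and follows the paper's proof essentially step for step. Both arguments use the same gradient and get strict concavity of $\mathcal V$ from the positive definiteness of $K$; you pass to $t=QZ$ and the functional of Theorem~\ref{prop:propagated}, where the paper writes the Hessian $-K-\operatorname{diag}(Q_a^2x_a/(1-x_a))$ directly. The rest is the same: coercivity from the bounded $\Li_2$ terms, the logarithmic blow-up of $\partial_{Z_a}$ at $Z_a=0^+$ forcing an interior maximiser, and identification with $V^*$ through the uniqueness of the solution in $(0,1)^r$.
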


\begin{proof}
With $q=e^{-\eta}$ and $Z_a=\eta\mathfrak m_a$, the leading derivative of the exponent of the summand of \eqref{eq:halfindex} is
$-(KZ)_a-Q_a\log(1-e^{-Q_aZ_a})$. So the saddle equations are $Q_a\log(1-x_a)=-(KZ)_a$ with $x_a=e^{-Q_aZ_a}$: the Bethe equations at $\nu=-1$ in
log form with every branch integer zero. The function $\mathcal V$ has Hessian $-K-\operatorname{diag}(Q_a^2x_a/(1-x_a))$, which is negative definite because $K$ is positive definite (Theorem~\ref{thm:Kformula}). So $\mathcal V$ is strictly concave on $Z>0$. It is continuous on $Z\ge0$, and $\mathcal V\to-\infty$ as $|Z|\to\infty$ in $Z\ge0$, because $K$ is positive definite and the $\Li_2$ terms are bounded. So $\mathcal V$ attains its maximum on $Z\ge0$. Its derivative in $Z_a$ tends to $+\infty$ as $Z_a\to0^+$, so the maximiser is interior. By strict concavity it is the unique critical point. Hence $z_a=e^{Z_a}>1$ and $x_a\in(0,1)$ for all $a$. The Bethe equations have a unique solution with all $x_a\in(0,1)$ (Theorem~\ref{prop:propagated}), which is the positive class, and within the class $z_a>1$ selects exactly one vacuum. That the dominant saddle
governs the asymptotics is the standard saddle-point analysis of a Nahm sum with positive-definite $K$ \cite[\S3]{Nahm04}, \cite[Ch.~II]{Zagier07}.
\end{proof}

\begin{proposition}[The flux rule at the level of spins]
\label{prop:flux}
\textup{(i)} Let $\varphi$ be any hyperbolic word and $V$ a Bethe vacuum. Let $n=n(V)$ be the vector of branch integers of the log-form Bethe equations $Q_b\log(1-x_b)=-(KZ)_b+2\pi in_b$ at $V$. It depends on the vacuum, not only on the word. Write $n_L$ and $c_L$ for the restrictions of $n$ and of the charge vector $c$ to the $L$-sites. The flip of the charge-$2$ coordinates in $\varepsilon\in\{0,1\}^{\#L}$, the indicator of the flipped sites, changes the spin by $\tfrac18\varepsilon^{\mathsf T}K_{LL}\varepsilon+\tfrac12n_L\cdot\varepsilon\pmod1$, and
this equals the spin of the half-integer flux sector $s=\tfrac12\varepsilon$ for every $\varepsilon$ iff $n_L\equiv c_L\pmod2$.
\textup{(ii)} For $\varphi=L^mR^m$, $m\ge2$, the criterion holds at the all-plus vacuum of the class of $\sigma V^*$, and for odd $m$ the all-plus vacuum is the
only reference vacuum in the unit class for which it holds.
\end{proposition}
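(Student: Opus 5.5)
Part (i) is a direct computation of two spins, and part (ii) is a check on the explicit coordinates of $\sigma V^*$ for $L^mR^m$.

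\emph{Flip spin.} Use the representation from the proof of Theorem~\ref{thm:farey}:
\[
\arg\mathcal F=\tfrac1{2\pi}\operatorname{Re}\sum_b\tilde L(x_b)+\tfrac12\operatorname{Im}(n\cdot Z),\qquad \xi_b=-Q_bZ_b .
\]
Flipping the sign of $z_\ell$ at the charge-$2$ sites $\varepsilon$ leaves every shape $x_b$ unchanged. It shifts $Z_\ell\mapsto Z_\ell+i\pi\varepsilon_\ell$, so $Z\mapsto Z+i\pi\varepsilon$ (extended by zero on the $R$-sites). The log-form Bethe equations $Q_b\log(1-x_b)=-(KZ)_b+2\pi in_b$ then force the new branch integers to be $n'=n+\tfrac12K\varepsilon$. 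This is integral because every entry of $K$ is even (Theorem~\ref{thm:Kformula}).

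We then substitute into $\Psi$ directly; this is safer than working with the reduced form. The terms $\Li_2(x_b)$ do not change. The term $-\sum_bQ_bZ_b\log(1-x_b)$ changes by $-i\pi\sum_\ell 2\varepsilon_\ell\log(1-x_\ell)$, where $\log(1-x_\ell)$ is fixed by the shape. The quadratic term $-\tfrac12Z^{\mathsf T}KZ$ changes by
\[
-i\pi\,\varepsilon^{\mathsf T}KZ+\tfrac{\pi^2}2\,\varepsilon^{\mathsf T}K_{LL}\varepsilon .
\]
Using $(KZ)_\ell=-2\log(1-x_\ell)+2\pi in_\ell$, the $\log(1-x)$ terms cancel. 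What remains is the real shift $2\pi^2 n_L\cdot\varepsilon+\tfrac{\pi^2}2\varepsilon^{\mathsf T}K_{LL}\varepsilon$ in $\operatorname{Re}\Psi$. Dividing by $4\pi^2$ as in \eqref{eq:spindef} gives the claimed change $\tfrac18\varepsilon^{\mathsf T}K_{LL}\varepsilon+\tfrac12n_L\cdot\varepsilon$. The signs still need to be checked against the branch convention fixed after \eqref{eq:Fdef}.

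\emph{Flux spin.} Write $\mathfrak m=\tfrac12\varepsilon+\mathfrak m'$ with $\mathfrak m'\in\Z^r_{\ge0}$. Modulo $1$, the exponent $\tfrac12\mathfrak m^{\mathsf T}K\mathfrak m+c\cdot\mathfrak m$ equals
\[
\tfrac18\varepsilon^{\mathsf T}K_{LL}\varepsilon+\tfrac12c_L\cdot\varepsilon+\bigl(\tfrac12\varepsilon^{\mathsf T}K\mathfrak m'\bigr)+\bigl(\tfrac12\mathfrak m'^{\mathsf T}K\mathfrak m'+c\cdot\mathfrak m'\bigr).
\]
The last two brackets are integers because $K$ is even. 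So every exponent in the sector, and in particular the minimum $E(s)$, is congruent to $\tfrac18\varepsilon^{\mathsf T}K_{LL}\varepsilon+\tfrac12c_L\cdot\varepsilon$, which is the flip spin with $n_L$ replaced by $c_L$. The two spins agree for all $\varepsilon$ if and only if $\tfrac12(n_L-c_L)\cdot\varepsilon\in\Z$ for every $\varepsilon$. Taking $\varepsilon$ to be the unit vectors, this holds if and only if $n_L\equiv c_L\pmod 2$. This proves (i).

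\emph{Part (ii).} We compute $n_L$ at the all-plus vacuum of the Galois-conjugate class of $L^mR^m$. Theorem~\ref{thm:nahmclosed} and its Galois action give the shapes explicitly: every $L$-shape $x_{L_j}=\sigma(d^2/\mathcal Q_j^2)>0$, with $x_{L_m}>1$, and the only negative coordinate is $x_{R_m}$ (Lemma~\ref{lem:crossings}). At the all-plus lift we take real $Z_L$. Then $n_\ell=\tfrac1{2\pi i}\bigl(2\log(1-x_\ell)+(KZ)_\ell\bigr)$ comes only from imaginary parts. These are $i\pi$ from $Z_{R_m}=\log(1/x_{R_m})$, entering $(KZ)_\ell$ through $K_{\ell R_m}$, and $\pm i\pi$ from $\log(1-x_{L_m})$.

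From the ramp, $K_{L_1R_m}=2m$; from the last $L$ before the run, $K_{L_mR_m}=-2$; and $K_{L_jR_m}=0$ otherwise. For $L_m$, the $-2\cdot i\pi$ cancels against $2\log(1-x_{L_m})$ on the matching branch, which is the branch freedom noted in the proof of Theorem~\ref{thm:farey}. This yields $n_L=(m,0,\dots,0)$, while Theorem~\ref{thm:Kformula} gives $c_L=(m,0,\dots,0)$, so the criterion holds.

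For uniqueness with $m$ odd: any other reference vacuum in the class is a flip $\varepsilon'\ne0$ of this one, and its branch integers are $n_L+\tfrac12K_{LL}\varepsilon'$. Modulo $2$, $\tfrac12K_{LL}$ is $2+m$ at $L_1$, $2$ at the other diagonal entries, and $-1$ at each adjacency of the $m$-cycle. So mod $2$ it equals $\operatorname{diag}(1,0,\dots,0)+\Adj(C_m)$. The main obstacle is to show this matrix is invertible over $\F_2$ for odd $m$, so that no $\varepsilon'\ne0$ preserves the parity. The plan is to expand the determinant along the first row: $\det\Adj(C_m)\equiv0$ for $m$ odd, since the all-ones vector is in the kernel, and the cofactor is the adjacency determinant of a path of $m-1$ vertices, which is $\equiv1$ because $m-1$ is even. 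Hence the determinant is $1$ over $\F_2$, and uniqueness follows.
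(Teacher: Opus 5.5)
Your proof is correct and follows the paper's argument. The flip shifts $Z$ by $i\pi\varepsilon$ with the shapes fixed, and the log-form equations reduce the change of $\operatorname{Re}\Psi$ to $2\pi^2n_L\cdot\varepsilon+\tfrac{\pi^2}{2}\varepsilon^{\mathsf T}K_{LL}\varepsilon$. Evenness of $K$ makes every exponent in the sector $\tfrac12\varepsilon+\Z^r_{\ge0}$ congruent to $\tfrac18\varepsilon^{\mathsf T}K_{LL}\varepsilon+\tfrac12c_L\cdot\varepsilon$. In (ii) only $Z_{R_m}$ and $\log(1-x_{L_m})$ are complex, which gives $n_L\equiv(m,0,\dots,0)=c_L$.

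Your direct cofactor expansion proving that $E_{LL}$ is invertible over $\F_2$ is the argument of Proposition~\ref{prop:spinq}(ii), used in place of citing Lemma~\ref{lem:parity}. Your caveat about signs is unnecessary, since the shift $Z\mapsto Z-i\pi\varepsilon$ gives the same result modulo $1$.
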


\begin{proof}
(i) Flipping the charge-$2$ fields in $\varepsilon$ sends $Z_a\to Z_a+i\pi\varepsilon_a$ with $x$ unchanged. Using the Bethe log equations with branch
integers $n_b$, the fibering operator changes by $\Delta\log\mathcal F=i\pi\,\varepsilon\cdot n_L+\tfrac{i\pi}4\varepsilon^{\mathsf T}K_{LL}\varepsilon$,
where $K_{LL}$ is the $L$-block of $K$ (the submatrix on the $L$-sites), $n_L$ the restriction to the $L$-sites of the vector $n$ of branch
integers of the log-form Bethe equations, and $c_L$ that of the charge vector $c$,
so the flip spin is $\tfrac18\varepsilon^{\mathsf T}K_{LL}\varepsilon+\tfrac12n_L\cdot\varepsilon\pmod1$. For $\mathfrak m=s+k$, $k\in\Z^r_{\ge0}$, the exponent changes by $\tfrac12k^{\mathsf T}Kk+k^{\mathsf T}Ks+c\cdot k\in\Z$, since $K$ is even and $s\in\tfrac12\Z^r$. Every term of $I_s$ has nonnegative coefficients, and the factor $1/\prod_a(q)_{Q_a\mathfrak m_a}$ has constant term $1$, so nothing cancels. Hence the leading power of $I_s$ is $\equiv E(s)\pmod1$ wherever the minimum lies, with
$E(s)\equiv\tfrac18\varepsilon^{\mathsf T}K_{LL}\varepsilon+\tfrac12c_L\cdot\varepsilon\pmod1$. So the two agree for all $\varepsilon$ iff $n_L\equiv c_L\pmod2$.

(ii) At the all-plus unit of $L^mR^m$ only $Z_{R_m}$ and $\log(1-x_{L_m})$ are complex; the $L_1$ equation gives $n_{L_1}=m$ from $K_{R_mL_1}=2m$, the middle ones $0$,
and $L_m$ gives $0$, the $i\pi$ from $\log(1-d^2)$ cancelling $K_{R_mL_m}=-2$. So $n_L=(m,0,\dots,0)$, which is $c_L$ because $c$ is $m$ at the first $L$ after the $R$-run and $0$ at the other $L$'s (Theorem~\ref{thm:Kformula}); thus $n_L=c_L$ exactly. Any other reference, by $\varepsilon$, changes $n_L$ by $E_{LL}\varepsilon$, where $E_{LL}=\tfrac12K_{LL}$, which is nonzero modulo $2$ for
every $\varepsilon\ne0$ exactly when $E_{LL}$ is invertible over $\F_2$, that is, for odd $m$ (Lemma~\ref{lem:parity}).
\end{proof}

\begin{proposition}[The flux-rule criterion at $\sigma V^*$]
\label{prop:fluxall}
For every hyperbolic word, the branch integers $n_L$ of the log-form Bethe equations at $\sigma V^*$ (notation of Proposition~\ref{prop:flux}) equal the charge vector $c_L$ exactly.
\end{proposition}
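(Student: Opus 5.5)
We compute the branch integers directly from the closed form of $\sigma V^*$ in Theorem~\ref{prop:propagated}, following the proof of Proposition~\ref{prop:flux}(ii) site by site. At $\sigma V^*$ all $L$-step shapes $x_L=s'=h'(a)^2/h'(w)^2$ are positive, and we take $z_L=|h'(w)/h'(a)|>0$, so $Z_L$ is real. By Lemma~\ref{lem:crossings}, every $L$-step has $t\in(-1,0)$. The $R$-step coordinates $x_R=1-s'$ are negative exactly on the set $\mathcal E$ of last $R$'s of the $R$-runs. The quantity $\log(1-x_L)=\log(1-s')$ is complex exactly where $s'>1$, i.e.\ where $|h'(a)|>|h'(w)|$; this happens at the last $L$ of each $L$-run, where $t\in(-1,-\tfrac12)$ and the run's height chain changes sign. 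So only three kinds of terms can contribute an imaginary part to the log-form equation $Q_\ell\log(1-x_\ell)=-(KZ)_\ell+2\pi in_\ell$ at an $L$-site $\ell$: the $i\pi$ in $Z_R$ for $R\in\mathcal E$ (with $z_R=x_R^{-1}<0$), and the $i\pi$ in $\log(1-x_\ell)$ at a last $L$.

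Next we read off $n_\ell$ from the rows of $K$ in Theorem~\ref{thm:Kformula}. With the principal branch $Z_R=\log|z_R|+i\pi$ at $R\in\mathcal E$, the imaginary part of $-(KZ)_\ell$ is $-\pi\sum_{R\in\mathcal E}K_{R\ell}$. For each $R$-run $\rho$ of length $b$, its last $R$ couples to $L^+(\rho)$ with the ramp entry $2b$ and to $L^-(\rho)$ with $-2$, and to no other $L$. For a middle $L$, both sides of the equation are real, so $n_\ell=0=c_\ell$. For $\ell=L^+(\rho)$ that is not also some $L^-$, the right side gives $-2\pi i b$, so $n_\ell=b$ (up to the integer ambiguity of $Z_R$, which shifts $n$ by even multiples of $K_{R\ell}/2$); we fix the branch as in Proposition~\ref{prop:flux}(ii), and this gives $n_\ell=b=c_\ell$. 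For $\ell=L^-(\rho')$ that is a last $L$, the $+2\pi i$ from the $-2$ coupling cancels against $Q_\ell\cdot i\pi=2\pi i$ on the left, with the sign of $\operatorname{Im}\log(1-x_\ell)$ chosen as in the $L^mR^m$ computation. This gives $n_\ell=0=c_\ell$. A singleton $L$ is both $L^+(\rho)$ and $L^-(\rho')$, and the two contributions add to give $n_\ell=b$. The wrap-around cases $\#L=1,2$ follow by linearity, exactly as in the proof of Theorem~\ref{prop:propagated}.

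The main obstacle is the branch bookkeeping. The claim is an \emph{exact} equality, but $n$ depends on the choices of $\operatorname{Im}Z_R=\pm\pi$ and of $\operatorname{Im}\log(1-x_\ell)=\pm\pi$. We must therefore specify the convention, namely the one used in evaluating $\Psi$ (Section~\ref{sec:bethe}), consistently with Proposition~\ref{prop:flux}(ii). We must also verify that, under this convention, the cancellation at each $L^-$ is exact rather than off by $\pm2$. I would first fix the convention by matching the $L^mR^m$ case, where $n_L=(m,0,\dots,0)$. I would then check that the same local choice at each run is forced by continuity along the Galois-conjugate chain, using that the sign change of the height occurs precisely at the last letter of each run (Lemma~\ref{lem:crossings}).
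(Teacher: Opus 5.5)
Your proposal is correct and is essentially the paper's proof. You locate the only complex quantities via Lemma~\ref{lem:crossings} ($Z_a$ on $\mathcal E$, and $\log(1-x_\ell)$ at the last $L$ of each $L$-run), fix both imaginary parts to $+\pi$, and compare imaginary parts in each $L$-row using the ramp entry $2b$ and the $-2$ coupling of the last $R$ of each run, which gives $n_\ell=[\ell\text{ last in its run}]+\tfrac12\sum_{a\in\mathcal E}K_{a\ell}=c_\ell$. The paper simply stipulates these two branches, so the continuity argument in your last paragraph is not needed: your case-by-case computation already shows that, under that convention, the cancellation at each last $L$ is exact.
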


\begin{proof}
At $\sigma V^*$ the only complex coordinates are the $R$-run ends $\mathcal E$ of
Lemma~\ref{lem:crossings} ($z<0$) and the last $L$ of each $L$-run ($x'>1$, so $\log(1-x')$ carries $i\pi$). Fix an arbitrary $L$-site $p$. In its equation
$2\log(1-x_p)=-(KZ)_p+2\pi in_p$, with the branches $\operatorname{Im}\log(1-x'_p)=+\pi$ at the last $L$ of each $L$-run and $\operatorname{Im}Z_a=+\pi$ on $\mathcal E$, the imaginary parts are $2\pi[p\text{ last in its run}]$ on the left and $-\pi\sum_{a\in\mathcal E}K_{ap}+2\pi n_p$ on the right. Here $K_{ap}=2b$ if $p$ is the first $L$ after an $R$-run of length $b$ ending at $a$, and $K_{ap}=-2$ if $p$ is the last $L$ before that run; a singleton $L$-run has both. Hence
\[
n_p=[p\text{ last in its run}]+\tfrac12\sum_{a\in\mathcal E}K_{ap}=b\cdot[p\text{ first after a run of length }b]=c_p
\]
exactly, which is that criterion.
\end{proof}

The criterion does not by itself single out the class, since changing the reference within another
class shifts the parity of $n_L$ by $K_{LL}\varepsilon_0/2$.

\emph{What Theorem~\ref{thm:farey} rests on} is therefore the dictionary of \cite{GKL23} --- proposed there and tested for $LR^{k-2}$, $3\le k\le11$, and taken here, as part of the dictionary, to apply with
several charge-$2$ fields --- together with the saddle-point asymptotics of Nahm sums, and input~(b).

\subsection{Every hyperbolic word}
\label{sub:fareygen}

\begin{theorem}[Farey formula mod $24$ for every hyperbolic word]
\label{thm:fareygeneral}
Assume (a) the half-index dictionary of \cite{GKL23}, as in Theorem~\ref{thm:farey}, and (b) the vacuum input of Section~\ref{sub:status}:
that the Galois conjugate of $V^*$ lies in the unit class, the class of $\sigma V^*$, with the unit taken at $\sigma V^*$ itself (the vacuum of the unit
class selected by the flux rule, Proposition~\ref{prop:fluxall}; the only such vacuum when $\tr\varphi$ is odd, Lemma~\ref{lem:parity}). Then for
every hyperbolic word,
\[
c\equiv\#L-6\#R\pmod{24};
\]
here $c_{\eff}=\#L$ is proved for every hyperbolic word, without (a) and (b) (Theorem~\ref{thm:ceff}).
\end{theorem}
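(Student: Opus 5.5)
The plan follows the proof of Theorem~\ref{thm:farey}, with the explicit diagonal data replaced by the propagated eigenvector of Theorem~\ref{prop:propagated} and the crossing structure of Lemma~\ref{lem:crossings}. Under (a), $c\equiv c_{\eff}+24\theta_{V^*}\pmod{24}$. Here the dominant saddle is $V^*$ by Proposition~\ref{prop:vstar}, and $c_{\eff}=\#L$ by Theorem~\ref{thm:ceff}. So it suffices to prove $\theta_{V^*}\equiv-\#R/4\pmod1$, where $\theta_{V^*}=\operatorname{spin}(V^*)-\operatorname{spin}(\sigma V^*)$ and the unit is placed at $\sigma V^*$ by (b) and Proposition~\ref{prop:fluxall}.

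First I would use the log-form Bethe equations $Q_b\log(1-x_b)=-(KZ)_b+2\pi in_b$ to eliminate the quadratic term of $\Psi$ in \eqref{eq:Fdef}. This gives
\[
\arg\mathcal F=\tfrac1{2\pi}\operatorname{Re}\sum_b\tilde L(x_b)+\tfrac12\operatorname{Im}(n\cdot Z),
\]
exactly as in the proof of Theorem~\ref{thm:farey}. At $V^*$ all $x_b\in(0,1)$, so $n=0$. The real part is then $\sum_bL(x_b)=\#R\,L(1)$ by the last line of the proof of Theorem~\ref{thm:ceff}, which gives $\arg\mathcal F(V^*)/2\pi=\#R/24$.

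At $\sigma V^*$ I would again take the real part of the dilogarithm sum. By the identity at the contracting eigenvector, the extended dilogarithm sum $\sum_bL(x_b)$ also equals $\#R\,L(1)$. I must check that $\operatorname{Re}\tilde L$ equals the extended $L=\operatorname{Re}R$ at each coordinate. This needs a careful branch bookkeeping at two kinds of coordinates: the negative $R$-coordinates on $\mathcal E$, and the $L$-coordinates with $x'>1$ at the last $L$ of each run. There the term $\tfrac12\xi_b\log(1-x_b)$ may differ from $\tfrac12\log x\log(1-x)$ by $\pi^2$-multiples. Any such discrepancy would add to the phase term.

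Next I would compute the phase term $\tfrac12\operatorname{Im}(n\cdot Z)$ at $\sigma V^*$. The imaginary parts of $Z$ are $\pi$ exactly on $\mathcal E$, with $0$ at the $L$-sites in the all-plus choice. So this term is $\tfrac\pi2\sum_{a\in\mathcal E}n_a$. I would compute $n_a$ for $a\in\mathcal E$ from the $R$-row of $K$, in the same way Proposition~\ref{prop:fluxall} treats the $L$-rows. For the last $R$ of a run of length $b$, the imaginary parts contributing to $(KZ)_a$ come from $\operatorname{Im}Z_a$ times $K_{aa}=2b$. Other runs' ends do not couple, since $R$'s in different runs do not couple and the $L$-entries carry no imaginary part. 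The left side $\log(1-x_a)$ is real, because $1-x_a>1$. Hence $n_a=\tfrac12K_{aa}=b$, and $\sum_{a\in\mathcal E}n_a=\sum_\rho b_\rho=\#R$. This gives a phase $\pi\#R/2$, that is, a spin contribution of $\#R/4$. I would cross-check it against the diagonal case, where $n_{R_m}=m$.

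Finally I would subtract the two arguments to obtain $\theta_{V^*}\equiv-\#R/4\pmod1$, up to the dilogarithm-branch corrections. I would then conclude $c\equiv\#L-6\#R\pmod{24}$.

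The main obstacle is the branch bookkeeping of the previous paragraphs. It requires showing that the $\pi^2$ discrepancies between $\operatorname{Re}\tilde L$ and the extended $L$ at the last $L$'s and on $\mathcal E$ cancel in pairs, mirroring the defect cancellation of Lemma~\ref{lem:crossings}. Alternatively they might contribute integers only. I would first try pairing each $R$-run end with the last $L$ of the preceding $L$-run, as in the diagonal case with $(x_{R_m},x_{L_m})$.
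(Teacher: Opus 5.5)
Your plan is the paper's proof: eliminate the quadratic term of $\Psi$, use the exact Rogers sum $\#R\,L(1)$ at both identity vacua from Theorem~\ref{thm:ceff}, and read $\theta_{V^*}$ off the branch term, with $n_a=\tfrac12K_{aa}=b$ on $\mathcal E$ so that $\sum_{a\in\mathcal E}n_a=\#R$. The branch bookkeeping you single out as the main obstacle is vacuous: at $\sigma V^*$ with all $z_L>0$, $\xi_b=-Q_bZ_b$ is non-real only on $\mathcal E$, where $\log(1-x_b)$ is real, and $\log(1-x_b)$ is non-real only at the last $L$ of each $L$-run, where $\xi_b$ is real; so $\operatorname{Re}\tilde L(x_b)$ equals the extended $L(x_b)=\operatorname{Re}R(x_b)$ exactly at every site, and no pairing or cancellation argument is needed.
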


\begin{proof}
By Theorem~\ref{thm:Kformula} the Farey pipeline produces the run-structure $K$ and $c$. By Theorem~\ref{prop:propagated} the closed form is the
positive Nahm solution and, with the contracting eigenvector, gives $\sigma V^*$; the Rogers identity holds at both eigenvectors by
Theorem~\ref{thm:ceff}, so $c_{\eff}=\#L$. In the notation of the proof of Theorem~\ref{thm:farey}, $\operatorname{Re}\tilde L(x_b)=L(x_b)$ is the Rogers
dilogarithm, extended as in Theorem~\ref{thm:ceff} (at the negative coordinates $\tilde L$ also has an imaginary part, which the argument does not use), so $\operatorname{Re}\sum_b\tilde L(x_b)=\#R\,L(1)$ at both $V^*$ and $\sigma V^*$ by that theorem,
and $\theta_{V^*}$ is carried by the branch term $\tfrac12\operatorname{Im}(n\cdot Z)$ alone; at $V^*$ it vanishes, and at $\sigma V^*$ we take
$\operatorname{Im}Z_b=+\pi$ at the negative coordinates. By the run-end structure, at the last $R$ of a run of length $b$ one has $K_{bb}=2\min(b,b)=2b$ and runs do
not couple, so with $\mathcal E=\{\text{last }R\text{ of each }R\text{-run}\}$ the set of negative coordinates of Lemma~\ref{lem:crossings} and $K_{\mathcal E\mathcal E}$ the
block of $K$ on those sites, $\mathbf1^{\mathsf T}K_{\mathcal E\mathcal E}\mathbf1=2\#R$ exactly; hence
$\sum_{b\in\mathcal E}n_b=\tfrac12\mathbf1^{\mathsf T}K_{\mathcal E\mathcal E}\mathbf1=\#R$ (the $L$-coordinates have real $Z$, so they do not
enter the imaginary part), giving $\theta_{V^*}\equiv-\tfrac1{2\pi}\cdot\tfrac12\cdot\pi\#R=-\#R/4$ and $c\equiv\#L+24(-\#R/4)=\#L-6\#R$.
\end{proof}

For $LR^n$ and $L^2R^n$ the formula agrees with the boundary central charges $c=1-6n$ and $c=2-6n$
of \cite[(3.13), (3.18)]{JS26}; the theorem gives only the residue modulo $24$, and the integers agree as well. For $LR^{k-2}$ it is the $c\equiv-(6k+11)$ of \cite[(2.19)]{GKL23}.

\section{The torsion formula}
\label{sec:torsion}

This section proves the torsion formula: for every hyperbolic word the handle-gluing operator at the two identity classes is
$\mathcal H(V^*)=2^{\#L}\lambda|A_-||A_+|$ and $\mathcal H(\sigma V^*)=2^{\#L}\lambda^{-1}|A_-||A_+|$, with $|A_-||A_+|=(\tr\varphi)^2-4$
(Theorem~\ref{thm:torsion}). The name refers to the comparison, under localisation (F1), with the adjoint Reidemeister torsion. Under (F1) it is the Floquet discriminant
$|A_-||A_+|$ of half (A), described below, that is compared with the torsion. Half (B), together with $W_c^2$, supplies the factor $2^{\#L}\lambda^{\pm1}$. That factor
distinguishes the two identity classes, which lie over one character (Section~\ref{sub:markov}). The theorem itself is a statement about the
pipeline and uses no physics.

\emph{Notation.} Let $D=\operatorname{diag}\bigl(Q_a^2x_a/(1-x_a)\bigr)$, the diagonal by which the Hessian of the twisted superpotential
differs from $K$, with blocks $D_R$, $D_L$ on the $R$- and $L$-sites. By \eqref{eq:Habelian}, $\mathcal H=W_c^2\det(K+D)\prod_a(1-x_a)$, and
$W_c(V^*)=\lambda$ (Proposition~\ref{obs:stretch}). By Theorem~\ref{thm:Kformula} the block $K_{RR}$ is the direct sum over the $R$-runs of
the fans $F_b=(2\min(i,j))_{i,j=1}^b$. Eliminating the $R$-sites gives $\det(K+D)=\det(K_{RR}+D_R)\det S$, where
\[
S=(K+D)_{LL}-(K+D)_{LR}(K_{RR}+D_R)^{-1}(K+D)_{RL}
\]
is the \emph{$L$-Schur complement}. It is a \emph{periodic Jacobi matrix} of order $n=\#L$: $S_{ii}=d_i$, $S_{i,i+1}=S_{i+1,i}=\beta_i$
(indices modulo $n$, entries added when positions coincide), and all other entries zero. The couplings are $-2$ inside the $L$-runs and,
across the $q$-th $R$-run, of length $b$,
\[
\beta_q=-2-\mathsf r^{\mathsf T}(F_b+D_R)^{-1}s,\qquad \mathsf r=2(1,\dots,b)^{\mathsf T},\ s=-2\cdot\mathbf1,
\]
the ramp $\mathsf r$ being the column of $K$ that couples the run to the first $L$ after it, and $s$ the column that couples it to the last $L$ before it. The \emph{transfer matrix} of $S$ is
$T=T_n\cdots T_1$ with $T_i=\begin{psmallmatrix}-d_i/\beta_i&-\beta_{i-1}/\beta_i\\1&0\end{psmallmatrix}$. Each factor has determinant $\beta_{i-1}/\beta_i$, so the product $T$ has determinant $1$. Its
eigenvalues $\Lambda^{\pm1}$ are the \emph{Floquet multipliers}, and $\tr T-2$ is the \emph{Floquet discriminant}. Hill's formula
(\cite{Hill}; the discrete form used here is that of \cite{vM76}) reads
\[
\det S=(-1)^n\prod_{i=1}^n\beta_i\,(\tr T-2).
\]
The torsion formula splits into two halves, both at $x=V^*$:
\begin{align}
&\det S=(-1)^{\#L}\textstyle\prod_i\beta_i\,((\tr\varphi)^2-4),\tag{A}\\
&\det(K_{RR}+D_R)\textstyle\prod_a(1-x_a)\,(-1)^{\#L}\prod_i\beta_i=2^{\#L}/\lambda .\tag{B}
\end{align}
Multiplied together with $W_c(V^*)^2=\lambda^2$ they give the formula. Half (A) says that the Floquet discriminant of $S$ is $(\tr\varphi)^2-4$:
a Floquet solution with multiplier $\lambda^{-2}$ is written down explicitly, as a limit of the universal Floquet vector of
Lemma~\ref{lem:floqvec}, and Hill's formula then gives $\det S$. Half (B) reduces to one identity per $R$-run, Lemma~\ref{lem:perrun}, which
holds for every frieze row and is used again at the general classes in Section~\ref{sub:hillgen}. Both halves, and the two lemmas, are proved
in Appendix~\ref{app:torsion}. The theorem below assembles them and obtains $\mathcal H(\sigma V^*)$ by Galois conjugation; the remark after
the theorem records what the formula says about input~(b).

\label{sub:torsionthm}

\begin{theorem}[The torsion formula]
\label{thm:torsion}
For every hyperbolic word, with $V^*$ the positive vacuum \eqref{eq:propagated},
\[
\mathcal H(V^*)=2^{\#L}\lambda\,|A_-||A_+|,\qquad \mathcal H(\sigma V^*)=2^{\#L}\lambda^{-1}|A_-||A_+|,
\]
where $|A_-||A_+|=|\det(\varphi-1)\det(\varphi+1)|$.
\end{theorem}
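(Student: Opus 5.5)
The plan follows the structure already laid out before the statement. Write $\mathcal H=W_c^2\det(K+D)\prod_a(1-x_a)$ from \eqref{eq:Habelian}, with $D=\operatorname{diag}(Q_a^2x_a/(1-x_a))$. At $V^*$ use $W_c(V^*)=\lambda$ (Proposition~\ref{obs:stretch}). Eliminating the $R$-sites, which is legitimate because $K_{RR}+D_R$ is positive definite at $V^*$ (each fan is positive definite and $D_R>0$ for $x\in(0,1)$), gives $\det(K+D)=\det(K_{RR}+D_R)\det S$ with $S$ the $L$-Schur complement. By Theorem~\ref{thm:Kformula}, $S$ is a periodic Jacobi matrix, since $R$-runs couple only to the last $L$ before them and the first $L$ after them. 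The formula then reduces to halves (A) and (B), whose product with $\lambda^2$ is $2^{\#L}\lambda((\tr\varphi)^2-4)$. Since $\tr\varphi>2$ for positive words, this equals $2^{\#L}\lambda|A_-||A_+|$.

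\emph{Half (A).} By Hill's formula it suffices to show that $\tr T=\lambda^2+\lambda^{-2}$. Then $\tr T-2=(\lambda-\lambda^{-1})^2=(\tr\varphi)^2-4$. I would construct an explicit Floquet solution $y$ of $Sy=0$, shifted by one period, with multiplier $\lambda^{-2}$ (or $\lambda^2$). The natural candidate is built from the heights: $y_i$ is a product of $h$-values of the kept and middle slopes at the $i$-th $L$, for example $h(a)h(w)$ or its reciprocal. The heights are multiplied by $\lambda$ after one period, and a quadratic expression in them is multiplied by $\lambda^{\pm2}$. To verify $Sy=0$, I would first compute $\beta_q$ and the diagonal Schur correction across an $R$-run. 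This needs $(F_b+D_R)^{-1}$ applied to the ramp and to the constant vector. At a frieze row with $\tau=2$ the matrix $F_b+D_R$ has tridiagonal inverse structure (the inverse of $2\min$ is a second-difference operator), so these quantities should telescope into ratios of $g_0,g_1,g_b,g_{b+1}$. This is the ``universal Floquet vector'' (Lemma~\ref{lem:floqvec}), and I expect it to be the main obstacle. The difficulty is to guess the right vector so that the $L$-run interior rows (the second difference $4,-2,-2$ with $D_L$) and the junction rows both vanish. I would fix the guess on $L^mR^m$ using Theorem~\ref{thm:nahmclosed} and then check it run by run on the developed word, as in the proof of Theorem~\ref{prop:propagated}.

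\emph{Half (B).} Split $\prod_a(1-x_a)$ and $\det(K_{RR}+D_R)=\prod_\rho\det(F_b+D_R)$ by runs. The sign $(-1)^{\#L}\prod\beta_i$ contains a factor $-2$ from each $L$-adjacency inside an $L$-run and a factor $\beta_q$ from each $R$-run. I would prove a per-run identity (Lemma~\ref{lem:perrun}): for a frieze row $g_j=g_0+jh(a)$, the product $\det(F_b+D_R)\prod_{j}(1-x_{R_j})\cdot(-\beta_q)$ equals an explicit ratio of boundary heights, such as $2\,g_0g_{b+1}/(g_1g_b)$ times a power of $h(a)$. The proof would be by induction on $b$ via the tridiagonal recursion of the determinant. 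The $L$-site factors $1-x_L=h(w_-)h(w_+)/h(w)^2$ then telescope against these boundary ratios along the cyclic word. As in Proposition~\ref{obs:stretch}, periodicity leaves a single factor $h(e_1)/h(\varphi e_1)=\lambda^{-1}$, and the powers of $2$ from the couplings give $2^{\#L}$.

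\emph{Conclusion.} For $\sigma V^*$, every quantity in the argument is a rational function over $\Q$ of the coordinates of $V^*$, which lie in $\Q(\lambda)$. Hence $\mathcal H(\sigma V^*)=\sigma(\mathcal H(V^*))=2^{\#L}\lambda^{-1}((\tr\varphi)^2-4)$. The vacuum $\sigma V^*$ differs from the literal Galois conjugate only by sign flips at the $L$-sites (Theorem~\ref{prop:propagated}). Those flips leave $x$ unchanged, and they leave $W_c^2$ unchanged because $c$ is integral and $W_c$ changes at most by a sign.
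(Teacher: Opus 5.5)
Your plan is the paper's proof: the same factorisation $\mathcal H=W_c^2\det(K_{RR}+D_R)\det S\prod_a(1-x_a)$ at $V^*$, half (A) via an explicit height Floquet solution and Hill's formula, half (B) via the per-run identity proved by induction on $b$ and telescoped around the word to $\lambda^{-1}$, and Galois conjugation for $\sigma V^*$. In (A) the correct choice is the reciprocal, $Y^-_t=1/(h(\mathsf k_t)h(\mathsf m_t))$, with multiplier $\lambda^{-2}$; the paper checks the junction rows as the $\epsilon\to0$ limit of the trigonometric Lemma~\ref{lem:floqvec}, using $(F+D_R)^{-1}$ only through the tridiagonal $J=F^{-1}+D_R^{-1}$. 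You should also note that $\beta_q\ne0$, which Hill's formula needs for the transfer matrix, follows from your per-run identity, whose right side is positive at $V^*$.
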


\begin{proof}
$\mathcal H(V^*)=W_c(V^*)^2\det(K+D)\prod(1-x)$ with $W_c(V^*)=\lambda$ by Proposition~\ref{obs:stretch}. At $V^*$ all heights are positive, so $D_R>0$,
the $R$-blocks are positive definite and $\det(K+D)=\det(K_{RR}+D_R)\det S$; the couplings $\beta$ are nonzero, since the right side of \eqref{eq:perrun} in Lemma~\ref{lem:perrun} is positive at $V^*$. (A) gives
$\det S=(-1)^{\#L}\prod\beta_i|A_-||A_+|$, and (B), by Lemma~\ref{lem:perrun}, gives
$\det(K_{RR}+D_R)\prod(1-x)(-1)^{\#L}\prod\beta_i=2^{\#L}/\lambda$; multiplying, $\det(K+D)\prod(1-x)=2^{\#L}|A_-||A_+|/\lambda$. All quantities lie in
$\Q(\lambda)$ and $x(\sigma V^*)=\sigma(x(V^*))$, so the second formula is the Galois conjugate of the first.
\end{proof}

So, by (A), the torsion factor $|A_-||A_+|=(\tr\varphi)^2-4$ on the Bethe side is the Floquet discriminant $\tr T-2$ of the $L$-Schur complement
$S$ at $V^*$.

\begin{remark}
\label{rem:internal}
Theorem~\ref{thm:torsion} also tests input (b) for $LR^n$, for every $n$, through the dictionary \eqref{eq:dict}: $S_{00}=a_0$, with $a_0$ as in \eqref{eq:as}, gives
$\mathcal H(\text{unit})=a_0^{-2}=\mathcal H(\sigma V^*)$. Here $S_{00}=a_0$ is the value of $|Z_{S^3_b}|$ in the degenerate limit \cite[(2.11), (3.107)]{GKLSY21}, also used by Gang--Kim \cite{GK22}. This is agreement, not a proof of (b). For general words it shows that (b) implies that the unit's $\mathcal H$ is $2^{\#L-1}a_0^{-2}$, and conversely when no degenerate class has
that value (for words with no singleton run the non-degenerate classes are those of Theorem~\ref{thm:allwords} by Theorem~\ref{thm:complete}, and under the non-degeneracy condition they have rational $\mathcal H$ by Theorem~\ref{thm:Hall}). Since $a_0/a_3=\lambda(\varphi)$ identically (as noted after Proposition~\ref{obs:stretch}), this condition is equivalent to
$|S_{0V^*}/S_{00}|=1/\lambda$. Input (b) is stated in Section~\ref{sub:status}.

Gang--Kim's consistency conditions include one meant to identify the vacuum among the Bethe vacua,
$\bigl|\sum_\alpha\mathcal H_\alpha^{-1}\mathcal F_\alpha\bigr|=\mathcal H_{\alpha=0}^{-1/2}$, besides $\sum_\alpha\mathcal H_\alpha^{-1}=1$ and
$|\mathcal F_\alpha|=1$ for every $\alpha$ \cite[(2.17)]{GK22}. For modular data the left side is $\bigl|\sum_\alpha S_{0\alpha}^2\theta_\alpha\bigr|$, the modulus of a Gauss
sum divided by the global dimension, which is $S_{00}$. So the criterion holds automatically, and what it records is the value of $|S_{00}|$.
On their data this value is $a_0$, never $a_3$, for $k=5,7,\dots,15$; this is a numerical check. Theorems~\ref{thm:torsion} and~\ref{thm:Hall} give, for every
word and up to the factor $2^{\#L-1}$, the $\mathcal H$-values (not their multiplicities) of the Gang--Kim template, Theorem~\ref{thm:Hall} at the classes of Theorem~\ref{thm:allwords} under the non-degeneracy condition; the modular data \eqref{eq:GKT} with $k=\tr\varphi$. So for every word whose modular
data is the template, the first paragraph places the unit in the class of $\sigma V^*$, provided no degenerate class has the same $\mathcal H$.
Whether the spins match the template is taken up in Section~\ref{sub:template}.
\end{remark}

\section{The Bethe classes and the handle-gluing spectrum}
\label{sec:spectrum}

This section determines the non-degenerate Bethe classes of every word in closed form and proves their handle-gluing values.

Section~\ref{sub:spectrum} begins with the spectrum. For $L^mR^m$, Proposition~\ref{prop:Hspectrum} assembles the values of $\mathcal H$ at the
identity classes (Theorem~\ref{thm:torsion}) and at the $A_+$ classes (Theorem~\ref{thm:Hall} below); granted the $A_-$ count, the sum rule
$\sum_\alpha\mathcal H_\alpha^{-1}=1$ then closes identically. Proposition~\ref{prop:sumrule} shows that, for any word with this three-value
structure, the sum rule is an identity in the trace alone, and Observation~\ref{obs:spectrumgen} finds the structure on harvested words.

Section~\ref{sub:runtraces} explains the structure through the run traces. Each class carries on every run a conserved frieze trace, its \emph{run trace}
(Theorem~\ref{thm:friezeall}). Observation~\ref{obs:tauquant} finds these quantised with modulus $\mathcal H/2^{\#L}$. A \emph{kernel vector} is an element $v$ of
$\mathcal K_\mp=\{v\in(\Z/N_\mp)^2:(\varphi\mp1)v\equiv0\}$ (Section~\ref{sub:runtraces}). To each kernel vector $v$ we attach an index vector $(\iota_\rho(v))_\rho$ \eqref{eq:ellrho}. Theorem~\ref{thm:tworun} gives the $A_+$ classes of a
two-run word in closed form, and Theorem~\ref{thm:allwords} generalises it: for every word and every kernel vector $v$ with no vanishing
$f(\mathsf m_t)$, the function $f(e)=\sin(\pi\langle v,e\rangle/N)$ on the slopes of the layered triangulation gives a Bethe class, whose index
vector is that of $v$. Observation~\ref{obs:kervec} finds that every non-identity class carries the index vector of a kernel vector in this way, and that the
spins of its vacua modulo $\tfrac14$ are a quadratic form in $v$, the torsion linking form of the mapping torus. Observation~\ref{obs:basevac}
sharpens this to the spin of the base vacuum (Section~\ref{sec:spins}) modulo $1$. Section~\ref{sub:degenerate} describes the degenerate classes, which
Theorem~\ref{thm:allwords} does not reach (Observation~\ref{obs:strata}).

Section~\ref{sub:hillgen} proves the values. Hill's formula writes $\mathcal H_\alpha=W_c(\alpha)^2\Delta_\alpha P_\alpha$, and
Theorem~\ref{thm:Hall} shows, at every class of Theorem~\ref{thm:allwords} with $\det(F+D_R)\ne0$, that the Floquet multiplier is
$\pm\lambda^{\pm1}$ and $W_c^2=1$, so that $\mathcal H_\alpha=2^{\#L}|A_\mp|$.

Propositions~\ref{prop:Hspectrum}--\ref{prop:sumrule}, Theorems~\ref{thm:tworun}, \ref{thm:allwords} and~\ref{thm:Hall}, and the results of
Section~\ref{sub:hillgen} are proved. Observations~\ref{obs:tauquant}, \ref{obs:kervec}(i),(iii) and~\ref{obs:basevac}(ii) are proved at the
classes of Theorem~\ref{thm:allwords}: the last two modulo $\tfrac14$ and for odd trace, by Theorem~\ref{thm:rogers}. The rest of the observations is numerical.
That the classes of Theorem~\ref{thm:allwords} are all the non-degenerate classes is Theorem~\ref{thm:complete} in the next section.

\subsection{The spectrum and the sum rule}
\label{sub:spectrum}

\begin{proposition}[The $\mathcal H$-spectrum of $L^mR^m$]
\label{prop:Hspectrum}
Let $m\ge3$ be odd. Each class carries $2^m$ vacua, and:
\begin{enumerate}
\item[\textup{(i)}] the identity classes, of $V^*$ and of $\sigma V^*$, have $\mathcal H=2^m\lambda^{\pm1}m^2(m^2+4)$;
\item[\textup{(ii)}] every $A_+$ class of Theorem~\ref{thm:allwords} satisfying the hypothesis $\det(F+D_R)\ne0$ of Theorem~\ref{thm:Hall} has
$\mathcal H=2^m(m^2+4)=2^{\#L}|A_+|$, and the classes of Theorem~\ref{thm:allwords} are all the non-degenerate non-identity classes;
\item[\textup{(iii)}] if, in addition, the $A_-$ sector consists of $\tfrac{m^2-1}2$ classes with $\mathcal H=2^mm^2=2^{\#L}|A_-|$ and the $A_+$ sector
of $\tfrac{m^2+3}2$ non-identity classes, each with $\mathcal H=2^m(m^2+4)$, and there are no other classes, then the handle-gluing operator takes four values, three up to Galois conjugation, on $m^2+3$ classes and
$2^m(m^2+3)$ vacua, and the sum rule over the vacua closes identically:
\[
\sum_\alpha\frac1{\mathcal H_\alpha}=\frac{m^2-1}{2m^2}+\frac{m^2+3}{2(m^2+4)}+\frac{\lambda+\lambda^{-1}}{m^2(m^2+4)}=1 .
\]
\end{enumerate}
\end{proposition}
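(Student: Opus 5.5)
The proposition is mostly an assembly of results already stated, so the plan is to specialise each to $\varphi=L^mR^m$ and then do the one genuine computation, the sum rule.

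For the class size: $\#L=m$, and each class carries $2^{\#L}$ vacua by the class structure of Section~\ref{sec:bethe}. For (i), apply Theorem~\ref{thm:torsion} with $\#L=m$ and $|A_-||A_+|=(\tr\varphi)^2-4$. With $\tr L^mR^m=m^2+2$ this gives $(m^2+2)^2-4=m^2(m^2+4)$. The factors match Proposition~\ref{prop:groups}, where $A_-\cong\Z_m\times\Z_m$ has order $m^2$, and Proposition~\ref{prop:linking}, where $A_+\cong\Z_{m^2+4}$.

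For (ii), the first clause is Theorem~\ref{thm:Hall} applied in the $A_+$ sector, which gives $\mathcal H=2^{\#L}|A_+|=2^m(m^2+4)$ whenever $\det(F+D_R)\ne0$. The second clause is Theorem~\ref{thm:complete}. Since $m\ge3$, both runs of $L^mR^m$ have length $m\ge2$, so there is no singleton run. Theorem~\ref{thm:complete} then says that the classes of Theorem~\ref{thm:allwords}, together with the two identity classes, are all the non-degenerate classes. I need to check that these later theorems are stated without further restrictions that exclude the diagonal words. The paper says they hold for every word, apart from the no-singleton hypothesis already verified. I would also record here, as the proof will need, the degeneracy assertion of (B2): the $A_-$ kernel vectors are degenerate when $m$ is a prime power, and only some of them at $m=15,21$. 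This comes from $f$ vanishing on a middle slope. I expect this to be a direct check on the explicit Farey slopes of $L^mR^m$, namely the fans $w_j=w_0+ja$ with $a=(m,1)$ or $e_1$.

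For (iii), the counts are hypotheses, so only the arithmetic remains. The $A_-$ sector contributes $\tfrac{m^2-1}{2}\cdot 2^m\cdot\frac{1}{2^m m^2}$. The $A_+$ non-identity classes contribute $\tfrac{m^2+3}{2}\cdot\frac{1}{m^2+4}$. The two identity classes contribute $2^m\cdot\frac{\lambda^{-1}+\lambda}{2^m m^2(m^2+4)}$. Since $\lambda+\lambda^{-1}=\tr\varphi=m^2+2$, the sum becomes
\[
\frac{(m^2-1)(m^2+4)+(m^2+3)m^2+2(m^2+2)}{2m^2(m^2+4)}.
\]
The numerator expands to $2m^4+8m^2$, so the sum is $1$. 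For the counting statement: the classes number $\tfrac{m^2-1}{2}+\tfrac{m^2+3}{2}+2=m^2+3$, and hence there are $2^m(m^2+3)$ vacua. The handle-gluing operator takes the four values $2^m m^2$, $2^m(m^2+4)$ and $2^m\lambda^{\pm1}m^2(m^2+4)$. These are three values up to Galois conjugation, which swaps $\lambda$ and $\lambda^{-1}$ (Theorem~\ref{thm:torsion}).

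The main obstacle is the bookkeeping in (ii). One has to confirm that the $A_+$ classes delivered by Theorem~\ref{thm:allwords} are exactly $\tfrac{m^2+3}{2}$ in number, one per pair $\pm v$ of nonzero kernel vectors, with no vanishing $f$ on any middle slope. Because $m^2+4$ need not be prime, this non-vanishing must be checked for $\sin(\pi\det(v,e)/(m^2+4))$ on all middle slopes. I would try first to show that $\det(v,e)\not\equiv0$ there, using that $A_+$ is cyclic and generated by $e_1$. The count itself, however, is only used in (iii), where it is assumed.
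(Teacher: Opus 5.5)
Your proposal is correct and is the paper's own argument: the class size from the $2^{\#L}$ sign choices, (i) from Theorem~\ref{thm:torsion} with $|A_-||A_+|=(m^2+2)^2-4=m^2(m^2+4)$, (ii) from Theorem~\ref{thm:Hall} together with Theorem~\ref{thm:complete} (no singleton runs), and (iii) as the arithmetic you carried out with $\lambda+\lambda^{-1}=m^2+2$. The paper's proof also writes out the degeneracy of the $A_-$ sector for prime-power $m$, using the middle slopes $(i,1)$ and $(1+im,i)$; that claim is not part of the statement, so treating it as a side remark leaves nothing missing.
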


\begin{proof}
The $2^m$ vacua of a class are the sign choices of Section~\ref{sec:bethe}. (i) is Theorem~\ref{thm:torsion}. (ii) is Theorem~\ref{thm:Hall},
with Theorem~\ref{thm:complete} for completeness, every run of $L^mR^m$ having length $m\ge2$. For $m$ a prime power every nonzero kernel vector of the
$A_-$ sector is degenerate. Indeed, the kernel vectors are $v=m v'$ with $v'\in(\Z/m)^2$, and $\langle v,e\rangle/m^2=\langle v',e\rangle/m$. The
middle slopes of the $L$-run are $(i,1)$ and those of the $R$-run are $(1+im,i)$, $i=1,\dots,m$ (proof of Theorem~\ref{thm:lines}). So
$\langle v',e\rangle\equiv0\pmod m$ is solvable on the $L$-run if $v'_2$ is prime to $m$, and on the $R$-run if $v'_1$ is. Otherwise the prime $p$
of $m=p^e$ divides both; then $v'=pv''$ and the same argument applies to $v''$ modulo $m/p$. So the non-degenerate non-identity classes all lie in $A_+$. This fails for other $m$: at $m=15$, $8$ of the $112$ classes of $A_-$ are non-degenerate, and at $m=21$, $12$ of $220$. (iii) is arithmetic, with $\lambda+\lambda^{-1}=m^2+2$.
\end{proof}

The hypothesis of (iii) is not proved in general. For $m$ prime the $A_-$ classes are degenerate and Theorem~\ref{thm:Hall} does not reach them; for prime
$m\ge5$ Theorem~\ref{thm:aminus} constructs all but $m-1$ of the $\tfrac{m^2-1}2$ $A_-$ classes as exact solutions, but $\mathcal H$ is not computed there; and the $A_+$ count
of (iii) needs every nonzero $A_+$ kernel vector to be non-degenerate, which holds at $m=3,5,7,13,15$ but not at $m=9,11$. At $L^3R^3$ and
$L^5R^5$ the hypothesis holds numerically (Observation~\ref{obs:l3r3minus} and Section~\ref{sec:lines}).

At $L^3R^3$ (iii) gives $\mathcal H=\{72^{\times4},104^{\times6}\}=\{2^3\cdot9,2^3\cdot13\}$ on the ten non-identity classes, the
remaining two classes being $V^*$ and $\sigma V^*$ with the irrational values $2^3\lambda^{\pm1}\cdot9\cdot13$; the multiplicities are
$4=\tfrac{9-1}2$ and $6=\tfrac{9+3}2$. At $L^5R^5$ it gives $800$ and $928$ on $12+14$ classes, and $28\cdot2^5=896$ vacua in all.

The sum rule is not special to the diagonal family. Once the three-value structure is granted, it is an identity in the trace alone.

\begin{proposition}[The sum rule for every word]
\label{prop:sumrule}
Let $\varphi$ be a hyperbolic word of odd trace $k$, and suppose $\mathcal H$ takes the three values $2^{\#L}|A_-|$, $2^{\#L}|A_+|$ and
$2^{\#L}\lambda^{\pm1}|A_-||A_+|$ on the classes, with multiplicities $\#(A_-/\!\pm)-1$, $\#(A_+/\!\pm)-1$ and $1,1$. Then
\[
\sum_\alpha\frac1{\mathcal H_\alpha}=1 .
\]
For even $k$ with $A_-$ and $A_+$ cyclic, the same hypotheses give $1+k/(k^2-4)$ instead.
\end{proposition}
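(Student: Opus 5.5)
The plan is to reduce the sum rule to arithmetic in $k$ by counting the pairs $\pm x$ in each group. For odd $k$, both $N_\mp=|k\mp2|$ are odd. Hence in $A_\mp$ only $0$ is fixed by $x\mapsto-x$, and
\[
\#(A_\mp/\!\pm)=\frac{N_\mp+1}2,\qquad\text{so the multiplicities are}\qquad\frac{N_\mp-1}2 .
\]
The hypothesis concerns classes, but the sum in the statement runs over vacua $\alpha$. I would read it as in Proposition~\ref{prop:Hspectrum}(iii): each class carries $2^{\#L}$ vacua with the same $\mathcal H$ (Section~\ref{sec:bethe}). The factor $2^{\#L}$ then cancels against the $2^{\#L}$ in each $\mathcal H$-value, and the sum becomes
\[
\sum_\alpha\frac1{\mathcal H_\alpha}=\frac{N_--1}{2N_-}+\frac{N_+-1}{2N_+}+\frac{\lambda+\lambda^{-1}}{N_-N_+}.
\]

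Next I would simplify this expression using three facts. First, $\lambda+\lambda^{-1}=k$, since $\lambda^2-k\lambda+1=0$. Second, $N_-N_+=k^2-4$, which holds for hyperbolic words, where $k\ge3$. Third, $N_-+N_+=2k$. The first two terms together give
\[
1-\frac{N_-+N_+}{2N_-N_+}=1-\frac{k}{k^2-4},
\]
and adding the last term $k/(k^2-4)$ yields exactly $1$.

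For even $k$ with $A_\mp$ cyclic, each cyclic group of even order has exactly two elements fixed by negation. Hence $\#(A_\mp/\!\pm)=N_\mp/2+1$, and the multiplicities become $N_\mp/2$. The first two terms then contribute $\tfrac12+\tfrac12=1$, with no negative correction, and the identity-class term is still $k/(k^2-4)$. The total is $1+k/(k^2-4)$, as stated.

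There is no real obstacle here. The only points needing care are the vacuum-versus-class bookkeeping (the factor $2^{\#L}$) and the count of $2$-torsion in the even case. The latter is exactly where cyclicity is used: a non-cyclic group of even order would have four self-inverse elements.
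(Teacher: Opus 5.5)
Your proof is correct and follows the paper's argument: the same cancellation of $2^{\#L}$ between vacua and class values, the same orbit counts $(N+1)/2$ for odd $N$ and $N/2+1$ for even cyclic groups, and the same closing arithmetic with $\lambda+\lambda^{-1}=k$ (your use of $N_-+N_+=2k$ and $N_-N_+=k^2-4$ is just a tidier route through the paper's sum). One correction to your closing aside: a non-cyclic group of even order need not have four self-inverse elements, e.g.\ $\Z_3\times\Z_6$ (which is $A_-$ of $L^3R^6$) has only two, since four occur exactly when both invariant factors are even, so cyclicity is sufficient but not necessary for the count $N/2+1$.
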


\begin{proof}
Each class carries $2^{\#L}$ vacua, so the factor $2^{\#L}$ cancels and the sum over vacua is the sum over classes of the reciprocals of $|A_-|$,
$|A_+|$ and $\lambda^{\pm1}|A_-||A_+|$. With $|A_\mp|=k\mp2$ and $\lambda+\lambda^{-1}=\tr\varphi=k$, and $\#(A/\!\pm)=(|A|+1)/2$ for odd order,
\[
\sum_\alpha\frac1{\mathcal H_\alpha}=\frac{k-3}{2(k-2)}+\frac{k+1}{2(k+2)}+\frac k{(k-2)(k+2)}=1 .
\]
For even $k$ and cyclic $A_\mp$ one has $\#(A/\!\pm)=|A|/2+1$ instead, and the first two terms become $\tfrac12$ each,
leaving the excess $k/(k^2-4)$.
\end{proof}

At $k=m^2+2$ this is the sum rule of Proposition~\ref{prop:Hspectrum}, so the diagonal family is the special case. The hypothesis --- the
three-value structure itself --- is confirmed beyond that family, and Theorem~\ref{thm:Hall} proves the values at every class of
Theorem~\ref{thm:allwords} with $\det(F+D_R)\ne0$; what the hypothesis then asks is that those classes are all of them.

\begin{observation}
\label{obs:spectrumgen}
\emph{Harvesting} the Bethe classes of a word means solving the deformed system of Section~\ref{sec:bethe} numerically from many random
starting points at a fixed small $\epsilon$ and collecting the distinct classes reached. Harvesting $LR^3$, $LR^5$ and $L^2RLR$ at
$\epsilon=10^{-3}$ reproduces the three-value structure with the predicted multiplicities, the even-trace word $L^2RLR$ giving the sum
$1+k/(k^2-4)$ of Proposition~\ref{prop:sumrule}.
\end{observation}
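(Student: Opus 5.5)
Since this is an observation, the plan is a controlled numerical experiment together with the parts that can be certified from earlier results. For each of the three words $LR^3$, $LR^5$, $L^2RLR$ I would first build $(K,Q,c)$ from Theorem~\ref{thm:Kformula}, and write down the deformed shape equations, the right side of \eqref{eq:bethe} replaced by $(1+\epsilon)^{c_b}$ at $\epsilon=10^{-3}$. I would then run Newton's method in the logarithmic coordinates $Z_a$ from a large sample of random complex starting points, and keep only converged solutions with residual below, say, $10^{-12}$. Solutions would be identified into classes by their shapes $x_a=z_a^{-Q_a}$, so that the $2^{\#L}$ sign choices at the $L$-sites collapse to one class, as in Section~\ref{sec:bethe}. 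Each candidate class would be screened with the frieze residual of Theorem~\ref{thm:friezeall} at the interior sites of every run of length at least $3$. This discards spurious near-degenerate solutions, as described after Remark~\ref{rem:friezesign}.

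At each surviving class I would evaluate $\mathcal H$ from \eqref{eq:Habelian}, and extrapolate in $\epsilon$ (repeating at $\epsilon=10^{-3},10^{-4}$) to obtain the $\epsilon\to0$ limit. The values would be compared with $2^{\#L}|A_-|$, $2^{\#L}|A_+|$ and $2^{\#L}\lambda^{\pm1}|A_-||A_+|$. The two identity classes are certified independently: $V^*$ and $\sigma V^*$ are given exactly by Theorem~\ref{prop:propagated}, and their $\mathcal H$ by Theorem~\ref{thm:torsion}. Non-identity classes that are non-degenerate can be matched with the closed forms of Theorem~\ref{thm:allwords}, and their values confirmed by Theorem~\ref{thm:Hall}.

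For the multiplicities, the predicted counts are $\#(A_\mp/\!\pm)-1$. These are $1$ and $3$ for $LR^3$, where $A_-\cong\Z_3$ and $A_+\cong\Z_7$. For $LR^5$, where $|A_\mp|=5,9$, they are $2$ and $4$. For $L^2RLR$, with trace $8$, they come from computing $A_\mp$ via Proposition~\ref{prop:groups}-style Smith forms. I would tally the harvested classes per value against these counts. Finally I would evaluate $\sum_\alpha\mathcal H_\alpha^{-1}$ over vacua, expecting $1$ at odd trace and $1+k/(k^2-4)=1+8/60$ at $L^2RLR$, as Proposition~\ref{prop:sumrule} predicts.

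The main obstacle is completeness of the harvest. The degenerate $A_-$ classes, for instance the single one of $LR^3$ noted in Section~\ref{sec:bethe}, sit near infinity at small $\epsilon$. Random starts rarely reach them, and Newton can stall there. To address this, I would first seed starts along the degenerate strata, with large $|z|$ at the sites where the kernel-vector function $f$ vanishes. Second, I would track each class continuously from larger $\epsilon$ (say $10^{-1}$) down to $10^{-3}$. Third, I would stop only when further random batches produce no new class and the total vacuum count reaches $2^{\#L}(k+1)$ for odd trace.
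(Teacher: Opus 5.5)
Your method is the paper's. Observation~\ref{obs:spectrumgen} rests only on a random-start harvest at $\epsilon=10^{-3}$ with $\mathcal H/2^{\#L}$ read off and tallied; your frieze screening, continuation in $\epsilon$, and certification of the identity and non-degenerate classes are extra safeguards, not a different route.

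However, the targets for the third word are wrong. $L^2RLR=\begin{psmallmatrix}8&5\\3&2\end{psmallmatrix}$ has trace $k=10$, not $8$.

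\begin{itemize}
\item $\varphi-1=\begin{psmallmatrix}7&5\\3&1\end{psmallmatrix}$ and $\varphi+1=\begin{psmallmatrix}9&5\\3&3\end{psmallmatrix}$ each have entries with greatest common divisor $1$. So $A_-\cong\Z_8$ and $A_+\cong\Z_{12}$, both cyclic, which is what places the word in the even case of Proposition~\ref{prop:sumrule}.
\item The predicted structure is $\mathcal H=2^3\cdot8=64$ on $\#(A_-/\!\pm)-1=4$ classes, $\mathcal H=2^3\cdot12=96$ on $\#(A_+/\!\pm)-1=6$ classes, and $2^3\lambda^{\pm1}\cdot96$ with $\lambda=5+2\sqrt6$. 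That is $12$ classes and $96$ vacua.
\item The sum to be reproduced is $1+10/96=1+5/48$, not $1+8/60$.
\end{itemize}

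As written, your check would report a spurious discrepancy on this word. Your stopping rule, stated only for odd trace, also gives no target for it; here the target is $2^3\bigl(\#(A_-/\!\pm)+\#(A_+/\!\pm)\bigr)=96$ vacua.

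Two smaller points.

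\begin{itemize}
\item At $L^2RLR$ no run has length $\ge3$, so your frieze filter is empty there. All three words have singleton runs, so Theorem~\ref{thm:complete} certifies nothing. Completeness rests on the harvest saturating, exactly as in the paper.
\item Your seeding heuristic (large $|z|$ where $f$ vanishes) has the wrong orientation at some sites. Where $f$ vanishes on the middle slope of an $R$-step, the coordinate tends to $0$, not $\infty$. At $LR^3$ the $A_-$ kernel vector $v=(1,0)$ has $f=0$ on the middle slope $(4,3)$ of $R_3$, and the harvested class has $z_{R_3}\sim\epsilon^2$ there (Observation~\ref{obs:strata}). Seed from the exponent patterns instead.
\end{itemize}
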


\subsection{Run traces and kernel vectors}
\label{sub:runtraces}

Sorting the harvested classes into families would only describe them. Theorem~\ref{thm:friezeall} instead gives a statement about the classes themselves.
By that theorem the coordinates $\hat d_b=s_b^{-1/2}$ of a class along each run $\rho$ form a frieze row. So the class determines for each run its frieze trace
$\tau_\rho=(\hat d_{b-1}+\hat d_{b+1})/\hat d_b$, the same at every interior site $b$ of the run (Remark~\ref{rem:friezedef}). The two identity
classes, $V^*$ and $\sigma V^*$, have $\tau_\rho=2$ on every run (Remark~\ref{rem:friezedef}(iii)). The other classes are described by the
following.

Throughout this section $N$ denotes the order of the sector under discussion: $N=|A_-|$ in the sector $A_-$ and $N=|A_+|$ in the sector $A_+$. The two
orders are in general different, and where both occur they are written $N_-$ and $N_+$.

\begin{observation}[The trace is quantised, with modulus $\mathcal H/2^{\#L}$]
\label{obs:tauquant}
At every non-identity Bethe class $\alpha$ of every word tested, the run traces are
\begin{equation}
\tau_\rho=2\cos\frac{j_\rho\pi}{N_\alpha},\qquad j_\rho\in\Z,\qquad N_\alpha=|A_-|\ \text{or}\ |A_+|\ \text{by sector},
\label{eq:tauquant}
\end{equation}
and
\[
\mathcal H_\alpha=2^{\#L}N_\alpha :
\]
\emph{the quantisation modulus of the frieze trace is the value of the handle-gluing operator.} At the two identity classes $j_\rho=0$ in every run and
$\mathcal H=2^{\#L}\lambda^{\pm1}|A_-||A_+|$. The harvests record indices only on runs of length at least $3$, so on a word with short runs fewer indices are visible.
\end{observation}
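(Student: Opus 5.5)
The plan is to prove the observation at the classes constructed in Theorem~\ref{thm:allwords}, and to leave as the numerical residue only the claim that these exhaust the non-identity classes. Fix a sector, write $N=N_\mp$, and fix a kernel vector $v\in\mathcal K_\mp$ with $f(e)=\sin(\pi\det(v,e)/N)$ nonzero on every middle slope. By Theorem~\ref{thm:allwords} the class of $v$ has step shapes $s=f(\text{kept})^2/f(\text{middle})^2$. That is, the heights $h(\cdot)$ of Theorem~\ref{prop:propagated} are replaced by $f$.

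First I compute the run traces. Along a run $\rho$ the kept slope $a$ is fixed and the middle slopes form the fan $w_j=w_0+ja$ (Theorem~\ref{prop:propagated}). Since $e\mapsto\det(v,e)$ is linear,
\[
f(w_j)=\sin(j\vartheta_\rho+\phi_\rho),\qquad \vartheta_\rho=\pi\det(v,a)/N,\quad \phi_\rho=\pi\det(v,w_0)/N .
\]
Hence
\[
s_j=\frac{\sin^2\vartheta_\rho}{\sin^2(j\vartheta_\rho+\phi_\rho)} ,
\]
which is exactly the parametrisation of Proposition~\ref{prop:runreduce}. Consequently $\hat d_j=\pm\sin(j\vartheta_\rho+\phi_\rho)/\sin\vartheta_\rho$. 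By the product-to-sum identity this is a frieze row whose trace is $\tau_\rho=2\cos\vartheta_\rho$ (Remark~\ref{rem:friezedef}(ii)). This gives \eqref{eq:tauquant} with $j_\rho=\det(v,a)\in\Z$.

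Next I check that the index is well defined. Changing $v$ by $N\Z^2$, replacing $v$ by $-v$, or applying the sign flip of Remark~\ref{rem:friezesign} changes $j_\rho$ only within $\{\pm j_\rho,\ j_\rho+N\}$. So $\tau_\rho$ is determined up to the sign already allowed in Theorem~\ref{thm:friezeall}. If the run has length $2$, I use the length-$2$ definition of $\tau_\rho$ from Proposition~\ref{prop:runreduce} and the same identity.

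Second, I obtain the handle-gluing value. The equality $\mathcal H_\alpha=2^{\#L}N$ is Theorem~\ref{thm:Hall}, applied under its hypothesis $\det(F+D_R)\ne0$. At the two identity classes, $\tau_\rho=2$ on every run, so $j_\rho=0$, by Remark~\ref{rem:friezedef}(iii) together with Lemma~\ref{lem:crossings} for $\sigma V^*$; their values of $\mathcal H$ are Theorem~\ref{thm:torsion}. The link stated in the observation, that the quantisation modulus equals $\mathcal H/2^{\#L}$, then holds because the same $N$ enters both $f$ and Theorem~\ref{thm:Hall}.

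Third, I address exhaustiveness: the claim that every non-identity class arises this way. For words without a singleton run I would invoke Theorem~\ref{thm:complete} for the non-degenerate classes. This is the main obstacle. It cannot be proved here for three kinds of class: degenerate classes, where $f$ vanishes on a middle slope and coordinates tend to $0$, $1$ or $\infty$; classes at which $\det(F+D_R)=0$; and classes of words with singleton runs. For these the statement stays a numerical observation, checked on the harvested words. One would first try to extend the trigonometric parametrisation to limits of the deformed system, using the polynomial form of the frieze relation that survives by Theorem~\ref{thm:friezeall}. Even then, $\mathcal H$ at such limits is not controlled by Hill's formula.
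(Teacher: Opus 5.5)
Your proposal is correct and follows the paper's own treatment essentially step for step. You read the trace formula at the kernel-vector classes off Theorem~\ref{thm:allwords}, with $j_\rho=\det(v,a)=\pm\iota_\rho(v)$. You take the value of $\mathcal H$ from Theorem~\ref{thm:Hall} under $\det(F+D_R)\ne0$, and handle the identity classes by Remark~\ref{rem:friezedef}(iii) and Theorem~\ref{thm:torsion}. Exhaustiveness comes from Theorem~\ref{thm:complete} for words without a singleton run, with the degenerate classes left numerical.

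The one step you leave implicit is $\sin\vartheta_\rho=f(a)\ne0$ on the kept slope. It holds because every kept slope is the middle slope of another tetrahedron, as shown in the first paragraph of the proof of Theorem~\ref{thm:Hall}.
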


At the classes of Theorem~\ref{thm:allwords} both statements are now proved: \eqref{eq:tauquant} by that theorem, with $j_\rho=\iota_\rho(v)$, and
$\mathcal H_\alpha=2^{\#L}N_\alpha$ by Theorem~\ref{thm:Hall} where $\det(F+D_R)\ne0$. For words without a singleton run these are all the
non-degenerate classes (Theorem~\ref{thm:complete}). The degenerate classes remain observational.

How the indices of different runs are related is the content of Theorem~\ref{thm:allwords} below. The index vector of a class is that
of a kernel vector $v$. So consecutive runs carry indices related by a unit multiplier $\kappa$, $j_{\rho+1}\equiv\pm\kappa j_\rho\pmod N$.
This $\kappa$ is a ratio of two coordinates of $v$ (Observation~\ref{obs:kervec}). For a two-run word it records the run lengths
(Theorem~\ref{thm:tworun}). Harvested data on fourteen sectors of eleven words of length at most $10$ (the larger harvest is in Observation~\ref{obs:kervec}) agree with this, and show
$\kappa^2\equiv-1\pmod N$, in the non-degenerate sectors harvested, exactly when the word is amphichiral (Section~\ref{sec:runK}). For the two-run words with odd run lengths this is proved in both directions, as a consequence of
Theorem~\ref{thm:tworun}. On the other six such sectors harvested it is an observation only. It cannot hold in a degenerate sector: the
$A_-$ sector of the amphichiral word $L^3R^3$ has $N=9$, and $-1$ is not a square modulo $9$. The multiplier is defined only up to sign and inversion,
reading the runs in the other order replacing $\kappa$ by $\kappa^{-1}$, and the sign of $j_\rho$ is the branch of
Theorem~\ref{thm:friezeall}.

To state the closed form of the multiplier on the two-run words, and the general result that follows, we first attach to each run of a
word a coordinate of a vector in $(\Z/N)^2$.

Write the word as $\varphi=\ell_1\cdots\ell_r$ with $\ell_i\in\{L,R\}$, as in Theorem~\ref{prop:propagated}, and let $M_i=\ell_1\cdots\ell_i$ (so $M_0=1$, $M_r=\varphi$). For the sector $A_\mp$, with $N=|A_\mp|$, let
\[
\mathcal K_\mp=\{v\in(\Z/N)^2:(\varphi\mp1)v\equiv0\},
\]
whose elements are the \emph{kernel vectors} of the sector. This group has order $N$ and is isomorphic to $A_\mp$ --- both are read off the Smith form $\operatorname{diag}(d_1,d_2)$ of $\varphi\mp1$,
$d_1d_2=N$, as $\Z_{d_1}\times\Z_{d_2}$ --- and $M_i^{-1}\mathcal K_\mp$ is the same subgroup for the word rotated to start at position $i$.
Since $L^{-1}(x,y)^{\mathsf T}=(x-y,y)^{\mathsf T}$ and $R^{-1}(x,y)^{\mathsf T}=(x,y-x)^{\mathsf T}$, transporting $v$ through an $L$-run leaves
its second coordinate fixed and through an $R$-run its first. For the run $\rho$ beginning after position $i_\rho$ put
\begin{equation}
\iota_\rho(v)=\begin{cases}\text{second coordinate of }M_{i_\rho}^{-1}v,&\rho\text{ an }L\text{-run},\\
\text{first coordinate of }M_{i_\rho}^{-1}v,&\rho\text{ an }R\text{-run},\end{cases}
\label{eq:ellrho}
\end{equation}
the coordinate that the run's own letter leaves invariant. For a kernel vector $v$ we call $(\iota_\rho(v))_\rho$ its
\emph{index vector}.

\begin{theorem}[The unit $A_+$ classes of a two-run word]
\label{thm:tworun}
Let $\varphi=L^aR^b$ with $a,b$ odd, $N=ab+4$, and $\kappa=\tfrac12(N-a)$, an integer with $2\kappa\equiv-a$ and $b\kappa\equiv2\pmod N$. For every $j$
with $\gcd(j,N)=1$ put $\vartheta=j\pi/N$ and
\[
z_{L_i}=\varepsilon_i\,\frac{\sin((i+\kappa)\vartheta)}{\sin\vartheta}\ (1\le i\le a),\qquad
z_{R_i}=\Bigl(1-\frac{\sin^2(\kappa\vartheta)}{\sin^2((i\kappa-1)\vartheta)}\Bigr)^{-1}\ (1\le i\le b),
\]
with $\varepsilon\in\{\pm1\}^a$.
Then $z$ solves the Bethe equations \eqref{eq:bethe}. The solutions for $j$ and $N-j$ coincide, so these are $\tfrac12\phi(N)$ classes of
$2^a$ vacua ($\phi$ Euler's function), one for each $\pm j\in(\Z/N)^\times/\!\pm$; the run traces of the class $\pm j$ are
\[
\tau_L=2\cos\frac{j\pi}N,\qquad \tau_R=2\cos\frac{\kappa j\pi}N ,
\]
so its index vector is $(j,\kappa j)$, with $\kappa\equiv-2^{-1}a$ the multiplier; and it is the index vector
\eqref{eq:ellrho} of the kernel vector $v=(-\kappa j,j)\in\mathcal K_+$, since $\iota_L(v)=j$ and $\iota_R(v)=-\kappa j-aj\equiv \kappa j$.
\end{theorem}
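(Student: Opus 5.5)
The plan is to verify the Bethe equations in the shape form \eqref{eq:bethex}, using the run-structure matrix of Theorem~\ref{thm:Kformula} for the two-run word $L^aR^b$. This is the single-run case of \eqref{eq:Kdiag} with general $a,b$.

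\emph{Step 1: interior sites.} On each run the proposed shapes are frieze rows. At the $L$-sites, $\hat d_{L_i}=z_{L_i}$ up to sign, with $\hat d_{L_i}=\sin((i+\kappa)\vartheta)/\sin\vartheta$. At the $R$-sites, $s_{R_i}=1-x_{R_i}=\sin^2(\kappa\vartheta)/\sin^2((i\kappa-1)\vartheta)$, so $\hat d_{R_i}=\sin((i\kappa-1)\vartheta)/\sin(\kappa\vartheta)$. These are the trigonometric frieze rows of Remark~\ref{rem:friezedef}(ii), with angles $\vartheta$ and $\kappa\vartheta$ respectively. By Proposition~\ref{prop:runreduce} and the proof of Theorem~\ref{thm:friezeall}, every Bethe equation at an interior $L$-site holds identically. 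At the $R$-sites, the residuals $r_j$ form a geometric sequence. It therefore remains to check the equations at $L_1$, at $L_a$, and at $R_1$ and $R_2$ (or equivalently $R_1$ and $R_b$ together with the ratio). The sign choices $\varepsilon$ are invisible in the shapes, so each solution lifts to a class of $2^a$ vacua.

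\emph{Step 2: run-end equations.} I would evaluate each residual with the telescoping products from the proof of Theorem~\ref{prop:propagated}. That proof is purely algebraic in the ``heights'' $g_j$, $k_i$, $h_i$ and uses only their frieze/affine structure. Here the affine relations $g_j=g_0+jh(q')$ are replaced by sine rows. The summation-by-parts identities must then be redone, for example
\[
\prod_{j=1}^bX_{R_j}=\frac{g_0g_{b+1}}{g_1g_b}\quad\text{with }g_j=\sin((j\kappa-1)\vartheta),
\]
which still telescopes, since $X_{R_j}=g_{j-1}g_{j+1}/g_j^2$ up to the constant factor $\sin^2\!\vartheta$-type terms coming from the frieze relation. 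The weighted product $\prod X_{R_j}^{\,j}$ does not telescope cleanly, because the frieze is not affine. Its defect is a power of the constant $\sin^2(\kappa\vartheta)$, and I expect this to cancel against the $(4+2b)$-th power at $L_1$.

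The closing conditions are then congruences. The $L$-row continues into the $R$-row provided $\sin((a+1+\kappa)\vartheta)$ and $\sin((\kappa-1)\vartheta)$-type boundary values match up to sign. These reduce to $2\kappa\equiv -a$ and $b\kappa\equiv 2 \pmod N$, via $\sin(x+N\vartheta\cdot t)=\pm\sin x$ since $N\vartheta=j\pi$. Both congruences follow from $N=ab+4$ and $\kappa=(N-a)/2$: indeed $b\kappa=(bN-ab)/2\equiv(4-ab+ab\cdot0)/2\ldots$, which needs a short check using $ab\equiv-4$. This bookkeeping of signs $(-1)^j$ at the run ends is the step I expect to be the main obstacle. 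The hardest part is the $L_1$ equation, which carries the ramp. I would first test it at $L^3R^3$, with $N=13$, $\kappa=5$, numerically, to fix the sign conventions.

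\emph{Step 3: counting and labels.} Replacing $j$ by $N-j$ sends $\vartheta\mapsto\pi-\vartheta$. All shapes are squares of sines of integer multiples of $\vartheta$, so they are unchanged. Conversely, distinct $\pm j$ give distinct $\tau_L=2\cos(j\pi/N)$, because $\cos$ is injective on $(0,\pi)$. This yields exactly $\tfrac12\phi(N)$ classes. We need $\gcd(j,N)=1$ so that no denominator $\sin(\kappa\vartheta)$, $\sin((i\kappa-1)\vartheta)$ or $\sin\vartheta$ vanishes. This is where coprimality, and the invertibility of $\kappa$ (from $b\kappa\equiv2$ with $N$ odd), enter. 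The run traces are read off directly: $\tau_L=2\cos\vartheta$ and $\tau_R=2\cos\kappa\vartheta$. Finally, for the kernel-vector statement, compute
\[
(\varphi+1)v=\begin{pmatrix}ab+2&a\\b&2\end{pmatrix}\begin{pmatrix}-\kappa j\\ j\end{pmatrix}.
\]
Reducing this with $2\kappa\equiv-a$ and $b\kappa\equiv2$ shows it is zero modulo $N$. Then apply \eqref{eq:ellrho}: the $L$-run starts at $M_0=1$, giving $\iota_L=j$. The $R$-run starts after $L^a$, and $L^{-a}v=(-\kappa j-aj,\,j)$, giving $\iota_R=-\kappa j-aj\equiv\kappa j$ by $2\kappa\equiv-a$.
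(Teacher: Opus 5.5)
Your Step~2, the run-end equations, is the whole content of the theorem, and you do not carry it out. You only \emph{expect} a defect to cancel, and you propose to fix the signs numerically at $L^3R^3$.

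The mechanism you anticipate is also not the right one. Summation by parts gives $\prod_{j=1}^b(g_{j-1}g_{j+1}/g_j^2)^j=g_0g_{b+1}^b/g_b^{b+1}$ for \emph{any} nonvanishing sequence $g$, affine or not. By Ptolemy, $\sin^2X-\sin^2Y=\sin(X-Y)\sin(X+Y)$, so $x_{R_i}=g_{i-1}g_{i+1}/g_i^2$ holds \emph{exactly} with $g_i=\sin((i\kappa-1)\vartheta)$. No power of $\sin^2(\kappa\vartheta)$ is left over. Constants appear only because you normalise the $L$-row by $\sin\vartheta$ and the $R$-row by $\sin(\kappa\vartheta)$, and then try to match the two rows at a junction.

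You do see that the rows must agree on shared slopes. The missing idea is to drop the normalisations and read every coordinate as a value of one function $f(e)=\sin(\pi\langle v,e\rangle/N)$ on the Farey slopes, with $v=(-\kappa j,j)$. Then $x_{L_i}=f(e_1)^2/f(e_2+ie_1)^2$ and $1-x_{R_i}=f(e_2+ae_1)^2/f(e_1+i(e_2+ae_1))^2$. Adjacent runs are then literally the same function on the slopes they share. Each run-end equation becomes the identity in the proof of Theorem~\ref{prop:propagated}, with heights replaced by $f$ and the affine identity replaced by Ptolemy.

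Every factor enters either squared or as an opposite-edge product, so the signs $(-1)^j$ you were worried about drop out. The congruences $2\kappa\equiv-a$ and $b\kappa\equiv2$ enter only once, as $v\in\mathcal K_+$, through $f(\varphi e)=\pm f(e)$ at the wrap-around junction. This is exactly Theorem~\ref{thm:allwords}, whose proof does not use the present statement. The paper's proof simply specialises it to $L^aR^b$ at this $v$, and you could have cited it.

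Two further points are missing.
\begin{itemize}
\item Coprimality of $j$ and $N$ does not by itself keep the denominators nonzero. You must show $N\nmid i+\kappa$ for $1\le i\le a$, which holds since $\kappa+a=\tfrac12(N+a)<N$. You must also show $N\nmid i\kappa-1$ for $1\le i\le b$, which holds since $\kappa^{-1}\equiv\tfrac12(N+b)>b+1$.
\item Your Step~1 reduction via Proposition~\ref{prop:runreduce} assumes both runs have length at least $2$. The theorem allows $a=1$ or $b=1$, as in $LR^3$, $L^3R$ or $LR$. There a single $L$-site carries both the ramp and the $-1$ couplings, or there is no $R_2$. The general proof treats this singleton case separately.
\end{itemize}

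Your kernel-vector computation and the index vector in Step~3 are correct and agree with the paper.
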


\begin{proof}
This is the case $v=(-\kappa j,j)$ of Theorem~\ref{thm:allwords} below, whose proof does not use the present statement. Starting from the basis $(e_1,e_2)$ the $L$-run has kept slope $e_1$ and
middle slopes $e_2+ie_1$, with $\langle v,e_1\rangle=-j$ and $\langle v,e_2+ie_1\rangle=-(i+\kappa)j$, so $f(e_1)^2/f(e_2+ie_1)^2=
\sin^2\vartheta/\sin^2((i+\kappa)\vartheta)$; the $R$-run has kept slope $e_2+ae_1$ and middle slopes $e_1+i(e_2+ae_1)$, with
$\langle v,e_2+ae_1\rangle=-(a+\kappa)j\equiv \kappa j$ and $\langle v,e_1+i(e_2+ae_1)\rangle\equiv(i\kappa-1)j$. No denominator vanishes:
$\sin((i+\kappa)\vartheta)=0$ needs $i+\kappa\equiv0\pmod N$, and $\kappa+1\le i+\kappa\le \kappa+a<N$; $\sin((i\kappa-1)\vartheta)=0$ needs $i\equiv \kappa^{-1}\equiv2^{-1}b=\tfrac12(N+b)>b+1$.
And $v\in\mathcal K_+$:
$(\varphi+1)v=\begin{psmallmatrix}2+ab&a\\b&2\end{psmallmatrix}\begin{psmallmatrix}-\kappa j\\j\end{psmallmatrix}\equiv
\begin{psmallmatrix}(2\kappa+a)j\\(2-b\kappa)j\end{psmallmatrix}\equiv0\pmod N$. The run traces are read off the chains.
\end{proof}

For composite $N$ the same formulae can also be used at a non-unit $j$, $\gcd(j,N)>1$. They fail only where a denominator vanishes:
on the $L$-run when $(i+\kappa)j\equiv0\pmod N$, and on the $R$-run when $(i\kappa-1)j\equiv0\pmod N$. Those are classes with a coordinate
at infinity. Otherwise the formulae give a non-degenerate class of Theorem~\ref{thm:allwords} below. For instance at $L^5R^7$, with $N=39$,
$\kappa=17$ and $j=3$, no denominator vanishes. In either case the kernel vector $(-\kappa j,j)$ is defined.

The multiplier $\kappa$ is the same at every class of Theorem~\ref{thm:tworun}; it does not depend on $j$. It is fixed by the two run lengths, $a$ of the $L$-run and $b$ of the $R$-run:
$\kappa\equiv-2^{-1}a\equiv2b^{-1}\pmod N$, the two expressions agreeing because $ab\equiv-4\pmod N$. Conversely $a\equiv-2\kappa$ and $b\equiv2\kappa^{-1}$, so the multiplier records the run lengths modulo $N$. Distinct pairs $\pm j$ give distinct classes when $a\ge3$ or $b\ge3$: the run trace $\tau_L$, or $\tau_R$ (as $\kappa$ is a unit modulo $N$), then determines $\pm j$.

The theorem gives $\tfrac12\phi(N)$ classes, one for each pair $\pm j$ of units. For prime $N$ this is the whole of $(A_+\smallsetminus0)/\!\pm$.
For composite $N$ the non-unit $j$ can give further non-degenerate classes, as just seen; at $L^5R^7$ there are $18$, against
$\tfrac12\phi(39)=12$. When both runs have length at least $2$, Theorem~\ref{thm:complete} shows that every non-degenerate class is a
kernel-vector class of Theorem~\ref{thm:allwords} (Remark~\ref{rem:runsharp}). The theorem is the two-run case of the following, which is the general statement: the
non-degenerate Bethe classes of every word, in both sectors, are given by one trigonometric formula on the slopes of the layered triangulation.

Recall from Section~\ref{sub:gluing} that the word is read as a sequence of Farey flips starting from the basis $(e_1,e_2)$, the
tetrahedron $\Delta_t$ of the $t$-th flip having a \emph{kept} slope $\mathsf k_t$ ($a$ for an $L$-flip $(a,b)\mapsto(a,a+b)$, $b$ for an
$R$-flip $(a,b)\mapsto(a+b,b)$), the \emph{middle} slope $\mathsf m_t=a+b$, a dying slope ($b$, resp.\ $a$) and a newborn one ($2a+b$,
resp.\ $a+2b$); the kept and middle slopes are the two opposite edge-pairs of multiplicity two, the dying and newborn slopes the third
pair. Write $\langle v,e\rangle=\det(v,e)$ for $v,e\in\Z^2$.

\begin{theorem}[The Bethe classes of every word]
\label{thm:allwords}
Let $\varphi$ be a hyperbolic word of length $r$, let $N=|A_\mp|$ be the order of the sector $A_\mp$ in which we work, and let $v\in\mathcal K_\mp$, i.e.\ $(\varphi\mp1)v\equiv0\pmod N$. Put
\[
f(e)=\sin\frac{\pi\langle v,e\rangle}N\qquad(e\in\Z^2),
\]
and suppose $f(\mathsf m_t)\ne0$ for the middle slope $\mathsf m_t$ of every tetrahedron $\Delta_t$, $t=1,\dots,r$. Then, with $\mathsf k_t$ the kept slope
of $\Delta_t$,
\[
x_t=\frac{f(\mathsf k_t)^2}{f(\mathsf m_t)^2}\ \ (L\text{-flip}),\qquad x_t=1-\frac{f(\mathsf k_t)^2}{f(\mathsf m_t)^2}\ \ (R\text{-flip}),\qquad t=1,\dots,r,
\]
with $z_t=x_t^{-1/Q_t}$ on either branch, solves the Bethe equations \eqref{eq:bethe} at $\epsilon=0$. Along a run of length $a$ with kept
slope $p$, entered from a basis $\{p,q\}$, the coordinates are $f(p)^2/f(q+ip)^2$, $i=1,\dots,a$; on an $R$-run these are the frieze coordinates $1-x$. So the run's frieze trace is
$2\cos(\pi\langle v,p\rangle/N)$ and its index is $\iota_\rho(v)$ of \eqref{eq:ellrho}, up to sign: the index vector of the class is the index vector of
$v$ up to sign. Theorem~\ref{thm:tworun} is the case $\varphi=L^aR^b$, $v=(-\kappa j,j)$. The solution is a Bethe class with $\mathcal H\ne0$ under the non-degeneracy condition of Theorem~\ref{thm:Hall}, which gives its value of $\mathcal H$.
\end{theorem}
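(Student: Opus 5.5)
The proof will run parallel to that of Theorem~\ref{prop:propagated}, with the linear height $h(e)=\langle\eta,e\rangle$ replaced by the trigonometric function $f(e)=\sin(\pi\langle v,e\rangle/N)$. Write $\theta=\pi/N$ and $\phi(e)=\theta\langle v,e\rangle$, so $f(e)=\sin\phi(e)$ with $\phi$ linear in $e$. Within a run with kept slope $p$ entered from the basis $\{p,q\}$, the middle slopes form the fan $w_i=q+ip$. So $\phi(w_i)=\phi(q)+i\phi(p)$ is affine in $i$. Hence $g_i:=f(w_i)/f(p)$ is exactly the trigonometric frieze row of Remark~\ref{rem:friezedef}(ii) with $\vartheta=\phi(p)$, satisfying $g_i^2-g_{i-1}g_{i+1}=1$. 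The product-to-sum identity $\sin(A-B)\sin(A+B)=\sin^2A-\sin^2B$ replaces $h(u)h(u')=h(w)^2-h(a)^2$ throughout.

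Given this, I would check the Bethe equations \eqref{eq:bethex} column by column using the rows of $K$ from Theorem~\ref{thm:Kformula}, repeating verbatim the computation in the proof of Theorem~\ref{prop:propagated}. There are four cases: the $R$-columns, where the fan telescopes through the summation-by-parts identities; the middle $L$'s; the first and last $L$ of a run; and the singleton $L$.

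Every step in that proof used only three facts. The first is that within each run the heights of the fan are of the form $g_0+j\,h(\text{kept})$; here they are $\sin(\phi_0+j\vartheta)$. The second is the identity $1-h(a)^2/h(w)^2=h(w-a)h(w+a)/h(w)^2$; here it is the sine version. The third is the junction identities $h_{a+1}=h_a+g_0$ and $g'_1=g_b+k_1$, which are linear relations among slopes, $w+a=\text{next slope}$. The third transfers directly because $\phi$ is linear. The telescoping products $\prod X_{R_j}$ and $\prod X_{R_j}^{j}$ are identities for any sequence, so they also transfer.

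The sign ambiguity is harmless, since only squares $x=f^2/f^2$ enter the equations and the exponents are integers. The assumption $f(\mathsf m_t)\neq0$ guarantees that every coordinate is finite. One must also check that $f(\mathsf k_t)\ne0$; otherwise some $x_t=0$ or $1$. I would note that a vanishing kept slope would make neighbouring middle slopes $w\pm a$ vanish together with $w$, or otherwise handle it via the polynomial form of the equations.

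The main obstacle is the wrap-around. In Theorem~\ref{prop:propagated} periodicity came from $h(\varphi e)=\lambda h(e)$. Here I need $f(\varphi e)^2=f(e)^2$ for all $e$, so that the ratios on the developed word are periodic. Since $\langle v,\varphi e\rangle=\langle\varphi^{-1}v,e\rangle$, the kernel condition $(\varphi\mp1)v\equiv0$ gives $\varphi^{-1}v\equiv\pm v \pmod N$. Hence $\langle v,\varphi e\rangle\equiv\pm\langle v,e\rangle\pmod N$, and therefore $f(\varphi e)=\pm(\pm1)^{\text{shift}}f(e)$ with a single global sign per period. After squaring, the ratios are exactly periodic. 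I expect this to be the step requiring the most care for even $N$ and the sign bookkeeping, but squaring removes it.

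The statement about traces and indices then follows by reading off $\vartheta_\rho=\phi(p)$ and transporting $p$ back by $M_{i_\rho}^{-1}$. The invariant coordinate is exactly $\iota_\rho(v)$, up to sign from the orientation of $\det$. The final sentence about $\mathcal H$ is deferred to Theorem~\ref{thm:Hall}.
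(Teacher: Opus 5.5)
Your route is the paper's. Its proof redoes the computation of Theorem~\ref{prop:propagated} with $h$ replaced by $f$, and it uses exactly your ingredients:
\begin{itemize}
\item the Ptolemy identity $f(x)^2-f(y)^2=f(x-y)f(x+y)$, in place of its linear analogue;
\item periodicity of $f^2$ under $\varphi$, which follows from $\varphi^{-1}v\equiv\pm v\pmod N$;
\item the evenness of $K$, so that the equations see the $L$-coordinates only through $x_L$.
\end{itemize}
One step of your plan fails as written and needs replacing.

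\emph{Non-vanishing of $f$ on the kept slopes.} Your argument for $f(\mathsf k_t)\ne0$ does not work. If $\langle v,a\rangle\equiv0\pmod N$, then $f(w\pm a)=\pm f(w)$, so the neighbouring middle slopes do not vanish and no contradiction arises. Falling back on ``the polynomial form'' does not rescue it either. A value $x_t=0$ at an $L$-step, or $x_t=1$ at an $R$-step, gives no solution $z\in(\C^\times)^r$ of \eqref{eq:bethe}. Your verification also divides by $f$ on kept and dying slopes throughout.

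The missing idea is combinatorial: every slope of every tetrahedron is the middle slope of some tetrahedron of the developed word.
\begin{itemize}
\item The newborn slope of step $t$ is the middle slope of step $t+1$.
\item The kept slope of a run is the middle slope of the last step of the preceding run, for instance $q'=c_a$ for an $R$-run following an $L$-run.
\item Any basis vector at step $t$, in particular the dying slope, is the middle slope of the last flip that replaced that column. Such a flip exists because the word contains both letters.
\end{itemize}
When that flip lies in the previous period, apply your periodicity of $f^2$. So the hypothesis on middle slopes already gives $f\ne0$ on all four slopes of every tetrahedron, and hence $x_t\notin\{0,1,\infty\}$ everywhere.

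\emph{A minor correction.} The periodicity sign is not a single global sign per period. Write $\varphi^{-1}v=\pm v+Nw$ with $w\in\Z^2$. Then $f(\varphi e)=\pm(-1)^{\langle w,e\rangle}f(e)$, and this sign depends on $e$. The squares satisfy $f(\varphi e)^2=f(e)^2$ exactly, which is all the shapes need, so your wrap-around conclusion stands.
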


The hypothesis gives $f\ne0$ on all four slopes of every tetrahedron. This is shown from the word alone in the first paragraph of the proof of Theorem~\ref{thm:Hall} (Appendix~\ref{app:Hall}). That the solution is a Bethe class, a nondegenerate limit of the deformed system, also needs $\det(K+D)\ne0$, by the implicit function theorem ($K+D$ being the Jacobian of the system in logarithmic form); Theorem~\ref{thm:Hall} gives this when $\det(F+D_R)\ne0$, since then $\mathcal H\ne0$.

\begin{proof}
Three facts about $f$. (a) \emph{Ptolemy}: for any $a,b\in\Z^2$, $f(a+b)^2-f(a)^2=f(b)f(2a+b)$ and $f(a+b)^2-f(b)^2=f(a)f(a+2b)$, by
$\sin^2X-\sin^2Y=\sin(X-Y)\sin(X+Y)$. So at an $L$-flip $1-x_t=f(b)f(2a+b)/f(a+b)^2$ and at an $R$-flip $x_t=f(a)f(a+2b)/f(a+b)^2$: every
shape is a ratio of two opposite-edge products. (b) \emph{Periodicity}: $\langle v,\varphi e\rangle=\langle\varphi^{-1}v,e\rangle$ and
$\varphi^{-1}v\equiv\pm v\pmod N$, so $f(\varphi e)=\pm f(e)$ up to a sign $(-1)^{\langle w,e\rangle}$, $Nw=\varphi^{-1}v\mp v$; on the
squares $f(e)^2$ and on the opposite-edge products $f(b)f(2a+b)$, $f(a)f(a+2b)$ the sign is $+$, since $\langle w,b\rangle+\langle
w,2a+b\rangle$ is even. Hence the shapes $x_t$ are well defined on the cyclic word. (c) Every entry of $K$ is even
(Theorem~\ref{thm:Kformula}), so \eqref{eq:bethe} reads $(1-x_b)^{Q_b}=\prod_ax_a^{Q_b(\Kraw)_{ab}}$ and involves the $L$-coordinates only
through $z_L^2=1/x_L$; where a half-integer power $x_L^{-1/2}$ appears below it stands for $f(\mathsf m_t)/f(\mathsf k_t)$ with its sign (and $x_L^{1/2}$ for its inverse), and only
its square matters.

Now fix an $L$-run of length $a$ with kept slope $p$, entered from the basis $(p,q)$, so that its tetrahedra have middle slopes
$c_i=q+ip$ and $x_{L_i}=f(p)^2/f(c_i)^2$; let the $R$-run following it have length $b$, kept slope $c_a$ and middle slopes
$g_j=p+jc_a$ ($g_0=p$, $g_1=c_{a+1}$), so $x_{R_j}=1-f(c_a)^2/f(g_j)^2=f(g_{j-1})f(g_{j+1})/f(g_j)^2$; and let the $R$-run preceding it
have length $b'$, kept slope $q$ and middle slopes $e_j=p'+jq$ with $e_{b'}=p$, $e_{b'+1}=c_1$, so that the last $L$ before that run has
$x=f(p')^2/f(q)^2$. By Theorem~\ref{thm:Kformula} the entries of $\Kraw=Q^{-1}KQ^{-1}$ are: $(\Kraw)_{L_iL_i}=1$ and
$(\Kraw)_{L_iL_{i\pm1}}=-\tfrac12$ at an interior $L_i$; at the first $L$ of the run, $(\Kraw)_{L_1L_1}=1+\tfrac{b'}2$, $-\tfrac12$ to $L_2$
and to the last $L$ of the preceding $L$-run, and the ramp $(\Kraw)_{L_1R'_j}=j$ to the $j$-th $R$ of the preceding run; at the last $L$,
$(\Kraw)_{L_aL_a}=1$, $-\tfrac12$ to $L_{a-1}$ and to the first $L$ of the following $L$-run, and $(\Kraw)_{L_aR_j}=-1$ to every $R$ of the
following run; and $(\Kraw)_{R_iR_k}=2\min(i,k)$ within a run.

\emph{Interior $L$-sites:} $(1-x_i)^2=x_i^2/(x_{i-1}x_{i+1})$ is the square of $f(c_{i-1})f(c_{i+1})/f(c_i)^2$ by $f(c_{i-1})f(c_{i+1})=f(c_i)^2-f(p)^2$, which is (a).

\emph{$R$-sites:} summation by parts gives
\[
\prod_ix_{R_i}^{2\min(i,k)}=\Bigl(\frac{f(g_0)}{f(g_k)}\Bigr)^{2}\Bigl(\frac{f(g_{b+1})}{f(g_b)}\Bigr)^{2k},
\]
so the equation at $R_k$, $1-x_{R_k}=x_{L'_1}^k\,x_{L_a}^{-1}\prod_ix_{R_i}^{2\min(i,k)}$ with $L'_1$ the first $L$ of the next run, reads
$f(c_a)^2=x_{L_a}^{-1}f(g_0)^2\bigl[x_{L'_1}(f(g_{b+1})/f(g_b))^2\bigr]^k$. Both brackets are identities: $x_{L_a}=f(p)^2/f(c_a)^2$ and
$g_0=p$; and $L'_1$ is the $L$-flip at the basis $(g_b,c_a)$, so $x_{L'_1}=f(g_b)^2/f(g_{b+1})^2$.

\emph{The last $L$ of a run:} $(1-x_{L_a})/x_{L_a}=f(c_{a-1})f(c_{a+1})/f(p)^2$ by (a); the right side of the equation is
\[
x_{L_{a-1}}^{-1/2}x_{L'_1}^{-1/2}\prod_jx_{R_j}^{-1}=\frac{f(c_{a-1})}{f(p)}\cdot\frac{f(g_{b+1})}{f(g_b)}\cdot\frac{f(g_1)f(g_b)}{f(g_0)f(g_{b+1})}
=\frac{f(c_{a-1})f(g_1)}{f(p)^2},
\]
and $g_1=c_{a+1}$.

\emph{The first $L$ of a run:} $(1-x_{L_1})/x_{L_1}=f(c_0)f(c_2)/f(p)^2$; the right side is the product
$x_{L_1}^{b'/2}x_{L_2}^{-1/2}x_{L_{\mathrm{prev}}}^{-1/2}\prod_jx_{R'_j}{}^{j}$, where
\[
\prod_jx_{R'_j}{}^{j}=\Bigl(\frac{f(e_{b'+1})}{f(e_{b'})}\Bigr)^{b'}\frac{f(e_0)}{f(e_{b'})}
\]
by the same summation by parts; $x_{L_1}^{b'/2}(f(e_{b'+1})/f(e_{b'}))^{b'}=(f(p)/f(c_1))^{b'}(f(c_1)/f(p))^{b'}=1$, and what is left is
$\frac{f(c_2)}{f(p)}\cdot\frac{f(q)}{f(p')}\cdot\frac{f(p')}{f(p)}=\frac{f(q)f(c_2)}{f(p)^2}$, with $c_0=q$.

(For a run of length $1$ the first and last $L$ coincide; its equation carries both the ramp and the $-1$ couplings, and the two
computations combine: the right side is $\frac{f(p')}{f(p)}\cdot\frac{f(q)}{f(p')}\cdot\frac{f(g_{b+1})}{f(g_b)}\cdot\frac{f(g_1)f(g_b)}{f(g_0)f(g_{b+1})}
=\frac{f(q)f(c_2)}{f(p)^2}$ again.) Every equation is therefore an identity between values of $f$ on slopes of the cyclic word, and the only place the hypothesis on $v$
enters is (b), which makes $f$ single-valued on the slopes when the word closes up. The run traces are read off $f(c_i)=\sin(\pi(\langle
v,q\rangle+i\langle v,p\rangle)/N)$, and $\langle v,p\rangle$ is the coordinate of $v$ in the basis $(p,q)$ that the $L$-flip $(p,q)\mapsto(p,p+q)$
leaves invariant.
\end{proof}

The vectors $v$ with some $f(\mathsf m_t)=0$ --- $\langle v,\mathsf m_t\rangle\equiv0\pmod N$ --- give classes with a degenerate coordinate; at $L^mR^m$ with $m$ prime this is proved (Theorem~\ref{thm:lines}), for other words the labelling of degenerate classes by kernel vectors is numerical (Observation~\ref{obs:kervec}). This is where the
degenerate $A_-$ sector of Proposition~\ref{prop:Hspectrum} comes from. At $L^mR^m$ with $m$ prime every nonzero $v\in\mathcal K_-$ is of this
kind. So the theorem supplies, for every word, one non-degenerate Bethe class for each $\pm v$ with no vanishing $f(\mathsf m_t)$, with its coordinates
in closed form. That these are \emph{all} the non-degenerate classes is Theorem~\ref{thm:complete} for words whose runs all have length
$\ge2$. For the remaining words it rests on the harvests and the count. What the theorem
does not give are the degenerate classes, such as the whole $A_-$ sector of $L^mR^m$ for $m$ a prime power.

Two consequences of Theorem~\ref{thm:tworun}. Transposing swaps $a$ and $b$, and
$\kappa(b,a)=2a^{-1}\equiv-b/2\equiv-\kappa(a,b)^{-1}$ by $ab\equiv-4$: transposition replaces the multiplier by $-\kappa^{-1}$, which is the same multiplier up to the sign and inversion by which $\kappa$ is defined. And for $a=b$ --- the amphichiral case --- Theorem~\ref{thm:tworun} gives $\kappa^2\equiv4a^{-2}$,
while $N=a^2+4$ gives $a^2\equiv-4$, so
\[
\kappa^2\equiv\frac4{-4}=-1\pmod N .
\]
Conversely, $\kappa^2\equiv-1$ means $b^2\equiv-4\equiv ab$, so $b(b-a)\equiv0\pmod N$; since $b$ is odd it is prime to $N=ab+4$, so $a\equiv b$, and $a=b$ because $1\le a,b<N$. So $\kappa^2\equiv-1$ holds at a two-run word exactly when it is amphichiral, a consequence of Theorem~\ref{thm:tworun}. The formula for $\kappa$ there is a case of the following, which replaces the multiplier
by the object it is a ratio of.

\begin{observation}[The index vector is the kernel vector of the monodromy]
\label{obs:kervec}
\textup{(i)} For every non-identity Bethe class $\alpha$ of the sector $A_\mp$, of order $N=|A_\mp|$, there is a kernel vector $v_\alpha\in\mathcal K_\mp$ with
\[
j_\rho(\alpha)\equiv\pm\,\iota_\rho(v_\alpha)\pmod N\qquad\text{for every run $\rho$ carrying an index.}
\]
\textup{(ii)} Where $N$ is a prime power and the sector is cyclic, $v_\alpha$ is determined up to sign and $\alpha\mapsto\pm v_\alpha$ is
injective; in every sector whose harvest is complete (all its classes found) it is a bijection onto $(\mathcal K_\mp\smallsetminus0)/\!\pm$, the identity class taking $v=0$.
\textup{(iii)} The spins of the vacua of the class, defined by \eqref{eq:spindef} through the fibering operator, agree modulo $\tfrac14$. Modulo $\tfrac14$ they are given by a quadratic form in $v_\alpha$: with $v_\alpha$ lifted to $\Z^2$, for every vacuum $V$ of the class,
\begin{equation}
\operatorname{spin}(V)\equiv\pm\frac{v_\alpha\wedge\varphi v_\alpha}{4N^2}+q_\mp^{*}\pmod{\tfrac14},
\label{eq:kerspin}
\end{equation}
with $N=|A_\mp|$ as above and $q_\mp^{*}$ the spin of $V^*$, resp.\ $\sigma V^*$, for the sector, so that \eqref{eq:kerspin} holds at $v=0$ too, and with
the sign $-$ on $A_+$ and $+$ on $A_-$. Formula \eqref{eq:kerspin} is a property of the spin, not its definition.

\textup{Status.} At the non-degenerate classes the following are proved.
Part (i) holds for words with no singleton run: by Theorem~\ref{thm:complete} every non-degenerate class is a class of Theorem~\ref{thm:allwords}, whose index vector is that of its kernel vector. Conversely, every non-degenerate kernel vector occurs, by Theorem~\ref{thm:allwords}.
In (ii), injectivity holds for words with no singleton run, since Theorem~\ref{thm:complete} determines $v$ up to sign.
Part (iii) holds for odd trace, by Theorem~\ref{thm:rogers} together with the reduction after Observation~\ref{obs:basevac}.
The rest of (ii), and (i) and (iii) at the degenerate classes, are numerical.
\end{observation}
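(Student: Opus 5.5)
Observation~\ref{obs:kervec} is partly numerical, so the plan covers only its provable core at the non-degenerate classes, in the order of the Status paragraph, and leaves the degenerate statements to harvests. For (i), take a word with no singleton run and a non-identity non-degenerate class $\alpha$. By Theorem~\ref{thm:complete}, $\alpha$ is the class of Theorem~\ref{thm:allwords} attached to some $v\in\mathcal K_\mp$ with $f(\mathsf m_t)\ne0$ for every $t$. Theorem~\ref{thm:allwords} already computes its frieze traces: on a run with kept slope $p$ the trace is $2\cos(\pi\langle v,p\rangle/N)$. So it remains to identify $\langle v,p\rangle$ with $\pm\iota_\rho(v)$. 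Write $p=M_{i_\rho}e_1$ on an $L$-run and $p=M_{i_\rho}e_2$ on an $R$-run. Since $\det M=1$,
\[
\langle v,M e\rangle=\langle M^{-1}v,e\rangle,
\]
and this is $\pm$ the second, respectively first, coordinate of $M_{i_\rho}^{-1}v$, which is \eqref{eq:ellrho}. The sign is absorbed by the branch ambiguity of Remark~\ref{rem:friezesign}. Comparing with \eqref{eq:tauquant} gives $j_\rho\equiv\pm\iota_\rho(v)$. Conversely, each non-degenerate $v$ yields such a class, again by Theorem~\ref{thm:allwords}.

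For the injectivity in (ii), suppose two kernel vectors $v,v'$ give the same class, so $f_v^2=f_{v'}^2$ on all kept and middle slopes. On a run of length $\ge2$ the middle slopes $q+ip$ give consecutive values of a frieze row. Two consecutive values and the trace determine $\langle v,p\rangle$ and $\langle v,q\rangle$ modulo $N$, up to a common sign. This is the uniqueness step inside Theorem~\ref{thm:complete}, which I would cite rather than redo. Since $p,q$ form a basis of $\Z^2$, $v\equiv\pm v'$. Surjectivity onto $(\mathcal K_\mp\smallsetminus0)/\!\pm$ is left numerical, because it needs the degenerate classes.

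For (iii) at odd trace, write $v=\operatorname{adj}(\varphi\mp1)u$ and invoke Theorem~\ref{thm:rogers}, which gives spin $\equiv\#R/24-\Phi(u)/4N$ modulo $\tfrac14$. Propositions~\ref{prop:signsum} and~\ref{prop:spinq} show that the spins within a class differ by $\tfrac18\varepsilon^{\mathsf T}K_{LL}\varepsilon$. Since $K_{LL}$ is even, these differences lie in $\tfrac14\Z$, so all vacua of the class agree modulo $\tfrac14$. The remaining step, which I expect to be the main obstacle because of the bookkeeping of signs and of the choice of lift, is a $2\times2$ identity. One has to show that $v\wedge\varphi v/N^2$, with $v=\operatorname{adj}(\varphi\mp1)u$, equals $\mp\Phi(u)/N$ modulo $\Z$, up to the sign stated in \eqref{eq:kerspin}. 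To do this, use $\operatorname{adj}(\varphi\mp1)=\pm\varphi^{-1}(\varphi\mp1)$ as in the proof of Proposition~\ref{prop:linking}. Then expand the quadratic form, and check that the result is independent of the lift of $v$ to $\Z^2$, using the symmetry of $b_\mp$. Finally, fix the constant $q^*_\mp$ by evaluating at $v=0$, which gives $V^*$ in one sector and $\sigma V^*$ in the other.
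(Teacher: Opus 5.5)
Your arguments for (i) and for the injectivity in (ii) are the paper's. Theorem~\ref{thm:complete} places every non-identity non-degenerate class of a word with no singleton run at a kernel vector of Theorem~\ref{thm:allwords}. Your computation $\langle v,M_{i_\rho}e\rangle=\langle M_{i_\rho}^{-1}v,e\rangle$ is the identification of the run index with $\iota_\rho(v)$ that is already stated in that theorem. Injectivity is the determination of $v$ up to sign in Theorem~\ref{thm:complete}. Phrase it through the normalised row $f(q+ip)/f(p)$: a class fixes only the ratios $f(\mathsf k_t)^2/f(\mathsf m_t)^2$, not $f_v^2$ itself. For (iii), turning Theorem~\ref{thm:rogers} into \eqref{eq:kerspin} is essentially the paper's reduction after Observation~\ref{obs:basevac}. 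The sign bookkeeping is easier than you expect: $v\wedge\varphi v=v\wedge(\varphi\mp1)v=\det(\varphi\mp1)\,v\wedge u=\mp N\Phi(u)$, with $v\wedge u=u^{\mathsf T}J(\varphi\mp1)u=\Phi(u)$. Your $\operatorname{adj}(\varphi\mp1)=\pm\varphi^{-1}(\varphi\mp1)$ has the wrong sign (it is $\mp$), but the expression is quadratic in $v$, so this does not matter.

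The step that does not go through as written is fixing $q^*_\mp$ ``by evaluating at $v=0$''. The constant is not free. Theorem~\ref{thm:rogers} fixes it as $\#R/24$, while the statement defines $q^*_\mp$ independently as the spin of $V^*$, resp.\ $\sigma V^*$. So you must prove $\operatorname{spin}(V^*)\equiv\operatorname{spin}(\sigma V^*)\equiv\#R/24\pmod{\tfrac14}$. Nothing you cite gives this. At $v=0$ the function $f$ vanishes identically, so neither Theorem~\ref{thm:allwords} nor Theorem~\ref{thm:rogers} applies; the identity classes come from the eigenvectors, not from the sine formula. The missing input is Theorem~\ref{thm:ceff}. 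It gives the Rogers sum $c_\alpha=\sum_bL(x_b)/L(1)=\#R$ at the class of $V^*$, and also at the class of $\sigma V^*$ through the defect cancellation of Lemma~\ref{lem:crossings}. At both classes $\operatorname{Im}Z_a\in\{0,\pi\}$, so the factor $e^{n\cdot Z/2}$ in \eqref{eq:Fphase} is a fourth root of unity, and the spin is $c_\alpha/24=\#R/24$ modulo $\tfrac14$. This is how the paper closes (iii). Your remark that the spins of a class agree modulo $\tfrac14$ is right, but the within-class differences are $q(\varepsilon)+\tfrac12n_L\cdot\varepsilon$ by Proposition~\ref{prop:spinq}(i), not $q(\varepsilon)$ alone. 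This is harmless: both terms lie in $\tfrac14\Z$, and Theorem~\ref{thm:rogers} already states the congruence at all $2^{\#L}$ vacua.
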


In more detail: for every $v\in\mathcal K_\mp$ with
no vanishing denominator, Theorem~\ref{thm:allwords} constructs a Bethe class whose index vector is $(\iota_\rho(v))$. Theorem~\ref{thm:rogers} gives \eqref{eq:basevac} modulo $\tfrac14$
at these classes, and \eqref{eq:basevac} reduces to \eqref{eq:kerspin}. The numerical parts rest on the harvests. Parts (i) and (ii) were checked on fourteen words and $28$ sectors, $233$ classes in all.
These include the non-cyclic $A_-$ sectors of the odd two-run words, and three cyclic sectors of four-run words: $L^2R^3L^2R^2$ in $A_+$ ($N=48$), and $L^2R^2L^3R^3$ and
$L^3R^2L^2R^3$ at trace $63$ ($N=61$ and $65$). On these a class carries four indices, and the kernel vector is overdetermined by two
coordinates. Part (iii) was checked on the $23$ sectors in which $v_\alpha$ is unambiguous, $217$ classes, with $q^*_\mp$ the harvested
identity-class spin in each. In the non-cyclic $A_-$ sectors $\mathcal K_-$ is not generated by a single vector, $v_\alpha$ is determined only
up to that ambiguity, and the parabolic classes are exactly those with a vanishing coordinate, which is why no single multiplier $\kappa$
existed there. The coordinate $\iota_\rho$ in (i) is forced: of the four candidates --- either coordinate of the transported vector at either
end of the run --- the invariant one is the only one that fits in every sector. Observation~\ref{obs:basevac} below states (iii) modulo $1$ at the classes of Theorem~\ref{thm:allwords}; that refinement is numerical.

Three things follow from (i)--(iii), the first from the proved part alone. First, the multiplier between any two consecutive runs of any word
is a formula in the word,
\[
\kappa_\rho=\iota_{\rho+1}(v)/\iota_\rho(v),
\]
independent of $v$ in a cyclic sector, when $\iota_\rho(v)$ is a unit modulo $N$; at $L^aR^b$, $A_+$, the kernel condition reads $-2x+ay\equiv0$, so $\iota_L=y$ and
$\iota_R=x-ay\equiv-2^{-1}ay$, which is Theorem~\ref{thm:tworun}. Second, since $\mathcal K_\mp\cong A_\mp$, the map $\alpha\mapsto\pm v_\alpha$ is a
labelling of the sector's classes by $A_\mp/\!\pm$ that is read off the word, canonically up to the sign --- which is the labelling
Section~\ref{sec:modular} uses.
Third, the form in \eqref{eq:kerspin} is the torsion linking form of the mapping torus: for $v\in\mathcal K_-$ put
$u=(\varphi-1)v/N\in\Z^2$, the class in $A_-=\coker(\varphi-1)$ that $v$ determines; then $u\wedge(\varphi-1)^{-1}u=-v\wedge\varphi v/N^2$,
and $\operatorname{lk}(a,b)=a\wedge(\varphi-1)^{-1}b\bmod1$ is the linking form of $M_\varphi$ on the torsion of $H_1$ (Remark~\ref{rem:lktop}). So \eqref{eq:kerspin} says
$\operatorname{spin}(V)\equiv-\tfrac14\operatorname{lk}(u_\alpha,u_\alpha)+q^*_-$ for every vacuum $V$ of $\alpha$ on the $A_-$ sector (the sign is the orientation convention in $\operatorname{lk}$), and $+\tfrac14\operatorname{lk}_+(u_\alpha,u_\alpha)+q^*_+$ on $A_+$, where $\operatorname{lk}_+(a,b)=a\wedge(\varphi+1)^{-1}b\bmod1$ and $u_\alpha=(\varphi+1)v_\alpha/N_+$. That the Bethe spins carry a linking form is Lemma~\ref{lem:disc}, and the template label is Corollary~\ref{cor:templatelabel}; which linking form, and at which
element, is what the kernel vector $v_\alpha$ adds. The factor $\tfrac14$ and the modulus $\tfrac14$ are the sign group of
Section~\ref{sec:spins}, whose $2^{\#L}$ vacua per class differ by $q(\varepsilon)$; separating the two is the quadratic refinement of
Observation~\ref{obs:e1even}. The kernel vector alone does not supply it: lifting $v_\alpha$ from
$\Z/N$ to $\Z/2N$ changes $v\wedge\varphi v/4N^2$ by an element of $\tfrac14\Z$ (the cross term is divisible by $N$ because
$\varphi v\equiv\pm v$), so \eqref{eq:kerspin} is insensitive to the lift exactly to the extent that it is stated. What supplies it
is the base vacuum $\varepsilon_0$ of Proposition~\ref{prop:spinq}, which singles out one vacuum of the class, and the modular
inverse of $4$ in place of the rational $\tfrac14$.

\begin{observation}[The base-vacuum rule: the spin modulo $1$]
\label{obs:basevac}
Let $\tr\varphi$ be odd. \textup{(i)} \textup{(}Proved.\textup{)} $E_{LL}=\tfrac12K_{LL}$ is invertible over $\F_2$ for every word of odd trace and singular for
every word of even trace (Lemma~\ref{lem:parity}), so the base Bethe vacuum $V_{\varepsilon_0}$ of
Proposition~\ref{prop:spinq} is defined in every class
of every word of odd trace. (It is the Bethe vacuum of the class from which the spin function is exactly $q$.) \textup{(ii)} At the class of Theorem~\ref{thm:allwords} with kernel vector $v\in\mathcal K_\mp$,
$N=|A_\mp|$ and $w=(\varphi\mp1)v/N\in\Z^2$,
\begin{equation}
\operatorname{spin}(V_{\varepsilon_0})\equiv\frac{\#R}{24}\pm[4^{-1}]_N\,\frac{v\wedge w}{N}\pmod1,
\label{eq:basevac}
\end{equation}
where $[4^{-1}]_N$ is the inverse of $4$ modulo $N$ and the sign is $+$ on $\mathcal K_-$ and $-$ on $\mathcal K_+$; so the spin function on the whole class is
$\#R/24\pm[4^{-1}]_N\,v\wedge w/N+q(\varepsilon+\varepsilon_0)$.
\end{observation}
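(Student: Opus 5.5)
The plan is to treat the two parts separately. Part (i) is only a citation of Lemma~\ref{lem:parity}. Part (ii) is a direct evaluation of the fibering operator at the explicit solution of Theorem~\ref{thm:allwords}, refining Theorem~\ref{thm:rogers} from modulus $\tfrac14$ to modulus $1$.

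\emph{Part (i).} By Lemma~\ref{lem:parity}, $E_{LL}=\tfrac12K_{LL}$ is invertible over $\F_2$ exactly when $\tr\varphi$ is odd. Proposition~\ref{prop:spinq} then produces a unique $\varepsilon_0$ for which the spin function of the class is $q(\varepsilon+\varepsilon_0)$ plus a constant. So $V_{\varepsilon_0}$ is defined in every class, and nothing else is needed.

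\emph{Part (ii): reduction to two terms.} As in the proof of Theorem~\ref{thm:farey}, I will write
\[
2\pi\,\operatorname{spin}(V)=\operatorname{Re}\sum_b\tilde L(x_b)+\pi\operatorname{Im}(n\cdot Z),
\]
where $n$ is the vector of branch integers of the log-form Bethe equations at $V$. This splits (ii) into the Rogers sum of the class and a branch term, and I will compute each from the sine heights $f(e)=\sin(\pi\langle v,e\rangle/N)$.

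For the Rogers sum I will rerun the telescoping \eqref{eq:rogtel} with the positive height replaced by $f$. Each step is still a duplication/reflection identity in the ratio $t=f(c_1)/f(c_2)$ of the current Farey basis. The extended dilogarithm contributes a defect $\pm\pi^2$ at every step where $t$ crosses $-1$, $-\tfrac12$ or $-2$, exactly as in Lemma~\ref{lem:crossings}. Unlike at $\sigma V^*$, these defects need not cancel, and the closing term $G(t_0)-G(t_r)$ need not vanish. Instead, periodicity (b) in the proof of Theorem~\ref{thm:allwords} gives $t_r=t_0$ only up to the sign $(-1)^{\langle w,e\rangle}$. I expect the total defect count to be a signed count of sign changes of $f$ along the Farey slopes. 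Since $\langle v,e\rangle$ is linear in $e$ and the word closes under $\varphi$, that count should be expressible through $v\wedge w$. This is the mechanism that should give $\#R\,L(1)$ plus a term in $v\wedge w/N$.

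For the branch term, the imaginary parts of $Z_b$ and $\log(1-x_b)$ are determined by the signs of $f$ on the kept and middle slopes of each tetrahedron. I will then solve for $n_b$ site by site using the rows of $K$ from Theorem~\ref{thm:Kformula}, as in Proposition~\ref{prop:fluxall}. After that, $\tfrac12\operatorname{Im}(n\cdot Z)$ reduces to a sum over sign-change sites weighted by entries of $K$.

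\emph{The main obstacle: fixing the spin modulo $1$.} The quantity $v\wedge w/N$ already appears modulo $\tfrac14$ in Theorem~\ref{thm:rogers}. The genuinely new content is that the base vacuum picks the factor $[4^{-1}]_N$, a modular inverse of $4$, rather than the rational $\tfrac14$. My plan is to lift $v$ to $\Z/4N$ in the unique way compatible with $\varepsilon_0$, meaning that $n_L\equiv E_{LL}\varepsilon_0$ matches the parity pattern singled out by Proposition~\ref{prop:spinq}. I will then show that both contributions become a single quadratic expression $\langle v,w\rangle\cdot a/N$ with $4a\equiv1\pmod N$.

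I would first test this on the two-run words $L^aR^b$ using Theorem~\ref{thm:tworun}, where every sign change of $f$ is explicit. There the computation reduces to counting the $i\le a$ with $\lfloor(i+\kappa)j/N\rfloor$ odd, and similarly on the $R$-run. The general word should then follow run by run, since both the defects and the branch integers are local to runs and their ends.
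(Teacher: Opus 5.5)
Part (i) is correct and is exactly the paper's argument (Lemma~\ref{lem:parity} with Proposition~\ref{prop:spinq}(ii)). For (ii), note first that the paper does not prove \eqref{eq:basevac} modulo $1$.

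Since $\pm v\wedge w=-\Phi(u)$ for $v=\operatorname{adj}(\varphi\mp1)u$, the right side of \eqref{eq:basevac} is $\#R/24+q_\mp(u)$. What the paper proves is only the congruence modulo $\tfrac14$, via Theorem~\ref{thm:rogers}. The statement modulo $1$ is a numerical check on $20$ words, $1440$ classes.

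Your plan does not close this gap. The first problem is that the Rogers-sum step fails as described.
\begin{itemize}
\item The telescoping \eqref{eq:rogtel} uses $L(s)=G(t)-G(t/(1+t))$ with $\sqrt s=t/(1+t)$. That identification is the additivity $h(c_1+c_2)=h(c_1)+h(c_2)$ of the height. The same additivity is what makes the next basis ratio the M\"obius image $t/(1+t)$ or $t+1$.
\item For $f(e)=\sin(\pi\langle v,e\rangle/N)$ one has $\sqrt s=f(c_1)/f(c_1+c_2)\ne f(c_1)/\bigl(f(c_1)+f(c_2)\bigr)$. So neither identification holds, nothing telescopes in $t=f(c_1)/f(c_2)$, and the crossing analysis of Lemma~\ref{lem:crossings}, which follows a M\"obius orbit, does not apply.
\item What is additive is the angle $\vartheta(e)=\pi\langle v,e\rangle/N$. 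This is why the paper telescopes in the complex variable $z(\alpha,\beta)=(e^{2i\alpha}-1)/(e^{2i(\alpha+\beta)}-1)$, with $|z|^2=s$. It uses the primitive $\mathcal G=2\operatorname{Re}R(z)+\alpha\arg z+\beta\arg(1-z)$ and the five-term step identities of Lemma~\ref{lem:steps}.
\item In that argument the term $-\pi^2\Phi(u)/N$ comes from the argument terms of $\mathcal G$ at closing, through $\vartheta(\varphi e)=\upsilon\vartheta(e)+\pi\langle\tilde w,e\rangle$. A tally of real-dilogarithm defects, each in $\pi^2\Z$, cannot by itself produce a term with denominator $N$. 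So your mechanism could not give the right answer even in shape.
\end{itemize}

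The second problem is that the passage from modulus $\tfrac14$ to modulus $1$ is asserted, not derived.
\begin{itemize}
\item The normalisation should read $4\pi^2\operatorname{spin}(V)=\operatorname{Re}\sum_b\tilde L(x_b)+\pi\operatorname{Im}(n\cdot Z)$; yours is off by a factor $2\pi$.
\item With that normalisation, the open quantity is the $\Z/4$-valued discrepancy between $-\Phi(u)/4N$ and $-[4^{-1}]_N\Phi(u)/N$. To fix it you need $\operatorname{Re}\sum_bR(x_b)$ modulo $4\pi^2$ on the branches of the log-form Bethe equations, including the integer $n_0$ of \eqref{eq:Fphase}. That is the Rogers sum modulo $24$ rather than $6$. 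You also need $\tfrac12\operatorname{Im}(n\cdot Z)$ modulo $2\pi$ at $V_{\varepsilon_0}$.
\item The step identities \eqref{eq:steps} hold only modulo $\pi^2\Z$, so they discard exactly this information.
\item A proof would need refined step identities recording the integer multiple of $\pi^2$ at each step. By the proof of Lemma~\ref{lem:steps}, that integer depends on which component of the complement of the lines $\alpha,\beta,\alpha+\beta,2\alpha+\beta\in\pi\Z$ contains the step. These integers would have to be matched to the branch integers $n_b$. Then a global count would have to show that at $\varepsilon_0=E_{LL}^{-1}n_L$ the total is the required class.
\end{itemize}
A sign-change count like the one you propose for $L^aR^b$ is the right kind of object for this $\Z/4$ part, but only once it is tied to such refined identities. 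Your ``lift $v$ to $\Z/4N$ compatibly with $\varepsilon_0$'' and ``a single quadratic expression with $4a\equiv1\pmod N$'' restate the target. Nothing in the plan links the parity condition defining $\varepsilon_0$ to the modular inverse $[4^{-1}]_N$, and that link is the missing idea.
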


The right side of \eqref{eq:basevac} is a function on $\mathcal K_\mp/\!\pm$: under $v\mapsto v+Nu$, $v\wedge w$ changes by
$\langle v,(\varphi\mp1)u\rangle+N(\dots)$, and $\langle v,(\varphi\mp1)u\rangle=\langle(\varphi^{-1}\mp1)v,u\rangle\equiv0$. It
is the quadratic function $Q$ with $Q(kv)=k^2Q(v)$ whose bilinear form $Q(v+v')-Q(v)-Q(v')=\pm2[4^{-1}]_N\,v\wedge w'/N$ is a
unit multiple of the torsion linking form above ($v\wedge w'-v'\wedge w=\pm\det(v,v')$, and $\det(v,v')\equiv0\pmod N$ since $v=\operatorname{adj}(\varphi\mp1)u$ gives $\det(v,v')=\pm N\det(u,u')$, so the form is symmetric); on a group of
odd order a bilinear form has exactly one quadratic refinement, and this is it. Since $[4^{-1}]_N\equiv\tfrac14\pmod{\tfrac14\Z}$ in
the sense that $[4^{-1}]_Nx/N-x/4N\in\tfrac14\Z$, \eqref{eq:basevac} reduces modulo $\tfrac14$ to \eqref{eq:kerspin} with
$q^*_\mp\equiv\#R/24$, and $\#R/24$ is the spin of $V^*$: at $v=0$ the class is that of $V^*$, whose branch integers $n_L$
are even ($\varepsilon_0=0$) and whose spin is $\sum_aL(x_a)/24L(1)=\#R/24$ by the Rogers identity of Theorem~\ref{thm:ceff}, which gives the same
Rogers sum $\#R$ at $\sigma V^*$
(the $L$-shapes contribute $\sum_LL(s)$ and the $R$-shapes $\sum_R(L(1)-L(s))$). The vacuum meant at $\sigma V^*$ is its base vacuum, the all-flip (Proposition~\ref{prop:spinq}(iii)). The all-plus vacuum of $\sigma V^*$ has spin $\#R/24+q(\varepsilon_0)$ by Proposition~\ref{prop:spinq}(ii). In Observation~\ref{obs:basevac}, the spin of the all-plus vacuum of $\sigma V^*$ just stated and
the reduction to \eqref{eq:kerspin} are proved. Theorem~\ref{thm:rogers} proves \eqref{eq:basevac} modulo $\tfrac14$ at every
class of Theorem~\ref{thm:allwords}. Part (i) of Observation~\ref{obs:basevac} is Lemma~\ref{lem:parity}. Formula
\eqref{eq:basevac} itself, modulo $1$, was checked on $20$ words of odd trace, in both sectors, at every class of Theorem~\ref{thm:allwords} ($1440$
classes), with the base vacuum computed from the closed form's branch integers as in Proposition~\ref{prop:spinq}(ii).

\subsection{The degenerate classes}
\label{sub:degenerate}

Theorem~\ref{thm:allwords} gives the non-degenerate classes in closed form; harvests (six words, $LR^4$ to $L^2RLR^3$, at $\epsilon=10^{-5}$
with the frieze filter of Theorem~\ref{thm:friezeall}) remain needed only for the classes it does not reach. Those lie almost all in the $A_-$ sector, the degenerate one, whose coordinates blow up as
$\epsilon\to0$; Theorem~\ref{prop:propagated} supplies the two identity classes exactly, and seeding from the frieze data completes the
$A_+$ sector at $L^3R^3$ but does not reach the degenerate classes. We first fix two terms. An \emph{$A_-$ class} is a Bethe class of the $A_-$ sector, that is, one labelled by a kernel vector in $\mathcal K_-$
(Observation~\ref{obs:kervec}; the labelling is proved at the non-degenerate classes, and at $L^mR^m$ with $m$ prime by Theorem~\ref{thm:lines}). At such a class $\mathcal H_\alpha=2^{\#L}|A_-|$ is proved when the class is non-degenerate (Theorem~\ref{thm:Hall}), and observed otherwise.

The second term is the exponent $p_b$. Let the \emph{coordinate} of a class at the site $b$ be $z_b$ at an $R$-site and $z_b^2=x_b^{-1}$ at an $L$-site; these are the
variables of the shape equations \eqref{eq:bethex}, in which Part~\ref{part:aminus} also works. Continue the class in the deformation parameter $\epsilon$ of
Section~\ref{sec:bethe}. We say that the coordinate has \emph{exponent} $p_b$ if
\[
z_b=C_b\,\epsilon^{p_b}\,(1+o(1))\qquad(\epsilon\to0)\qquad\text{with }C_b\ne0 .
\]
Thus $z_b\to0$ if $p_b>0$, $z_b\to\infty$ if $p_b<0$, and $p_b=0$ means that $z_b$ has a finite nonzero limit. A class with every $p_b=0$ is called \emph{finite}, or non-degenerate; its limit is a solution of the undeformed
system. A class with some $p_b\ne0$ is \emph{degenerate}.

\begin{observation}[The $A_-$ classes stratify, and the degeneration is local]
\label{obs:strata}
Continuing each harvested $A_-$ class from $\epsilon=10^{-3}$ to $10^{-5}$ and fitting the exponents $p_b$ of $z_b=C_b\epsilon^{p_b}(1+o(1))$, one finds
integer exponents (to $10^{-3}$) with the following pattern. Below, the exponent vector $(p_b)$ and the limit $z$ are listed site by site. A vertical bar $|$ separates the $L$-sites, listed before it, from the $R$-sites, listed after it; on each side the sites are in their order in the word.
For example $(-2\,|\,{-1},-1,2)$ at $LR^3$ means $p=-2$ at the single $L$-site and $p=-1,-1,2$ at the three $R$-sites.
\begin{enumerate}
\item[\textup{(i)}] \emph{Most classes are not degenerate.} At $L^2R^3$ one of the two $A_-$ classes found has every $p_b=0$, with
limit $z=(\pm\tfrac12,\pm\tfrac{\sqrt3}2\,|\,-\tfrac12,-\tfrac12,4)$ exactly, the class of Theorem~\ref{thm:allwords} at $v=(2,3)$, $N=6$; at $L^2RLR^3$ six of the seven found are finite. Those classes need no
leading-order analysis: they are ordinary solutions of the undeformed system, and the closed form of Theorem~\ref{thm:allwords} applies to
them directly.
\item[\textup{(ii)}] \emph{In the degenerate classes reached, the $R$-coordinates degenerate on one $R$-run only, except for the $L^3R^3$ classes of Observation~\ref{obs:l3r3minus}}: in that run, of length $b$, the exponents are
\[
(p_1,\dots,p_b)=(-1,\,0,\dots,0,\,-1,\,2)
\]
up to a cyclic rotation of the run --- $(-1,-1,2)$ at $b=3$, $(-1,0,-1,2)$ at $b=4$, $(-1,0,0,-1,2)$ at $b=5$, $(-1,0,0,0,-1,2)$ at $b=6$.
Among the classes reached, on a word with a single $R$-run every $L$-site has exponent $-2$ as well, $(-2\,|\,{-1},-1,2)$ at $LR^3$ and $(-2,-2,-2\,|\,{-1},-1,2)$ at
$L^3R^3$ --- but not among all degenerate classes: the complete $A_-$ sector of $L^3R^3$ (Observation~\ref{obs:l3r3minus}) also contains
classes with a single degenerate $L$-site, of the placement types of Section~\ref{sec:lines}; on a word with two $R$-runs the $L$-sites stay finite and only one run degenerates, the degenerate class of $L^2RLR^3$ having
$(0,0,0\,|\,0,-1,2,-1)$. Here the three $L$-sites are finite; the first $R$-site is the run $R$ of length $1$, which stays finite; and the run $R^3$ has $(-1,2,-1)$, a cyclic rotation of $(-1,-1,2)$.
\item[\textup{(iii)}] A single word therefore carries classes of both kinds; at $L^2R^3$ the two $A_-$ classes found are one of each.
\end{enumerate}
This was measured on seven words with $R$-runs of lengths $3$ to $6$, among them the four named above. The harvest is at a fixed $\epsilon$, so
which classes it reaches is biased towards the finite ones, and the counts above are of classes found rather than of classes.
\end{observation}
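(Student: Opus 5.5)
The statement is an observation, so the plan is to separate what can be checked exactly from what is measured, and to make the measured part as rigid as the paper's tools allow.

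For (i) and (iii) at $L^2R^3$ I would work in closed form. At $\varphi=L^2R^3=\begin{psmallmatrix}7&2\\3&1\end{psmallmatrix}$ one has $N=|A_-|=6$. First check that $v=(2,3)$ satisfies $(\varphi-1)v=(18,6)^{\mathsf T}\equiv0\pmod6$. Next list the five middle slopes of the Farey flips and check that $f(e)=\sin(\pi\det(v,e)/6)$ vanishes on none of them. Theorem~\ref{thm:allwords} then gives a solution of the undeformed system. Evaluating $x_t=f(\mathsf k_t)^2/f(\mathsf m_t)^2$ and $1-f(\mathsf k_t)^2/f(\mathsf m_t)^2$, and taking $z_t=x_t^{-1/Q_t}$, should reproduce the stated limit $z=(\pm\tfrac12,\pm\tfrac{\sqrt3}2\,|\,-\tfrac12,-\tfrac12,4)$. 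To see that this finite solution is a genuine $\epsilon\to0$ limit of the deformed system, I would check $\det(K+D)\ne0$ there and apply the implicit function theorem, or invoke Theorem~\ref{thm:Hall} where $\det(F+D_R)\ne0$. For $L^2RLR^3$ the six finite classes are handled the same way: list the nonzero kernel vectors in $\mathcal K_-$ and apply Theorem~\ref{thm:allwords} to those with no vanishing $f(\mathsf m_t)$. Matching their number with the six finite classes found confirms that the finite part of the harvest is exactly the kernel-vector classes.

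For (ii) the plan is a leading-order (tropical) consistency check of the exponent patterns against the deformed equations. Substitute $z_b=C_b\epsilon^{p_b}(1+o(1))$ into $\prod_az_a^{K_{ab}}(1-z_b^{-Q_b})^{Q_b}=(1+\epsilon)^{c_b}$. Since the right side tends to $1$, at each site the minimal exponents on the left must either balance to give exponent $0$ or cancel between the two terms of $1-z_b^{-Q_b}$. Along an $R$-run I would use the frieze relation in its polynomial form $s_{b-1}s_{b+1}(1-s_b)^2=s_b^2$ (Theorem~\ref{thm:friezeall}), which survives the limit. With $K_{RR}$ the fan block $2\min(i,j)$ and the ramp and $-2$ couplings of Theorem~\ref{thm:Kformula}, this should show two things. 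First, a run with pattern $(-1,0,\dots,0,-1,2)$, up to rotation, balances at every site; the check reduces to a second difference of $2\min(i,j)$ against $p$, which picks out the three nonzero entries. Second, on a single-$R$-run word the $L$-sites are then forced to exponent $-2$ through the couplings $2b$ and $-2$. For two-run words the same balance with the run of length $1$ held finite reproduces $(0,0,0\,|\,0,-1,2,-1)$.

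The main obstacle is that no argument above shows these are the only degenerate patterns, or that the harvest reaches every degenerate class. The phrase ``in the degenerate classes reached'' is essential, and the exhaustive version is the open classification of Remark~\ref{rem:exhaustive}. I would therefore present (ii) as a verified-consistent pattern, supported by the fitted exponents (integers to $10^{-3}$ between $\epsilon=10^{-3}$ and $10^{-5}$). I would also cross-check by substituting the fitted leading coefficients $C_b$ back into the balanced leading-order equations, rather than claiming completeness.
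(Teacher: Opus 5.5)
Your closed-form check of (i) is sound, and it goes beyond the paper, which rests the whole observation on the continuation and exponent fit. At $L^2R^3$ the middle slopes are $(1,1),(2,1),(3,1),(5,2),(7,3)$, and $f$ takes the values $-\tfrac12,-\tfrac{\sqrt3}2,\tfrac12,\tfrac12,-1$ on them. Theorem~\ref{thm:allwords} then returns exactly $z=(\pm\tfrac12,\pm\tfrac{\sqrt3}2\,|\,-\tfrac12,-\tfrac12,4)$. At $L^2RLR^3$, $\mathcal K_-$ is generated by $(1,-7)$ modulo $18$, and the non-degenerate multiples are $t=\pm1,\pm2,\pm4,\pm5,\pm7,\pm8$, which gives six classes. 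This word has singleton runs, however, so Theorem~\ref{thm:complete} is unavailable, and a count match is not an identification. You have to compare the harvested coordinates with the closed form class by class.

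The genuine gap is in (ii). You say that a run carrying $(-1,0,\dots,0,-1,2)$ \emph{up to rotation} forces every $L$-site of a single-$R$-run word to exponent $-2$. That is false, and the statement itself says so (``but not among all degenerate classes''). Write $e_b=-p_b$ for the exponents of the shapes $x_b$. By Theorem~\ref{thm:Kformula} the right side of \eqref{eq:bethex} at $R_l$ has exponent
\[
l\,e_{L^+}-e_{L^-}+\sum_i2\min(i,l)\,e_{R_i}.
\]
Consider the three positions of the $2$ in the run.
\begin{itemize}
\item \emph{The $2$ at the last site}, $e_R=(1,0,\dots,0,1,-2)$. The sum is $2-2l$ for $l\le b-1$. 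The equations at $R_1$ and $R_2$ give $e_{L^+}=e_{L^-}=2$, and the middle-$L$ equations propagate this, barring further cancellations. This is the fixed type $\infty$, and here your forcing holds.
\item \emph{The triple $(1,-2,1)$ at $R_{p-1},R_p,R_{p+1}$ inside the run.} The sum is $-2\delta_{lp}$, and the equations at $l\ne p$ force $e_{L^+}=e_{L^-}=0$.
\item \emph{The triple wrapping around $R_1$.} The equations force $e_{L^+}=-2$ and $e_{L^-}=0$.
\end{itemize}
Both rotated patterns occur. They are the type-$j$ classes of Theorem~\ref{thm:exist}, with one small $L$ and its neighbours tending to $1$ (your cancellation branch). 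At $L^3R^3$ they are the vectors $(0,2,0\,|\,-1,2,-1)$ and $(2,0,0\,|\,2,-1,-1)$ of Observation~\ref{obs:l3r3minus}.

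So the tropical balance ties the $L$-exponents to the position of the $2$ in the run, not to the number of $R$-runs. The correlation ``single $R$-run, all $L$ at $-2$'' only records which classes the harvest happened to reach. Replace that step by the claim that each listed exponent vector is tropically consistent, and that the unrotated pattern forces $e_{L^\pm}=2$. Beyond that, (ii) rests on the numerical fit, as you already concede for exhaustiveness.
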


This changes the structure of the general-word problem. The analysis of Part~\ref{part:aminus}
is the single-$R$-run case of (ii), and what a general word needs is not a wholesale repetition of
it but a stratification: the finite classes require none of it, and the degenerate ones are governed
by one run at a time, with the rotation and the choice of run as the labels.

\subsection{Hill's formula at a general class}
\label{sub:hillgen}

Theorem~\ref{thm:torsion} computes $\mathcal H$ at the two identity classes by splitting it into the Floquet discriminant of the $L$-Schur
complement, half (A), and the product over the $R$-runs in half (B), in the terminology of Section~\ref{sec:torsion}. Both halves survive at a general
class, and together with Observation~\ref{obs:tauquant} they reduce the whole $\mathcal H$-spectrum --- the set of values of $\mathcal H_\alpha$
over the Bethe classes $\alpha$ --- to one formula. At a Bethe class $\alpha$ with coordinates $x$, as in Section~\ref{sec:torsion},
\[
\mathcal H_\alpha=W_c(\alpha)^2\det(K+D)\prod_a(1-x_a),\qquad \det(K+D)=\det(K_{RR}+D_R)\det S_\alpha,
\]
where $D=\operatorname{diag}(Q_a^2x_a/(1-x_a))$ is evaluated at $x$ and $S_\alpha$, the $L$-Schur complement of $K+D$, is again a periodic
Jacobi matrix of order $\#L$ with couplings $\beta_i$ (the $-2$'s inside $L$-runs and one coupling $\beta_q=-2-\mathsf r^{\mathsf T}(F+D_R)^{-1}s$ across the $q$-th $R$-run, in the notation of Section~\ref{sec:torsion}). Let
$T_\alpha$ be its transfer matrix and $\Lambda_\alpha$ a Floquet multiplier of $S_\alpha$, i.e.\ an eigenvalue of $T_\alpha$. We apply Hill's formula (Section~\ref{sec:torsion}) to $S_\alpha$, with $n=\#L$. Define
\[
\Delta_\alpha:=\frac{\det S_\alpha}{(-1)^{\#L}\prod_i\beta_i},\qquad P_\alpha:=\det(K_{RR}+D_R)\prod_a(1-x_a)\,(-1)^{\#L}\prod_i\beta_i .
\]
Then Hill's formula gives $\Delta_\alpha=\tr T_\alpha-2=\Lambda_\alpha+\Lambda_\alpha^{-1}-2$, and $\mathcal H_\alpha=W_c(\alpha)^2\Delta_\alpha P_\alpha$ identically. Lemma~\ref{lem:perrun}
applies at every class, because by
Theorem~\ref{thm:friezeall} every class has a frieze row in each run and the lemma holds for any frieze row. So $P_\alpha$ is a product of one factor
$u_0u_{b+1}/(u_1u_b)$ per $R$-run against the $L$-run factors, exactly as in the reduction of (B) (Appendix~\ref{app:torsion}). What is left is to say which multiplier occurs.

\paragraph{The three values are Hill's discriminant at three multipliers.}
At the two identity classes (Theorem~\ref{thm:torsion}), and at every class of Theorem~\ref{thm:allwords} with $\det(F+D_R)\ne0$
(Theorem~\ref{thm:Hall} below),
\[
\Lambda_\alpha=\begin{cases}
\lambda^{\pm2}, & \text{the two identity classes,}\\
+\lambda, & \text{the }A_-\text{ sector,}\\
-\lambda^{-1}, & \text{the }A_+\text{ sector,}
\end{cases}
\qquad\text{and}\qquad
\Delta_\alpha=\begin{cases}
|A_-||A_+|,\\ +|A_-|,\\ -|A_+|,
\end{cases}
\]
with $W_c(\alpha)^2=\lambda^{\pm2}$ and $P_\alpha=2^{\#L}\lambda^{\mp1}$ at the identity classes, the upper sign at $V^*$, and
\[
W_c(\alpha)^2=1,\qquad P_\alpha=\pm2^{\#L}\ \ (+\text{ on }A_-,\ -\text{ on }A_+)
\]
at every other class. Hence, in every case,
\begin{equation}
\mathcal H_\alpha=2^{\#L}\bigl|\Lambda_\alpha+\Lambda_\alpha^{-1}-2\bigr|\cdot\lambda^{\pm1}\ \text{or}\ 2^{\#L}\bigl|\Lambda_\alpha+\Lambda_\alpha^{-1}-2\bigr| ,
\label{eq:hillgen}
\end{equation}
the values of Propositions~\ref{prop:Hspectrum} and~\ref{prop:sumrule}.

The couplings $\beta_i$ must be taken from the construction rather than read off $S_\alpha$: for $\#L=1,2$ the cyclic tridiagonal folds and two
couplings between the same pair of $L$-sites add inside $S_\alpha$.

Of the three factors, $W_c=\prod_az_a^{c_a}$ --- the Wilson loop of Section~\ref{sec:bethe}, with $c$ the charge vector of
Theorem~\ref{thm:Kformula} --- is the one that can be evaluated outright, as a monomial in the solution's own coordinates. The following
proposition is that evaluation, not a definition.

\begin{proposition}[The Wilson loop is a monomial in $1-x$]
\label{prop:wcmono}
Let $\varphi$ be a hyperbolic word and let $\tilde v\in\Z^r$ be the run vector
\[
\tilde v_b=\begin{cases}
j, & b\ \text{the }j\text{-th }L\text{ of its run,}\\
a, & b\ \text{the first }R\text{ of a run preceded by an }L\text{-run of length }a,\\
0, & b\ \text{any other }R .
\end{cases}
\]
Then $K\tilde v=2c$, so $c$ lies in the image of $K$ over $\tfrac12\Z$. Consequently, at any solution of
the deformed equations \eqref{eq:bethe} (with right side $(1+\epsilon)^{c_b}$),
\begin{equation}
W_c^2=(1+\epsilon)^{\tilde v\cdot c}\prod_b(1-x_b)^{-\tilde v_bQ_b}
=(1+\epsilon)^{\tilde v\cdot c}\prod_{L\text{-runs}}\ \prod_{j=1}^{a}(1-x_{L_j})^{-2j}
\cdot\prod_{R\text{-runs}}(1-x_{R_1})^{-a},
\label{eq:wcmono}
\end{equation}
which at $\epsilon=0$ is a monomial with integer exponents read off the run structure.
\end{proposition}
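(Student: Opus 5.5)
The proof has two parts: a linear-algebra identity $K\tilde v=2c$, checked row by row against the run-structure description of $K$ in Theorem~\ref{thm:Kformula}, and then a short manipulation of the deformed Bethe equations in logarithmic form.

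\emph{Step 1: $K\tilde v=2c$.} We compute $(K\tilde v)_p$ for each site type. The values of $c$ are $b$ at the first $L$ after an $R$-run of length $b$, $0$ at other $L$'s, and $j$ at the $j$-th $R$ of a run.

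At the $j$-th $R$ of a run $\rho$ of length $b$, preceded by an $L$-run of length $a$, the row of $K$ has three relevant pieces:
\begin{itemize}
\item the fan $2\min(j,l)$ on $\rho$, which sees only $\tilde v_{R_1}=a$ and contributes $2a$;
\item the entry $-2$ at $L^-(\rho)$, the $a$-th $L$ of the preceding run, which contributes $-2a$;
\item the ramp $2j$ at $L^+(\rho)$, the first $L$ of the next run, with $\tilde v=1$, which contributes $2j$.
\end{itemize}
The total is $2j=2c$.

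At an $L$-site that is the $i$-th of an $L$-run of length $a$, the diagonal is $4$ plus $2b$ if $i=1$ (with $b$ the length of the preceding $R$-run). There is a $-2$ to each adjacent $L$, and the $-1/+$ ramp couplings connect it to the neighbouring $R$-runs.
\begin{itemize}
\item \emph{Interior $i$:} $4i-2(i-1)-2(i+1)=0$.
\item \emph{$i=1$:} the diagonal gives $(4+2b)\cdot1$. The coupling to $L_2$ gives $-2\cdot 2$. The coupling to the last $L$ of the previous $L$-run, of length $a'$, gives $-2a'$. The ramp to the preceding $R$-run sees only $R_1$, with $\tilde v=a'$, so it gives $2\cdot a'$. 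The total is $2b=2c$.
\item \emph{$i=a$:} the diagonal gives $4a$ and the coupling to $L_{a-1}$ gives $-2(a-1)$. The coupling to the next run's first $L$ gives $-2\cdot1$. The $-2$ couplings to every $R$ of the following run see only its $R_1$, with $\tilde v=a$, giving $-2a$. The total is $0$.
\end{itemize}

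Coincident sites must be handled separately: singleton runs (where $i=1=a$, and the two boundary computations add), and $\#L\le2$ (where couplings fold onto the diagonal or double). These should follow by linearity, since the pieces of $K$ simply add, exactly as the paper remarks after Theorem~\ref{thm:Kformula}. I expect this bookkeeping to be the only real obstacle, and I would check it on $LR^n$ and $L^2R^n$ explicitly.

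\emph{Step 2: the deformed equations.} In logarithmic form they read
\[
Q_b\log(1-x_b)=-(KZ)_b+c_b\log(1+\epsilon)\pmod{2\pi i}.
\]
Multiply by $\tilde v_b$ and sum over $b$. By the symmetry of $K$ and Step 1,
\[
\tilde v^{\mathsf T}KZ=(K\tilde v)^{\mathsf T}Z=2c\cdot Z .
\]
Exponentiating, the ambiguity is harmless because $\tilde v$ is integral, and we get
\[
\prod_b(1-x_b)^{\tilde v_bQ_b}=W_c^{-2}(1+\epsilon)^{\tilde v\cdot c}.
\]
This is the first equality of the displayed formula. The second equality just inserts the values of $\tilde v$: it is zero at $R$'s other than the first of each run, and $Q=2$ at $L$'s.
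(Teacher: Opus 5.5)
Your proposal is correct and is essentially the paper's proof: the same row-by-row check of $K\tilde v=2c$ against Theorem~\ref{thm:Kformula}, followed by the same contraction of the log-form deformed Bethe equations with $\tilde v$, using $K=K^{\mathsf T}$ and the integrality of $\tilde v$ to discard the branch integers.

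The one row you leave unwritten is the singleton $L$. The paper writes it out, and it does go through by linearity:
\[
(4+2b')-2a'+2a'-2-2=2b' .
\]
The five terms come from the diagonal, the previous $L$, the ramp to $R_1$ of the preceding run, the next $L$, and the $-2$ to $R_1$ of the following run (where $\tilde v=1$). The case $\#L=1$ is $a'=1$. Checking only $LR^n$ and $L^2R^n$, as you propose, would not by itself cover singleton $L$-runs in words with several runs.
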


\begin{proof}
$K\tilde v=2c$ is checked row by row against Theorem~\ref{thm:Kformula}. Let an $L$-run of length $a$ be
preceded by an $R$-run of length $b'$, which is preceded by an $L$-run of length $a'$, and followed by an $R$-run of length $b$.
At an interior $L_i$: $4i-2(i-1)-2(i+1)=0$. At $L_1$ ($a\ge2$): the diagonal $4+2b'$, the coupling $-2$ to $L_2$, the coupling $-2$
to the last $L$ of the previous run (where $\tilde v=a'$) and the ramp $+2\cdot1$ to the first $R$ of the preceding run (where
$\tilde v=a'$) give $(4+2b')-4-2a'+2a'=2b'$. At $L_a$ ($a\ge2$): $4a-2(a-1)-2\cdot1-2a=0$, the last two terms being the coupling to
the first $L$ of the next run ($\tilde v=1$) and the $-2$ to the first $R$ of the following run ($\tilde v=a$). At a singleton $L$ the
two computations add: $(4+2b')-2a'+2a'-2-2=2b'$, and the case $\#L=1$ ($LR^n$) is this with $a'=1$ and both $-2$'s on the
diagonal. At the $j$-th $R$ of a run preceded by an $L$-run of length $a$: $2\min(j,1)\,a+2j-2a=2j$. In every case the result is $2c$.

The first part is what makes $W_c^2$ computable. The exponent of $W_c^2=e^{2c\cdot Z}$ is $2c\cdot Z=\tilde v\cdot KZ$, and the Bethe
equations express $KZ$ through the logarithms $\log(1-x_b)$. In detail, in log form \eqref{eq:bethe} reads $\sum_aK_{ab}Z_a=2\pi in_b-Q_b\log(1-x_b)+c_b\log(1+\epsilon)$, so, using $K=K^{\mathsf T}$,
\[
2\log W_c=\sum_a2c_aZ_a=\sum_a(K\tilde v)_aZ_a
=\sum_b\tilde v_b\Bigl(2\pi in_b-Q_b\log(1-x_b)+c_b\log(1+\epsilon)\Bigr),
\]
and $\tilde v\cdot n\in\Z$ kills the first term. The displayed form is $Q=2$ on $L$-sites and $Q=1$ on $R$-sites.
\end{proof}

\begin{remark}[Why $W_c^2=1$ is a boundary identity]
\label{rem:wcboundary}
In the frieze variable $\hat d_b=s_b^{-1/2}$ of Theorem~\ref{thm:friezeall} the monomial collapses. On an $L$-step $1-x_b=1-s_b$ and
$1-s_b=(\hat d_b^2-1)/\hat d_b^2$, so the signed form $\hat d_b^2\mp\hat d_{b-1}\hat d_{b+1}=1$ of \eqref{eq:friezeall} gives
$1-s_b=\pm\hat d_{b-1}\hat d_{b+1}/\hat d_b^2$ at every site interior to the run. Substituting, the
exponent of $\hat d_k$ in $\prod_j(\hat d_{j-1}\hat d_{j+1}/\hat d_j^2)^{-2j}$ is
\[
\underbrace{+4k}_{j=k}\ \underbrace{-2(k+1)}_{j=k+1}\ \underbrace{-2(k-1)}_{j=k-1}\ =\ 0
\]
identically. So every interior site cancels and $W_c^2$ depends only on the two ends of each run and
their neighbours: \emph{$W_c(\alpha)^2=1$ off the identity classes is a boundary identity, not a bulk
one}, and the same holds for the statement $\Lambda_\alpha\in\{\lambda,-\lambda^{-1}\}$ about the Floquet multiplier of $S_\alpha$: both reduce to the junctions.
What is left is the junction between consecutive runs, which Theorem~\ref{thm:Hall} evaluates.
\end{remark}

Theorem~\ref{thm:Hall} below is \eqref{eq:hillgen} at the classes of Theorem~\ref{thm:allwords}. Its proof (Appendix~\ref{app:Hall})
uses two lemmas on frieze rows, proved in Appendix~\ref{app:torsion}: the fundamental system of the recurrence of a frieze row
(Lemma~\ref{lem:cotsol}), and the universal Floquet vector (Lemma~\ref{lem:floqvec}), which solves the recurrence of the $L$-Schur
complement at every class of Theorem~\ref{thm:allwords}.

\begin{theorem}[Hill's formula at every class of every word]
\label{thm:Hall}
Let $\varphi$ be a hyperbolic word and $\alpha$ the Bethe class of Theorem~\ref{thm:allwords} at $v\in\mathcal K_\mp$, with
$f(\mathsf m_t)\ne0$ for every tetrahedron and $\det(F+D_R)\ne0$ on every $R$-run. The last is the \emph{non-degeneracy condition}; it holds in every case computed. Then, with the upper sign on $A_-$ and the lower on $A_+$,
\[
W_c(\alpha)^2=1,\qquad \Lambda_\alpha=\pm\lambda^{\pm1},\qquad \Delta_\alpha=\pm|A_\mp|,\qquad P_\alpha=\pm2^{\#L},\qquad
\mathcal H_\alpha=2^{\#L}\,|A_\mp| .
\]
\end{theorem}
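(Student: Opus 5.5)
The plan is to follow the architecture of Theorem~\ref{thm:torsion}, replacing the height function of the eigenvector by the trigonometric function $f(e)=\sin(\pi\langle v,e\rangle/N)$ of Theorem~\ref{thm:allwords}. Throughout, the identity $\mathcal H_\alpha=W_c(\alpha)^2\Delta_\alpha P_\alpha$ of Section~\ref{sub:hillgen} is taken as given, so three separate evaluations are needed.

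\textbf{Step 0: a preliminary.} We first check from the word alone that $f\ne0$ on the kept, dying and newborn slopes of every tetrahedron, and not only on the middle ones. The newborn slope of one step is the middle slope of a later step, the kept slope is the middle slope of an earlier step, and periodicity (b) closes the cyclic word. Hence all shapes lie in $\C\smallsetminus\{0,1\}$, and $D=\operatorname{diag}(Q_a^2x_a/(1-x_a))$ is finite.

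\textbf{Step 1: the Wilson loop.} We evaluate $W_c^2$ by Proposition~\ref{prop:wcmono}. By Remark~\ref{rem:wcboundary} the interior sites of each run cancel, leaving one product over run ends. Substituting Ptolemy (a), the factors $1-x_{L_j}=f(c_{j-1})f(c_{j+1})/f(c_j)^2$ and $1-x_{R_1}=f(c_a)^2/f(g_1)^2$ telescope along the cyclic word to a ratio of $f$-values at a slope $e$ and at $\varphi e$. By (b) this ratio is $(\pm1)^2=1$. This is the exact analogue of Proposition~\ref{obs:stretch}, where the same telescoping gave $h(\varphi e)/h(e)=\lambda$. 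The contrast is that $f(\varphi e)=\pm f(e)$ has multiplier $\pm1$ where $h(\varphi e)=\lambda h(e)$ had multiplier $\lambda$.

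\textbf{Step 2: the factor $P_\alpha$.} We evaluate $P_\alpha$ through Lemma~\ref{lem:perrun}. That lemma holds for any frieze row, so each $R$-run contributes $u_0u_{b+1}/(u_1u_b)$. Here $\det(F+D_R)\ne0$ is what makes the $R$-elimination and the couplings $\beta_q$ well defined. The per-run factors, with the $L$-run factors, again telescope by (b), giving $\pm2^{\#L}$. The sign is the product of the $\pm$ in $f(\varphi e)=\pm f(e)$ over the odd-degree occurrences. This should be $+$ for $\varphi^{-1}v\equiv v$ and $-$ for $\varphi^{-1}v\equiv-v$, which matches the sectors.

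\textbf{Step 3: the Floquet multiplier (the main obstacle).} Half (A) requires exhibiting a Floquet solution of the $L$-Schur complement $S_\alpha$. We would use the universal Floquet vector of Lemma~\ref{lem:floqvec}, which solves the three-term recurrence of $S_\alpha$ at every class of Theorem~\ref{thm:allwords}. Its entries are built from the fundamental system of Lemma~\ref{lem:cotsol}: cotangent-type functions $\cot(\pi\langle v,e\rangle/N)$ together with linear terms in the slopes. Going once around the word, the slopes transform by $\varphi$. The linear part then picks up the eigenvalue $\lambda$ or $\lambda^{-1}$ from the eigenvector components, while the trigonometric part picks up $\pm1$ from (b). The product gives $\Lambda_\alpha=+\lambda$ on $A_-$ and $-\lambda^{-1}$ on $A_+$.

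The delicate point is the sign. Identifying which eigenvalue the surviving component picks up, and checking that the vector is nonzero when $f\ne0$, is where the work lies. I would first verify this on $L^aR^b$ against Theorem~\ref{thm:tworun}.

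\textbf{Step 4: assembly.} Hill's formula gives $\Delta_\alpha=\Lambda+\Lambda^{-1}-2$. On $A_-$ this is $\lambda+\lambda^{-1}-2=\tr\varphi-2=|A_-|$ up to the orientation sign. On $A_+$ it is $-(\lambda+\lambda^{-1})-2=-|A_+|$. Multiplying, $\mathcal H_\alpha=1\cdot(\pm|A_\mp|)(\pm2^{\#L})=2^{\#L}|A_\mp|$, which is nonzero. This nonvanishing in turn gives $\det(K+D)\ne0$, as required for the class.
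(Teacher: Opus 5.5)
Your plan is the paper's proof step for step: $f\ne0$ on all four slopes of every tetrahedron, $W_c^2$ from Proposition~\ref{prop:wcmono} telescoping to $f(\varphi q_0)^2/f(q_0)^2=1$, and $P_\alpha$ from Lemma~\ref{lem:perrun} telescoping to a ratio of $f$ at $p_0,q_0,p_0+q_0$ and at their $\varphi$-images. $\Lambda_\alpha$ comes from the Floquet vectors $Y^\pm$ of Lemma~\ref{lem:floqvec} built with the eigenfunctionals $h,h'$. For these, $H(\varphi e)=\pm\lambda H(e)$, respectively $\pm\lambda^{-1}H(e)$, with the sector sign, and this is immediate because $\cot$ is odd and $\pi$-periodic, so the sign is not the delicate point you fear.

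The two things you leave open each close in a line.

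\emph{Nonvanishing.} $Y^+\equiv Y^-\equiv0$ would force $\cot(\pi\langle v,\mathsf m_t\rangle/N)=\cot(\pi\langle v,\mathsf k_t\rangle/N)=0$, since $h,h'$ span the dual and $\mathsf k_t,\mathsf m_t$ are independent. That gives $f=0$ on the dying slope $\mathsf m_t-\mathsf k_t$, which is excluded. Whichever of $Y^\pm$ survives gives the same $\tr T_\alpha=\pm(\lambda+\lambda^{-1})$.

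\emph{The sign of $P_\alpha$.} Periodicity (b) carries the extra factor $(-1)^{\langle w,e\rangle}$, so it is not a uniform $\pm$. That factor cancels because $\langle w,p_0\rangle+\langle w,q_0\rangle+\langle w,p_0+q_0\rangle$ is even, which leaves exactly the sector sign.
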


The proof is given in Appendix~\ref{app:Hall}. It first shows that the class has all coordinates finite and nonzero. The universal
Floquet vector of Lemma~\ref{lem:floqvec} then gives the multiplier $\Lambda_\alpha$, and Hill's formula gives $\Delta_\alpha$. The identity for one
$R$-run, Lemma~\ref{lem:perrun}, gives $P_\alpha$, and $W_c^2$ telescopes around the word. Their product is $\mathcal H_\alpha$.

The hypothesis $\det(F+D_R)\ne0$ is needed only to form $S$. It did not fail on the $49$ words of length $3$ to $12$ examined: at all $4236$
classes of Theorem~\ref{thm:allwords} in both sectors, the smallest absolute value was $6\cdot10^{-5}$, at $L^2RL^2RLR^3LR$. For $b\ge2$ the determinant takes negative values, and smaller absolute values occur on other words, for instance $4\cdot10^{-6}$ at $L^3R^6L^2R$. Whether it can vanish at a
kernel vector is not known. The theorem leaves out exactly the degenerate classes --- some $f(\mathsf m_t)=0$, a coordinate at infinity ---
and the identity classes, which are Theorem~\ref{thm:torsion}.

The two-run case can also be approached by hand, and the computation shows the junction mechanism explicitly. Its last step is a rational
identity checked by computer algebra, so it is an independent check of the theorem, by a different route to $\Lambda_\alpha$, not a second proof.

\begin{corollary}[Hill's formula at the two-run classes]
\label{thm:tworunH}
At every class of Theorem~\ref{thm:tworun} satisfying the non-degeneracy condition of Theorem~\ref{thm:Hall} (automatic when $b=1$),
\[
W_c(\alpha)^2=1,\qquad \Lambda_\alpha=-\lambda^{-1},\quad\text{i.e.}\quad\Delta_\alpha=-N,\qquad P_\alpha=-2^a,\qquad
\mathcal H_\alpha=2^{\#L}|A_+| .
\]
\end{corollary}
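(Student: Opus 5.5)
The plan is to compute the four quantities $W_c^2$, $\Lambda_\alpha$, $P_\alpha$ and $\mathcal H_\alpha$ directly at the two-run word $\varphi=L^aR^b$. We use the explicit coordinates of Theorem~\ref{thm:tworun}: a trigonometric chain $\sin((i+\kappa)\vartheta)/\sin\vartheta$ on the $L$-run and a chain in $\sin((i\kappa-1)\vartheta)$ on the $R$-run, with $\vartheta=j\pi/N$, $N=ab+4$ and $\kappa\equiv-2^{-1}a$. This case is the specialisation of Theorem~\ref{thm:Hall} to $v=(-\kappa j,j)\in\mathcal K_+$, so the conclusion already follows from that theorem. What we want here is a route that exhibits the junction mechanism by hand.

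\emph{Step 1: $W_c^2$.} By Proposition~\ref{prop:wcmono} at $\epsilon=0$,
\[
W_c^2=\prod_{i=1}^a(1-x_{L_i})^{-2i}\,(1-x_{R_1})^{-a}.
\]
By Remark~\ref{rem:wcboundary} the interior $L$-factors cancel through the frieze relation, so only the ends of the $L$-run survive. Writing each surviving factor through the Ptolemy identity (a) in the proof of Theorem~\ref{thm:allwords} expresses $W_c^2$ as a ratio of values $f(e)^2$ on four or five slopes at the junction. Periodicity (b) then identifies those slopes pairwise modulo $N$, and we expect $W_c^2=1$.

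\emph{Step 2: the Floquet multiplier.} Here $\#L=a$, and there is a single $R$-run, so only one coupling $\beta=-2-\mathsf r^{\mathsf T}(F_b+D_R)^{-1}s$ differs from $-2$. First compute $(F_b+D_R)^{-1}s$ with the fundamental system of Lemma~\ref{lem:cotsol} for the frieze row $\sin((i\kappa-1)\vartheta)$. This gives $\beta$ as a ratio of sines at the run ends.

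On the $L$-run the diagonal entries are $4+D_{L_i}$, and the transfer matrices are conjugate to rotations by $\vartheta$. Their product over the run is therefore an explicit trigonometric matrix. One extra factor comes from the junction, which carries the corner $4+2b+D_{L_1}$ and the couplings to $\beta$. The resulting trace should be $\tr T=-\lambda^{-1}-\lambda=-(ab+2)$, that is $\Delta_\alpha=-N$.

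This trace identity is the main obstacle. It is a rational identity in $\cos\vartheta$ and $\sin(\kappa\vartheta)$, reduced using $2\kappa\equiv-a$ and $b\kappa\equiv2\pmod N$. We will clear denominators and check it by computer algebra as a polynomial identity modulo the relation $\sin(N\vartheta)=0$, uniformly in $a$ and $b$ by treating $a$ and $b$ as symbolic run lengths through Chebyshev polynomials. If that fails, the fallback is to insert the universal Floquet vector of Lemma~\ref{lem:floqvec} and read off the multiplier $-\lambda^{-1}$ directly.

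\emph{Step 3: $P_\alpha$ and $\mathcal H$.} Lemma~\ref{lem:perrun} applies, because the $R$-run is a frieze row. The single $R$-run then contributes $u_0u_{b+1}/(u_1u_b)$. The $L$-run factors $\prod(1-x_{L_i})$ telescope against $(-2)^{a-1}\beta$, and we expect $P_\alpha=-2^a$. Finally
\[
\mathcal H_\alpha=W_c^2\,\Delta_\alpha P_\alpha=1\cdot(-N)(-2^a)=2^{\#L}|A_+|.
\]
When $b=1$ the matrix $F+D_R$ is the scalar $2+D_{R_1}$. This is nonzero because $x_{R_1}\ne-1$ for the given chain, so the non-degeneracy condition holds automatically.
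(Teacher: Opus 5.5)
Your first paragraph already contains the paper's proof: the corollary is Theorem~\ref{thm:Hall} at $v=(-\kappa j,j)\in\mathcal K_+$, and the paper adds only an independent hand check. Two of the things you add would fail as written. The first is the parenthetical ``automatic when $b=1$''. It is part of the statement and is not in the statement of Theorem~\ref{thm:Hall}, so it needs its own argument. With $F_1=2$ and $D_{R_1}=x_{R_1}/(1-x_{R_1})$ one gets $F_1+D_{R_1}=(2-x_{R_1})/(1-x_{R_1})$, which vanishes at $x_{R_1}=2$, not at $x_{R_1}=-1$, and ``for the given chain'' is not a reason. What makes it automatic is reality. The quantity $1-x_{R_1}=\sin^2(\kappa\vartheta)/\sin^2((\kappa-1)\vartheta)$ is a ratio of squares of nonzero real numbers, so $x_{R_1}<1$. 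The paper phrases this as $J=(2-x_R)/2x_R$ with $x_R<1$.

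The second is the mechanism of your Step~2. The $L$-run transfer matrices of $S_\alpha$ are $T_i=\begin{psmallmatrix}2/(1-x_{L_i})&-1\\1&0\end{psmallmatrix}$, with trace $2\sin^2\psi_i/(\sin\psi_{i-1}\sin\psi_{i+1})$, $\psi_i=(i+\kappa)\vartheta$. This trace varies with $i$ and is not $2\cos\vartheta$, so the $T_i$ are not conjugate to rotations by $\vartheta$. You have conflated the Schur-complement recurrence with the frieze recurrence $\hat d_{i+1}=\tau\hat d_i-\hat d_{i-1}$ satisfied by the coordinates. Transport through the run is given instead by the fundamental system $\cot\psi_i$, $i\cot\psi_i-\cot\vartheta$ of Lemma~\ref{lem:cotsol}, which grows linearly in $i$.

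The paper's hand check writes $y=a_1\mathfrak f+a_2\mathfrak g$ on the $L$-run. It turns the two junction rows into a $2\times2$ system with determinant $-2J^{1b}W(\Lambda+\Lambda^{-1})+C$, $W=-1$. That linearity is why $\Lambda+\Lambda^{-1}=-(ab+2)$ is rational in $\cot\vartheta,\cot\kappa\vartheta,a,b$ with $a,b$ genuinely symbolic, which a Chebyshev formulation would not be.

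The corner to use is also not the unreduced $4+2b+D_{L_1}$. Eliminating the $R$-run changes both corner diagonals, to $4/(1-x_{L_a})+J^{11}-2$ and $4/(1-x_{L_1})+J^{bb}$ as in \eqref{eq:corners}, besides producing $\beta=-J^{1b}$. With the unreduced corners you would be computing the transfer matrix of a different Jacobi matrix. Your fallback via Lemma~\ref{lem:floqvec} is the proof of Theorem~\ref{thm:Hall} again, so it gives no independent check.

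In Step~3, $\beta$ is already inside the factor $G_b$ of Lemma~\ref{lem:perrun}. The $L$-run product telescopes against $u_0u_{b+1}/(u_1u_b)$, and the sign $-1$ has to be derived from $\varphi^{-1}v\equiv-v$ through periodicity rather than expected.
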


This is Theorem~\ref{thm:Hall} at $v=(-\kappa j,j)$. A hand computation reaches $\Lambda_\alpha$ by a
different route: with $y_i=a_1f_i+a_2g_i$ the general solution of Lemma~\ref{lem:cotsol} on the $L$-run, the two rows of
$S_\alpha$ at the ends of an $R$-run (the junction rows) become a homogeneous $2\times2$ system in $(a_1,a_2)$ whose determinant is $-2J^{1b}W(\Lambda+\Lambda^{-1})+C$, $W=-1$ the
Casoratian and $C$ a combination of the corner entries of $J^{-1}$ (notation of Appendix~\ref{app:torsion}), which the Green's function of the $R$-row expresses through
$\cot\vartheta$ and $\cot \kappa\vartheta$; the identity $\Lambda+\Lambda^{-1}=-C/2J^{1b}=-(ab+2)$ is then a rational identity in $\cot\vartheta,\cot \kappa\vartheta,a,b$,
checked by computer algebra.

The mechanism is the same in both. Inside a run the recurrence of the Schur complement has the universal fundamental system of
Lemma~\ref{lem:cotsol}, so transport through a run is trivial. The whole of $\Lambda$ is produced at the junctions, which
Lemma~\ref{lem:floqvec} shows to be transparent to $H(\mathsf m)-H(\mathsf k)$. The multiplier is then read off one period of the developed word,
$H(\varphi e)=\pm\lambda H(e)$, and $W_c^2$ telescopes around the word in the same way (the $W_c^2$ step of the proof of Theorem~\ref{thm:Hall}, Appendix~\ref{app:Hall}). So $W_c^2=1$,
$P_\alpha=\pm2^{\#L}$ and $\Lambda_\alpha=\pm\lambda^{\pm1}$ are three faces of one closure condition around the cyclic word.

The content of \eqref{eq:hillgen} is that
\[
\lambda+\lambda^{-1}-2=\tr\varphi-2=-\det(\varphi-1),\qquad -\lambda^{-1}-\lambda-2=-(\tr\varphi+2)=-\det(\varphi+1),
\]
so the two non-identity values are \emph{square roots} of the identity classes' multipliers: $\lambda$ of $\lambda^2$ and $-\lambda^{-1}$ of $\lambda^{-2}$. The sign of the square root is
the sector. Heuristically, a multiplier $\lambda$ in place of $\lambda^2$ is a half-period condition on the chain, and it is this halving that
quantises the run angles of Observation~\ref{obs:tauquant}. At $\varphi=L^mR^m$ the $A_-$ sector
has $\tau=2\cos(k\pi/m)$ by \eqref{eq:closedform}, because there $N=m^2$ and the surviving indices are the multiples $j_\rho=km$, and
Theorem~\ref{thm:lines} labels those classes by their kernel vectors $v=m(jk,k)$, whose run indices $(mk,mjk)$ are the pair $(j_\rho)$ of \eqref{eq:tauquant}, the placement $j$ being their ratio.
So the labelling of these degenerate classes by kernel vectors in Theorem~\ref{thm:lines} is the same labelling as in Theorem~\ref{thm:allwords}, although the closed form and Theorem~\ref{thm:Hall} do not reach them. Proposition~\ref{prop:sumrule} assumed that $\mathcal H$ takes
its three values with the stated multiplicities. Theorem~\ref{thm:Hall} now proves the values at the classes of
Theorem~\ref{thm:allwords}, so what the hypothesis still asks is only that those classes, with the two identity classes, are all the Bethe
classes: completeness, the subject of the next section. The sum rule is itself a completeness check. Recall from
the dictionary \eqref{eq:dict} that $S_{0\alpha}^2=\mathcal H_\alpha^{-1}$, and that in a modular category every simple object has $S_{0\alpha}\ne0$ and
$\sum_\alpha S_{0\alpha}^2=1$; so once $\sum_\alpha\mathcal H_\alpha^{-1}=1$ is saturated by the classes in hand there is no room for a further
class. For the diagonal family, under the factorisation $\mathrm{TFT}=\mathrm{TFT}_0\boxtimes\mathcal P$ of Assumption~\ref{ass:F}, the structure
behind the three values is
\begin{equation}
\mathcal P=U(1)_{\pm2}\boxtimes\Adec,\qquad\operatorname{rank}\Adec=2^{m-1},
\end{equation}
the splitting of the discriminant form of the sign group in Proposition~\ref{prop:signsum}.

\section{Counting and completeness of the classes}
\label{sec:count}

This section compares the Bethe classes with the flat connections they should correspond to under localisation, and proves that, for a large
class of words, the classes of Theorem~\ref{thm:allwords} are all the non-degenerate classes there are.

Section~\ref{sub:fixedpoints} counts the $\varphi$-fixed points of the reducible locus $\tr\partial=2$ of the $SL(2,\C)$ character variety of
the once-punctured torus. They are the images of $A_-/\!\pm$ and $A_+/\!\pm$, which for odd trace meet only in the trivial character, so there
are $k+1$ of them when the trivial character is counted in both sectors (Theorem~\ref{thm:count}). This is a statement about characters and
uses no physics. Section~\ref{sub:complete} proves completeness on the Bethe side. For a word all of whose runs have length at least $2$, every
non-degenerate solution of the Bethe equations at $\epsilon=0$ is the class of $V^*$, the class of $\sigma V^*$, or a class of
Theorem~\ref{thm:allwords}, with its kernel vector determined up to sign (Theorem~\ref{thm:complete}). If moreover every nonzero kernel vector
is non-degenerate, there are, for odd trace, exactly $k+1$ non-degenerate classes, the number in Theorem~\ref{thm:count}, and with Theorem~\ref{thm:Hall} the
sum rule $\sum\mathcal H^{-1}=1$ holds over the vacua of these classes. Words with a singleton run (sketched in Remark~\ref{rem:singleton}), and degenerate classes, are not covered.
Section~\ref{sub:markov} explains why the reducible locus $\tr\partial=+2$ is the right one. Read in the limit $\hbar\to0$, the 3d--3d relation \cite[(A.4)--(A.5)]{GKL23}
points to the Markov locus $\tr\partial=-2$. At $LR^3$ and $L^3R^3$ the Jacobian determinants of $\varphi$ at its fixed points there (the adjoint torsions, defined in Section~\ref{sub:fixedpoints}) cannot form the required
handle-gluing spectrum, whereas on $\tr\partial=+2$ they form it exactly (Proposition~\ref{prop:exclusion}). The difference is the
$PGL(2,\C)$ sign convention of \cite{DGG11}. Its other branch adds one character, the quaternionic one, which Remark~\ref{rem:thirteenth}
excludes at $L^3R^3$ and $L^5R^5$, assuming the sum rule and that the harvest is exhaustive.

Proposition~\ref{prop:exclusion} compares torsions with the published
modular data, and reading it as a statement about Bethe classes uses localisation (F1). The section ends with
Corollary~\ref{cor:interpretation}, which reads the results of Part~\ref{part:results} through (F1).

\subsection{Fixed points on the character variety}
\label{sub:fixedpoints}

For a fixed point $\chi$ of $\varphi$ on a smooth surface of characters, the \emph{adjoint torsion} of $\chi$ in the mapping torus $M_\varphi$ is the Jacobian
determinant $\det(I-d\varphi|_T)$ of $\varphi$ at $\chi$, where $T$ is the tangent space at $\chi$; this is the form in which it is used below, and
under localisation (F1) it is, up to the factor $2^{\#L}$ and sign, the value $\mathcal H$ that the corresponding class should have.

Let $\Sigma_{1,1}$ be the once-punctured torus, $M_\varphi$ its mapping torus, and $\partial$ the loop around the puncture, homotopic to the
commutator $[a,b]$ in $\pi_1(\Sigma_{1,1})=F_2$, the free group on the two generators $a,b$. On the $SL(2,\C)$ character variety with coordinates $x=\tr a$, $y=\tr b$, $z=\tr ab$, the Fricke trace
identity gives $\tr\partial=x^2+y^2+z^2-xyz-2$, and the Dehn twists act by
\begin{equation}
L:(x,y,z)\mapsto(x,z,xz-y),\qquad R:(x,y,z)\mapsto(z,y,yz-x),
\end{equation}
inducing $L$ and $R$ on $H_1$.

\begin{theorem}[The fixed-point count]
\label{thm:count}
Let $\varphi\in SL(2,\Z)$ be hyperbolic with $k=\tr\varphi\ge3$. The $\varphi$-fixed points of the reducible locus of the $SL(2,\C)$ character variety of
$\Sigma_{1,1}$ are the union of the images of
\[
A_-/\!\pm\quad\text{and}\quad A_+/\!\pm,\qquad A_\mp=\coker(\varphi\mp1),
\]
which for odd $k$ meet only in the trivial character. Counting the trivial character in both sectors, the count is $\#(A_-/\!\pm)+\#(A_+/\!\pm)$, which is $k+1$ for odd $k$.
\end{theorem}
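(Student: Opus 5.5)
The plan is to describe the reducible locus concretely and then solve the fixed-point equation on it. A representation $\rho$ of $F_2=\langle a,b\rangle$ has reducible character exactly when its character equals that of a diagonal representation. Such a representation is $a\mapsto\operatorname{diag}(\alpha,\alpha^{-1})$, $b\mapsto\operatorname{diag}(\beta,\beta^{-1})$, and then $\tr\partial=\tr[a,b]=2$. So I would parametrise the locus by the torus $\operatorname{Hom}(H_1(\Sigma_{1,1}),\C^\times)=(\C^\times)^2$ modulo the inversion $(\alpha,\beta)\mapsto(\alpha^{-1},\beta^{-1})$. The map to characters, $(\alpha,\beta)\mapsto(\alpha+\alpha^{-1},\beta+\beta^{-1},\alpha\beta+\alpha^{-1}\beta^{-1})$, is onto the surface $\tr\partial=2$. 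Its fibres are exactly the $\pm$-orbits, because $(x,y,z)$ determines the pair of abelian characters up to simultaneous inversion. A Dehn twist acts on abelian characters through its action on $H_1$, so the twists $L,R$ of the displayed formulas correspond to $\chi\mapsto\chi\circ\varphi^{\pm1}$ on $\operatorname{Hom}(\Z^2,\C^\times)$. I would check this against $L:(x,y,z)\mapsto(x,z,xz-y)$ on one generator. The sign of the exponent does not affect the fixed set, since the fixed set of $\varphi$ equals that of $\varphi^{-1}$.

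Next I would solve the fixed-point equation. A class $\{\chi,\chi^{-1}\}$ is fixed iff $\chi\circ\varphi=\chi$ or $\chi\circ\varphi=\chi^{-1}$. Hyperbolicity gives $\det(\varphi\mp1)\ne0$. Characters $\chi:\Z^2\to\C^\times$ with $\chi\circ(\varphi\mp1)=1$ are therefore exactly the characters of the finite group $\coker(\varphi\mp1)^{\mathsf T}$, which is isomorphic (via the Smith form) to $A_\mp$. This identifies the two solution sets with the Pontryagin duals $\widehat{A_-}\cong A_-$ and $\widehat{A_+}\cong A_+$, and taking the quotient by inversion gives the images of $A_-/\!\pm$ and $A_+/\!\pm$. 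Both sets are closed under inversion, so the fixed locus is exactly the union of these two images.

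For the intersection, suppose $\chi$ lies in both sets. Then $\chi\circ\varphi=\chi=\chi^{-1}$, so $\chi^2=1$ and $\chi$ is a character of $A_-/2A_-$. When $k$ is odd, $|A_-|=k-2$ is odd, so $A_-/2A_-=0$ and $\chi$ is trivial. The same orbit could also be hit by $\chi$ in one set and $\chi^{-1}$ in the other. That case reduces to the same one, because each set is inversion-closed. Hence the two images meet only at the trivial character.

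Finally the count. On a group of odd order $N$, the quotient by $\pm$ has $(N+1)/2$ elements, since only $0$ is fixed by $-1$. For odd $k$ the orders are $|A_\mp|=k\mp2$, giving $\tfrac{k-1}{2}+\tfrac{k+3}{2}=k+1$, with the trivial character counted once in each sector. The main care needed is the identification in the first step: that the map from abelian characters to points of the locus has fibres exactly the $\pm$-orbits, including the non-generic points where $\alpha$ or $\beta$ equals $\pm1$. I would handle this by recovering $\alpha$ from $x$ up to inversion, and then fixing $\beta$ compatibly through $z$ whenever $\alpha\neq\alpha^{-1}$.
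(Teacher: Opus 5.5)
Your proposal is correct and follows the paper's proof essentially step for step. You identify the reducible locus with $\operatorname{Hom}(\Z^2,\C^\times)$ modulo $\chi\mapsto\chi^{-1}$. A class is then fixed iff $\chi$ kills $\operatorname{im}(\varphi-1)$ or $\operatorname{im}(\varphi+1)$, so the two solution sets are $\widehat{A_\mp}\cong A_\mp$. A character lying in both has $\chi^2=1$, hence is trivial when $|A_\mp|=k\mp2$ is odd.

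The differences are small. You check by hand, recovering $\alpha$ from $x$ and then $\beta$ from $y$ and $z$, that the fibres over the locus are exactly the $\pm$-orbits; the paper instead cites Goldman's criterion and asserts the Weyl-involution statement. Also, the transpose in your ``characters of $\coker(\varphi\mp1)^{\mathsf T}$'' is not needed: characters of $\Z^2$ that kill $\operatorname{im}(\varphi\mp1)$ are by definition characters of $\coker(\varphi\mp1)=A_\mp$.
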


\begin{proof}
A reducible $SL(2,\C)$ representation of $F_2$ is, up to semisimplification and conjugation, diagonal, i.e.\ a character
$\chi\in\operatorname{Hom}(H_1(\Sigma),\C^\times)$ with $H_1(\Sigma)=\Z^2$; by Goldman's criterion \cite{Goldman09} these are exactly the points with $\tr[a,b]=2$. A
point of the character variety remembers $\chi$ only up to the Weyl involution $\chi\mapsto\chi^{-1}$, so it is $\varphi$-fixed iff
$\chi\circ\varphi=\chi$ or $\chi\circ\varphi=\chi^{-1}$, i.e.\ iff $\chi$ is trivial on $\operatorname{im}(\varphi-1)$, respectively on
$\operatorname{im}(\varphi+1)$. Thus $\chi\in\operatorname{Hom}(A_\mp,\C^\times)\cong A_\mp$, both finite since $\det(\varphi\mp1)=2\mp\tr\varphi\ne0$.
Quotienting by the Weyl involution gives the two orbit sets; the two conditions hold simultaneously iff $\chi^2=1$ and $\chi\circ\varphi=\chi$, which for
$|A_\pm|$ odd forces $\chi=1$. For $A$ of odd order $n$, $\#(A/\!\pm)=(n+1)/2$. For $A$ of even order, $\#(A/\!\pm)=(|A|+|A[2]|)/2$, where $A[2]$ is the subgroup of elements of order at most $2$; for $A$ cyclic this is $n/2+1$.
\end{proof}

The theorem counts characters. For a word without a singleton run whose nonzero kernel vectors are all non-degenerate,
Theorems~\ref{thm:allwords} and~\ref{thm:complete} show that the number of non-degenerate Bethe classes equals the number of characters; a
bijection would need a choice of isomorphism $A_\mp\cong\operatorname{Hom}(A_\mp,\C^\times)$, for instance the one given by the linking form.

\subsection{Completeness: the non-degenerate classes are the kernel-vector classes}
\label{sub:complete}

Theorem~\ref{thm:allwords} constructs a class for every non-degenerate kernel vector, and Theorem~\ref{thm:count} counts the fixed points; what
connects the two is the statement that there are no other classes. For words whose runs all have length at least $2$ this is a
theorem, and the proof is the gluing of the frieze rows that Proposition~\ref{prop:runreduce} set up. Call a solution $x$ of
the shape equations \eqref{eq:bethex} at $\epsilon=0$ \emph{non-degenerate} if every $x_t$ is finite and different from $0$ and $1$. (Theorem~\ref{thm:complete} shows that, apart from the
identity classes and for words with no singleton run, these are exactly the classes of Theorem~\ref{thm:allwords} at kernel vectors with no vanishing $f(\mathsf m_t)$, the sense of
``non-degenerate'' used there.) Two
preliminaries.

First, the equations in the shape variables, \eqref{eq:bethex} of Section~\ref{sub:charges}, are sign-free: the $2^{\#L}$ vacua of a class share $x$ and differ only in the signs of $z_L=\pm x_L^{-1/2}$. The \emph{squared} system of
Proposition~\ref{prop:runreduce}, which squares the $R$-site equations as well, is strictly weaker: it admits $1-x_R=-\prod x^{K/Q}$,
and has many more complex solutions. The $R$-site equations must be used as they stand.

Second, a lemma on the sine function.

\begin{lemma}[Sine triples]
\label{lem:sine}
For $A,B\in\C$ with $\sin A\sin B\sin(A+B)\ne0$ write $r_1=\sin B/\sin A$,
$r_2=\sin(A+B)/\sin A$. Then $\cos A=(r_1^2+r_2^2-1)/2r_1r_2$; if $r_1,r_2$ are known only up to sign, so is $\cos A$.
Hence the triple $(\sin A:\sin B:\sin(A+B))$, known up to a common factor and up to the sign of each entry,
determines $(A,B)$ up to $(A,B)\mapsto(-A,-B)$ and shifts by $\pi$ in each entry. And a triple of the form $(h_1:h_2:h_1+h_2)$,
the values of a linear function, has $r_1^2+r_2^2-1=2r_1r_2$, hence $\cos A=\pm1$, $\sin A=0$: a sine triple is never
proportional to a linear one.
\end{lemma}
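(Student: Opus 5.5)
The plan has three steps: the formula for $\cos A$, the recovery of $(A,B)$, and the exclusion of linear triples.

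For the formula, I would use the law of cosines for sines. From $\sin(A+B)=\sin A\cos B+\cos A\sin B$ one gets
\[
\sin^2(A+B)=\sin^2A\cos^2B+2\sin A\sin B\cos A\cos B+\cos^2A\sin^2B .
\]
I would rewrite $\cos^2B=1-\sin^2B$ and $\cos^2A=1-\sin^2A$. The cleanest route is the identity
\[
\sin^2B+\sin^2(A+B)-\sin^2A=2\sin B\sin(A+B)\cos A,
\]
which is polynomial in $e^{\pm iA},e^{\pm iB}$ and so holds over $\C$. To check it, apply $\sin^2X-\sin^2Y=\sin(X-Y)\sin(X+Y)$ with $X=A+B$, $Y=A$. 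This gives $\sin^2(A+B)-\sin^2A=\sin B\sin(2A+B)$, and $\sin B+\sin(2A+B)=2\sin(A+B)\cos A$. Dividing by $\sin^2A\ne0$ then gives $\cos A=(r_1^2+r_2^2-1)/2r_1r_2$, which is legitimate since $r_1r_2\ne0$. Changing the sign of $r_1$ or $r_2$ changes only the sign of the right side, so $\cos A$ is determined up to sign.

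For the determination statement, the projective triple with signs gives $r_1^2$ and $r_2^2$, and $r_1r_2$ up to sign, so $\pm\cos A$ is known. That fixes $A$ up to $A\mapsto-A$ and shifts by $\pi$ (sign of $\cos$) and $2\pi$. Given $A$, I would recover $B$ from $\cot B$. Expanding $\sin(A+B)/\sin B=\cos A+\sin A\cot B$ shows $\cot B$ is determined by $r_2/r_1$ and $A$, up to the same sign ambiguity. Since $\cot$ has period $\pi$, $B$ is then fixed modulo $\pi$, and the sign conventions match $(A,B)\mapsto(-A,-B)$. The main obstacle is bookkeeping: checking that the independent sign ambiguities of the entries produce exactly the stated residual freedom and nothing more. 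I would handle it by noting that the sign of each entry can be absorbed by $A\mapsto A+\pi$ or $B\mapsto B+\pi$, which flip the signs of specific entries.

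For the last claim, suppose the triple is $(h_1:h_2:h_1+h_2)$ with $h_1\ne0$. Then $r_1=h_2/h_1$ and $r_2=1+r_1$, so $r_2-r_1=1$, and squaring gives $r_1^2+r_2^2-1=2r_1r_2$. Signs enter only through $\pm$. Hence $\cos A=\pm1$ and $\sin A=0$, which contradicts the hypothesis. So a sine triple is never proportional to a linear one.
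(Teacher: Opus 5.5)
Your argument is correct, and its outline is the paper's: derive the formula for $\cos A$, recover $A$ and then $B$, and dispose of linear triples through $(r_2-r_1)^2=1$. You get the formula from the identity $\sin^2B+\sin^2(A+B)-\sin^2A=2\sin B\sin(A+B)\cos A$. The paper instead substitutes $\cos B=r_2-r_1\cos A$ and $\sin B=r_1\sin A$ into $\cos^2B+\sin^2B=1$; these are the same computation.

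The genuine difference is how $B$ is recovered. The paper reads off $\sin B=r_1\sin A$, which leaves the alternative $B\mapsto\pi-B$. It must then exclude that alternative through the cosine relation, treating $\cos A=0$ and $\cos B=0$ as exceptional cases. Your formula $\cot B=(r_2/r_1-\cos A)/\sin A$ gives $B$ modulo $\pi$ in one step and needs no such exclusion. The property it uses is that $\cot$ is injective modulo $\pi$ on $\C\smallsetminus\pi\Z$, which follows from $\cot B-\cot B'=\sin(B'-B)/(\sin B\sin B')$. Periodicity, which you cite, gives only the converse.

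The sign bookkeeping you defer is short:
\begin{itemize}
\item A choice of sign of $r_1r_2$ fixes both $\cos A$ and $r_2/r_1$.
\item Flipping that sign is absorbed by $A\mapsto A+\pi$, which leaves $\cot B$ unchanged.
\item The map $A\mapsto-A$ sends $\cot B$ to $-\cot B$, giving the ambiguity $(A,B)\mapsto(-A,-B)$.
\end{itemize}
The one case to mention is $\cos A=0$. There the sign of $r_1r_2$ is no longer tied to $\cos A$, so $\cot B$ is known only up to sign. But $-A\equiv A\pmod\pi$ in that case, so this is again the ambiguity $(A,B)\mapsto(-A,-B)$.

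In the last claim, note that proportionality to a sine triple makes $h_1$, $h_2$ and $h_1+h_2$ all nonzero. That is what licenses the division by $2r_1r_2$.
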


\begin{proof}
From $\sin(A+B)=\sin A\cos B+\cos A\sin B$ one gets $\cos B=r_2-r_1\cos A$, and $\sin B=r_1\sin A$. Substituting in
$\cos^2B+\sin^2B=1$ gives $1-r_2^2+2r_1r_2\cos A-r_1^2=0$, which is the formula for $\cos A$. With $\cos A$ known up to sign, $r_1$
determines $\sin B/\sin A$ and hence $B$ given $A$ up to $B\mapsto\pi-B$ and shifts by $\pi$. The alternative $\pi-B$, that is $-B$ modulo $\pi$, is compatible with $\cos B=r_2-r_1\cos A$ only if $\cos A=0$ or $\cos B=0$, and there it is the ambiguity $(A,B)\mapsto(-A,-B)$. For a linear triple, $r_1=h_2/h_1$ and $r_2=(h_1+h_2)/h_1$
give $r_1^2+r_2^2-1=2r_1r_2$.
\end{proof}

\begin{theorem}[Completeness for words with no singleton run]
\label{thm:complete}
Let $\varphi$ be a hyperbolic word in which every run has length at least $2$, and let $x$ be a non-degenerate solution of
\eqref{eq:bethex} at $\epsilon=0$. Then either $x$ is the class of $V^*$ or of $\sigma V^*$, or $x$ is the class of
Theorem~\ref{thm:allwords} at a vector $v\in\mathcal K_\mp\smallsetminus0$, $N=|A_\mp|$, in one of the two sectors; and
$v$ is determined by $x$ up to sign. In particular the non-degenerate classes of such a word are in bijection with
$\{V^*,\sigma V^*\}\sqcup\bigl(\mathcal K_-^{\mathrm{nd}}\smallsetminus0\bigr)/\!\pm\ \sqcup\ \bigl(\mathcal K_+^{\mathrm{nd}}\smallsetminus0\bigr)/\!\pm$,
where $\mathcal K^{\mathrm{nd}}_\mp$ is the set of kernel vectors with no vanishing $f(\mathsf m_t)$.
\end{theorem}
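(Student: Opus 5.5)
The plan is to glue the frieze rows of the runs. Let $x$ be a non-degenerate solution. Every run has length at least $2$, so by Theorem~\ref{thm:friezeall}, with the length-$2$ trace of Proposition~\ref{prop:runreduce}, each run $\rho$ carries a frieze row $\hat d_b=s_b^{-1/2}$ with a conserved trace $\tau_\rho$. By Remark~\ref{rem:friezedef}(ii) this row is of one of two kinds. If $\tau_\rho=\pm2$ it is affine, $\hat d_j=\hat d_0+j$ up to the sign pattern of Remark~\ref{rem:friezesign}. Otherwise it is trigonometric, $s_j=\sin^2\vartheta_\rho/\sin^2(j\vartheta_\rho+\phi_\rho)$. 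The exponential solutions are excluded because $\hat d_j^2-\hat d_{j-1}\hat d_{j+1}=1\ne0$.

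In either case I realise the row as the values of one function on the Farey slopes of the run. In the trigonometric case put $f(e)=\sin(\langle\theta,e\rangle)$, where the linear functional $\theta$ is fixed on the run's basis $(p,q)$ by $\theta(p)=\vartheta_\rho$ and $\theta(q)=\phi_\rho$. In the affine case use a height $h(e)=\langle\eta,e\rangle$. Non-degeneracy, meaning no $s_b$ equal to $0$, $1$ or $\infty$, ensures that every denominator is nonzero.

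The key step is the junction. At the end of a run, the equations at the last $L$ and the first $L$ of the next run involve the neighbouring run's coordinates. Their explicit form is in the proof of Theorem~\ref{thm:allwords}, and they reduce to Ptolemy-type identities among $f(p),f(q),f(c_i),f(g_j)$. I would show that the junction equations force the functional on the next run to be the transport of the previous one through the flip. The tool is Lemma~\ref{lem:sine}: the three slopes shared across the junction ($p$, the last middle slope, and the next middle slope) give a triple $(\sin A:\sin B:\sin(A+B))$. That triple is known up to signs and scale, and the lemma pins down $(A,B)$ modulo the sign and $\pi$-shift ambiguity. The same lemma shows that a sine triple is never proportional to a linear one. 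So an affine run cannot abut a trigonometric run, and either every run is affine or every run is trigonometric.

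In the all-affine case the junctions transport the height functional around the word. Closing up forces $h\circ\varphi=\mu h$, so $\eta$ is a left eigenvector of $\varphi$. With $\mu=\lambda$ the solution is the class of $V^*$ by Theorem~\ref{prop:propagated}, and with $\mu=\lambda^{-1}$ it is the class of $\sigma V^*$.

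In the all-trigonometric case, going once around the word, the closing condition (b) in the proof of Theorem~\ref{thm:allwords} read backwards gives $\theta\circ\varphi\equiv\pm\theta$ modulo $\pi$ on $\Z^2$, up to the $(-1)^{\langle w,e\rangle}$ sign ambiguity. Hence $\theta=\pi\langle v,\cdot\rangle/N$ with $(\varphi\mp1)v\equiv0\pmod N$, so $v\in\mathcal K_\mp$. Here I must rule out $\theta$ irrational: since $\varphi$ is hyperbolic, $\theta\circ(\varphi\mp1)\in\pi\Z^2$ with $\det(\varphi\mp1)=2\mp\tr\varphi\ne0$ forces $\theta$ to be rational with denominator dividing $N$. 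Then $x$ coincides with the formula of Theorem~\ref{thm:allwords}, and the nonvanishing of $f(\mathsf m_t)$ is just non-degeneracy. Uniqueness of $\pm v$ comes from Lemma~\ref{lem:sine} once more: the run traces and phases determine $\theta$ up to overall sign modulo $\pi$, that is $v$ up to sign modulo $N$.

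The main obstacle is that the junction step has to use the unsquared $R$-site equations, as the preliminaries warn. The squared system only yields the products $r_1$ and $r_2$ up to sign, together with the root-of-unity ambiguity $\gamma$ of Proposition~\ref{prop:runreduce}. I would first fix $\gamma=1$ from the equations at $R_1$ and $R_2$, then carry out the sign bookkeeping of Remark~\ref{rem:friezesign} (the flips $\vartheta\mapsto\pi-\vartheta$) carefully at each junction. This is where the hypothesis that no run is a singleton enters: it guarantees that each run contributes two consecutive values, hence a well-defined trace and phase.
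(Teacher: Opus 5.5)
Your proposal is correct and follows the paper's proof (Appendix~\ref{app:complete}) step for step. The steps are: frieze rows on each run; the run-end equations, with both $(R_1)$ and $(R_2)$ used (this is where $b\ge2$ enters), reducing to agreement of adjacent rows up to scale and signs on the shared Farey triangle; gluing by Lemma~\ref{lem:sine}, with the affine/sine dichotomy; and closure, giving either an eigenfunctional or $\Theta=\pi\langle v,\cdot\rangle/N$ with $v\in\mathcal K_\mp$.

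For the bijection you should also record, as the paper does, two further points. First, $v\not\equiv0$, because $f\ne0$. Second, no $x$ arises from both sectors: $\Theta\circ\varphi\equiv\Theta$ and $\Theta\circ\varphi\equiv-\Theta$ together give $2\Theta\equiv0\pmod\pi$, and then $f$ vanishes on some slope.
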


The proof is given in Appendix~\ref{app:complete}. It has four steps. On each run the frieze relation makes the coordinates a sine
row (or an affine row). The Bethe equations at the run ends say that the rows of adjacent runs agree on the three slopes they share. The
sine-triple Lemma~\ref{lem:sine} glues the rows into one additive function $\Theta$ on $\Z^2$. Closing up around the word forces
$\Theta=\pi\langle v,\cdot\rangle/N$ with $v$ a kernel vector. The hypothesis that no run is a singleton is used in the first, second and last steps.

\begin{remark}[Singleton runs]
\label{rem:singleton}
This remark uses the notation of the proof of Theorem~\ref{thm:complete} in Appendix~\ref{app:complete}: $(R_k)$, $(L_a)$, $(L'_1)$ are the
Bethe equations at the sites of an $R$-run and at the $L$-sites on either side of it, and $\rho_e$, $\varsigma_e$ are the ratios of the
rows of adjacent runs on a shared slope $e$.
For eight words with singleton runs ($LR^2$, $LR^3$, $LR^4$, $LRLR^2$, $L^2RLR$, $L^2RLR^2$, $LRL^2R^2$, $L^2R^2LR$) the number of solutions, computed exactly, is the number given by the bijection of the theorem. The argument has to be modified: a run of
length $1$ carries one coordinate while a row has two parameters, so the run-end equations fix the junction ratios only after the free
parameter is chosen. For a singleton $R$-run between $L$-runs of length $\ge2$, choose the free value
$F_R(p)$ with $\rho_p^2=\rho_{q'}^2$. Then the three equations $(R_1)$, $(L_a)$, $(L'_1)$ read
$(\rho_{q'}/\rho_p)^2=(\rho_{c_{a+1}}/\rho_p)^2=(\varsigma_{g_1}/\varsigma_{g_2})^2$ and $\varsigma_{q'}^2=\varsigma_{g_1}^2$, which are the full junction conditions; a singleton $L$-run between $R$-runs of length $\ge2$ is the same
computation with the roles exchanged. Words with consecutive singleton runs, such as $LRLR\cdots$, have not been written out.
\end{remark}

Combined with Theorem~\ref{thm:Hall}, for a word with no singleton run and with every nonzero kernel vector non-degenerate (for
prime $N$: no middle slope of the word is a multiple of the kernel vector modulo $N$), the non-degenerate
classes are exactly the $\#(A_-/\!\pm)+\#(A_+/\!\pm)$ classes of Theorem~\ref{thm:count}, with $\mathcal H_\alpha=2^{\#L}|A_\mp|$ off
the identity, and, for odd trace and given $\det(F+D_R)\ne0$, Proposition~\ref{prop:sumrule} gives $\sum\mathcal H^{-1}=1$ over the vacua of the
non-degenerate classes. If there are no degenerate classes this is the full sum rule, and Remark~\ref{rem:thirteenth} then applies to those words. What the degenerate kernel vectors contribute --- classes with a coordinate at infinity, which the harvest
reaches only partially --- is the remaining piece of the sum rule in general.

\subsection{Which locus, and the exclusion of the Markov locus}
\label{sub:markov}

In \cite[(A.4)--(A.5)]{GKL23} the real mass $\mathfrak m$ and the mixing parameter $\nu$ enter through $\mathfrak m+(i\pi+\tfrac\hbar2)\nu=\tfrac12M_\lambda$, where $M_\lambda$ is the logarithmic holonomy of the peripheral loop $\partial$ (so that $e^{M_\lambda/2}$ is an eigenvalue of $\rho(\partial)$).
At $\mathfrak m=0$, $\nu=-1$ this gives $M_\lambda=-2\pi i-\hbar$. In the limit $\hbar\to0$ it is $M_\lambda=-2\pi i$, hence
peripheral holonomy with eigenvalues $(-1,-1)$ and $\tr\partial=-2$: the Markov locus $x^2+y^2+z^2-xyz=0$. That reading is excluded below.
At $\hbar=2\pi i$, the round sphere $b=1$ of \cite{GKL23}, it gives instead $M_\lambda=-4\pi i$, eigenvalues $(1,1)$ and $\tr\partial=+2$, the locus used here. So the relation (A.5) alone does not fix the locus; the torsions below decide it.

\begin{proposition}
\label{prop:exclusion}
Let $\varphi=LR^3$ or $L^3R^3$, and $k=\tr\varphi$. The Jacobian determinants $\det(I-d\varphi|_T)$ (the adjoint torsions) at the $\varphi$-fixed points on the Markov locus $\tr\partial=-2$,
counted with multiplicity, do not contain the rational values $\pm(k-2)$ and $\pm(k+2)$ with the multiplicities $\tfrac{k-3}2$ and $\tfrac{k+1}2$ that Theorem~\ref{thm:Hall} gives at the non-degenerate classes, and that the moduli \eqref{eq:as} of the template require through $\mathcal H_\alpha=1/S_{0\alpha}^2$, in the class-level normalisation $\mathcal H/2^{\#L}$ of Theorem~\ref{thm:Hall}. (The two identity classes, with $\mathcal H/2^{\#L}=(k^2-4)\lambda^{\pm1}$, are
not needed for the exclusion.)
\end{proposition}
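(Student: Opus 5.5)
The plan is to compute the $\varphi$-fixed points on the Markov locus directly and then count. On that locus $x^2+y^2+z^2=xyz$ every $\varphi$-fixed point is a root of a polynomial system. The Dehn twists $L:(x,y,z)\mapsto(x,z,xz-y)$ and $R:(x,y,z)\mapsto(z,y,yz-x)$ compose to an explicit polynomial automorphism $\Phi$ of $\C^3$ preserving $\tr\partial$. For $\varphi=LR^3$ and $L^3R^3$ I will solve $\Phi(x,y,z)=(x,y,z)$ together with $x^2+y^2+z^2-xyz=0$. The singular point $(0,0,0)$ of the Markov cubic is fixed by every $\varphi$. I will treat it separately, on the reducible-type stratum, or by noting that the tangent space is not two-dimensional there, so that it cannot supply a rational value with the right multiplicity. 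At every other fixed point the surface is smooth, the tangent space $T$ is two-dimensional, and $\det(I-d\varphi|_T)$ is computed from the Jacobian of $\Phi$ restricted to $T=\ker d(\tr\partial)$. Equivalently it is the product of the two eigenvalues of $I-d\Phi$ other than the normal one; since $\Phi$ preserves $\tr\partial$, that normal eigenvalue is $1$.

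The concrete computation proceeds in three steps.

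First, I eliminate to a single univariate polynomial, via a resultant or Gröbner basis over $\Q$, whose roots are the coordinates of the nonsingular fixed points. I record their number with multiplicity.

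Second, I form the torsion polynomial. For each fixed point I compute $\tau=\det(I-d\varphi|_T)$, and I obtain the polynomial whose roots are the $\tau$-values by eliminating the point coordinates from $\tau-\det(\cdot)$.

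Third, I factor that polynomial over $\Q$ and read off its rational roots together with their multiplicities.

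The claim is then that $\pm(k-2)$ and $\pm(k+2)$ do not occur as roots with multiplicities $\tfrac{k-3}2$ and $\tfrac{k+1}2$. For $LR^3$ these are $\pm3,\pm7$ with multiplicities $1,3$; for $L^3R^3$ they are $\pm9,\pm13$ with multiplicities $4,6$. The comparison with $\tr\partial=+2$ is immediate from Theorem~\ref{thm:count}. There the torsions at the fixed reducible characters are $\det(I\mp\varphi)$-type and give exactly these values, which is what Theorem~\ref{thm:Hall} and the template moduli \eqref{eq:as} require in the normalisation $\mathcal H/2^{\#L}$.

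A cleaner structural route should also be tried, to avoid pure computation. The Markov locus contains the characters $x=\pm2\cos$-type and the real component, and the fixed points there include the quaternionic-type and the images of the finite Markov dynamics. A counting argument may already suffice: if the total number of nonsingular fixed points on $\tr\partial=-2$, computable by a Lefschetz-type count or by the degree of the eliminant, is smaller than $\tfrac{k-3}2+\tfrac{k+1}2=k-1$, the exclusion follows without evaluating torsions. I would check this first, since it would settle both words at once.

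The main obstacle is the multiplicity bookkeeping. I have to decide whether fixed points are counted up to the sign-change symmetries $(x,y,z)\mapsto(-x,-y,z)$ etc., that is the $H^1(\Sigma;\Z_2)$ action, and handle any nonreduced eliminant factors. I would fix the convention that fixed points are counted on the $SL(2,\C)$ character variety modulo nothing further, as in Theorem~\ref{thm:count}. I would then verify that the conclusion holds under either convention, since the rational values fail to appear at all or appear with the wrong sign or multiplicity in both.
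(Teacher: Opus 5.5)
Your plan is essentially the paper's proof: the paper takes the ideal $I$ of $\varphi$-fixed points on $\tr\partial=-2$ and writes the torsion as the polynomial $J=3-\tr d\varphi$, which is your normal-eigenvalue observation; it extends $J$ to the origin, where $J=3$, instead of treating that point separately, and factors the characteristic polynomial of multiplication by $J$ on $\C[x,y,z]/I$ over $\Q$, finding besides irreducible quartics only the rational roots $3$ (twice) at $LR^3$ and $3,7$ (twice each) at $L^3R^3$, so $\pm7$, resp.\ $\pm9,\pm13$, never occur. Your counting shortcut cannot settle either word, however: those polynomials have degree $6$ and $12$, and since the origin accounts for at most the factor $(t-3)^2$, at least $k-1$ nonsingular fixed points remain in both cases, so the torsion values really have to be computed.
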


\begin{proof}
The modular data require the rational values $k-2$ and $k+2$, up to sign (the comparison with the torsion is up to sign), with multiplicities $\tfrac{k-3}2$ and $\tfrac{k+1}2$. Let $I\subset\C[x,y,z]$ be the ideal of the $\varphi$-fixed points on $\tr\partial=-2$, so that
$\C[x,y,z]/I$ is the coordinate ring of the fixed-point scheme, and let $\det(I-d\varphi|_T)$ be the Jacobian determinant of $\varphi$ on the surface,
whose value at a fixed point is its adjoint torsion, i.e.\ the $\mathcal H$ that the class would have. Explicitly the function is $J=3-\operatorname{tr}d\varphi$ on $\C^3$. At a smooth point it equals $\det(I-d\varphi|_T)$, because $\det d\varphi=1$ and $dF\circ d\varphi=dF$ for the defining function $F$ of the surface. At the singular points, the origin (the quaternionic character, with $J=3$) among them, $J$ is the extension by this formula. The eigenvalues of multiplication by this
function on $\C[x,y,z]/I$ are its values at the fixed points, with multiplicity, so the roots of its characteristic polynomial are the
torsion values the locus offers. It factors over $\Q$ as
\begin{align}
LR^3,\ \tr\partial=-2:&\quad (t-3)^2\bigl(t^4+30t^3+356t^2+1394t+2091\bigr),\\
L^3R^3,\ \tr\partial=-2:&\quad (t-7)^2(t-3)^2\bigl(t^4+106t^3+3604t^2+43190t+90699\bigr)^2 .
\end{align}
The quartic factors are irreducible over $\Q$, so the torsion values they give (four distinct values, each repeated when the quartic is squared) are Galois-conjugate algebraic numbers of degree four, and
irrational; they can equal none of the required rational values. The only rational roots are $3$ (twice) at $LR^3$ and $7$, $3$ (twice each) at $L^3R^3$.
Now the required values are $(k-2,k+2)=(3,7)$ with multiplicities $(1,3)$ at $LR^3$ and $(9,13)$ with multiplicities $(4,6)$ at $L^3R^3$. At $LR^3$ neither $7$ nor $-7$ is
a root; at $L^3R^3$ none of $\pm9$, $\pm13$ is. So the required multiplicities ($3$ at $LR^3$, and $4$ and $6$ at $L^3R^3$) cannot be attained.
\end{proof}

By contrast, on $\tr\partial=+2$ the same characteristic polynomial splits completely over $\Q$, with exactly the multiplicities demanded:
\[
LR^3\ (k=5):\ (t+3)(t+21)^2(t-7)^3,\qquad L^3R^3\ (k=11):\ (t+117)^2(t+9)^4(t-13)^6,
\]
against $(k-2,k^2-4,k+2)=(3,21,7)$ and $(9,117,13)$ with multiplicities $\bigl(\tfrac{k-3}2,2,\tfrac{k+1}2\bigr)$. These are exact identities over $\Q$, computed with a Gr\"obner basis,
and the sign pattern is $-(k-2)$ on the $A_-$ sector, $+(k+2)$ on $A_+$, and $-(k^2-4)$ on the trivial character, counted in both sectors.
At the non-identity fixed points this holds for every word of odd trace:

\begin{proposition}[The Jacobian at the non-identity fixed points]
\label{prop:jacfixed}
Let $\varphi$ be a hyperbolic word of odd trace $k$. Every non-trivial $\varphi$-fixed character on the reducible surface $\tr\partial=+2$ is
a smooth point of the surface and a simple fixed point, and the Jacobian determinant of $\varphi$ on the surface there is
\[
\det(I-d\varphi|_T)=\begin{cases}\det(I-\varphi)=-(k-2)=-|A_-|&\text{on the }A_-\text{ sector},\\
\det(I+\varphi)=k+2=|A_+|&\text{on the }A_+\text{ sector}.\end{cases}
\]
\end{proposition}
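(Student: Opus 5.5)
The plan is to work directly with the diagonal characters. By the proof of Theorem~\ref{thm:count}, a non-trivial fixed point on $\tr\partial=+2$ is the image of a character $\chi\colon\Z^2\to\C^\times$ with $\chi\circ\varphi=\chi^{\pm1}$. I would write $\chi(e)=e^{2\pi i\langle u,e\rangle}$ with $u\in\Q^2/\Z^2$, so that $\chi(a)=\mu$, $\chi(b)=\nu$, and the character point is $(x,y,z)=(\mu+\mu^{-1},\nu+\nu^{-1},\mu\nu+\mu^{-1}\nu^{-1})$.

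\textbf{Smoothness.} The surface is $F=x^2+y^2+z^2-xyz-4=0$. Its singular points are where $\nabla F=0$. On the reducible surface these are exactly the characters with $\chi^2=1$, i.e.\ $x,y,z\in\{\pm2\}$. For odd $k$, $|A_\mp|=k\mp2$ is odd, so a fixed character of order $2$ must be trivial. Hence every non-trivial fixed character is a smooth point.

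\textbf{Local parametrisation.} Near such a point, the map
\[
\pi\colon(\C^\times)^2\to\{\tr\partial=2\},\qquad \chi\mapsto(x,y,z),
\]
is a local biholomorphism. It is the quotient by the Weyl involution $\chi\mapsto\chi^{-1}$, which acts freely there because $\chi\ne\chi^{-1}$. A direct check is that the Jacobian of $(\mu,\nu)\mapsto(x,y)$ equals $(1-\mu^{-2})(1-\nu^{-2})\mu\nu$, with $z$ supplying a chart where this vanishes; I would rather argue by freeness and smoothness.

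Under $\pi$, the action of $\varphi$ on the character variety lifts to the action $\chi\mapsto\chi\circ\varphi$ on the torus $(\C^\times)^2$. This lifted map is multiplicative, with linear part $\varphi^{\mathsf T}$ (or $\varphi$, depending on conventions) in logarithmic coordinates. The sector determines which lift is used:
\begin{itemize}
\item On the $A_-$ sector $\chi\circ\varphi=\chi$, so $\chi$ is a fixed point of the lift itself, and $d\varphi|_T$ is conjugate to $\varphi$.
\item On the $A_+$ sector $\chi\circ\varphi=\chi^{-1}$, so $\chi$ is a fixed point of the lift composed with the Weyl involution. That composite is $\chi\mapsto\chi^{-1}\circ\varphi$, with linear part $-\varphi$, and $d\varphi|_T$ is conjugate to $-\varphi$.
\end{itemize}
Therefore
\[
\det(I-d\varphi|_T)=\det(I-\varphi)=2-k\quad\text{on }A_-,\qquad \det(I+\varphi)=2+k\quad\text{on }A_+.
\]
Both determinants are nonzero, so each fixed point is simple, since $I-d\varphi|_T$ is invertible.

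\textbf{Main obstacle.} The delicate step is justifying that the differential of the Weyl-quotient chart intertwines the actions correctly. One must check two things: that the Dehn-twist formulas for $L$ and $R$ on $(x,y,z)$ are exactly induced by $\chi\mapsto\chi\circ L$ and $\chi\mapsto\chi\circ R$ on the diagonal representations, and that the transpose versus inverse convention does not alter the determinant. The latter holds because $\det(I\mp\varphi^{\mathsf T})=\det(I\mp\varphi)$ and $\det(I\mp\varphi^{-1})=\det(I\mp\varphi)$ for $\varphi\in SL_2$.

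I would verify the first point on generators. Take $\rho(a)=\operatorname{diag}(\mu,\mu^{-1})$ and $\rho(b)=\operatorname{diag}(\nu,\nu^{-1})$, and confirm that $L\colon(x,y,z)\mapsto(x,z,xz-y)$ corresponds to $(\mu,\nu)\mapsto(\mu,\mu\nu)$. This follows from
\[
xz-y=(\mu+\mu^{-1})(\mu\nu+\mu^{-1}\nu^{-1})-(\nu+\nu^{-1})=\mu^2\nu+\mu^{-2}\nu^{-1}.
\]
The computation for $R$ is analogous.
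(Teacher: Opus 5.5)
Your proposal is correct and follows the paper's proof: the Weyl quotient of the torus is \'etale away from the four characters with $\chi^2=1$, the odd orders of $A_\mp$ rule out non-trivial fixed characters of order $2$, and $\varphi$ lifts to $\chi\mapsto\chi\circ\varphi$ on $A_-$ and to its composite with $\chi\mapsto\chi^{-1}$ on $A_+$, with linear parts $\pm\varphi^{\mathsf T}$. Your gradient computation for the singular points of $F$ and your check of the Dehn-twist formula for $L$ only make explicit what the paper leaves implicit.
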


\begin{proof}
The reducible surface is the image of the torus $\operatorname{Hom}(H_1(\Sigma),\C^\times)=(\C^\times)^2$ of abelian characters under the
quotient by the Weyl involution $\iota:\chi\mapsto\chi^{-1}$, and $\varphi$ acts on the torus by $\chi\mapsto\chi\circ\varphi$, which in the
logarithmic coordinates of the Lie algebra $\C^2$ is the linear map $\varphi^{\mathsf T}$. A point of the surface is $\varphi$-fixed iff
$\chi\circ\varphi=\chi$ (the $A_-$ sector) or $\chi\circ\varphi=\chi^{-1}$ (the $A_+$ sector), as in the proof of Theorem~\ref{thm:count}. The
involution $\iota$ acts freely away from the four characters with $\chi^2=1$, and the quotient map is \'etale there. For odd $k$ the orders
$|A_\mp|=k\mp2$ are odd, so a non-trivial fixed $\chi$ has $\chi^2\ne1$ and is such a point. On the $A_-$ sector the action of $\varphi$ on the
surface near the image of $\chi$ lifts to the action of $\varphi$ on the torus, which fixes $\chi$, so
$\det(I-d\varphi|_T)=\det(I-\varphi^{\mathsf T})=1-k+1$. On the $A_+$ sector it lifts to $\iota\circ\varphi$, which fixes $\chi$ and has
differential $-\varphi^{\mathsf T}$, so $\det(I-d\varphi|_T)=\det(I+\varphi)=k+2$. Neither value is $0$, so the fixed point is simple.
\end{proof}

The trivial character is special, and the description of the reducible surface as the quotient of the torus $\operatorname{Hom}(H_1(\Sigma),\C^\times)=(\C^\times)^2$ by $\iota:\chi\mapsto\chi^{-1}$
explains how. The involution $\iota$ fixes exactly the four characters with $\chi^2=1$, the trivial character and the three of order $2$, and at each of them the quotient $\C^2/\pm1$ has an $A_1$ singularity. For odd $k$ the map $\varphi$ acts on the three
characters of order $2$ as an element of order $3$ of $SL(2,\mathbb F_2)$, permuting them cyclically, so none of them is fixed; the trivial character is the only
singular point that is fixed. It is fixed in both senses at once, $\chi\circ\varphi=\chi$ and $\chi\circ\varphi=\chi^{-1}$, so it lies in both sectors $A_-$ and $A_+$; on the covering $\C^2$ the two lifts $\varphi^{\mathsf T}$ and
$\iota\circ\varphi^{\mathsf T}$ both fix the origin, with Jacobian determinants $\det(I-\varphi)=-(k-2)$ and $\det(I+\varphi)=k+2$. The fixed point of the quotient therefore carries both, and its root is double, with value
$-(k^2-4)=-(k-2)(k+2)$, for every word of odd trace. This is what appears: $k^2-4=|A_-||A_+|$ is the Floquet discriminant of half (A) of the proof of Theorem~\ref{thm:torsion}, and the two
identity classes, with $\mathcal H/2^{\#L}=\lambda^{\pm1}(k^2-4)$, resolve this double point, the product of their values being the square of the
root.
Read through localisation, which identifies $\mathcal H_\alpha$ with the adjoint torsion of the corresponding fixed point, this places the
Bethe classes off the Markov locus; that conclusion is drawn in Corollary~\ref{cor:interpretation}.

The resolution of the apparent conflict is a sign convention. Triangulation data most naturally compute $PSL(2,\C)$ structures, and Dimofte--Gaiotto--Gukov \cite{DGG11} (DGG below) fix the lift to $SL(2,\C)$ by a shift of $i\pi$ \cite[\S2.4, footnote~7]{DGG11}. At the $PSL(2,\C)$ level the condition is only
$\rho(\lambda)=\pm I$. The branch $\rho(\lambda)=+I$ is, by Goldman,
the entire reducible surface $\tr\partial=+2$, whose $\varphi$-fixed points are the $k+1$ characters of Theorem~\ref{thm:count}; the branch
$\rho(\lambda)=-I$ exactly forces $\rho(a)\rho(b)=-\rho(b)\rho(a)$, hence $\tr a=\tr b=\tr ab=0$: a single character, the origin $(x,y,z)=(0,0,0)$, which
is the character of the representation $\rho(a)=\begin{psmallmatrix}i&0\\0&-i\end{psmallmatrix}$, $\rho(b)=\begin{psmallmatrix}0&1\\-1&0\end{psmallmatrix}$
with image the quaternion group $Q_8$ --- the \emph{quaternionic character} --- and it is fixed by \emph{every} word. So the strict $SL(2,\C)$ reading yields one isolated solution rather than a TQFT's worth of vacua, while the $PGL(2,\C)$ reading
yields the abelian locus together with the quaternionic point.

\begin{remark}[The thirteenth class]
\label{rem:thirteenth}
The quaternionic character is a candidate extra class, read through (F1). Its peripheral holonomy is central, the kind of flat connection
that the ideal-triangulation construction may fail to realise: \cite[\S2, footnote~1]{DGG11} note that ``almost all flat connections can be realized as (possibly degenerate)
hyperbolic metrics''. At $L^3R^3$ and $L^5R^5$ the
numerical harvest finds $96$ Bethe vacua ($12$ classes) and $896$ ($28$ classes), and on them $\sum1/\mathcal H=1$ exactly, the sum running over
the vacua. The sum rule is a consistency condition of the TQFT, not a consequence of the Bethe equations. If the TQFT satisfies it, a further
class with positive $1/\mathcal H$ would push the sum above $1$; so under that proviso the quaternionic character is excluded at these two words.
The argument does not show that the harvest is exhaustive.
\end{remark}

\begin{corollary}[Interpretation]
\label{cor:interpretation}
Assume \textup{(F1)}. Then for every hyperbolic word the Bethe classes of the topologically twisted $S$-fold theory correspond to the $\varphi$-fixed reducible
characters of the fibre, on the surface $\tr\partial=+2$; these are the restrictions to the fibre of the fibre-abelian flat $SL(2,\C)$
connections on $M_\varphi$. The quaternionic character is a further abelian $PSL(2,\C)$ connection, and its exclusion is argued, assuming the sum rule and on numerical harvests, only at $L^3R^3$ and $L^5R^5$ (Remark~\ref{rem:thirteenth}). The complete hyperbolic structure is not among them. In it the peripheral loop $\partial$ has parabolic, hence non-trivial, holonomy,
whereas on a flat connection abelian on the fibre $\rho(\partial)=\rho([a,b])$ is trivial in $PSL(2,\C)$. At $LR^3$ and $L^3R^3$, given the
modular data \eqref{eq:GKT}, the classes do not lie on the Markov locus $\tr\partial=-2$ \textup{(}Proposition~\ref{prop:exclusion}\textup{)}.
They are counted by $\#(A_-/\!\pm)+\#(A_+/\!\pm)$ \textup{(}Theorem~\ref{thm:count}\textup{)}. At the non-identity classes the handle-gluing operators are, up to the factor $2^{\#L}$ and up to sign, the adjoint torsions
$\det(I-d\varphi|_T)$ of the fixed points, namely $|A_\mp|$ in absolute value: the value $\mathcal H_\alpha=2^{\#L}|A_\mp|$ of Theorem~\ref{thm:Hall} is $2^{\#L}$ times the absolute value of the Jacobian
determinant of Proposition~\ref{prop:jacfixed}. At the two identity classes, over the trivial character, they are the Floquet discriminant
$|A_-||A_+|$ times $\lambda^{\pm1}$ \textup{(}Theorem~\ref{thm:torsion}\textup{)}. Finally, under the half-index dictionary and input~(b), the
central charge of the resulting TQFT satisfies $c\equiv\#L-6\#R\pmod{24}$ \textup{(}Theorems~\ref{thm:farey}, \ref{thm:fareygeneral}\textup{)}.
\end{corollary}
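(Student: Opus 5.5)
The corollary is an assembly of results already proved, read through (F1); the plan is to make the dictionary between Bethe classes and characters explicit first, and then attach each clause to the theorem that supplies it. \textbf{Step 1: the correspondence.} By (F1), in the limit $\nu\to-1$ the twisted-index observables localise on flat $PSL(2,\C)$ connections on $M_\varphi$ in the fibre-abelian locus, so each Bethe class corresponds to such a connection. Restricting a fibre-abelian connection to the fibre $\Sigma_{1,1}$ gives a reducible representation of $F_2=\pi_1(\Sigma_{1,1})$. Its character is $\varphi$-invariant, since the connection extends over the mapping torus, and the conjugating element realises $\varphi$ on the fibre group. By Goldman's criterion, as in the proof of Theorem~\ref{thm:count}, the characters with trivial $SL(2,\C)$ lift of $\rho(\partial)$ are exactly the surface $\tr\partial=+2$. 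The remaining $PSL(2,\C)$ branch $\rho(\partial)=-I$ gives only the quaternionic character, as computed in Section~\ref{sub:markov}. That character is set aside by Remark~\ref{rem:thirteenth}, and I will state explicitly that its exclusion is conditional and only at $L^3R^3$ and $L^5R^5$.

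\textbf{Step 2: excluding the complete structure and the Markov locus.} I will record the one-line topological argument that rules out the complete hyperbolic structure. For a connection abelian on the fibre, $\rho(\partial)=\rho([a,b])$ is trivial in $PSL(2,\C)$, whereas the complete structure has parabolic, hence non-trivial, peripheral holonomy. Exclusion of the Markov locus at $LR^3$ and $L^3R^3$ is Proposition~\ref{prop:exclusion}. Here (F1) identifies $\mathcal H_\alpha$ with the adjoint torsion, and the modular data \eqref{eq:GKT} with \eqref{eq:dict} supply the required rational values and multiplicities.

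\textbf{Step 3: the numerical invariants.} The count $\#(A_-/\!\pm)+\#(A_+/\!\pm)$ is Theorem~\ref{thm:count}. For the handle-gluing values, at the non-identity classes I will compare Theorem~\ref{thm:Hall}, $\mathcal H_\alpha=2^{\#L}|A_\mp|$, with Proposition~\ref{prop:jacfixed}, $\det(I-d\varphi|_T)=\mp|A_\mp|$; they agree up to $2^{\#L}$ and sign. At the two identity classes the value comes from Theorem~\ref{thm:torsion}, via the double root $-(k^2-4)$ at the trivial character discussed after Proposition~\ref{prop:jacfixed}. The central charge clause is Theorems~\ref{thm:farey} and~\ref{thm:fareygeneral}, under the half-index dictionary and input~(b).

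\textbf{Main obstacle.} The main obstacle is Step~1, specifically making precise that localisation lands on characters fixed \emph{up to the Weyl involution}. This is what produces both sectors $A_-$ and $A_+$, rather than only $A_-$. I would argue it from the fact that on the fibre the connection is determined only up to conjugation, and conjugation realises $\chi\mapsto\chi^{-1}$. The Bethe side does not supply this; (F1) does. Beyond that, the proof must carry the hypotheses faithfully. The Theorem~\ref{thm:Hall} values hold only under its non-degeneracy condition, and the correspondence is stated at the level of the classes that (F1) provides, not a proved bijection.
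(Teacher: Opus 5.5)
Your proposal is correct and matches the paper, which gives no separate proof and obtains the corollary exactly as you do: it reads Theorem~\ref{thm:count}, Section~\ref{sub:markov}, Remark~\ref{rem:thirteenth}, Proposition~\ref{prop:exclusion}, Theorems~\ref{thm:Hall} and~\ref{thm:torsion}, Proposition~\ref{prop:jacfixed} and Theorems~\ref{thm:farey}, \ref{thm:fareygeneral} through (F1). Two small adjustments: the Weyl-involution point you call the main obstacle does not come from (F1) but is already settled, with no physics, in the proof of Theorem~\ref{thm:count}; and the comparison with Proposition~\ref{prop:jacfixed} at the non-identity classes is available only for odd trace, since that proposition is stated under that hypothesis.
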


Theorem~\ref{thm:torsion} and Proposition~\ref{prop:exclusion} are consistency evidence for \textup{(F1)}, not a proof: the first compares two
computations that \textup{(F1)} predicts to agree, and the second excludes a rival locus rather than establishing the survivor. Two identities
that \textup{(F1)} predicts are proved from \eqref{eq:bethe} alone: $\mathcal H_\alpha=2^{\#L}|A_\mp|$ at every non-degenerate class
(Theorem~\ref{thm:Hall}), and the completeness of those classes for words without a singleton run (Theorem~\ref{thm:complete}). What remains
is, first, the degenerate classes. These are treated for the $A_-$ sector of $L^mR^m$, $m$ prime, in Part~\ref{part:aminus} (Theorem~\ref{thm:aminus}), and at $L^3R^3$ in Observation~\ref{obs:l3r3minus}; for general words they are described only observationally (Observation~\ref{obs:strata}).
Second, the fibering operator $\mathcal F_\alpha$ at every vacuum: Section~\ref{sec:spins} proves its phase modulo $\tfrac14$ at the classes of Theorem~\ref{thm:allwords}, and Observation~\ref{obs:basevac} is the candidate identity modulo $1$.

\section{Spins and Rogers sums}
\label{sec:spins}

This section concerns the second piece of modular data that the Bethe side supplies, the spins. The spin of a vacuum is the phase of its
fibering operator, and it has two parts. Within a class the $2^{\#L}$ vacua differ by sign flips of the charge-$2$ coordinates, and their spins, measured from the base vacuum of the class, form a quadratic form on the sign group $(\Z/2)^{\#L}$. Across classes the spin is governed by the Rogers sum $c_\alpha$ of the
class, through $24\theta_\alpha\equiv c_\alpha$. The rest of the section evaluates $c_\alpha$.

In order: Section~\ref{sub:relation} derives $24\theta_\alpha\equiv c_\alpha\pmod1$ \eqref{eq:theta24} at every vacuum of a class with real coordinates.
For $L^mR^m$ with $m$ odd, Proposition~\ref{prop:signsum} identifies the spin function on the $2^m$ vacua of a class, up to translation, with the
discriminant form $q$ of the even lattice $K_{LL}$. It evaluates the Gauss sum of $q$, which splits off the semion $U(1)_{\pm2}$ and a
decoupled factor $\Adec$ whose Arf invariant Corollary~\ref{cor:arf} fixes. Proposition~\ref{prop:spinq} removes the translation: measured from a
base vacuum computed from the branch integers, the spin function is exactly $q$. Proposition~\ref{prop:suslin} is the Galois rule:
conjugation $\sigma_k$ multiplies Rogers sums by $k^2$ modulo $1$. For every word of odd trace, Section~\ref{sub:spinform} defines the forms
$q_\mp$ on $A_\mp$ from the binary quadratic form $\Phi$ of $\varphi$, and Lemma~\ref{lem:disc} shows that $(A_-,q_-)\perp(A_+,-q_+)$ ($\perp$ denoting the orthogonal direct sum of metric groups, $A_-\oplus A_+$ with the form $q_-\oplus(-q_+)$) is the
discriminant form of one even lattice. Theorem~\ref{thm:rogers} evaluates the Rogers sum at every non-degenerate class, so it fixes the spins of
its vacua modulo $\tfrac14$ in terms of $\Phi(u)$. Section~\ref{sub:fullchain} gives the Rogers sum of a full trigonometric chain in closed form
(Kirillov) and modulo $1$; this is what the degenerate $A_-$ classes of $L^mR^m$ need, and Theorem~\ref{thm:lines} uses it.
Section~\ref{sub:template} compares the spins with the Gang--Kim template \eqref{eq:GKT}: Corollary~\ref{cor:templatelabel} identifies the
template label with the run index.

Sections~\ref{sub:relation}, \ref{sub:galoisrule}, \ref{sub:spinform} and~\ref{sub:rogersnd} hold for every word, the last two for odd trace;
Proposition~\ref{prop:signsum} and Section~\ref{sub:fullchain} concern $L^mR^m$. Except for the numerical statements of Section~\ref{sub:template}, everything here is proved from the Bethe system; the
dictionary enters only in reading $\arg\mathcal F_\alpha/2\pi$ as a conformal spin (Section~\ref{sub:relation}). Turning
the spins into the twists of $\mathrm{TFT}_0$ uses the grafting fusion rules (Assumption~\ref{ass:F}, Corollary~\ref{cor:T}).

\subsection{The relation}
\label{sub:relation}

With the Bethe equations in log form, $Q_b\log(1-x_b)+\sum_aK_{ab}Z_a=2\pi in_b$ and $Q_aZ_a=-\log x_a$, the exponent $\Psi$ of
\eqref{eq:Fdef} collapses to $\sum_aR(x_a)-\pi i\,n\cdot Z$. Here $R(x)=\Li_2(x)+\tfrac12\log x\log(1-x)$ is evaluated with $\log x_a=-Q_aZ_a$
and with the branch of $\log(1-x_a)$ used in the log-form equations. For real $x_a$ both imaginary parts lie in $\pi\Z$. They differ from
the principal ones by multiples of $2\pi$, and this changes $\operatorname{Re}R(x_a)$ by an integer multiple of $\pi^2=6L(1)$. So
$\operatorname{Re}\sum_aR(x_a)=\sum_aL(x_a)+6L(1)n_0$ for some $n_0\in\Z$, where $L=\operatorname{Re}R$ on the principal branches. Since
$\mathcal F=\exp(i\Psi/2\pi)$ and $6L(1)/2\pi=\pi/2$, the integer $n_0$ contributes the factor $i^{n_0}$. The imaginary part of $\sum_aR(x_a)$ contributes a positive factor. Hence
\begin{equation}
\mathcal F_\alpha=e^{2\pi ic_\alpha/24}\,e^{n\cdot Z/2}\times(\text{fourth root of unity})\times(\text{positive}),\qquad c_\alpha:=\sum_aL(x_a)/L(1),
\label{eq:Fphase}
\end{equation}
where $c_\alpha$ is the Rogers sum of the class $\alpha$ (with $L=\operatorname{Re}R=\widetilde L+L(1)$ when a coordinate
lies outside $(0,1)$, $\widetilde L$ as in Proposition~\ref{prop:suslin}(ii)); at the identity classes $c_\alpha=\#R\in\Z$ by Theorem~\ref{thm:ceff} (the Rogers sum in the $\delta=1-x$ variables being $c_{\eff}=\#L$). The fibering operator $\mathcal F_\alpha$ of a
vacuum carries its spin: $T_{\alpha\alpha}/T_{00}=\mathcal F_\alpha/\mathcal F_0=e^{2\pi i\theta_\alpha}$ with $\theta_\alpha\in\Q/\Z$ the
conformal spin of the simple object $\alpha$, that is, the spin \eqref{eq:spindef} measured from the unit. Throughout, $\theta_\alpha$ is this real number modulo $1$ and $e^{2\pi i\theta_\alpha}$ the
$T$-eigenvalue; by \eqref{eq:theta24} below, $24\theta_\alpha\equiv c_\alpha$ holds at every vacuum of a class with real coordinates, $c_\alpha$ being the Rogers sum of the class. Write $\mu_8$ for the eighth roots of unity. For real classes $\operatorname{Im}Z_a\in\{0,\pm\tfrac\pi2,\pi\}$, so $e^{n\cdot Z/2}$ has
phase in $\mu_8$. Multiplying by $24$ kills $\mu_8$:
\begin{equation}
24\,\theta_\alpha\equiv c_\alpha-\#R\equiv c_\alpha\pmod1 .
\label{eq:theta24}
\end{equation}
So this relation is derived, not assumed, at every vacuum of a class with real coordinates, for every word. The sign-sum of a class, the sum of $e(\operatorname{spin})$ over its vacua, is the subject of the next two
propositions. It is a separate statement, about the vacua within one class: by Milgram's formula \cite{MH73} the Gauss sum of \emph{any} metric group
is $\sqrt{|A|}$ times an eighth root of unity, the exponent being the signature mod $8$, and Proposition~\ref{prop:signsum} shows that the
sign-sum \emph{is} such a Gauss sum.

\subsection{The sign-sum is a Gauss sum}

Recall that the spin of a vacuum is $\operatorname{spin}(V)=\arg\mathcal F_V/2\pi\pmod1$, where $\mathcal F_V$ is the fibering operator of the vacuum (Section~\ref{sub:relation}).

The sign group is exactly $(\Z/2)^{\#L}$ on the $L$-coordinates: a flip $z_a\mapsto\eta_az_a$ preserves the Bethe equations iff $\eta_b^{Q_b}=1$ for
every $b$, which frees the charge-$2$ coordinates and fixes the $R$'s, and the constraint $\prod_a\eta_a^{K_{ab}}=1$ is automatic because $K_{ab}=2E_{ab}$
is even.

\begin{proposition}
\label{prop:signsum}
For odd $m$, on the $2^m$ vacua of a class of $L^mR^m$ the spin is, up to a translation (the choice of base vacuum), the quadratic form, restricted to the $2$-primary part, of the discriminant group of
$K_{LL}$, all of whose entries are even,
\[
q(\varepsilon)=\tfrac18\varepsilon^{\mathsf T}K_{LL}\varepsilon
=\tfrac12\Bigl[\sum_a\varepsilon_a-\sum_a\varepsilon_a\varepsilon_{a+1}\Bigr]+\frac{m\varepsilon_1}4\pmod1,\qquad\varepsilon\in(\Z/2)^m,
\]
well defined because $K_{LL}$ is even, and $\tfrac14\Z$-valued with the quarter-values occurring exactly at $\varepsilon_1=1$. It splits orthogonally
\[
\bigl((\Z/2)^m,q\bigr)\cong\bigl(\Z/2,\tfrac m4\bigr)\oplus\Adec,\qquad |\Adec|=2^{m-1},\ q|_{\Adec}\subset\tfrac12\Z,
\]
where $\Adec$, the \emph{decoupled factor}, is the orthogonal complement of the all-flip $w=(1,\dots,1)$, for which $q(w)=m/4$ (the
$U(1)_{\pm2}$ generator). The Gauss sum of $q$ is
\[
\sum_\varepsilon e^{2\pi iq(\varepsilon)}=\operatorname{Re}\bigl((1+i)^m\bigr)(1+i^m)=\sqrt{2^m}\,e^{2\pi im/8},
\]
so the signature is $\equiv m\pmod8$.
\end{proposition}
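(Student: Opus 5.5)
The plan has three parts: derive the spin function on a class, identify it with the quadratic form, then compute the splitting and the Gauss sum.

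\emph{The spin function.} Start from Proposition~\ref{prop:flux}(i). Flipping the charge-$2$ coordinates in $\varepsilon$ sends $Z_L\mapsto Z_L+i\pi\varepsilon$ and leaves the shapes $x$ unchanged. The proof there shows that the spin changes by $\tfrac18\varepsilon^{\mathsf T}K_{LL}\varepsilon+\tfrac12 n_L\cdot\varepsilon\pmod1$. The linear term $\tfrac12 n_L\cdot\varepsilon$ is what the choice of base vacuum absorbs. I will show that for odd $m$ it is a translate: since $E_{LL}=\tfrac12K_{LL}$ is invertible over $\F_2$ (Lemma~\ref{lem:parity}), there is $\varepsilon_0$ with $E_{LL}\varepsilon_0\equiv n_L\pmod2$. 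Then
\[
q(\varepsilon+\varepsilon_0)-q(\varepsilon_0)=q(\varepsilon)+\tfrac14\varepsilon^{\mathsf T}K_{LL}\varepsilon_0=q(\varepsilon)+\tfrac12 n_L\cdot\varepsilon.
\]
So the spin function is $q$ up to translation. Because $K_{LL}$ is even, $q(\varepsilon)=\tfrac18\varepsilon^{\mathsf T}K_{LL}\varepsilon$ is well defined modulo $1$ for $\varepsilon\in(\Z/2)^m$. Changing $\varepsilon_a$ by $2$ alters $q$ by $\tfrac12 K_{aa}\varepsilon_a/\cdots$, an integer because $K_{aa}\equiv0\pmod4$ and the off-diagonal entries are even. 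It is the $2$-primary discriminant form, since $\varepsilon/2$ ranges over $\tfrac12\Z^m/\Z^m$.

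\emph{The explicit formula.} Read $K_{LL}$ off Theorem~\ref{thm:Kformula} for the single-run word $L^mR^m$. The diagonal is $(4+2m,4,\dots,4)$, and there is $-2$ on the cyclic adjacencies of $m$ $L$'s. This gives
\[
\tfrac18\varepsilon^{\mathsf T}K_{LL}\varepsilon=\tfrac12\sum\varepsilon_a^2+\tfrac m4\varepsilon_1^2-\tfrac12\sum_a\varepsilon_a\varepsilon_{a+1},
\]
with $\varepsilon_a^2=\varepsilon_a$. The quarter values therefore occur exactly when $\varepsilon_1=1$ ($m$ odd).

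\emph{The orthogonal splitting.} The associated bilinear form is $b(\varepsilon,\varepsilon')=\tfrac14\varepsilon^{\mathsf T}K_{LL}\varepsilon'\pmod1$. Pairing with the all-flip $w$, the column sums of $K_{LL}$ are $4+2m-4=2m$ at $L_1$ and $0$ elsewhere. Hence $b(w,\varepsilon)=\tfrac m2\varepsilon_1\equiv\tfrac12\varepsilon_1$, and $q(w)=\tfrac12(m-m)+\tfrac m4=\tfrac m4$. Since $b(w,w)=\tfrac12\ne0$, the vector $w$ is anisotropic. So $(\Z/2)^m=\langle w\rangle\oplus w^\perp$ with $w^\perp=\{\varepsilon_1=0\}$, of order $2^{m-1}$, on which $q\in\tfrac12\Z$.

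\emph{The Gauss sum.} It factors as $(1+i^m)\sum_{\varepsilon_1=0}e^{2\pi iq}$. The second factor is the Gauss sum of a path form: on $\varepsilon_2,\dots,\varepsilon_m$, $q=\tfrac12[\sum\varepsilon_a-\sum_{a=2}^{m-1}\varepsilon_a\varepsilon_{a+1}]$. I would evaluate it by a transfer matrix $\begin{psmallmatrix}1&1\\1&-1\end{psmallmatrix}$ acting on the sign weights. Alternatively, I would compare with $\sum_\varepsilon e^{2\pi iq}$ directly: expand via $\operatorname{Re}(1+i)^m$, checking small $m$ and proving the recursion. The transfer matrix squares to $2I$, so the path sum over $m-1$ sites is $\pm2^{(m-1)/2}$ or $0$. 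The main obstacle is pinning down this sign and nonvanishing for odd $m$ so as to match $\operatorname{Re}((1+i)^m)$. Non-degeneracy of $q$ (again Lemma~\ref{lem:parity}) guarantees nonvanishing, and Milgram's formula then fixes the modulus $\sqrt{2^m}$. The identity $\operatorname{Re}((1+i)^m)(1+i^m)=\sqrt{2^m}e^{2\pi im/8}$ is elementary for odd $m$, since $(1+i)^m=2^{m/2}e^{i\pi m/4}$. Signature $\equiv m\pmod 8$ follows by Milgram.
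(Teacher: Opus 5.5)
Your first three steps are sound and follow the paper. The translation comes from Proposition~\ref{prop:flux}(i) together with completing the square, using that $E_{LL}$ is invertible over $\F_2$. The explicit form and the splitting come from the column sums $(2m,0,\dots,0)$ of $K_{LL}$. Factoring the Gauss sum as $(1+i^m)\,G(q|_{w^\perp})$ through the orthogonal splitting is valid, and it replaces the paper's observation that $M^m$ has equal diagonal entries. One slip: for odd $m$ one has $K_{11}=4+2m\equiv2\pmod4$, so ``$K_{aa}\equiv0\pmod4$'' is false. Well-definedness needs only that $K_{LL}$ is even, since $q(\varepsilon+2e_a)-q(\varepsilon)=\tfrac12K_{aa}+\tfrac12(K_{LL}\varepsilon)_a$.

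The gap is the last step, which carries the content of the Gauss-sum and signature claims. Non-degeneracy gives $|G(q|_{w^\perp})|^2=2^{m-1}$, and the sum is a real integer because $q|_{w^\perp}\in\tfrac12\Z$. So $G(q|_{w^\perp})=\pm2^{(m-1)/2}$. The sign is exactly what decides between $e^{2\pi im/8}$ and $-e^{2\pi im/8}$, that is, the signature modulo $8$, and you leave it open.

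Your appeal to Milgram does not close it. The natural lattice $(\Z^m,K_{LL})$ is positive definite of signature $m$. But its discriminant group has order $\det K_{LL}=2^m\det E_{LL}=2^mm^2$. So Milgram only gives the product of the Gauss sum of your $2$-primary form with the Gauss sum of an odd part of order $m^2$, which you have not computed. You exhibit no other lattice whose discriminant form is $((\Z/2)^m,q)$.

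The transfer matrix you name does not do the job either. $\begin{psmallmatrix}1&1\\1&-1\end{psmallmatrix}$ squares to $2I$ only because it omits the on-site weight $(-1)^{\varepsilon_a}$. With that weight included, the transfer matrix is
\[
M=\operatorname{diag}(1,-1)\begin{psmallmatrix}1&1\\1&-1\end{psmallmatrix}=\begin{psmallmatrix}1&1\\-1&1\end{psmallmatrix},
\]
and your pinned sum is $G(q|_{w^\perp})=(M^m)_{00}$, the cyclic chain with $\varepsilon_1=0$. Since $M$ is $\sqrt2$ times the rotation by $-\pi/4$,
\[
(M^m)_{00}=2^{m/2}\cos(m\pi/4)=\operatorname{Re}\bigl((1+i)^m\bigr).
\]
This supplies the sign and completes your argument. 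It is the paper's computation, which evaluates $\tr M^m=2\operatorname{Re}((1+i)^m)$ and then splits the two $\varepsilon_1$-sectors. Equivalently, summing out an end vertex of the path gives the recursion $P_n=-4P_{n-4}$ for the path sums $P_n$, with $P_0=1$ and $P_2=-2$.
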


\begin{proof}
That the spin is $q$ up to translation is Proposition~\ref{prop:flux}(i) together with the completion of the square in Proposition~\ref{prop:spinq}(ii) below; here we compute $q$. Write $K_{LL}=2E_{LL}$, so that $E_{LL}=\tfrac12K_{LL}$ is the integral form of the $L$-block (every entry of $K$ being even); at $L^mR^m$ it is
the cyclic tridiagonal $(-1,2,-1)$ plus $m$ at the $(1,1)$ entry, so
$\varepsilon^{\mathsf T}K_{LL}\varepsilon=4[\sum\varepsilon_a-\sum\varepsilon_a\varepsilon_{a+1}]+2m\varepsilon_1$, which is the displayed form; only
the last term is not in $\tfrac12\Z$, and it is nonzero exactly when $\varepsilon_1=1$. All row sums of $E_{LL}$ vanish except the first, which is $m$,
so $q(w)=2m/8=m/4$; for odd $m$ that is a primitive quarter, $b(w,w)=m/2\ne0$, and $q$ is half-integral on $\ker b(w,\cdot)$, of order $2^{m-1}$, so the
splitting is the standard orthogonal one. For the Gauss sum, write
$e^{2\pi iq(\varepsilon)}=i^{m\varepsilon_1}\prod_{a=1}^m(-1)^{\varepsilon_a}(-1)^{\varepsilon_a\varepsilon_{a+1}}$, indices modulo $m$. The product is a
sum over a closed chain of binary variables with nearest-neighbour weights, so it is a trace of a transfer matrix: with
$M_{\varepsilon\varepsilon'}=(-1)^{\varepsilon}(-1)^{\varepsilon\varepsilon'}$, i.e.\ $M=\begin{psmallmatrix}1&1\\-1&1\end{psmallmatrix}$ in the order
$\varepsilon=0,1$, one has $\sum_\varepsilon\prod_a(-1)^{\varepsilon_a}(-1)^{\varepsilon_a\varepsilon_{a+1}}=\tr M^m$, and the sums restricted to
$\varepsilon_1=0$ and to $\varepsilon_1=1$ are the two diagonal entries $(M^m)_{00}$ and $(M^m)_{11}$. The eigenvalues of $M$ are $1\pm i$, so
$\tr M^m=(1+i)^m+(1-i)^m=2\operatorname{Re}((1+i)^m)$; and $(M^m)_{00}=(M^m)_{11}$ because $\tr(\operatorname{diag}(1,-1)M^m)=0$: $M=\sqrt2\,R_{-\pi/4}$ is a
multiple of a rotation, so $M^m=2^{m/2}R_{-m\pi/4}$ has equal diagonal entries. Hence each $\varepsilon_1$-sector contributes
$\operatorname{Re}((1+i)^m)$, the sector $\varepsilon_1=1$ carrying the extra factor $i^m$, and
$\sum_\varepsilon e^{2\pi iq(\varepsilon)}=\operatorname{Re}((1+i)^m)(1+i^m)$, of modulus $\sqrt{2^m}$ and argument $2\pi m/8$.
\end{proof}

\begin{corollary}
\label{cor:arf}
For odd $m$ the Gauss sum of $\Adec$ has modulus $\sqrt{2^{m-1}}$ and signature
\[
\sigma(\Adec)\equiv\begin{cases}m-1&m\equiv1\pmod4\\ m+1&m\equiv3\pmod4\end{cases}\pmod8,
\]
so $\Adec$ is Arf-odd \textup{(}$\sigma\equiv4$\textup{)} exactly when $m\equiv\pm3\pmod8$, and Arf-even \textup{(}$\sigma\equiv0$\textup{)} exactly when
$m\equiv\pm1\pmod8$.
\end{corollary}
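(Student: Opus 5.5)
The plan is to read off the Gauss sum of $\Adec$ from the orthogonal splitting of Proposition~\ref{prop:signsum}, using multiplicativity of Gauss sums under orthogonal sums. Write $G(A,q)=\sum_{x\in A}e^{2\pi iq(x)}$. Since $((\Z/2)^m,q)\cong(\Z/2,\tfrac m4)\oplus\Adec$, we have $G((\Z/2)^m,q)=G(\Z/2,\tfrac m4)\,G(\Adec)$. The left side is $\sqrt{2^m}\,e^{2\pi im/8}$ by Proposition~\ref{prop:signsum}. The first factor on the right is $1+e^{2\pi im/4}=1+i^m$, which for odd $m$ equals $\sqrt2\,e^{2\pi i/8}$ when $m\equiv1\pmod4$ and $\sqrt2\,e^{-2\pi i/8}$ when $m\equiv3\pmod4$.

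Dividing gives $G(\Adec)=\sqrt{2^{m-1}}\,e^{2\pi i(m\mp1)/8}$, with $-$ for $m\equiv1$ and $+$ for $m\equiv3\pmod4$. This yields both the modulus and the signature formula, via Milgram's formula as recalled in Section~\ref{sub:relation}.

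The Arf statement then follows by cases on $m\bmod8$. For $m\equiv1$, $m-1\equiv0$; for $m\equiv5$, $m-1\equiv4$; for $m\equiv3$, $m+1\equiv4$; for $m\equiv7$, $m+1\equiv0$. So $\sigma\equiv4$ exactly for $m\equiv\pm3\pmod8$ and $\sigma\equiv0$ exactly for $m\equiv\pm1\pmod8$. Since $q|_{\Adec}$ is $\tfrac12\Z$-valued, $\Adec$ is a non-degenerate $\Z/2$-valued quadratic space of even rank $m-1$. Its Gauss sum is $\pm\sqrt{|\Adec|}$, with the sign given by $(-1)^{\mathrm{Arf}}$, which matches $\sigma\in\{0,4\}$.

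The only point needing care is the splitting itself: one must check that $\Adec=w^\perp$ is non-degenerate, so that the Gauss sum factors. This is part of Proposition~\ref{prop:signsum}: $b(w,w)=\tfrac12\ne0$ makes $\langle w\rangle$ a non-degenerate summand. The rest is arithmetic of eighth roots of unity.
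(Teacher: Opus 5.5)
Your proof is correct and is the paper's argument: divide the Gauss sum $\sqrt{2^m}\,e^{2\pi im/8}$ of Proposition~\ref{prop:signsum} by the Gauss sum $1+i^m$ of the orthogonal $\Z/2$ factor, then read off the eighth root of unity. One small correction: the factorisation of the Gauss sum needs only that the decomposition is orthogonal, i.e.\ $q(x+y)=q(x)+q(y)$ for $x\in\langle w\rangle$ and $y\in w^\perp$; the non-degeneracy of $\Adec$ is needed only to interpret the resulting phase as a signature.
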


\begin{proof}
Divide the Gauss sum of Proposition~\ref{prop:signsum} by that of the orthogonal $\Z/2$ factor, which is $1+i^m$: of modulus $\sqrt2$ and signature
$+1$ for $m\equiv1\pmod4$, $-1$ for $m\equiv3\pmod4$.
\end{proof}

At $m=3$ this gives $|\Adec|=4$ with one element of spin $0$ and three of spin $\tfrac12$ --- the three-fermion theory, $SO(8)_1$. At $m=5$ it gives
$|\Adec|=16$ with $6$ elements of spin $0$ and $10$ of spin $\tfrac12$, Gauss sum $-4$: an Arf-odd $\Z_2^4$ theory, for instance
three-fermion$\,\otimes\,$toric code.

\subsection{The spin function is the form itself}

Throughout, $\operatorname{spin}(V)$ is the spin \eqref{eq:spindef} of a Bethe vacuum $V$, defined in Section~\ref{sec:bethe}. The \emph{spin
function} of a class is $\varepsilon\mapsto\operatorname{spin}(V_\varepsilon)$ on its $2^{\#L}$ vacua, where $V_\varepsilon$ is obtained from
one vacuum by the sign flips $\varepsilon$. Proposition~\ref{prop:spinq} shows that, measured from a suitable base vacuum, it is the quadratic
form $q(\varepsilon)=\tfrac18\varepsilon^{\mathsf T}K_{LL}\varepsilon$.

\begin{lemma}[The parity of $E_{LL}$]
\label{lem:parity}
For every hyperbolic word, $\det E_{LL}\equiv\tr\varphi\pmod2$. Hence $E_{LL}$ is invertible over $\F_2$ precisely when $\tr\varphi$ is odd.
\end{lemma}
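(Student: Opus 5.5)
The plan is to reduce everything modulo $2$ and compare two products in $SL(2,\F_2)$. By Theorem~\ref{thm:Kformula}, $E_{LL}=\tfrac12K_{LL}$ is the cyclic tridiagonal matrix on the $\#L$ $L$-sites (in cyclic order, $R$'s deleted) described there. Its diagonal entry is $2+b$ at the first $L$ after an $R$-run of length $b$ and $2$ elsewhere. Its off-diagonal entry is $-1$ for each adjacency, added when sites coincide. Modulo $2$, therefore, $E_{LL}\equiv \mathrm{diag}(d_i)+A$, where $A$ is the adjacency matrix of the cycle $C_{\#L}$, $d_i=b_q$ at the first $L$ of the $L$-run following the $q$-th $R$-run, and $d_i=0$ otherwise.

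First I dispose of the small cases directly. For $\#L=1$ the word is $LR^b$, $E_{LL}=b$ and $\tr\varphi=b+2\equiv b$. For $\#L=2$, $E_{LL}$ is a $2\times2$ matrix with off-diagonal $-2\equiv0$, so $\det E_{LL}\equiv d_1d_2$. This equals $b_1b_2$ if there are two $L$-runs (the word $LR^{b_1}LR^{b_2}$), and $0$ if there is one run $L^2R^b$. In each case I compare with the trace computed from Proposition~\ref{prop:groups} or a two-line matrix product: $\tr(L^2R^b)=2b+2\equiv0$, and $\tr(LR^{b_1}LR^{b_2})\equiv b_1b_2$.

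For $\#L\ge3$ I use the discrete Hill formula quoted in Section~\ref{sec:torsion} with all couplings $\beta_i=-1\equiv1$. It gives $\det E_{LL}\equiv\tr T\pmod 2$, with $T=\prod_i\begin{psmallmatrix}d_i&1\\1&0\end{psmallmatrix}$ over $\F_2$; the term $-2$ vanishes and the signs are immaterial. Writing $J=\begin{psmallmatrix}0&1\\1&0\end{psmallmatrix}$, one has $\begin{psmallmatrix}d&1\\1&0\end{psmallmatrix}=L^dJ$. Grouping by $L$-runs then gives, up to cyclic order (which does not affect the trace),
\[
T\equiv\prod_q L^{b_q}J^{a_q}\pmod 2,
\]
while $\varphi=\prod_q L^{a_q}R^{b_q}$.

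It then remains to show $\tr\prod_q L^{b_q}J^{a_q}\equiv\tr\prod_qL^{a_q}R^{b_q}$ in $SL(2,\F_2)\cong S_3$. The useful identities are $J^2=1$, $JLJ=R$ and $J\equiv LRL\pmod2$, and everything depends only on the parities of the $a_q,b_q$. I would prove the equality of traces (equivalently, of conjugacy classes up to the trace function on $S_3$, which is $0$ on the $3$-cycles and $1$ otherwise) by induction on the number of runs. The step absorbs one run pair $L^{a}R^{b}$ against $L^{b}J^{a}$, using $J\equiv LRL$ to convert an odd $J^{a}$ into letters and conjugating by $J$ to swap $L\leftrightarrow R$ in the remainder.

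This bookkeeping — matching a product in which the $L$-powers are the $R$-run lengths with the original word, up to conjugacy mod $2$ — is where I expect the real work. If the induction becomes unwieldy, I would fall back on a finite check: the map from parity vectors of $(a_q,b_q)$ to $S_3$ is a word in six elements, and it suffices to verify that both sides define the same class function under concatenation.
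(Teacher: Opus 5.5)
Up to the last step your argument is the paper's: reduce $E_{LL}$ modulo $2$, apply Hill's formula to get $\det E_{LL}\equiv\tr\prod_iT(d_i)$ over $\F_2$, and write $T(d)=L^dJ$. Your treatment of $\#L=1,2$ is also fine. The step you defer as ``the real work'' is the whole remaining content of the lemma. As you have set it up, it is false.

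The first $L$ of an $L$-run carries the length of the $R$-run \emph{preceding} it. So with $\varphi=\prod_qL^{a_q}R^{b_q}$ the correct grouping is $T\equiv\prod_qL^{b_{q-1}}J^{a_q}$ (indices mod $n_R$), not $\prod_qL^{b_q}J^{a_q}$. Re-pairing the $a$'s with the $b$'s is not a cyclic rotation, and from three runs on it can change the trace. For instance $\varphi=LR^2LRL^2R$ has $\tr\varphi=38$, and $E_{LL}\bmod 2$ has two equal rows. Your product, however, is $L^2JLJLJ^2\equiv JLJL=RL$, of trace $3$.

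For the same reason your inductive step, matching the run pair $L^aR^b$ with $L^bJ^a$, cannot work. No single $g$ conjugates one to the other for all $a,b$: it would need $gLg^{-1}=J$ and $gRg^{-1}=L$, hence $gLRg^{-1}=JL\ne LJ$. Your fallback is not a valid criterion either, since the trace of a product is not determined by the conjugacy classes of its factors.

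With the correct pairing no induction is needed. From $J\equiv LRL$ and $L^2\equiv R^2\equiv I$ you get $J^a\equiv LR^aL$. Together with $JLJ=R$ this gives
\[
L^bJ^a\equiv L\,(L^bR^a)\,L=(LJ)\,(R^bL^a)\,(JL),\qquad (JL)(LJ)\equiv I .
\]
Every factor is therefore conjugated by the same element $LJ$, and
\[
T\equiv(LJ)\bigl(R^{b_n}L^{a_1}R^{b_1}L^{a_2}\cdots R^{b_{n-1}}L^{a_n}\bigr)(LJ)^{-1}.
\]
The middle factor is a cyclic rotation of $\varphi$, so $\tr T\equiv\tr\varphi$. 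This is exactly the paper's computation: $T(b)T(0)^{a-1}=L^{b+1}R^aL$, whose cyclic product is the $L\leftrightarrow R$-swapped word, conjugate to $\varphi$ by $J$.
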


\begin{proof}
By Theorem~\ref{thm:Kformula}, modulo $2$ the matrix $E_{LL}=\tfrac12K_{LL}$ on the $\#L$ $L$-sites in cyclic order has diagonal entry $b$ at the first
$L$ after an $R$-run of length $b$ and $0$ at the other $L$'s, and off-diagonal entry $1$ for each adjacency of cyclically consecutive $L$'s; for
$\#L=2$ the two adjacencies cancel and for $\#L=1$ the matrix is $(b)$. For a cyclic tridiagonal matrix with unit off-diagonals, Hill's formula
(Section~\ref{sec:torsion}) reads $\det=(-1)^n(\tr\prod_iT_i-2)$ with $T_i=\begin{psmallmatrix}-d_i&-1\\1&0\end{psmallmatrix}$, so
$\det E_{LL}\equiv\tr\prod_iT(d_i)\pmod2$ with $T(d)=\begin{psmallmatrix}d&1\\1&0\end{psmallmatrix}$ over $\F_2$; the cases $\#L=1,2$ give $b$ and
$d_1d_2$, which the same expression gives. Over $\F_2$ one has $L^2=R^2=I$, $T(0)=LRL$ and $T(1)=L\,T(0)$, so an $L$-run of length $a$ after an
$R$-run of length $b$ contributes $T(b)T(0)^{a-1}=L^{b+1}R^aL$, and the cyclic product over the runs is, up to conjugation,
$L^{b_n}R^{a_1}L^{b_1}R^{a_2}\cdots L^{b_{n-1}}R^{a_n}$. This is the word $\varphi=L^{a_1}R^{b_1}\cdots L^{a_n}R^{b_n}$ with the letters $L$ and $R$ interchanged (each run $L^{a_i}$ becomes a run $R^{a_i}$ and each run $R^{b_i}$ a run $L^{b_i}$), read cyclically from the run $L^{b_n}$, and it is
conjugate to $\varphi$ by $\begin{psmallmatrix}0&1\\1&0\end{psmallmatrix}$ and so has the same trace.
\end{proof}

Proposition~\ref{prop:spinq}(ii) below contains the case $L^mR^m$ of this lemma, which is also checked directly there. The lemma also makes the flux-rule choice of the unit vacuum unique for odd trace (Section~\ref{sub:status}). The
flux rule singles out one vacuum of the unit class exactly when $\tr\varphi$ is odd.

\begin{proposition}
\label{prop:spinq}
Let $\varphi$ be a hyperbolic word and $V$ a Bethe vacuum with branch integers $n$, that is, the integers $n_b$ in the log-form Bethe equations $Q_b\log(1-x_b)+(KZ)_b=2\pi in_b$ (Proposition~\ref{prop:flux}). For $\varepsilon\in(\Z/2)^{\#L}$ let $V_\varepsilon$ be
the vacuum obtained by flipping the charge-$2$ coordinates in $\varepsilon$; put $q(\varepsilon)=\tfrac18\varepsilon^{\mathsf T}K_{LL}\varepsilon$
and $E_{LL}=\tfrac12K_{LL}$, as in Proposition~\ref{prop:signsum} for $L^mR^m$.
\begin{enumerate}
\item[\textup{(i)}] $\operatorname{spin}(V_\varepsilon)-\operatorname{spin}(V)\equiv q(\varepsilon)+\tfrac12n_L\cdot\varepsilon\pmod1$.
\item[\textup{(ii)}] $E_{LL}$ is invertible over $\F_2$ iff $\tr\varphi$ is odd (Lemma~\ref{lem:parity}).
When it is, $\varepsilon_0:=E_{LL}^{-1}n_L\bmod2$ exists and is unique, and
\[
\operatorname{spin}(V_\varepsilon)=\operatorname{spin}(V_{\varepsilon_0})+q(\varepsilon+\varepsilon_0)\pmod1 :
\]
measured from $V_{\varepsilon_0}$, the spin function on the $2^{\#L}$ vacua of the class is exactly $q$.
\item[\textup{(iii)}] For $\varphi=L^mR^m$ with $m$ odd and $V=\sigma V^*$ (the all-plus vacuum of its class), $E_{LL}\mathbf1=c_L=n_L$ as integer vectors, so $\varepsilon_0=\mathbf1$. The base vacuum is the all-flip of
$\sigma V^*$, and $q(\mathbf1)=m/4$.
\item[\textup{(iv)}] For odd trace, the sign-sum of the class is $e^{2\pi i\operatorname{spin}(V_{\varepsilon_0})}$ times the Gauss sum of $q$, so the $\mu_8$ property
follows from Milgram's formula \cite{MH73}.
\end{enumerate}
\end{proposition}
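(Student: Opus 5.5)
For (i) I would repeat the computation already done in the proof of Proposition~\ref{prop:flux}(i), now for an arbitrary reference vacuum $V$. The flip $\varepsilon$ sends $Z_a\mapsto Z_a+i\pi\varepsilon_a$ at the $L$-sites and leaves every shape $x_b$ unchanged. So in \eqref{eq:Fdef} the dilogarithm and the $\log(1-x_b)$ terms are untouched, apart from the factor $Q_bZ_b$. I would expand $-\tfrac12Z^{\mathsf T}KZ$ and the linear term $-Q_bZ_b\log(1-x_b)$ to second order in $i\pi\varepsilon$. Then I would use the log-form Bethe equations $Q_b\log(1-x_b)+(KZ)_b=2\pi in_b$ to cancel the cross terms, leaving $\Delta\Psi=-2\pi^2 n_L\cdot\varepsilon+\tfrac{\pi^2}{2}\varepsilon^{\mathsf T}K_{LL}\varepsilon$ up to integer multiples of $4\pi^2$. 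Dividing by $4\pi^2$, as in \eqref{eq:spindef}, gives $\tfrac18\varepsilon^{\mathsf T}K_{LL}\varepsilon+\tfrac12n_L\cdot\varepsilon$ modulo $1$, the two signs agreeing modulo $1$. Two points need care. First, the branch integers at $V_\varepsilon$ differ from those at $V$ by $E_{LL}\varepsilon$, so the formula must be evaluated with the branches of $V$ throughout. Second, changing the lift of $Z_b$ by $2\pi i$ moves the spin by an integer, as already noted in the proof of Theorem~\ref{thm:farey}.

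For (ii), invertibility is exactly Lemma~\ref{lem:parity}. Given invertibility, I would complete the square over $\F_2$. Write $q(\varepsilon)=\tfrac14\varepsilon^{\mathsf T}E_{LL}\varepsilon$, whose associated bilinear form is $b(\varepsilon,\varepsilon')=\tfrac12\varepsilon^{\mathsf T}E_{LL}\varepsilon'$ modulo $1$. Then
\[
q(\varepsilon+\varepsilon_0)=q(\varepsilon)+q(\varepsilon_0)+\tfrac12\varepsilon^{\mathsf T}E_{LL}\varepsilon_0 .
\]
Choosing $\varepsilon_0=E_{LL}^{-1}n_L$ modulo $2$ makes the cross term equal to $\tfrac12 n_L\cdot\varepsilon$. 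Hence, by (i),
\[
\operatorname{spin}(V_\varepsilon)-\operatorname{spin}(V)\equiv q(\varepsilon+\varepsilon_0)-q(\varepsilon_0).
\]
Setting $\varepsilon=\varepsilon_0$ identifies $\operatorname{spin}(V)+\dots$ with $\operatorname{spin}(V_{\varepsilon_0})-q(\varepsilon_0)\cdot0$ correctly, giving the displayed identity. The one subtlety is that $q$ is only well defined on $(\Z/2)^{\#L}$ modulo $1$ because $K_{LL}=2E_{LL}$; I would check that replacing $\varepsilon$ by $\varepsilon+2\mu$ shifts $\tfrac18\varepsilon^{\mathsf T}K\varepsilon$ by an integer. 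It does, since $\tfrac18(4\mu^{\mathsf T}K\varepsilon+4\mu^{\mathsf T}K\mu)\in\Z$ when $K$ has even entries. Uniqueness of $\varepsilon_0$ is immediate from invertibility.

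For (iii), Proposition~\ref{prop:flux}(ii) already gives $n_L=c_L=(m,0,\dots,0)$ at the all-plus vacuum of the class of $\sigma V^*$. The row sums of $E_{LL}$ are $m$ in the first row and $0$ elsewhere, as computed in the proof of Proposition~\ref{prop:signsum}. So $E_{LL}\mathbf1=c_L$ as integer vectors, which gives $\varepsilon_0=\mathbf1$ and $q(\mathbf1)=m/4$. For (iv), I would sum $e^{2\pi i\operatorname{spin}(V_\varepsilon)}$ over $\varepsilon$ using (ii): the sum factors as $e^{2\pi i\operatorname{spin}(V_{\varepsilon_0})}\sum_\varepsilon e^{2\pi iq(\varepsilon+\varepsilon_0)}$, and translation invariance of the sum turns the second factor into the Gauss sum of $q$. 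Milgram's formula then applies to the $2$-primary discriminant form. I expect the main obstacle to be the careful bookkeeping of branches in (i), namely tracking the factor $Q_bZ_b\log(1-x_b)$ when $Z_L$ shifts by $i\pi$; the rest is linear algebra over $\F_2$.
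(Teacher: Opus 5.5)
Your proposal is correct and follows the paper's proof step for step: for (i) the flip computation of Proposition~\ref{prop:flux}(i), where your $-2\pi^2n_L\cdot\varepsilon$ should read $+2\pi^2n_L\cdot\varepsilon$, a slip that is harmless modulo $1$; for (ii) completing the square with $q(2\varepsilon_0)=0$; for (iii) the row sums of $E_{LL}$ together with $n_L=c_L$ from Proposition~\ref{prop:flux}(ii); and for (iv) translation invariance of the Gauss sum. The one point the paper adds concerns a degenerate class, where the log-form equations hold only along the continuation in $\epsilon$: there the deformation term changes under the flip by $-i\pi\log(1+\epsilon)\,c\cdot\varepsilon\to0$, so (i) persists in the limit.
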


\begin{proof}
(i) is the computation in the proof of Proposition~\ref{prop:flux}. The flip sends $Z_a\mapsto Z_a+i\pi\varepsilon_a$ with $x$ fixed and changes
$\log\mathcal F$ by $i\pi\,\varepsilon\cdot n_L+\tfrac{i\pi}4\varepsilon^{\mathsf T}K_{LL}\varepsilon$, i.e.\ the spin by $\tfrac12n_L\cdot\varepsilon+q(\varepsilon)$. At a degenerate class the spin is defined along the continuation in $\epsilon$; there the deformation term changes under the flip by $-i\pi\log(1+\epsilon)\,c\cdot\varepsilon\to0$, so (i) holds in the limit.

(ii) The class of $n_L$ modulo $2$ does not depend on the branches: $Z_b\mapsto Z_b+2\pi i$ moves $n$ by a column of $K$, which is even, and the other branch of $\log(1-x_L)$ moves $n_L$ by $Q_L=2$. The general statement is Lemma~\ref{lem:parity}. For $L^mR^m$ directly: modulo $2$, $E_{LL}$ is $C+m\,e_1e_1^{\mathsf T}$ with $C$ the adjacency matrix of the $m$-cycle. The rows of $C$ sum to $0$ modulo $2$, so
$\det C\equiv0\pmod2$, which settles even $m$. For odd $m$, the determinant is linear in the $(1,1)$ entry, so
$\det E_{LL}\equiv\det C+m\det C'\equiv\det C+\det C'\pmod2$, where $C'$ is the matrix obtained by deleting the first row and column of $C$, the adjacency matrix of the path on the $m-1$
vertices $2,\dots,m$. The path determinants satisfy $p_n=-p_{n-2}$, $p_1=0$, $p_2=-1$, so $p_{m-1}$ is odd. Hence $E_{LL}$ is invertible over
$\F_2$ iff $m$ is odd. The bilinear form of $q$ is $b(\varepsilon,\varepsilon')=q(\varepsilon+\varepsilon')-q(\varepsilon)-q(\varepsilon')=
\tfrac12\varepsilon^{\mathsf T}E_{LL}\varepsilon'\bmod1$, so $\tfrac12n_L\cdot\varepsilon\equiv b(\varepsilon,\varepsilon_0)$ and (i) reads
$\operatorname{spin}(V_\varepsilon)-\operatorname{spin}(V)\equiv q(\varepsilon+\varepsilon_0)-q(\varepsilon_0)$. Taking $\varepsilon=\varepsilon_0$
and subtracting gives the display, since $q(2\varepsilon_0)=0$.

(iii) The row sums of $E_{LL}$ are $(m,0,\dots,0)$ (proof of Proposition~\ref{prop:signsum}), and $n_L=c_L=(m,0,\dots,0)$ at $\sigma V^*$
(proof of Proposition~\ref{prop:flux}); so $E_{LL}\mathbf1=n_L$ and $\varepsilon_0=\mathbf1$, with $q(\mathbf1)=m/4$ by Proposition~\ref{prop:signsum}.

(iv) By (ii), $\sum_\varepsilon e^{2\pi i\operatorname{spin}(V_\varepsilon)}=e^{2\pi i\operatorname{spin}(V_{\varepsilon_0})}\sum_\varepsilon e^{2\pi iq(\varepsilon)}$.
\end{proof}

We call $V_{\varepsilon_0}$ the \emph{base vacuum} of the class. It has a description that does not use branch integers: it is the unique
vacuum $V_b$ of the class with
\[
\operatorname{spin}\bigl((V_b)_\varepsilon\bigr)-\operatorname{spin}(V_b)\equiv q(\varepsilon)\pmod1\qquad\text{for every }\varepsilon\in(\Z/2)^{\#L},
\]
where $(V_b)_\varepsilon$ is $V_b$ with the charge-$2$ coordinates in $\varepsilon$ flipped. It is unique because two such vacua differing by
$\varepsilon'$ would have $b(\varepsilon,\varepsilon')=0$ for every $\varepsilon$, and $b$ is non-degenerate since $E_{LL}$ is invertible over
$\F_2$, exactly for odd trace. So
the translation hedged in Proposition~\ref{prop:signsum} is determined rather than chosen, and the base vacuum is a computable function of the
class's branch integers. Part (ii) is the reason odd $m$ behaves and even $m$ does not: the invertibility of $E_{LL}$ over $\F_2$ holds exactly for odd
$m$. Gang--Kim--Lee \cite[footnote~4]{GKL23} note that for $LR^n$ the vacuum may also be taken at its $\Z_2$ partner, with consistent
modular data either way. The two have the same $\mathcal H$ and are two vacua of the class of $\sigma V^*$, related by the sign flip of the charge-$2$
coordinate that carries the semion. The choice therefore does not affect the class-level statement that the unit class is that of $\sigma V^*$.

\subsection{The Galois rule for central charges}
\label{sub:galoisrule}

The last ingredient in the phase bookkeeping is the action of $\operatorname{Gal}(\Q(\zeta_m)^+/\Q)$ on central charges, where $\zeta_m$ is a primitive $m$-th root of unity and $\Q(\zeta_m)^+=\Q(\zeta_m+\zeta_m^{-1})$ is the maximal real subfield of $\Q(\zeta_m)$. It is a statement about the
Bloch group, and the version needed is the following. It is used in this paper for the denominator of $c$, the extended $L$ of Theorems~\ref{thm:rogers}
and~\ref{thm:lines}, and Proposition~\ref{obs:kirillov}. It is the
tool on which the Galois rule of \cite{EvB26} rests.

\begin{proposition}[Suslin normalisation]
\label{prop:suslin}
Let $m$ be odd, $\zeta_m$ a primitive $m$-th root of unity, $F=\Q(\zeta_m)^+=\Q(\zeta_m+\zeta_m^{-1})\subset\R$ its maximal real subfield, and let $\xi=\sum_an_a[x_a]\in B(F)$ with all $x_a\in F\setminus\{0,1\}$. Here $B(F)$ is the Bloch group (as in the proof of Corollary~\ref{cor:rational}), the subgroup of the pre-Bloch group of $F$ cut out by the condition $\sum_an_a\,x_a\wedge(1-x_a)=0$, and $\mathfrak p(\R)$ below is the pre-Bloch group of $\R$. For a field $k$ the pre-Bloch group $\mathfrak p(k)$ is generated by the symbols $[x]$, $x\in k\smallsetminus\{0,1\}$, modulo Suslin's five-term relation
\[
[x]-[y]+\Bigl[\frac yx\Bigr]-\Bigl[\frac{1-x^{-1}}{1-y^{-1}}\Bigr]+\Bigl[\frac{1-x}{1-y}\Bigr]=0\qquad(x\ne y).
\]
\begin{enumerate}
\item[\textup{(i)}] $B(F)$ is finite cyclic of order $\tfrac14\operatorname{lcm}(24,m)=6m/\gcd(3,m)$.
\item[\textup{(ii)}] Extend the Rogers dilogarithm to $\R\setminus\{0,1\}$ in the standard way, by $\widetilde L(x)=L(x)-L(1)$ for $0<x<1$,
$\widetilde L(x)=-\widetilde L(1/x)$ for $x>1$, and $\widetilde L(x)=-L(1)+\widetilde L(1/(1-x))$ for $x<0$. Then $\xi\mapsto\sum_an_a\widetilde L(x_a)$ is a well-defined homomorphism
$\widetilde L:\mathfrak p(\R)\to\R/\pi^2\Z$.
\item[\textup{(iii)}] $\operatorname{Gal}(F/\Q)$ acts on $B(F)$ as follows: the automorphism $\sigma_k$, $k\in(\Z/m)^\times/\!\pm1$, $\zeta_m\mapsto\zeta_m^k$, acts by multiplication by $a^2$, for some $a$ depending on $k$. Therefore $\widetilde
L(\sigma_k\xi)=a^2\widetilde L(\xi)$ in $\R/\pi^2\Z$. By (iv), $c(\sigma_k\xi)\equiv a^2c(\xi)\equiv k^2c(\xi)\pmod1$, for $c(\xi)=\sum_an_aL(x_a)/L(1)$.
\item[\textup{(iv)}] $c(\xi)\bmod1$ lies in $\tfrac6{|B(F)|}\Z/\Z$, i.e.\ $\tfrac m{\gcd(3,m)}c(\xi)\in\Z$.
\end{enumerate}
\end{proposition}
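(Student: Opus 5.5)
This proposition mostly assembles standard facts about the Bloch group of a totally real field, and the plan is to cite them rather than reprove them. For (i), use Borel--Suslin. For a number field $F$ with $r_2=0$, $B(F)\otimes\Q=0$, so $B(F)$ is torsion. Suslin's exact sequence $0\to\widetilde{\mu_F}\to K_3^{\mathrm{ind}}(F)\to B(F)\to0$, with $\widetilde{\mu_F}$ the extension of $\mu_F$ by $\Z/2$, identifies $B(F)$ with a quotient of $K_3^{\mathrm{ind}}(F)$. For a totally real abelian field, Merkurjev--Suslin and Levine give $K_3^{\mathrm{ind}}(F)_{\mathrm{tors}}$ cyclic of order $w_2(F)$. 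Here $w_2(F)$ is the largest $n$ with $\operatorname{Gal}(F(\zeta_n)/F)$ of exponent $2$. For $F=\Q(\zeta_m)^+$ with $m$ odd, $w_2(F)=24m/\gcd(3,m)$. Dividing by $|\widetilde{\mu_F}|=4$ (as $\mu_F=\{\pm1\}$) gives $6m/\gcd(3,m)$, which equals $\tfrac14\operatorname{lcm}(24,m)$ for odd $m$. The hardest point of the whole proposition is computing $w_2$ for the maximal real subfield rather than for $\Q$; I would verify it prime by prime, $\ell=2,3$ and $\ell\mid m$. At $\ell=3$ with $3\mid m$, the field already contains part of $\Q(\zeta_9)^+$, and that changes the $3$-part.

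For (ii), check three things. First, $\widetilde L$ is continuous modulo $\pi^2$ across the gluing points $0,1,\infty$, using $L(0)=0$, $L(1)=\pi^2/6$ and the stated reflections. Second, the five-term relation holds for real arguments. It is the classical Abel identity on the region where all five arguments lie in $(0,1)$; since $\widetilde L$ is real-analytic away from $\{0,1\}$, the defect is locally constant on the complement of the walls where an argument passes through $0$, $1$ or $\infty$. Third, on each chamber the defect is checked to be a multiple of $\pi^2$ by taking a limit to a boundary point. This gives a well-defined homomorphism on $\mathfrak p(\R)$.

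For (iii), $\operatorname{Gal}(F/\Q)$ is abelian and $B(F)$ is cyclic, so each $\sigma_k$ acts by a unit of $\Z/|B(F)|$. Its action is inherited from $K_3^{\mathrm{ind}}(\Q(\zeta_m))$, which Galois-equivariantly contains $\mu^{\otimes2}$ (the $\mathrm{H}^0(\mu^{\otimes 2})$ description of torsion). Hence $\sigma_k$ acts by $a^2$, where $a$ is its cyclotomic character at level $w_2$, and $a\equiv\pm k\pmod m$. Composing with (ii) gives $\widetilde L(\sigma_k\xi)=a^2\widetilde L(\xi)$ modulo $\pi^2$.

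For (iv), $\widetilde L$ composed with $B(F)\hookrightarrow\mathfrak p(\R)$ is a homomorphism from a cyclic group of order $|B(F)|$ to $\R/\pi^2\Z$, so its values lie in $\tfrac{\pi^2}{|B(F)|}\Z/\pi^2\Z$. Dividing by $L(1)=\pi^2/6$, and noting that $\widetilde L$ and $L$ differ by integer multiples of $L(1)$ (since $\sum n_a$ is an integer), gives $c(\xi)\in\tfrac6{|B(F)|}\Z$ modulo $1$. The congruence $a^2c\equiv k^2c\pmod1$ in (iii) then follows: $a\equiv\pm k\pmod m$ implies $a^2\equiv k^2\pmod{m/\gcd(3,m)}$, which is exactly the modulus killing $c$ by (iv).
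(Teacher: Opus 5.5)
Your treatment of (i), (iii) and (iv) is essentially the paper's. You use Borel for finiteness, Suslin's sequence with its kernel of order $4$, $K_3^{\mathrm{ind}}(F)_{\mathrm{tors}}$ cyclic of order $w_2(F)=\operatorname{lcm}(24,m)$, and the Galois-equivariant description by $H^0(F,\Q/\Z(2))$. Part (iv) is then the $|B(F)|$-torsion of $\R/\pi^2\Z$; only a homomorphism is needed there, not an injection $B(F)\hookrightarrow\mathfrak p(\R)$.

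Two imprecisions in these parts:
\begin{itemize}
\item $\sigma_k\in\operatorname{Gal}(F/\Q)$ has no cyclotomic character ``at level $w_2$'', since $\mu_{w_2}\not\subset F$. You must lift $\sigma_k$ and observe that $a^2\bmod w_2(F)$ does not depend on the lift. The paper does this by choosing $a\equiv\pm k\pmod m$ with $\gcd(a,6)=1$.
\item $F$ contains $\Q(\zeta_9)^+$ only when $9\mid m$. For $3\,\|\,m$ the $3$-part of $w_2(F)$ is still $3$, which your formula $24m/\gcd(3,m)$ correctly gives.
\end{itemize}

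For (ii) the paper proves nothing and simply cites \cite{Kirillov95}: the defect of every real five-term relation is $0$ or $-\pi^2$. Your direct proof is a legitimate alternative; it is the same chamber argument the paper uses for Lemma~\ref{lem:steps}. But its first step is false. $\widetilde L$ is continuous at $0$ and $1$, but not modulo $\pi^2$ at $\infty$: $\widetilde L(x)\to\pi^2/6$ as $x\to+\infty$ and $\widetilde L(x)\to-\pi^2/3$ as $x\to-\infty$, a jump of $\pi^2/2$.

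You do not need that step, and the repair is as follows.
\begin{itemize}
\item On $\{x,y\notin\{0,1\},\ x\ne y\}$ none of the five arguments ever equals $0$, $1$ or $\infty$. So on each of the twelve chambers cut out by the lines $x,y\in\{0,1\}$ and $x=y$, every term stays on one branch.
\item Constancy on a chamber does not follow from real-analyticity. It follows because all three branches have the same derivative, $\widetilde L'(t)=-\tfrac12\bigl(\log|1-t|/t+\log|t|/(1-t)\bigr)$. Hence the differential of the defect is the image of $\sum\pm t_i\wedge(1-t_i)=0$.
\item Each constant must then be computed by one-sided limits, using the two different values at $\pm\infty$. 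For example the chamber $x<0<1<y$ gives $-\pi^2$ (limit $x\to0^-$).
\item The chamber $0<y<x<1$ gives $0$. There Abel's identity for $L$ has right side $L(1)$, which the shift $\widetilde L=L-L(1)$ absorbs because the five signs sum to $+1$.
\end{itemize}

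With that repair your argument proves (ii), and the rest goes through as you describe.
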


\begin{proof}
For (i): $K_3(F)$ is the third algebraic $K$-group of $F$ and $K_3^M(F)\subset K_3(F)$ the image of the Milnor $K$-group, the part generated by symbols; write $K_3(F)_{\mathrm{nd}}=K_3(F)/K_3^M(F)$ for the indecomposable part, the quotient that symbols do not see, as in \cite{Suslin91}. Borel's rank computation \cite{Borel77} gives $K_3(F)_{\mathrm{nd}}\otimes\Q=0$ for $F$ totally real; the torsion of
$K_3(F)_{\mathrm{nd}}$ is cyclic of order $w_2(F)$, the largest $N$ for which $\operatorname{Gal}(F(\mu_N)/F)$ has exponent dividing $2$
\cite{MS90,Levine89}; Suslin's exact sequence $0\to\operatorname{Tor}(\mu(F),\mu(F))^\sim\to K_3(F)_{\mathrm{nd}}\to B(F)\to0$
\cite[Thm.~5.2]{Suslin91}, in which the first term is the nontrivial extension of $\operatorname{Tor}(\mu(F),\mu(F))$ by $\Z/2$ and has order $4$ for $F\subset\R$, gives $|B(F)|=w_2(F)/4$; and for $F=\Q(\zeta_m)^+$ with $m$ odd,
$\operatorname{Gal}(F(\mu_N)/F)\subset(\Z/N)^\times$ has exponent dividing $2$ exactly for $N\mid\operatorname{lcm}(24,m)$, so
$w_2(F)=\operatorname{lcm}(24,m)$. For (ii), that $\widetilde L$ vanishes modulo $\pi^2$ on every real five-term relation is \cite[\S2.3.1]{Kirillov95}; the value there is $0$ or $-\pi^2$. For (iii), $K_3(F)_{\mathrm{nd}}$ is identified Galois-equivariantly with $H^0(F,\Q/\Z(2))$ \cite{MS90,Levine89}, on which $\sigma_k$ acts by the square of the cyclotomic character, and Suslin's sequence is natural. So $\sigma_k$ acts on $B(F)$ by $a^2$.
Here $a$ is $k$ itself when $k$ is prime to $6$. In general $a$ is any integer with $a\equiv\pm k\pmod m$ and $\gcd(a,6)=1$, which exists by the Chinese remainder theorem since $m$ is odd and $k$ is prime to $m$. As $a^2\equiv k^2\pmod m$ and $a^2\equiv1$ modulo $2$ and $3$, the class of $a^2$ modulo $|B(F)|$ does not depend on the choice. Then $\widetilde L(\sigma_k\xi)=a^2\widetilde L(\xi)$ because $\widetilde L$ is a homomorphism. Part (iv) follows from (i) and (ii). Part (iv) sharpens the naive denominator $24$ of $c\bmod1$ to $m/\gcd(3,m)$, and this denominator is sharp: $\xi_1=\sum_{b=2}^{(m-1)/2}[x_b]$ lies in $B(F)$, since its image in $\wedge^2F^\times$ telescopes to $0$, and $c(\xi_1)=1-3/m$ has exact order $m/\gcd(3,m)$ modulo $1$, by \eqref{eq:Smk}.
\end{proof}

\subsection{The two spin forms are one discriminant form}
\label{sub:spinform}

Let $\varphi=\begin{psmallmatrix}a&b\\c&d\end{psmallmatrix}$ have odd trace $k\ge5$ (in this subsection and the next, $k$ is the trace, not the index $k$ of Section~\ref{sub:galoisrule}), and put $N_\mp=k\mp2=|A_\mp|$, which are odd and coprime. With
$J=\begin{psmallmatrix}0&1\\-1&0\end{psmallmatrix}$, $\langle x,y\rangle=\det(x,y)=x^{\mathsf T}Jy$, let
\[
\Phi(u)=u^{\mathsf T}J\varphi u=c\,u_1^2+(d-a)u_1u_2-b\,u_2^2,
\]
which we call the binary quadratic form of $\varphi$; its discriminant is $k^2-4=N_-N_+$.

The map $u\mapsto\operatorname{adj}(\varphi\mp1)u$ is a bijection $A_\mp\to\mathcal K_\mp$, the kernel group of Section~\ref{sec:spectrum}: indeed $(\varphi\mp1)v\equiv0\pmod{N_\mp}$
iff $v\equiv\operatorname{adj}(\varphi\mp1)u$ for some $u$, and $\operatorname{adj}(\varphi\mp1)u\equiv0$ iff $u\in(\varphi\mp1)\Z^2$. Define
\begin{equation}
q_\mp(u)=-[4^{-1}]_{N_\mp}\,\frac{\Phi(u)}{N_\mp}\pmod 1\qquad(u\in A_\mp),
\label{eq:qform}
\end{equation}
with $[4^{-1}]_N$ the inverse of $4$ modulo $N$, and let $\gamma_\mp=G(q_\mp)/\sqrt{N_\mp}$, $G(q)=\sum_ue(q(u))$, $e(x)=e^{2\pi ix}$.

\begin{lemma}
\label{lem:disc}
Let $\mathbb L_\varphi=(\Z^2,S_\varphi)$ with $S_\varphi=J\varphi+(J\varphi)^{\mathsf T}=J(\varphi-\varphi^{-1})$. Then $\mathbb L_\varphi$ is even and indefinite, and its
discriminant form is isometric to $(A_-,q_-)\perp(A_+,-q_+)$. In particular $q_\mp$ are well-defined nondegenerate quadratic forms, and
$\gamma_-=\gamma_+$.
\end{lemma}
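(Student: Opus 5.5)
The plan is to compute the discriminant form of $\mathbb L_\varphi$ directly and to split it by the Chinese remainder theorem into its $N_-$- and $N_+$-primary parts. Each part is then identified with $(A_\mp,\pm q_\mp)$ by an explicit map $u\mapsto(\varphi\mp1)^{-1}hu$.

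First the elementary facts. $S_\varphi=J\varphi+(J\varphi)^{\mathsf T}$, so $u^{\mathsf T}S_\varphi u=2u^{\mathsf T}J\varphi u=2\Phi(u)$. Hence $\mathbb L_\varphi$ is even, and its norm form is $\Phi$. The Gram determinant of $\Phi$ is $-\tfrac14(k^2-4)$, so $\det S_\varphi=-(k^2-4)=-N_-N_+<0$. Therefore the lattice is indefinite, of signature $0$, with discriminant group of order $N_-N_+$. The identity $J\varphi^{\mathsf T}J^{-1}=\varphi^{-1}$ in $SL_2$ gives $S_\varphi=J(\varphi-\varphi^{-1})=J\varphi^{-1}(\varphi-1)(\varphi+1)$. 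Because $N_-$ and $N_+$ are coprime, the discriminant group is the orthogonal sum of its $N_-$-torsion and its $N_+$-torsion.

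Next the identification of the $N_-$-part. Put $M=\varphi-1$, which commutes with $\varphi$. Let $h$ be an integer with $h\equiv2^{-1}\pmod{N_-}$, and send $u\in\Z^2$ to $x=M^{-1}hu$. Then $S_\varphi x=hJ\varphi^{-1}(\varphi+1)u\in\Z^2$, so $x$ lies in the dual lattice. Also $N_-x=\pm h\operatorname{adj}(M)u\in\Z^2$, so the class of $x$ is $N_-$-torsion.

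The key computation uses $M^{\mathsf T}J=J\operatorname{adj}M$, the identity already used in Proposition~\ref{prop:linking}. It gives $M^{-\mathsf T}J=JM/\det M$, and hence
\[
\Phi(M^{-1}u)=\frac{u^{\mathsf T}JM\varphi M^{-1}u}{\det M}=\frac{\Phi(u)}{2-k}=-\frac{\Phi(u)}{N_-}.
\]
So the discriminant quadratic form $\tfrac12x^{\mathsf T}S_\varphi x=\Phi(x)$ takes the value $-h^2\Phi(u)/N_-\equiv q_-(u)\pmod1$. This uses that $h^2\equiv[4^{-1}]_{N_-}$ and that $\Phi(u)$ is an integer, so only $h^2$ modulo $N_-$ matters. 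The same computation with $M=\varphi+1$, $\det M=N_+$, and $h\equiv2^{-1}\pmod{N_+}$ gives $+[4^{-1}]_{N_+}\Phi(u)/N_+=-q_+(u)$ on the $N_+$-part.

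Well-definedness and bijectivity come next. The class of $x$ is zero in the discriminant group exactly when $M^{-1}hu\in\Z^2+(\text{the other primary part})$. Restricted to $N_-$-torsion, and with $h$ invertible modulo $N_-$, this means $u\in M\Z^2$. So the map is injective on $A_-=\coker M$. It is then bijective, since both sides have order $N_-$. This is the step I expect to need the most care: keeping track of the primary splitting while the representatives $M^{-1}hu$ still carry a denominator from the other factor.

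With these isometries in hand, $q_\mp$ are well defined and nondegenerate, because they are restrictions of the discriminant form of an even lattice to orthogonal summands. For the Gauss sums I apply Milgram's formula \cite{MH73}. It gives $G(q_-)\,G(-q_+)=\sqrt{N_-N_+}\,e(0/8)$, since the signature is $0$. Now $G(-q_+)=\overline{G(q_+)}$, and a nondegenerate form on a group of odd order $N$ has $|G|=\sqrt N$. Hence $\gamma_-\overline{\gamma_+}=1$ with $|\gamma_\pm|=1$, so $\gamma_-=\gamma_+$.
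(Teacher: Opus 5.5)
Your proposal is correct and follows essentially the paper's route. The paper computes the discriminant form in the coordinates $u=Jy$ on $\Z^2/(\varphi^2-1)\Z^2$ and identifies the primary parts $(\varphi\pm1)\Z^2/(\varphi^2-1)\Z^2$ with $A_\mp$, using the same scaling identity $\Phi((\varphi\mp1)x)=\mp N_\mp\Phi(x)$ and Milgram's formula, whereas you embed $A_\mp$ directly into the dual lattice by $u\mapsto h(\varphi\mp1)^{-1}u$ with the factor $2^{-1}$ built in.

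The step you flagged is easier than you fear. $h(\varphi-1)^{-1}u$ has denominators dividing $N_-$ only, so its class vanishes exactly when it lies in $\Z^2$, that is, when $hu\in(\varphi-1)\Z^2$. This is equivalent to $u\in(\varphi-1)\Z^2$, since $h$ is a unit modulo $N_-$ and $N_-\Z^2\subset(\varphi-1)\Z^2$.
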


\begin{proof}
$S_\varphi$ has diagonal $(2c,-2b)$ and $\det S_\varphi=-(k^2-4)<0$. Its discriminant form is $y\mapsto\tfrac12y^{\mathsf T}S_\varphi^{-1}y$ on
$\Z^2/S_\varphi\Z^2$; substituting $u=Jy$ and using $JS_\varphi=-(\varphi-\varphi^{-1})$, a direct computation turns it into
the form $u\mapsto-\Phi(u)/N_-N_+$ on $\Z^2/(\varphi^2-1)\Z^2$. As $\gcd(N_-,N_+)=1$, the $N_\mp$-primary part is $(\varphi\pm1)\Z^2/(\varphi^2-1)\Z^2$, identified
with $A_\mp$ by $x\mapsto(\varphi\pm1)x$. From $\varphi^{\mathsf T}J=J\varphi^{-1}$ and $\varphi^2=k\varphi-1$,
\[
(\varphi\pm1)^{\mathsf T}J\varphi(\varphi\pm1)=\pm J(\varphi\pm1)^2=\pm(k\pm2)\,J\varphi ,
\]
so $\Phi((\varphi+1)x)=N_+\Phi(x)$ and $\Phi((\varphi-1)x)=-N_-\Phi(x)$, and the two parts carry $-\Phi/N_-$ on $A_-$ and $\Phi/N_+$ on $A_+$. Since
$4[4^{-1}]_N\equiv1$, these are $q_-(2x)$ and $-q_+(2x)$, and $x\mapsto2x$ is an automorphism of each group. The discriminant form of an even
lattice is nondegenerate, and Milgram's formula \cite{MH73} gives $G(q_-)\overline{G(q_+)}=\sqrt{N_-N_+}\,e(\operatorname{sign}\mathbb L_\varphi/8)=\sqrt{N_-N_+}$.
Hence $\gamma_-\bar\gamma_+=1$ with $|\gamma_\mp|=1$, that is, $\gamma_-=\gamma_+$.
\end{proof}

\subsection{The Rogers sums at the non-degenerate classes}
\label{sub:rogersnd}

We use $R$, the extended Rogers dilogarithm $L=\operatorname{Re}R$ and the Rogers sum $c_\alpha=\sum_tL(x_t)/L(1)$ of a class, as in \eqref{eq:Fphase}.

\begin{theorem}
\label{thm:rogers}
Let $\varphi$ be a hyperbolic word of odd trace, and let $\xi$ be the non-degenerate class of $v=\operatorname{adj}(\varphi\mp1)u\in\mathcal K_\mp$. Then
\[
c_\xi\equiv\#R-6\,\frac{\Phi(u)}{N_\mp}\pmod 6 .
\]
Consequently every one of the $2^{\#L}$ vacua of $\xi$ has spin $\equiv\#R/24-\Phi(u)/4N_\mp\pmod{\tfrac14}$.
\end{theorem}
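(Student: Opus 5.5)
The plan is to evaluate the Rogers sum by the same telescoping as in Theorem~\ref{thm:ceff}, now with the height function replaced by the sine function $f(e)=\sin(\pi\langle v,e\rangle/N)$ of Theorem~\ref{thm:allwords}. At the class $\xi$ the shapes are $x_t=f(\mathsf k_t)^2/f(\mathsf m_t)^2$ at $L$-steps and $1-$ this at $R$-steps. So $c_\xi=\sum_L L(s_t)+\sum_R\bigl(L(1)-L(s_t)\bigr)$ up to the $\pi^2$-jumps of the extended dilogarithm, and hence $c_\xi=\#R+\bigl(\sum_LL(s)-\sum_RL(s)\bigr)/L(1)$ modulo those jumps. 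Since the target congruence is modulo $6$, i.e.\ modulo $\pi^2$ after multiplying by $L(1)$, every jump of $\pm\pi^2$ is invisible. It will therefore suffice to work with $\widetilde L$ of Proposition~\ref{prop:suslin}(ii), which is a homomorphism on $\mathfrak p(\R)$ to $\R/\pi^2\Z$.

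The key steps, in order, are as follows.

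\emph{Step 1: step identities.} Write the Farey ratio $t=f(c_1)/f(c_2)$ at each step, as in the proof of Theorem~\ref{thm:ceff}. Rogers' duplication formula $L(x^2)=2L(x)-2L(x/(1+x))$ and reflection then give the step identities $L(s)=G(t)-G(t/(1+t))$ and $-L(s)=G(t)-G(t+1)$ modulo $\pi^2$. The point is that these are identities in $\mathfrak p(\R)$ modulo five-term relations, valid for any real $t\ne0,-1$, so they apply to the sine values.

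\emph{Step 2: the boundary term.} Summing over the word gives $G(t_0)-G(t_r)$. Here periodicity fails: by (b) in the proof of Theorem~\ref{thm:allwords}, $f(\varphi e)=\pm f(e)$ only after the argument $\pi\langle v,e\rangle/N$ has been shifted by $\pi\langle w,e\rangle$, $Nw=\varphi^{-1}v\mp v$. So I would lift $f$ to the additive function $\Theta(e)=\pi\langle v,e\rangle/N$ on $\Z^2$. I would then rewrite the telescoping difference through Rogers' dilogarithm of sine ratios, using the classical formula $\sum L$ of $\sin^2$-ratios along a linear argument, which is Kirillov's evaluation (Section~\ref{sub:fullchain}). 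This turns the defect into a quadratic expression in the arguments, of the form $6\,\Theta(\cdot)\Theta(\cdot)/\pi^2$ evaluated at the endpoints $e$ and $\varphi e$.

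\emph{Step 3: identify the quadratic expression.} The endpoint discrepancy is $\langle v,\varphi e\rangle-\langle v,e\rangle$ for the basis vectors. With $v=\operatorname{adj}(\varphi\mp1)u$ and the identities $\Phi((\varphi\pm1)x)=\pm N_\pm\Phi(x)$ of Lemma~\ref{lem:disc}, the expression should collapse to $-6\Phi(u)/N_\mp$ modulo $6$. As a check, at $v=0$ this should reproduce $c=\#R$ of Theorem~\ref{thm:ceff}. The spin statement then follows from \eqref{eq:Fphase}. There $24\,\operatorname{spin}$ equals $c_\xi$ plus the branch term $12\,\operatorname{Im}(n\cdot Z)/\pi$, which at a real class lies in $6\Z$ (phases in $\mu_8$ times the $i^{n_0}$ factor). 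Dividing by $24$ gives the congruence modulo $\tfrac14$, uniformly over the $2^{\#L}$ sign flips.

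The main obstacle is Step 2. The telescoping argument needs a genuine quadratic-polynomial correction, not just the vanishing $G(t_0)-G(t_r)$, and the sign choices of $f$ (square roots) interact with the branches. I would first try to express the Bloch element $\sum_t[x_t]$ as a difference of cyclotomic Bloch elements $[\zeta^a]$-type symbols, whose Rogers values are known quadratic polynomials in $a/N$ modulo $\pi^2$. Failing that, I would fall back on verifying the congruence by continuous deformation of $v/N$ in $\R^2$: both sides are continuous away from walls where the $c_\xi$ side jumps by multiples of $6$, and they agree at $v=0$.
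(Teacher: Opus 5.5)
Your Step~1 does not hold, and Steps~2 and~3 depend on it. The step identities of Theorem~\ref{thm:ceff} cannot be applied to the sine values, because they use additivity of the height twice. After an $L$-flip the new ratio is $h(c_1)/h(c_1+c_2)=t/(1+t)$, so $s=(t/(1+t))^2$. The term $2L(x/(1+x))$ left by duplication, with $x=h(c_1)/h(c_1+c_2)$, is the next step's $G$-term only because $h(2c_1+c_2)=h(c_1)+h(c_1+c_2)$.

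For $f=\sin\circ\Theta$ neither holds. If $\Theta(c_1)=\pi/3$ and $\Theta(c_2)=\pi/6$, then $t=\sqrt3$ and $s=f(c_1)^2/f(c_1+c_2)^2=3/4$, whereas $(t/(1+t))^2\approx0.40$. Even if you apply duplication to the correct $x=f(c_1)/f(c_1+c_2)$, consecutive terms do not cancel, since $f(c_1)+f(c_1+c_2)\ne f(2c_1+c_2)$. So the Rogers sum never reduces to $G(t_0)-G(t_r)$, and Step~2 has nothing to correct.

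Kirillov's formula \eqref{eq:Smk} cannot fill this. It evaluates one full chain, $b=1,\dots,m-1$ at $\vartheta=k\pi/m$ with zero phase, while the runs of a general word are finite stretches of sine rows with arbitrary phases. The deformation fallback does not work either. Continuity plus agreement at one point proves nothing unless the derivatives of the two sides agree, and computing the derivative of the Rogers sum in $\Theta$ is exactly the missing computation. Moreover, at $v=0$ the shapes are $0/0$.

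The missing idea is a multiplicative (character) analogue of the height. With $(\alpha,\beta)=(\Theta(c_1),\Theta(c_2))$, put $z(\alpha,\beta)=(e^{2i\alpha}-1)/(e^{2i(\alpha+\beta)}-1)=\frac{\sin\alpha}{\sin(\alpha+\beta)}e^{-i\beta}$. Then $z\bar z$ is the $L$-step shape, $|1-z|^2$ is the $R$-step shape, and $z(1-\bar z)/(1-z\bar z)=z(\alpha,\alpha+\beta)$. Abel's five-term relation at the conjugate pair $(z,\bar z)$ is therefore the step identity.

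Because the arguments are complex, $2\operatorname{Re}R(z)$ alone does not satisfy that identity modulo $\pi^2$. The correct primitive is $\mathcal G(\alpha,\beta)=2\operatorname{Re}R(z)+\alpha\arg z+\beta\arg(1-z)$. With it, \eqref{eq:steps} holds modulo $\pi^2$ by Lemma~\ref{lem:steps}, which controls the branches by computing $d\mathcal G$ on each component and evaluating at the edges $\alpha,\beta\in\pi\Z$.

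With $\Sigma=\sum_L\operatorname{Re}R(s_t)-\sum_R\operatorname{Re}R(s_t)$, so that $c_\xi=\#R+6\Sigma/\pi^2$, telescoping gives $\Sigma\equiv\mathcal G(\Theta(e_1),\Theta(e_2))-\mathcal G(\Theta(\varphi e_1),\Theta(\varphi e_2))$. Since $\Theta(\varphi e)=\upsilon\Theta(e)+\pi\langle\varphi^{-1}u,e\rangle$ with $\upsilon=\pm1$, $z$ returns to $z_0$ or $\bar z_0$ and the $\operatorname{Re}R$ terms cancel. The quadratic term then comes entirely from the $\arg$ terms: $-\upsilon\pi^2\langle v,\varphi^{-1}u\rangle/N_\mp=-\pi^2\Phi(u)/N_\mp$.

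Your cyclotomic fallback points in this direction. What actually works, though, is the pair $(z,\bar z)$ with the $\arg$-corrected primitive, not Rogers values at roots of unity.

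A smaller point in the spin step: the branch factor lies in $\mu_4$, not merely $\mu_8$. This is because $x_L=s_t>0$ forces $\operatorname{Im}Z_a\in\pi\Z$ at a non-degenerate class. With only $\mu_8$ you would get the spin modulo $\tfrac18$, not $\tfrac14$.
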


The proof below does not use that the trace is odd; the hypothesis is kept because the forms $q_\mp$ used with the theorem need $N_\mp$ odd.

\begin{proof}
\emph{Step 1: reduction to principal real parts.} For real $s\notin\{0,1\}$ the reflection formula for $\Li_2$ gives $L(1-s)=L(1)-L(s)$. Hence, exactly,
\begin{equation}
c_\xi=\#R+\frac6{\pi^2}\,\Sigma,\qquad \Sigma:=\sum_{L\text{-steps}}\operatorname{Re}R(s_t)-\sum_{R\text{-steps}}\operatorname{Re}R(s_t).
\label{eq:Sigma}
\end{equation}

\emph{Step 2: a primitive.} Put $\vartheta(e)=\pi\langle v,e\rangle/N_\mp$ for $e\in\Z^2$, with $v=\operatorname{adj}(\varphi\mp1)u$ as an integer vector, the lift used in Step~4; $\vartheta$ is linear. At step $t$
let $(c_1,c_2)$ be the Farey basis and $(\alpha,\beta)=(\vartheta(c_1),\vartheta(c_2))$; an $L$-step replaces it by $(c_1,c_1+c_2)$ and has
$s=\sin^2\alpha/\sin^2(\alpha+\beta)$, an $R$-step by $(c_1+c_2,c_2)$ with $s=\sin^2\beta/\sin^2(\alpha+\beta)$. Define
\[
z(\alpha,\beta)=\frac{e^{2i\alpha}-1}{e^{2i(\alpha+\beta)}-1}=\frac{\sin\alpha}{\sin(\alpha+\beta)}\,e^{-i\beta},\qquad
1-z=\frac{\sin\beta}{\sin(\alpha+\beta)}\,e^{i\alpha},
\]
\[
\mathcal G(\alpha,\beta)=2\operatorname{Re}R(z)+\alpha\arg z+\beta\arg(1-z).
\]

\emph{Step 3: the step identities.} By Lemma~\ref{lem:steps} (Appendix~\ref{app:steps}),
\begin{equation}
\operatorname{Re}R\bigl(\tfrac{\sin^2\alpha}{\sin^2(\alpha+\beta)}\bigr)-\mathcal G(\alpha,\beta)+\mathcal G(\alpha,\alpha+\beta)\in\pi^2\Z,\qquad
-\operatorname{Re}R\bigl(\tfrac{\sin^2\beta}{\sin^2(\alpha+\beta)}\bigr)-\mathcal G(\alpha,\beta)+\mathcal G(\alpha+\beta,\beta)\in\pi^2\Z
\label{eq:steps}
\end{equation}
wherever the arguments are defined. These are the step identities.

\emph{Step 4: telescoping.} Non-degeneracy gives $f\neq0$, with $f(e)=\sin(\pi\langle v,e\rangle/N)$ as in Theorem~\ref{thm:allwords}, on every kept, dying, middle and newborn slope (proof of Theorem~\ref{thm:Hall}, Appendix~\ref{app:Hall}), that
is $\sin\alpha,\sin\beta,\sin(\alpha+\beta)$ and $\sin(2\alpha+\beta)$ (resp.\ $\sin(\alpha+2\beta)$) are nonzero at every step, so \eqref{eq:steps}
applies at every step. Consecutive steps share a basis, so
\[
\Sigma\equiv \mathcal G\bigl(\vartheta(e_1),\vartheta(e_2)\bigr)-\mathcal G\bigl(\vartheta(\varphi e_1),\vartheta(\varphi e_2)\bigr)\pmod{\pi^2},
\]
the final basis being the columns of $\varphi$. Write $\varphi\mp1=\varphi-\upsilon$ with $\upsilon=\pm1$. Then $\det(\varphi-\upsilon)=-\upsilon N_\mp$, and
$\varphi^{-1}v=\upsilon v+N_\mp\tilde w$ with $\tilde w=\varphi^{-1}u$, because $\varphi^{-1}v-\upsilon v=-\upsilon\varphi^{-1}(\varphi-\upsilon)\operatorname{adj}(\varphi-\upsilon)u$.
Hence $\vartheta(\varphi e)=\upsilon\vartheta(e)+\pi\langle\tilde w,e\rangle$. Put $(\alpha_0,\beta_0)=(\vartheta(e_1),\vartheta(e_2))$ and
$(p_1,p_2)=(\langle\tilde w,e_1\rangle,\langle\tilde w,e_2\rangle)$. Since $z$ is $\pi$-periodic in each variable and
$z(-\alpha,-\beta)=\overline{z(\alpha,\beta)}$, the final $z$ is $z_0$ or $\bar z_0$; $\operatorname{Re}R$ agrees on the two, and the arg-terms give
\[
\Sigma\equiv-\upsilon\pi(p_2\alpha_0-p_1\beta_0)=-\frac{\upsilon\pi^2}{N_\mp}\langle v,\tilde w\rangle\pmod{\pi^2}.
\]
Finally $\upsilon\langle v,\tilde w\rangle=\upsilon\,u^{\mathsf T}J(\varphi-\upsilon)\varphi^{-1}u=-u^{\mathsf T}J\varphi^{-1}u=\Phi(u)$, using
$\operatorname{adj}(M)^{\mathsf T}J=JM$, $u^{\mathsf T}Ju=0$ and $\varphi^{-1}=k-\varphi$. So $\Sigma\equiv-\pi^2\Phi(u)/N_\mp\pmod{\pi^2}$, and \eqref{eq:Sigma} gives the
congruence.

\emph{Step 5: spins.} At a non-degenerate class all Bethe coordinates are real and $x_L=s_t>0$ at the charge-$2$ sites, so
$\operatorname{Im}Z_a\in\pi\Z$ for every $a$ and in \eqref{eq:Fphase} the factor $e^{n\cdot Z/2}$ lies in
$\{\pm1,\pm i\}$; hence every vacuum of the class has spin $\equiv c_\xi/24\pmod{\tfrac14}$.
\end{proof}

Under the grafting fusion rules the twist, $\operatorname{spin}-\#R/24$, is the element of odd order in its class modulo $\tfrac14$, which is $q_\mp(u)$
(Corollary~\ref{cor:T}).

\subsection{Rogers sums of a full chain}
\label{sub:fullchain}

At a degenerate class the closed form of Theorem~\ref{thm:allwords} fails at some site, and Theorem~\ref{thm:rogers} does not apply. The
finite coordinates of the degenerate $A_-$ classes of $L^mR^m$ (Section~\ref{sec:leading}) are, instead, full chains
$\sin^2\vartheta/\sin^2(b\vartheta)$, $b=2,\dots,m-2$, at $\vartheta=k\pi/m$, and their Rogers sums are known in closed form for $\gcd(k,m)=1$, the range $k>\tfrac{m-1}2$ following from $k\leftrightarrow m-k$:
\begin{equation}
m\,S(m,k):=\frac m{L(1)}\sum_{b=1}^{m-1}L(x_b)=4m-6+6(k-1)(m-k-1),\qquad x_b=\frac{\sin^2(k\pi/m)}{\sin^2(bk\pi/m)} .
\label{eq:Smk}
\end{equation}
It was conjectured as identity (3.1) of Nahm--Recknagel--Terhoeven \cite{NRT92}, there as a sum over $b=2,\dots,m-1$, and it is identity (1.16) of \cite{Kirillov95}. Kirillov proved it
\cite[Thm.~3.1, Cor.~3.4(i)]{Kirillov92}, the case $n=2$ needed here being proved in \cite[\S4]{Kirillov92}. Indeed, the terms $b$ and $m-b$
coincide and the terms $b=1,m-1$ equal $L(1)$, and with $r=m-2$, $j=k-1\le r$ one has $S(m,k)=1+s(k-1,2,m-2)$ in the notation of
\cite{Kirillov92}, which is \eqref{eq:Smk} by \cite[(3.7)]{Kirillov92}. For $k=1$ it is the classical central-charge identity (Lewin,
Richmond--Szekeres, Kirillov--Reshetikhin; \cite[Cor.~4]{Kirillov95}).

\begin{proposition}[The Rogers sums modulo $1$]
\label{obs:kirillov}
For odd $m$ and $\gcd(k,m)=1$, $mS(m,k)\equiv-6k^2\pmod m$.
\end{proposition}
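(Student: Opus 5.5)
The plan is to read the congruence directly off Kirillov's closed form \eqref{eq:Smk}. No new dilogarithm identity is needed. All the content sits in \eqref{eq:Smk} as established by \cite{Kirillov92}, and the proposition is its reduction modulo $m$. I would split into the range where \eqref{eq:Smk} is stated, $1\le k\le\tfrac{m-1}2$ with $\gcd(k,m)=1$, and the complementary range, handled by the symmetry $k\leftrightarrow m-k$.

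\emph{Step 1 (the range of \eqref{eq:Smk}).} Reduce $4m-6+6(k-1)(m-k-1)$ modulo $m$. Since $m-k-1\equiv-(k+1)\pmod m$, one has $(k-1)(m-k-1)\equiv-(k^2-1)$. Hence
\[
mS(m,k)\equiv-6-6(k^2-1)=-6k^2\pmod m .
\]
Here $mS(m,k)$ is an integer, being equal to the integer right side of \eqref{eq:Smk}, so the congruence is meaningful.

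\emph{Step 2 (the other range).} The coordinates $x_b=\sin^2(k\pi/m)/\sin^2(bk\pi/m)$ are unchanged under $k\mapsto m-k$, since $\sin^2((m-k)\theta\pi/m)=\sin^2(k\theta\pi/m)$ for integer $\theta$. Hence $S(m,k)=S(m,m-k)$. Also $\gcd(m-k,m)=1$, and $-6(m-k)^2\equiv-6k^2\pmod m$. So the range $k>\tfrac{m-1}2$ follows from Step~1 applied to $m-k$. For $k$ outside $[1,m-1]$, $S$ depends only on $k\bmod m$ and so does $k^2\bmod m$. Finally, the polynomial $4m-6+6(k-1)(m-k-1)$ is itself invariant under $k\mapsto m-k$, so one could also apply it directly.

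\emph{Step 3 (consistency check, optional).} The result matches the Galois rule of Proposition~\ref{prop:suslin}(iii): the chain at $\vartheta=k\pi/m$ is $\sigma_k$ of the chain at $k=1$, up to the Galois action on the $x_b$. The case $k=1$ gives $mS(m,1)=4m-6\equiv-6$. The rule then predicts $mS(m,k)\equiv-6k^2$, which is the same statement; this is a cross-check, not a proof. The only point requiring care, and the one I expect to be the main obstacle, is confirming that the normalisation of the sum in \eqref{eq:Smk} (over $b=1,\dots,m-1$, with the terms $b=1,m-1$ equal to $L(1)$) matches Kirillov's $s(k-1,2,m-2)$. This is already asserted in the paragraph preceding the statement, and I would take it as given.
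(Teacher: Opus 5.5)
Your proposal is correct and is the paper's proof: reduce the right side $4m-6+6(k-1)(m-k-1)$ of \eqref{eq:Smk} modulo $m$, using $(k-1)(m-k-1)\equiv1-k^2$. The extension to $k>\tfrac{m-1}2$ by $k\leftrightarrow m-k$ and the cross-check against the Galois rule of Proposition~\ref{prop:suslin}(iii) are both also in the paper, where the latter is likewise presented as a second reading rather than a proof.
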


This follows at once from \eqref{eq:Smk}, since $(k-1)(m-k-1)\equiv1-k^2\pmod m$.
It is also an instance of the Galois rule of Proposition~\ref{prop:suslin}(iii): $S(m,k)=2+2c(\xi_k)$ with
$\xi_k=\sum_{b=2}^{(m-1)/2}[x_b]$, which is $\sigma_k(\xi_1)$ in the Bloch group of $\Q(\zeta_m)^+$ \cite[\S2.3]{Kirillov95}; so
$c(\xi_k)\equiv k^2c(\xi_1)\pmod1$, and $c(\xi_1)=1-3/m$ is the case $k=1$.

In conformal terms the right side of \eqref{eq:Smk} is $m(1+c^{(m-2)}-24h_{k-1}+6(k-1))$, the $1$ being the term $b=1$,
with $c^{(\kappa)}=3\kappa/(\kappa+2)$ the central charge of $SU(2)_\kappa$ and $h_{k-1}=(k^2-1)/4m$ the conformal weight of the primary labelled by the
Galois conjugate: the statement $24\theta_\alpha\equiv c_\alpha$ of \eqref{eq:theta24} seen from the WZW side.

\subsection{Comparison with the Gang--Kim template}
\label{sub:template}

By Remark~\ref{rem:internal}, what remains of input (b) is whether the spins match the template: the $T$-eigenvalues
$e^{2\pi iA^2/4(k-2)}$, $A=1,\dots,k-3$, on the $A_-$ block of $\mathrm{TFT}_-[\mathcal S_k]$ \cite{GK22}, of which the $A_-$ block of
\eqref{eq:GKT} is the even-$A$ half (Observation~\ref{obs:e1even}). As printed they cannot: the multiset $\{A^2/4(k-2)\}$,
$A=1,\dots,k-3$, has Gauss modulus $\sqrt2$, $1$, $\sqrt2$ at $N=k-2=3,4,9$ against the $\sqrt N$ that a quadratic refinement of a
linking form on $A_-$ must have, and the Gauss modulus is independent of any multiplier on the form. What the template conceals is a
splitting by parity.

\begin{observation}[The template splits into two refinements of one linking form]
\label{obs:e1even}
Let $N$ be odd. Writing $A=2a$ and $A=2a+1$,
\begin{equation}
\begin{gathered}
\frac{A^2}{4N}=\frac{a^2}N\quad(A\ \text{even},\ a=1,\dots,\tfrac12(N-1)),\\
\frac{A^2}{4N}=\frac{a^2+a}N+\frac1{4N}\quad(A\ \text{odd},\ a=0,\dots,\tfrac12(N-3)),
\end{gathered}
\label{eq:parsplit}
\end{equation}
so the template multiset splits by the parity of $A$; $a^2/N$ and $(a^2+a)/N$ are quadratic refinements of the same bilinear form $2ab/N$ on
$\Z/N$, and the even half, with the $\pm$-doubled weighting proper to a form on $A_-$ and the term $a=0$ included, has Gauss modulus $\sqrt N$, the modulus of the quadratic
Gauss sum. The two halves are translates of one another: for $A$ even, $(A+N)^2/4N=A^2/4N+A/2+N/4\equiv A^2/4N+N/4\pmod1$, and $A\mapsto A+N$ carries
the even labels in $[2,N-1]$ to the odd labels in $[N+2,2N-1]$. The symmetry $A\mapsto2N-A$, under which $(2N-A)^2/4N\equiv A^2/4N\pmod1$, carries these to the odd labels in $[1,N-2]$. So
\begin{equation}
\{A^2/4N:A\ \text{odd}\}=\{A^2/4N:A\ \text{even}\}+\tfrac N4\pmod 1
\label{eq:transN}
\end{equation}
as multisets. All of this is elementary. The empirical part is the identification of the shift $N/4$ as the semion of the sign group. At
$\varphi=LR^n$, where a class has exactly two vacua, $K_{LL}=2n$ and $q(1)=K_{LL}/8=n/4\equiv N/4\pmod1$ in both sectors ($N=n$ or $n+4$). For the general case it was checked by computer on all $427$ cyclic
words of length at most $11$ with $\tr\varphi>2$, the whole range checked and nothing is claimed beyond it, that $q$ attains $\pm N/4$, Among the $103$ words of odd trace, both values occur at $95$ words, only $+N/4$ at $LR$, $LR^3$, $LR^5$, $LR^7$, $LR^9$ and $LR^3LR^3$, and only $-N/4$ at $LRLR$ and $LRLR^5$.
\end{observation}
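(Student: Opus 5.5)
The elementary content of the observation I would prove directly, in the order in which it is stated, and I would separate it cleanly from the computer enumeration. First the split \eqref{eq:parsplit}. This is substitution: $A=2a$ gives $4a^2/4N=a^2/N$. For $A=2a+1$ one has $(2a+1)^2=4(a^2+a)+1$, which gives $(a^2+a)/N+1/4N$. The ranges follow from $1\le A\le N-1$ once $k-2$ is identified with $N$. For the claim that both are refinements of one bilinear form, I would compute the polarisation $Q(a+b)-Q(a)-Q(b)$ for $Q(a)=a^2/N$ and for $Q(a)=(a^2+a)/N$. The linear term $a/N$ cancels in the second, so both give $2ab/N$. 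Since $N$ is odd, $a\mapsto a+2^{-1}$ carries one refinement to the other up to a constant; I would note this only as a sanity check.

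For the Gauss modulus of the even half, I would rewrite the $\pm$-doubled weighting with $a=0$ included as the full sum $\sum_{a\in\Z/N}e(a^2/N)$, because $a$ and $-a$ give equal terms. Its modulus is $\sqrt N$ by the classical quadratic Gauss sum for odd $N$. The same argument applied to the printed template, which is not $\pm$-closed in this sense, explains the deviant moduli $\sqrt2,1,\sqrt2$ quoted before the observation; I would verify those three values directly, since they are finite sums.

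For \eqref{eq:transN} I would do the two label maps explicitly. Expanding $(A+N)^2/4N=A^2/4N+A/2+N/4$ and using $A$ even gives the shift by $N/4$ modulo $1$. The map $A\mapsto A+N$ sends the even labels in $[2,N-1]$ bijectively to the odd labels in $[N+2,2N-1]$, because $N$ is odd. Then $A\mapsto 2N-A$ preserves $A^2/4N$ modulo $1$: the difference is $(4N^2-4NA)/4N=N-A\in\Z$. It maps $[N+2,2N-1]$ bijectively onto $[1,N-2]$ and preserves parity. Composing the two maps gives the multiset identity, with the same count $\tfrac12(N-1)$ on both sides.

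The empirical identification of the shift with the semion is the main point. At $LR^n$ it is provable: by Theorem~\ref{thm:Kformula}, $K_{LL}=2n$, so $q(1)=2n/8=n/4$. In the two sectors $N=n$ and $N=n+4$, and both are $\equiv n\pmod 4$. Hence $q(1)\equiv N/4\pmod 1$.

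For general words I expect the main obstacle to be that nothing forces $\pm N/4$ to be a value of $q(\varepsilon)=\tfrac18\varepsilon^{\mathsf T}K_{LL}\varepsilon$. The quarter part of $q$ is controlled by $\tfrac14\varepsilon^{\mathsf T}E_{LL}\varepsilon$ modulo $2$ against $N=\tr\varphi\mp2$, and I see no uniform lemma linking the diagonal of $E_{LL}$ to the trace beyond Lemma~\ref{lem:parity}. So I would treat this part as a finite enumeration. Over all $427$ cyclic words of length at most $11$ with trace $>2$, I would build $K_{LL}$ from the run structure of Theorem~\ref{thm:Kformula}, evaluate $q$ on $(\Z/2)^{\#L}$, and record which of $\pm N/4$ occur. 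I would claim nothing outside that range.
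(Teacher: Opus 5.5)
Your proposal is correct and follows the same route as the paper: the parity split, the common polarisation $2ab/N$, the rewriting of the $\pm$-doubled even half as the full sum $\sum_{a\in\Z/N}e(a^2/N)$, the two label maps $A\mapsto A+N$ and $A\mapsto 2N-A$, the value $q(1)=n/4\equiv N/4\pmod 1$ at $LR^n$, and a finite enumeration for general words are exactly what the observation records. Your remark that nothing forces $\pm N/4$ to occur is too pessimistic for odd trace: some $R$-run then has odd length $b$ (otherwise $R^2\equiv I\pmod 2$ gives $\varphi\equiv L^{\#L}\pmod 2$, which has even trace), the first $L$-site $\ell$ after that run has $E_{\ell\ell}\equiv b\pmod 2$ by Theorem~\ref{thm:Kformula}, and flipping $\ell$ alone gives $q=E_{\ell\ell}/4\in\{\tfrac14,\tfrac34\}$, which is $\{\pm N/4\}$ modulo $1$ for odd $N$; so only which of the two values occur, and the even-trace words, are left to the computation.
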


So the $a_1$ sector of the template carries two labels per class of $A_-$ --- one even $A$, one odd, a translate by $N$ in $\Z/2N$
differing in spin by the semion's $N/4$. This is the $2^{\#L}$ sign structure of Section~\ref{sec:bethe}, and the object to construct
is a labelling of vacua, $a\in(A_-\smallsetminus0)/\!\pm\mapsto\{2a,2a+N\}\subset(\Z/2N)/\!\pm$, with the even label carrying the linking
form's own value $a^2/N$. Corollary~\ref{cor:templatelabel} shows that the run index supplies it.

\begin{corollary}[The template label]
\label{cor:templatelabel}
Let $\varphi$ have odd trace, let $A_\mp$ be cyclic of order $N$, and let $\rho$ be a run with kept slope $p_\rho$. On the non-degenerate
classes of $A_\mp$, with $v=\operatorname{adj}(\varphi\mp1)u$ and index $j_\rho=\langle v,p_\rho\rangle$ \textup{(}Theorem~\ref{thm:allwords}\textup{)},
\[
-\Phi(u)\equiv c_\rho\,j_\rho^2\pmod N
\]
for a constant $c_\rho$ independent of the class, whenever the index of a generator is prime to $N$. Hence every vacuum of the class has spin
$\equiv\#R/24+c_\rho j_\rho^2/4N\pmod{\tfrac14}$. For $\varphi=LR^n$ and $\rho$ the $R$-run, $c_\rho=1$ in both sectors. On $A_+$ this gives
\[
\operatorname{spin}\equiv\frac{\#R}{24}+\frac{j^2}{4N}\pmod{\tfrac14},
\]
so the template label of Gang--Kim on the $(k+2)$-block is $B=j$. The $A_-$ sector of $LR^n$ is degenerate, and there the same spin formula is observed
numerically at the harvested classes.
\end{corollary}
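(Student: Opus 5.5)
The plan is to work on a cyclic sector $A_\mp\cong\Z/N$ and compare two quadratic functions on $\mathcal K_\mp$: $v\mapsto-\Phi(u)$ with $v=\operatorname{adj}(\varphi\mp1)u$, and $v\mapsto j_\rho(v)^2=\langle v,p_\rho\rangle^2$.

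\emph{Step 1: both sides are quadratic in a generator.} Since the sector is cyclic, $\mathcal K_\mp$ is generated by a single $v_0=\operatorname{adj}(\varphi\mp1)u_0$, and every kernel vector is $v=tv_0$ with $u\equiv tu_0$ modulo $(\varphi\mp1)\Z^2$. The right side of \eqref{eq:basevac} is well defined on $\mathcal K_\mp$, as checked after Observation~\ref{obs:basevac}. Hence $\Phi(u)/N$ modulo $1$, i.e.\ $\Phi(u)\bmod N$, depends only on the class, and $\Phi(tu_0)=t^2\Phi(u_0)$. On the other side, $j_\rho(tv_0)=t\,j_\rho(v_0)$ is linear in $t$.

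\emph{Step 2: define $c_\rho$.} By hypothesis $j_\rho(v_0)$ is a unit modulo $N$. Set
\[
c_\rho:=-\Phi(u_0)\,j_\rho(v_0)^{-2}\bmod N .
\]
Then $-\Phi(tu_0)=t^2\bigl(-\Phi(u_0)\bigr)=c_\rho\,j_\rho(tv_0)^2$ for every $t$. The constant does not depend on the choice of generator, since replacing $v_0$ by $sv_0$ multiplies both sides by $s^2$. This gives the first congruence. The spin statement is then Theorem~\ref{thm:rogers}: the spin is $\equiv\#R/24-\Phi(u)/4N\pmod{\tfrac14}$. Because $N$ divides $-\Phi(u)-c_\rho j_\rho^2$, the difference of $-\Phi(u)/4N$ and $c_\rho j_\rho^2/4N$ lies in $\tfrac14\Z$.

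\emph{Step 3: the case $LR^n$.} Here we compute $c_\rho$ explicitly. Take $\varphi=LR^n=\begin{psmallmatrix}1+n&1\\n&1\end{psmallmatrix}$, so $\Phi(u)=nu_1^2-nu_1u_2-u_2^2$. On $A_+$ we have $N=n+4$, and $\varphi+1=\begin{psmallmatrix}n+2&1\\n&2\end{psmallmatrix}$ has cyclic cokernel generated by $e_1$ (entries coprime). Hence $v_0=\operatorname{adj}(\varphi+1)e_1=(2,-n)$ and $\Phi(e_1)=n\equiv-4$, so $-\Phi(u_0)\equiv4$. The kept slope of the $R$-run is $p=M_1e_2=Le_2=(1,1)$, giving $j=\langle v_0,p\rangle=2+n\equiv-2$ and $j^2=4$. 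Therefore $c_\rho=1$. The $A_-$ sector ($N=n$, $\varphi-1=\begin{psmallmatrix}n&1\\n&0\end{psmallmatrix}$) is handled by the same computation: $v_0=(0,-n)$ fails, so we choose $u_0$ so that $v_0$ is a generator, and we expect $c_\rho=1$ again. Since that sector is degenerate, only the algebraic congruence matters there, and Theorem~\ref{thm:rogers} is not invoked.

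\emph{Step 4: the label $B=j$.} With $c_\rho=1$ on $A_+$, the spin is $\#R/24+j^2/4N$ modulo $\tfrac14$. By \eqref{eq:parsplit} and \eqref{eq:transN} this matches the template value $B^2/(k+2)$ up to the $\tfrac14$ ambiguity, under the label $B=j$.

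The main obstacle is fixing the sign and orientation conventions exactly: the sign of $\langle v,p\rangle$ against the index $\iota_\rho$, and which generator is used in the $A_-$ computation. I would verify these against Theorem~\ref{thm:tworun} at $L^aR^b$ before trusting $c_\rho=1$.
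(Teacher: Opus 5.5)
Steps 1--2 follow the paper's route: write $u=tu_0$ and $v=tv_0$ in a cyclic sector, set $c_\rho=-\Phi(u_0)\,j_\rho(v_0)^{-2}$, and read the spin off Theorem~\ref{thm:rogers}. Your $A_+$ computation for $LR^n$ at $u_0=e_1$ is correct. The gap is on $A_-$: the statement asserts $c_\rho=1$ in both sectors, and you only say you expect it there. The computation is one line. Since $(\varphi-1)e_2=e_1$, the vector $e_1$ is zero in $A_-$ and $e_2$ generates. Then $v_0=\operatorname{adj}(\varphi-1)e_2=(-1,n)\equiv(-1,0)$, so $j=\langle v_0,e_1+e_2\rangle=-1-n\equiv-1$, and $-\Phi(e_2)=1$, giving $c_\rho=1$.

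The paper avoids choosing a generator by computing for general $u$:
\begin{itemize}
\item on $A_-$, $v=(-u_2,\,n(u_2-u_1))$, $j\equiv-u_2$, and $j^2\equiv u_2^2\equiv-\Phi(u)\pmod n$;
\item on $A_+$, $v=(2u_1-u_2,\,-nu_1+(n+2)u_2)$, $j\equiv-2u_1+u_2$, and $j^2\equiv-\Phi(u)$ using $n\equiv-4$.
\end{itemize}
Your stated ``main obstacle'' is not one. Only $j_\rho^2$ enters, and you already showed $c_\rho$ is independent of the generator, so no check against Theorem~\ref{thm:tworun} is needed.

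Step 4 conflates two normalisations. The label $B=j$ refers to the template $e\bigl(B^2/4(k+2)\bigr)$ of $\mathrm{TFT}_-[\mathcal S_k]$ (Section~\ref{sub:template}), which the formula $\#R/24+j^2/4N$ matches directly. It does not match the value $B^2/(k+2)$ of \eqref{eq:GKT} at $B=j$. The difference is $3j^2/4N$, which is not in $\tfrac14\Z$; for example $N=7$, $j=1$ gives $3/28$.

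By \eqref{eq:parsplit}, the block of \eqref{eq:GKT} is the even half $A=2B$ of the $A^2/4N$ template. So in that normalisation the label is $B\equiv2^{-1}j\pmod N$. Take whichever of $j$ and $j+N$ is even; by \eqref{eq:transN} this changes $j^2/4N$ only by an element of $\tfrac14\Z$.
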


\begin{proof}
In a cyclic sector $u=tu_0$, so $v=tv_0$, $\Phi(u)=t^2\Phi(u_0)$ and $j_\rho=t\langle v_0,p_\rho\rangle$; put $c_\rho=-\Phi(u_0)\langle v_0,p_\rho\rangle^{-2}$.
The spin is Theorem~\ref{thm:rogers}. For $LR^n$, $\Phi(u)=nu_1^2-nu_1u_2-u_2^2$ and $p_\rho=e_1+e_2$. On $A_-$ ($N=n$),
$v=(-u_2,\,n(u_2-u_1))$ and $j\equiv-u_2$, so $j^2\equiv-\Phi(u)$. On $A_+$ ($N=n+4$), $v=(2u_1-u_2,\,-nu_1+(n+2)u_2)$ and
$j\equiv-2u_1+u_2$, so $j^2=4u_1^2-4u_1u_2+u_2^2\equiv-\Phi(u)$, since $n\equiv-4$.
\end{proof}

A label with a multiplier, $\operatorname{spin}\equiv(\varpi j_\rho)^2/4N+\delta$, therefore exists exactly when $c_\rho$ is a square modulo $N$:
in the $A_+$ sector, $c_\rho=4$ on the $L$-run of $L^3R^3$ ($\varpi=\pm2$), while $c_\rho\equiv-1$ on the $L$-run of $L^3R$ ($N=7$) and of $L^5R$ ($N=9$) is not a square. For $LR^n$
the whole $A_-$ sector is degenerate, as is one $A_+$ class of $LR^5$, and the corollary does not reach them. The same formula holds numerically
at the harvested degenerate classes of $LR^5$, $LR^7$ and $LR^9$.

\part{The \texorpdfstring{$A_-$}{A-} sector at prime \texorpdfstring{$m$}{m}}
\label{part:aminus}

Part~\ref{part:results} gives every non-degenerate Bethe class in closed form and labels it by its kernel vector $\pm v\in\mathcal K_\mp$
(Theorems~\ref{thm:tworun} and~\ref{thm:allwords}). It does not reach the degenerate classes, at which $\sin(\pi\langle v,\mathsf m_t\rangle/N)$
vanishes on some middle slope $\mathsf m_t$ (the middle slope of the $t$-th tetrahedron, Section~\ref{sub:runtraces}); at $L^mR^m$ with $m$ prime these are the whole $A_-$ sector (Proposition~\ref{prop:Hspectrum}). This part treats that sector at
prime $m\ge5$, which the labelling (Theorem~\ref{thm:lines}) and the corank (Theorem~\ref{thm:corank}) require; the existence theorem
(Theorem~\ref{thm:exist}) holds for every odd $m\ge3$, and Remark~\ref{rem:aminusgaps}(iv) uses it at $m=3$. Each of its $\tfrac12(m^2-1)$ classes is described by trigonometric chains (two, or one for the types $0$ and $\infty$), which are the closed form of
Theorem~\ref{thm:allwords} at the kernel vector of the class, on the sites where that formula is defined; the class corresponds to a point
$\pm u$ of $(A_-\smallsetminus0)/\!\pm$, and the $\tfrac1m\Z/\Z$-component of its spin is $q_-(u)$, the form that Theorem~\ref{thm:rogers} gives at the non-degenerate classes.

The classes are degenerate: their coordinates blow up or collapse as $\epsilon\to0$, and they exist only in the deformed system of
Section~\ref{sec:bethe}. The analysis rests on three facts. On the two-parameter family of chains with shifted phases the $2m$ Bethe equations
reduce \emph{exactly} to two scalar equations; these are solved by the implicit function theorem, with Jacobian $-4$, at every placement $j\in\{1,\dots,m-1\}$, that is, for every type except $0$ and $\infty$; and the leading-order Jacobian of the full system has
corank exactly two, the minimum compatible with the two shifts (proved at $j=1$; at $j\ne1$ its finite block is proved nonsingular outside the
roots of a sextic, and the corank verified for $m\le43$). Within the shifted-chain family the solution with given leading data is unique;
whether it is isolated in the full system is not settled, since the reduced bifurcation equations begin at second order in $\epsilon$
(Section~\ref{sub:uniqueness}).

The argument runs in four steps. Section~\ref{sec:aminussummary} collects the conclusions in Theorem~\ref{thm:aminus}.
\emph{Labelling} (Section~\ref{sec:lines}). The placement types are the points of $\mathbb P^1(\F_m)$, that is, the lines through $0$ in $A_-$. Each class corresponds to a point of
$(A_-\smallsetminus0)/\!\pm$ through its kernel vector, and the $\tfrac1m\Z/\Z$-component of its spin is $q_-$ of that point (Theorem~\ref{thm:lines}).
\emph{Shape} (Section~\ref{sec:leading}). The leading data are two chains with two free phases (Proposition~\ref{prop:shifts}).
\emph{Existence} (Section~\ref{sec:shifts}). On the shifted chains (the chains with these two phases) the Bethe system reduces exactly to two equations
(Proposition~\ref{prop:reduction}, Appendix~\ref{app:existence}), solved at every placement $j\in\{1,\dots,m-1\}$ (Theorem~\ref{thm:exist}).
\emph{Rank of the linearisation} (Section~\ref{sec:lemma1}). The corank of the leading-order Jacobian (Theorem~\ref{thm:corank}).

\section{Placement types as lines}
\label{sec:lines}

In the limit $\epsilon\to0$ some coordinates of each $A_-$ class tend to $0$, $1$ or $\infty$. These are its \emph{degenerate
coordinates}. The \emph{leading data} of a class are the leading powers of $\epsilon$ and the leading coefficients of its coordinates. Here and in Section~\ref{sec:leading} the coordinates are $C=z_L^2=x_L^{-1}$ at the $L$-sites and $w=z_R=x_R^{-1}$ at the $R$-sites. The
\emph{placement type} of the class records where the degenerate ones sit: at which $L$-site $C$ tends to $0$, and at which $R$-site the
degenerate $R$-coordinates are centred. The $L$-site at which $C\to0$ is called the \emph{small $L$}. For $j\in\{1,\dots,m-1\}$ the type $j$ has one small $L$, at $L_{j}$, with its two neighbours at $1+c_\pm\epsilon$, and a
triple of $R$-coordinates $(A_{\mathrm l}/\epsilon,\ s\epsilon^2,\ A_{\mathrm r}/\epsilon)$ centred at $R_{j^{-1}}$. In the type $0$ the small $L$ is $L_m$ and every
$R$-coordinate tends to $1$. In the \emph{fixed type} $\infty$ every $L$-coordinate degenerates ($x_L\to0$) and the $R$-coordinates form a single chain.

The labels are the points of the projective line $\mathbb P^1(\F_m)=\F_m\cup\{\infty\}$. By Theorem~\ref{thm:lines}(ii), the type $j\in\F_m$ is the line of $A_-\cong\F_m^2$ spanned by $(1,j)$, and the type $\infty$ is the line spanned by $(0,1)$.
So the type of a class is the line through $0$ in $A_-$ on which its point lies. The $L$-sites are indexed cyclically, $L_0=L_m$, so in the type $j$ the small $L$ is $L_j$ for every $j\in\F_m$. The
placement is not a choice: by Theorem~\ref{thm:lines}(i) it is read off the closed form of
Theorem~\ref{thm:allwords} at the kernel vector $v=m(v_1,v_2)$ of the class, with $j\equiv v_1/v_2\pmod m$,
and $j=\infty$ when $v_2\equiv0$.

The count closes. For prime $m$ the projective line $\mathbb P^1(\F_m)$ has $m+1$ points and each carries $\tfrac12(m-1)$ classes, so a labelling by
lines would account for $(m+1)\tfrac{m-1}2=\tfrac12(m^2-1)=|A_-\smallsetminus0|/2$ classes --- the whole sector, with nothing left over. The enumeration
realises this: the $m$ types $0,1,\dots,m-1$ give $\tfrac12m(m-1)$ classes and the fixed type is the remaining line, contributing $\tfrac12(m-1)$.
At $m=5,7,11$ the harvested $A_-$ sectors, with $\mathcal H=2^mm^2$, are complete on exactly this count: $10+2=12$, $21+3=24$, $55+5=60$.

Let $m\ge5$ be prime and $\varphi=L^mR^m$, so that $N=|A_-|=m^2$, $\mathcal K_-=m(\Z/m^2)^2\cong(\Z/m)^2$ and $A_-=\Z^2/m\Z^2$. Let the leading data of
Section~\ref{sec:leading} run over the pairs $(j,k)$, with $j\in\F_m$ the placement type or $j=\infty$ the fixed type, and
$k\in(\Z/m)^\times\!/\!\pm$ the Galois index.

\begin{theorem}[The kernel vector and spin of an $A_-$ class]
\label{thm:lines}
With $m$, $\varphi$ and the pairs $(j,k)$ as just described:
\begin{enumerate}
\item[\textup{(i)}] the leading data $(j,k)$ are the closed form of Theorem~\ref{thm:allwords} at the kernel vector $v=m(jk,k)$ (at $v=m(k,0)$ for
the fixed type) on every site where its denominator is nonzero. For $j\in\{1,\dots,m-1\}$ the degenerate sites are exactly those where the denominator vanishes, together with their
neighbours. For the type $0$ the denominator vanishes only at $L_m$, and every $R$-coordinate tends to $1$. For the fixed type the numerator vanishes on the $L$-run, and every $L$-coordinate tends to $0$; the corresponding point of $A_-$ is $u=k(1,j)$, respectively $u=(0,k)$, and $v\equiv-\operatorname{adj}(\varphi-1)u$, which is $\operatorname{adj}(\varphi-1)u$ of Theorem~\ref{thm:rogers} up to sign, $v$ mattering only up to sign;
\item[\textup{(ii)}] the placement types are the points of $\mathbb P^1(\F_m)$: the type $j\in\F_m$ is the line of $A_-$ spanned by $(1,j)$, and the fixed type is the point $\infty$, the line spanned by $(0,1)$; and $(j,k)\mapsto\pm u$ is a bijection onto $(A_-\smallsetminus0)/\!\pm$, the Galois index $k$ running over the $\tfrac12(m-1)$ antipodal pairs on each line;
\item[\textup{(iii)}] for every vacuum of a solution with these leading data that is real-analytic in $\epsilon$ for $0<\epsilon<\epsilon_0$ with real coordinates (the coordinates are then meromorphic at $\epsilon=0$), such as the solutions of Theorem~\ref{thm:exist}, the component in $\tfrac1m\Z/\Z$ of its spin is $q_-(u)$, with $q_-$ the form
\eqref{eq:qform}. Equivalently, the $\theta$-exponent of the class, the residue $e\in\Z/m$ with $e/m\equiv$ that component, is $(j^2-1)(k/2)^2$, and $(k/2)^2$ for the fixed type.
\end{enumerate}
Here a class means its leading data; which of them are limits of exact solutions is Theorem~\ref{thm:aminus}.
\end{theorem}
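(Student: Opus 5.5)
I would prove the three parts in order, working throughout with the explicit slopes of $L^mR^m$ from the basis $(e_1,e_2)$.

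\emph{Part (i).} The $L$-run has kept slope $e_1$ and middle slopes $(i,1)$. The $R$-run has kept slope $(m,1)$ and middle slopes $(1+im,i)$, for $i=1,\dots,m$. For $v=m(jk,k)$ one has
\[
\langle v,e\rangle/m^2=\langle (jk,k),e\rangle/m .
\]
On the $L$-run the middle values are therefore $k(j-i)/m$ and the kept value is $-k/m$. So $f$ vanishes exactly at $i\equiv j$, which is the small $L$ at $L_j$; when $j=0$ it vanishes only at $L_m$. On the $R$-run the kept value is $k(j-m)/m\equiv kj/m$ and the middle value is $k(j\,i-(1+im))/m\equiv k(ji-1)/m$. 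This vanishes exactly at $i\equiv j^{-1}$, giving the centre $R_{j^{-1}}$.

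For $j=0$ the kept $R$-value is $0$. Hence every $R$-shape $1-f(\mathsf k)^2/f(\mathsf m)^2$ tends to $1$. For the fixed type $v=m(k,0)$ the kept $L$-value is $0$, so every $x_L\to0$.

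I would then compare these values site by site with the leading data of Section~\ref{sec:leading}. Those leading data are the chains $\sin^2(k\pi/m)/\sin^2(b k\pi/m)$ with shifted phase. The comparison amounts to matching the phase $\phi$ of Proposition~\ref{prop:shifts} to $-jk\pi/m$ on the $L$-run and to $-k\pi/m$ with step $jk$ on the $R$-run. Finally, $u=k(1,j)$ is checked against $v\equiv-\operatorname{adj}(\varphi-1)u$ by using
\[
\operatorname{adj}(\varphi-1)=\begin{psmallmatrix}0&-m\\-m&m^2\end{psmallmatrix}\equiv -m\begin{psmallmatrix}0&1\\1&0\end{psmallmatrix}\pmod{m^2}.
\]
This gives $v\equiv m(u_2,u_1)$, which is consistent with $v=m(jk,k)$ and $u=k(1,j)$.

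\emph{Part (ii).} This step is counting. The kernel group $\mathcal K_-=m(\Z/m^2)^2$ is identified with $\F_m^2$. The lines through $0$ form $\mathbb P^1(\F_m)$. Each line has $(m-1)/2$ antipodal pairs of nonzero points, and $k\in(\Z/m)^\times/\!\pm$ parametrises exactly these pairs. Injectivity of $(j,k)\mapsto\pm u$ is immediate from the formula for $u$.

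\emph{Part (iii).} This is the main obstacle, because Theorem~\ref{thm:rogers} does not apply: the class is degenerate. My plan has three steps.

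First, I would use \eqref{eq:theta24} with real coordinates. Real-analyticity and meromorphy at $\epsilon=0$ let me pass to the limit of the Rogers sum. The degenerate coordinates must contribute total Rogers value in $\Z\cdot L(1)$ or $\tfrac12$-multiples in the limit. This uses $L(0)=0$, $L(1)$, and the $\pm\pi^2/6$ limits of $L$ at $\pm\infty$. The terms of order $\epsilon$ pairs $1+c_\pm\epsilon$ tend to $L(1)$.

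Second, I would evaluate the finite coordinates. They form full or partial chains at $\vartheta=k\pi/m$ with shifts that are multiples of $\vartheta$. Their sum reduces, by the reflection $L(1-s)=L(1)-L(s)$ and the symmetry $b\leftrightarrow m-b$, to $S(m,k)$ and $S(m,jk)$-type sums. These sums are given by \eqref{eq:Smk} and Proposition~\ref{obs:kirillov}: $mS(m,k)\equiv-6k^2\pmod m$.

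Third, I would combine the two chain contributions. This should give $24\theta\equiv-6(j^2-1)k^2\cdot(\text{const})/m$ modulo integers, plus terms with denominators $1,2,3,8$. After removing the $\#R/24$ piece, the $\tfrac1m\Z$-component is $(j^2-1)(k/2)^2/m$. The constant here is forced by $4^{-1}$, as in \eqref{eq:qform}.

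I would then verify directly that
\[
q_-(u)=-[4^{-1}]\Phi(u)/m^2
\]
with $\Phi(u)=m u_1^2-m^2u_1u_2$ (the entries $c=m$, $d-a=-m^2$, $b=m$ of $L^mR^m$). This gives $\Phi(k(1,j))/m^2\equiv (k^2-\ldots)$, which needs care with the $m$ versus $m^2$ denominators. I expect this bookkeeping, namely which chain carries $j^2$ and the $-1$, to be the delicate point.

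The fallback is the Galois rule, Proposition~\ref{prop:suslin}(iii). It fixes the $k^2$ dependence. Two placements, $j=0$ and $j=1$, then determine the quadratic in $j$. I would add a check at $m=5$ against the numerically harvested spins.
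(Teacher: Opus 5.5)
Your parts (i) and (ii) follow the paper: the slopes $(i,1)$ and $(1+im,i)$, the adjugate modulo $m^2$, and the count of lines. To close (i) you should also record two things. The formula gives $x=1$ at $L_{j\pm1}$ and $x=0$ at $R_{p\pm1}$, $p=j^{-1}$; this is the clause ``together with their neighbours''. For the fixed type the denominator vanishes at $R_m$.

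Part (iii) has the paper's outline, but the step that actually produces $q_-(u)$ is not carried out, and as written it fails.

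\emph{The form $\Phi$ is wrong.} With $b=m$ the term $-bu_2^2$ is $-mu_2^2$, so $\Phi(u)=m(u_1^2-u_2^2)-m^2u_1u_2$. Your $\Phi(u)=mu_1^2-m^2u_1u_2$ gives $q_-(k(1,j))\equiv-[4^{-1}]k^2/m$, independent of $j$, so it matches nothing. The correct form gives $q_-(u)\equiv[4^{-1}](u_2^2-u_1^2)/m$. That is $(j^2-1)(k/2)^2/m$ at $u=k(1,j)$ and $(k/2)^2/m$ at $u=(0,k)$; the term you dropped is exactly the source of $j^2$.

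\emph{The sign and constant in your Step~3 are left open.} Saying the constant is ``forced by $4^{-1}$'' is not a derivation. Here is what you need:
\begin{itemize}
\item The finite $L$-shapes are the chain at index $k$, so they contribute $+S(m,k)$.
\item By (T5) of Lemma~\ref{lem:telescope}, the finite $R$-shapes are $x_{R_i}=1-\sin^2(j\vartheta)/\varsigma_i^2$ with $\varsigma_i=\pm\sin((i-p)j\vartheta)$. Over $i-p=2,\dots,m-2$ this is one minus the chain at index $jk$, so by reflection they contribute $-S(m,jk)$ modulo integers.
\item Proposition~\ref{obs:kirillov} at $k$ and at $jk$ gives $mc_\alpha\equiv-6k^2+6(jk)^2=6k^2(j^2-1)\pmod m$.
\item Then \eqref{eq:theta24} gives $24e\equiv6k^2(j^2-1)$, hence $4e\equiv k^2(j^2-1)$ after dividing by $6$.
\end{itemize}
That division, and extracting the $\tfrac1m\Z/\Z$-component from $24\theta$, are where $m\ge5$ enters.

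\emph{Types $0$ and $\infty$ are not covered.} Your Step~3 does not reach them, and the statement includes them; each needs its own computation.
\begin{itemize}
\item At $j=0$ there is no $R$-chain: $jk\equiv0$ and every $R$-shape tends to $1$. So $mc_\alpha\equiv mS(m,k)\equiv-6k^2$ and $e\equiv-(k/2)^2$.
\item At the fixed type every $L$-shape tends to $0$, and the finite $R$-shapes are one minus the chain at index $k$. So $mc_\alpha\equiv-mS(m,k)\equiv6k^2$ and $e\equiv(k/2)^2$.
\end{itemize}

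\emph{The fallback does not close the gap.} The Galois rule gives $e(j,k)\equiv k^2e(j,1)$, but leaves $e(j,1)$ an arbitrary function of $j$. Values at $j=0,1$ determine it only if you already know it has the form $Aj^2+B$. A numerical check at $m=5$ is not a proof.

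\emph{Applying \eqref{eq:theta24} along the continuation needs one more step.} That relation was derived for the undeformed equations. The deformation adds to $\log\mathcal F$ a term proportional to $\log(1+\epsilon)\,c\cdot Z$. You must note that this tends to $0$, because meromorphy gives $Z_a=O(\log\epsilon)$. Continuity of the Rogers sum alone does not cover this.
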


\begin{proof}
\textup{(i)} For $v=m(v_1,v_2)$, $\langle v,e\rangle/m^2=(v_1e_2-v_2e_1)/m$. In the layered triangulation of $L^mR^m$, as in the proof of
Theorem~\ref{thm:tworun}, the $L$-run has kept slope $e_1$ and middle slopes $(i,1)$, and the $R$-run has kept slope $(m,1)$ and middle slopes
$(1+im,i)$, $i=1,\dots,m$. At $v=m(jk,k)$, with $\vartheta=k\pi/m$, the formula of Theorem~\ref{thm:allwords} gives
$x_{L_i}=\sin^2\vartheta/\sin^2((i-j)\vartheta)$ and $1-x_{R_i}=\sin^2(j\vartheta)/\sin^2((ij-1)\vartheta)$: the chains of
Section~\ref{sec:leading} at zero phase. The denominator vanishes exactly at $L_{j}$ and at $R_p$, $p=j^{-1}$; at $L_{j\pm1}$ the formula
gives $x=1$ and at $R_{p\pm1}$ it gives $x=0$, which is the pattern of the type $j$. At $v=m(0,k)$ the only vanishing denominator is at $L_m$ and every
$x_R=1$: the type $0$. At $v=m(k,0)$, $f$ vanishes on the kept slope of the $L$-run, so every $x_L=0$, and on the middle slope of $R_m$; the
$R$-coordinates are $1-\sin^2\vartheta/\sin^2(i\vartheta)$: the fixed type. Finally $(\varphi-1)\Z^2=m\Z^2$, and
$\operatorname{adj}(\varphi-1)=\begin{psmallmatrix}0&-m\\-m&m^2\end{psmallmatrix}$ sends $k(1,j)$ to $-m(jk,k)$ and $(0,k)$ to $-m(k,0)$ modulo $m^2$.
\textup{(ii)} Every nonzero $u\in(\Z/m)^2$ lies on exactly one line through the origin, spanned by $(1,j)$ or by $(0,1)$, and the
$\tfrac12(m-1)$ antipodal pairs on a line are the multiples $\pm k$, $k\in(\Z/m)^\times\!/\!\pm$.
\textup{(iii)} With $\varphi=\begin{psmallmatrix}1+m^2&m\\m&1\end{psmallmatrix}$, $\Phi(u)=m(u_1^2-u_2^2)-m^2u_1u_2$, so
$q_-(u)=-[4^{-1}]\Phi(u)/m^2\equiv[4^{-1}](u_2^2-u_1^2)/m$, which is $(j^2-1)(k/2)^2/m$ at $u=k(1,j)$ and $(k/2)^2/m$ at $(0,k)$. That this is
the spin is the following computation.

Put $\vartheta=k\pi/m$. First let $j\in\{1,\dots,m-1\}$ and $p=j^{-1}\bmod m$. At the class
$(j,k)$ the finite $L$-coordinates are $\sin^2\vartheta/\sin^2(n\vartheta)$, $n=2,\dots,m-2$, the values of \eqref{eq:closedform}. By (T5) of
Lemma~\ref{lem:telescope} the finite $R$-coordinates satisfy $1-x_{R_i}=\sin^2(j\vartheta)/\varsigma_i^2$ with $\varsigma_i=\sin((ij-1)\vartheta)
=\pm\sin((i-p)j\vartheta)$, since $(pj-1)\vartheta\in\pi\Z$; over the arc $i-p=2,\dots,m-2$ these are the values of \eqref{eq:closedform} at
the Galois index $jk$. The remaining coordinates tend to $0$, $1$ or $\infty$ and contribute integers to $c_\alpha$, which vanish after
multiplication by $m$ modulo $m$. With the reflection identity $L(x)+L(1-x)=L(1)$ (modulo $\pi^2$ for the extended $L$ of
Proposition~\ref{prop:suslin}(ii)) and Proposition~\ref{obs:kirillov} at $k$ and at $jk$,
\[
mc_\alpha\equiv mS(m,k)-mS(m,jk)\equiv-6k^2+6(jk)^2=6k^2(j^2-1)\pmod m ,
\]
and $24\theta\equiv c_\alpha$, which holds at every vacuum of the class by \eqref{eq:theta24}, turns this into $4e\equiv k^2(j^2-1)$, dividing by $6$, which is invertible modulo $m$ since $m\ge5$.

At the type $0$ every $R$-coordinate tends to $1$ and contributes an integer, and the finite $L$-coordinates are the chain at $k$. So $mc_\alpha\equiv mS(m,k)\equiv-6k^2$, and $e\equiv-(k/2)^2$, which is $(j^2-1)(k/2)^2$ at $j=0$. At the fixed type every $L$-coordinate tends to $0$ and contributes an integer. The finite $R$-coordinates satisfy $1-x_{R_i}=\sin^2\vartheta/\sin^2(i\vartheta)$, $i=2,\dots,m-2$, so by the reflection identity $mc_\alpha\equiv-mS(m,k)\equiv6k^2$, and $e\equiv(k/2)^2$. For these two types no solution is constructed, so there (iii) describes the spin of a solution if one exists.

The relation \eqref{eq:theta24} was derived for solutions with finite coordinates. At a degenerate class it is applied along the
continuation in $\epsilon$, which is how the fibering operator of such a class is defined (Section~\ref{sec:bethe}). The existence of the continuation is Theorem~\ref{thm:exist}, proved in Appendix~\ref{app:existence}, which does not use the present theorem. For $0<\epsilon<\epsilon_0$ the solution
of that theorem is real-analytic in $\epsilon$ with real data. Hence all its coordinates are finite, nonzero and real, and
$\operatorname{Im}Z_a\in\{0,\pi\}$. The deformed log-form Bethe equations differ from the undeformed ones by the term $c_b\log(1+\epsilon)$.
This adds to $\log\mathcal F$ a term proportional to $\log(1+\epsilon)\,c\cdot Z$, which tends to $0$ with $\epsilon$ since
$Z_a=O(\log\epsilon)$. Hence \eqref{eq:theta24} holds at every $\epsilon$ up to a term that vanishes in the limit, and both sides are continuous
along the continuation, $c_\alpha$ with $L=\operatorname{Re}R$, so that $L-L(1)$ is the function $\widetilde L$ of Proposition~\ref{prop:suslin}(ii).

Distinct pairs $(j,k)$ give distinct classes, because the leading data differ. Two classes with different $j$ have different sets of degenerate coordinates, hence different placement types. Two classes with the same $j$ and
$k\ne\pm k'$ have different $x_b$.
\end{proof}

So the degenerate $A_-$ classes of $L^mR^m$ are labelled exactly as the non-degenerate classes of every word are. They are labelled by the kernel vector, read off
the class's own coordinates, with the $\tfrac1m\Z/\Z$-component of the spin given by $q_-$ of the corresponding point (compare Theorem~\ref{thm:rogers}). Summed over the $\tfrac12(m^2-1)$
classes, the Gauss sum of the sector is $\sum_{x\ne0}e(q_-(x))$ halved, which is $\tfrac12(m-1)$. The two isotropic lines are the types $1$ and $m-1$, where $e=0$. On every other line the exponents run once over a coset of the squares
in $\F_m^\times$, as $k$ runs over $(\Z/m)^\times\!/\!\pm$: over $(j^2-1)(\F_m^\times)^2$ on the type $j$, and over $(\F_m^\times)^2$ itself on the
fixed type.

\begin{remark}[Exhaustiveness]
\label{rem:exhaustive}
Theorem~\ref{thm:lines} says the closed-form family has $|A_-\smallsetminus0|/2$ distinct members. That the family is \emph{all} of the $A_-$ sector then
follows if the sector has that many classes, which is Theorem~\ref{thm:count} read through localisation, and if every class has leading data of this list. The second condition is a classification of the possible degenerate classes, and it is not proved. So the labelling is proved and the count is arithmetic. Exhaustiveness needs F1 (Section~\ref{sub:F1}) and the classification. Nothing else in this part depends on them.
\end{remark}

\section{The closed form for the leading data}
\label{sec:leading}

\subsection{The chains}

Throughout this part a \emph{chain} is the frieze row (Remark~\ref{rem:friezedef}) carried by a run in the limit $\epsilon\to0$: the
\emph{$L$-chain} is the sequence of leading-order $L$-coordinates along the $L$-run, the \emph{$R$-chain} that along the $R$-run. The word
``run'' refers to the letters, ``chain'' to the sequence of values they carry.

In the placement type $0$, where all $R\to1$, the finite $L$-coordinates couple only to their $L$-neighbours ($E_{LL}=(2,-1)$). The leading
equations read $(C_b-1)^2=C_{b-1}C_{b+1}$ with $C_1=C_{m-1}=1$: the square of the $A$-type $Q$-recurrence. Writing $C_b=\mathcal Q_b^2$ with
$\mathcal Q_b^2=1+\mathcal Q_{b-1}\mathcal Q_{b+1}$, the solution is $\mathcal Q_b=\sin(b\vartheta)/\sin\vartheta$. The symbol is chosen to match the $\mathcal Q_j=1+jd$ of
Theorem~\ref{thm:nahmclosed}: $\mathcal Q_j/d$ there and $\mathcal Q_b$ here are both frieze rows \eqref{eq:frieze}, of trace $2$
and $2\cos\vartheta$ (Section~\ref{sub:runchain}). Here the row vanishes at both ends, $\mathcal Q_0=\mathcal Q_m=0$, and this forces $\vartheta=k\pi/m$; the condition $\mathcal Q_{m-1}=\pm1$ alone would not. Hence, in closed form,
\begin{equation}
x_b=\frac{\sin^2(k\pi/m)}{\sin^2(bk\pi/m)},\qquad b=1,\dots,m-1,
\label{eq:closedform}
\end{equation}
the endpoints $b=1,m-1$ being the coordinates $1+c\epsilon$, and $k$ labelling the Galois conjugates. For a general placement there are two chains: an
$L$-chain $\rho_i=\sin((i-j)\vartheta+\alpha)$ with $C_i=\rho_i^2/\sin^2\vartheta$, and an $R$-chain $\varsigma_i=\sin((ij-1)\vartheta+\beta)$ with
$w_i=\varsigma_i^2/(\varsigma_{i-1}\varsigma_{i+1})$, where $j$ is the placement and $\alpha,\beta$ are phases left free at leading order. At $\alpha=\beta=0$ these are the coordinates that the closed form of Theorem~\ref{thm:allwords} gives at the kernel vector of the class
(Theorem~\ref{thm:lines}(i)).

\subsection{The leading-order system}

Substitute into the deformed Bethe equations \eqref{eq:bethe} (with right side $(1+\epsilon)^{c_b}$) the chains, with the degenerate coordinates written with their leading power of $\epsilon$ and an
unknown coefficient --- the small $L$ (Section~\ref{sec:lines}) as $C_s\epsilon^2$, its neighbours as $1+c_\pm\epsilon$, the triple as $A_{\mathrm l}/\epsilon$, $s\epsilon^2$, $A_{\mathrm r}/\epsilon$ ---
and keep the leading term of each equation. Each chain carries a phase, $\alpha$ for the $L$-chain and $\beta$ for the $R$-chain. We call $\alpha$ and $\beta$ the \emph{chain shifts}, $\alpha$ the $L$-shift and $\beta$ the $R$-shift, and the chains with these phases the \emph{shifted chains}. The interior equations are the recurrences of the two chains, $C_i-1=\rho_{i-1}\rho_{i+1}/\sin^2\vartheta$
and the telescoping of the $w_i$; the degenerate coefficients are fixed by the phases of the chains.

\begin{proposition}[The flat directions are chain shifts]
\label{prop:shifts}
Both chains satisfy their recurrences for \emph{every} phase: $\mathcal Q_n=\sin(n\vartheta+\alpha)/\sin\vartheta$ satisfies $\mathcal Q_n^2-\mathcal Q_{n-1}\mathcal Q_{n+1}=1$ identically in
$\alpha$, and with $\varsigma_i=\sin((ij-1)\vartheta+\beta)$ the products of $w_i=\varsigma_i^2/(\varsigma_{i-1}\varsigma_{i+1})$ telescope identically in $\beta$. The
degenerate coordinates are exactly the zeros of the unshifted chains --- $\mathcal Q_0=0$ at the small $L$, $\varsigma_p=0$ at the middle of the triple, $p=j^{-1}$
--- and taking the phases of the two chains of order $\epsilon$, $\alpha=a\epsilon$ and $\beta=b\epsilon$ with $a,b$ the unknown \emph{shift
coefficients}, resolves them, giving the degenerate coefficients explicitly in terms of $a$ and $b$:
\[
C_s=\frac{a^2}{\sin^2\vartheta},\qquad c_\pm=\pm2a\cot\vartheta,\qquad s=-\frac{b^2}{\sin^2(j\vartheta)},\qquad
A_{\mathrm l,\mathrm r}=\mp\frac{\sin^2(j\vartheta)}{\sin(2j\vartheta)\,b}.
\]
Hence, for $j\ne1$, the leading data carry two free parameters $(a,b)$: by Proposition~\ref{prop:reduction} (Appendix~\ref{app:existence}), for $\alpha,\beta=O(\epsilon)$ one has $\Lambda,\Xi=1+O(\epsilon)$, so the leading-order equations hold for all $(a,b)$.
\end{proposition}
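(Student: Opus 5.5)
The plan is to verify Proposition~\ref{prop:shifts} in three parts: the two recurrences (identities in the phases), the location of the zeros, and the expansion of the degenerate coefficients to first order in $\epsilon$.

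\emph{The recurrences.} For the $L$-chain I would use the product-to-sum identity already used in the proof of Proposition~\ref{prop:runreduce}: $\sin(X-\vartheta)\sin(X+\vartheta)=\sin^2X-\sin^2\vartheta$ with $X=n\vartheta+\alpha$. Dividing by $\sin^2\vartheta$ gives $\mathcal Q_n^2-\mathcal Q_{n-1}\mathcal Q_{n+1}=1$ for every $\alpha$, so $C_i-1=\rho_{i-1}\rho_{i+1}/\sin^2\vartheta$ holds. For the $R$-chain I would form the partial products of $w_i=\varsigma_i^2/(\varsigma_{i-1}\varsigma_{i+1})$ and of $w_i^{\min(i,k)}$ by summation by parts, exactly as in the proofs of Theorems~\ref{prop:propagated} and~\ref{thm:allwords}. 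They collapse to ratios of boundary values $\varsigma_0,\varsigma_1,\varsigma_b,\varsigma_{b+1}$ whatever $\beta$ is, since the telescoping uses only the form of $w_i$. The interior $R$-equations then reduce, as in Theorem~\ref{thm:allwords}, to a frieze relation for $\varsigma$ of trace $2\cos(j\vartheta)$, which holds identically in $\beta$ by the same sine identity with step $j\vartheta$.

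\emph{The zeros and the coefficients.} At zero phase, Theorem~\ref{thm:lines}(i) places the zeros at $\rho_j=\sin 0$ and at $\varsigma_p=\sin((pj-1)\vartheta)=0$, because $pj\equiv1\pmod m$ and $\vartheta=k\pi/m$. With $\alpha=a\epsilon$ I would expand:
\[
C_s=\frac{\sin^2(a\epsilon)}{\sin^2\vartheta}\sim\frac{a^2\epsilon^2}{\sin^2\vartheta},\qquad
C_{j\pm1}=\frac{\sin^2(\pm\vartheta+a\epsilon)}{\sin^2\vartheta}=1\pm2a\epsilon\cot\vartheta+O(\epsilon^2).
\]
With $\beta=b\epsilon$ we have $\varsigma_p=\pm\sin(b\epsilon)\sim\pm b\epsilon$. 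Since $\varsigma_p=0$ at zero phase, $(p\pm1)j-1\equiv\pm j$, so the neighbours are $\varsigma_{p\pm1}=\pm\sigma\sin(j\vartheta)+O(\epsilon)$ with a common sign $\sigma$. This gives
\[
w_p=\frac{\varsigma_p^2}{\varsigma_{p-1}\varsigma_{p+1}}\sim-\frac{b^2\epsilon^2}{\sin^2(j\vartheta)},\qquad
w_{p\pm1}=\frac{\varsigma_{p\pm1}^2}{\varsigma_p\,\varsigma_{p\pm2}}.
\]
Here $\varsigma_{p\pm2}=\sigma\sin(\pm2j\vartheta)+O(\epsilon)$. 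Substituting yields $A_{\mathrm l,\mathrm r}=\mp\sin^2(j\vartheta)/(\sin(2j\vartheta)\,b)$. Fixing the signs needs care with the global factor $\pm$ from $(pj-1)\vartheta\in\pi\Z$. I would absorb that factor into the sign of $b$, since the coordinates are even in the overall sign of $\varsigma$ apart from this.

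\emph{The two free parameters.} The last sentence is a direct appeal to Proposition~\ref{prop:reduction}. On the shifted-chain family the $2m$ equations reduce exactly to two scalar equations, $\Lambda=1$ and $\Xi=1$. With $\alpha,\beta=O(\epsilon)$ both quantities are $1+O(\epsilon)$, so every leading-order equation is satisfied for arbitrary $(a,b)$, and $a,b$ are genuinely free at leading order.

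\emph{Main obstacle.} I expect the hardest step to be the junction equations at the small $L$ and at the triple. There the leading powers of $\epsilon$ must balance between factors $\epsilon^{\pm1}$ and $\epsilon^{2}$ coming from both chains and from $(1+\epsilon)^{c_b}$. I would rely on Proposition~\ref{prop:reduction} for that bookkeeping rather than redo it. The case $j=1$ is excluded because there the triple meets the run end, so $p=1$ and the neighbour $\varsigma_{p-1}$ falls outside the run.
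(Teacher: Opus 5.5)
Your proposal is correct and is essentially the paper's proof: the same product-to-sum identity for the $L$-chain, and the telescoping of Lemma~\ref{lem:telescope} for the $R$-chain (your summation by parts reproduces it, with the quasi-periodicity (T1) collapsing the boundary terms). It also uses the same first-order expansions at the small $L$ and at the triple, and the same appeal to Proposition~\ref{prop:reduction} for the two free parameters.

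One point to tidy is the sign. Nothing needs absorbing into $b$, which is fixed by $\beta=b\epsilon$. Write $(pj-1)\vartheta=\nu_0\pi$; then $\varsigma_{p+l}=(-1)^{\nu_0}\sin(lj\vartheta+\beta)$ with one global sign for every $l$, including $\varsigma_p=(-1)^{\nu_0}\sin\beta$. Each $w_i$ is homogeneous of degree zero in $\varsigma$, so this sign cancels, and the $\mp$ in $A_{\mathrm l,\mathrm r}$ comes out directly from $\sin(\pm2j\vartheta)$. Also, (III) reads $\Lambda=1+\epsilon$, not $\Lambda=1$, which is harmless at leading order.
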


\begin{proof}
For the $L$-chain, with $\psi=n\vartheta+\alpha$, $\sin^2\psi-\sin(\psi-\vartheta)\sin(\psi+\vartheta)=\sin^2\vartheta$ gives $\mathcal Q_n^2-\mathcal Q_{n-1}\mathcal Q_{n+1}=1$ for every $\alpha$. For the $R$-chain, the telescoping of the products of the $w_i$ is the content of the identities (T1)--(T5) of Lemma~\ref{lem:telescope}, which hold for arbitrary $\beta$. For the coefficients, put $\alpha=a\epsilon$, $\beta=b\epsilon$ and expand to leading order. At the small $L$, $C_s\epsilon^2=\mathcal Q_0^2=\sin^2(a\epsilon)/\sin^2\vartheta$, and at its neighbours $\sin^2(\pm\vartheta+a\epsilon)/\sin^2\vartheta=1\pm2a\epsilon\cot\vartheta+O(\epsilon^2)$. At the centre of the triple, $\varsigma_p=\sin\beta\approx b\epsilon$ and $\varsigma_{p-1}\varsigma_{p+1}=\sin(\beta-j\vartheta)\sin(\beta+j\vartheta)\approx-\sin^2(j\vartheta)$, so $w_p\approx-b^2\epsilon^2/\sin^2(j\vartheta)$. At its neighbours $w_{p\pm1}=\varsigma_{p\pm1}^2/(\varsigma_p\varsigma_{p\pm2})\approx\pm\sin^2(j\vartheta)/(b\epsilon\sin(2j\vartheta))$, so $w_{p-1}=A_{\mathrm l}/\epsilon$ and $w_{p+1}=A_{\mathrm r}/\epsilon$ with the values in the display.
\end{proof}

Two cases are special. For $j=1$ the gap sits at the first $L$ and the triple wraps across the fan ($R_m,R_1,R_2$), so both chains meet at the same run
boundary and their phases are tied: the family is $|\alpha|=|\beta|$, with the next order selecting $\beta=-\alpha$ (Theorem~\ref{thm:exist}(ii)). For the type $0$ there is no triple, and the family is the
$L$-shift $\alpha$ alone, with every $R$-coefficient equal to $C_s$.

Proposition~\ref{prop:reduction} in Appendix~\ref{app:existence} shows that on the shifted chains the full system, not only its leading order, reduces to two scalar conditions.

\section{Existence}
\label{sec:shifts}

The leading-order system leaves the two chain shifts free. This section states that the whole deformed Bethe system fixes them, so that the
leading data of Section~\ref{sec:leading} are the limits of exact solutions at every placement $j\not\equiv0$. The proof is in
Appendix~\ref{app:existence}. On the shifted chains of Section~\ref{sec:leading}, with phases $\alpha,\beta$, put
\[
\Lambda=C_1\Bigl(\frac{\varsigma_0}{\varsigma_1}\Bigr)^2,\qquad \Xi=\frac{\sin^2(j\vartheta)}{\varsigma_0^2\,C_m}.
\]
The whole system then reduces exactly to the two scalar equations
\[
\textup{(III)}\ \ \Lambda=1+\epsilon,\qquad\qquad \textup{(IV)}\ \ \Xi=1
\]
(Proposition~\ref{prop:reduction}): the equation at the $l$-th $R$-site, written as its left side $(R_l)$, equals $\Xi\Lambda^l$. This
$\Lambda$ is not the Floquet multiplier of Section~\ref{sec:spectrum}. The Jacobian of (III), (IV) in the two phases is $-4$ at every
placement $j\ne1$ (Proposition~\ref{prop:jac4}); at $j=1$ the determinant $-4$ is that of $(\Lambda,G)$, $G=(\Xi-1)/\alpha$, in the variables $(\alpha,t)$ with $\beta=t\alpha$ (proof of
Theorem~\ref{thm:exist}(ii)). The implicit function theorem then applies.

\begin{theorem}[Existence]
\label{thm:exist}
Let $m\ge3$ be odd, $k,j\in(\Z/m)^\times$ and $\vartheta=k\pi/m$. There are $\epsilon_0>0$ and real-analytic $\alpha(\epsilon),\beta(\epsilon)$, real for real $\epsilon$, on
$|\epsilon|<\epsilon_0$, vanishing at $\epsilon=0$ and nonzero for $\epsilon\ne0$, such that for $0<|\epsilon|<\epsilon_0$ the shifted chains of
Section~\ref{sec:leading} with phases $\alpha(\epsilon),\beta(\epsilon)$ solve the deformed Bethe system of $L^mR^m$ exactly, with the leading
data of type $j$ and Galois index $k$. Moreover:
\begin{enumerate}
\item[\textup{(i)}] for $j\ne1$, $\alpha=-\tfrac\epsilon2\cot\vartheta+O(\epsilon^2)$ and $\beta=\tfrac\epsilon2\cot(j\vartheta)+O(\epsilon^2)$;
\item[\textup{(ii)}] for $j=1$, $\alpha=-\tfrac\epsilon2\cot\vartheta+O(\epsilon^2)$ and $\alpha+\beta=-\epsilon^2\cos\vartheta/(4\sin^3\vartheta)+O(\epsilon^3)$.
\end{enumerate}
\end{theorem}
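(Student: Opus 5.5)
The plan is to work entirely on the two-parameter family of shifted chains of Section~\ref{sec:leading} and use the exact reduction of Proposition~\ref{prop:reduction}. On that family the $2m$ deformed Bethe equations are equivalent to the two scalar equations (III) $\Lambda(\alpha,\beta)=1+\epsilon$ and (IV) $\Xi(\alpha,\beta)=1$. Here $\Lambda$ and $\Xi$ are explicit ratios of sines, real-analytic in $(\alpha,\beta)$ near $(0,0)$ after the degenerate sites are absorbed as in Proposition~\ref{prop:shifts}. Every interior equation holds identically by the frieze identity and the telescoping (T1)--(T5), and the equation at $R_l$ equals $\Xi\Lambda^l$. So the existence question is a two-variable implicit-function problem at the point $(\alpha,\beta,\epsilon)=(0,0,0)$, where the unshifted chains give $\Lambda=\Xi=1$ by Theorem~\ref{thm:allwords}.

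For $j\ne1$ I would first compute the expansions $\Lambda=1-2\alpha/\tan\vartheta\cdot(\dots)+\dots$ and $\Xi=1+\dots$ to first order in $(\alpha,\beta)$. This uses $\varsigma_0=\sin(-\vartheta+\beta)$, $\varsigma_1=\sin((j-1)\vartheta+\beta)$, $C_1=\sin^2((1-j)\vartheta+\alpha)/\sin^2\vartheta$, and $C_m=\sin^2((m-j)\vartheta+\alpha)/\sin^2\vartheta$ with $m\vartheta\in\pi\Z$. The linear parts are combinations of $\cot\vartheta$, $\cot(j\vartheta)$ and $\cot((j-1)\vartheta)$. The Jacobian of $(\Lambda,\Xi)$ in $(\alpha,\beta)$ should collapse, via the cotangent addition identity $\cot A\cot B-\cot(A+B)(\cot A+\cot B)=1$, to the constant $-4$ of Proposition~\ref{prop:jac4}. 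The analytic implicit function theorem then gives real-analytic $\alpha(\epsilon),\beta(\epsilon)$, real for real $\epsilon$ since all data are real at real phases. Solving the linear system with right side $(\epsilon,0)$ gives the first-order terms in (i). These terms are nonzero because $\cot\vartheta\neq0$ and $\cot(j\vartheta)\ne0$ for $m$ odd. By analyticity, $\alpha$ and $\beta$ are then nonzero on a punctured neighbourhood.

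The case $j=1$ is the main obstacle. The small $L$ and the triple meet at the same run boundary, so $\Xi-1$ vanishes to first order along the whole line, and the $(\alpha,\beta)$ Jacobian is singular. Following the hint after Proposition~\ref{prop:shifts}, I would blow up: set $\beta=t\alpha$, and show that $\Xi-1$ is divisible by $\alpha$, so that $G=(\Xi-1)/\alpha$ is analytic in $(\alpha,t)$. The plan is to verify this from the explicit formula, where $\varsigma_0$ and $C_m$ share the factor coming from $\sin(\beta-\vartheta)$ versus $\sin(\alpha-\vartheta)$. Then $G(0,t)=0$ must select $t=-1$ (the tie $|\alpha|=|\beta|$ with sign fixed by the next order), and the Jacobian of $(\Lambda,G)$ in $(\alpha,t)$ at $(0,-1)$ should again equal $-4$. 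The implicit function theorem then gives $\alpha(\epsilon)=-\frac\epsilon2\cot\vartheta+O(\epsilon^2)$ and $t(\epsilon)=-1+O(\epsilon)$. Extracting the coefficient $-\cos\vartheta/(4\sin^3\vartheta)$ of $\epsilon^2$ in $\alpha+\beta=\alpha(1+t)$ requires the second-order expansion of $G$, namely the $\alpha$-derivative of $G$ at $(0,-1)$. This is a routine but careful Taylor computation, and I expect the divisibility of $\Xi-1$ and this bookkeeping to be the delicate part.

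Finally, I would check that the resulting coordinates have the leading data of type $j$ and Galois index $k$ by substituting $\alpha=a\epsilon$ and $\beta=b\epsilon$ into the formulas of Proposition~\ref{prop:shifts}, with $a,b$ read off from (i) or (ii).
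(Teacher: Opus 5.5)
Your proposal is correct and follows the paper's proof. It reduces to (III)--(IV) via Proposition~\ref{prop:reduction}, uses the implicit function theorem with the Jacobian $-4$ of Proposition~\ref{prop:jac4} for $j\ne1$, and at $j=1$ blows up by $\beta=t\alpha$ and solves $\Lambda=1+\epsilon$, $G=(\Xi-1)/\alpha=0$ near $(\alpha,t)=(0,-1)$, where the Jacobian is again $-4$. At $j=1$ the blow-up is needed chiefly because $\Lambda$ contains $\sin^2\alpha/\sin^2\beta$ and is analytic only in $(\alpha,t)$, with $\Lambda(0,t)=t^{-2}$; the divisibility of $\Xi-1$ by $\alpha$ is immediate from $\Xi(0,t)\equiv1$, and the $\epsilon^2$ coefficient of $\alpha(1+t)$ needs only the first-order term $t=-1+\epsilon/(2\sin^2\vartheta)$ from the same linearisation.
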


So the shifts are fixed by a Jacobian equal to $-4$ at every placement $j\not\equiv0$, the case $j=1$ after the change of variable $\beta=t\alpha$;
part (ii) gives the drift $\varepsilon:=-(\alpha+\beta)$ away from the $\beta=-\alpha$ of the case $j=1$ described after Proposition~\ref{prop:shifts}. Only the type $0$
($j\equiv0$) lies outside Theorem~\ref{thm:exist}: there the $R$-chain is constant, (T5) of Lemma~\ref{lem:telescope} gives $1-1/w_l=0$, and no choice of phases satisfies
$(R_l)$. That type, like the fixed type, needs a different family.

\section{The corank of the leading-order Jacobian}
\label{sec:lemma1}

Theorem~\ref{thm:exist} gives, at every placement $j\not\equiv0$, an exact solution with the leading data of that
placement. It is the only one on the shifted-chain family. This section asks whether it is isolated in the full Bethe system. The first
requirement is that the linearised system be as non-degenerate as the two chain shifts suggest. So the leading-order Jacobian $\hat J_0$,
constructed below, should have corank exactly $2$. Theorem~\ref{thm:corank} states what is proved, by two routes.
\begin{itemize}
\item At $j=1$ the matrix splits into an $L$-block and an $R$-block. Their determinants are two continuants (determinants of tridiagonal matrices), evaluated in closed form in
Appendix~\ref{sub:continuants}. The corank is then exactly $2$ (Section~\ref{sub:corankproof}).
\item At $j\ne1$ the blocks are coupled. The determinant of the finite block is computed by reduction modulo the ramified prime
$\mathfrak p=(1-\zeta_m)$ of $\Q(\zeta_m)$ above $m$, where, after halving the $L$-rows, it is the unit $-j^4/2$ (Appendix~\ref{sub:modp}). This is proved except at the roots of a sextic $\mathsf P$, given in \eqref{eq:Ppoly} below, and checked at those
roots for $m<260$. The step from the finite block to the corank is verified numerically for every prime $m\le43$.
\end{itemize}
Section~\ref{sub:uniqueness} explains why corank two does not by itself give isolatedness, which remains open. The results of
Part~\ref{part:aminus} are collected in Section~\ref{sec:aminussummary}.

\emph{The leading-order Jacobian.} We write $\hat J_0$ for it. Its construction is as follows. Take the $2m$ deformed
Bethe equations \eqref{eq:bethe} in logarithmic form, one for each site $b$ of the word, $\sum_aK_{ab}Z_a+Q_b\log(1-x_b)=2\pi in_b+c_b\log(1+\epsilon)$ (Section~\ref{sub:relation}). Differentiate each equation with respect to the $2m$ logarithmic coordinates $Z_a$, one for each site $a$, and evaluate the partial derivatives on the leading data of the class. The derivatives are taken in the logarithmic coordinates, so the columns of the degenerate coordinates (exponent $p_a\ne0$ in the sense of Section~\ref{sub:degenerate}) are finite and are evaluated directly. The only columns that diverge are those of the two neighbours $L_{j\pm1}$ of the small $L$, where $1-x_a\to0$ and the diagonal term grows like $\epsilon^{-1}$. These two columns are multiplied by $\epsilon$ before $\epsilon\to0$. The result is a $2m\times2m$ matrix, with rows indexed by the equations and
columns by the sites. Its entries are those of the level matrix $K$ plus the diagonal $D$ of Section~\ref{sec:torsion}. For the $L$-sites they are $4/(1-x_a)$ on the diagonal and $-2$ between neighbours; for the $R$-sites, $2\min(i,l)$ between $R_i$ and $R_l$ plus a diagonal term. The $L$-diagonal at $L_1$ carries in addition the entry $2m$ of $K$, and the $L$--$R$ entries are the ramps of $K$. For $j\ne1$ the row of the small $L_j$ vanishes identically, as the proof of Theorem~\ref{thm:corank} shows, so
$\operatorname{corank}\hat J_0\ge1$. For $j\ne1$ the two chain shifts are flat directions of the leading-order system
(Proposition~\ref{prop:shifts}), which suggests corank $2$.

\subsection{The corank theorem}
\label{sub:corankproof}

The proof at $j\ne1$ (the \hyperref[thm:modp]{finite-block lemma}) inverts one block of the reduced matrix of the \hyperref[prop:modp]{reduction lemma}, and that block is singular exactly at the roots of the polynomial below. It comes from the determinant of that block, $\det M=-\mathsf P/(24p^2(p-1))$, computed in Appendix~\ref{sub:modp}:
\begin{equation}
\mathsf P(p)=2p^6-3p^5-p^4+15p^3-13p^2+12\pmod m,\qquad p=j^{-1} .
\label{eq:Ppoly}
\end{equation}
Its vanishing is a limitation of that method, not a property of the classes: at every root with $m<260$ the determinant was computed directly and
is the same value $-j^4/2$ as at the other placements (see after the \hyperref[thm:modp]{finite-block lemma}). Since $\mathsf P$ is irreducible over $\Q$, it has on average one
root modulo a prime (Chebotarev); among the primes $5\le m<1000$, $68$ have no root in $\{2,\dots,m-1\}$, $65$ have one, $21$ two, $9$ three and
$3$ four.

\begin{theorem}[The corank of $\hat J_0$]
\label{thm:corank}
Let $m\ge5$ be prime, $k\in(\Z/m)^\times$, and let $j\in(\Z/m)^\times$ be a placement. At $j=1$, $\operatorname{corank}\hat J_0=2$. At $j\ne1$,
$\operatorname{corank}\hat J_0=1+\operatorname{corank}M''$, where $M''$ is the block of $\hat J_0$ on all sites other than the small $L$ and its two
neighbours; the finite block of $M''$ is nonsingular when $\mathsf P(j^{-1})\not\equiv0\pmod m$, so $\operatorname{corank}\hat J_0\le4$; and
by a floating-point computation, $\operatorname{corank}\hat J_0=2$ at every placement of every prime $m\le43$.
\end{theorem}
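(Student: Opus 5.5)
The plan is to write $\hat J_0$ in block form with respect to the sites near the degeneration and to reduce the corank to that of complementary blocks, separately for $j=1$ and $j\ne1$.

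\emph{The case $j=1$.} The small $L$ is $L_1$ and the triple wraps across the fan boundary at $R_m,R_1,R_2$. First I would check that, once the two $\epsilon$-rescaled neighbour columns are taken to the limit, the $L$--$R$ couplings through $L_1$ (the ramp $2i$ and the entry $2m$) decouple at leading order. The reason is that the row and column of $L_1$ become proportional to a single vector, so that $\hat J_0$ is block diagonal up to a rank-one piece. The $L$-block is then tridiagonal: diagonal entries $4/(1-x_{L_i})$, off-diagonal entries $-2$, with the chain values $x_{L_i}=\sin^2\vartheta/\sin^2((i-1)\vartheta)$. The $R$-block is the fan $2\min(i,l)$ plus a diagonal. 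Conjugating the fan by the difference operator turns it into a tridiagonal matrix as well. Both determinants are continuants, and I would evaluate them with the fundamental system of Lemma~\ref{lem:cotsol}: the solutions are $\sin(n\vartheta)$ and $\cos(n\vartheta)$ divided by the chain values. I expect each continuant to vanish to order exactly one, giving one null vector per block. These null vectors are the two chain shifts of Proposition~\ref{prop:shifts}, which are tied together at $j=1$. The nonvanishing of the cofactors then gives corank exactly $2$.

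\emph{The case $j\ne1$.} I would first show that the row of $L_j$ vanishes identically. Its entries are $-2$ to the two neighbours, whose columns have been rescaled by $\epsilon$ and so vanish, together with a finite diagonal term that cancels against the coupling contributions at order $\epsilon^0$. This gives corank at least one. Next I would eliminate the small $L$ and its two neighbours. The two rescaled columns have nonzero diagonal entries $\lim\epsilon\cdot4/(1-x)=4/c_\pm$ and are otherwise supported on $L_j$. A Schur complement argument on these three sites then gives $\operatorname{corank}\hat J_0=1+\operatorname{corank}M''$.

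The finite block of $M''$ consists of the sites with finite nonzero leading coordinates, excluding the $R$-triple. I would treat it by reduction modulo $\mathfrak p=(1-\zeta_m)$. The entries lie in $\Z[\zeta_m]$ after clearing denominators by powers of $\sin$; modulo $\mathfrak p$ one has $\zeta_m\equiv1$, so the ratios of sines become ratios of integers, $\sin(a\vartheta)/\sin(b\vartheta)\equiv a/b$. The entries thus become explicit rational functions of $i$, $j$ and $p=j^{-1}$ modulo $m$. After halving the $L$-rows and eliminating the tridiagonal parts, this leaves a small block $M$ whose determinant is $-\mathsf P/(24p^2(p-1))$. When $\mathsf P(p)\not\equiv0$, the reduced determinant is the unit $-j^4/2$, which is nonzero modulo $\mathfrak p$, so the finite block is nonsingular in characteristic zero.

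The triple sites carry at most three further directions beyond the finite block. The coupling of the triple to the rest reduces this to at most $3$, and I would have to check how that combines with the $+1$ from the vanishing row of $L_j$ to give the stated bound $\operatorname{corank}\hat J_0\le4$. The final claim, corank exactly $2$ for $m\le43$, is a floating-point rank computation with a spectral-gap check on the singular values.

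\emph{Main obstacle.} The hardest step is the reduction modulo $\mathfrak p$. One must control the denominators $\sin(n\vartheta)$, which are divisible by $\mathfrak p$ and have to be cancelled uniformly, and one must organise the elimination so that the residual block really is the $\mathsf P$-determinant. I would try first to express every entry through the cotangent Green's function of Lemma~\ref{lem:cotsol}, since its reduction is simple: $\cot(n\vartheta)\sim1/(n\cdot\mathrm{unit})$.
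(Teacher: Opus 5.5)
\textbf{The case $j=1$.} You reach the right number, $2$, by a mechanism that is wrong, and the continuant computation you propose would not support it.

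First, the $L$--$R$ couplings do not decouple in $\hat J_0$. Only the columns of the two neighbours $L_2,L_m$ are rescaled. So the row of $L_1$ keeps the ramp $2i$ at $R_i$, the row of $L_m$ keeps $-2$ at every $R_i$, and the column of $L_1$ keeps $2i$ at $R_i$.

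Second, the two continuants do not vanish. The recurrence on the finite sites has the fundamental system $\cot\nu\vartheta$, $\nu\cot\nu\vartheta-\cot\vartheta$ of Lemma~\ref{lem:cotsol}, which contains a secular term. Your $\sin(n\vartheta)$ divided by the chain value is the constant $1$, and it is not a solution, since $\rho_\nu^2-\rho_{\nu-1}\rho_{\nu+1}=\sin^2\vartheta\ne0$. With the correct system the finite-$L$ block $L_3,\dots,L_{m-1}$ has determinant $2^{m-3}m\cot^2\vartheta$ and the arc $R_3,\dots,R_{m-1}$ has $m^3/(2^{2m}\sin^{2m}\vartheta)$ (Lemmas~\ref{prop:lblock} and~\ref{prop:arc}). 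Both are nonzero.

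The paper uses this nonvanishing in the opposite direction to yours. For a left kernel vector $\psi$:
\begin{itemize}
\item the rescaled columns give $\psi_{L_2}=\psi_{L_m}=0$;
\item the finite-$L$ block then gives $\psi_{L_3}=\dots=\psi_{L_{m-1}}=0$;
\item second differences of the $R$-columns give a three-term recursion that breaks at $R_2$ and $R_m$, where $D=0$, so the arc gives $\psi_{R_3}=\dots=\psi_{R_{m-1}}=0$.
\end{itemize}
The kernel therefore lives on the degenerate sites. The broken relation at $R_2$, with $D_{R_1}=-1$, gives $\psi_{R_1}=-2\psi_{R_2}$. The columns $L_1,R_2,R_m$ give $\psi_{L_1}+\psi_{R_m}=0$. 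So the kernel is spanned by $-2e_{R_1}+e_{R_2}$ and $e_{L_1}-e_{R_m}$.

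The second vector exists only because the ramp of $L_1$ coincides with the fan column $2\min(i,m)$ of $R_m$. That is exactly the coupling you propose to discard. In fact the block of $\hat J_0$ on the $L$-sites is nonsingular, and only the $R$-block, through the triple, has a null vector; ``one null vector per block'' is false. Also, two chain shifts that are tied at $j=1$ give only one direction, so they cannot account for a two-dimensional kernel.

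\textbf{The case $j\ne1$.} Your reduction to $1+\operatorname{corank}M''$ is essentially right, but a Schur complement on all three sites is not available, because the $L_j$ row vanishes. The argument is more direct. After rescaling, the columns of $L_{j\pm1}$ are pure multiples $(4/c_\pm)e_{L_{j\pm1}}$: their entry $-2$ at $L_j$ is multiplied by $\epsilon$, so they are not ``otherwise supported on $L_j$''. Hence $\psi_{L_{j\pm1}}=0$. The diagonal at $L_j$ is $4+D_{L_j}=4/(1-x_{L_j})\to0$ on its own, with nothing to cancel against. So the row and column of $L_j$ vanish in the remaining block. The bound is then just $\operatorname{corank}M''\le3$, since $M''$ is the finite block bordered by the three triple sites.

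The finite block is where your plan does not work as described. Modulo $\mathfrak p$, the halved $L$-block $T$ (diagonal $2\nu^2/(\nu^2-1)$, off-diagonal $-1$, $\nu=2,\dots,m-2$) has continuant $\tfrac13(m-2)m\equiv0$. So $T$ cannot be the part you eliminate. The matrix with determinant $-\mathsf P/(24p^2(p-1))$ is not a small residual block; it is the whole finite $R$-block $M$. It is $M$ that must be inverted, and this is exactly where the hypothesis $\mathsf P(p)\not\equiv0$ enters. One then expands $\det B=\det M\,\det(T-2YM^{-1}Y^{\mathsf T})$ as a perturbation of the corank-one matrix $T$ in the $(L_1,L_m)$ corner. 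This uses $\operatorname{adj}T=\tfrac19tt^{\mathsf T}$ with $t_\nu=\nu^2-1$. The entries of $YM^{-1}Y^{\mathsf T}$ come from the second-differenced $R$-recursion, whose solutions on each side of the triple are $\alpha+\beta/\bigl((x-1)x(x+1)\bigr)$.

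As written, your proposal quotes $\det M$ and the unit $-j^4/2$ but gives no route to them. One smaller point: your congruence $\sin(a\vartheta)/\sin(b\vartheta)\equiv a/b$ needs the sign $(-1)^{k(a-b)}$; it cancels in the diagonal entries, whose total index shift is even.
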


The proof below uses four lemmas, proved in Appendix~\ref{app:corank}: the arc-continuant and $L$-block lemmas at $j=1$, and the
reduction and finite-block lemmas at $j\ne1$. A continuant, the determinant of a tridiagonal matrix, satisfies a three-term recurrence in its size.

\begin{proof}[Proof of Theorem~\ref{thm:corank}]
\emph{The case $j=1$.} We show $\operatorname{corank}\hat J_0=2$. Recall the construction of $\hat J_0$ at the start of this section. Read $\psi^{\mathsf T}\hat J_0=0$ column by column. The two $1+c\epsilon$ columns are multiples of $e_{L_2}$ and $e_{L_m}$, so $\psi_{L_2}=\psi_{L_m}=0$. The
finite-$L$ columns are $L_3,\dots,L_{m-1}$; each is tridiagonal in $L$ alone, and since $L_1,L_m$ are not among them the system closes on
$\psi_{L_3},\dots,\psi_{L_{m-1}}$ with the boundary values already $0$. Its matrix is the finite-$L$ block, nonsingular by
the \hyperref[prop:lblock]{$L$-block lemma}, so $\psi_{L_n}=0$ for $2\le n\le m$.

Every $R$-column gives $2i\psi_{L_1}-2\psi_{L_m}+\sum_l2\min(i,l)\psi_{R_l}+D_i\psi_{R_i}=0$, with $D_i$ the diagonal. Second differences in $i$
annihilate $2i\psi_{L_1}$ and $-2\psi_{L_m}$ and reduce $\sum_l2\min(i,l)\psi_{R_l}$ to $-2\psi_{R_i}$, giving
$d_{i-1}\psi_{R_{i-1}}-(2+2d_i)\psi_{R_i}+d_{i+1}\psi_{R_{i+1}}=0$ for $2\le i\le m-1$, with $d_i$ as in the
\hyperref[prop:arc]{arc-continuant lemma}. Since $D_2=D_m=0$ the chain breaks at both ends of the triple,
and $i=3,\dots,m-1$ is a closed system on the finite $R$-sites: the arc matrix, nonsingular by the \hyperref[prop:arc]{arc-continuant lemma}. Hence $\psi_{R_i}=0$ for
$3\le i\le m-1$, and the broken relation at $i=2$ reads $-\psi_{R_1}-2\psi_{R_2}=0$.

Three unknowns remain, $\psi_{L_1},\psi_{R_2},\psi_{R_m}$, and the columns $L_1$, $R_m$ and $R_2$ all give $\psi_{L_1}+\psi_{R_m}=0$, while the column
$R_1$ is satisfied identically on substituting $\psi_{R_1}=-2\psi_{R_2}$ and $\psi_{R_m}=-\psi_{L_1}$. So the solution space is free in $\psi_{L_1}$ and
$\psi_{R_2}$, of dimension exactly $2$.

\emph{The case $j\ne1$.} In $\hat J_0$ the columns of the two neighbours $L_{j\pm1}$ are nonzero multiples of $e_{L_{j\pm1}}$ (nonzero because $c_\pm=\mp\cot^2\vartheta\ne0$ on the solutions of Theorem~\ref{thm:exist}),
so a left kernel vector $\psi$ has $\psi_{L_{j\pm1}}=0$ and the corank of $\hat J_0$ is that of the block $M'''$ on the other $2m-2$ sites. In
$M'''$ the row and column of the small $L_j$ vanish: its diagonal is $4+D_{L_j}=4-4=0$ and its only couplings are to $L_{j\pm1}$. Hence
$\operatorname{corank}\hat J_0=1+\operatorname{corank}M''$ with $M''$ the block on the finite $L$-sites and all $m$ $R$-sites, and the theorem asks
for $\operatorname{corank}M''=1$. The finite block $B$ of the \hyperref[thm:modp]{finite-block lemma} is $M''$ with the three sites of the triple removed, so
$\det B\ne0$ gives $\operatorname{corank}M''\le3$, with equality to $1$ if and only if the $3\times3$ Schur complement of $B$ in $M''$ has rank $2$. The
reduction modulo $\mathfrak p$ cannot decide this last step: modulo $\mathfrak p$ the entries of $M''$ are those of the \hyperref[prop:modp]{reduction lemma}
extended to the triple ($D=0$ at its ends and $-1$ at its centre are the values of $(i-p)^2-1$ there), and the corank of the reduced $M''$ is
not $1$ (by exact computation over $\F_m$ it is $3$ at every placement $j\ne1$ for primes $m\le61$); the exact corank is decided at higher $\mathfrak p$-adic order, as the arc continuant of
the \hyperref[prop:arc]{arc-continuant lemma} is. It was computed in floating point for every placement $j\ne1$ and every $k$ of every prime $5\le m\le43$ ($3720$
placements): $\hat J_0$ has two singular values below $10^{-15}$ times the largest and the third at least $4\cdot10^{-8}$ of the largest, so
$\operatorname{corank}\hat J_0=2$ there. A proof uniform in $m$ would need the closed form of the Schur complement, as in the \hyperref[prop:arc]{arc-continuant lemma}.
\end{proof}

\begin{remark}[The dichotomy is exact]
At $j=1$ the block splits as $L\oplus R$ and both factors have closed forms, but their determinants have positive $\mathfrak p$-adic valuation (Appendix~\ref{sub:continuants}, the \hyperref[prop:lblock]{$L$-block} and \hyperref[prop:arc]{arc-continuant} lemmas), so the reduction modulo $\mathfrak p$ is singular and gives no information. At $j\ne1$ the block is
coupled, the splitting is unavailable, and the $\mathfrak p$-adic reduction of the finite block is a unit. Neither route covers a single placement the
other does; together they cover the finite block at all of them except the placements with $\mathsf P(p)\equiv0\pmod m$ discussed above.
\end{remark}

\subsection{From the corank to uniqueness}
\label{sub:uniqueness}

Corank two is what a Lyapunov--Schmidt reduction on the two chain shifts needs; it does not by itself make the solution of Theorem~\ref{thm:exist} isolated. In a
Lyapunov--Schmidt reduction, with the two chain shifts as coordinates on the kernel, the range equations are solved by the implicit function
theorem and the solutions are the zeros of two bifurcation equations in $(a,b)$. The first-order residual of the shifted chains is, in the
logarithmic equations, exactly $(\log\Lambda-\log(1+\epsilon))\,c+(\log\Xi)\,\mathbf1_R$ (Proposition~\ref{prop:reduction}), and both vectors $c$ and
$\mathbf1_R$ lie in the range of $\hat J_0$, numerically at every placement computed (at $j=1$ only $c$ does, which is the tying of the phases noted after Proposition~\ref{prop:shifts}). So the
bifurcation equations vanish at first order in $\epsilon$ and begin at second order; the Jacobian $-4$ of Proposition~\ref{prop:jac4} is that of
(III), (IV) restricted to the shifted-chain family, not that of the bifurcation equations. Uniqueness within the family is therefore proved
(Theorem~\ref{thm:exist}); isolatedness in the full system is not, and is left open.

\section{Summary of Part~\ref{part:aminus}}
\label{sec:aminussummary}

The following theorem collects what Sections~\ref{sec:lines}--\ref{sec:lemma1} prove about the $A_-$ sector of $L^mR^m$ at prime $m$.
Remark~\ref{rem:aminusgaps} lists what it leaves open.

\begin{theorem}[The $A_-$ sector of $L^mR^m$ at prime $m$]
\label{thm:aminus}
Let $m\ge5$ be prime.
\begin{enumerate}
\item[\textup{(i)}] \emph{Labelling and spins.} The $\tfrac12(m^2-1)$ leading data of Section~\ref{sec:leading} correspond bijectively to
$(A_-\smallsetminus0)/\!\pm$ through their kernel vectors, and the $\tfrac1m\Z/\Z$-component of the spin of every vacuum of a real-analytic real solution with the leading data at $\pm u$ is $q_-(u)$.
\item[\textup{(ii)}] \emph{Existence.} For every placement $j\in\{1,\dots,m-1\}$ and every $k$, the leading data $(j,k)$ are the limit as
$\epsilon\to0$ of an exact solution of the deformed Bethe system. These are $\tfrac12(m-1)^2$ distinct $A_-$ classes: all but the $m-1$ of
the types $0$ and $\infty$.
\item[\textup{(iii)}] \emph{Rank of the linearisation.} That solution is the only one with these leading data on the shifted-chain family. The leading-order Jacobian has
corank exactly two at $j=1$; at $j\ne1$ its finite block is nonsingular when $\mathsf P(j^{-1})\not\equiv0\pmod m$, and for every $j$ when $m<260$,
and, numerically, the corank is two at every placement of every prime $m\le43$.
\end{enumerate}
\end{theorem}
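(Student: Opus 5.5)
Theorem~\ref{thm:aminus} collects results already proved in Part~\ref{part:aminus}. The plan is to assemble it part by part from Theorems~\ref{thm:lines}, \ref{thm:exist} and~\ref{thm:corank}, adding only the bookkeeping that links them: the count of classes, their distinctness, and the passage from ``solution on the shifted chains'' to ``Bethe class''.

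\emph{Part (i).} This is Theorem~\ref{thm:lines}(ii) and~(iii). For~(ii), note that the pairs $(j,k)$ with $j\in\mathbb P^1(\F_m)$ and $k\in(\Z/m)^\times/\!\pm$ map to $u=k(1,j)$ or $u=(0,k)$. This gives $(m+1)\tfrac{m-1}2=\tfrac12(m^2-1)$ leading data, in bijection with $(A_-\smallsetminus0)/\!\pm$. For~(iii), the spin statement is restricted to real-analytic real solutions. This is exactly the hypothesis under which \eqref{eq:theta24} was carried along the $\epsilon$-continuation in the proof of Theorem~\ref{thm:lines}, so nothing new is needed.

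\emph{Part (ii).} Apply Theorem~\ref{thm:exist} with $m$ prime, so $j\in\{1,\dots,m-1\}$ is automatically a unit. For each $(j,k)$ it gives real-analytic phases $\alpha(\epsilon),\beta(\epsilon)$, and the shifted chains then solve the deformed system exactly for $0<\epsilon<\epsilon_0$. Their limit is the leading data $(j,k)$, because $\alpha,\beta=O(\epsilon)$ and Proposition~\ref{prop:shifts} expresses the degenerate coefficients in $(a,b)$. The resulting classes are distinct because their leading data are distinct: different $j$ give different degenerate sites, and different $\pm k$ give different finite $x_b$, as at the end of the proof of Theorem~\ref{thm:lines}. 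Counting gives $(m-1)\cdot\tfrac{m-1}2=\tfrac12(m-1)^2$ classes. The complement in $\tfrac12(m^2-1)$ is $m-1$, namely the $\tfrac{m-1}2$ classes of type~$0$ and the $\tfrac{m-1}2$ of the fixed type~$\infty$.

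\emph{Part (iii).} Uniqueness on the shifted-chain family comes from the implicit function theorem inside the proof of Theorem~\ref{thm:exist}. The Jacobian there is $-4$: in the phases for $j\ne1$ (Proposition~\ref{prop:jac4}), and in the variables $(\alpha,t)$ for $j=1$. The corank statements are Theorem~\ref{thm:corank}. The claim that the finite block is nonsingular for every $j$ when $m<260$ combines the finite-block lemma, which covers all $j$ with $\mathsf P(j^{-1})\not\equiv0$, with the direct computation at the roots of $\mathsf P$ recorded after \eqref{eq:Ppoly}.

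The only delicate point is one of wording in~(ii). ``Exact solution of the deformed Bethe system'' must be read on $0<\epsilon<\epsilon_0$ with finite nonzero coordinates, so that it counts as a Bethe vacuum in the sense of Section~\ref{sec:bethe}. This requires the chain denominators $\varsigma_i$ and $\sin((i-j)\vartheta+\alpha)$ to be nonzero for small $\epsilon\ne0$. That holds because $\alpha,\beta$ are nonzero for $\epsilon\ne0$ and the zeros of the unshifted chains are simple.
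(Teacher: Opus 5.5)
Your proposal is correct and follows the paper's own proof. Both are the same assembly: (i) from Theorem~\ref{thm:lines}, (ii) from Theorem~\ref{thm:exist} with distinctness read off the leading data, and (iii) from the implicit function theorem in Theorem~\ref{thm:exist}, Theorem~\ref{thm:corank}, the finite-block lemma (Lemma~\ref{thm:modp}) and the direct evaluation at the roots of $\mathsf P$ for $m<260$. The extra bookkeeping you add, the counts and the nonvanishing of the coordinates for $\epsilon\ne0$, is already contained in those results; the latter is the first line of the proof of Theorem~\ref{thm:exist}.
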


\begin{proof}
\textup{(i)} is Theorem~\ref{thm:lines}, and \textup{(ii)} is Theorem~\ref{thm:exist}; the classes are distinct because their leading data are.
\textup{(iii)} Uniqueness on the family is the implicit function theorem of Theorem~\ref{thm:exist}. The corank at $j=1$ is Theorem~\ref{thm:corank};
the finite block is the \hyperref[thm:modp]{finite-block lemma} and the evaluation recorded after it; the corank for $m\le43$ is the computation in the proof of Theorem~\ref{thm:corank}
(Section~\ref{sub:corankproof}).
\end{proof}

\begin{remark}[What Theorem~\ref{thm:aminus} does not cover]
\label{rem:aminusgaps}
\textup{(i)} The types $0$ and $\infty$, together $m-1$ classes. Their leading data and kernel vectors are those of Theorem~\ref{thm:aminus}(i). Existence is not proved; at the type $0$ the shifted chains cannot satisfy the $R$-equations (see after Theorem~\ref{thm:exist}). If such solutions exist and are real-analytic with real data, Theorem~\ref{thm:lines}\textup{(iii)} gives their spin components.
\textup{(ii)} Isolatedness of the solutions of Theorem~\ref{thm:exist} in the full system, at every placement: the bifurcation equations of the
Lyapunov--Schmidt reduction begin at second order in $\epsilon$ (Section~\ref{sub:uniqueness}). The corank statement itself is proved only at $j=1$,
and verified for $m\le43$ otherwise.
\textup{(iii)} That these classes, with the $m-1$ of \textup{(i)}, are the whole sector rests on (F1) and on a classification of the possible degenerate leading data, which is not proved (Remark~\ref{rem:exhaustive}); at $m=5,7,11$ the
numerically harvested sectors are complete.
\textup{(iv)} At $m=3$ Theorem~\ref{thm:exist} gives the classes with $j=1,2$, but the division by $6$ in Theorem~\ref{thm:lines} fails and the
spins rest on Observation~\ref{obs:l3r3minus}.
\textup{(v)} The handle-gluing operator at these classes. They are $A_-$ classes in the sense of Section~\ref{sub:degenerate}, by their kernel
vectors; the value $\mathcal H=2^mm^2$ is observed at $m=5,7,11$ and is not computed here, Theorem~\ref{thm:Hall} not reaching degenerate
classes. Corollary~\ref{cor:diagonal} obtains the corresponding moduli from the fusion rules instead.
\end{remark}

\part{\texorpdfstring{The modular data and the example $L^3R^3$}{The modular data and the example L3R3}}
\label{part:modular}

Here $\mathrm{TFT}_0$ denotes the factor of the $S$-fold TQFT that remains when the pointed factor $\mathcal P=U(1)_{\pm2}\boxtimes\Adec$ is split off (Assumption~\ref{ass:F}). The orthogonal splitting of the quadratic form on the sign group is proved for $L^mR^m$ (Proposition~\ref{prop:signsum}); the factorisation of the TQFT is part of Assumption~\ref{ass:F}. That $q$ attains the semion value $\pm N/4$ is also checked on $427$ cyclic words (Observation~\ref{obs:e1even}). The subscript $0$ does not refer to the unit. This part assembles the modular data of $\mathrm{TFT}_0$ from the Bethe results of Parts~\ref{part:results} and~\ref{part:aminus}, and works
the example $L^3R^3$. Section~\ref{sec:modular} assumes the dictionary \eqref{eq:dict} and the grafting fusion rules of \cite{EG20}
(Assumption~\ref{ass:F}; the notion of grafting is \cite[\S3.3]{EG11}); apart from Observation~\ref{obs:wcrow} and the computations of Remark~\ref{rem:Ssign}, which are numerical, and the enumerations of Remark~\ref{rem:scope}, everything else in it is proved from the stated hypotheses, Corollary~\ref{cor:diagonal} assuming the existence of the classes of types $0$ and $\infty$. Under Assumption~\ref{ass:F},
the spins at the non-degenerate classes give $T$ (Corollary~\ref{cor:T}). $T$ and the moduli of the vacuum column then determine $S$, which is
the grafting formula (Proposition~\ref{prop:S}). For every word of odd trace whose nonzero kernel vectors are non-degenerate, under the non-degeneracy condition of Theorem~\ref{thm:Hall}, the unit is $\sigma V^*$,
which is input~(b), here derived from Assumption~\ref{ass:F}. Then $S$ is the grafting of $(A_-,q_-)$ and $(A_+,q_+)$ in the frame $s_o^2=\lambda/N_-N_+$, in which the unit is the larger of the two identity rows (Theorem~\ref{thm:main}). For
$L^mR^m$, these data are the image under $\sigma_\ell$, $\ell\equiv2^{-1}$, of the unitary grafting of the linking forms; this is an identity of explicit matrices and
needs no assumption on the TQFT (Corollary~\ref{cor:galois}). Whenever $m^2+4$ is prime, the unitary grafting is the modular data of the double of every pseudo-unitary untwisted Haagerup--Izumi category of $\Z_m$ \cite[Thm.~A(2)]{GSY26}. Corollary~\ref{cor:diagonal} reaches $L^mR^m$ itself for every prime $m\ge5$ with $m^2+4$ prime, assuming the existence of the $A_-$ classes of types $0$ and $\infty$. Section~\ref{sec:l3r3} collects the data of the paper at $L^3R^3$, which gives the untwisted
Haagerup double at $m=3$. The spin formula of Theorem~\ref{thm:lines}(iii) needs $m\ge5$, so there the $A_-$ spins are numerical (Observation~\ref{obs:l3r3minus}).

\section{The modular data from the Bethe data and the fusion rules}
\label{sec:modular}

This section derives the modular data of $\mathrm{TFT}_0$ from the Bethe data, under one assumption: that the fusion rules are those of the
Evans--Gannon grafting of the two metric groups \cite[\S3.3]{EG11}, \cite{EG20} (Assumption~\ref{ass:F}). The Bethe side supplies two pieces. The first is the
moduli of the vacuum column of $S$, through $S_{0\alpha}^2=\mathcal H_\alpha^{-1}$ and Theorems~\ref{thm:torsion} and~\ref{thm:Hall}. The
second is the spins, through the fibering operator and Theorem~\ref{thm:rogers}. Corollary~\ref{cor:T} turns the spins into the $T$-matrix: by
the Cauchy theorem every twist has odd order, so the twists are the forms $q_\mp$. Proposition~\ref{prop:S} shows that $T$, the moduli and the
fusion rules determine $S$, which is the grafting formula, and that the forms decide which identity class is the unit.
Theorem~\ref{thm:main} combines these for every word of odd trace whose nonzero kernel vectors are all non-degenerate. The unit is $\sigma V^*$, which is
input~(b), here derived, and $S$ is the grafting in the frame $s_o^2=\lambda/N_-N_+$, in which the unit is the larger of the two identity rows. For $L^mR^m$, Corollary~\ref{cor:galois}
identifies the data of Proposition~\ref{prop:S}, as an algebraic identity, with the Galois conjugate, under $\sigma_\ell$ with $\ell\equiv2^{-1}$, of the unitary grafting, which is the modular data of the double of every pseudo-unitary untwisted Haagerup--Izumi category of $\Z_m$ whenever $m^2+4$ is prime \cite[Thm.~A(2)]{GSY26}. That these are the data of $\mathrm{TFT}_0$ requires Assumption~\ref{ass:F} and, for primes $m\ge5$ with $m^2+4$ prime, the existence of the classes of types $0$ and $\infty$ as real-analytic real solutions (Corollary~\ref{cor:diagonal}).
Corollary~\ref{cor:diagonal} extends Theorem~\ref{thm:main} to $L^mR^m$ for every prime $m\ge5$ with $m^2+4$ prime, using Part~\ref{part:aminus} on the degenerate $A_-$
classes; the classes of types $0$ and $\infty$ remain open. Observation~\ref{obs:wcrow} records that the row of $b$ is seen in the Bethe
data without Assumption~\ref{ass:F}.

\subsection{Setting}

Let $\varphi$ be a hyperbolic word in $L,R$ of odd trace $k\ge5$, with $A_\mp=\coker(\varphi\mp1)$, $N_\mp=|A_\mp|$, $\Phi$ and $q_\mp$ as in
Section~\ref{sub:spinform}. Write $\lambda$ for the stretch factor of $\varphi$, $r=\#L+\#R$ for the
length of the word, and $\mathcal K_\mp=\{v\in(\Z/N_\mp)^2:(\varphi\mp1)v\equiv0\}$ for the kernel group of Section~\ref{sec:spectrum}.

The Bethe classes are the two identity classes, of $V^*$ and of $\sigma V^*$, and, for each $v\in\mathcal K_\mp\smallsetminus0$ with
$f(\mathsf m_t)\neq0$ for every middle slope $\mathsf m_t$, the class of Theorem~\ref{thm:allwords} with coordinates
\[
x_t=s_t\ (L\text{-step}),\qquad x_t=1-s_t\ (R\text{-step}),\qquad s_t=\frac{f(\mathsf k_t)^2}{f(\mathsf m_t)^2},\qquad
f(e)=\sin\frac{\pi\langle v,e\rangle}{N_\mp};
\]
we call these the \emph{non-degenerate} classes. For them $\mathcal H_\alpha=2^{\#L}N_\mp$ (Theorem~\ref{thm:Hall}, under the hypothesis
$\det(F+D_R)\ne0$ stated there), and $\mathcal H(V^*)=2^{\#L}\lambda N_-N_+$, $\mathcal H(\sigma V^*)=2^{\#L}\lambda^{-1}N_-N_+$
(Theorem~\ref{thm:torsion}).

\begin{assumption}[The fusion rules]
\label{ass:F}
The TQFT factorises as $\mathrm{TFT}_0\boxtimes\mathcal P$, with $\mathcal P$ pointed on the sign group $(\Z/2)^{\#L}$ and $\mathrm{TFT}_0$ a
modular category whose simple objects are $o$ (the unit, denoted $0$ in \eqref{eq:dict}), $b$, $a_g$ for $g\in(A_-\smallsetminus0)/\!\pm$ and $c_l$ for
$l\in(A_+\smallsetminus0)/\!\pm$, the two identity classes corresponding to $o,b$ in some order and the kernel-vector classes of
$A_\mp$ to the $a_g$, $c_l$, the class of the kernel vector $\operatorname{adj}(\varphi-1)g$ to $a_g$ and that of
$\operatorname{adj}(\varphi+1)l$ to $c_l$. With $X=b+\sum_ga_g+\sum_lc_l$, the fusion rules are those of the grafting. The fusion rules of the Haagerup double were computed in \cite[\S4.3.1]{EP06} and \cite[\S3.2]{EG11}; the grafted form is \cite[\S5.6]{EG20}:
\[
b\otimes b=o+X,\quad b\otimes a_g=X+a_g,\quad b\otimes c_l=X-c_l,\quad a_g\otimes c_l=X,
\]
\[
a_g\otimes a_{g'}=X+a_{g+g'}+a_{g-g'},\qquad c_l\otimes c_{l'}=X-c_{l+l'}-c_{l-l'},
\]
with the conventions $a_0=o+b$ and $c_0=b-o$.\footnote{In \cite{EG20} the convention printed as $d_0=0-b$ should read $d_0=b-0$; this is the convention $c_0=b-o$ used here.} The dictionary \eqref{eq:dict} holds: $S_{0\alpha}^2=\mathcal H_\alpha^{-1}$, and the spins are read off the fibering operator, $T_{\alpha\alpha}/T_{00}=\mathcal F_\alpha/\mathcal F_0$.
\end{assumption}

The two metric groups enter with opposite signs, $+a_{g\pm g'}$ against $-c_{l\pm l'}$; this is the unequal footing of the two groups in the
grafting \cite[\S5.6]{EG20}, as in Lemma~\ref{lem:disc}.

Since every simple object of $\mathrm{TFT}_0$ is self-dual, $S^2=1$, and $S$, being unitary, is real. By the factorisation,
$S_{0\alpha}^2=2^{\#L}/\mathcal H_\alpha$ in $\mathrm{TFT}_0$, so the moduli of the vacuum column $s_\alpha:=S_{\alpha0}$ are
\begin{equation}
s_{a_g}^2=\frac1{N_-},\qquad s_{c_l}^2=\frac1{N_+},\qquad\{s_o^2,s_b^2\}=\Bigl\{\frac\lambda{N_-N_+},\frac1{\lambda N_-N_+}\Bigr\},
\label{eq:moduli}
\end{equation}
the last pair according as the unit is $\sigma V^*$ or $V^*$.

\subsection{\texorpdfstring{The $T$-matrix and the $S$-matrix}{The T-matrix and the S-matrix}}

\begin{corollary}[The $T$-matrix]
\label{cor:T}
Under Assumption~\ref{ass:F}, the spins of $\mathrm{TFT}_0$, relative to the unit, are $\theta_o=\theta_b=0$, $\theta_{a_g}=q_-(g)$ at every
non-degenerate class of $A_-$, and $\theta_{c_l}=q_+(l)$ at every non-degenerate class of $A_+$; the $T$-eigenvalues are $e(\theta)$.
\end{corollary}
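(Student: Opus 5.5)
The plan is to combine three inputs: the spin modulo $\tfrac14$ from Theorem~\ref{thm:rogers}, the within-class sign structure from Proposition~\ref{prop:spinq}, and a constraint on the orders of twists that comes from the grafting fusion rules. First I fix the normalisation. Under the dictionary, $\theta_\alpha$ is the spin of the vacuum $\alpha$ minus the spin of the unit. The factorisation $\mathrm{TFT}_0\boxtimes\mathcal P$ puts the $2^{\#L}$ vacua of a class into one $\mathcal P$-orbit, and their spins differ by the values of $q$ on the sign group, which lie in $\tfrac14\Z$ and account for the $\mathcal P$ factor. The twist of the $\mathrm{TFT}_0$ object of a class is therefore determined by the spin of any one of its vacua, modulo the $2$-primary part $\tfrac14\Z$. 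By Theorem~\ref{thm:rogers}, every vacuum of the class of $v=\operatorname{adj}(\varphi\mp1)u$ has spin $\#R/24-\Phi(u)/4N_\mp \pmod{\tfrac14}$. At the identity classes the Rogers sum is $\#R$ (Theorem~\ref{thm:ceff}), so their spins are $\#R/24 \pmod{\tfrac14}$. Subtracting the unit's spin, which is also $\#R/24$ modulo $\tfrac14$, gives $\theta_{a_g}\equiv-\Phi(g)/4N_-$ and $\theta_{c_l}\equiv-\Phi(l)/4N_+$ modulo $\tfrac14$, and $\theta_o,\theta_b\in\tfrac14\Z$.

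Second, I remove the $\tfrac14$ ambiguity by showing that every twist of $\mathrm{TFT}_0$ has odd order. For this I use the fusion rules of Assumption~\ref{ass:F} together with the balancing relation $\theta_x\theta_y S_{xy}=\sum_z N_{xy}^z\theta_z d_z$ and the Verlinde formula. The more direct route is the Cauchy theorem for modular categories: the primes dividing the order of $T$ are exactly those dividing the global dimension. By \eqref{eq:moduli}, $\dim\mathrm{TFT}_0=s_o^{-2}=\lambda^{\pm1}N_-N_+$, whose norm is $(N_-N_+)^2$, which is odd for odd trace. Hence $2$ does not divide $\operatorname{ord}T$, so every $\theta$ lies in $\tfrac1{M}\Z/\Z$ for some odd $M$ dividing a power of $N_-N_+$. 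Cauchy's theorem should be applied to the Galois-normalised $T$, not to a chosen normalisation, so the conclusion concerns $\theta_\alpha$ relative to the unit, which is exactly what we need.

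Third, an element of $\Q/\Z$ is determined by its class modulo $\tfrac14$ together with the fact that it has odd order, since $\tfrac14\Z/\Z$ is the $2$-torsion-free complement of the odd part. Take the unique odd-order representative of $-\Phi(u)/4N_\mp$ modulo $\tfrac14$. It is $-[4^{-1}]_{N_\mp}\Phi(u)/N_\mp$, because the remark after Observation~\ref{obs:basevac} shows that $[4^{-1}]_Nx/N-x/4N\in\tfrac14\Z$. This representative is $q_\mp(u)$ by \eqref{eq:qform}. At $o$ and $b$ the odd-order element of $\tfrac14\Z/\Z$ is $0$.

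The main obstacle is making the first step precise: showing that the $\tfrac14$-indeterminacy among vacua is exactly absorbed by $\mathcal P$, so that $\theta$ of the $\mathrm{TFT}_0$ object is well defined modulo $\tfrac14$ independently of which vacuum in the class is matched to it. I would handle this by noting that $\mathcal P$ has twists in $\tfrac14\Z$, by Proposition~\ref{prop:signsum} (or Observation~\ref{obs:e1even} in general). As a result, the odd-order part of a spin is insensitive to the choice of vacuum, and only the odd-order part is used above.
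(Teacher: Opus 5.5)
Your proposal is correct and follows the paper's proof essentially step for step. The Cauchy theorem, applied to the global dimension $\lambda^{\pm1}N_-N_+$ (norm $(N_-N_+)^2$, odd), forces every twist to have odd order. The pointed factor moves spins within a class only by elements of $\tfrac14\Z$. The unique odd-order lift of $-\Phi(u)/4N_\mp$ modulo $\tfrac14$ is $-[4^{-1}]_{N_\mp}\Phi(u)/N_\mp=q_\mp(u)$, and the identity classes, with Rogers sum $\#R$, give $\theta_o=\theta_b=0$.

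The one point to tidy is why the twists of $\mathcal P$ lie in $\tfrac14\Z$. This holds for every word simply because $\mathcal P$ is pointed on the elementary abelian $2$-group $(\Z/2)^{\#L}$, and a quadratic form on such a group takes values in $\tfrac14\Z$, as the paper says. You do not need Proposition~\ref{prop:signsum}, which covers only $L^mR^m$, or Observation~\ref{obs:e1even}, which concerns a different point.
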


\begin{proof}
The global dimension of $\mathrm{TFT}_0$ is $1/S_{00}^2=\lambda^{\mp1}N_-N_+$, of norm $(N_-N_+)^2$. By the Cauchy theorem for spherical fusion
categories \cite[Thm~3.9]{BNRW16}, every rational prime dividing the order of $T$ divides $N_-N_+$; so every twist has odd order. The pointed
factor has spins in $\tfrac14\Z$, since a quadratic form on an elementary abelian $2$-group takes values in $\tfrac14\Z$. Hence the twist of
a class is determined by the spins of its vacua modulo $\tfrac14$, which an element of odd order determines uniquely. At the identity classes
the Rogers sum is $c_\alpha=\#R$ exactly (Theorem~\ref{thm:ceff}). By \eqref{eq:Fphase} this fixes their spins modulo $\tfrac18$, which also determines an element of odd order, so $\theta_o=\theta_b$. At a non-degenerate class, Theorem~\ref{thm:rogers} gives the spin modulo
$\tfrac14$ relative to the identity classes as $-\Phi(u)/4N_\mp$, and the unique element of odd order in that class is $-[4^{-1}]_{N_\mp}\Phi(u)/N_\mp=q_\mp(u)$,
since $4[4^{-1}]_N-1\in N\Z$.
\end{proof}

\begin{proposition}[The $S$-matrix]
\label{prop:S}
Let $\mathrm{TFT}_0$ be as in Assumption~\ref{ass:F}, with spins $\theta_o=\theta_b=0$, $\theta_{a_g}=q_-(g)$, $\theta_{c_l}=q_+(l)$ for
nondegenerate quadratic forms $q_\mp$ on $A_\mp$, and vacuum column with moduli \eqref{eq:moduli}. The overall sign of $S$ is not fixed by these data (Remark~\ref{rem:Ssign}); normalise $s_o>0$. Then
\[
s_b>0,\qquad s_{a_g}=\frac1{\sqrt{N_-}}\ \text{for all }g,\qquad s_{c_l}=\frac{\operatorname{sgn}(s_b-s_o)}{\sqrt{N_+}}\ \text{for all }l,
\]
and the unit is the identity class with $s_o^2=\lambda/N_-N_+$ if and only if $\gamma_-=\gamma_+$, and the one with $s_o^2=1/\lambda N_-N_+$ if
and only if $\gamma_-=-\gamma_+$. In either case $S$ is the Evans--Gannon grafting \cite[\S3.3]{EG11} of $(A_-,q_-)$ and $(A_+,q_+)$, whose fusion rules \cite[\S5.6]{EG20} are those of Assumption~\ref{ass:F}, in the frame fixed by the unit
\textup{(}$s_o^2=\lambda/N_-N_+$ or $1/\lambda N_-N_+$\textup{)}.
\end{proposition}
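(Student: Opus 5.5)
The plan is to use the Verlinde formula in the form that converts fusion rules into characters: for each simple $\beta$ the ratios $\chi_\beta(\alpha)=S_{\alpha\beta}/S_{0\beta}$ form a one-dimensional representation of the fusion ring. Set $y_\beta=S_{b\beta}/S_{o\beta}$. The relations of Assumption~\ref{ass:F} are polynomial in the images of $b,a_g,c_l$, so we determine all characters first and then pin down the vacuum column using \eqref{eq:moduli}, the balancing equation and the spins.

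\emph{Step 1: the characters.} For a character $\chi$, write $x=\chi(X)$. The rule $a_g a_{g'}=X+a_{g+g'}+a_{g-g'}$, with $a_0=o+b$, says that $\chi(a_g)$ behaves like a sum of two characters of $A_-$ shifted by a constant. Concretely, $\chi(a_g)=\psi(g)+\psi(-g)+\kappa$, where $\psi\in\widehat{A_-}$ and $\kappa$ is fixed by $x$. The same holds for $c_l$, with a character of $A_+$ and the opposite sign. The relations $b\otimes a_g=X+a_g$ and $a_g\otimes c_l=X$ then force one of two situations: either $\chi(b)=-1$ type values with $\chi(a)$ constant, or exactly one of the two group characters is nontrivial.

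This reproduces the Evans--Gannon list. The columns $\beta=a_{g'}$ give $S_{a_g a_{g'}}\propto \cos$-type sums of $e(b_-(g,g'))$, and the columns $c_{l'}$ give the analogous $A_+$ sums. The two identity columns give the values of the rational fusion characters at $\chi(b)=\pm\sqrt{\dim}$-type roots. Solving $b^2=1+X$ together with $bX=\ldots$ yields exactly two characters with all $a,c$ values constant. For these, $\chi(b)$ is $s_b/s_o$ or $s_o/s_b$, and the ratio is read off \eqref{eq:moduli} as $\lambda^{\pm1}$.

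\emph{Step 2: the vacuum column.} Since $S$ is real and $S_{0\alpha}=S_{\alpha0}$, evaluating the identity-class character on $a_g$ and $c_l$ gives $s_{a_g}/s_o$ and $s_{c_l}/s_o$ as explicit rational expressions in $\lambda$. These expressions fix the signs:
\begin{itemize}
\item $s_{a_g}$ has the sign of $s_o$;
\item $s_{c_l}$ has the sign of $s_b-s_o$, coming from the minus sign in $c_0=b-o$;
\item $s_b>0$, because $\chi(b)=\lambda^{\pm1}>0$.
\end{itemize}

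\emph{Step 3: identifying the unit.} The unit must be the identity class for which the balancing equation
\[
S_{\alpha\beta}=\theta_\alpha^{-1}\theta_\beta^{-1}\sum_\gamma N_{\alpha\beta}^\gamma\,\theta_\gamma\,S_{0\gamma}
\]
holds. Equivalently, $(ST)^3\propto S^2$. Evaluating this at $\alpha=\beta=b$ with $\theta_o=\theta_b=0$ gives the relation
\[
\tfrac{s_b^2}{s_o}=s_o+\textstyle\sum_\gamma s_\gamma\,e(\theta_\gamma).
\]
Using Step 2, the sum over $a_g$ becomes $\tfrac12(G(q_-)-1)/\sqrt{N_-}$ and the sum over $c_l$ becomes $\pm\tfrac12(G(q_+)-1)/\sqrt{N_+}$. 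Substituting the two possible frames, this relation holds in the frame $s_o^2=\lambda/N_-N_+$ exactly when $\gamma_-=\gamma_+$, and in the other frame exactly when $\gamma_-=-\gamma_+$. The same computation shows that the remaining entries $S_{a_g a_{g'}}$, $S_{c_l c_{l'}}$ and $S_{a_g c_l}=0$ match the grafting of \cite[\S3.3]{EG11}.

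I expect the main obstacle to be Step 1. The fusion relations alone must be shown to admit only the grafting characters, and this must be done without assuming the Gauss-sum relation. If the classification does not close cleanly, the fallback is to take the grafting $S$ of \cite{EG11,EG20}, which already realises these fusion rules. By the uniqueness of $S$ given the fusion rules, the moduli and the spins (Verlinde plus the balancing equation), the only freedom left is the unit choice and the overall sign, and Step 3 settles the unit choice.
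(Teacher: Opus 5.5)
Your Step~2 is correct, and it is a different route from the paper's. The unit column $d_\alpha=s_\alpha/s_o$ is a character of the fusion ring, and since $d_{a_g},d_{c_l}\ne0$ the rules $b\otimes a_g=X+a_g$, $a_g\otimes c_l=X$, $b\otimes c_l=X-c_l$ give $d_{a_g}=d_b+1$, $d_{c_l}=d_b-1$ and $d_b^2-kd_b+1=0$. Hence $d_b=\lambda^{\pm1}>0$, so $s_b>0$, $s_{a_g}=s_o+s_b=N_-^{-1/2}$ and $s_{c_l}=s_b-s_o$ in either frame. The paper gets these signs differently: it first proves $P=0$, then uses orthogonality of rows $o,b$ and reality of $S_{a_ga_{g'}}$. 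It uses your character argument only later, in Corollary~\ref{cor:diagonal}. The general classification of Step~1 is not needed for any of this, and as written it is not yet an argument.

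The gap is in Step~3. The relation $S_{bb}=s_b^2/s_o$ is false. Verlinde gives $S_{bb}=s_b\,\chi_b(b)$, and you have substituted $\chi_o(b)=s_b/s_o$ for $\chi_b(b)$, which would make the columns $o$ and $b$ proportional. Balancing alone only says $S_{bb}=s_o+P$ with $P=\sum_{\gamma\in X}e(\theta_\gamma)s_\gamma$. What has to be proved is $P=0$, that is $S_{bb}=s_o$.

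With your left side one gets $P=(s_b^2-s_o^2)/s_o=-\lambda^{-1/2}$ in the frame $s_o^2=\lambda/N_-N_+$. Your own sums give $P=\tfrac12(\gamma_--\gamma_+)$ in that frame. So the displayed relation does not yield $\gamma_-=\gamma_+$. It would also make $S_{a_gc_l}=\bar t_{a_g}\bar t_{c_l}P\neq0$, contradicting your claim $S_{a_gc_l}=0$.

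The fallback does not rescue this. Uniqueness of $S$ given the fusion rules, moduli and spins is what the proposition proves: balancing determines $S$ from the signed vacuum column, and the content is exactly the sign determination plus the consistency condition $P=0$.

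The missing step is easy to supply from what you already have. $S$ is real and orthogonal. Orthogonality of rows $o$ and $b$, using $S_{bb}=s_o+P$, $S_{ba}=\bar t_aP+s_a$ and $S_{bc}=\bar t_cP-s_c$, reads
\[
2s_os_b+|P|^2=\tfrac12\bigl(N_-^{-1}-N_+^{-1}\bigr)=\frac{2}{N_-N_+}.
\]
Your Step~2 together with \eqref{eq:moduli} gives $s_os_b=1/N_-N_+>0$, hence $P=0$. The paper instead proves $P=0$ directly, from reality of $S_{a_gc_l}=\bar t_{a_g}\bar t_{c_l}P$ for all $l$ and the Gauss-sum modulus $\sqrt{N_+}$ of $q_+$. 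With $P=0$ your substitution gives $\tfrac12(\gamma_-\mp\gamma_+)=0$ in the two frames, as you claimed, and balancing then yields the grafting.
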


\begin{proof}
The balancing equation in a modular category reads $S_{ij}=\bar t_i\bar t_j\sum_kN_{ij}^kt_ks_k$, with $t_i=e(\theta_i)$ the $T$-eigenvalues relative to the unit (the objects are
self-dual). Put
$P=s_b+\sum_gt_{a_g}s_{a_g}+\sum_lt_{c_l}s_{c_l}$, $p_x=t_{a_x}s_{a_x}$ on $A_-$ with $p_0=s_o+s_b$, and $p_y=t_{c_y}s_{c_y}$ on
$A_+$ with $p_0=s_b-s_o$. The fusion rules give
\[
S_{bb}=s_o+P,\quad S_{ba}=\bar t_aP+s_a,\quad S_{bc}=\bar t_cP-s_c,\quad S_{ac}=\bar t_a\bar t_cP,
\]
\[
S_{a_ga_{g'}}=\bar t_g\bar t_{g'}\bigl(P+p_{g+g'}+p_{g-g'}\bigr),\qquad S_{c_lc_{l'}}=\bar t_l\bar t_{l'}\bigl(P-p_{l+l'}-p_{l-l'}\bigr).
\]
(i) $P=0$. Otherwise reality of $S_{ac}$ for all $c$ forces $e(q_+(l)-q_+(l'))=\pm1$, so $q_+$ is constant, say $\equiv\varrho$, on
$A_+\smallsetminus0$, $N_+$ being odd, and $q_+(2l)=4q_+(l)$ gives $3\varrho\equiv0$. Then $|G(q_+)|^2=|1+(N_+-1)e(\varrho)|^2$ is $N_+^2$
if $\varrho=0$ and $N_+^2-3N_++3$ otherwise, whereas a nondegenerate form on a group of odd order has $|G(q_+)|^2=N_+$; neither is
possible for $N_+=k+2\ge7$.

(ii) With $P=0$, orthogonality of the rows $o$ and $b$ reads $2s_os_b+\tfrac12(N_+^{-1}-N_-^{-1})=0$, so $s_os_b=1/N_-N_+>0$ and $s_b>0$. By
\eqref{eq:moduli}, in both cases $s_o+s_b=N_-^{-1/2}$ and $|s_o-s_b|=N_+^{-1/2}$.

(iii) Let $B_-$ be the bilinear form of $q_-$. As $t_{a_{g\pm g'}}=t_gt_{g'}e(\pm B_-(g,g'))$,
\[
S_{a_ga_{g'}}=e(B_-(g,g'))\,s_{a_{g+g'}}+e(-B_-(g,g'))\,s_{a_{g-g'}}\qquad(s_{a_0}:=s_o+s_b),
\]
and reality forces $s_{a_{g+g'}}=s_{a_{g-g'}}$ whenever $B_-(g,g')\neq0$, since $B_-\notin\tfrac12\Z$ for odd $N_-$. Taking $g=g'$: $s_{a_{2g}}=s_o+s_b=N_-^{-1/2}$ whenever $q_-(g)\ne0$, which covers every non-isotropic element.
For isotropic $x\ne0$ choose a non-isotropic $y$ and $g=(x+y)/2$, $g'=(x-y)/2$; then $B_-(g,g')=-q_-(y)/2\ne0$, so $s_{a_x}=s_{a_y}$. Thus
$s_{a_g}=N_-^{-1/2}$ for all $g$. The same argument on $A_+$, with the sign of the $c$-block, gives $s_{c_l}=s_b-s_o$ for all $l$.

(iv) Now $\sum_gt_{a_g}=(G(q_-)-1)/2$, so $P=0$ reads
$s_b+\tfrac12(\gamma_--N_-^{-1/2})+\tfrac12(s_b-s_o)(\sqrt{N_+}\gamma_+-1)=0$. If $s_o^2=\lambda/N_-N_+$ then $s_b=\tfrac12(N_-^{-1/2}-N_+^{-1/2})$
and $s_b-s_o=-N_+^{-1/2}$, and the equation is $\gamma_-=\gamma_+$; if $s_o^2=1/\lambda N_-N_+$ it is $\gamma_-=-\gamma_+$. All entries of $S$ are now
determined by the balancing equation, and they are those of the grafting.
\end{proof}

\begin{remark}[The overall sign of $S$]
\label{rem:Ssign}
The data of Assumption~\ref{ass:F} determine $S$ only up to $S\mapsto-S$. Both signs give $S^2=1$ and the same Verlinde coefficients. With $T_{oo}=1$, the relation $(ST)^3=e(c/8)S^2$ changes $c$ by $4$ modulo $8$ under this sign. So the sign is fixed by the central charge, and the representative with $s_o>0$ need not be the right one.

This can be tested on the full set of vacua, without Assumption~\ref{ass:F}. Put $G=\sum_\alpha(\mathcal H_{\mathrm{unit}}/\mathcal H_\alpha)\,e(\theta_\alpha)$, where $\theta_\alpha$ is the spin relative to the unit. Then $|G|^2=\sum_\alpha\mathcal H_{\mathrm{unit}}/\mathcal H_\alpha$, and $G/|G|=\operatorname{sgn}(S_{oo})\,e(c/8)$.

We computed $G$ numerically for the six words of Remark~\ref{rem:scope}, from the $\mathcal H_\alpha$ and spins of all vacua, with $c=\#L-6\#R$. In each case $\sum_\alpha\mathcal H_\alpha^{-1}=1$ and $G/|G|=\pm e(c/8)$. The sign is $-$ for $L^2RLR^2$, $L^2R^2LR$ and the two words of length $10$, and $+$ for $L^2RLRLR^2$ and $L^2R^2LRLR$. This confirms the congruence of Theorem~\ref{thm:fareygeneral} modulo $4$ independently of the fusion rules, given input~(b). Modulo $8$ it shows that $S_{oo}<0$ for the first four words. For $L^mR^m$ with $m$ odd the comparison can be made for $\mathrm{TFT}_0$ alone. The pointed factor $\mathcal P$ has spins, measured from the unit, with Gauss sum $\sqrt{2^m}\,e(-m/8)$ (Propositions~\ref{prop:signsum} and~\ref{prop:spinq}(iii)), so its central charge is $c_{\mathcal P}\equiv-m\pmod8$. Hence $c(\mathrm{TFT}_0)\equiv c-c_{\mathcal P}\equiv-5m+m\equiv4\pmod8$. The grafting with $s_o>0$ has central charge $0$ modulo $8$ (computed for $m=3,5,7,9$), so the offset is again $4$, and $S_{oo}<0$.
\end{remark}

\begin{theorem}[The modular data of $\mathrm{TFT}_0$]
\label{thm:main}
Let $\varphi$ be a hyperbolic word of odd trace $k\ge5$. Assume Assumption~\ref{ass:F}, the hypothesis of Theorem~\ref{thm:Hall}, and that every nonzero kernel vector of $\varphi$ is non-degenerate.
Then the modular data of $\mathrm{TFT}_0$ are determined: $T$ is given by the forms $q_\mp$ of \eqref{eq:qform}, the unit is $\sigma V^*$, and
$S$ is the Evans--Gannon grafting of $(A_-,q_-)$ and $(A_+,q_+)$ in the frame $s_o^2=\lambda/N_-N_+$, up to the overall sign of Remark~\ref{rem:Ssign}. Since $\gamma_-=\gamma_+$, no unitary
grafting has these forms \textup{(}Proposition~\ref{prop:S}\textup{)}; for $L^mR^m$, Corollary~\ref{cor:galois} identifies the unitary data of
which these are a Galois conjugate.
\end{theorem}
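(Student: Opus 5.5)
The plan is to assemble Theorem~\ref{thm:main} from results already in place, in four steps, with the identification of the classes with the simple objects of Assumption~\ref{ass:F} as the first.

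\emph{Step 1: the objects and their moduli.} Under the hypothesis that every nonzero kernel vector is non-degenerate, Theorem~\ref{thm:allwords} supplies one class for each $\pm v\in(\mathcal K_\mp\smallsetminus0)/\!\pm$. Through the bijection $u\mapsto\operatorname{adj}(\varphi\mp1)u$ of Section~\ref{sub:spinform}, these are the objects $a_g$ and $c_l$ of Assumption~\ref{ass:F}. Together with the two identity classes they exhaust the simple objects, since the assumption fixes the rank to be $2+\#(A_-\smallsetminus0)/2+\#(A_+\smallsetminus0)/2$ with odd $N_\mp$. Theorem~\ref{thm:Hall} (using $\det(F+D_R)\ne0$) gives $\mathcal H=2^{\#L}N_\mp$ at these classes. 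Theorem~\ref{thm:torsion} gives $\mathcal H=2^{\#L}\lambda^{\pm1}N_-N_+$ at the identity classes. Dividing out the pointed factor $\mathcal P$, of rank $2^{\#L}$, gives exactly the moduli \eqref{eq:moduli}, with the assignment of the pair $\{s_o^2,s_b^2\}$ still undetermined.

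\emph{Step 2: $T$.} Apply Corollary~\ref{cor:T}. The Cauchy theorem makes every twist of odd order, so Theorem~\ref{thm:rogers} determines the twists as $q_\mp$, and the two identity classes both get spin $0$. The forms $q_\mp$ are nondegenerate by Lemma~\ref{lem:disc}.

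\emph{Step 3: $S$ and the unit.} Proposition~\ref{prop:S} applies verbatim with these $T$ and moduli. It shows that $S$ is the Evans--Gannon grafting, and that the unit is the identity class with $s_o^2=\lambda/N_-N_+$ exactly when $\gamma_-=\gamma_+$. Lemma~\ref{lem:disc} proves $\gamma_-=\gamma_+$, because $\mathbb L_\varphi$ has signature $0$. So the unit is the class with $\mathcal H=2^{\#L}\lambda^{-1}N_-N_+$, which is the class of $\sigma V^*$ by Theorem~\ref{thm:torsion}. The vacuum within that class is $\sigma V^*$ itself, by the flux rule (Proposition~\ref{prop:fluxall}), uniquely so since the trace is odd (Lemma~\ref{lem:parity}). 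This derives input~(b). The overall sign of $S$ is left open by Remark~\ref{rem:Ssign}.

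\emph{Step 4: non-unitarity.} A unitary grafting needs the unit to have the smaller $S_{0\alpha}$ ratio, i.e.\ $\gamma_-=-\gamma_+$ in the sense of Proposition~\ref{prop:S}. Since $\gamma_-=\gamma_+$ holds, no unitary grafting carries these forms. The pointer to Corollary~\ref{cor:galois} is only a forward reference.

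The main obstacle is Step 1, not the later algebra. One must check that Assumption~\ref{ass:F}'s labelling of $a_g$ and $c_l$ by $\operatorname{adj}(\varphi\mp1)g$ matches the kernel-vector classes, which relies on the injectivity of $v$ up to sign. One must also check that $P=0$ in Proposition~\ref{prop:S} needs $N_+\ge7$, which holds because $k\ge5$.
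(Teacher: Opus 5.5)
Your proposal is correct and follows the paper's own assembly: Corollary~\ref{cor:T} gives $T$, Lemma~\ref{lem:disc} gives $\gamma_-=\gamma_+$, and Proposition~\ref{prop:S} together with Theorem~\ref{thm:torsion} places the unit at the class of $\sigma V^*$ and identifies $S$ as the grafting. One thing to trim is your vacuum-level refinement via the flux rule: the simple objects of $\mathrm{TFT}_0$ are classes, so the theorem's claim is class-level, and reading Proposition~\ref{prop:fluxall} as selecting the unit vacuum requires the half-index dictionary, which is not among the theorem's hypotheses.
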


\begin{proof}
Corollary~\ref{cor:T} gives $T$; Lemma~\ref{lem:disc} gives $\gamma_-=\gamma_+$; Proposition~\ref{prop:S} then gives $S$ and places the unit at
the identity class with $\mathcal H=2^{\#L}\lambda^{-1}N_-N_+$, which is $\sigma V^*$ (Theorem~\ref{thm:torsion}).
\end{proof}

\begin{corollary}[The unitary partner of the diagonal words]
\label{cor:galois}
Let $\varphi=L^mR^m$ with $m$ odd, and let $\sigma_\ell$ be the automorphism $\zeta\mapsto\zeta^\ell$ of $\Q(\zeta_{N_-N_+})$ with
$\ell\equiv2^{-1}\pmod{N_-N_+}$. The modular data of Proposition~\ref{prop:S} with the spins $q_\mp$ and $s_o>0$ --- the form that Theorem~\ref{thm:main}
gives to the modular data of $\mathrm{TFT}_0$, up to the sign of $S$ --- are the image under $\sigma_\ell$ of the unitary Evans--Gannon grafting of $(A_-,q^{\mathrm{lk}})$
and $(A_+,q^{\mathrm{lk}})$, the forms of Proposition~\ref{prop:linking}.
\end{corollary}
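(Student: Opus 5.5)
The plan is to compare the two sets of explicit data entry by entry. Both are determined by the same shape of formula: the Evans--Gannon grafting of $(A_-,Q_-)$ and $(A_+,Q_+)$ in a chosen frame. Here $T$ is $e(Q_\mp)$ on the $a_g,c_l$ and $1$ on $o,b$, and $S$ is fixed by the balancing equation and the fusion rules of Assumption~\ref{ass:F}, exactly as in the proof of Proposition~\ref{prop:S}. So I would write both data in terms of the two Gauss-type sums, the moduli \eqref{eq:moduli} and the cosine entries $\pm\tfrac2{\sqrt{N_\mp}}\cos 2\pi B_\mp(g,g')$. I would then check that $\sigma_\ell$ carries each entry of the unitary grafting to the corresponding entry of the $q_\mp$ grafting.

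For $T$, the key identity is $q_\mp=2^{-1}q^{\mathrm{lk}}$ on $A_\mp$, taken from the ``two normalisations'' paragraph after Remark~\ref{rem:lktop}. I would confirm it directly: $b_\mp(u,u)=-\Phi(u)/N_\mp$ and $q_\mp=-[4^{-1}]\Phi/N_\mp$, so $q_\mp=[2^{-1}]\cdot 2^{-1}b_\mp(u,u)=2^{-1}q^{\mathrm{lk}}$, both factors being computed modulo $N_\mp$. Since $\ell\equiv2^{-1}$ modulo $N_-N_+$, we get $\sigma_\ell(e(q^{\mathrm{lk}}(u)))=e(\ell q^{\mathrm{lk}}(u))=e(q_\mp(u))$, and the $T$-matrices match. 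The same argument applies to the bilinear forms, so the $a$--$a$ and $c$--$c$ cosine entries $\cos2\pi B$ map correctly. Any overall phase $e(-c/24)$ is not part of the statement, which compares $S$ and the twists.

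For $S$, the entries involve $\sqrt{N_\pm}$ and $\sqrt\lambda$, where $\lambda=d^2$ and $\sqrt\lambda=d=\tfrac12(m+\sqrt{m^2+4})$. Here I must fix how $\sigma_\ell$ extends to the field generated by these square roots, for instance by working inside $\Q(\zeta_{4N_-N_+})$ and fixing the extension. Because $N_-=m^2$, $\sqrt{N_-}=m$ is rational. So the only nontrivial check is the action on $\sqrt{N_+}=\sqrt{m^2+4}$, which I would compute from its Gauss-sum expression. By Proposition~\ref{prop:linking}, $G(q^{\mathrm{lk}}_+)=-\sqrt{N_+}$, and $\sigma_\ell G(q^{\mathrm{lk}}_+)=G(q_+)$. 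The Legendre-symbol computation that showed $2$ is a nonsquare mod $N_+$ (since $N_+\equiv5\pmod8$) then gives $\sigma_\ell\sqrt{N_+}=-\sqrt{N_+}$. The frame for the unitary data is $s_o^2=1/(\lambda N_-N_+)$ (unit the smaller root, $\gamma_-=-\gamma_+$). Under $\sigma_\ell$ this goes to $d\mapsto -1/d$, so $s_o=1/(dmN^{1/2}_+)$ maps to a quantity of modulus $d/(m\sqrt{N_+})$, which is the frame $s_o^2=\lambda/N_-N_+$ of Theorem~\ref{thm:main}. In the same way $s_{c_l}$ changes sign, matching $\operatorname{sgn}(s_b-s_o)$ in Proposition~\ref{prop:S}, while $s_{a_g}=1/m$ is fixed.

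The main obstacle is the sign bookkeeping in this $S$ step. I need to make sure the overall sign of $\sigma_\ell(s_o)$ comes out positive, and correct it by the normalisation $s_o>0$ if it does not. This is legitimate because both data are only claimed up to the sign of $S$ where needed, but the statement fixes $s_o>0$, so I must check that $\sigma_\ell$ sends $d\sqrt{N_+}$ to $+d^{-1}\sqrt{N_+}$ up to a common sign that the uniqueness in Proposition~\ref{prop:S} then forces. Once $T$ and the moduli and signs of the vacuum column agree, Proposition~\ref{prop:S}(iv) shows that the remaining entries are determined. Since Galois conjugation preserves the balancing equation and the integral fusion rules, the full $S$-matrices coincide.
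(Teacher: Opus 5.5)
Your proposal is correct and is essentially the paper's argument: $\sigma_\ell$ turns $e(q^{\mathrm{lk}})$ and $\cos2\pi B^{\mathrm{lk}}$ into $e(q_\mp)$ and $\cos2\pi B_\mp$, fixes $\sqrt{N_-}=m$, and sends $\sqrt{N_+}\mapsto-\sqrt{N_+}$ because $(2/N_+)=-1$. This moves the unitary frame to the frame $s_o^2=\lambda/N_-N_+$ and flips the sign of $s_{c_l}$. The two loose ends you flag close at once. First, $\sqrt{N_+}=-G(q^{\mathrm{lk}}_+)$ already lies in $\Q(\zeta_{N_+})$, so no extension of $\sigma_\ell$ is needed. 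Second, $\sigma_\ell(s_o)=1/\bigl((-d^{-1})\,m\,(-\sqrt{N_+})\bigr)=d/(m\sqrt{N_+})>0$; equivalently, $\sigma_\ell$ exchanges $\tfrac12(m^{-1}\mp N_+^{-1/2})$. So no appeal to the sign ambiguity of $S$ is needed, and none would be allowed, since the statement asserts equality with $s_o>0$.
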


\noindent That such a Galois conjugate is non-unitary modular data with the same fusion rules, whose positive column is that of $b$, is observed in \cite[\S3.3]{EG11}.

\begin{proof}
By Proposition~\ref{prop:linking}, the forms $q^{\mathrm{lk}}$ have $\gamma_-=+1$ and $\gamma_+=-1$, so by Proposition~\ref{prop:S} their grafting has
its unit at $s_o^2=1/\lambda N_-N_+$, which is the unitary frame. Its entries, listed in the proof of Proposition~\ref{prop:S}, lie in
$\Q(\zeta_{N_-N_+})$, since $N_-=m^2$ and $N_+=\mu\equiv1\pmod4$. The automorphism $\sigma_\ell$ sends $e(q^{\mathrm{lk}})$ to $e(2^{-1}q^{\mathrm{lk}})=e(q_\mp)$
(Section~\ref{sec:metric}) and $\cos2\pi B^{\mathrm{lk}}$ to $\cos2\pi B_\mp$, where $B$ denotes the bilinear form of each quadratic form. It
fixes $m$, and it sends $\sqrt\mu$ to $(\ell/\mu)\sqrt\mu=(2/\mu)\sqrt\mu=-\sqrt\mu$, as $\mu\equiv5\pmod8$. Hence it exchanges the values
$\tfrac12(m^{-1}\mp\mu^{-1/2})$ of $s_o$ and $s_b$, which gives $s_o^2=\lambda/N_-N_+$, and it changes the sign of $s_{c_l}=s_b-s_o$. These are
the data of Proposition~\ref{prop:S} with the spins $q_\mp$.
\end{proof}

At $m=3$ the unitary data are those of \cite[(2.5)]{EG11}, \cite[Thm.~3(ii)]{EG17}, with forms $x_1x_2/3$ on $A_-\cong\Z_3^2$ and $6l^2/13$ (isometric to the $5l^2/13$ of Proposition~\ref{prop:linking}) on
$A_+\cong\Z_{13}$, so that $\gamma_-=1=-\gamma_+$. Now let $\mu=m^2+4$ be prime, as for $m=3,5,7,13,15,17$. By \cite[Thm.~A(2), Cor.~5.6]{GSY26} the form on the block of order $\mu$ is then determined by its Jacobi symbol, which is $-1$ for $q^{\mathrm{lk}}$ (Proposition~\ref{prop:linking}). So whenever $\mu$ is prime the unitary grafting is the modular data $\D^0\Hg_m$. The first odd $m$ not covered are $m=9,11$, where $\mu=85,125$; there the identification is not proved here.

\begin{corollary}[The diagonal words]
\label{cor:diagonal}
Let $\varphi=L^mR^m$ with $m\ge5$ prime and $\mu=m^2+4$ prime, for instance $m=5,7,13,17,37,47$, and assume Assumption~\ref{ass:F}. Suppose that the $m-1$ classes of $A_-$ of types $0$ and
$\infty$ exist as solutions real-analytic in $\epsilon$ with real coordinates, with the leading data of Theorem~\ref{thm:aminus}\textup{(i)} \textup{(}Remark~\ref{rem:aminusgaps}\textup{(i)}\textup{)}. Then
the unit is $\sigma V^*$, and $S$ is, up to the sign of Remark~\ref{rem:Ssign}, the grafting of $(A_-,q_-)$ and $(A_+,q_+)$ in the frame $s_o^2=\lambda/N_-N_+$: by
Corollary~\ref{cor:galois}, the image under $\sigma_\ell$, $\ell\equiv2^{-1}$, of the modular data of the double of every pseudo-unitary untwisted Haagerup--Izumi category of $\Z_m$ \cite[Thm.~A(2)]{GSY26}.
\end{corollary}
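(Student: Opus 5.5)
The plan is to reduce Corollary~\ref{cor:diagonal} to Theorem~\ref{thm:main}. That theorem assumed every nonzero kernel vector non-degenerate, and only three of its inputs used this: the moduli \eqref{eq:moduli} of the vacuum column, the spins of Corollary~\ref{cor:T}, and the existence of one class for each point of $(A_\mp\smallsetminus0)/\!\pm$. So for $L^mR^m$ I will check these three ingredients sector by sector. On the $A_+$ side, $N_+=\mu$ is prime. Every nonzero kernel vector is then non-degenerate provided no middle slope of the word is a multiple of it modulo $\mu$. I would verify this directly from the explicit middle slopes $(i,1)$ and $(1+im,i)$, together with the fact that $\mu$ does not divide small determinants. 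Where that check is awkward, I would instead use Theorem~\ref{thm:tworun}, which for prime $N$ already gives all $\tfrac12(\mu-1)$ classes as unit classes. Theorem~\ref{thm:Hall} then gives $\mathcal H=2^m\mu$, under its non-degeneracy condition, which is part of the hypothesis of Theorem~\ref{thm:main}. Theorem~\ref{thm:rogers} gives the spins modulo $\tfrac14$, and hence $\theta_{c_l}=q_+(l)$ exactly as in Corollary~\ref{cor:T}.

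On the $A_-$ side, all classes are degenerate (Proposition~\ref{prop:Hspectrum}), so I will use Part~\ref{part:aminus}. Theorem~\ref{thm:aminus}(ii) supplies the $\tfrac12(m-1)^2$ classes of types $j\in\{1,\dots,m-1\}$ as real-analytic real solutions. The hypothesis of the corollary supplies the remaining $m-1$ classes of types $0$ and $\infty$. Theorem~\ref{thm:lines}(ii) shows that together they biject onto $(A_-\smallsetminus0)/\!\pm$, matching the labels $a_g$ of Assumption~\ref{ass:F}. For the twists, Theorem~\ref{thm:lines}(iii) gives the $\tfrac1m\Z/\Z$-component of every vacuum's spin as $q_-(u)$. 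The Cauchy-theorem argument of Corollary~\ref{cor:T} then shows that each twist has odd order, hence lies in $\tfrac1{m^2\mu}\Z$-type odd-order classes. The pointed factor only shifts spins by elements of $\tfrac14\Z$, so the twist is pinned down once its $m$-primary part and its $\mu$-primary part are known. The $\mu$-part is zero because the class is an $A_-$ class, and I will argue this by the order of $T$ on the block. That gives $\theta_{a_g}=q_-(g)$.

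The moduli for $A_-$ are not computed on the Bethe side (Remark~\ref{rem:aminusgaps}(v)), so I will avoid them. The unit's modulus still comes from Theorem~\ref{thm:torsion}, and the $A_+$ moduli come from Theorem~\ref{thm:Hall}. I will rerun the proof of Proposition~\ref{prop:S} with the $s_{a_g}$ unknown. Step (i) ($P=0$) uses only $T$ and the $c$-block. Step (ii) then gives $s_o+s_b$ and $s_os_b$ from row orthogonality and the $A_+$ moduli. Step (iii) forces the $s_{a_g}$ to be equal to $s_o+s_b$, so $s_{a_g}^2=1/N_-$ comes out as a consequence rather than an input. Step (iv), together with $\gamma_-=\gamma_+$ from Lemma~\ref{lem:disc}, places the unit at $s_o^2=\lambda/N_-N_+$, that is, at $\sigma V^*$. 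The identification with the double then follows from Corollary~\ref{cor:galois} and the remark after it: since $\mu$ is prime, the unitary grafting is $\D^0\Hg_m$ by \cite[Thm.~A(2)]{GSY26}.

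The main obstacle will be the step that recovers the $A_-$ moduli without $\mathcal H$. Step (ii) of Proposition~\ref{prop:S} used \eqref{eq:moduli} for $a_g$ in the orthogonality of rows $o$ and $b$. I would first try to replace it with the normalisation $\sum_\alpha s_\alpha^2=1$. This needs the $A_+$ and identity moduli, plus equality of all $s_{a_g}$. That equality follows from the reality argument of step (iii), which in turn needs only $s_o,s_b$ and $T$. The point to check is that this chain of deductions does not become circular.
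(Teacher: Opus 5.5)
Your reduction to Theorem~\ref{thm:main}, the $A_+$ non-degeneracy check (the fallback through Theorem~\ref{thm:tworun} is valid for prime $\mu$), and your use of Theorems~\ref{thm:aminus}(ii) and~\ref{thm:lines} on $A_-$ all agree with the paper. The genuine gap is in the moduli. You take $s_{c_l}^2=1/N_+$ from Theorem~\ref{thm:Hall}, which needs $\det(F+D_R)\ne0$ at every $A_+$ class. That condition is a hypothesis of Theorem~\ref{thm:main}, but \emph{not} of Corollary~\ref{cor:diagonal}, and it is not known in general. As written, your argument therefore proves a weaker statement.

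The missing observation is that Assumption~\ref{ass:F} alone forces all of \eqref{eq:moduli}. The paper shows this with the dimension character $d_\alpha=s_\alpha/s_o$ of the fusion ring. The identity $a_g\otimes c_l=X=b\otimes a_g-a_g=b\otimes c_l+c_l$ gives $d_{a_g}=d_b+1$, $d_{c_l}=d_b-1$ and $d_X=d_b^2-1$. Counting the summands of $X$ then gives $d_b^2-kd_b+1=0$, so $d_b=\lambda^{\pm1}$, $s_{a_g}^2=1/N_-$ and $s_{c_l}^2=1/N_+$.

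The same conclusion also follows inside your own scheme, with no circularity. Step (iii) of Proposition~\ref{prop:S} uses only $P=0$ and $T$. Applied to both blocks, it gives $s_{a_g}=s_o+s_b$ and $s_{c_l}=s_b-s_o$. Now use orthogonality of rows $o$ and $b$, together with $s_o^2+s_b^2=k/N_-N_+$ and $|s_os_b|=1/N_-N_+$ from Theorem~\ref{thm:torsion}. The orthogonality relation then reads $2k\,(s_os_b-1/N_-N_+)=0$. This fixes every modulus and makes Theorem~\ref{thm:Hall} unnecessary.

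There is a second, smaller gap in the $A_-$ twists. You say the $\mu$-part of the twist vanishes ``by the order of $T$ on the block''. That presupposes the $A_-$ eigenvalues you are trying to determine. Cauchy's theorem only restricts which primes divide the order of $T$, not their exponents, so it also does not exclude an $m$-part of order $m^2$.

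What closes this step is the relation $24\theta\equiv c_\alpha$ along the continuation (proof of Theorem~\ref{thm:lines}(iii)), with $c_\alpha\in\frac1m\Z$ modulo $\Z$ by \eqref{eq:Smk}. Every vacuum spin then lies in $\frac1{24m}\Z/\Z$. The spins of both identity classes, and the pointed shifts, lie in $\frac1{24}\Z/\Z$. Since $\gcd(6,m^2\mu)=1$, the twist has order prime to $6$. It therefore lies in $\frac1m\Z/\Z$, where it equals the component $q_-(g)$ given by Theorem~\ref{thm:lines}(iii).
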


\begin{proof}
First, every nonzero $A_+$ kernel vector is non-degenerate. Since $\mu$ is prime, $\mathcal K_+$ is a line over $\F_\mu$, so $\langle v,e\rangle\equiv0\pmod\mu$ with $v\in\mathcal K_+\smallsetminus0$ forces $e\in\mathcal K_+$, i.e.\ $(\varphi+1)e\equiv0\pmod\mu$.
The middle slopes are $e=(i,1)$ and $e=(1+im,i)$, $1\le i\le m$ (proof of Theorem~\ref{thm:lines}). The second coordinate of $(\varphi+1)e$ is $mi+2$, resp.\ $m-2i$ modulo $\mu$. It vanishes only at $i\equiv-2m^{-1}$, resp.\ $i\equiv m/2$, and both residues equal $\tfrac12(m^2+m+4)$, which lies strictly between $m$ and $\mu$. So no middle slope is degenerate, and Corollary~\ref{cor:T} gives $\theta_{c_l}=q_+(l)$. On $A_-$,
Theorem~\ref{thm:aminus}\textup{(i)} gives the $\tfrac1m\Z/\Z$-component of the spin as $q_-$, and under Assumption~\ref{ass:F} that component is
the whole twist: by the Cauchy argument of Corollary~\ref{cor:T} every twist has order prime to $6$ ($3\nmid m^2(m^2+4)$), while
$24\theta\equiv c_\alpha\in\tfrac1m\Z$. The moduli $s_{a_g}^2=1/N_-$ of \eqref{eq:moduli}, which Theorem~\ref{thm:Hall} does not give at these
degenerate classes, are forced by Assumption~\ref{ass:F}: the dimensions $d_\alpha=s_\alpha/s_o$ form a character of the fusion ring, and
$a_g\otimes c_l=X=b\otimes a_g-a_g=b\otimes c_l+c_l$ gives $d_{a_g}=d_b+1$ and $d_{c_l}=d_b-1$ for all $g,l$, dividing by $d_{c_l}$ and $d_{a_g}$, which are nonzero since $S_{0\alpha}\ne0$, and $d_X=d_{a_g}d_{c_l}=d_b^2-1$.
Since $X$ has $\tfrac12(N_-{-}1)$ summands $a_g$ and $\tfrac12(N_+{-}1)$ summands $c_l$, with $N_\pm=k\pm2$, this reads $d_b^2-kd_b+1=0$, so $d_b=\lambda^{\pm1}$, $\sum_\alpha d_\alpha^2=\lambda^{\pm1}N_-N_+$, and $s_{a_g}^2=d_{a_g}^2/\sum_\alpha d_\alpha^2=1/N_-$. The same argument gives $s_{c_l}^2=1/N_+$, so the hypothesis $\det(F+D_R)\ne0$ of Theorem~\ref{thm:Hall} is not used. The
proof of Theorem~\ref{thm:main} then applies.
\end{proof}

\begin{observation}[The Wilson line of charge $c$ is the row of the second identity object]
\label{obs:wcrow}
Let $\tr\varphi$ be odd. At the base vacuum $V_{\varepsilon_0}$ \textup{(}Proposition~\ref{prop:spinq}\textup{)} of every class of
Theorem~\ref{thm:allwords},
\[
W_c=+1\ \text{on the }A_-\text{ sector},\qquad W_c=-1\ \text{on the }A_+\text{ sector},
\]
and at the identity classes $W_c(V^*)=\lambda$ \textup{(}Proposition~\ref{obs:stretch}\textup{)}, and $W_c=\lambda^{-1}$ at the Galois conjugate of the lift $V^*$ (with $z_L=h'(\mathsf m)/h'(\mathsf k)$ as in Theorem~\ref{prop:propagated}). That vacuum lies in the class of $\sigma V^*$ but is not $\sigma V^*$ itself.
\end{observation}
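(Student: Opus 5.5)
The plan is to use Proposition~\ref{prop:wcmono}, which writes $W_c^2$ as a monomial in the $1-x_b$. The signs $W_c=\pm1$ at the base vacuum then come from computing $W_c$ itself, not its square, by tracking branches.

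The first step is $W_c^2=1$ at the classes of Theorem~\ref{thm:allwords}. This is Theorem~\ref{thm:Hall}, and I would re-derive it from \eqref{eq:wcmono} at $\epsilon=0$. By Remark~\ref{rem:wcboundary} the interior sites of each $L$-run cancel, so only the run ends survive. There the closed form writes each $1-x$ as a ratio of opposite-edge products $f(b)f(2a+b)/f(a+b)^2$, by the Ptolemy identity (a) in the proof of Theorem~\ref{thm:allwords}. The remaining product telescopes around the cyclic word, and it closes because $f(\varphi e)=\pm f(e)$.

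The second step fixes the sign. Write $2\log W_c=\tilde v\cdot(2\pi i n-Q\log(1-x))$ as in the proof of Proposition~\ref{prop:wcmono}. Then
\[
W_c=e^{\pi i\,\tilde v\cdot n}\prod_b(1-x_b)^{-\tilde v_bQ_b/2},
\]
so the sign of $W_c$ is $(-1)^{\tilde v\cdot n}$ times the sign of the real monomial $\prod_b(1-x_b)^{-\tilde v_bQ_b/2}$, both taken on principal branches. At the base vacuum the branch integers satisfy $n_L\equiv E_{LL}\varepsilon_0$ after the shift, by Proposition~\ref{prop:spinq}(ii). Flipping $\varepsilon$ multiplies $W_c$ by $(-1)^{c_L\cdot\varepsilon}$. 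This pins $W_c(V_{\varepsilon_0})$ to a canonical sign: that of the all-plus vacuum times $(-1)^{c_L\cdot\varepsilon_0}$.

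I would then evaluate $W_c$ at the all-plus vacuum directly, using the telescoping of Proposition~\ref{obs:stretch} with $f$ in place of $h$. Each $R$-run contributes $f(c_1\text{ at run end})/f(c_1\text{ at run start})$, and the product is $f(\varphi e_1)/f(e_1)$. Since $\langle v,\varphi e\rangle=\langle\varphi^{-1}v,e\rangle$ and $\varphi^{-1}v=\pm v+Nw$, this ratio equals $\pm(-1)^{\langle w,e_1\rangle}$, where the leading $\pm$ is $+$ on $A_-$ and $-$ on $A_+$. The remaining task is to show that the residual parity $(-1)^{\langle w,e_1\rangle}$ is cancelled exactly by $(-1)^{c_L\cdot\varepsilon_0}$.

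I expect this parity step to be the main obstacle. My first attempt would express $\varepsilon_0=E_{LL}^{-1}n_L$ modulo $2$. I would compute $n_L$ at the closed-form class from the signs of $f$ on slopes, in the manner of the proofs of Propositions~\ref{prop:flux} and~\ref{prop:fluxall}. Then I would use $K\tilde v=2c$, which gives $E\tilde v=c$, to turn $c_L\cdot\varepsilon_0$ into $\tilde v\cdot n$-type parities. The hope is that these reduce to $\langle w,e_1\rangle$ modulo $2$. If a uniform proof resists, I would at least verify it on the two-run words via Theorem~\ref{thm:tworun}, where $\varepsilon_0$ and $w$ are explicit.

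The identity-class statements are already Proposition~\ref{obs:stretch}. At the Galois conjugate of the lift of $V^*$, apply $\sigma$ to $W_c(V^*)=\lambda$ to obtain $\lambda^{-1}$. That vacuum differs from $\sigma V^*$ by flipping every $L$ sign, as stated in Theorem~\ref{prop:propagated}.
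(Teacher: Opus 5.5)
The parts you reduce to earlier results are handled as in the paper. $W_c^2=1$ is Theorem~\ref{thm:Hall}, $W_c(V^*)=\lambda$ is Proposition~\ref{obs:stretch}, and the statement at the Galois conjugate of the lift follows by applying $\sigma$ and comparing signs at the $L$-sites (Theorem~\ref{prop:propagated}).

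The gap is the sign at the base vacuum, which is the real content of the observation. You reduce it to matching $(-1)^{c_L\cdot\varepsilon_0}$ against $(-1)^{\langle w,e_1\rangle}$ and stop there. Checking two-run words would not give it for every word. The paper does not close this step either: it says the sign is not proved and records a closed-form check on $25$ words.

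Your set-up for the parity step also needs repair, in three places.
\begin{itemize}
\item The telescoping of Proposition~\ref{obs:stretch} with $f$ in place of $h$ evaluates $W_c$ at the signed lift $z_{L_t}=f(\mathsf m_t)/f(\mathsf k_t)$, not at the all-plus vacuum, because $f$ takes both signs on the slopes.
\item That lift and the vector $w$ both change under $v\mapsto v+Nu$, which multiplies $f$ by $(-1)^{\langle u,\cdot\rangle}$. So your two parities are meaningful only for a common lift.
\item What holds at the base vacuum is $n_L\equiv0\pmod2$. It is the reference vacuum that has $n_L\equiv E_{LL}\varepsilon_0$.
\end{itemize}

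The missing idea is to pair the equations with $\mathbf 1_L$, the indicator of the $L$-sites, instead of $\tilde v$. By Theorem~\ref{thm:Kformula} the row sums of $K_{LL}$ are $2c_L$. Hence $E_{LL}\mathbf 1=c_L$ and $c_L\cdot\varepsilon_0\equiv\mathbf 1\cdot n_L$, which is the quantity your plan needs. Each $R_j$ couples to the $L$-sites with total weight $2j-2$ (the ramp $2j$ and the $-2$), so $K\mathbf 1_L=2c-2\mathbf 1_R$.

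Now add the equations $2\log(1-x_\ell)=-(KZ)_\ell+2\pi i n_\ell$ over the $L$-sites, halve, and exponentiate. At every vacuum of the undeformed system this gives
\[
(-1)^{\mathbf 1\cdot n_L}=W_c\prod_{\ell\in L}(1-x_\ell)\prod_{R}x_R ,
\]
and the left side is $1$ at the base vacuum.

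To evaluate the right side, use the Ptolemy identity (a) in the proof of Theorem~\ref{thm:allwords}: each factor is $f(\text{dying})f(\text{newborn})/f(\mathsf m_t)^2$. The newborn slope of each step is the middle slope of the next, and every slope dies once. The three slopes present before the first step are $e_1,e_2,e_1+e_2$, and after the last step their images under $\varphi$. So the product over a period telescopes to
\[
\frac{f(e_1)f(e_2)f(\varphi(e_1+e_2))}{f(\varphi e_1)f(\varphi e_2)f(e_1+e_2)} .
\]
By the periodicity $f(\varphi e)=\pm(-1)^{\langle w,e\rangle}f(e)$, this is $+1$ on $A_-$ and $-1$ on $A_+$, exactly as in the $P_\alpha$ step of the proof of Theorem~\ref{thm:Hall}.

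Hence $W_c(V_{\varepsilon_0})$ is $+1$ on $A_-$ and $-1$ on $A_+$, which would turn the sign from an observation into a theorem. As a check, the same identity with heights at $V^*$, where $n=0$, gives $W_c(V^*)=\lambda$.
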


The sign of $W_c$ depends on the lift within the class, since the vacua of a class differ by signs of the $z_L$. At the all-plus vacuum $\sigma V^*$ of the conjugate class the value is $W_c=(-1)^{\#R}\lambda^{-1}$ rather than $\lambda^{-1}$. Indeed $\sigma V^*$ differs from the Galois conjugate of the lift in the sign of $z_L$ at every $L$-site (Section~\ref{sec:bethe}), and $W_c=\prod_az_a^{c_a}$ with $\sum_Lc_L=\sum_\rho b_\rho=\#R$ (Theorem~\ref{thm:Kformula}); this does not affect $W_c^2$, which is all that enters $\mathcal H$ (Theorem~\ref{thm:torsion}).

This was checked in closed form at every class of Theorem~\ref{thm:allwords} on $25$ words of odd trace and length at most $10$, in
both sectors, with no exception. That $W_c^2=1$ at these classes is Theorem~\ref{thm:Hall}; the sign is not proved. In the dictionary of
\cite[(2.23)--(2.25)]{GKL23}, a Wilson line acts on the vacuum $\alpha$ by $S_{\beta\alpha}/S_{0\alpha}$, and the line of charge $c$ is the
object $\beta$ at the other identity class \cite[(2.22)]{GKL23}, \cite[(3.12), (3.17), footnote~8]{JS26}, for $LR^n$ and $L^2R^n$. Up to the sign $(-1)^{\#R}$ of the lift, the values above are
therefore the ratios $S_{b\alpha}/S_{o\alpha}$ with the unit at $\sigma V^*$:
\[
\frac{s_b}{s_o}=\lambda^{-1},\qquad\frac{S_{bb}}{S_{ob}}=\lambda,\qquad\frac{S_{ba_g}}{S_{oa_g}}=+1,\qquad\frac{S_{bc_l}}{S_{oc_l}}=-1,
\]
which is the row of $b$ in the grafted $S$ of Proposition~\ref{prop:S}, and the second row of \cite[(2.25)]{GK22}. So the opposite signs with
which the two metric groups enter, $+a_g$ against $-c_l$ in $b\otimes a_g$ and $b\otimes c_l$, are observed in the Bethe data, with the dictionary of \cite{GKL23} and input~(b) but without
Assumption~\ref{ass:F}.

\begin{remark}[Scope]
\label{rem:scope}
The hypothesis of Theorem~\ref{thm:main} that every nonzero kernel vector is non-degenerate holds, among the $102$ cyclic words of odd trace at least $5$
and length at most $11$, for $6$ words, $L^2RLR^2$ among them. For $L^mR^m$ with $m$ prime the $A_-$ classes are degenerate, and
Corollary~\ref{cor:diagonal} replaces Theorem~\ref{thm:rogers} there by Theorem~\ref{thm:aminus}; its other
hypothesis, that the $A_+$ kernel vectors are non-degenerate, holds whenever $m^2+4$ is prime (proof of Corollary~\ref{cor:diagonal}). Among the primes $5\le m\le59$ these are $m=5,7,13,17,37,47$; for the others some $A_+$ kernel vector is degenerate (a computation), and the corollary does not apply. At $m=3$ Theorem~\ref{thm:lines} is not
available, and Observation~\ref{obs:l3r3minus} takes its place. For other words the spins of the degenerate classes, which have a coordinate at
infinity, are not covered by Theorem~\ref{thm:rogers}.
\end{remark}

\section{\texorpdfstring{The worked example: $L^3R^3$}{The worked example: L3R3}}
\label{sec:l3r3}

This section collects the data of the paper at one word, $\varphi=L^3R^3$. It is the shortest word of odd trace with non-cyclic $A_-$, and
by Proposition~\ref{obs:untwisted} it gives the metric groups of the untwisted Haagerup double, the case the paper is aimed at. Here $r=6$, $\#L=\#R=3$,
$\tr\varphi=11$, $A_-\cong\Z_3\times\Z_3$, $A_+\cong\Z_{13}$, $d=\tfrac12(3+\sqrt{13})$ and $\lambda=d^2=\tfrac12(11+3\sqrt{13})$. The section
writes out the following.
\begin{itemize}
\item The level matrix, the charges and the matrix $\mathbf N$ (Theorem~\ref{thm:Kformula}).
\item The positive vacuum in closed form, with $c_{\eff}=3$ and $W_c(V^*)=\lambda$ (Theorems~\ref{thm:nahmclosed} and~\ref{thm:ceff},
Proposition~\ref{obs:stretch}).
\item The twelve classes and their handle-gluing operators, with the sum rule and the torsion-side check (Proposition~\ref{prop:Hspectrum},
Theorems~\ref{thm:torsion} and~\ref{thm:complete}, Proposition~\ref{prop:exclusion}).
\item The half-index central charge $c\equiv9\pmod{24}$ (Theorems~\ref{thm:farey} and~\ref{thm:fareygeneral}).
\item The spins, with the splitting of the sign group into the semion and the three-fermion theory (Proposition~\ref{prop:signsum},
Corollary~\ref{cor:arf}).
\end{itemize}
The spin formula of Theorem~\ref{thm:lines}(iii) needs $m\ge5$, so at $m=3$ the $A_-$ spins rest on numerics. Observation~\ref{obs:l3r3minus} finds four of
them, with spins given by the hyperbolic form $x_1x_2/3$. With these, the argument of Theorem~\ref{thm:main} gives the modular data: under
Assumption~\ref{ass:F}, up to the sign of $S$, the Galois conjugate of the Haagerup double by $\sigma_\ell$, $\ell\equiv2^{-1}$ (Corollary~\ref{cor:galois}).

By Proposition~\ref{obs:untwisted} this is the untwisted
member $\D^{\omega=0}\Hg_3$, that is the modular data of the Drinfeld double of the even part of the Haagerup subfactor; the
two metric groups $\Z_3\times\Z_3$ and $\Z_{13}$ are exactly the pair underlying the $\Z_2$-equivariantised Tambara--Yamagami
categories from which Evans--Gannon graft that double \cite[\S5.6]{EG20}, and the $12$ classes counted by Theorem~\ref{thm:count},
$\#(A_-/\!\pm)+\#(A_+/\!\pm)=5+7$, are its rank. Where $d$ appears below as $\tfrac12(3+\sqrt{13})$ it is
simultaneously the quantum dimension of $\rho$ in the Haagerup fusion ring $\rho^2=1+\sum_{g\in\Z_3}g\rho$
and the square root of the stretch factor of the word. So the arithmetic in this section is the double's
arithmetic, read off a word in $L$ and $R$.

\paragraph{The level matrix.} Theorem~\ref{thm:Kformula} gives $\Kraw$ directly, as the member $t=0$ of the family
$\Kraw(t)=K_0+t\,uu^{\mathsf T}$ of \eqref{eq:BKA}, $u=Q^{-1}c=(\tfrac32,0,0,1,2,3)^{\mathsf T}$. The other two criteria of Section~\ref{sub:charges}
select the same member: $u^{\mathsf T}K_0^{-1}u=3$, so $\det K(t)=64(3t+1)$ and $K(t)/2$, which is integral only for even integers $t$, is
unimodular only at $t=0$; and $\mathbf N(t)$ has $(L_2,L_2)$ entry $-2(5t+3)/(3t+1)$, for integer $t$ an integer only at $t\in\{0,1,-1,-3\}$, of which the
other entries exclude $t=1,-1,-3$. The matrix is
\[
\Kraw=\begin{pmatrix}
\tfrac52&-\tfrac12&-\tfrac12&1&2&3\\
-\tfrac12&1&-\tfrac12&0&0&0\\
-\tfrac12&-\tfrac12&1&-1&-1&-1\\
1&0&-1&2&2&2\\
2&0&-1&2&4&4\\
3&0&-1&2&4&6
\end{pmatrix},
\qquad
\mathbf N=\begin{pmatrix}
-2&-4&-4&-2&0&2\\
-4&-6&-8&-4&0&2\\
-4&-8&-10&-6&0&2\\
-2&-4&-6&-3&1&1\\
0&0&0&1&0&1\\
2&2&2&1&1&0
\end{pmatrix},
\]
with $\mathbf N$ symmetric, integral and of signature $(3,3)$, hence indefinite. The $R$-block of $\Kraw$ is the fan
$2\min(a,b)$ of Theorem~\ref{thm:Kformula}, the half-integers sit in the $L$-block only, so the minimal rescaling is $Q=(2,2,2,1,1,1)$ --- charge $2$ at
each $L$-tetrahedron --- and
\[
K=Q\Kraw Q=\begin{pmatrix}
10&-2&-2&2&4&6\\ -2&4&-2&0&0&0\\ -2&-2&4&-2&-2&-2\\
2&0&-2&2&2&2\\ 4&0&-2&2&4&4\\ 6&0&-2&2&4&6
\end{pmatrix},
\]
all entries even, $\det K=64=2^r$, and $K/2$ unimodular. The $L$-block is $K_{LL}=2E_{LL}$ with $E_{LL}$ the cyclic tridiagonal $(-1,2,-1)$ plus
$m=3$ at the $(1,1)$ entry, as in Proposition~\ref{prop:signsum}. The charge vector of Theorem~\ref{thm:Kformula} is $c=(3,0,0,1,2,3)$.

\paragraph{The closed form.} By Theorem~\ref{thm:nahmclosed}, with $\mathcal Q_j=1+jd$,
\[
\delta_{L_j}=\frac{\mathcal Q_{j-1}\mathcal Q_{j+1}}{\mathcal Q_j^2},\qquad \delta_{R_j}=\frac{d^2}{\mathcal Q_j^2},\qquad j=1,2,3,
\]
all in $\Q(\sqrt{13})$ and all in $(0,1)$, with $\delta_{L_3}=1-1/d^2=3/d$ the largest. The Rogers sum gives $c_{\eff}=3=\#L$
(Theorem~\ref{thm:ceff}), and $W_c(V^*)=\lambda$ (Proposition~\ref{obs:stretch}), the fan product telescoping as
$\mathcal Q_3(\mathcal Q_3/\mathcal Q_4)^3\cdot((d+1)/d)^3=d^2$.

\paragraph{The spectrum.} There are $m^2+3=12$ classes, each carrying $2^m=8$ vacua, so $96$ raw vacua. The two identity classes and the six
$A_+$ classes are proved (Proposition~\ref{prop:Hspectrum}(i)--(ii) and Theorem~\ref{thm:complete}, every run having length $3$; the $A_+$ values under the non-degeneracy condition, which holds there by computation); the four $A_-$
classes, the hypothesis of Proposition~\ref{prop:Hspectrum}(iii), are found numerically (Observation~\ref{obs:l3r3minus}). Then
\[
\mathcal H\in\bigl\{\,2^3\cdot9=72\ (\times4),\qquad 2^3\cdot13=104\ (\times6),\qquad 2^3\lambda^{\pm1}\cdot117\ (\times1\text{ each})\,\bigr\},
\]
the multiplicities being $\tfrac{m^2-1}2=4$ and $\tfrac{m^2+3}2=6$. The sum rule
$\sum_\alpha2^3/\mathcal H_\alpha=\tfrac49+\tfrac6{13}+\tfrac{\lambda+\lambda^{-1}}{117}=\tfrac{52+54+11}{117}=1$ closes exactly. On the torsion side, the
characteristic polynomial of multiplication by $\det(I-d\varphi|_T)$ on $\C[x,y,z]/I$ splits over $\Q$ on $\tr\partial=+2$ as
\[
(t+117)^2(t+9)^4(t-13)^6,
\]
against $(k-2,k^2-4,k+2)=(9,117,13)$ with multiplicities $(\tfrac{k-3}2,2,\tfrac{k+1}2)=(4,2,6)$ --- an exact identity over $\Q$, computed with a Gr\"obner basis, and the
$|A_-||A_+|=9\cdot13=117$ of Theorem~\ref{thm:torsion}. On the Markov locus $\tr\partial=-2$ the same polynomial is
$(t-7)^2(t-3)^2(t^4+106t^3+3604t^2+43190t+90699)^2$ with the quartic irreducible, which is Proposition~\ref{prop:exclusion}.

\paragraph{The central charge.} The single $R$-run end is $R_3$, $\mathbf1^{\mathsf T}K_{\mathcal E\mathcal E}\mathbf1=6=2\#R$, so $\theta_{V^*}\equiv-\tfrac34$ and
\[
c\equiv\#L-6\#R=3-18\equiv9\pmod{24},
\]
which is Theorem~\ref{thm:farey} at $m=3$ and Theorem~\ref{thm:fareygeneral} at this word.

\paragraph{The spins.} By Proposition~\ref{prop:signsum} the $8$ vacua of a class carry the discriminant form of $K_{LL}$, with
$q(e_1)=(m+2)/4\equiv\tfrac14$ and $q(\mathbf1)=m/4=\tfrac34$, the semion; the splitting is $U(1)_{\pm2}\boxtimes \Adec$ with $|\Adec|=4$ and $q|_{\Adec}$
half-integral, so under Assumption~\ref{ass:F} $\mathrm{TFT}[L^3R^3]=\mathrm{TFT}_0\boxtimes U(1)_{\pm2}\boxtimes \Adec$ with $\Adec$ the three-fermion theory --- the pointed category on
$\Z_2\times\Z_2$ whose three non-trivial objects all have spin $\tfrac12$, equivalently $SO(8)_1$, Arf-odd by Corollary~\ref{cor:arf}. Concretely $\Adec$
is generated by the flips of $z_2$ and $z_3$, the semion of $U(1)_{\pm2}$ is the flip of all three signs, and the flip of $z_1$ alone is semion times
fermion; it is the Arf-odd alternative to the toric code, which has the same order but spins $0,0,\tfrac12$. The factor
$\mathrm{TFT}_0$ that remains, carrying the two metric groups, is, under Assumption~\ref{ass:F} and with the numerically found $A_-$ spins of Observation~\ref{obs:l3r3minus}, up to the sign of $S$ a Galois conjugate of the Haagerup double, by $\sigma_\ell$ with $\ell\equiv2^{-1}$, which sends $\lambda\mapsto\lambda^{-1}$ (Corollary~\ref{cor:galois}): this splitting is
how the double appears inside the $S$-fold TQFT, with the pointed factors
$U(1)_{\pm2}\boxtimes \Adec$ accounting for the $2^{\#L}=8$ vacua of a class and $\mathrm{TFT}_0$ for the $12$
classes. The Gauss sum of $q$ on the sign group is
$\operatorname{Re}((1+i)^3)(1+i^3)=\sqrt8\,e^{2\pi i\cdot3/8}$. This is the base-centred form. By Proposition~\ref{prop:spinq}(iii) the base vacuum is the all-flip of $\sigma V^*$, and the spins measured from the unit have Gauss sum $\sqrt8\,e^{-2\pi i\cdot3/8}$. So the pointed factor has $c_{\mathcal P}\equiv-3\equiv5\pmod8$, and $c(\mathrm{TFT}_0)\equiv9-5=4\pmod8$, against $0$ for the grafting with $s_o>0$: here $S_{oo}<0$ (Remark~\ref{rem:Ssign}).

\begin{observation}[The $A_-$ spins at $L^3R^3$]
\label{obs:l3r3minus}
At $\varphi=L^3R^3$ the deformed Bethe equations at $\epsilon=10^{-2}$ and $10^{-3}$ have four $A_-$ classes, each continued to $\epsilon=10^{-5}$ with
$\mathcal H_\alpha/2^3\to9$; all four are real. In the notation $z_b=C_b\epsilon^{p_b}(1+o(1))$ of Observation~\ref{obs:strata} their exponent vectors are
$(0,0,2\,|\,0,0,0)$, $(2,0,0\,|\,2,-1,-1)$, $(0,2,0\,|\,-1,2,-1)$ and $(-2,-2,-2\,|\,-1,-1,2)$. Along each continuation $\arg\mathcal F/2\pi$ is
constant modulo $\tfrac14$ to $10^{-8}$, with values $\tfrac1{24},\tfrac18,\tfrac18,\tfrac5{24}$, against $\tfrac18$ at both identity classes.
Relative to the identity classes these are $\tfrac1{6},0,0,\tfrac1{12}\pmod{\tfrac14}$, which lift uniquely to $\tfrac13\Z/\Z$ as
$\tfrac23,0,0,\tfrac13$: the values of the hyperbolic form $x_1x_2/3$ on $(A_-\setminus0)/\!\pm$. The anisotropic form, which takes only the
values $\tfrac13,\tfrac23$ there, is excluded. The same reading on the six $A_+$ classes gives the values $l^2/13$, $l=1,\dots,6$, the form $q_+$ up to isometry, as Theorem~\ref{thm:rogers}
requires. Since the $A_+$ kernel vectors of $L^3R^3$ are non-degenerate, the proof of Theorem~\ref{thm:main} then applies at $m=3$, with these
values on $A_-$ in place of Theorem~\ref{thm:rogers}: the unit is $\sigma V^*$ and $S$ is the grafting of $(\Z_3^2,x_1x_2/3)$ and $(\Z_{13},q_+)$.
These values are numerical: Theorem~\ref{thm:rogers}, which gives the spins modulo $\tfrac14$ at the non-degenerate classes, does not reach the
degenerate ones, and at $m=3$ Theorem~\ref{thm:lines} is not available.
\end{observation}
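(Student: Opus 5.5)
The statement is numerical, so the plan is to make each number reproducible and to isolate the few exact arguments it needs. I would first locate the four $A_-$ classes by continuation rather than blind harvesting. At $m=3$, Theorem~\ref{thm:exist} supplies exact shifted-chain solutions for the placements $j=1,2$. These should give the two classes whose exponent vectors have a single degenerate $L$-site and a degenerate triple on the $R$-run: $(2,0,0\,|\,2,-1,-1)$ and $(0,2,0\,|\,-1,2,-1)$. Their leading data (Proposition~\ref{prop:shifts}) serve as Newton seeds at $\epsilon=10^{-2}$. For the types $0$ and $\infty$ I would seed from the leading data of Theorem~\ref{thm:lines}(i). For type $0$ the small $L$ is $L_3$ and every $R\to1$, which should give $(0,0,2\,|\,0,0,0)$. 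For the fixed type every $x_L\to0$ and the $R$-chain has $1-x_{R_i}=\sin^2\vartheta/\sin^2(i\vartheta)$, which should give $(-2,-2,-2\,|\,-1,-1,2)$. I would supplement these seeds with a random harvest at $\epsilon=10^{-2}$ and $10^{-3}$ to confirm that no further $A_-$ class appears. Each class is then continued to $\epsilon=10^{-5}$. Along the way I would fit $p_b$ from $\log|z_b|$ against $\log\epsilon$, check that every coordinate is real, and evaluate \eqref{eq:Habelian} to confirm $\mathcal H_\alpha/2^3\to9$. The frieze residual of Theorem~\ref{thm:friezeall} serves as the filter against spurious solutions.

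Next comes the spin. I would evaluate $\Psi$ of \eqref{eq:Fdef} along each continuation, tracking the branch integers $n_b$ of the log-form equations continuously in $\epsilon$, since this is how $\mathcal F$ is defined at a degenerate class (Section~\ref{sec:bethe}). The deformation term $c\cdot Z\log(1+\epsilon)$ is $O(\epsilon\log\epsilon)$, so I would check that $\arg\mathcal F/2\pi$ is constant modulo $\tfrac14$ to the stated accuracy. The values $\tfrac1{24},\tfrac18,\tfrac18,\tfrac5{24}$ are then compared with $\tfrac18$ at both identity classes. That common value agrees with $\#R/24$ modulo $\tfrac14$, and this is the one exact cross-check (Theorem~\ref{thm:ceff} and \eqref{eq:Fphase}).

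The arithmetic steps are exact. Subtracting $\tfrac18$ gives $\tfrac16,0,0,\tfrac1{12}$ modulo $\tfrac14$. Since $\tfrac13\Z\cap\tfrac14\Z=\Z$, each residue modulo $\tfrac14$ lifts to at most one element of $\tfrac13\Z/\Z$; here the lifts are $\tfrac23,0,0,\tfrac13$. The relevance of $\tfrac13\Z$ is the Cauchy argument of Corollary~\ref{cor:T}, under Assumption~\ref{ass:F}. On $(\Z_3^2\smallsetminus0)/\!\pm$ the form $x_1x_2/3$ takes the values $0,0,\tfrac13,\tfrac23$, on the lines through $(1,0)$, $(0,1)$, $(1,1)$ and $(1,2)$. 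The anisotropic form has no nonzero isotropic vector, so it takes only $\tfrac13$ and $\tfrac23$ there and is excluded. For $A_+$ the same reading reproduces $l^2/13$, which is a consistency check against Theorem~\ref{thm:rogers}, whose hypotheses hold there because the $A_+$ kernel vectors are non-degenerate. Finally I would run the proof of Theorem~\ref{thm:main}: Corollary~\ref{cor:T} on $A_+$ together with these $A_-$ values, the moduli obtained from the fusion-ring argument in the proof of Corollary~\ref{cor:diagonal}, and Proposition~\ref{prop:S} with $\gamma_-=\gamma_+$. This places the unit at $\sigma V^*$.

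The main obstacle is branch bookkeeping at the degenerate classes. The coordinates go to $0$ or $\infty$ like powers of $\epsilon$, so a lost $2\pi i$ in some $Z_b$ or $\log(1-x_b)$ shifts the spin by a multiple of $\tfrac12$ or $\tfrac14$. I would guard against this by recomputing the integers $n_b$ exactly at each step from the log-form equations, rather than by unwrapping phases. The second weak point is exhaustiveness of the harvest, which is only supported by the count of Theorem~\ref{thm:count} and by the sum rule closing at $96$ vacua.
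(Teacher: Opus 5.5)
Your plan is correct and takes essentially the paper's route: the four $A_-$ classes and their spins are numerical (harvest at $\epsilon=10^{-2},10^{-3}$, continuation to $10^{-5}$, $\arg\mathcal F/2\pi$ read modulo $\tfrac14$). Everything after that is exact arithmetic, which you carry out correctly: the unique lift to $\tfrac13\Z/\Z$, the exclusion of the anisotropic form, and the application of the proof of Theorem~\ref{thm:main} with $\gamma_-=\gamma_+$. Your two refinements are sound and reduce the dependence on numerics, though the spins themselves stay numerical, as the statement says. The first is seeding the $j=1,2$ classes from the exact solutions of Theorem~\ref{thm:exist}. The second is taking the $A_-$ moduli from the fusion-ring argument in the proof of Corollary~\ref{cor:diagonal} instead of from the observed limit $\mathcal H_\alpha/2^3\to9$.
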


\subsection*{Use of Artificial Intelligence}
I have been studying the approach of \cite{GKLSY21,GK22,  GKL23}  since the appearances of their manuscripts in 2022-23 and more recently \cite{JL25,JS26} in 2025-26.
I had discussions with Pinhas Grossman during a visit to UNSW in 2023, but could make no progress in solving the Bethe equations.
The breakthrough for me came through a talk in  September 2025, at the Newton Institute on the occasion of Peter Goddard’s 80th birthday, 
given by Werner Nahm on Nahm sums.  This talk triggered me into going back and looking at the paper by Nahm, Recknagel, Ternhoven \cite{NRT92}.
I was  aware of this paper because Andreas Recknagel was my postdoc at Swansea shortly afterwards.
I turned the Bethe equations into Nahm form and then
 quickly found in closed form  the positive solutions and with the help of Mathematica could compute the integral effective central charge
from the corresponding dilogarithms. Finding other solutions proved difficult with Mathematica. Understanding and solving the deformed equations in order to obtain all the Bethe vacua and primary labels only came with the heavy lifting of  Claude. Corresponding and prompting  since June 2026 enabled Claude as a Research Assistant to assist in solving many of my questions and help in writing  this manuscript. I take responsibility for the final manuscript.

\subsection*{Acknowledgements}

I am grateful to Pinhas Grossman, Sergei Gukov and Carlos Nunez for helpful discussions. I also appreciate the support of the Leverhulme Foundation for an  Emeritus Fellowship, and  the  Quantum Symmetries Programme Reunion at MSRI/SLMath, Australian Research Council Discovery International Award DP170103265 at UNSW, Sydney and the Dublin Institute for Advanced Studies
where some of this work was carried out.

\appendix

\section{Proofs for Section~\ref{sec:runK}}
\label{app:Kformula}

This appendix proves Theorem~\ref{thm:Kformula}, the closed form of the level matrix $K$, on which all later sections rest, and
Proposition~\ref{prop:closed}, the amphichirality of $\Kraw$ for $L^mR^m$, used in Section~\ref{sec:farey} and in Theorem~\ref{thm:nahmclosed}.

\subsection{Proof of Theorem~\ref{thm:Kformula}}

The proof has six parts, proved in this order: that the matrix of the theorem satisfies the determinant law; that it satisfies
$B\Kraw=A$; that $\operatorname{rank}B=r-1$; that $Q^{-1}c$ spans $\ker B$; that it is positive definite; and that it is the only symmetric
solution with $\det K=2^r$. The
second and fourth use the edge equations of Section~\ref{sub:gluing}; the first and fifth use the Schur complement of the matrix itself. The only external input is the Neumann--Zagier rank
theorem \cite[Proposition~2.3]{NZ85}, used in the third, which needs the layered triangulation to be geometric \cite{Gueritaud06}.
The rows of $\Kraw$ listed after Theorem~\ref{thm:Kformula} are used throughout. As there, for an $R$-run $\rho$ we write $L^+(\rho)$ for the first $L$ after $\rho$ and $L^-(\rho)$ for the last $L$ before it.

\paragraph{The life of an edge.} Each flip creates one slope, and the edges of the layered triangulation are in bijection with the flips
(Section~\ref{sub:gluing}). Fix the flip $i$, with letter $\ell_i$ and Farey basis $(c_1,c_2)$ before it. If $\ell_i=L$ the basis after the flip is
$(c_1,c_1+c_2)$ and the newborn slope is $e=2c_1+c_2$; if $\ell_i=R$ the basis is $(c_1+c_2,c_2)$ and $e=c_1+2c_2$. In either case $e$ is the
sum of the two columns of the basis after flip $i$, i.e.\ the Farey slope of flip $i+1$. After flip $i+1$ the basis contains $e$ as one
column: as the second column if $\ell_{i+1}=L$, as the first if $\ell_{i+1}=R$. Flip $i+2$ keeps the first column if $\ell_{i+2}=L$ and the second if
$\ell_{i+2}=R$. Hence:

\begin{lemma}[Life of an edge]
\label{lem:life}
The edge $e$ born at flip $i$ is the Farey slope of flip $i+1$. If $\ell_{i+2}=\ell_{i+1}$ it is the dying slope of flip $i+2$. If
$\ell_{i+2}\ne \ell_{i+1}$ it is the kept slope of flip $i+2$ and of every flip of the run of $\ell_{i+2}$'s starting there, and it is the dying slope
of the first flip after that run.
\end{lemma}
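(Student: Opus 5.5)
The lemma is a bookkeeping statement about the Farey flips, and I would prove it by following the slope $e$ through the basis updates of Section~\ref{sub:gluing}. Three facts set this up. The flips act on the basis by $L:(a,b)\mapsto(a,a+b)$ and $R:(a,b)\mapsto(a+b,b)$. An $L$-flip keeps the first column and kills the second; an $R$-flip keeps the second and kills the first. The Farey (middle) slope of a flip at basis $(c_1,c_2)$ is $c_1+c_2$.

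\emph{Step 1: the born slope is the next middle slope.} Flip $i$ acts at the basis $(c_1,c_2)$. If $\ell_i=L$, the new basis is $(c_1,c_1+c_2)$; its column sum is $2c_1+c_2$, which is the newborn slope $e$. If $\ell_i=R$, the new basis is $(c_1+c_2,c_2)$, with column sum $c_1+2c_2=e$. In both cases $e$ is the middle slope of flip $i+1$.

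\emph{Step 2: where $e$ sits after flip $i+1$.} Flip $i+1$ replaces one column of the basis by the column sum $e$. If $\ell_{i+1}=L$, $e$ becomes the second column. If $\ell_{i+1}=R$, $e$ becomes the first column.

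\emph{Step 3: what flip $i+2$ does to $e$.} There are two cases.

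\begin{itemize}
\item If $\ell_{i+2}=\ell_{i+1}$, then $e$ is in the slot that flip $i+2$ discards: the second column under $L$, the first under $R$. So $e$ is the dying slope of flip $i+2$.
\item If $\ell_{i+2}\ne\ell_{i+1}$, then $e$ is in the slot that flip $i+2$ keeps, so $e$ is its kept slope. Every further flip with the same letter keeps that same column, so $e$ stays the kept slope along the whole run of $\ell_{i+2}$'s that starts there. This is the statement, used in the proof of Theorem~\ref{prop:propagated}, that a run has a fixed kept slope. The first flip after the run has the opposite letter, which discards exactly the slot holding $e$, so $e$ is its dying slope.
\end{itemize}

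The only point needing care is the index bookkeeping in the cyclic word. Here $i+1$ and $i+2$ are read modulo $r$, and at the wrap-around the basis has been transported by $\varphi$. Since $\varphi$ acts linearly on slopes, the identification of slots is unchanged. With that checked the argument is a finite case check, and I expect no real obstacle beyond this.
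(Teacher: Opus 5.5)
Your proposal is correct and is essentially the paper's own argument: identify the newborn slope $e$ as the column sum of the basis after flip $i$, note which column it occupies after flip $i+1$, and read off whether flip $i+2$ and the rest of its run keep or discard that column. Your remark that the wrap-around is harmless because $\varphi$ acts linearly on slopes is a small addition; the paper leaves this point implicit.
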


In the second case the run of $\ell_{i+2}$'s is a whole run of the word, since $\ell_{i+1}\ne \ell_{i+2}$. The multiplicities check: an edge kept at
$a$ flips appears with multiplicity $1+2+2a+1$, and $\sum_e(4+2a_e)=4r+2r=6r$, six edges per tetrahedron.

\paragraph{The edge rows.} Each edge $e$ gives one edge equation. In each tetrahedron we eliminate $Z''=i\pi-Z-Z'$, so that the equation is linear in the variables $(Z,Z')$ of the $r$ tetrahedra.
Its coefficients of $Z$ form the row of $e$ in $A$, and its coefficients of $Z'$ the row of $e$ in $B$; these are the matrices of \eqref{eq:BKA}.
Each tetrahedron containing $e$ contributes to the column of that tetrahedron. The contribution depends on the position of $e$ in the tetrahedron (newborn or dying, kept, or Farey) and on its letter, through the assignment of Section~\ref{sub:gluing}.
At an $L$-tetrahedron the dying and newborn edges carry $Z$, the kept edges $Z'$ and the Farey edges $Z''$; at an $R$-tetrahedron $Z$ and $Z'$ are exchanged.
An edge contributes its multiplicity times its parameter, and a Farey edge carries $Z''=i\pi-Z-Z'$, so it contributes $-2$ to both $A$ and $B$. The contributions are therefore
\[
\begin{array}{lcc}
 & \text{$L$-tetrahedron} & \text{$R$-tetrahedron}\\
\text{newborn or dying (multiplicity 1)} & A\mathrel{+}=1 & B\mathrel{+}=1\\
\text{kept (multiplicity 2)} & B\mathrel{+}=2 & A\mathrel{+}=2\\
\text{Farey (multiplicity 2)} & A\mathrel{-}=2,\ B\mathrel{-}=2 & A\mathrel{-}=2,\ B\mathrel{-}=2
\end{array}
\]
By Lemma~\ref{lem:life} the row of $e$ depends only on $(\ell_i,\ell_{i+1},\ell_{i+2})$ and, when $\ell_{i+1}\ne \ell_{i+2}$, on the run $\varsigma$ starting
at $i+2$; there are eight patterns. Write $R_j$ for the $j$-th $R$ of a run and $L_s$ for the $s$-th $L$ of a run; when $\ell_i=\ell_{i+1}$ the
sites $i$ and $i+1$ lie in the same run.
\begin{itemize}
\item[$LLL$] $A=(1,-2,1)$, $B=(0,-2,0)$ at $(i,i+1,i+2)$; $i+1$ is a middle $L$.
\item[$RRR$] $A=(0,-2,0)$, $B=(1,-2,1)$ at $(i,i+1,i+2)=(R_{j-1},R_j,R_{j+1})$ of one run.
\item[$RLL$] $A=(0,-2,1)$, $B=(1,-2,0)$ at $(R_b,L_1,L_2)$: $R_b$ the last $R$ of a run $\rho$ of length $b$, $L_1=L^+(\rho)$, $L_2$ the
next $L$.
\item[$LRR$] $A=(1,-2,0)$, $B=(0,-2,1)$ at $(L^-(\rho),R_1,R_2)$ for a run $\rho$ of length $b\ge2$.
\item[$LLR$] $A=1$ at $i$, $-2$ at $i+1=L^-(\rho)$, $2$ at every $R$ of the run $\rho$ starting at $i+2$, $1$ at $L^+(\rho)$; $B=-2$ at
$L^-(\rho)$.
\item[$RRL$] $A=-2$ at $i+1=R_b$; $B=1$ at $R_{b-1}$, $-2$ at $R_b$, $2$ at every $L$ of the $L$-run $\varsigma$ starting at $i+2$, $1$ at the
first $R$ after $\varsigma$. Here the run $\rho\ni R_b$ has length $b\ge2$.
\item[$LRL$] as $RRL$ with $\rho$ of length $1$: $A=1$ at $i=L^-(\rho)$, $-2$ at $R_1$; $B=-2$ at $R_1$, $2$ at every $L$ of $\varsigma$, $1$
at the first $R$ after $\varsigma$.
\item[$RLR$] as $LLR$ with the $L$-run of length $1$: $A=-2$ at $i+1=L$, $2$ at every $R$ of the run $\rho'$ starting at $i+2$, $1$ at
$L^+(\rho')$; $B=1$ at $i=R_b$, the last $R$ of the run $\rho$ before $L$, and $-2$ at $L$. Here $L=L^+(\rho)=L^-(\rho')$.
\end{itemize}

\begin{proof}[Proof of the determinant law $\det K=2^r$]
Eliminate each $R$-run's fan block $F=2M$, $M_{ij}=\min(i,j)$. Since $M^{-1}$ is tridiagonal $(2,-1)$ with last diagonal entry $1$, one has
$M^{-1}(1,\dots,b)^{\mathsf T}=e_b$ and $M^{-1}\mathbf1=e_1$; so the ramp $\mathsf r=2(1,\dots,b)$ and the row $s=-2\cdot\mathbf1$ contribute
$\mathsf r^{\mathsf T}F^{-1}\mathsf r=2b$, $s^{\mathsf T}F^{-1}s=2$ and $\mathsf r^{\mathsf T}F^{-1}s=-2$ to the $L$-Schur complement. The $+2b$ removes the $2b$ on the
diagonal of the first $L$ after the run, and the $-2$ cross term cancels the cyclic $L$--$L$ coupling across the run. What remains decomposes by
$L$-runs, $S=\bigoplus_q2M_{a_q}^{-1}$ --- after the $R$-runs are eliminated the $L$-runs decouple completely --- with $\det=2^{a_q}$ per block,
including $\#L=1$, where the single $L$ is adjacent to itself. Hence $\det K=\prod_q\det(2M_{b_q})\prod_q2^{a_q}=2^{\#R}2^{\#L}=2^r$.
\end{proof}

\begin{proof}[Proof that the rows of $B\Kraw=A$ hold]
For each of the eight patterns $LLL,\dots,RLR$ listed above, $\sum_tB[e,t]\Kraw[t,\cdot]$ is computed from the rows of $\Kraw$ listed in Section~\ref{sec:runK} and compared with $A[e,\cdot]$ column by
column. In each case only the columns named in the computation can be nonzero; every other column is zero on both sides.

\emph{$LLL$.} $-2\Kraw[i+1,\cdot]$ for a middle $L$ is $-2$ at $i+1$ and $+1$ at its two adjacent $L$'s, $i$ and $i+2$.

\emph{$RRR$.} The second difference along the run of $2\min(j,l)$ in $j$ is $-2\delta_{jl}$; the ramp $j$ at $L^+(\rho)$ is linear in $j$ and
the entry $-1$ at $L^-(\rho)$ is constant, so both difference to zero.

\emph{$RLL$.} $\Kraw[R_b,\cdot]-2\Kraw[L_1,\cdot]$: at $R_l\in\rho$, $2l-2l=0$; at $L_1$, $b-2(1+b/2)=-2$; at $L_2$, $-2\cdot(-\tfrac12)=1$;
at $L^-(\rho)$, $-1-2(-\tfrac12)=0$, the $-\tfrac12$ being the adjacency of $L^-(\rho)$ and $L_1$ across $\rho$. ($L_1$ is not an $L^-$,
since $\ell_{i+2}=L$.)

\emph{$LRR$.} $-2\Kraw[R_1,\cdot]+\Kraw[R_2,\cdot]$: at $R_1$, $-4+2=-2$; at $R_l$, $l\ge2$, $-4+4=0$; at $L^+(\rho)$, $-2+2=0$; at
$L^-(\rho)$, $2-1=1$.

\emph{$LLR$.} $-2\Kraw[L^-(\rho),\cdot]$: this $L$ is not an $L^+$, its predecessor $i$ being an $L$, so its row is $1$ on the diagonal,
$-\tfrac12$ at $i$ and at $L^+(\rho)$, its two adjacent $L$'s, and $-1$ at every $R$ of $\rho$. Times $-2$: $-2$, $1$, $1$, $2$, as required.

\emph{$RRL$.} $\Kraw[R_{b-1},\cdot]-2\Kraw[R_b,\cdot]+2\sum_{s=1}^a\Kraw[L_s,\cdot]+\Kraw[R'_1,\cdot]$, where $\varsigma=L_1\cdots L_a$,
$L_1=L^+(\rho)$ and $L_a=L^-(\rho')$ for the run $\rho'$ starting at $R'_1$. At $R_l\in\rho$: $2\min(b-1,l)-4l$ from the first two rows
and $2l$ from the ramp of $L_1$, in total $2\min(b-1,l)-2l=-2\delta_{lb}$. At $L^-(\rho)$: $-1-2(-1)+2(-\tfrac12)=0$, from $R_{b-1}$, from $R_b$, and from the adjacency of $L^-(\rho)$ and $L_1$ across $\rho$. At $L_1$: $(b-1)-2b+2(1+b/2)+2(-\tfrac12)=0$, the last term from
$L_2$ if $a\ge2$ and from $\Kraw[R'_1,L_1]=-1$ if $a=1$. At a middle $L_s$, $1<s<a$: $2(-\tfrac12+1-\tfrac12)=0$. At $L_a$, $a\ge2$:
$2(-\tfrac12+1)-1=0$, the $-1$ from $R'_1$. At $R'_l\in\rho'$: $2(-1)+2\min(1,l)=0$, the $-1$ from $L_a=L^-(\rho')$. At $L^+(\rho')$:
$2(-\tfrac12)+1=0$, the adjacency of $L_a$ and $L^+(\rho')$ against the ramp of $R'_1$. When runs coincide --- $\rho'=\rho$ for a two-run
word, $L^+(\rho')=L_1$, or $L^-(\rho)\in\varsigma$ --- the extra entries are the ones already listed, landing on the same columns, and the same
sums give the same totals.

\emph{$LRL$.} The same combination without the row $R_{b-1}$ and with $b=1$: $-2\Kraw[R_1,\cdot]+2\sum_s\Kraw[L_s,\cdot]+\Kraw[R'_1,\cdot]$.
At $R_1$: $-4+2=-2$. At $L^-(\rho)=i$: $2+2(-\tfrac12)=1$, from $\Kraw[R_1,L^-(\rho)]=-1$ and the adjacency of $L^-(\rho)$ and $L_1$. At
$L_1$: $-2+2(1+\tfrac12)-1$ or $-2+2(1+\tfrac12)+2(-\tfrac12)$, both $0$. The columns $L_s$, $L_a$, $R'_l$ and $L^+(\rho')$ are as in $RRL$.

\emph{$RLR$.} $\Kraw[R_b,\cdot]-2\Kraw[L,\cdot]$ with $L=L^+(\rho)=L^-(\rho')$: at $R_l\in\rho$, $2l-2l=0$; at $L$, $b-2(1+b/2)=-2$; at
$L^-(\rho)$, $-1-2(-\tfrac12)=0$; at $R'_l\in\rho'$, $-2(-1)=2$; at $L^+(\rho')$, $-2(-\tfrac12)=1$. For $\#L=1$, $\rho'=\rho$ and
$L^+(\rho')=L$: the entries add to $-1$ at $L$ and $2$ at each $R_l$, which is what the pattern gives when the dying site $L^+(\rho')$ is $L$
itself.
\end{proof}

\begin{proof}[Proof that $\operatorname{rank}B=r-1$ and that $Q^{-1}c$ spans $\ker B$]
The identity $A=B\Kraw$ gives
$(A\,|\,B)=B\,(\Kraw\,|\,I)$, where $(X\,|\,Y)$ denotes the $r\times2r$ matrix whose columns are those of $X$ followed by those of $Y$, with $(\Kraw\,|\,I)$ of full row rank, so $\operatorname{rank}B=\operatorname{rank}(A\,|\,B)$. The layered
triangulation of a hyperbolic once-punctured-torus bundle is geometric \cite{Gueritaud06}, so \cite[Proposition~2.3]{NZ85} applies. It states that for an
ideal triangulation of a hyperbolic manifold with $n$ tetrahedra and $h$ cusps, the matrix $R=(R'\,|\,R'')$ of the edge relations
$\prod_\nu z_\nu^{r'_{j\nu}}(1-z_\nu)^{r''_{j\nu}}=\pm1$ has rank $n-h$. Here $n=r$ and $h=1$. Their variables are $(\log z,\log(1-z))$ and
ours are $(Z,Z')=(\log z,\log(1-z)-\log z)$, so $(R'\,|\,R'')=(A-B\,|\,B)$, a column operation on $(A\,|\,B)$; hence
$\operatorname{rank}(A\,|\,B)=\operatorname{rank}R=r-1$, and $\operatorname{rank}B=r-1$.

Put $u=Q^{-1}c$: $u=b/2$ at $L^+(\rho)$ for each run $\rho$ of length $b$, $0$ at every other $L$, and $j$ at the $j$-th $R$ of every run.
Then $Bu=0$ row by row, from the list of patterns: $LLL$, $-2u_{i+1}=0$ at a middle $L$; $RRR$, $(j-1)-2j+(j+1)=0$; $RLL$, $b-2(b/2)=0$;
$LRR$, $-2+2=0$; $LLR$, $-2u_{L^-(\rho)}=0$, this $L$ not being an $L^+$; $RRL$, $(b-1)-2b+2(b/2)+1=0$, the $L$'s of $\varsigma$ other than
$L_1=L^+(\rho)$ contributing $0$; $LRL$, $-2+2\cdot\tfrac12+1=0$; $RLR$, $b-2(b/2)=0$. As $u\ne0$ and $\ker B$ is one-dimensional, $u$ spans
it.
\end{proof}

Positive definiteness follows from the Schur decomposition in the proof of the determinant law, every block being positive definite.

\emph{Uniqueness.} Write $K_0$ for the matrix $\Kraw$ of the theorem. Since $\operatorname{rank}B=r-1$ and $u:=Q^{-1}c$ spans $\ker B$, every
solution of $B\Kraw=A$ differs from $K_0$ by $uw^{\mathsf T}$ for some $w$, and symmetry forces $w=tu$; so the symmetric solutions are
$\Kraw(t)=K_0+t\,uu^{\mathsf T}$, the one-parameter family of Section~\ref{sub:charges}. As $K_0$ is invertible,
$\Kraw(t)=K_0\bigl(I+t\,K_0^{-1}uu^{\mathsf T}\bigr)$, where $K_0^{-1}uu^{\mathsf T}$ is a rank-one matrix. By the matrix determinant lemma $\det(I+ab^{\mathsf T})=1+b^{\mathsf T}a$, applied with $a=tK_0^{-1}u$ and $b=u$,
$\det\Kraw(t)=\det K_0\,\bigl(1+t\,u^{\mathsf T}K_0^{-1}u\bigr)$, with $u^{\mathsf T}K_0^{-1}u>0$ as $K_0$ is positive definite, so $\det K=2^r$ holds only at $t=0$.

\subsection{Proof of Proposition~\ref{prop:closed}}
\label{app:closed}

\begin{proof}[Proof of Proposition~\ref{prop:closed}]
$(\Kraw P)^2=I$ splits into (i) $V^2+CF=I$, (ii) $VC+CV^{\mathsf T}=0$, (iii) $FV+V^{\mathsf T}F=0$, with (i) equivalent to its transpose.

(iii): with $\mathbf1$ the all-ones vector, $Fe_0=2\mathbf1$ and $Fe_{m-1}=2v$, so $FV=2\mathbf1v^{\mathsf T}-2v\mathbf1^{\mathsf T}$ and $V^{\mathsf T}F=2v\mathbf1^{\mathsf
T}-2\mathbf1v^{\mathsf T}$, which sum to zero.

(ii): the interior rows of $C$ are $(-\tfrac12,1,-\tfrac12)$, annihilating both constants and linear ramps; the boundary rows give
$\tfrac{m+2}2-\tfrac12-\tfrac12=\tfrac m2$ and $-\tfrac12(m-1)+m-\tfrac12=\tfrac m2$. The remaining boundary entries vanish: row $0$ of $Cv$ is $\tfrac{m+2}2-1-\tfrac m2=0$, and row $m-1$ of $C\mathbf1$ is
$-\tfrac12+1-\tfrac12=0$. Hence $C\mathbf1=\tfrac m2e_0$ and $Cv=\tfrac m2e_{m-1}$, and
$VC+CV^{\mathsf T}=\tfrac m2(e_0e_{m-1}^{\mathsf T}-e_{m-1}e_0^{\mathsf T}+e_{m-1}e_0^{\mathsf T}-e_0e_{m-1}^{\mathsf T})=0$.

(i), interior rows: for $1\le i\le m-2$, $(CF)_{ij}=2\min(i+1,j+1)-\min(i,j+1)-\min(i+2,j+1)$, the second difference of $k\mapsto\min(k,j+1)$, which
vanishes except at the bend $i=j$ where it equals $1$; and $V^2$ has no support on those rows.

(i), boundary rows: expanding, $V^2=e_0v^{\mathsf T}-m\,e_0\mathbf1^{\mathsf T}-e_{m-1}v^{\mathsf T}+e_{m-1}\mathbf1^{\mathsf T}$, so
$(V^2)_{0j}=j+1-m$ and $(V^2)_{m-1,j}=-j$. Row $0$: $(CF)_{0j}=\tfrac{m+2}2F_{0j}-\tfrac12F_{1j}-\tfrac12F_{m-1,j}$ with $F_{0j}=2$, $F_{1j}=2$ for
$j=0$ and $4$ for $j\ge1$, $F_{m-1,j}=2(j+1)$; for $j=0$ this is $(m+2)-1-1=m$, for $j\ge1$ it is $(m+2)-2-(j+1)=m-j-1$. Row $m-1$:
$(CF)_{m-1,j}=-\tfrac12F_{m-2,j}+F_{m-1,j}-\tfrac12F_{0j}$ gives $j$ for $j\le m-2$ and $m$ for $j=m-1$. Adding $(V^2)_{0j}$ and $(V^2)_{m-1,j}$ gives $\delta_{0j}$ in row $0$ and $\delta_{m-1,j}$ in row $m-1$. At $m=2$ the two boundary rows are distinct and the computation above applies, the doubled edge entering as $C_{01}=-1$. At $m=1$,
where they coincide, $V=0$, $C=\tfrac12$ and $F=2$, so $\Kraw P\Kraw=P$ directly.
\end{proof}

\section{The periodic Jacobi matrix and the proof of the torsion formula}
\label{app:torsion}
\label{sub:jacobi}

This appendix proves the two halves (A) and (B) of Section~\ref{sec:torsion}, used in the proof of Theorem~\ref{thm:torsion}. It also proves
two lemmas on frieze rows, Lemmas~\ref{lem:cotsol} and~\ref{lem:floqvec}, which half (A) needs. These two lemmas and the identity for one $R$-run,
Lemma~\ref{lem:perrun}, are used again in the proof of Theorem~\ref{thm:Hall} (Appendix~\ref{app:Hall}).

Recall that $|A_-||A_+|=|\det(\varphi-1)\det(\varphi+1)|=(\tr\varphi)^2-4$. Here
$\mathcal H=W_c^2\det(K+D)\prod_a(1-x_a)$ by \eqref{eq:Habelian}, where $K$ is the run-structure level matrix of Theorem~\ref{thm:Kformula} and
$D=\operatorname{diag}\bigl(Q_a^2x_a/(1-x_a)\bigr)$ is the diagonal matrix by which the Hessian of the twisted superpotential differs from $K$, both indexed by the $r$ sites (letters) of the
word; $D_R$ and $D_L$ denote the diagonal blocks of $D$ on the $R$-sites and on the $L$-sites. Let $h=h_+$ be the expanding height function of
Theorem~\ref{prop:propagated}, so that $x=V^*$. Along an $R$-run from columns $(p,q)$, with $u_j=h(p+jq)/h(q)=u_0+j$, one has $D_j=u_j^2-1$
and $1-x_j=u_j^{-2}$; along an $L$-run from $(p',q')$, with $v_k=h(q'+kp')/h(p')=v_0+k$, one has $D_k=4/(v_k^2-1)$ and $1-x_k=(v_k^2-1)/v_k^2$.

We use the notation of Section~\ref{sec:torsion}: $D$, the fans $F_b$, the $L$-Schur complement $S$, its couplings $\beta$, its transfer matrix
$T$, and the two halves (A) and (B). We write $F$ for $F_b$ when the run is understood; $\beta_q$ is the coupling across the $q$-th
$R$-run ($q=1,\dots,n_R$). The \emph{Casoratian} of two solutions $f,g$ of a three-term recurrence $y_{i-1}+y_{i+1}=e_iy_i$ is $f_ig_{i+1}-f_{i+1}g_i$, the discrete Wronskian, constant in $i$. A
\emph{Floquet solution} is a solution with $y_{i+n}=\Lambda y_i$ for all $i$.

\subsection*{Two lemmas on frieze rows}
The mechanism of half (A), and of Theorem~\ref{thm:Hall}, is a single fact about a trigonometric frieze row, valid for any row and any word,
together with an explicit Floquet solution.

\begin{lemma}[The recurrence of a frieze row]
\label{lem:cotsol}
Let $\hat u_i=\sin\psi_i/\sin\vartheta$, $\psi_i=i\vartheta+\phi$, be a frieze row with $1-x_i=\hat u_{i-1}\hat u_{i+1}/\hat u_i^2$. The three-term
recurrence
\[
y_{i-1}+y_{i+1}=\frac{2}{1-x_i}\,y_i
\]
has the fundamental system $\mathfrak f_i=\cot\psi_i$, $\mathfrak g_i=i\cot\psi_i-\cot\vartheta$, with Casoratian $\mathfrak f_i\mathfrak g_{i+1}-\mathfrak f_{i+1}\mathfrak g_i=-1$. It is the recurrence of
the $L$-Schur complement $S_\alpha$ (Schur complement of $K+D$ at the class $\alpha$ whose coordinates give the frieze row) inside an $L$-run \textup{(}where $S_{ii}=4/(1-x_i)$ and the couplings are $-2$\textup{)}, and also, up to the factor $\tfrac12$, that of
$J=F^{-1}+D_R^{-1}$ inside an $R$-run \textup{(}off-diagonal $-\tfrac12$, diagonal $1+1/(\hat u_{i-1}\hat u_{i+1})=1/(1-x_i)$ at the rows $i<b$, and $\tfrac12+1/(\hat u_{b-1}\hat u_{b+1})$ at the last row\textup{)}, where
$F=(2\min(i,k))$ is the fan and $F^{-1}$ the path matrix $\tfrac12\operatorname{tridiag}(-1,2,-1)$ with last diagonal entry $\tfrac12$. On an $R$-run, $x_i$ in the lemma is the frieze variable and not the Bethe shape $x_{R_i}$; for the latter, $1/(1-x_{R_i})=\hat u_i^2$.
\end{lemma}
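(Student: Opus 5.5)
The plan is to verify the recurrence directly by trigonometric identities, and then to identify it with the rows of $S_\alpha$ and of $J$ by reading off Theorem~\ref{thm:Kformula}.

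\emph{Step 1: the coefficient.} From $1-x_i=\hat u_{i-1}\hat u_{i+1}/\hat u_i^2$ we get $2/(1-x_i)=2\sin^2\psi_i/(\sin\psi_{i-1}\sin\psi_{i+1})$. For $\mathfrak f$ the identity to prove is
\[
\cot\psi_{i-1}+\cot\psi_{i+1}=\frac{\sin(\psi_{i-1}+\psi_{i+1})}{\sin\psi_{i-1}\sin\psi_{i+1}}=\frac{\sin2\psi_i}{\sin\psi_{i-1}\sin\psi_{i+1}}.
\]
Its right side equals $2\sin^2\psi_i\cot\psi_i/(\sin\psi_{i-1}\sin\psi_{i+1})$, which is $\tfrac{2}{1-x_i}\mathfrak f_i$.

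\emph{Step 2: the second solution.} For $\mathfrak g_i=i\mathfrak f_i-\cot\vartheta$, substituting into the recurrence and using Step~1 reduces it to
\[
\mathfrak f_{i+1}-\mathfrak f_{i-1}=\cot\vartheta\Bigl(\frac2{1-x_i}-2\Bigr)=\frac{2\cot\vartheta\,\sin^2\vartheta}{\sin\psi_{i-1}\sin\psi_{i+1}},
\]
where the last equality uses the frieze identity $\sin^2\psi_i-\sin\psi_{i-1}\sin\psi_{i+1}=\sin^2\vartheta$. The left side is $\sin(\psi_{i-1}-\psi_{i+1})/(\sin\psi_{i-1}\sin\psi_{i+1})=-\sin2\vartheta/(\sin\psi_{i-1}\sin\psi_{i+1})$. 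Comparing the two gives a sign discrepancy. I therefore expect the correct second solution to be $\mathfrak g_i=i\cot\psi_i+\cot\vartheta$, or equivalently to need an orientation convention, and I will check the sign carefully at this point; this sign bookkeeping is the main obstacle.

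\emph{Step 3: the Casoratian.} Compute $\mathfrak f_i\mathfrak g_{i+1}-\mathfrak f_{i+1}\mathfrak g_i=\mathfrak f_i\mathfrak f_{i+1}\pm\cot\vartheta(\mathfrak f_i-\mathfrak f_{i+1})$. Using $\cot A\cot B+1=\cot(B-A)(\cot A-\cot B)$ with $B-A=\vartheta$, this equals $-1$ for the correct sign choice. The Casoratian is nonzero, so $\mathfrak f,\mathfrak g$ form a fundamental system; constancy in $i$ is automatic for a recurrence with unit off-diagonals.

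\emph{Step 4: identification with the matrices.} Inside an $L$-run, Theorem~\ref{thm:Kformula} gives diagonal $4$ and couplings $-2$. With $D=4x/(1-x)$ the diagonal becomes $4/(1-x_i)$, and the interior $L$-rows have no $R$-couplings, so they survive the Schur complement unchanged. Dividing by $-2$ then gives exactly the recurrence above.

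Inside an $R$-run, $F^{-1}$ is the path matrix stated in the lemma (the inverse of $2\min$, as in the proof of the determinant law). The diagonal of $D_R^{-1}$ is $(1-x_{R_i})/x_{R_i}$, with $x_{R_i}=1-1/\hat u_i^2$; this gives $1/(\hat u_i^2-1)=1/(\hat u_{i-1}\hat u_{i+1})$ by the frieze relation \eqref{eq:frieze}. Adding $1$ (respectively $\tfrac12$ at the last row) and using the frieze relation again yields $1/(1-x_i)$ with $x_i$ the frieze variable, so twice the row is the stated recurrence.
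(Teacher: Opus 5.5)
Your Steps 1, 3 and 4 are correct and follow the paper's route. The paper substitutes $y_i=z_i/\sin\psi_i$ and checks $z_i=\cos\psi_i$ and $z_i=i\cos\psi_i-\cot\vartheta\sin\psi_i$, which is the same verification as yours. But Step 2 contains a sign slip, and it leads you to reject a correct statement and propose a wrong replacement. Substituting $\mathfrak g_i=i\mathfrak f_i-\cot\vartheta$ into the recurrence and cancelling the $i$-terms by Step 1 gives
\[
\mathfrak f_{i+1}-\mathfrak f_{i-1}-2\cot\vartheta=-\frac{2}{1-x_i}\cot\vartheta,\qquad\text{i.e.}\qquad \mathfrak f_{i+1}-\mathfrak f_{i-1}=-\cot\vartheta\Bigl(\frac2{1-x_i}-2\Bigr)=-\frac{\sin2\vartheta}{\sin\psi_{i-1}\sin\psi_{i+1}},
\]
not $+\cot\vartheta\bigl(\tfrac{2}{1-x_i}-2\bigr)$. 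You computed the left side correctly as $-\sin2\vartheta/(\sin\psi_{i-1}\sin\psi_{i+1})$, so the two sides agree. The lemma's $\mathfrak g_i=i\cot\psi_i-\cot\vartheta$ is therefore a solution; there is no discrepancy and no orientation convention is needed.

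Your proposed fix $\mathfrak g_i=i\cot\psi_i+\cot\vartheta$ is not a solution. The same reduction for it would require $-\sin2\vartheta=\sin2\vartheta$, which fails for generic $\vartheta$. Your own Step 3 shows this as well. With $\mathfrak g_i=i\mathfrak f_i+\sigma\cot\vartheta$, the identity $\mathfrak f_i\mathfrak f_{i+1}=\cot\vartheta(\mathfrak f_i-\mathfrak f_{i+1})-1$ gives Casoratian exactly $-1$ for $\sigma=-1$. For $\sigma=+1$ it gives $2\cot\vartheta(\mathfrak f_i-\mathfrak f_{i+1})-1$, which is not constant in $i$, and that is impossible for two solutions of a recurrence with unit off-diagonals. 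As written, your Steps 2 and 3 contradict each other, and the proposal does not establish the lemma. Once the sign in Step 2 is corrected, the argument is complete and coincides with the paper's.
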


\begin{proof}
With $u_i=\sin\psi_i$ put $y_i=z_i/u_i$; the recurrence becomes
$u_{i+1}z_{i-1}+u_{i-1}z_{i+1}=2u_iz_i$, and $z_i=\cos\psi_i$ solves it by $\sin(\psi+\vartheta)\cos(\psi-\vartheta)+\sin(\psi-\vartheta)\cos(\psi+\vartheta)=\sin2\psi$,
while $z_i=i\cos\psi_i-\cot\vartheta\sin\psi_i$ solves it because the extra terms are
$\sin(\psi-\vartheta)\cos(\psi+\vartheta)-\sin(\psi+\vartheta)\cos(\psi-\vartheta)=-\sin2\vartheta$ and $-2\cot\vartheta(\sin^2\psi-\sin^2\vartheta)+2\cot\vartheta\sin^2\psi=\sin2\vartheta$.
The Casoratian is $\cot\psi_0\cot\psi_1+\cot\vartheta(\cot\psi_1-\cot\psi_0)=(\cos\psi_0\cos\psi_1-\cos\vartheta)/(\sin\psi_0\sin\psi_1)=-1$. That the recurrence is the one of $S_\alpha$ inside an $L$-run follows from Theorem~\ref{thm:Kformula}: there $K_{ii}=4$,
$K_{i,i\pm1}=-2$, so $S_{ii}=4+D_i=4+4x_i/(1-x_i)=4/(1-x_i)$ and the row of $S_\alpha$ is $-2y_{i-1}+\tfrac4{1-x_i}y_i-2y_{i+1}=0$. That it is
also the recurrence of $J=F^{-1}+D_R^{-1}$ inside an $R$-run follows from the inverse of the fan matrix, $F^{-1}=\tfrac12\operatorname{tridiag}(-1,2,-1)$ with
last diagonal entry $\tfrac12$, together with $D_j^{-1}=(1-x_{R_j})/x_{R_j}=1/(\hat u_j^2-1)=1/(\hat u_{j-1}\hat u_{j+1})$ by the frieze relation.
\end{proof}

Fix a word and the slope data of Theorem~\ref{thm:allwords}: at the $t$-th tetrahedron the basis $(p_t,q_t)$, the kept slope
$\mathsf k_t$, the middle slope $\mathsf m_t=p_t+q_t$. Fix also a class $\alpha$ of that theorem, with coordinates $x$, and write $S$ for the
$L$-Schur complement of $K+D$ with $D=\operatorname{diag}(Q_a^2x_a/(1-x_a))$ evaluated at those coordinates. It is a periodic Jacobi matrix
(Section~\ref{sec:torsion}) on the $L$-sites, with diagonal $4/(1-x)$ and couplings $-2$ inside the $L$-runs, and across each $R$-run of
length $b$ the two corner corrections and the coupling
\begin{equation}
S_{L_aL_a}=\frac4{1-x_{L_a}}+J^{11}-2,\qquad S_{L'_1L'_1}=\frac4{1-x_{L'_1}}+J^{bb},\qquad \beta_q=-J^{1b},
\label{eq:corners}
\end{equation}
for the $q$-th $R$-run, with $J^{ik}:=(J^{-1})_{ik}$. Here $\beta_q$ is the coupling of Section~\ref{sec:torsion} across the run, written in a second form.
In \eqref{eq:corners} $L_a$ is the last $L$ before the run, $L'_1$ the first after it, and $J=F^{-1}+D_R^{-1}$ on the run. To see this, the Schur
complement subtracts $\mathsf r^{\mathsf T}G\mathsf r$, $s^{\mathsf T}Gs$ and $\mathsf r^{\mathsf T}Gs$ from the entries $(L'_1,L'_1)$, $(L_a,L_a)$ and
$(L_a,L'_1)$ of $K+D$, where $G=(F+D_R)^{-1}$ and $\mathsf r=Fe_b$, $s=-Fe_1$ are the columns of $K$ coupling $L'_1$ and $L_a$ to the run
(Theorem~\ref{thm:Kformula}); and $FGF=F-J^{-1}$, an instance of the Woodbury matrix identity, here simply
$F-FGF=F(F+D_R)^{-1}D_R=\bigl(D_R^{-1}(F+D_R)F^{-1}\bigr)^{-1}=(F^{-1}+D_R^{-1})^{-1}$. So the three entries change by $-(2b-J^{bb})$, $-(2-J^{11})$ and $+(2-J^{1b})$: the $-2b$ cancels the $2b$ in $K_{L'_1L'_1}=4+2b$, the
$+2$ cancels the cyclic coupling $-2$, and the $-2$ at $L_a$ remains, which is \eqref{eq:corners}.

\begin{lemma}[The universal Floquet vector]
\label{lem:floqvec}
Let $\theta:\Z^2\to\C$ be additive, put $f(e)=\sin\theta(e)$, and let $x_t=f(\mathsf k_t)^2/f(\mathsf m_t)^2$ at $L$-flips and
$x_t=1-f(\mathsf k_t)^2/f(\mathsf m_t)^2$ at $R$-flips along the developed word, all finite and different from $0$ and $1$, and suppose $\det(F+D_R)\ne0$ on every $R$-run, so that the corner entries \eqref{eq:corners} are defined; the case
$\theta(e)=\pi\langle v,e\rangle/N$, $v\in\Z^2$, $N\ge1$, is the closed form of Theorem~\ref{thm:allwords}. Let $h$ be any linear functional on
$\Z^2$, and put
\[
H(e)=h(e)\cot\theta(e),\qquad Y_t=H(\mathsf m_t)-H(\mathsf k_t)\quad(t\text{ an }L\text{-site}).
\]
Index the $L$-sites along the developed word, i.e.\ along the infinite periodic sequence of letters obtained by repeating the word (the \emph{unrolled} recurrence).
The lemma is local: it does not assume the closing-up condition $Y_{t+n}=\Lambda Y_t$, $n=\#L$, which holds only for particular $\theta$ and $h$ and is verified in each application (the proof of (A) below, and Theorem~\ref{thm:Hall}). Then $SY=0$, identically in $\theta$ and $h$, where $S$ here denotes the infinite three-term operator along the developed word whose rows are those of $S$, given locally. This holds for the interior rows $-2Y_{i-1}+\tfrac4{1-x_i}Y_i-2Y_{i+1}$ and for the junction rows, which carry the corner
entries \eqref{eq:corners}.
\end{lemma}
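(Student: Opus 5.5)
The plan is to verify $SY=0$ row by row, splitting into interior rows of an $L$-run and the two kinds of junction rows. Throughout I use the Ptolemy identity (a) from the proof of Theorem~\ref{thm:allwords} in its cotangent form, and the fundamental system of Lemma~\ref{lem:cotsol}.

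\emph{Interior rows.} Fix an $L$-run with kept slope $p$, entered from the basis $(p,q)$, and middle slopes $c_i=q+ip$. Put $\vartheta=\theta(p)$ and $\psi_i=\theta(c_i)=\theta(q)+i\vartheta$. Then $1-x_i=\hat u_{i-1}\hat u_{i+1}/\hat u_i^2$ with $\hat u_i=\sin\psi_i/\sin\vartheta$. This is exactly the frieze row of Lemma~\ref{lem:cotsol}, and the interior row of $S$ is its recurrence. Write $h(c_i)=h(q)+ih(p)$, which gives
\[
Y_i=h(q)\cot\psi_i+h(p)\bigl(i\cot\psi_i-\cot\vartheta\bigr)=h(q)\,\mathfrak f_i+h(p)\,\mathfrak g_i,
\]
a combination of the fundamental system. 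Hence the interior rows vanish identically in $\theta$ and $h$. Additivity of $\theta$ and linearity of $h$ are all that enter here.

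\emph{Junction rows.} Take an $R$-run of length $b$ with kept slope $c_a$ and fan $g_j=p+jc_a$. Let $L_a$ be the last $L$ before it, with $Y_{L_a}=H(c_a)-H(p)$, and $L'_1$ the first $L$ after it, at basis $(g_b,c_a)$, with $Y_{L'_1}=H(g_{b+1})-H(g_b)$. By \eqref{eq:corners} the rows at $L_a$ and $L'_1$ are
\[
-2Y_{L_{a-1}}+\Bigl(\tfrac4{1-x_{L_a}}+J^{11}-2\Bigr)Y_{L_a}+J^{1b}Y_{L'_1}=0
\]
together with its mirror at $L'_1$. So I need the corner entries $J^{11},J^{bb},J^{1b}$ of $J^{-1}$, where $J=F^{-1}+D_R^{-1}$ on the run. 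By Lemma~\ref{lem:cotsol}, $2J$ is the frieze recurrence of the $R$-chain with phases $\theta(g_j)$ and angle $\theta(c_a)$, with modified boundary rows: the first row lacks the coupling to $j=0$, and the last row has diagonal $\tfrac12+\cdots$. I will build the Green's function from the fundamental system $\mathfrak f,\mathfrak g$ of that chain and its Casoratian $-1$. I take the solution satisfying the left boundary condition, which is an extension to $j=0$ vanishing there, and the solution satisfying the right boundary condition. Then $J^{ik}$ is the product of these two solutions divided by the Casoratian, and the corners are ratios of cotangents of $\theta(g_0),\theta(g_1),\theta(g_b),\theta(g_{b+1})$ and $\theta(c_a)$.

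To close the row at $L_a$, I use the interior identity to rewrite $-2Y_{L_{a-1}}+\tfrac4{1-x_{L_a}}Y_{L_a}$ as $2Y_{L_{a+1}}$, where $Y_{L_{a+1}}$ is the extension $H(c_{a+1})-H(p)$ of the $L$-row one step past the run. The row then reduces to
\[
2\bigl(H(c_{a+1})-H(p)\bigr)+(J^{11}-2)\bigl(H(c_a)-H(p)\bigr)+J^{1b}\bigl(H(g_{b+1})-H(g_b)\bigr)=0.
\]
Here $c_{a+1}=g_1$, so this is a linear relation among $H$ on the slopes $p=g_0,g_1,c_a,g_b,g_{b+1}$. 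Both $h$ and the cotangents are affine along the fan $g_j$, which puts the $H$-values into the span of $\mathfrak f,\mathfrak g$ of the $R$-chain, so the relation should follow from the Green's-function formula. The row at $L'_1$ is handled symmetrically, using the extension of the next $L$-run backwards to $H(g_b)$ or $H(c_a)$. For a singleton $L$-run both corner corrections land on one row, and by linearity the two computations add, exactly as in the proof of Theorem~\ref{thm:allwords}.

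I expect the main obstacle to be the corner computation: showing that $J^{11}$, $J^{bb}$ and $J^{1b}$ have exactly the cotangent form that makes the junction relation hold identically in $h$. Since the relation is linear in $h$ and $h$ is two-dimensional, I will first check it for $h$ equal to the two coordinate functionals in the basis $(g_0,c_a)$. That reduces it to two trigonometric identities in the angles $\theta(g_0)$ and $\theta(c_a)$, which I will verify through the Casoratian identity of Lemma~\ref{lem:cotsol}. The hypothesis $\det(F+D_R)\ne0$ guarantees that $J$ is invertible and that the two boundary solutions are independent.
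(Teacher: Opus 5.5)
\textbf{The sign of the cross-run coupling is wrong.} Your interior-row argument is the paper's. The junction step, however, has a sign error that makes the identity you set out to verify false. By \eqref{eq:corners} the coupling across the $R$-run is $\beta_q=-J^{1b}$. It comes from $K_{L_aL'_1}=-2$ minus $\mathsf r^{\mathsf T}Gs=-(F-J^{-1})_{b1}=-(2-J^{1b})$. So the row at $L_a$ is $-2Y_{a-1}+\bigl(\tfrac4{1-x_{L_a}}+J^{11}-2\bigr)Y_a-J^{1b}Y'_1$. After your elimination of $Y_{a-1}$ the relation to prove is
\[
2Y_{a+1}+(J^{11}-2)Y_a-J^{1b}Y'_1=0 .
\]
Your version, with $+J^{1b}Y'_1$, differs from this by $2J^{1b}Y'_1$. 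That term is generically nonzero: already at $b=1$ one has $J^{1b}=2x_R/(2-x_R)\neq0$, and $Y'_1=H(g_2)-H(g_1)$ is generically nonzero. So the Green's-function verification you plan would not close. The same correction applies to the mirror row at $L'_1$, whose coupling term is $-J^{1b}Y_a$.

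There is also a smaller imprecision. It is not $H(g_k)$ that lies in the span of the fundamental system but $\eta_k:=H(g_k)-H(c_a)=h(p)\mathfrak f_k+h(c_a)\mathfrak g_k$; the constant $-\cot\vartheta'$ in $\mathfrak g_k$ absorbs $H(c_a)$. Your junction relations involve only differences of $H$, so this is harmless once stated correctly.

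\textbf{Comparison of routes.} With the sign fixed, your route works but is heavier than necessary. The paper never computes $J^{11}$, $J^{1b}$ or $J^{bb}$. Since $\eta$ solves the $R$-row recurrence, the interior rows of $J\eta$ vanish and the two boundary rows give $J\eta=\tfrac12\eta_0e_1+\tfrac12(\eta_{b+1}-\eta_b)e_b$. Now take the first and $b$-th components of $\eta=J^{-1}(J\eta)$. With $\eta_0=-Y_a$, $\eta_1=Y_{a+1}-Y_a$, $\eta_b=-Y'_0$ and $\eta_{b+1}-\eta_b=Y'_1$, these two components are exactly the two junction relations. This uses only that $J$ is invertible, which is $\det(F+D_R)\neq0$. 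It removes the explicit Green's function and the two trigonometric identities you identify as the main obstacle. Your approach would additionally yield closed forms for the corner entries, but the lemma does not need them.
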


\begin{proof}
\emph{Interior rows.} Along an $L$-run with kept slope $p$, entered from the basis $(p,q)$, the middle slopes are $c_i=q+ip$, so with
$\vartheta=\theta(p)$, $\phi=\theta(q)$ and $\psi_i=i\vartheta+\phi$,
\[
Y_i=\bigl(h(q)+ih(p)\bigr)\cot\psi_i-h(p)\cot\vartheta=h(q)\mathfrak f_i+h(p)\mathfrak g_i
\]
in the fundamental system of Lemma~\ref{lem:cotsol}, whose row is $\hat u_i=f(c_i)/f(p)=\sin\psi_i/\sin\vartheta$ with
$1-x_{L_i}=\hat u_{i-1}\hat u_{i+1}/\hat u_i^2$ (Ptolemy). So the interior rows hold, and the formal extensions $Y_0$, $Y_{a+1}$ of the
run's solution are the same expression at $c_0=q$ and $c_{a+1}=q+(a+1)p$.

\emph{Junction rows.} Let the $R$-run following the run have length $b$, kept slope $q'=c_a$ and middle slopes $w_j=p+jq'$
($w_0=p$, $w_1=c_{a+1}$), so $1-x_{R_j}=\hat v_j^{-2}$ with $\hat v_j=f(w_j)/f(q')=\sin\chi_j/\sin\vartheta'$, $\chi_j=j\vartheta'+\vartheta$,
$\vartheta'=\theta(q')$; the next $L$-run has basis $(w_b,q')$ and kept slope $w_b$. Eliminating $Y_{a-1}$ and $Y'_2$ by the
interior recurrence, the two rows of $S$ at the junction read
\[
E_a=(J^{11}-2)Y_a+2Y_{a+1}-J^{1b}Y'_1,\qquad E'_1=J^{bb}Y'_1+2Y'_0-J^{1b}Y_a ,
\]
with $Y_a=H(q')-H(p)$, $Y_{a+1}=H(w_1)-H(p)$, $Y'_1=H(w_{b+1})-H(w_b)$, $Y'_0=H(q')-H(w_b)$. Here $J^{ik}$ denotes the entries of $J^{-1}$. Put $\eta_k=H(w_k)-H(q')=h(p)\mathfrak f_k+h(q')\mathfrak g_k$, $k=0,\dots,b+1$, with $\mathfrak f_k=\cot\chi_k$ and $\mathfrak g_k=k\cot\chi_k-\cot\vartheta'$. By Lemma~\ref{lem:cotsol} the sequence $\eta$ solves the recurrence of the $R$-row. So the interior rows of $J\eta$ vanish, and the first and last rows give
\[
J\eta=\tfrac12\eta_0\,e_1+\tfrac12(\eta_{b+1}-\eta_b)\,e_b .
\]
Here $\eta_0=-Y_a$, $\eta_1=Y_{a+1}-Y_a$, $\eta_b=-Y'_0$ and $\eta_{b+1}-\eta_b=Y'_1$. The first component of $\eta=J^{-1}(J\eta)$ reads $Y_{a+1}-Y_a=-\tfrac12J^{11}Y_a+\tfrac12J^{1b}Y'_1$, which is $E_a=0$. The $b$-th component reads $-Y'_0=-\tfrac12J^{1b}Y_a+\tfrac12J^{bb}Y'_1$, which is $E'_1=0$. A singleton $L$-run carries both corrections at
one site, and its row is $E'_1+E_a$ of the two adjacent junctions, because the interior recurrence at $i=1$ is an identity for the
formal extensions. The identities hold wherever $J^{-1}$ exists.
\end{proof}

\emph{Proof of (A).} At the $L$-site $t$, with kept slope $\mathsf k_t$ and middle slope $\mathsf m_t$, set $Y^-_t=1/(h(\mathsf k_t)h(\mathsf m_t))$, and extend $Y^-$ to the $L$-sites of consecutive periods of the word by $t\mapsto\varphi t$. Then $Y^-$
solves the three-term recurrence of $S$ along the unrolled sequence: for every $L$-site $t$, with the indices running on through consecutive periods,
\[
\beta_{t-1}Y^-_{t-1}+d_tY^-_t+\beta_tY^-_{t+1}=0 ,
\]
including at the junction rows, where one $L$-run ends and the next begins. Since $h(\varphi e)=\lambda h(e)$, one has $Y^-_{t+n}=\lambda^{-2}Y^-_t$ with $n=\#L$: $Y^-$ is a Floquet solution of $S$ with multiplier $\lambda^{-2}$. This is the limit $\epsilon\to0$, taken row by row, of Lemma~\ref{lem:floqvec}, applied with the additive angle
$\theta(e)=\epsilon h(e)$, $\epsilon>0$ small. As $\epsilon\to0$ the coordinates $\sin^2(\epsilon h(\mathsf k_t))/\sin^2(\epsilon h(\mathsf m_t))$ tend to $h(\mathsf k_t)^2/h(\mathsf m_t)^2$,
which is $V^*$ (Theorem~\ref{prop:propagated}); and with the contracting height $h'$ in place of the lemma's $h$,
$\epsilon\bigl(H(\mathsf m_t)-H(\mathsf k_t)\bigr)\to h'(\mathsf m_t)/h(\mathsf m_t)-h'(\mathsf k_t)/h(\mathsf k_t)$, a constant multiple of $Y^-_t$, because
$h'(\mathsf m_t)h(\mathsf k_t)-h'(\mathsf k_t)h(\mathsf m_t)$ is a constant multiple of $\det(\mathsf k_t,\mathsf m_t)=\det(\mathsf k_t,\mathsf m_t-\mathsf k_t)=1$.
Since $h(\varphi v)=\lambda h(v)$, $Y^-|_L$ is a Floquet solution with multiplier $\lambda^{-2}$. $S$ is symmetric, so the transfer matrix has
determinant $1$, the other multiplier is $\lambda^2$, and $\tr T=\lambda^2+\lambda^{-2}$. Hill's formula $\det S=(-1)^n\prod\beta_i(\tr T-2)$
then gives (A), since
\[
\lambda^2+\lambda^{-2}-2=(\tr\varphi)^2-4=|A_-||A_+| .
\]
Inside $L$-runs the recurrence has fundamental system $1/v_k$, $v_k^2-1$ (Casoratian $3$), and $Y^-\propto1/v_k$ there.

\emph{Reduction of (B).} The $L$-runs give $\prod_L(1-x)=\prod_qv_0v_{a+1}/(v_1v_a)$, since $1-x_k=v_{k-1}v_{k+1}/v_k^2$. Over one period all
intermediate heights cancel between consecutive $R$- and $L$-run factors, leaving
$\bigl[h(p)h(q)/h(p+q)\bigr]\bigl[h(\varphi p+\varphi q)/(h(\varphi p)h(\varphi q))\bigr]=1/\lambda$; the $(-2)$'s of the $\#L-n_R$ couplings inside
$L$-runs and the $n_R$ factors $-\beta_q/2$, one for each $R$-run, combine with $(-1)^{\#L}$ to $2^{\#L}$. So (B) follows from a single identity for each $R$-run.

\begin{lemma}[The identity for one $R$-run, for any frieze row]
\label{lem:perrun}
Let $(u_j)_{j\in\Z}$ be any solution of the frieze relation $u_j^2-u_{j-1}u_{j+1}=1$ \textup{(}Remark~\ref{rem:friezedef}\textup{)}, and
for $b\ge0$ let $D_R=\operatorname{diag}(u_1^2-1,\dots,u_b^2-1)$, $1-x_j=u_j^{-2}$ $(j=1,\dots,b)$, and $\beta_q=-2-\mathsf r^{\mathsf T}(F_b+D_R)^{-1}s$,
with $\mathsf r=2(1,\dots,b)^{\mathsf T}$ and $s=-2\cdot\mathbf1$ the ramp and the constant row of Theorem~\ref{thm:Kformula}, the coupling of
the $L$-Schur complement across the run. Then
\begin{equation}
G_b(u):=\det(F_b+D_R)\prod_{j=1}^b(1-x_j)\cdot\bigl(-\tfrac{\beta_q}2\bigr)=\frac{u_0\,u_{b+1}}{u_1\,u_b}.
\label{eq:perrun}
\end{equation}
For the affine row $u_j=u+j$ of the expanding
eigenvector the right side is $u(u+b+1)/\bigl((u+1)(u+b)\bigr)$. The identity \eqref{eq:perrun} is also valid for the trigonometric row
$u_j=\sin(j\vartheta+\phi)/\sin\vartheta$ of a general Bethe class \textup{(}Theorem~\ref{thm:friezeall},
Remark~\ref{rem:friezedef}\textup{)}, with the same formula in terms of these $u_j$.
\end{lemma}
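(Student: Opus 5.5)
We compute each of the three factors of $G_b(u)$ in closed form, using only the frieze relation $u_j^2-u_{j-1}u_{j+1}=1$ and the structure of the fan.

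\emph{The determinant.} Factor $\det(F_b+D_R)=\det F_b\,\det D_R\,\det J$, where $J=F_b^{-1}+D_R^{-1}$ is the matrix of Lemma~\ref{lem:cotsol}. Here $\det F_b=2^b$, since $F_b=2M_b$ and $\det M_b=1$. The frieze relation gives $D_j=u_j^2-1=u_{j-1}u_{j+1}$, so $\det D_R=\prod_{j=1}^b u_{j-1}u_{j+1}$. Also $\prod_{j=1}^b(1-x_j)=\prod_j u_j^{-2}$. Combining,
\[
\det D_R\prod_j(1-x_j)=\frac{u_0u_{b+1}}{u_1u_b}.
\]
This is already the right side of \eqref{eq:perrun}. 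What remains to prove is
\[
2^b\det J\cdot\bigl(-\tfrac{\beta_q}{2}\bigr)=1 .
\]

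\emph{The coupling.} By the Woodbury step recorded before Lemma~\ref{lem:floqvec}, $\beta_q=-J^{1b}$, where $J^{1b}=(J^{-1})_{1b}$. For a tridiagonal matrix the corner entry of the inverse is
\[
(J^{-1})_{1b}=\frac{\prod_{i=1}^{b-1}(-J_{i,i+1})}{\det J}.
\]
The off-diagonal entries of $J$ are all $-\tfrac12$, so $J^{1b}=2^{-(b-1)}/\det J$. Hence
\[
-\tfrac{\beta_q}{2}=\frac{2^{-b}}{\det J},\qquad 2^b\det J\cdot\bigl(-\tfrac{\beta_q}{2}\bigr)=1,
\]
identically, with no further information about the row needed.

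\emph{Hypotheses and checks.} The argument only requires $\det J\ne0$ and all $u_j\ne0$, which is where $\det(F+D_R)\ne0$ enters. It uses the frieze relation solely through $D_j=u_{j-1}u_{j+1}$. It therefore applies verbatim to the affine row and to the trigonometric row, which settles the last two sentences of the lemma. I will sanity-check the case $b=1$ directly: $F+D=u_1^2+1$, $1-x_1=u_1^{-2}$, and $\beta=-2+4/(u_1^2+1)$. The product equals $(u_1^2-1)/u_1^2=u_0u_2/u_1^2$, as required. The case $b=0$ I will treat by convention: empty products, $\beta_q=-2$, and right side $u_0u_1/(u_1u_0)=1$.

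\emph{Main obstacle.} The delicate point is bookkeeping rather than substance. I must confirm that the Woodbury form $\beta_q=-2-\mathsf r^{\mathsf T}(F+D_R)^{-1}s$ really equals $-J^{1b}$ with the sign conventions of \eqref{eq:corners}. Concretely, I will check $\mathsf r=Fe_b$, $s=-Fe_1$, and $FGF=F-J^{-1}$, which gives
\[
\mathsf r^{\mathsf T}Gs=-e_b^{\mathsf T}(F-J^{-1})e_1=-2+J^{1b},
\]
and hence $\beta_q=-J^{1b}$. I must also confirm that the corner formula for the tridiagonal inverse carries no stray sign for the $\pm\tfrac12$ off-diagonals.
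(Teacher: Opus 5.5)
Your proof is correct, and it reaches the identity by a different route from the paper's. You factor $F_b+D_R=F_b\,(F_b^{-1}+D_R^{-1})\,D_R$ and combine $\beta_q=-J^{1b}$ with the cofactor formula for the corner of the inverse of a tridiagonal matrix. Your sign is right: deleting row $b$ and column $1$ of $J$ leaves a triangular matrix with the superdiagonal entries $-\tfrac12$ on its diagonal, so $(J^{-1})_{1b}=(-1)^{b+1}(-\tfrac12)^{b-1}/\det J=2^{1-b}/\det J$.

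This shows in closed form that $\det(F_b+D)\,\bigl(-\tfrac{\beta_q}{2}\bigr)=\det D$ for an arbitrary diagonal $D$. The frieze relation is then used exactly once, to telescope $\prod_j(u_j^2-1)/u_j^2$.

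The paper instead forms the bordered matrix $B_b=\begin{psmallmatrix}F_b+D&s\\ \mathsf r^{\mathsf T}&-2\end{psmallmatrix}$, with $\det B_b=\det(F_b+D)\,\beta_q$, and eliminates the first $R$-site. After a column operation the Schur complement is $B_{b-1}$ for the row shifted by one index, with one column rescaled. This gives $\det B_b=(u_1^2-1)\det B'_{b-1}=u_0u_2\det B'_{b-1}$, and the result follows by induction on $b$. Unwinding that recursion gives $\det B_b=-2\det D$, which is exactly your identity, so both proofs rest on the same fact.

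Yours is non-inductive, makes the independence from the particular row transparent, and reuses the corner description \eqref{eq:corners} that Lemma~\ref{lem:floqvec} needs anyway. The paper's elimination never inverts $D_R$; it only needs the successive pivots $u_j^2+1$ to be nonzero, and removes that case by continuity.

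Your route does require $D_R$ invertible, i.e.\ $u_0,\dots,u_{b+1}\ne0$, which holds wherever the lemma is applied. In general the restriction is harmless: $\det B_b+2\det D$ is a polynomial in the entries $D_1,\dots,D_b$ that vanishes wherever $\det D\,\det(F_b+D)\ne0$, hence identically.
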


\begin{proof}
The argument below uses the chain only through $u_1^2-1=u_0u_2$ --- the frieze relation --- and through the shift of the row by one index, so it is
insensitive to which frieze row $(u_j)$ is. Let $B_b(u)=\begin{psmallmatrix}F_b+D&s\\\mathsf r^{\mathsf T}&-2\end{psmallmatrix}$, so $\det B_b=\det(F_b+D)\,\beta_q$ and
$G_b=-\tfrac12\det B_b\prod_ju_j^{-2}$, with $G_0=1$. Eliminate the first $R$-site: its pivot is $\pi=2+u_1^2-1=u_1^2+1$, assumed nonzero (both sides of the resulting identity are polynomial in the $u_j$, so the case $\pi=0$ follows by continuity), and it couples by $2$ to every
other $R$-site, by $-2$ to the border column and by $2$ to the border row. Writing $2\min(i,j)=2+2\min(i-1,j-1)$ and putting $c:=2-4/\pi$ (a scalar
used only in this proof), the Schur complement is
$\begin{psmallmatrix}F_{b-1}+D'+c\mathbf1\mathbf1^{\mathsf T}&-c\mathbf1\\\mathsf r'^{\mathsf T}+c\mathbf1^{\mathsf T}&-c\end{psmallmatrix}$, with $D'$ the
same diagonal in the shifted variable $u+1$ and $\mathsf r'=2(1,\dots,b-1)$. Adding the border column to every $R$-column removes $c\mathbf1\mathbf1^{\mathsf
T}$ and the $c\mathbf1^{\mathsf T}$ in the border row, leaving $B_{b-1}(u+1)$ with its last column scaled by $c/2$. Hence
$\det B_b=\pi\tfrac c2\det B_{b-1}'=(u_1^2-1)\det B_{b-1}'=u_0u_2\det B_{b-1}'$, where $B'$ is the same expression for the row shifted by one index, so
$G_b=\tfrac{u_0u_2}{u_1^2}G_{b-1}'$, which telescopes to $\prod_{j=1}^bu_{j-1}u_{j+1}/u_j^2=u_0u_{b+1}/(u_1u_b)$. The step $\pi c/2=\pi-2=u_1^2-1=u_0u_2$
is the frieze relation and is the only place the chain enters.
\end{proof}

\section{Proof of Hill's formula at a general class}
\label{app:Hall}

This appendix proves Theorem~\ref{thm:Hall}, which gives the handle-gluing values $\mathcal H_\alpha=2^{\#L}|A_\mp|$ used in
Sections~\ref{sec:spectrum}, \ref{sec:count} and~\ref{sec:modular}.

\begin{proof}[Proof of Theorem~\ref{thm:Hall}]
\emph{Non-degeneracy.} Every kept, dying and newborn slope of the word is the middle slope of another tetrahedron: the newborn slope
of step $t$ is the middle of step $t+1$, and a basis vector at step $t$ is the middle slope of the last flip that replaced that column
(the word contains both letters, so every column is replaced; when that flip lies in the previous period of the developed word, $f\ne0$ there follows from the periodicity (b) in the proof of Theorem~\ref{thm:allwords}). So $f(\mathsf m_t)\ne0$ for all $t$ gives $f\ne0$ on all four slopes of every
tetrahedron, and by Ptolemy $x_t\notin\{0,1,\infty\}$: the class has all coordinates finite and nonzero, the hypothesis of
Lemma~\ref{lem:floqvec}, and $D_R$ is invertible. The second hypothesis is what defines $S$ (it is automatic for $b=1$, where
$J=(2-x_R)/2x_R$ and $x_R<1$, since $1-x_R$ is a ratio of squares of real numbers).

\emph{Periodicity.} Let $h$ be the height functional of Theorem~\ref{prop:propagated}, $h(\varphi e)=\lambda h(e)$, and $h'$ the
contracting one. Developing the word once more sends every slope $e$ to $\varphi e$, and
$\langle v,\varphi e\rangle=\langle\varphi^{-1}v,e\rangle\equiv\pm\langle v,e\rangle\pmod N$ for $v\in\mathcal K_\mp$, so, $\cot$ being
$\pi$-periodic, $H(\varphi e)=\pm\lambda H(e)$. Hence the vectors $Y^\pm$ of Lemma~\ref{lem:floqvec} built from $h$ and $h'$ satisfy
$Y^\pm_{t+r}=\pm\lambda^{\pm1}Y^\pm_t$: they are Floquet solutions of $S$ with multipliers $\pm\lambda$ and $\pm\lambda^{-1}$. They are
not both zero: $Y^+\equiv Y^-\equiv0$ would force, $h,h'$ being a basis of the dual, $\cot(\pi\langle v,\mathsf m_t\rangle/N)\,\mathsf m_t=
\cot(\pi\langle v,\mathsf k_t\rangle/N)\,\mathsf k_t$ for every $L$-site, i.e.\ both cotangents zero, i.e.\ $\langle v,\mathsf m_t\rangle\equiv
\langle v,\mathsf k_t\rangle\equiv N/2$, and then $\langle v,\mathsf m_t-\mathsf k_t\rangle\equiv0$ for the dying slope $\mathsf m_t-\mathsf k_t$, which is
excluded. A nonzero Floquet solution with multiplier $\Lambda$ gives $\tr T_\alpha=\Lambda+\Lambda^{-1}$ ($\det T_\alpha=1$), so
\[
\Delta_\alpha=\tr T_\alpha-2=\pm(\lambda+\lambda^{-1})-2=\pm\tr\varphi-2=\pm|\det(\varphi\mp1)|=\pm|A_\mp| .
\]
Hill's formula $\det S=(-1)^{\#L}\prod\beta_i\,\Delta_\alpha$ holds with the couplings of \eqref{eq:corners}, including the folded cases
$\#L=1,2$. For $\#L=1$, $\det S=d+2\beta=-\beta(\tr T-2)$, and for $\#L=2$, $\det S=d_1d_2-(\beta_1+\beta_2)^2=\beta_1\beta_2(\tr T-2)$; the couplings are nonzero, since $G_b\ne0$ and $\det(F+D_R)\ne0$ by hypothesis, so the transfer
matrix $T_\alpha$ is defined.

\emph{$P_\alpha$.} By Lemma~\ref{lem:perrun} on the $R$-row $\hat v_j$ and the $L$-run product $\prod_L(1-x)=\hat u_0\hat u_{a+1}/(\hat
u_1\hat u_a)$, the $(-2)$'s and $\beta_q$'s combining to $2^{\#L}$ exactly as in the reduction of (B),
\[
P_\alpha=2^{\#L}\prod_{L\text{-runs}}\frac{f(q)\,f(c_{a+1})}{f(c_1)\,f(c_a)}\prod_{R\text{-runs}}\frac{f(p)\,f(g_{b+1})}{f(g_1)\,f(g_b)} .
\]
Since $c_{a+1}=g_1$, $c_a=q'$ is the $q$ of the next $L$-run (an $R$-run keeps it), $g_b$ is the $p$ of the next $L$-run and $g_{b+1}$
its $c_1$, the product telescopes around the word to
\[
\frac{f(p_0)}{f(\varphi p_0)}\cdot\frac{f(q_0)}{f(\varphi q_0)}\cdot\frac{f(\varphi c_1)}{f(c_1)},\qquad c_1=p_0+q_0 ,
\]
for the basis $(p_0,q_0)$ at the start of the first $L$-run. By periodicity (b) of Theorem~\ref{thm:allwords},
$f(\varphi e)=\pm(-1)^{\langle w,e\rangle}f(e)$, and $\langle w,p_0\rangle+\langle w,q_0\rangle+\langle w,p_0+q_0\rangle$ is even, so
the product is $\pm1$: $P_\alpha=\pm2^{\#L}$ with the sector sign.

\emph{$W_c^2$.} By \eqref{eq:wcmono} at $\epsilon=0$ and $1-x_{L_i}=\hat u_{i-1}\hat u_{i+1}/\hat u_i^2$, the $L$-run product
$\prod_{i=1}^a(1-x_{L_i})^{-2i}$ telescopes to $\hat u_a^{2(a+1)}/(\hat u_0^2\hat u_{a+1}^{2a})=f(c_a)^{2a+2}/\bigl(f(q)^2f(c_{a+1})^{2a}\bigr)$,
and the following $R$-run contributes $(1-x_{R_1})^{-a}=f(g_1)^{2a}/f(q')^{2a}$ with $g_1=c_{a+1}$, $q'=c_a$; the pair leaves
$f(c_a)^2/f(q)^2=f(q')^2/f(q)^2$, the ratio of the squares of the second basis vector at the end and at the start of the $L$-run. Around
the word this telescopes to $f(\varphi q_0)^2/f(q_0)^2=1$.

\emph{$\mathcal H_\alpha$.} $\mathcal H_\alpha=W_c^2\,\Delta_\alpha P_\alpha=1\cdot(\pm|A_\mp|)(\pm2^{\#L})=2^{\#L}|A_\mp|$.
\end{proof}

\section{Proof of completeness}
\label{app:complete}

This appendix proves Theorem~\ref{thm:complete}, the completeness of the kernel-vector classes for words with no singleton run. The theorem
is used in item (A6) of the Introduction, in Corollary~\ref{cor:interpretation}, in the count of classes in Section~\ref{sec:count}, and in
Section~\ref{sec:spins} through the labelling of the non-degenerate classes by kernel vectors. It uses
Lemma~\ref{lem:sine}.

\begin{proof}[Proof of Theorem~\ref{thm:complete}]
\emph{Rows.} Fix an $L$-run of length $a\ge2$, entered from the basis $(p,q)$, with middle slopes $c_i=q+ip$, and put
$u_i=\pm x_{L_i}^{-1/2}$. By Theorem~\ref{thm:friezeall} and Remark~\ref{rem:friezesign} the signs can be chosen so that
$u_i^2-u_{i-1}u_{i+1}=1$ at every interior site (a flip of $u_c$ changes the sign of the relation at $c\pm1$ and at no other site: go through the interior sites $i=2,3,\dots$ in turn and, if the relation at $i$ has the wrong
sign, flip $u_{i+1}$, which corrects the relation at $i$ and affects only the relation at $i+2$, still to come). The row then satisfies $u_{i+1}=\tau u_i-u_{i-1}$ with $\tau$ constant, extended to $u_0$ and
$u_{a+1}$ by the recurrence. To see that $\tau$ is constant, let $i$ and $i+1$ be two consecutive interior sites. Subtracting the relations
$u_i^2-u_{i-1}u_{i+1}=1$ and $u_{i+1}^2-u_iu_{i+2}=1$ gives
\[
u_i^2+u_iu_{i+2}=u_{i+1}^2+u_{i-1}u_{i+1},\qquad\text{i.e.}\qquad u_i(u_i+u_{i+2})=u_{i+1}(u_{i+1}+u_{i-1}),
\]
and dividing by $u_iu_{i+1}\ne0$ (non-degeneracy) gives $(u_i+u_{i+2})/u_{i+1}=(u_{i-1}+u_{i+1})/u_i$. So the quantity $\tau=(u_{i-1}+u_{i+1})/u_i$ takes the same value at
every interior site. For $a=3$ there is a single interior site and $\tau$ is its value; for $a=2$, where there is no interior site, $\tau$ is defined by $\tau u_1u_2=u_1^2+u_2^2-1$ so that
the extended row satisfies the relation. If $\tau\ne\pm2$, the solutions $c\mu^{\pm i}$ of the recurrence, $\mu+\mu^{-1}=\tau$, have $u_i^2-u_{i-1}u_{i+1}=0$ and are excluded by the frieze relation, and Remark~\ref{rem:friezedef}(ii) gives $\vartheta,\phi\in\C$, $\sin\vartheta\ne0$, and a constant $A$ (the \emph{amplitude}, the overall multiplier of the sine row) with
$u_i=A\sin(i\vartheta+\phi)/\sin\vartheta$ for $i=0,\dots,a+1$. Substituting in $u_i^2-u_{i-1}u_{i+1}=1$ and using
$\sin^2\psi-\sin(\psi-\vartheta)\sin(\psi+\vartheta)=\sin^2\vartheta$ gives $A^2=1$, and replacing $\phi$ by $\phi+\pi$ if necessary, $A=1$. We set $F(p)=\sin\vartheta$, $F(c_i)=\sin(i\vartheta+\phi)$: a function on
the run's slopes with $x_{L_i}=F(p)^2/F(c_i)^2$. If $\tau=\pm2$ the row is affine, $u_i=\pm(u_0+i)$ at $\tau=2$ and $u_i=\pm(-1)^i(u_0+i)$ at $\tau=-2$; only $u_i^2$ enters below, so the second case reduces to the first, and we set $F(p)=1$,
$F(c_i)=u_0+i$ instead, a linear function. The same with $q'=c_a$ and $g_j=p+jq'$ for the following $R$-run, $\hat v_j=\pm(1-x_{R_j})^{-1/2}$,
$F(q')=\sin\vartheta'$, $F(g_j)=\sin(j\vartheta'+\phi')$. Non-degeneracy gives $F\ne0$ on every slope of every run, including the
extensions $c_0,c_{a+1}$ and $g_0,g_{b+1}$, since $1-x_{L_1}=u_0u_2/u_1^2$, $1-x_{L_a}=u_{a-1}u_{a+1}/u_a^2$ and
$x_{R_1},x_{R_b}$ likewise. Each run's $F$ is defined up to a scalar, which is immaterial below.

\emph{The equations at the run ends.} With the rows in place, the interior $L$-site equations are identities (Proposition~\ref{prop:runreduce}); the $R$-site equations are the $(R_k)$ below, and
summation by parts along the $R$-row gives $\prod_ix_{R_i}^{2\min(i,k)}=(\hat v_0/\hat v_k)^2(\hat v_{b+1}/\hat v_b)^{2k}$,
$\prod_jx_{R_j}^{2j}=(\hat v_{b+1}/\hat v_b)^{2b}(\hat v_0/\hat v_b)^2$ and $\prod_jx_{R_j}=\hat v_0\hat v_{b+1}/\hat v_1\hat v_b$.
Reading the exponents off Theorem~\ref{thm:Kformula}, the equations at the sites $R_1,\dots,R_b$ of the run, at the last $L$ before
it and at the first $L$ after it (whose run has basis $(g_b,q')$ and row $u'_i$) are
\begin{align*}
(R_k)\quad&1=u_a^2\hat v_0^2\bigl[u'^{-2}_1(\hat v_{b+1}/\hat v_b)^2\bigr]^k,\qquad k=1,\dots,b,\\
(L_a)\quad&(u_{a-1}u_{a+1})^2=u_{a-1}^2\,u'^2_1\,(\hat v_1\hat v_b/\hat v_0\hat v_{b+1})^2,\\
(L'_1)\quad&(u'_0u'_2)^2=u'^{-2b}_1u'^2_2\,u_a^2\,(\hat v_{b+1}/\hat v_b)^{2b}(\hat v_0/\hat v_b)^2 .
\end{align*}
Since $b\ge2$, $(R_1)$ and $(R_2)$ give $u_a^2\hat v_0^2=1$ and $u'^2_1=(\hat v_{b+1}/\hat v_b)^2$, and then
$(L_a)$ and $(L'_1)$ reduce to $u_{a+1}^2=u'^2_1(\hat v_1\hat v_b/\hat v_0\hat v_{b+1})^2$ and $u'^2_0=u_a^2\hat v_0^2/\hat v_b^2$.
Now the two runs share the three slopes $p=g_0$, $q'=c_a$ and $c_{a+1}=g_1$, and the next $L$-run shares $q'$, $g_b$ and
$g_{b+1}$ with the $R$-run; writing $\rho_e=F_L(e)/F_R(e)$ and $\varsigma_e=F_R(e)/F_{L'}(e)$ for the ratios on the shared slopes,
$u_a\hat v_0=\rho_{q'}/\rho_p$, $u_{a+1}=(\rho_{c_{a+1}}/\rho_p)\hat v_1/\hat v_0$, $u'_1=(\varsigma_{g_b}/\varsigma_{g_{b+1}})\hat v_{b+1}/\hat v_b$
and $u'_0=(\varsigma_{g_b}/\varsigma_{q'})/\hat v_b$, so the four equations say exactly
\[
\rho_p^2=\rho_{q'}^2=\rho_{c_{a+1}}^2,\qquad \varsigma_{q'}^2=\varsigma_{g_b}^2=\varsigma_{g_{b+1}}^2 :
\]
\emph{the rows of adjacent runs agree, up to a common factor and up to the sign of each entry, on the three slopes they share.}
Conversely these conditions imply $(R_k)$ for all $k$, $(L_a)$ and $(L'_1)$. So, for a word with no singleton run, the Bethe equations
\eqref{eq:bethex} are equivalent to the junction conditions at every junction.

\emph{Gluing.} The three shared slopes form a Farey triangle $(e_1,e_2,e_1+e_2)$ --- $(p,q',p+q')$ at an $L\to R$ junction, $(g_b,q',g_b+q')$ at an $R\to L$ junction --- and on it a sine row has values
$(\sin A,\sin B,\sin(A+B))$ with $(A,B)$ the angles of $(e_1,e_2)$: $(\vartheta,a\vartheta+\phi)$ for the $L$-run, $(\phi',\vartheta')$ for the
$R$-run. By Lemma~\ref{lem:sine}, the junction condition forces $(\phi',\vartheta')\equiv\pm(\vartheta,a\vartheta+\phi)\pmod\pi$ with a common sign, which is
absorbed into the $R$-run's parametrisation. Inductively there is a single additive $\Theta:\Z^2\to\C/\pi\Z$ with
$F(e)^2=\sin^2\Theta(e)$ on every slope of the developed word, hence $x_t=\sin^2\Theta(\mathsf k_t)/\sin^2\Theta(\mathsf m_t)$ at $L$-steps and
$1-$that at $R$-steps. Lemma~\ref{lem:sine} also excludes a sine row adjacent to an affine one, so either all runs are affine or none is. In the
affine case $F$ is a linear function $h$ on every run, consistency on the shared basis makes it one linear function on $\Z^2$ (the signs of the three shared entries agree because $F\ne0$ on them), and
the wrap-around junction gives $h\circ\varphi=ch$ for a scalar $c$; so $h$ is an eigenfunctional of $\varphi$, and $x$ is $V^*$ or
$\sigma V^*$ (Theorem~\ref{prop:propagated}).

\emph{Closure.} In the sine case the wrap-around junction compares the last run's row on the slopes $\varphi p_0,\varphi q_0$ with the
first run's row on $p_0,q_0$: Lemma~\ref{lem:sine}, applied to the first run's Farey triangle and to its image under $\varphi$, on which $x$ takes the same values by periodicity, gives $\Theta\circ\varphi\equiv\pm\Theta\pmod\pi$ on a basis, hence on $\Z^2$. Lift $\Theta$ to an
additive $\tilde\Theta:\Z^2\to\C$: choose lifts of $\Theta(e_1)$ and $\Theta(e_2)$ to $\C$ and extend linearly, which agrees with
$\Theta$ modulo $\pi$ because both are additive on the free group $\Z^2$. Then $\tilde\Theta\circ(\varphi\mp1)$ takes values in $\pi\Z$, so $\tilde\Theta\circ(\varphi\mp1)=\pi m$
for some $m\in\operatorname{Hom}(\Z^2,\Z)$ and $\tilde\Theta=\pi\,m\circ(\varphi\mp1)^{-1}=\pi\langle v,\cdot\rangle/N$ with
$v\in\Z^2$, $N=|\det(\varphi\mp1)|=|A_\mp|$: the angles are rational multiples of $\pi$, with denominator $N$. Periodicity now reads
$\langle v,(\varphi\mp1)e\rangle\equiv0\pmod N$ for all $e$, i.e.\ $v\in\mathcal K_\mp$ (proof of Theorem~\ref{thm:allwords}), and
$v\not\equiv0$ since $F\ne0$. So $x$ is the class of Theorem~\ref{thm:allwords} at $v$. Finally $x$ determines each run's $(\vartheta_\rho,\phi_\rho)$
modulo $\pi$ up to a common sign (a row of length $\ge2$ determines $\cos\vartheta$ up to sign, the sign being the alternating flip $(\vartheta,\phi)\mapsto(-\vartheta,-\phi)$ modulo $\pi$; and the sign choices compatible with the frieze
relation are the shifts by $\pi$), the junctions correlate the signs, so $\Theta$ is determined up to sign and $v$ up to sign modulo $N$. No $x$ arises from both sectors, since $\Theta\circ\varphi\equiv\Theta$ and $\Theta\circ\varphi\equiv-\Theta$ together give $2\Theta\equiv0$, and then $f$ vanishes somewhere. No kernel-vector class is $V^*$ or $\sigma V^*$, since a sine row is never affine (Lemma~\ref{lem:sine}).

\emph{Where the hypothesis is used.} That no run is a singleton enters at three points. In \emph{Rows}, a run of length $\ge2$ carries at
least two coordinates, which fix $\tau$ and the amplitude; a singleton run carries one coordinate for a row with two parameters. At
\emph{the run ends}, $b\ge2$ is needed to use both $(R_1)$ and $(R_2)$. On an $L$-run of length $\ge2$ the site $L_a$ serves only the
junction after the run and $L'_1$ only the junction before the next one; at a singleton $L$-run one site would serve both. In
\emph{Closure}, a row of length $\ge2$ determines $\cos\vartheta_\rho$ up to the sign just described. Remark~\ref{rem:singleton}
describes what changes at a singleton run.
\end{proof}

\section{The step identities}
\label{app:steps}

This appendix proves the step identities \eqref{eq:steps} used in Step~3 of the proof of Theorem~\ref{thm:rogers}, with $R$, $z$ and $\mathcal G$
as defined there.

\begin{lemma}[The step identities]
\label{lem:steps}
The two identities \eqref{eq:steps} hold for all real $\alpha,\beta$ at which their arguments are defined.
\end{lemma}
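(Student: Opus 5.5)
The plan is to prove both identities of \eqref{eq:steps} by one mechanism. First show that the left-hand sides are locally constant in $(\alpha,\beta)$ on the open set where they are defined. Then evaluate the constant, on each connected component of that set, at a convenient point.

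\emph{Reductions.} I would treat the $L$-step identity first and obtain the $R$-step identity by symmetry. The involution $(\alpha,\beta)\mapsto(\beta,\alpha)$ sends $z\mapsto1-z$ up to conjugation. Indeed $z(\beta,\alpha)=\overline{1-z(\alpha,\beta)}$ for real arguments, since $1-z=\frac{\sin\beta}{\sin(\alpha+\beta)}e^{i\alpha}$. Under it $\mathcal G(\alpha,\beta)$ becomes $2\operatorname{Re}R(1-\bar z)-\alpha\arg(1-z)-\beta\arg z$. The reflection $\operatorname{Re}R(z)+\operatorname{Re}R(1-z)=\pi^2/6$ holds up to $\pi^2\Z$ (modulo $2\pi$-ambiguities in the arguments), and $\arg z+\arg(1-z)$ together with $\alpha,\beta$ combine so that the $R$-step identity becomes the negative of the $L$-step identity at swapped arguments. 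That sign is exactly the $-\operatorname{Re}R$ in the second identity. So only the first identity needs a genuine proof.

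\emph{Derivative computation.} Let $E(\alpha,\beta)$ be the left side of the first identity. I would compute $\partial_\alpha E$ and $\partial_\beta E$ using the following ingredients:
\begin{itemize}
\item $d\operatorname{Re}R(w)=\tfrac12\bigl(\log|1-w|\,d\arg w-\log|w|\,d\arg(1-w)\bigr)$, the Bloch--Wigner-type differential of the real part of the Rogers function, valid for complex $w$;
\item on the real axis, $d\operatorname{Re}R(s)=\tfrac12\bigl(\tfrac{\log|1-s|}{s}+\tfrac{\log|s|}{1-s}\bigr)ds$;
\item $\log|z|=\log|\sin\alpha|-\log|\sin(\alpha+\beta)|$, $\log|1-z|=\log|\sin\beta|-\log|\sin(\alpha+\beta)|$, $\arg z\equiv-\beta$ and $\arg(1-z)\equiv\alpha \pmod\pi$.
\end{itemize}
From these, $d\mathcal G(\alpha,\beta)$ becomes an explicit combination of $\log|\sin|$ terms times $d\alpha,d\beta$. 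The $\alpha\,d\arg z+\beta\,d\arg(1-z)$ terms cancel against the variation of the arguments. What remains is $\arg z\,d\alpha+\arg(1-z)\,d\beta$ plus the log terms.

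Now $\frac{\sin^2\alpha}{\sin^2(\alpha+\beta)}$ has $1-s=\frac{\sin\beta\sin(2\alpha+\beta)}{\sin^2(\alpha+\beta)}$ by Ptolemy. Its differential is again a combination of $\log|\sin\alpha|$, $\log|\sin\beta|$, $\log|\sin(\alpha+\beta)|$ and $\log|\sin(2\alpha+\beta)|$. The verification is that the coefficients of each $\log|\sin(\cdot)|$ in $d E$ cancel, and that the linear terms cancel as well. This is a bookkeeping check, since $\mathcal G(\alpha,\alpha+\beta)$ contributes precisely the $\sin(2\alpha+\beta)$ terms.

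\emph{Evaluating the constant.} The arguments of $\arg$ jump by $\pi$ where a sine changes sign, and $\operatorname{Re}R$ jumps at $\pm\infty$ crossings of its real argument. So $E$ is locally constant on each component of the complement of $\{\sin\alpha\sin\beta\sin(\alpha+\beta)\sin(2\alpha+\beta)=0\}$. I would then show that its value lies in $\pi^2\Z$ by a limit on each component: let $\beta\to0$ along the boundary, or $\alpha,\beta\to0$ in the ratio given by the component. There $z\to0$ or $1$ and $s\to0$ or $1$ or $\infty$, so every $R$-term tends to $0$, $\pi^2/6$, or a known value. The $\alpha\arg z$ terms tend to values in $\pi^2\Z$ because both $\alpha$ and $\arg z$ tend to multiples of $\pi$ at lattice boundary points.

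\emph{Main obstacle.} The real difficulty is this last step: tracking the branch jumps of $\arg z$, $\arg(1-z)$ and the extended $\operatorname{Re}R$ so that each component constant is an integer multiple of $\pi^2$ and not merely of $\pi^2/6$ or $\pi^2/2$. I would handle it by working modulo $\pi$-periodicity, reducing to $(\alpha,\beta)\in(0,\pi)^2$, where $z$ traces a fixed arc. I would then check the finitely many components directly, using the values $\operatorname{Re}R(0)=0$, $\operatorname{Re}R(1)=\pi^2/6$ and $\operatorname{Re}R(\infty)=\pi^2/3$.
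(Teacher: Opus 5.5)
Your strategy is the paper's: show zero differential on each component of the complement of the lines $\alpha,\beta,\alpha+\beta,2\alpha+\beta\in\pi\Z$, then fix the constant by a boundary limit. Your reduction of the second identity to the first is correct, and it saves the paper's separate edge analysis. For real arguments $z(\beta,\alpha)=\overline{1-z(\alpha,\beta)}$, and Euler's reflection gives $R(z)+R(1-z)=\pi^2/6$ exactly off the cuts. Hence $\mathcal G(\beta,\alpha)=\pi^2/3-\mathcal G(\alpha,\beta)$, and the second left side is exactly minus the first at $(\beta,\alpha)$.

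The step that fails is the derivative computation. Your formula $d\operatorname{Re}R(w)=\tfrac12\bigl(\log|1-w|\,d\arg w-\log|w|\,d\arg(1-w)\bigr)$ is $-\tfrac12\,dD$ for the Bloch--Wigner function $D$, not $d\operatorname{Re}R$. It vanishes identically on the real axis, which contradicts your second bullet. That bullet also has the wrong sign: $\tfrac{d}{ds}\operatorname{Re}R(s)=-\tfrac12\bigl(\log|1-s|/s+\log|s|/(1-s)\bigr)$. With your formula, $d\mathcal G$ contains terms $\log|\sin(\cdot)|\,d\alpha$ and $\log|\sin(\cdot)|\,d\beta$. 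Nothing in $d\operatorname{Re}R(s)$, which is a combination of terms $\log|\sin x|\cot y\,dy$, can cancel them, so the cancellation you describe does not occur.

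The correct formula follows from $dR=\tfrac12\omega(\log w,\log(1-w))$, with $\omega(X,Y)=X\,dY-Y\,dX$:
\[
d\operatorname{Re}R(w)=\tfrac12\omega(\log|w|,\log|1-w|)-\tfrac12\omega(\arg w,\arg(1-w)).
\]
With it the bookkeeping runs the opposite way to yours. On a component write $\arg z=-\beta+\pi k_0$ and $\arg(1-z)=\alpha+\pi k_1$ with constant integers $k_0,k_1$. The term $-\omega(\arg z,\arg(1-z))=-\arg z\,d\alpha-\arg(1-z)\,d\beta$ cancels $\arg z\,d\alpha+\arg(1-z)\,d\beta$, including the branch parts $\pi k_0\,d\alpha+\pi k_1\,d\beta$. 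Those branch parts would not cancel between the two $\mathcal G$'s, whose branch integers differ. What survives is $\alpha\,d\arg z+\beta\,d\arg(1-z)=-\omega(\alpha,\beta)$. So $d\mathcal G(\alpha,\beta)=\omega(\log|z|,\log|1-z|)-\omega(\alpha,\beta)$. The log parts cancel against $d\operatorname{Re}R(s)$ by Ptolemy, and the linear parts cancel because $\omega(\alpha,\alpha+\beta)=\omega(\alpha,\beta)$. This is the paper's computation.

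The evaluation of the constant also needs repair. At an interior point of an edge $\alpha=m\pi$, the term $\alpha\arg z$ does not tend to $\pi^2\Z$, since $\arg z\to-\beta+\pi k_0$. Writing $\arg z(\alpha,\alpha+\beta)=-\alpha-\beta+\pi k_0'$, only the combination $-m\pi\bigl(\arg z(\alpha,\beta)-\arg z(\alpha,\alpha+\beta)\bigr)=-m\pi^2(k_0+m-k_0')$ does. At a vertex, where you let $\alpha,\beta\to0$ together, $z$ has a limit that depends on the direction of approach, not $0$ or $1$. Also, $\operatorname{Re}R(w)\to\pi^2/3$ only along the positive real axis; along a ray of argument $\phi$ the limit is $\pi^2/3-\tfrac\pi2|\phi|$.

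What works is to use only edges $\alpha=m\pi$ or $\beta=m\pi$; every component has one. At $\beta=m\pi$ you also need that $z'=z(\alpha,\alpha+m\pi)=e^{-i\alpha}/(2\cos\alpha)$ satisfies $1-z'=\overline{z'}$, so $2\operatorname{Re}R(z')=\pi^2/6$.

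Your periodicity reduction can also be made to work, and it ends more cleanly. First you must check that $\alpha\mapsto\alpha+\pi$ and $\beta\mapsto\beta+\pi$ change the left side by elements of $\pi^2\Z$, because $\alpha\arg z$ and $\beta\arg(1-z)$ are not periodic. For instance, the $\beta$-shift changes it by $\pi\bigl(\arg(1-z(\alpha,\alpha+\beta))-\arg(1-z(\alpha,\beta))\bigr)$. This lies in $\pi^2\Z$ because both arguments equal $\alpha$ modulo $\pi$. Then translate each component to one with an edge on $\alpha=0$ or $\beta=0$. There the limit gives the constant exactly $0$, with no branch bookkeeping.
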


\begin{proof} (Algebraically, the first is Abel's five-term identity \cite{Zagier07} at $(x,y)=(z,\bar z)$: $z\bar z=s$ and
$z(1-\bar z)/(1-z\bar z)=z(\alpha,\alpha+\beta)$; the second is the same identity applied to $1-z$.) Write $\psi=\alpha+\beta$,
$\Theta_x=\log|\sin x|$, $\omega(X,Y)=X\,dY-Y\,dX$ (a $1$-form on the $(\alpha,\beta)$-plane, with $dX$, $dY$ the differentials of the functions $X$, $Y$), and
\[
\eta(\alpha,\beta)=\omega(\Theta_\alpha,\Theta_\beta)+\omega(\Theta_\beta,\Theta_\psi)+\omega(\Theta_\psi,\Theta_\alpha).
\]
Consider the first identity on a connected component $C$ of the complement of the lines $\alpha,\beta,\psi,2\alpha+\beta\in\pi\Z$, on which all
terms are real-analytic. For real $s\notin\{0,1\}$, $d\operatorname{Re}R(s)=-\tfrac12\omega(\log|1-s|,\log|s|)$; with $\log|s|=2(\Theta_\alpha-\Theta_\psi)$ and
$\log|1-s|=\Theta_\beta+\Theta_{2\alpha+\beta}-2\Theta_\psi$ (Ptolemy) this expands to $\eta(\alpha,\beta)-\eta(\alpha,\psi)$. On $C$, $z$ and $1-z$ are not
real, so $\arg z=-\beta+\pi k_0$ and $\arg(1-z)=\alpha+\pi k_1$ with $k_0,k_1\in\Z$ constant, and
$d\operatorname{Re}R(z)=\tfrac12\omega(\log|z|,\log|1-z|)-\tfrac12\omega(\arg z,\arg(1-z))$; since $\log|z|=\Theta_\alpha-\Theta_\psi$ and $\log|1-z|=\Theta_\beta-\Theta_\psi$,
this gives $d\mathcal G=\eta(\alpha,\beta)-\omega(\alpha,\beta)$. As $\omega(\alpha,\psi)=\omega(\alpha,\beta)$, the left side of the first identity has
zero differential on $C$, so it is constant there. Each component is a bounded polygon with edges in at least three of the four directions,
so it has an edge on a line $\alpha=m\pi$ or $\beta=m\pi$. On approaching an interior point of an edge $\alpha=m\pi$, $s\to0$, $z\to0$,
$z(\alpha,\psi)\to0$, $\arg(1-z)\to0$, and the constant becomes $-m\pi\bigl(\arg z-\arg z(\alpha,\psi)\bigr)=-m\pi^2(k_0+m-k_0')\in\pi^2\Z$.
On approaching an edge $\beta=m\pi$, $s\to1$, $z\to1$, and $z':=z(\alpha,\alpha+m\pi)=e^{-i\alpha}/2\cos\alpha$ satisfies $1-z'=\overline{z'}$, so
$2\operatorname{Re}R(z')=\operatorname{Re}\bigl(R(z')+R(1-z')\bigr)=\pi^2/6$ and $\mathcal G(\alpha,\alpha+m\pi)=\pi^2/6-m\pi\arg z'$; the constant becomes
$\pi^2/6-\pi^2/3-m\pi(\alpha+\pi k_1)+\pi^2/6+m\pi(\alpha-\pi\epsilon)\in\pi^2\Z$, where $\arg z'=-\alpha+\pi\epsilon$ at the edge point. The second identity is proved
in the same way on the complement of $\alpha,\beta,\psi,\alpha+2\beta\in\pi\Z$: at $\alpha=m\pi$ one has $s\to1$, $z\to0$ and
$z(\psi,\beta)\to e^{-i\beta}/2\cos\beta$, and at $\beta=m\pi$ one has $s\to0$ and $z,z(\psi,\beta)\to1$. Writing $\arg z=-\beta+\pi k_0$ and $\arg(1-z)=\alpha+\pi k_1$ on the component,
$\arg\bigl(e^{-i\beta}/2\cos\beta\bigr)=-\beta+\pi\epsilon'$ at the edge point and $\arg\bigl(1-z(\psi,\beta)\bigr)=\psi+\pi k_1''$ on the component, the
constant is $m\pi^2(\epsilon'-k_0)$ at an edge $\alpha=m\pi$ and $m\pi^2(m+k_1''-k_1)$ at an edge $\beta=m\pi$; both lie in $\pi^2\Z$.
\end{proof}

\section{The exact reduction and the proof of existence}
\label{app:existence}

This appendix proves the existence theorem, Theorem~\ref{thm:exist}. Its results are used as follows. The telescoping identities of
Lemma~\ref{lem:telescope} enter the spin computation in the proof of Theorem~\ref{thm:lines}\textup{(iii)}, the proof of
Proposition~\ref{prop:shifts}, and the exclusion of the type $0$ after Theorem~\ref{thm:exist}. The exact reduction,
Proposition~\ref{prop:reduction}, and the Jacobian of Proposition~\ref{prop:jac4} give Theorem~\ref{thm:exist}, which in turn gives
Theorem~\ref{thm:aminus}\textup{(ii)}. Both propositions are used again in Section~\ref{sub:uniqueness}, on the step from the corank to uniqueness.

The leading-order system, obtained by keeping only the lowest power of $\epsilon$ in each Bethe equation, leaves the chain shifts free. By the whole system we mean the full deformed Bethe equations, with no truncation in $\epsilon$. On the shifted chains the whole system reduces exactly to two equations, which fix the shifts, and the boundary bookkeeping is uniform in
$m$. For $L^mR^m$ the four boundary equations are, writing $C_i=z_{L_i}^2=x_{L_i}^{-1}$ and $w_l=z_{R_l}$,
\[
\begin{array}{ll}
(B_1)\ (C_1-1)^2C_1^m(C_2C_m)^{-1}\prod_lw_l^{2l}=(1+\epsilon)^m, &
(B_2)\ (C_m-1)^2(C_1C_{m-1})^{-1}\prod_lw_l^{-2}=1,\\[2pt]
(B_3)\ (1-1/w_1)C_1C_m^{-1}\prod_lw_l^{2}=1+\epsilon, &
(B_4)\ (1-1/w_m)C_1^mC_m^{-1}\prod_lw_l^{2l}=(1+\epsilon)^m,
\end{array}
\]
and the interior $R_l$ equation is $(1-1/w_l)\prod_iw_i^{2\min(i,l)}C_1^lC_m^{-1}=(1+\epsilon)^l$, of which $(B_3)$ and $(B_4)$ are the cases $l=1,m$, and the interior $L_i$ equation, $2\le i\le m-1$, is $(C_i-1)^2=C_{i-1}C_{i+1}$.
Everything follows from five identities.

\begin{lemma}[Telescoping]
\label{lem:telescope}
Let $\gcd(k,m)=1$, $\vartheta=k\pi/m$, $g=j\vartheta$, and $\varsigma_i=\sin(\psi+ig)$ for an arbitrary $\psi\in\C$. Then, for every $\psi$,
\[
\textup{(T1)}\ \varsigma_{i+m}=(-1)^{jk}\varsigma_i,\qquad
\textup{(T5)}\ 1-\frac1{w_l}=\frac{\sin^2g}{\varsigma_l^2},\qquad w_i:=\frac{\varsigma_i^2}{\varsigma_{i-1}\varsigma_{i+1}},
\]
and
\[
\textup{(T2)}\ \prod_{i=1}^mw_i=1,\qquad
\textup{(T3)}\ \prod_{i=1}^mw_i^{2i}=\Bigl(\frac{\varsigma_0}{\varsigma_1}\Bigr)^{2m},\qquad
\textup{(T4)}\ \prod_{i=1}^mw_i^{2\min(i,l)}=\frac{\varsigma_l^2}{\varsigma_0^2}\Bigl(\frac{\varsigma_0}{\varsigma_1}\Bigr)^{2l}.
\]
\end{lemma}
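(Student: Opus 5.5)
The five identities are elementary trigonometric telescoping, so I will prove them in the order (T1), (T5), then the products (T2)--(T4) by summation by parts, using (T1) to close the telescopes at $i=m$.

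(T1) follows because $mg=jk\pi$, so $\varsigma_{i+m}=\sin(\psi+ig+jk\pi)=(-1)^{jk}\varsigma_i$, for every $\psi$.

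(T5) is the frieze relation of Remark~\ref{rem:friezedef}(ii) for the sine row $\varsigma_i$ of angle $g$. The identity $\sin^2X-\sin(X-g)\sin(X+g)=\sin^2g$ at $X=\psi+lg$ gives $\varsigma_l^2-\varsigma_{l-1}\varsigma_{l+1}=\sin^2g$. Hence
\[
1-\frac1{w_l}=\frac{\varsigma_l^2-\varsigma_{l-1}\varsigma_{l+1}}{\varsigma_l^2}=\frac{\sin^2g}{\varsigma_l^2}.
\]

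(T2) is a telescope. In $\prod_{i=1}^m\varsigma_i^2/(\varsigma_{i-1}\varsigma_{i+1})$ every interior factor cancels, leaving
\[
\frac{\varsigma_m\varsigma_1}{\varsigma_0\varsigma_{m+1}}.
\]
By (T1), $\varsigma_m=\pm\varsigma_0$ and $\varsigma_{m+1}=\pm\varsigma_1$ with the same sign $(-1)^{jk}$, so the ratio is $1$.

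(T3) and (T4) follow the summation-by-parts computation already used in the proofs of Theorems~\ref{thm:nahmclosed} and~\ref{thm:allwords}. Write $\log w_i=-\Delta^2\log\varsigma_i$ and sum against the weight $2\min(i,l)$. Its second difference is $-2\delta_{il}$ on $1\le i\le m-1$, and the boundary terms at $i=0$ and $i=m+1$ remain. For $l<m$ this gives
\[
\prod_{i=1}^m w_i^{2\min(i,l)}=\frac{\varsigma_l^2}{\varsigma_0^2}\Bigl(\frac{\varsigma_{m+1}}{\varsigma_m}\Bigr)^{2l}.
\]
Here one pairs each factor $\varsigma_i^{2\min(i,l)}$ with its neighbours; the exponents of $\varsigma_i$ cancel except at $i=0$, $i=l$, and at the top pair $(\varsigma_m,\varsigma_{m+1})$, which carries exponent $\pm2l$ because the weight is constant $2l$ for $i\ge l$. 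Then (T1) replaces $\varsigma_{m+1}/\varsigma_m$ by $\varsigma_1/\varsigma_0$ up to the squared sign.

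This produces $(\varsigma_1/\varsigma_0)^{2l}$, not the stated $(\varsigma_0/\varsigma_1)^{2l}$. This sign of exponent is the one point needing care. I will recheck it by evaluating $l=1$ directly:
\[
\prod_i w_i^2\cdot\prod_{i\ge2}w_i^{0}\ \text{(adjusted)}.
\]
Equivalently, I compute $w_1^2\prod_{i\ge2}w_i^2=\prod_i w_i^2=1$ by (T2), against the claimed $\varsigma_1^2/\varsigma_0^2\cdot\varsigma_0^2/\varsigma_1^2=1$. The check is consistent, so the orientation must be fixed by the exact boundary bookkeeping rather than by this check.

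(T3) is the case $l=m$ of (T4). There $\varsigma_m^2/\varsigma_0^2=1$ by (T1), so it follows once (T4) is established for $l=m$, whose boundary term is handled the same way.

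The main obstacle is purely this bookkeeping of the boundary terms and the orientation of the ratio $\varsigma_0/\varsigma_1$ in (T3) and (T4). I would settle it by carrying out the $l=1$ and $l=2$ cases explicitly before writing the general telescoping.
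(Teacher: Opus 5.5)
Your arguments for (T1), (T5) and (T2) are correct and match the paper's, and summation by parts is also how the paper obtains (T3) and (T4). But the proposal does not prove (T3) or (T4). You derive $\frac{\varsigma_l^2}{\varsigma_0^2}(\varsigma_{m+1}/\varsigma_m)^{2l}$, observe that it contradicts the statement, and stop there. (T3) is then reduced to a case $l=m$ that you have not established.

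The error is in the top boundary term. In $\prod_{i=1}^m w_i^{f_i}$ the factor $\varsigma_m$ occurs in the numerator of $w_m^{f_m}$ and in the denominator of $w_{m-1}^{f_{m-1}}$. Its total exponent is $2f_m-f_{m-1}$, which is $+2l$ for $f_i=2\min(i,l)$ and $l\le m-1$. The factor $\varsigma_{m+1}$ occurs only in the denominator of $w_m^{f_m}$, with exponent $-f_m=-2l$.

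So the top pair contributes $(\varsigma_m/\varsigma_{m+1})^{2l}$. This is the same orientation as in your own telescope $\varsigma_1\varsigma_m/(\varsigma_0\varsigma_{m+1})$ for (T2). Then (T1) turns it into $(\varsigma_0/\varsigma_1)^{2l}$, the sign $(-1)^{jk}$ cancelling inside the ratio, so no squared sign is needed.

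The remaining exponents are $-f_1=-2$ at $\varsigma_0$ and $2f_l-f_{l-1}-f_{l+1}=2$ at $\varsigma_l$; with $f_0:=0$ the latter also covers $l=1$. They give the prefactor $\varsigma_l^2/\varsigma_0^2$. At $l=m$ the two contributions merge into the exponent $2f_m-f_{m-1}=2m+2$ of $\varsigma_m$, and $\varsigma_m^2=\varsigma_0^2$ then gives (T3).

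Your $l=1$ check does decide the orientation, contrary to your conclusion. At $l=1$ your formula gives $\frac{\varsigma_1^2}{\varsigma_0^2}\cdot\frac{\varsigma_{m+1}^2}{\varsigma_m^2}=(\varsigma_1/\varsigma_0)^4$, which is not $1$ in general. For $g\notin\pi\Z$ it is a nonconstant function of $\psi$. By contrast, $\prod_iw_i^2=1$ by (T2). So the check refutes your version and confirms the stated one.

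The orientation becomes automatic if you write $2\min(i,l)=2\sum_{n=1}^{l}[\,i\ge n\,]$. Then $\prod_{i=1}^m w_i^{2\min(i,l)}=\prod_{n=1}^{l}\bigl(\prod_{i=n}^m w_i\bigr)^2=\prod_{n=1}^{l}\bigl(\varsigma_n\varsigma_m/(\varsigma_{n-1}\varsigma_{m+1})\bigr)^2=\frac{\varsigma_l^2}{\varsigma_0^2}\bigl(\varsigma_m/\varsigma_{m+1}\bigr)^{2l}$ for every $1\le l\le m$. With (T1) this is (T4), and at $l=m$ it is (T3).
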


\begin{proof}
(T1) is $mg=jk\pi$; (T5) is $\sin^2x-\sin(x-g)\sin(x+g)=\sin^2g$ at $x=\psi+lg$. For the products take logarithms: with $L_i=\log\varsigma_i$ and a weight
$f$, $\sum_{i=1}^mf_i(2L_i-L_{i-1}-L_{i+1})$ has all interior coefficients equal to the second difference $2f_l-f_{l+1}-f_{l-1}$, leaving
\[
-f_1L_0+(2f_1-f_2)L_1+(2f_m-f_{m-1})L_m-f_mL_{m+1}+\sum_{l=2}^{m-1}(2f_l-f_{l+1}-f_{l-1})L_l .
\]
For $f\equiv1$ this is $-L_0+L_1+L_m-L_{m+1}$, and (T1) gives $L_m-L_0=L_{m+1}-L_1$, whence (T2). For $f_i=2i$ the second differences vanish and the
boundary terms are $-2L_0+(2m+2)L_m-2mL_{m+1}$, which by (T1) is $2m(L_0-L_1)$ plus an even multiple of $\log(-1)$; that is (T3). For $f_i=2\min(i,l)$ the
second difference is $2\delta_{il}$ and the boundary terms give $2L_l-2L_0+2l(L_0-L_1)$, which is (T4).
\end{proof}

\begin{proposition}[The exact reduction]
\label{prop:reduction}
Let $\gcd(k,m)=1$, and with the chains $\rho_i=\sin((i-j)\vartheta+\alpha)$, $\varsigma_i=\sin((ij-1)\vartheta+\beta)$,
$C_i=\rho_i^2/\sin^2\vartheta$ and $w_i=\varsigma_i^2/(\varsigma_{i-1}\varsigma_{i+1})$ of Section~\ref{sec:leading}, with arbitrary phases $\alpha,\beta$, set
\[
\Lambda=C_1\Bigl(\frac{\varsigma_0}{\varsigma_1}\Bigr)^2,\qquad \Xi=\frac{\sin^2(j\vartheta)}{\varsigma_0^2\,C_m}.
\]
Then, for all $(\alpha,\beta)\in\C^2$ at which all the coordinates $C_i$ and $w_i$ of the chains are finite and nonzero, the interior $L$-equations and $(B_2)$ hold, $(B_1)=\Lambda^m$,
and $(R_l)=\Xi\Lambda^l$ for $1\le l\le m$, where each equation is written as its left side. So the shifted chains solve the deformed Bethe system
\emph{exactly} if and only if
\[
\textup{(III)}\ \ \Lambda=1+\epsilon,\qquad\qquad \textup{(IV)}\ \ \Xi=1 .
\]
\end{proposition}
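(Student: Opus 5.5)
The plan is to substitute the shifted chains directly into the four boundary equations $(B_1)$--$(B_4)$, the interior $R_l$ equations and the interior $L_i$ equations listed before Lemma~\ref{lem:telescope}. Every product over the $R$-sites is then evaluated by the telescoping identities (T2)--(T5) with $\psi=\beta-\vartheta$, so that $\varsigma_i=\sin(\psi+ij\vartheta)$ is exactly the chain of Section~\ref{sec:leading}. Note that (T1)--(T5) hold for arbitrary $\psi$, which is why nothing depends on the phases. The interior $L$-equations are immediate. The identity $\sin^2x-\sin(x-\vartheta)\sin(x+\vartheta)=\sin^2\vartheta$ at $x=(i-j)\vartheta+\alpha$ gives $C_i-1=\rho_{i-1}\rho_{i+1}/\sin^2\vartheta$, and hence $(C_i-1)^2=C_{i-1}C_{i+1}$, for every $\alpha$. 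This is Proposition~\ref{prop:shifts}.

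Next I would treat the $R$-equations. Take the interior equation $(R_l)$, whose left side is $(1-1/w_l)\prod_iw_i^{2\min(i,l)}C_1^lC_m^{-1}(1+\epsilon)^{-l}$ when written as residual $=1$. By (T5) and (T4) this becomes
\[
\frac{\sin^2(j\vartheta)}{\varsigma_l^2}\cdot\frac{\varsigma_l^2}{\varsigma_0^2}\Bigl(\frac{\varsigma_0}{\varsigma_1}\Bigr)^{2l}C_1^lC_m^{-1}=\Xi\,\Lambda^l.
\]
The factor $\varsigma_l^2$ cancels, and the $(1+\epsilon)^l$ is kept on the right. So the system of $R$-equations reads $\Xi\Lambda^l=(1+\epsilon)^l$ for every $l$, and $(B_3)$ and $(B_4)$ are its cases $l=1$ and $l=m$.

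For $(B_1)$ I would use (T3) to replace $\prod w^{2l}$ by $(\varsigma_0/\varsigma_1)^{2m}$, leaving $(C_1-1)^2C_2^{-1}C_m^{-1}\cdot C_1^m(\varsigma_0/\varsigma_1)^{2m}$. The first factor equals $1$ after two steps. The $L$-recurrence gives $(C_1-1)^2=C_0C_2$ with the extended value $C_0=\rho_0^2/\sin^2\vartheta$. Since $\rho_{i+m}=\pm\rho_i$ because $m\vartheta=k\pi$, we have $C_0=C_m$. So $(B_1)$ is $\Lambda^m$. The same device with (T2) turns $(B_2)$ into $(C_m-1)^2/(C_1C_{m-1})=C_{m+1}C_{m-1}/(C_1C_{m-1})=1$. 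I would check carefully that the cyclic indexing here ($L_0=L_m$, the small $L$ at $L_j$) matches the site labels in $(B_1)$ and $(B_2)$. This is the only place where sign and periodicity bookkeeping can go wrong: (T1) carries a sign $(-1)^{jk}$, but it always enters squared.

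Finally I would assemble the conclusion. The system is $\Lambda^m=(1+\epsilon)^m$ and $\Xi\Lambda^l=(1+\epsilon)^l$ for $l=1,\dots,m$. Taking $l=1$ and $l=2$ gives $\Lambda=1+\epsilon$, and then $\Xi=1$. Conversely, (III) and (IV) make every equation hold. The hypothesis that all coordinates are finite and nonzero is exactly what allows the logarithmic telescoping in Lemma~\ref{lem:telescope} and the divisions above. The main obstacle I expect is the boundary bookkeeping in $(B_1)$: the corner entry $C_1^m$, the ramp, and the couplings to $L_2$ and $L_m$ must combine with (T3) into precisely $\Lambda^m$, with no stray factor of $C_1$ or $\varsigma$. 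I would first verify this at $m=3$ by direct expansion before writing it in general.
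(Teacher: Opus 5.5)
Your proposal is correct and follows the paper's own proof essentially step for step: the sine identity for the interior $L$-equations; the periodicity $\rho_{i+m}=\pm\rho_i$, giving $C_0=C_m$ and $C_{m+1}=C_1$, so that (T3) and (T2) reduce $(B_1)$ to $\Lambda^m$ and $(B_2)$ to $1$; (T5) with (T4), giving $(R_l)=\Xi\Lambda^l$; and the comparison of $l=1,2$, which yields (III) and (IV). The only slip is presentational: you first include $(1+\epsilon)^{-l}$ in the left side of $(R_l)$ and then say it is kept on the right; as in the statement, the left side should be taken without that factor, so that it equals $\Xi\Lambda^l$.
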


\begin{proof}
The $L$-chain satisfies $C_i-1=\rho_{i-1}\rho_{i+1}/\sin^2\vartheta$, hence $(C_i-1)^2=C_{i-1}C_{i+1}$, and $\rho_{i+m}=(-1)^k\rho_i$ gives $C_0=C_m$,
$C_{m+1}=C_1$. In $(B_2)$, (T2) kills the $w$-product and $(C_m-1)^2=C_{m-1}C_{m+1}$ leaves $C_{m+1}/C_1=1$. In $(B_1)$, $(C_1-1)^2=C_0C_2$ cancels
$C_2$ and (T3) supplies $(\varsigma_0/\varsigma_1)^{2m}$, leaving $(C_0/C_m)[C_1(\varsigma_0/\varsigma_1)^2]^m=\Lambda^m$. In $(R_l)$, (T5) and (T4) give
$\frac{\sin^2(j\vartheta)}{\varsigma_l^2}\cdot\frac{\varsigma_l^2}{\varsigma_0^2}(\frac{\varsigma_0}{\varsigma_1})^{2l}C_1^l/C_m=\Xi\Lambda^l$. Comparing $(R_l)$ and
$(R_{l+1})$ gives $\Lambda=1+\epsilon$, then $(R_1)$ gives $\Xi=1$, and $(B_1)$ follows. Every identity used holds for arbitrary complex phases, so nothing here is an approximation.
\end{proof}

Now put $\alpha=a\epsilon$, $\beta=b\epsilon$ and expand (III) and (IV) to first order in $\epsilon$. With $A=\cot\vartheta$ and $B=\cot(j\vartheta)$ this
gives
\[
-2(\alpha+\beta)\cot((j-1)\vartheta)-2\beta A=\epsilon,\qquad -2\beta A-2\alpha B=0,
\]
and the cotangent subtraction formula $\cot((j-1)\vartheta)=(AB+1)/(A-B)$ then yields
\begin{equation}
\alpha=-\tfrac\epsilon2\cot\vartheta,\qquad \beta=\tfrac\epsilon2\cot(j\vartheta),
\label{eq:shifts}
\end{equation}
for every $m$, every $k$ and every placement with $\gcd(j,m)=1$, $j\ne1$.

\begin{proposition}[The next-order Jacobian is a constant]
\label{prop:jac4}
In logarithmic form the Jacobian of \textup{(III)}, \textup{(IV)} in $(\alpha,\beta)$ at the base point is
\[
\frac{\partial(\log\Lambda,\log\Xi)}{\partial(\alpha,\beta)}=\begin{pmatrix}-2c&-2c-2A\\ 2B&2A\end{pmatrix},\qquad\det=-4,
\]
with $c=\cot((j-1)\vartheta)$, \emph{independently of $m$, $k$ and $j$}. Hence, within the family of shifted chains of Section~\ref{sec:leading}, the two flat directions of the leading order, the chain shifts $(\alpha,\beta)$ left free by Proposition~\ref{prop:shifts}, are fixed uniquely at next
order at every placement where the logarithms are defined, which is every one except $j\equiv1$ \textup{(}$C_1=0$\textup{)} and $j\equiv0$
\textup{(}$\sin(j\vartheta)=0$\textup{)}.
\end{proposition}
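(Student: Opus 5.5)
We will compute the partial derivatives of $\log\Lambda$ and $\log\Xi$ directly from their definitions in Proposition~\ref{prop:reduction}, at the base point $\alpha=\beta=0$. There $\rho_i=\sin((i-j)\vartheta)$ and $\varsigma_i=\sin((ij-1)\vartheta)$. Since $\log\Lambda=2\log\rho_1-2\log\sin\vartheta+2\log\varsigma_0-2\log\varsigma_1$ and $\log\Xi=2\log\sin(j\vartheta)-2\log\varsigma_0-2\log\rho_m+2\log\sin\vartheta$, every derivative is a sum of terms $\pm2\cot(\cdot)$. The logarithms are defined precisely when $\rho_1,\varsigma_0,\varsigma_1,\rho_m$ and $\sin(j\vartheta)$ are nonzero at the base point. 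This gives the stated exceptions: $\rho_1=\sin((1-j)\vartheta)$ vanishes at $j\equiv1$, and $\sin(j\vartheta)$ vanishes at $j\equiv0$.

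The entries are as follows. The first row is $\partial_\alpha\log\Lambda=2\cot((1-j)\vartheta)=-2c$ and $\partial_\beta\log\Lambda=2\cot(-\vartheta)-2\cot((j-1)\vartheta)=-2A-2c$, using $\varsigma_0=\sin(-\vartheta+\beta)$ and $\varsigma_1=\sin((j-1)\vartheta+\beta)$. The second row is $\partial_\alpha\log\Xi=-2\cot((m-j)\vartheta)$ and $\partial_\beta\log\Xi=-2\cot(-\vartheta)=2A$. Because $m\vartheta=k\pi$ and $\cot$ is $\pi$-periodic, $-2\cot((m-j)\vartheta)=2\cot(j\vartheta)=2B$. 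This reproduces the displayed matrix.

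The determinant is then $-4Ac+2B(2c+2A)\cdot(-1)\cdot(-1)$, that is, $\det=-4Ac+4Bc+4AB=-4\bigl(c(A-B)-AB\bigr)$. The cotangent subtraction formula, already used for \eqref{eq:shifts}, gives $c(A-B)=AB+1$, so $\det=-4$. This value is independent of $m$, $k$ and $j$.

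For the uniqueness clause, consider the map $(\alpha,\beta,\epsilon)\mapsto(\log\Lambda-\log(1+\epsilon),\log\Xi)$. It is analytic near the base point and vanishes there, since $\Lambda=\Xi=1$ at zero phases by Proposition~\ref{prop:shifts}. Its Jacobian in $(\alpha,\beta)$ is invertible, so the implicit function theorem yields unique real-analytic shifts, real for real $\epsilon$, with leading terms \eqref{eq:shifts}. The only delicate step is checking the base-point values $\Lambda=\Xi=1$ and the sign conventions in the periodicity reduction of $\rho_m$. Both are routine, but they must be done carefully with (T1).
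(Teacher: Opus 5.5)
Your proposal is correct and follows the paper's proof: you differentiate $\log\Lambda$ and $\log\Xi$ logarithmically at $\alpha=\beta=0$, use the $\pi$-periodicity of $\cot$ to turn the $\rho_m$ entry into $2B$, and apply the cotangent subtraction formula $c(A-B)=AB+1$ to get $\det=-4$. Your implicit-function-theorem paragraph for the ``hence'' clause is the argument the paper gives in the proof of Theorem~\ref{thm:exist}(i).
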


\begin{proof}
Differentiating $\log\Lambda=2\log\rho_1+2\log\varsigma_0-2\log\varsigma_1-2\log\sin\vartheta$ and $\log\Xi=2\log\sin(j\vartheta)-2\log\varsigma_0-\log C_m$ at
$\alpha=\beta=0$ gives the matrix, using $\rho_1=\sin((1-j)\vartheta)$, $\varsigma_0=\sin(-\vartheta)$, $\varsigma_1=\sin((j-1)\vartheta)$ and
$\rho_m=\pm\sin(j\vartheta)$. Then $\det=4[c(B-A)+AB]$, and $c(B-A)=-(AB+1)$, so $\det=-4$.
\end{proof}

\begin{proof}[Proof of Theorem~\ref{thm:exist}]
By Proposition~\ref{prop:reduction} it suffices to solve (III) and (IV). The only values of the chains $\rho_i$, $\varsigma_i$ that can vanish are $\rho_{j}=\sin\alpha$ and
$\varsigma_p=\pm\sin\beta$, so the coordinates are finite and nonzero once $\alpha,\beta\ne0$.

\textup{(i)} At $\alpha=\beta=\epsilon=0$ both conditions hold, $\log\Lambda$ and $\log\Xi$ are analytic in $(\alpha,\beta)$, and their Jacobian is
$-4$ (Proposition~\ref{prop:jac4}). The implicit function theorem gives unique analytic $\alpha(\epsilon),\beta(\epsilon)$ vanishing at $0$, with the
first-order terms \eqref{eq:shifts}; these are nonzero, since $\vartheta,j\vartheta\notin\tfrac\pi2\Z$ for odd $m$. The functions $\log\Lambda$ and $\log\Xi$ have real Taylor coefficients in $(\alpha,\beta,\epsilon)$, so $\overline{\alpha(\bar\epsilon)},\overline{\beta(\bar\epsilon)}$ solve the same equations; by uniqueness they are $\alpha(\epsilon),\beta(\epsilon)$, which are therefore real for real $\epsilon$. Take $\epsilon_0$ below the radius given by the implicit function theorem and so small that $\alpha,\beta\ne0$ for $0<|\epsilon|<\epsilon_0$.

\textup{(ii)} At $j=1$ the chains are $\rho_i=\sin((i-1)\vartheta+\alpha)$ and $\varsigma_i=\sin((i-1)\vartheta+\beta)$, and since $m\vartheta\in\pi\Z$,
\[
\Lambda=\frac{\sin^2\alpha\,\sin^2(\beta-\vartheta)}{\sin^2\vartheta\,\sin^2\beta},\qquad
\Xi=\frac{\sin^4\vartheta}{\sin^2(\alpha-\vartheta)\,\sin^2(\beta-\vartheta)} .
\]
Put $\beta=t\alpha$. Then $\Lambda$ is analytic near $(\alpha,t)=(0,-1)$ with $\Lambda(0,t)=t^{-2}$, and, as $\Xi(0,t)=1$, so is
$G:=(\Xi-1)/\alpha$, with $G(0,t)=2(1+t)\cot\vartheta$. At $(0,-1)$, $\Lambda=1$, $G=0$, and
\[
\frac{\partial(\Lambda,G)}{\partial(\alpha,t)}=\begin{pmatrix}2\cot\vartheta&2\\ 2/\sin^2\vartheta&2\cot\vartheta\end{pmatrix},\qquad
\det=4\bigl(\cot^2\vartheta-\sin^{-2}\vartheta\bigr)=-4 .
\]
The implicit function theorem gives analytic $\alpha(\epsilon)$, $t(\epsilon)$ with $\Lambda=1+\epsilon$, $G=0$, $\alpha(0)=0$, $t(0)=-1$; they are real for real $\epsilon$ by the same conjugation argument as in (i). Linearising,
$\alpha=-\tfrac\epsilon2\cot\vartheta+O(\epsilon^2)$ and $t=-1+\epsilon/(2\sin^2\vartheta)+O(\epsilon^2)$, so $\alpha+\beta=\alpha(1+t)$ is as stated. For
small $\epsilon\ne0$, $\alpha\ne0$, so $G=0$ is $\Xi=1$, and $\beta=t\alpha\ne0$. No other solution with $\alpha,\beta\to0$ exists on the family. Along a sequence of such solutions with $\epsilon\to0$, every limit point $t_0$ of $t=\beta/\alpha$ in $\C\cup\{\infty\}$ is $-1$: it cannot be $0$ or $\infty$, nor $t_0\ne\pm1$, since $\Lambda(0,t)=t^{-2}$ while $\Lambda\to1$; and
it cannot be $+1$, since $G(0,1)=4\cot\vartheta\ne0$. So $t\to-1$, and the uniqueness in the implicit function theorem identifies the solution.
\end{proof}

\section{The lemmas for the corank theorem}
\label{app:corank}

This appendix proves the four lemmas used in the proof of Theorem~\ref{thm:corank} (Section~\ref{sub:corankproof}): the arc-continuant and
$L$-block lemmas in the case $j=1$, and the reduction and finite-block lemmas in the case $j\ne1$.

\subsection{\texorpdfstring{Two continuants, for the case $j=1$}{Two continuants, for the case j=1}}
\label{sub:continuants}
A \emph{continuant} is the determinant of a tridiagonal matrix (Section~\ref{sub:corankproof}).

\begin{lemma}[Arc-continuant lemma]
\label{prop:arc}
Fix odd $m\ge5$ and $j,k\in(\Z/m)^\times$, represented by integers (the identities hold for every choice of representatives); put $\vartheta=k\pi/m$, $t=j\vartheta$, $s=\sin t$, and $\varsigma_i=\sin(it-\vartheta)$ for $i\in\Z$.  Let $p\in\{1,\dots,m\}$ satisfy $pj\equiv1\pmod m$, so $\varsigma_p=0$ (the triple of the
placement is centred at $R_p$), and let the \emph{arc} be the $n=m-3$ consecutive sites $i=p+2,\dots,p+m-2$, the $R$-sites with finite nonzero limits.
Two tridiagonal $n\times n$ matrices on the arc are considered: with $d_i=\varsigma_{i-1}\varsigma_{i+1}/s^2$, the \emph{$d$-form} $\mathrm{arc}_d$, with
diagonal entries $-(2+2d_i)$ and off-diagonal entries $d_{i\mp1}$, which at $j=1$ (that is, $p=1$) arises from the $R$-block of $\hat J_0$ in the proof of Theorem~\ref{thm:corank} at $j=1$, and is used only there; and the \emph{$\varsigma$-form}
$\mathrm{arc}_\varsigma$, with diagonal $-2\varsigma_i$ and off-diagonals $\varsigma_{i\mp2}$, obtained from it by rescaling the rows. Then
\[
\det(\mathrm{arc}_d)=\frac{m^3}{2^{2m}\sin^{2m}(j\vartheta)},\qquad
\det(\mathrm{arc}_\varsigma)=\varepsilon\,2^{m-3}\prod_{i\in\mathrm{arc}}\varsigma_i^2,\quad \varepsilon=(-1)^{(1+jk)(m+1)/2}.
\]
In particular neither continuant vanishes, for any $m$ and any class.
\end{lemma}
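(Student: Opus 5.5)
The plan is to compute both continuants by exhibiting explicit solutions of the underlying three-term recurrence and using the Casoratian form of a continuant. First, relate the two forms. Multiplying row $i$ of $\mathrm{arc}_d$ by $s^2/\varsigma_{i-1}\varsigma_{i+1}$ is not quite the rescaling meant. Instead, I would first check that the frieze identity $\varsigma_i^2-\varsigma_{i-1}\varsigma_{i+1}=s^2$ of Remark~\ref{rem:friezedef} (trace $2\cos t$) turns $2+2d_i$ into $2\varsigma_i^2/s^2$. I would then determine the diagonal row and column scaling $D_1\,\mathrm{arc}_d\,D_2$ that produces the entries $-2\varsigma_i$, $\varsigma_{i\mp2}$. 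The ratio of the two determinants is then an explicit product of $\varsigma$'s and powers of $s$, so only one of them has to be evaluated directly; I would evaluate $\mathrm{arc}_\varsigma$.

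For $\mathrm{arc}_\varsigma$, the recurrence $\varsigma_{i-2}y_{i-1}-2\varsigma_iy_i+\varsigma_{i+2}y_{i+1}=0$ should have an explicit fundamental system, in the spirit of Lemma~\ref{lem:cotsol}. Substituting $y_i=z_i/(\varsigma_{i-1}\varsigma_{i+1})$ or $y_i=z_i/\varsigma_i$, I expect solutions built from $\cos(it-\vartheta)$ and $i\cos(it-\vartheta)-\cot t\,\sin(it-\vartheta)$, verified by product-to-sum identities exactly as in the proof of Lemma~\ref{lem:cotsol}. With two solutions $f,g$ in hand, the determinant of the $n\times n$ truncation equals, up to the product of the off-diagonal entries, the Casoratian-normalised quantity $f_{a-1}g_{b+1}-f_{b+1}g_{a-1}$, where $a=p+2$, $b=p+m-2$ and the boundary indices are $p+1$ and $p+m-1\equiv p-1$. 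The endpoints sit next to the zero $\varsigma_p=0$ (and $\varsigma_{p+m}=\pm\varsigma_p$ by (T1) of Lemma~\ref{lem:telescope}), so the boundary values simplify. The linear term $i\cos(\cdot)$ jumps by $m$ across the period, and this is what should produce the factor $m$. Taking the appropriate normalisation, the power $m^3$ arises in the $d$-form after the rescaling; equivalently, in the $\varsigma$-form the determinant is the clean product $\pm2^{m-3}\prod\varsigma_i^2$.

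The products over the arc are handled with (T1), the values $\varsigma_{p\pm1}=\pm s$, and the standard identity $\prod_{i=1}^{m-1}\sin(i t')=\pm m/2^{m-1}$ for $t'=j\vartheta$ a primitive multiple of $\pi/m$ (note that $\{it-\vartheta\}$ over a full period runs over $\{\ell\pi k'/m\}$ shifted by $\varsigma_p=0$). This accounts for the $2^{-2m}\sin^{-2m}$ and the remaining powers of $m$. The sign $\varepsilon$ comes from $(-1)^{jk}$ in (T1) and from the signs of the $\sin$ values, and is tracked by counting how many arc sites have $\sin(\cdot)<0$; the parity $(1+jk)(m+1)/2$ should emerge from that count.

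The main obstacle I expect is the boundary bookkeeping: choosing the correct extended indices so that the truncated continuant equals the Casoratian expression, given that the natural boundary terms involve $\varsigma_p=0$. I would first test the formulas symbolically at $m=5,7$ to pin down the normalisation and the sign $\varepsilon$ before writing the general argument. Both nonvanishing claims then follow because no $\varsigma_i$ on the arc vanishes, as the only zero modulo the period is at $i=p$.
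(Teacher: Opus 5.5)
Your overall plan is the paper's. You evaluate $\mathrm{arc}_\varsigma$ through an explicit fundamental system of $\varsigma_{i-2}y_{i-1}-2\varsigma_iy_i+\varsigma_{i+2}y_{i+1}=0$, get the factor $m$ from the linear term across one period, evaluate $P=\prod_{\mathrm{arc}}\varsigma_i$ with $\prod_{a=1}^{m-1}2\sin(a\pi/m)=m$, and pass to $\mathrm{arc}_d$ by rescaling.

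Two details need correcting. The rescaling is a pure row scaling: since $2+2d_i=2\varsigma_i^2/s^2$ and $d_{i\pm1}=\varsigma_i\varsigma_{i\pm2}/s^2$, row $i$ of $\mathrm{arc}_d$ is $\varsigma_i/s^2$ times row $i$ of $\mathrm{arc}_\varsigma$, so $\det\mathrm{arc}_d=(P/s^{2n})\det\mathrm{arc}_\varsigma$. Neither of your trial denominators gives a solution. For instance $\gamma_i/(\varsigma_{i-1}\varsigma_{i+1})$, with $\gamma_i=\cos(it-\vartheta)$, leaves a residual proportional to $\cos t\,\varsigma_{i-1}\varsigma_{i+1}-\varsigma_i^2$. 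The solutions are $f_i=\gamma_i/\Pi_i$ and $g_i=(i\gamma_i-\cot t\,\varsigma_i)/\Pi_i$, with $\Pi_i=\varsigma_{i-1}\varsigma_i\varsigma_{i+1}$.

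The genuine gap is the boundary step. You flag it but do not resolve it, and your one concrete claim about it is wrong.

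\emph{Why the boundary formula fails.} Since $\Pi_i=0$ exactly for $i\equiv p,p\pm1$, both $f$ and $g$ have poles precisely at your boundary indices $a-1=p+1$ and $b+1=p+m-1$. So $f_{a-1}g_{b+1}-f_{b+1}g_{a-1}$ is undefined, and the boundary values blow up rather than simplify. Moreover, the couplings a truncation would cut are already $\varsigma_p=\varsigma_{p+m}=0$. The usual identity $\det A=(-1)^n y_{b+1}\prod_{i=a}^{b}\varsigma_{i+2}$ (with $y_{a-1}=0$, $y_a=1$) then pits the factor $\varsigma_{b+2}=\varsigma_{p+m}=0$ against the pole of $y_{b+1}$, an indeterminate $0\cdot\infty$.

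\emph{What is needed.} Work with finite boundary residuals. For any solution $\psi$ one has $A\,\psi|_{\mathrm{arc}}=-B_1[\psi]e_1-B_2[\psi]e_n$, where $B_1[\psi]=\varsigma_p\psi_{p+1}$ and $B_2[\psi]=\varsigma_{p+m}\psi_{p+m-1}$. These are taken as limits under a phase shift $\varsigma_i\to\sin(it-\vartheta+\phi)$; equivalently $B_1=-(A\psi)_1$ and $B_2=-(A\psi)_n$.

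\emph{The computation.} Write $\varsigma_{p+l}=\nu\sin lt$ and $\varsigma_{p+m+l}=\nu\eta\sin lt$, with $\nu=\pm1$ and $\eta=(-1)^{jk}$. Then $B_1[f]=\nu\cos t/(s\sin2t)$, $B_1[g]=pB_1[f]$, $B_2[f]=\eta B_1[f]$ and $B_2[g]=\eta(p+m)B_1[f]$. The combination $\zeta=B_1[g]f-B_1[f]g$ kills the first residual, so $A\zeta=-We_n$ with $W=-m\eta B_1[f]^2$. This is where your jump by $m$ actually enters. Apply Cramer's rule to the first component; its cofactor is the product $M=\prod_{i=p+4}^{p+m-1}\varsigma_i$ of the superdiagonal entries. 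This gives $\det A\cdot\zeta_1=(-1)^nWM$. With $\zeta_1=2sB_1[f]/(\sin2t\sin3t)$ and $M=-\eta\nu Ps/(\sin2t\sin3t)$ one gets $\det\mathrm{arc}_\varsigma=mP/(4s^2)$. Finally $P=\varepsilon m/(2^{m-1}s^2)$ converts this into both stated formulas.

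Without this step your argument does not determine the normalisation or the sign. Symbolic checks at $m=5,7$ would not supply them in general.
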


\begin{proof}
Let $L[\psi]_i=\varsigma_{i-2}\psi_{i-1}-2\varsigma_i\psi_i+\varsigma_{i+2}\psi_{i+1}$, and write $\gamma_i=\cos(it-\vartheta)$,
$\Pi_i=\varsigma_{i-1}\varsigma_i\varsigma_{i+1}$. Here the index $i$ is an integer, not a residue modulo $m$: $\varsigma_i$ is defined for every $i\in\Z$, and shifting $i$ by $m$ changes it by the sign $\varsigma_{i+m}=(-1)^{jk}\varsigma_i$, which has to be tracked in the proof. Three product-to-sum identities hold:
\[
L\Bigl[\frac{\gamma_i}{\Pi_i}\Bigr]=0,\qquad L\Bigl[\frac{i\gamma_i}{\Pi_i}\Bigr]=-\frac{\sin2t}{\Pi_i},\qquad
L\Bigl[\frac{\varsigma_i}{\Pi_i}\Bigr]=-\frac{2s^2}{\Pi_i}.
\]
For the first, $\varsigma_{i-2}\varsigma_{i+1}-\varsigma_{i-1}\varsigma_i=\varsigma_{i+2}\varsigma_{i-1}-\varsigma_i\varsigma_{i+1}=\tfrac12(\cos3t-\cos t)$, so
$1/(\varsigma_i\varsigma_{i+1})$ and $1/(\varsigma_{i-1}\varsigma_i)$ have the same residual $-s\sin2t/\Pi_i$ and their difference
$(\varsigma_{i-1}-\varsigma_{i+1})/\Pi_i=-2s\gamma_i/\Pi_i$ is annihilated. Hence
\[
u_i=\frac{\gamma_i}{\Pi_i},\qquad \chi_i=\frac{i\gamma_i-\cot t\,\varsigma_i}{\Pi_i}
\]
span the two-dimensional solution space of $L[\psi]=0$.

Both have poles exactly where $\Pi_i=0$, i.e.\ at $i\equiv p,p\pm1\pmod m$: on $[p+1,p+m-1]$ that means the two sites flanking the arc and no site of
the arc itself. The arc matrix is the full recursion with exactly the two out-of-range terms deleted, and both their coefficients vanish
($\varsigma_p=\varsigma_{p+m}=0$) while the products are finite and nonzero:
\[
A\,\psi|_{\mathrm{arc}}=-B_1[\psi]e_1-B_2[\psi]e_n,\qquad B_1[\psi]=\varsigma_p\psi_{p+1},\quad B_2[\psi]=\varsigma_{p+m}\psi_{p+m-1}.
\]
Each boundary term is of the form $0\cdot\infty$: for instance $B_1[u]=\varsigma_pu_{p+1}$ is the product of the vanishing coefficient $\varsigma_p$ and of $u_{p+1}$, which has a pole because $\Pi_{p+1}=\varsigma_p\varsigma_{p+1}\varsigma_{p+2}$ contains $\varsigma_p$. Its value is defined by continuity in a phase: with $\varsigma_i=\sin(it-\vartheta+\phi)$ the recursion and $u,\chi$ are regular for small $\phi\ne0$, and $B_1[u]=\gamma_{p+1}/(\varsigma_{p+1}\varsigma_{p+2})$ is continuous at $\phi=0$. The value is finite and nonzero, and it is these boundary terms that carry the determinant: they are why $u$ solves the recursion at every site of the arc and yet is not in the kernel of
the arc matrix $A$.

Since $pj\equiv1\pmod m$, $pt-\vartheta=(pj-1)k\pi/m$ is an integer multiple of $\pi$; write $pt-\vartheta=\nu_0\pi$ with $\nu_0\in\Z$, and put
$\nu=(-1)^{\nu_0}$, $\eta=(-1)^{jk}$ and $q=p+m$; then $\varsigma_{p+l}=\nu\sin lt$, $\gamma_{p+l}=\nu\cos lt$ and
$\varsigma_{q+l}=\nu\eta\sin lt$, $\gamma_{q+l}=\nu\eta\cos lt$. Evaluating the four boundary quantities,
\[
B_1[u]=\frac{\nu\cos t}{s\sin2t},\qquad B_1[\chi]=p\,B_1[u],\qquad B_2[u]=\eta B_1[u],\qquad B_2[\chi]=\eta\,q\,B_1[u].
\]
The combination $\zeta=B_1[\chi]u-B_1[u]\chi$, that is $\zeta_i=B_1[u][(p-i)\gamma_i+\cot t\,\varsigma_i]/\Pi_i$, kills the first boundary term, so
$A\zeta=-We_n$ with $W=B_1[\chi]B_2[u]-B_1[u]B_2[\chi]=\eta B_1[u]^2(p-q)=-m\eta B_1[u]^2$. The adjugate on the last column --- whose cofactor is the product
of the superdiagonal entries, $M=\prod_{i=p+4}^{p+m-1}\varsigma_i$ --- gives $\det A\cdot\zeta_1=(-1)^nWM$ without assuming $A$ invertible, and $\zeta_1\ne0$, so $\det A=(-1)^nWM/\zeta_1$, and at $i=p+2$ the bracket is
$\nu(-2\cos2t+2\cos^2t)=2\nu s^2$ while $\Pi_{p+2}=\nu s\sin2t\sin3t$, so $\zeta_1=2B_1[u]s/(\sin2t\sin3t)$. Writing $P=\prod_{i\in\mathrm{arc}}\varsigma_i$ and $M=-\eta\nu Ps/(\sin2t\sin3t)$, every
trigonometric factor cancels and $\det(\mathrm{arc}_\varsigma)=(-1)^nmP/(4s^2)=mP/(4s^2)$, because $n=m-3$ is even.

Finally, $\prod_{a=1}^{m-1}2\sin(a\pi/m)=m$ together with $\sin((m-l)t)=(-1)^{jk+1}\sin(lt)$ gives
\[
P=(-1)^{(1+jk)(m+1)/2}\,\frac m{2^{m-1}s^2},
\]
which is the $\varepsilon$-form. Row $i$ of $\mathrm{arc}_d$ is $(\varsigma_i/s^2)$ times row $i$ of $\mathrm{arc}_\varsigma$, since $d_{i\pm1}=\varsigma_i\varsigma_{i\pm2}/s^2$
and $2+2d_i=2\varsigma_i^2/s^2$; hence
\[
\det(\mathrm{arc}_d)=(P/s^{2n})\det(\mathrm{arc}_\varsigma)=\frac{mP^2}{4s^{2n+2}}=\frac{m^3}{2^{2m}s^{2m}},
\]
which involves only $P^2$ and is therefore free of every sign.
\end{proof}

\begin{lemma}[$L$-block lemma, $j=1$]
\label{prop:lblock}
Let $j=1$, $\vartheta=k\pi/m$ and $\rho_\nu=\sin\nu\vartheta$. The \emph{finite-$L$ block} is the block of $\hat J_0$ on the $L$-sites whose coordinates
have finite nonzero limits, here $L_3,\dots,L_{m-1}$. With its rows halved it is tridiagonal, with off-diagonals $-1$ and diagonal entry $2\rho_\nu^2/(\rho_{\nu-1}\rho_{\nu+1})$
at $L_{\nu+1}$, $\nu=2,\dots,m-2$, and the determinant of this half is $m\cot^2\vartheta$. Hence
\[
\det(\mathrm{block}_L)=2^{m-3}\,m\cot^2\vartheta\ne0 .
\]
\end{lemma}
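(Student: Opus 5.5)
The plan is to follow the pattern of the arc-continuant lemma: exhibit two explicit solutions of the three-term recurrence, read the determinant off the boundary terms, and then restore the factor $2^{m-3}$. The halved block is tridiagonal with off-diagonals $-1$ and diagonal $e_\nu=2\rho_\nu^2/(\rho_{\nu-1}\rho_{\nu+1})$, $\nu=2,\dots,m-2$, so it has $n=m-3$ rows. Its row recurrence is $y_{\nu-1}+y_{\nu+1}=e_\nu y_\nu$. By the frieze relation $\rho_\nu^2-\rho_{\nu-1}\rho_{\nu+1}=\sin^2\vartheta$, this is exactly the recurrence of Lemma~\ref{lem:cotsol} for the sine row $\hat u_\nu=\rho_\nu/\sin\vartheta$ (phase $\phi=0$), since there $2/(1-x_\nu)=2\hat u_\nu^2/(\hat u_{\nu-1}\hat u_{\nu+1})=e_\nu$. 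That lemma therefore gives the fundamental system $\mathfrak f_\nu=\cot(\nu\vartheta)$, $\mathfrak g_\nu=\nu\cot(\nu\vartheta)-\cot\vartheta$, with Casoratian $-1$. Before using it, I will check from the construction of $\hat J_0$ (entries $4/(1-x)$ on the $L$-diagonal and $-2$ between neighbours) that the rows really are these after halving. The contributions of $L_2$ and $L_m$ are removed with their columns, and $L_1$ is not among the sites, so no ramp entries enter.

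For the determinant of a tridiagonal matrix with unit off-diagonals (up to sign), I will use the standard continuant formula: the determinant equals, up to $(-1)^n$, the Casoratian-normalised value at the right end of the solution that vanishes at the left boundary. Concretely, put $y_\nu=\mathfrak f_1\mathfrak g_\nu-\mathfrak g_1\mathfrak f_\nu$, the solution with $y_1=0$ and $y_2=\mathfrak f_1\mathfrak g_2-\mathfrak g_1\mathfrak f_2=-(\text{Casoratian})=1$. By induction on the leading principal minors, the determinant of the halved block is $\pm y_{m-1}$. Evaluating with $\mathfrak g_1=\cot\vartheta-\cot\vartheta=0$ gives $y_\nu=\cot\vartheta\,\mathfrak g_\nu$, so $y_{m-1}=\cot\vartheta\,\bigl((m-1)\cot((m-1)\vartheta)-\cot\vartheta\bigr)$. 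Since $m\vartheta\in\pi\Z$, we have $\cot((m-1)\vartheta)=-\cot\vartheta$, and hence $y_{m-1}=-m\cot^2\vartheta$.

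It remains to fix the sign. With $n=m-3$ even, the sign is $(-1)^n=+1$ if the recurrence is written as $D_\nu=e_\nu D_{\nu-1}-D_{\nu-2}$, with the minus sign coming from the off-diagonals $-1$. I will recheck this by working the minor recurrence directly: set $D_0=1$, $D_1=e_2$, and identify $D_\ell=y_{\ell+2}\cdot(\pm1)$. This produces $m\cot^2\vartheta$. Undoing the halving of the $n=m-3$ rows multiplies by $2^{m-3}$. Non-vanishing follows from $\gcd(k,m)=1$ with $m$ odd, so that $\vartheta\notin\tfrac\pi2\Z$ and $\cot\vartheta\neq0$.

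I expect the only real obstacle to be the bookkeeping at the two ends. I must confirm that the end rows of the finite-$L$ block carry the same diagonal formula $e_\nu$ as the interior rows, that is, that eliminating the $1+c\epsilon$ neighbours $L_2$ and $L_m$ leaves no corner correction. I must also get the overall sign of the continuant right. I will check both against small cases numerically, say $m=5$ where $n=2$.
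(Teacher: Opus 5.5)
Your route is the paper's: Lemma~\ref{lem:cotsol} at $\phi=0$, the solution $\mathfrak g_\nu=\nu\cot\nu\vartheta-\cot\vartheta$ vanishing at $\nu=1$, normalised at $\nu=2$, evaluated at $\nu=m-1$ via $\cot((m-1)\vartheta)=-\cot\vartheta$, and then the factor $2^{m-3}$ from the halving. The one thing to repair is the sign bookkeeping, which as written gives the wrong sign.

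First, $y_2=\mathfrak f_1\mathfrak g_2-\mathfrak g_1\mathfrak f_2$ is the Casoratian itself at $i=1$, so $y_2=-1$, not $+1$. Directly: $\mathfrak g_2=2\cot2\vartheta-\cot\vartheta=-\tan\vartheta$, so $y_2=\cot\vartheta\,\mathfrak g_2=-1$.

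Second, there is no $(-1)^n$ to insert. The leading minors satisfy $D_\ell=e_{\ell+1}D_{\ell-1}-D_{\ell-2}$ with $D_{-1}=0$ and $D_0=1$, because the two off-diagonal entries multiply to $+1$. Hence $D_\ell=y_{\ell+2}/y_2$ exactly.

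So the half-block determinant is $y_{m-1}/y_2=-y_{m-1}=m\cot^2\vartheta$. This is the paper's $\mathfrak g_{m-1}/\mathfrak g_2=(-m\cot\vartheta)/(-\tan\vartheta)$. With your stated $y_2=1$ and sign $+1$ you would have concluded $-m\cot^2\vartheta$. Your planned check at $m=5$ would catch this, since there the determinant is $e_2e_3-1=16\cos^2\vartheta-1=5\cot^2\vartheta>0$.

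The end rows you worry about need no correction. The block is simply the principal submatrix of $\hat J_0$ on $L_3,\dots,L_{m-1}$. None of these sites is $L_1$ or $L_m$, so none carries the extra diagonal $2m$ or any $R$-coupling. The rows at $L_3$ and $L_{m-1}$ just omit their entries in the columns $L_2$ and $L_m$.
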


\begin{proof}
Since $\rho_\nu^2-\rho_{\nu-1}\rho_{\nu+1}=\sin^2\vartheta$, the recurrence $y_{\nu-1}+y_{\nu+1}=2\rho_\nu^2/(\rho_{\nu-1}\rho_{\nu+1})\,y_\nu$ is that of
Lemma~\ref{lem:cotsol} with $\phi=0$, with solution $g_\nu=\nu\cot\nu\vartheta-\cot\vartheta$; $g_1=0$, $g_2=-\tan\vartheta$. The continuant on
$\nu=2,\dots,n$ is the solution with $y_1=0$, $y_2=1$ evaluated at $n+1$, i.e.\ $g_{n+1}/g_2$. At $n=m-2$, $\cot((m-1)\vartheta)=-\cot\vartheta$ gives
$g_{m-1}=-m\cot\vartheta$, so the determinant of the half block is $m\cot^2\vartheta$, nonzero since $m$ is odd. The block itself has $m-3$ rows, each twice that of the half block.
\end{proof}

For $m$ prime, both determinants of this section have positive valuation at the ramified prime $\mathfrak p$ of $\Q(\zeta_m)$ above $m$ (Appendix~\ref{sub:modp}): since $v_{\mathfrak p}(m)=m-1$ and $v_{\mathfrak p}(\sin(a\pi/m))=1$ for $a\not\equiv0\pmod m$ (valuations taken in $\Q(\zeta_{4m})$, unramified over $\Q(\zeta_m)$ at $\mathfrak p$), the determinants $m\cot^2\vartheta$ and $m^3/(2^{2m}\sin^{2m}\vartheta)$ both have valuation $m-3>0$. So the reduction of the $j=1$ block modulo $\mathfrak p$ is singular and says nothing about its determinant, and the closed forms are needed.

\subsection{\texorpdfstring{The finite block modulo $\mathfrak p$, for the case $j\neq1$}{The finite block modulo p, for the case j != 1}}
\label{sub:modp}

For a placement $j\ne1$ the $L$- and $R$-blocks of $\hat J_0$ are coupled through the two $L$-sites $L_1$ and $L_m$ adjacent to the run boundary. The \emph{finite block} is the block of $\hat J_0$ on all sites with finite nonzero limits, $L$ and $R$ together. Its determinant is
computed by reducing its entries, which lie in $\Q(\zeta_m)$, modulo the ramified prime. This is the prime ideal $\mathfrak p=(1-\zeta_m)$ above $m$, the only prime of $\Q(\zeta_m)$ that ramifies, with $(m)=\mathfrak p^{m-1}$. Recall that $p=j^{-1}\bmod m$ is the position of
the centre of the triple.

\begin{lemma}[Reduction lemma]
\label{prop:modp}
Let $m\ge5$ be prime and $\mathfrak p$ the ramified prime of $\Q(\zeta_m)$ above $m$, with residue field $\F_m$. Every entry of the
finite block of $\hat J_0$ is $\mathfrak p$-integral, and its reduction is \emph{independent of the class label $k$}. To describe the reduction, index the
$L$-sites by $\nu=n-j\in\{2,\dots,m-2\}$ and the $R$-sites by $x=i-p$. The $L$-sites form a single chain in $\nu$, because $L_m$ and $L_1$, at $\nu=-j$ and $1-j$, are adjacent. Then the $L$-column $\nu$ has diagonal $2\nu^2/(\nu^2-1)$ and off-diagonals $-1$.
The $R$-column $i$ has $2\min(i,l)$ on row $R_l$ and diagonal entry $2i+(i-p)^2-1$, the fan's $2i$ plus $(i-p)^2-1$. The exceptional entry $m$ on the $L_1$ diagonal reduces to $0$. In this description the rows of the $L$-sites are halved relative to $\hat J_0$, and the $L$--$R$ entries are those of the proof of the \hyperref[thm:modp]{finite-block lemma}. The determinant of the finite block of $\hat J_0$ is congruent modulo $\mathfrak p$ to $2^{m-3}$ times that of the reduced block, and $2$ is a unit.
\end{lemma}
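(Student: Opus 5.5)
The plan is to reduce every entry of the finite block of $\hat J_0$ by expressing it as a rational function of sines at rational multiples of $\pi$, and then use the congruence $\sin(a\vartheta)/\sin(b\vartheta)\equiv a/b\pmod{\mathfrak p}$ for $a,b\not\equiv0\pmod m$. This congruence holds because $\zeta^{a}-\zeta^{-a}=\zeta^{-a}(\zeta^{2a}-1)$ and $(\zeta^{2a}-1)/(\zeta^{2b}-1)\equiv a/b$ modulo $\mathfrak p$, where $\zeta$ is a primitive $2m$-th root and all quantities are read in $\Q(\zeta_{4m})$, which is unramified over $\Q(\zeta_m)$ at $\mathfrak p$. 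Both sines have valuation exactly one, so their ratio is a $\mathfrak p$-unit, and its reduction $a/b$ no longer involves $k$. This single fact gives both $\mathfrak p$-integrality and independence of $k$.

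First I will treat the $L$-part. By construction of $\hat J_0$, its $L$-diagonal is $4+D_{L}=4/(1-x_L)$. On the leading data of Section~\ref{sec:leading}, the $L$-chain gives $1-x_{L_n}=\rho_{n-1}\rho_{n+1}/\rho_n^2$ with $\rho_n=\sin((n-j)\vartheta)$. So after halving, the diagonal is $2\rho_\nu^2/(\rho_{\nu-1}\rho_{\nu+1})$ with $\nu=n-j$, and the off-diagonals are $-1$. Since $\nu,\nu\pm1\not\equiv0$ on the finite sites $\nu=2,\dots,m-2$, the congruence reduces this diagonal to $2\nu^2/(\nu^2-1)$. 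Adjacency of $L_m$ and $L_1$ across the cyclic $L$-coupling (Theorem~\ref{thm:Kformula}) makes $\nu$ a single chain. The corner entry $2m$ of $K$ at $L_1$ (half is $m$) lies in $\mathfrak p$ and reduces to $0$. The $L$--$R$ ramp and constant entries are integers and stay as they are, up to the halving.

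Next I will treat the $R$-part. The fan entries $2\min(i,l)$ are integers. The diagonal adds $D_{R_i}=x_{R_i}/(1-x_{R_i})$. By (T5) of Lemma~\ref{lem:telescope} we have $1/(1-x_{R_i})=\varsigma_i^2/\sin^2(j\vartheta)$, so $D_{R_i}=\varsigma_i^2/\sin^2(j\vartheta)-1$. Writing $\varsigma_i=\pm\sin((i-p)j\vartheta)$, the congruence applied with $a=(i-p)j$ and $b=j$ gives $D_{R_i}\equiv(i-p)^2-1$. This holds uniformly, including when $i-p\equiv0$: there $\varsigma_i$ has valuation greater than one, and the reduction is consistent with the stated values $0$ and $-1$ on the triple.

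The determinant claim follows by pulling the factor $2$ out of each of the $m-3$ finite $L$-rows. The main obstacle is the bookkeeping of signs and branches, namely $\pm$ in $\varsigma_i$ and the shift of $\nu$ by $m$ for $L$-sites indexed past the run boundary. I would handle this by checking that only squares and consecutive ratios occur, so that the $(-1)^{jk}$ factors of (T1) cancel.
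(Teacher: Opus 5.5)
Your route is the paper's. Sines at nonzero multiples of $\pi/m$ have $\mathfrak p$-valuation one, so the diagonal entries $2\rho_\nu^2/(\rho_{\nu-1}\rho_{\nu+1})$ and $D_{R_i}$ are $\mathfrak p$-integral and reduce to ratios of integers. The remaining entries are rational integers, $m\in\mathfrak p$, and halving the $m-3$ finite $L$-rows gives the factor $2^{m-3}$. Your $D_{R_i}=\varsigma_i^2/s^2-1$ is the paper's $\varsigma_{i-1}\varsigma_{i+1}/s^2$ by the frieze identity.

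One correction is needed: your basic congruence is missing a sign. You have
$\sin(a\vartheta)/\sin(b\vartheta)=e^{i(b-a)\vartheta}(\zeta_m^{ak}-1)/(\zeta_m^{bk}-1)$, and $e^{i\pi/m}=-\zeta_m^{(m+1)/2}\equiv-1\pmod{\mathfrak p}$. Hence $\sin(a\vartheta)/\sin(b\vartheta)\equiv(-1)^{k(a-b)}\,a/b$, not $a/b$; for example $2\cos\vartheta\equiv2(-1)^k$. So an individual sine ratio does depend on $k$, and your claim that this single congruence gives independence of $k$ is wrong as stated.

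The independence of $k$ comes from the total index shift being even in each diagonal entry. Writing $\rho_\nu^2/(\rho_{\nu-1}\rho_{\nu+1})=(\rho_\nu/\rho_{\nu-1})(\rho_\nu/\rho_{\nu+1})$, the two signs $(-1)^k$ cancel, and $\varsigma_i^2/s^2$ is a square. This is exactly the cancellation the paper records with its sign $(-1)^{k(a-1)}$. Your plan to cancel signs via squares and ``consecutive ratios'' only works because these ratios occur in such pairs; a single consecutive ratio carries the sign $(-1)^k$.

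Your remark about $i-p\equiv0$ is unnecessary, since the sites of the triple are not in the finite block. It is also inaccurate: $\varsigma_p$ vanishes on the leading data rather than having valuation greater than one.
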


\begin{proof}
$v_{\mathfrak p}(\sin(a\pi/m))=1$ for $a\not\equiv0$ and $\cos(a\pi/m)$ is a unit, so each diagonal --- a ratio of two products of two sines,
$2\rho_n^2/(\rho_{n-1}\rho_{n+1})$ and $\varsigma_{i-1}\varsigma_{i+1}/s^2$ --- has valuation $0$; the off-diagonals of the finite block (the entries $-1$ and $2\min(i,l)$ of the description above, not $0$) are rational integers; and
$v_{\mathfrak p}(m)=m-1>0$. For the reductions, $\sin(ak\pi/m)/\sin(k\pi/m)\equiv(-1)^{k(a-1)}a$ and $\cos(a\pi/m)\equiv(-1)^a\pmod{\mathfrak p}$. In each
diagonal the total index shift is even, so the signs cancel and the $k$-dependence disappears:
$2\rho_n^2/(\rho_{n-1}\rho_{n+1})\equiv2\nu^2/(\nu^2-1)$, and
\[
d_i\equiv\frac{((i-1)j-1)((i+1)j-1)}{j^2}=\Bigl(i-\frac1{j}\Bigr)^2-1=(i-p)^2-1 .
\]
Equivalently $\varsigma_i/\sin(k\pi/m)\equiv(-1)^{kij}j(i-p)$: the whole $\varsigma$-chain linearises modulo $\mathfrak p$, up to sign.
\end{proof}

\begin{lemma}[Finite-block lemma]
\label{thm:modp}
For every prime $m\ge5$ and every placement $j\ne1$ with $\mathsf P(p)\not\equiv0\pmod m$, where $p=j^{-1}$ and $\mathsf P$ is the polynomial \eqref{eq:Ppoly}, the reduced block of the \hyperref[prop:modp]{reduction lemma} has
\[
\det\equiv-\tfrac12\,j^4\pmod m,
\]
a unit. Hence the determinant of the finite block is a $\mathfrak p$-adic unit, in particular nonzero.
\end{lemma}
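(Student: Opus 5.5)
The plan is to compute the determinant of the reduced block over $\F_m$ by successive Schur complements, eliminating first the $R$-sites (where the fan structure gives closed forms) and then the $L$-chain, so that what is left is a small matrix whose determinant is a rational function of $p$; that small matrix is the block $M$ whose determinant $-\mathsf P/(24p^2(p-1))$ was announced before Theorem~\ref{thm:corank}, and the claimed value $-\tfrac12 j^4$ should come out after multiplying back the determinants of the eliminated pieces. Concretely, by the reduction lemma the reduced block depends only on $p$ (not on $k$). It consists of an $L$-chain in $\nu\in\{2,\dots,m-2\}$ with diagonal $2\nu^2/(\nu^2-1)$ and off-diagonal $-1$, and an $R$-block $F+D$ with $F=(2\min(i,l))$ and $D_i=(i-p)^2-1$ on the finite $R$-sites (the arc $x=i-p\in\{2,\dots,m-2\}$ modulo $m$). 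These are coupled only through the $L$--$R$ entries of $L_1$ (ramp) and $L_m$ (constant $-2$), which in the $\nu$-labelling sit at $\nu=1-j$ and $\nu=-j$.

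The first step is the $L$-chain alone. Its recurrence $y_{\nu-1}+y_{\nu+1}=\frac{2\nu^2}{\nu^2-1}y_\nu$ is the $\mathfrak p$-reduction of Lemma~\ref{lem:cotsol} at $\vartheta\to0$, and it has the rational fundamental system $1/\nu$ and $\nu^2-1$ (the reduction of the affine case recorded in the proof of (A)). This gives every continuant of the chain, on any sub-interval, as a closed rational function of the endpoints. Cutting the chain at the two coupled sites $\nu=-j,1-j$ splits it into two arcs. Their determinants and the corner entries of their inverses come from Casoratians, as in Lemma~\ref{prop:lblock}.

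The second step is the $R$-side. I invert $F+D$ on the arc via $J=F^{-1}+D^{-1}$. Since $F^{-1}$ is the path matrix and $D_x=x^2-1=(x-1)(x+1)$, the recurrence of $J$ has the fundamental system $1/x$, $x^2-1$ modulo $m$. Hence the ramp and constant vectors pulled through $(F+D)^{-1}$ can be evaluated as in Lemma~\ref{lem:perrun}, whose telescoping $\prod u_{j-1}u_{j+1}/u_j^2$ with $u_x=x$ survives reduction. This yields $\det(F+D)$ as a product of ratios of integers in $x$, together with the $2\times2$ correction that the $R$-run feeds back onto the $(L_1,L_m)$ pair.

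The third step assembles a Schur complement of small size (at most $4\times4$: the two coupled $L$-sites plus the boundary data of the two $L$-arcs), which is $M$. Computing its determinant as a polynomial identity in $p$ over $\Q$, then reducing modulo $m$, gives $\det M=-\mathsf P(p)/(24p^2(p-1))$. This is invertible precisely when $\mathsf P(p)\not\equiv0$, with $p\ne0,1$ and $m\ge5$ making $24p^2(p-1)$ a unit. Multiplying by the eliminated determinants, all products of integers $x,\nu$ in $[2,m-2]$ collapse by Wilson-type products, and the result should simplify to $-\tfrac12 j^4$.

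The main obstacle is this final bookkeeping. The factor $\mathsf P$ must cancel between $\det M$ and the Schur pivots, which is where the hypothesis $\mathsf P(p)\not\equiv0$ is genuinely needed: the route divides by $\det M$. Getting all signs and factorials to reduce to $-\tfrac12 j^4$ uniformly in $m$ is delicate. My first attempt would be to verify the cancellation symbolically in $p$, treating $m$ only through $\nu\equiv\nu-m$. The corollary that the finite block is a $\mathfrak p$-adic unit then follows from the $2^{m-3}$ factor in the reduction lemma.
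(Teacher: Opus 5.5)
Your top-level route is the paper's: take the Schur complement on the $R$-sites first, using that both the $L$-chain and the $R$-recursion have the fundamental system $1/x$, $x^2-1$. Cutting the $L$-chain at the adjacent coupled sites $\nu=-j,1-j$ is a harmless variant of the paper's expansion around the singular $T$ (which uses $\det T\equiv0$ and $\operatorname{adj}T=\tfrac19tt^{\mathsf T}$); the two arc determinants multiply to the paper's $D_2=t_{1-j}t_{-j}/9$.

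The gap is on the $R$-side, and that is where the content of the lemma lies. The finite $R$-sites are $i\in\{1,\dots,p-2\}\cup\{p+2,\dots,m\}$, not one run. In the fan's order there is a jump of $4$ across the triple. The ends $i=1,m$ of the fan also sit at $x\equiv1-p,\,-p$, inside your $x$-arc. Across the jump the inverse of $F_1=(\min(i,l))$ is not the path matrix: in $G=F_1^{-1}+2D^{-1}$ it couples the two runs by $-\tfrac14$ and shifts the neighbouring diagonal entries by $-\tfrac34$.

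Lemma~\ref{lem:perrun} cannot be transported to this block, for three reasons.
\begin{itemize}
\item Its pivot step $\pi c/2=u_1^2-1$ uses $2\min(i,l)=2+2\min(i-1,l-1)$ on consecutive sites, and this fails at the jump.
\item Even on one consecutive run it evaluates only the bordered determinant $\det(F+D)\bigl(-2-\mathsf r^{\mathsf T}(F+D)^{-1}s\bigr)$. Already for a single site, $\det(F+D)=2+(u_1^2-1)=u_1^2+1$ is not a telescoping product.
\item It does not give the three separate forms $\gamma^{\mathsf T}M^{-1}\gamma$, $\gamma^{\mathsf T}M^{-1}\delta$, $\delta^{\mathsf T}M^{-1}\delta$ that the reduction to the $(L_1,L_m)$ corner needs.
\end{itemize}
So your claim that $\det(F+D)$ comes out as a product of ratios of integers is false. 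It is $-\mathsf P/(24p^2(p-1))$, with $\mathsf P$ irreducible.

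Consequently you have misidentified $M$. It is this $R$-block itself, of size $m-3$, not a small matrix left at the end. The hypothesis $\mathsf P\not\equiv0$ is what makes the very first pivot invertible. $\mathsf P$ then reappears in the denominators of $a,b,c$ and cancels against $\det M$ when multiplied back. Your bookkeeping looks for $\mathsf P$ in the wrong place, and ``dividing by $\det M$'' is inconsistent with $M$ being the last matrix.

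The missing idea is how to handle the gapped fan.
\begin{itemize}
\item For $a,b,c$: take second differences in $i$ of $Mz=f$, for $f=\gamma,\delta$. This localises it to a three-term recursion in $x$ with solutions $\alpha_r+\beta_r/\bigl((x-1)x(x+1)\bigr)$ on each of the two runs. The four constants are fixed by the undifferenced equations at $l=1,p-2,p+2,m$, where the $\min$-sums telescope. This gives $a,b,c$ as rational functions of $p$.
\item For $\det M$: write $\det M=\det F_1\det D\det G$. Here $\det D\equiv\tfrac14$ by Wilson's theorem, and $\det G$ is expanded multilinearly in the gap coupling and the diagonal corrections, through the continuants $[\psi,\chi]_{u-1,v+1}/3$.
\end{itemize}
Only with these inputs is the collapse to $-1/(2p^4)$ a concrete identity in $p$ that can be checked.
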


\begin{proof}
Write $B$ for the reduced finite block of the \hyperref[prop:modp]{reduction lemma}, $T$ for its $L$-block (on the finite $L$-sites), $M$ for its $R$-block (on the
finite $R$-sites, $M_{il}=2\min(i,l)+\delta_{il}((i-p)^2-1)$), and $X$, $Y$ for the two off-diagonal blocks, so that
$B=\begin{psmallmatrix}T&Y\\X&M\end{psmallmatrix}$.

\emph{The $L$-side.} The block $T$ is tridiagonal on $\nu=2,\dots,m-2$ with diagonal $2\nu^2/(\nu^2-1)$ and off-diagonals $-1$, independently
of $j$. The recursion
\begin{equation}
y_{\nu-1}+y_{\nu+1}=\frac{2\nu^2}{\nu^2-1}\,y_\nu
\label{eq:kerrec}
\end{equation}
that defines the kernel of $T$ (the \emph{kernel recursion}) has the two solutions $\psi_\nu=1/\nu$ and $\chi_\nu=\nu^2-1$, as one checks by substitution; their bracket (Casoratian) is
\[
[\psi,\chi]_{c,d}=\frac{(d-c)(d^2+cd+c^2-1)}{cd},\qquad\text{so}\qquad[\psi,\chi]_{a-1,a}=3
\]
for every $a$; hence a run of sites $[a,b]$ has continuant $[\psi,\chi]_{a-1,b+1}/3$, and in particular
\[
\det(\mathrm{run}[2,b])=\tfrac13b(b+2),\qquad \det(\mathrm{run}[a,m-2])\equiv\tfrac13a(a-2)\pmod m.
\]
Three consequences: the full run gives
$\det T\equiv0$; the vector $t_\nu=\nu^2-1$ solves the recursion and vanishes
at both boundary sites $\nu=\pm1$, so it spans $\ker T$; and the $(\nu,\nu')$ cofactor is $t_\nu t_{\nu'}/9$, so $\operatorname{adj}T=\tfrac19tt^{\mathsf T}$. The two
coupling sites $L_1$ and $L_m$ sit at $\nu=1-j$ and $\nu=-j$, \emph{adjacent}, so deleting both splits the run and
\[
D_2:=\det\bigl(T\smallsetminus\{L_1,L_m\}\bigr)=\frac{j(j-2)(j^2-1)}9=\frac{t_{1-j}t_{-j}}9 .
\]

\emph{The assembly.} With $\gamma_l=l$ and $\delta_l=-1$ on the finite $R$-sites, the off-diagonal blocks are
$Y=e_{L_1}\gamma^{\mathsf T}+e_{L_m}\delta^{\mathsf T}$ (an $L\times R$ block) and $X=2Y^{\mathsf T}$, so $YM^{-1}Y^{\mathsf T}$ is supported in the $2\times2$ corner
$(L_1,L_m)$. The Schur complement gives $\det B=\det M\cdot\det(T-2YM^{-1}Y^{\mathsf T})$, a perturbation of the singular $T$ in a $2\times2$ corner. Expanding it multilinearly with $\det T=0$ and $\operatorname{adj}T=\tfrac19tt^{\mathsf T}$ gives
\[
\det B=\det M\Bigl[-\tfrac29\bigl(t_1^2a+2t_1t_2b+t_2^2c\bigr)+\tfrac49t_1t_2(ac-b^2)\Bigr],\qquad t_1=j(j-2),\ t_2=j^2-1,
\]
in the three scalars $a=\gamma^{\mathsf T}M^{-1}\gamma$, $b=\gamma^{\mathsf T}M^{-1}\delta$, $c=\delta^{\mathsf T}M^{-1}\delta$.

\emph{The $R$-side.} The scalars $a,b,c$ require $M^{-1}\gamma$ and $M^{-1}\delta$, i.e.\ the solutions $z$ of $Mz=f$ for $f=\gamma,\delta$. Take the
second difference in $i$ of the equation $(Mz)_i=f_i$: since $2\min(i,l)$ has second difference $-2\delta_{il}$ in $i$, and $f_i=i$ or $-1$ has
second difference $0$, the non-local equation becomes the local three-term recursion $((x-1)^2-1)z_{i-1}-2x^2z_i+((x+1)^2-1)z_{i+1}=0$ in $x=i-p$,
the same recursion as for the kernel of the $d$-form reduced modulo $\mathfrak p$. Its solutions are $z_i=\alpha_r+\beta_r/\bigl((x-1)x(x+1)\bigr)$ on each of the runs $[1,p-2]$ and $[p+2,m]$, four parameters fixed by the four
undifferenced boundary equations at $l=1,p-2,p+2,m$ --- in each of which the $\min$-sum collapses, since
$\sum_i\min(i,l)z_i=\sum_{i\le l}iz_i+l\sum_{i>l}z_i$ telescopes against the run structure. For $p=2,3,m-2,m-1$ a run is empty or a single site, and some of the four equations coincide. Solving on the actual run structure with $m$ symbolic and then reducing gives the same rational functions of $p$. Every sum telescopes,
\[
\sum_{x=u}^v\frac1{(x-1)x(x+1)}=\tfrac12\Bigl[\tfrac1{(u-1)u}-\tfrac1{v(v+1)}\Bigr],\qquad
\sum_{x=u}^v\frac1{x^2-1}=\tfrac12\Bigl[\tfrac1{u-1}+\tfrac1u-\tfrac1v-\tfrac1{v+1}\Bigr],
\]
so $a,b,c$ are rational in $p$ with coefficients independent of $m$.

\emph{The determinant of $M$.} Write $M=F+D$ with $F=2\min(i,l)$ on the finite $R$-sites $y_1<\dots<y_n$ and $D=\operatorname{diag}((i-p)^2-1)$. The kernel $\min(y_k,y_l)$ has determinant $\prod_k(y_k-y_{k-1})$, with $y_0=0$. This is $4$, from the one gap of $4$ across the triple, except at $p=m-1$, where the triple
$R_{m-2},R_{m-1},R_m$ is at the end and the determinant is $1$. Its inverse is tridiagonal, with off-diagonal $-1/(y_{k+1}-y_k)$ and diagonal $1/(y_k-y_{k-1})+1/(y_{k+1}-y_k)$, the last diagonal entry being $1/(y_n-y_{n-1})$. So $\det M=\det F_1\det D\,\det G$ with $G=F_1^{-1}+2D^{-1}$, where $F_1=\min(y_k,y_l)$. By Wilson's theorem,
\[
\det D=\prod_{x=-(p-1)}^{-2}(x^2-1)\prod_{x=2}^{m-p}(x^2-1)\equiv\frac{(p-2)!\,p!}2\cdot\frac{p(p-1)}{2(p!)^2}=\frac14\pmod m .
\]
On each run $G$ is tridiagonal with off-diagonals $-1$ and diagonal $2x^2/(x^2-1)$, $x=i-p$, which is the kernel recursion \eqref{eq:kerrec}, with the solutions $1/x$ and $x^2-1$. So the determinant of a run $[u,v]$ is $C[u,v]=[\psi,\chi]_{u-1,v+1}/3$. The two runs $x\in[1-p,-2]$ and $x\in[2,m-p]$ are coupled by the entry $-\tfrac14$ across the gap. The diagonal is changed by $-\tfrac34$ at the two sites next to the gap and by $-1$ at the last site. Expanding in these changes and in the coupling, and reducing $m\equiv0$, gives a rational function of $p$. It is $\det M\equiv-\mathsf P/(24p^2(p-1))$, an identity of rational functions of $p$ obtained by computer algebra with the run lengths kept symbolic (Appendix~\ref{sec:repro}); as a check it was also evaluated directly at every placement for primes $m\le43$. At the edge placements the expansion is short. At
$p=2$ the left run is empty and $G$ is the run $x\in[2,m-2]$ with the changes $-\tfrac34$ at its first site and $-1$ at its last. By
multilinearity, $\det G=C[2,m-2]-\tfrac34C[3,m-2]-C[2,m-3]+\tfrac34C[3,m-3]\equiv0-\tfrac34-1+\tfrac34=-1$. So
$\det M\equiv4\cdot\tfrac14\cdot(-1)=-1=-\mathsf P(2)/96$. At $p=m-1$ there is no gap, $G$ is the run $x\in[2,m-2]$ with the change $-1$
at its last site, and $\det G=C[2,m-2]-C[2,m-3]\equiv-1$. So $\det M\equiv1\cdot\tfrac14\cdot(-1)=-\tfrac14=\mathsf P(-1)/48$. Both agree with
the general formula.

Reducing $m\equiv0$ and solving gives, with $\mathsf P(p)$ as above
(irreducible over $\Q$),
\[
\det M=\frac{-\mathsf P}{24p^2(p-1)},\quad a=\frac{(p-1)(p^2-3)(p^4+p^2+4)}{2\mathsf P},\quad b=-\frac12-\frac{3p^2(p-1)}{\mathsf P},
\]
\[
c=\frac{4p^6-12p^5+13p^4-6p^3+p^2-12}{2(p-1)\mathsf P}.
\]
Substituting these into the assembly with $j=1/p$, the whole expression collapses identically as a rational function of $p$ to
$\det B=-1/(2p^4)=-j^4/2$.

At $j=2$ the site $L_1$ is a neighbour of the small $L$, and at $j=m-1$ so is $L_m$; the corresponding coupling site is then not in $T$. The formula still holds, because $t_1=j(j-2)=0$ at $j=2$ and $t_2=j^2-1\equiv0$ at $j=m-1$, which removes exactly the terms of the missing site. (At $j=2$ one has $p=(m+1)/2$.)
\end{proof}

So $\det\equiv-j^4/2$ is an identity of rational functions in $p$. The hypothesis $\mathsf P(p)\not\equiv0\pmod m$ is exactly the condition $\det M\ne0$ under which the Schur complement in the proof exists; it
excludes at most six residues $p$ for each $m$, the first being $m=7$, $p=4$. At each such placement with $m<260$ --- $43$ placements --- the
reduced block was evaluated exactly over $\F_m$, and its determinant is $-j^4/2$. So the finite block is nonsingular there too. The proof of the \hyperref[thm:modp]{finite-block lemma} does
not cover these placements, and for $m\ge260$ they are not covered.

\section{Reproduction notes}
\label{sec:repro}

This appendix supports every statement of the text marked as numerical or as a computation.
Computations are in Python with \texttt{mpmath} (arbitrary precision), \texttt{numpy}/\texttt{scipy} for double-precision searches and
continuation, and \texttt{sympy} for the symbolic work. Statements proved in the text are listed only where a computation was also run as a check. The table keys each statement that rests on
computation --- the observations, the computer-algebra identities inside proofs, and the unproved hypotheses --- to what was computed.

\begingroup\footnotesize
\setlength{\tabcolsep}{4pt}
\begin{longtable}{@{}>{\raggedright\arraybackslash}p{0.28\textwidth}>{\raggedright\arraybackslash}p{0.38\textwidth}>{\raggedright\arraybackslash}p{0.28\textwidth}@{}}
\toprule
statement & what was computed & range / precision\\
\midrule
\endfirsthead
\toprule
statement & what was computed & range / precision\\
\midrule
\endhead
\bottomrule
\endfoot
Lemma~\ref{lem:floqvec}, Corollary~\ref{thm:tworunH} & the junction identities $E_a=E'_1=0$ in $\cot\vartheta,\cot\vartheta',\cot(\vartheta+b\vartheta'),h(p),h(q'),b$; $\Lambda+\Lambda^{-1}=-(ab+2)$ in $\cot\vartheta,\cot \kappa\vartheta,a,b$ & symbolic, general $b$\\
Theorem~\ref{thm:Hall}, hypothesis $\det(F+D_R)\ne0$ & $\det(F+D_R)$ on every $R$-run at every class of Theorem~\ref{thm:allwords} & $49$ words of length $3$--$12$, $4236$ classes; minimum absolute value $6\cdot10^{-5}$\\
Section~\ref{sub:gluing}, the assignment of $(Z,Z',Z'')$ & the six assignments to the dying/newborn, kept and Farey slots, $Z''$ eliminated & exact; all $76$ cyclic words of length $3$--$8$\\
Theorem~\ref{thm:Kformula} (a check of the proof) & the edge rows built directly from the Farey development against Lemma~\ref{lem:life}; $B\Kraw=A$ for the $\Kraw$ of the theorem, $\operatorname{rank}B=r-1$, $Q^{-1}c\in\ker B$ & exact; all $241$ cyclic words of length $2$--$10$\\
Section~\ref{sec:bethe}, the $LR^3$ example & Gr\"obner elimination of the undeformed Bethe system, saturated by the spurious factors; back-substitution & exact; $10$ solutions\\
Lemma~\ref{prop:arc} & the three product-to-sum identities for $L[\psi]$ & symbolic\\
Lemma~\ref{thm:modp} & the rational functions $\det M,a,b,c$ of $p$ and the collapse to $-1/(2p^4)$; the reduced determinant at the placements with $\mathsf P(p)\equiv0$ & symbolic; exact in $\F_m$, all $43$ placements with $\mathsf P(p)\equiv0$, primes $m<260$\\
\midrule
Proposition~\ref{prop:Hspectrum}, proof & non-degenerate $A_-$ kernel vectors of $L^mR^m$; non-degeneracy of all $A_+$ kernel vectors & $m=15$ ($8$ of $112$), $m=21$ ($12$ of $220$); odd $m\le15$\\
Observation~\ref{obs:spectrumgen} & harvest of $\mathcal H/2^{\#L}$ & $LR^3$, $LR^5$, $L^2RLR$ at $\epsilon=10^{-3}$\\
Observation~\ref{obs:tauquant} and the multiplier $\kappa$ (Section~\ref{sub:runtraces}) & $\tau_\rho=2\cos(j_\rho\pi/N)$, $N=\mathcal H/2^{\#L}$; $\kappa$ exact per sector, $\kappa^2\equiv-1$ iff amphichiral & $6$ words at $\epsilon=10^{-5}$, $j$ integral to $5\cdot10^{-3}$; $14$ sectors, $11$ words\\
Observation~\ref{obs:kervec} & $(j_\rho)\equiv\pm\iota_\rho(v)$, $v\in\mathcal K_\mp$; spin against $v\wedge\varphi v/4N^2$; random-subgroup control & $233$ classes, $28$ sectors, $14$ words, exact; $217$ classes, spins to $10^{-8}$; control $16/197$, $0/197$, $0/24$\\
Observation~\ref{obs:basevac} & base vacuum $\varepsilon_0=E_{LL}^{-1}n_L$ from the closed form; $\operatorname{spin}(V_{\varepsilon_0})$ against $\#R/24\pm[4^{-1}]_Nv\wedge w/N$; $\det E_{LL}\bmod2$ & $20$ words, $1440$ classes, $10^{-9}$; all $4072$ words of length $\le11$, not up to rotation\\
Observation~\ref{obs:strata} & exponents $p_b$ of $z_b$ by continuation, class by class & $7$ words, $\epsilon=10^{-3}$ to $10^{-5}$\\
Observation~\ref{obs:gsy} & $(A_-,A_+,q)$ of every class of trace $m^2+2$ against GSY26 Table~1 & all odd $m\le43$; $1290$ classes\\
Observation~\ref{obs:egconj} & $(A_-,A_+,\beta)$ against the EG17 table; Gauss-sum identity on each candidate form & all ten $Q$-systems, $\nu\le19$\\
Theorem~\ref{thm:exist} & the Jacobian $-4$ of Proposition~\ref{prop:jac4}; the expansion at $j=1$ in $(\alpha,t)$; the residuals of the shifted chains in the full deformed system & symbolic; $10^{-42}$ at $50$ digits, $m=3,5,7,11,13$, all $j,k$\\
Theorem~\ref{thm:corank}, $j\ne1$ & singular values of $\hat J_0$; corank of the reduced $M''$ over $\F_m$ & all $j,k$, primes $5\le m\le43$ ($3720$ placements); primes $m\le61$\\
Observation~\ref{obs:e1even} & $q$ attains $\pm N/4$ & all $427$ cyclic words of length $\le11$ with $\tr\varphi>2$\\
Corollary~\ref{cor:templatelabel} & $c_\rho$ for the named runs; $\arg\mathcal F$ against $j^2/4N$ at the degenerate classes & $LR^{3,5,7,9}$, $L^3R$, $L^5R$, $L^3R^3$; harvested classes of $LR^{5,7,9}$\\
Remark~\ref{rem:js} & Smith forms of $K_\pm[\vec k]$ and $1\mp(-1)^n\varphi$ & exact, $n=3,4$, $k_i\in\{\pm3,\pm5,7\}$; $1500$ cases\\
Remark~\ref{rem:jsframe} & $K=DK^{\mathrm{JS}}D^{\mathsf T}$ (proved for all $n$), Smith divisors of $D$, $\det$ ratio $4$ & exact, $n=1,\dots,8$\\
Remark~\ref{rem:internal} & Gang--Kim's internal criterion on their data & $k=5,7,\dots,15$\\
Section~\ref{sec:lines} & completeness of the harvested $A_-$ sectors on the line count & $m=5,7,11$\\
Observation~\ref{obs:wcrow} & $W_c$ at the base vacuum of every class of Theorem~\ref{thm:allwords} & $25$ words of odd trace, length $\le10$; exact\\
Corollary~\ref{cor:galois} & $q_\mp=2^{-1}q^{\mathrm{lk}}$ pointwise; both $(S,T)$ modular & $L^mR^m$, $m=3,5,7,9$\\
Proposition~\ref{prop:exclusion} & characteristic polynomials of the torsion values on both loci, by Gr\"obner basis & exact; $LR^3$, $L^3R^3$\\
Remark~\ref{rem:singleton} & number of non-degenerate classes against the bijection of Theorem~\ref{thm:complete} & exact; eight words\\
Remark~\ref{rem:Ssign} & $G$ from the $\mathcal H_\alpha$ and spins of all vacua; the central charge of the grafting with $s_o>0$ for $L^mR^m$ & six words of Remark~\ref{rem:scope}; $m=3,5,7,9$\\
Remark~\ref{rem:scope}, Corollary~\ref{cor:diagonal} & cyclic words of odd trace $\ge5$ with all kernel vectors non-degenerate; primes $m$ with $m^2+4$ composite and some degenerate $A_+$ kernel vector of $L^mR^m$ & $102$ words of length $\le11$; primes $m\le59$\\
Section~\ref{sub:uniqueness}, range of $\hat J_0$ & $c$ and $\mathbf1_R$ against the left kernel of $\hat J_0$ & $m=5,7,11$, all $j,k$\\
Observation~\ref{obs:l3r3minus} & the four $A_-$ classes of $L^3R^3$ by harvest and continuation; $\arg\mathcal F$ mod $\tfrac14$ & $\epsilon=10^{-2}$ to $10^{-5}$, constant to $10^{-8}$\\
\end{longtable}
\endgroup

\end{document}